\theoremstyle{plain}
\newcounter{thm} \numberwithin{thm}{section}
 \numberwithin{exa}{section}
 \numberwithin{def}{section}
\newcounter{rmk} \numberwithin{rmk}{section}
      \newtheorem{theorem}[thm]{Theorem}
      \newtheorem{lemma}[thm]{Lemma}
      \newtheorem{corollary}[thm]{Corollary}
      \newtheorem{proposition}[thm]{Proposition}
\theoremstyle{definition}
      \newtheorem{definition}[thm]{Definition}
 \theoremstyle{remark}
     \newtheorem{ex}[rmk]{Example}
      \newtheorem{remark}[rmk]{Remark}
      \theoremstyle{proof}
\newcounter{Step}
\newenvironment{step}[0]{\bigskip\addtocounter{Step}{1}\noindent\textbf{Step \theStep :} }{\
  \begin{flushright} \end{flushright}}
\newcommand\ITens{\otimes}
\newcommand\Tens[2]{{#1}\ITens{#2}}
\newcommand\One{1}
\newcommand\WEAK{\mathsf w}
\newcommand\COWEAK{\overline\WEAK}
\newcommand\DER{\mathsf{ d}}
\newcommand\CODER{\overline\DER}
\newcommand\CONTR{\mathsf c}
\newcommand\COCONTR{\overline\CONTR}
\newcommand\Weak[1]{\WEAK_{#1}}
\newcommand\Coweak[1]{\COWEAK_{#1}}
\newcommand\Contr[1]{\CONTR_{#1}}
\newcommand\Cocontr[1]{\COCONTR_{#1}}
\newcommand\Der[1]{\DER_{#1}}
\newcommand\Coder[1]{\CODER_{#1}}
\newcommand\Digg[1]{\operatorname{\mathsf p}_{#1}}
\newcommand\Excl[1]{\oc{#1}}
\def\occ{\oc_\mathscr{C}}
\def\N{\mbox{I\hspace{-.15em}N} }
\def\K{\mathbb{K}}
\def\R{\mbox{I\hspace{-.15em}R} }
\def\C{\hspace{.17em}\mbox{l\hspace{-.47em}C} }
\def\RR{\mathscr{R}}
\def\SS{\mathscr{S}}
\def\o{\otimes}
\newcommand{\kref}{k - \mathbf{Ref}}
\newcommand{\Cref}{\mathscr{C}-\mathbf{Ref}}
\newcommand{\Ref}{-\mathbf{Ref}}
\newcommand{\Kc}{\mathbf{Kc}}
\renewcommand{\Mc}{\mathbf{Mc}}
\newcommand{\Mb}{\mathbf{Mb}}
\newcommand{\McS}{\mathbf{McSch}}
\newcommand{\McMS}{\mathbf{Mc\mu Sch}}
\newcommand{\MbMS}{\mathbf{Mb\mu Sch}}
\newcommand{\MbMcM}{\mathbf{\mu McMb}}
\newcommand{\McMb}{\mathbf{McMb}}
\newcommand{\Sch}{\mathbf{Sch}}
\newcommand{\CSch}{\mathbf{CSch}}
\newcommand{\MS}{\mathbf{\mu Sch}}
\newcommand{\MSch}{\mathbf{\mu Sch}}
\newcommand{\MLCS}{\mathbf{\mu-LCS}}
\newcommand{\LCS}{\mathbf{LCS}}
\newcommand{\FDFS}{\mathbf{F\times DFS}}
\newcommand{\CLCS}{\mathbf{CLCS}}
\newcommand{\rRef}{\mathbf{{\rho}-Ref}}
\newcommand{\M}{\mathbf{\mu LCS}}
\newcommand{\gammaKc}{\gamma-\Kc}
\newcommand{\Conv}{\mathbf{Conv}}
\newcommand{\Cin}{\mathscr{C}^\infty}
\newcommand{\Cinco}{C^\infty_{co}}
\newcommand{\CinC}{C^\infty_{\mathscr{C}}}
\newcommand{\lcs}{lcs }
\newcommand{\mco}[1]{\widehat{#1}^M}
\begin{document}
\title{Models of Linear Logic based on the Schwartz $\varepsilon$-product.}
%\titlerunning{Functional properties of Generalized H\"ormander spaces of distributions I}
\author{Yoann Dabrowski}
\address{ Universit\'{e} de Lyon\\ 
Universit\'{e} Lyon 1\\
Institut Camille Jordan UMR 5208\\
43 blvd. du 11 novembre 1918\\
F-69622 Villeurbanne cedex\\
France
}
\email{ dabrowski@math.univ-lyon1.fr}
%\date{}
\author{Marie Kerjean}
\address{Univ Paris Diderot, Sorbonne Paris Cité, IRIF, UMR 7126, CNRS, F-75205 Paris, France}
\email{kerjean@irif.fr}

\begin{abstract}
From the interpretation of Linear Logic multiplicative disjunction as the $\varepsilon$-product defined by Laurent Schwartz, we construct several models of Differential Linear Logic based on usual mathematical notions of smooth maps. This improves on previous results in  \cite{BET12} based on convenient smoothness where only intuitionist models were built. We isolate a completeness condition, called \textbf{k}-quasi-completeness, and an associated notion stable by duality called \textbf{k}-reflexivity, allowing for a $*$-autonomous category of \textbf{k}-reflexive spaces in which the dual of the tensor product is the reflexive version of the $\varepsilon$ product. We adapt Meise's definition of Smooth maps into a first model of Differential Linear Logic, made of \textbf{k}-reflexive spaces. We also build two new models  of Linear Logic with conveniently smooth maps, on categories made respectively of Mackey-complete Schwartz spaces and Mackey-complete Nuclear Spaces (with extra reflexivity conditions). Varying slightly the notion of smoothness, one also recovers models of DiLL on the same $*$-autonomous categories. Throughout the article, we work within the setting of Dialogue categories where the tensor product is exactly the $\varepsilon$-product (without reflexivization).
\end{abstract}

\maketitle

\tableofcontents

\section{Introduction}

Since the discovery of linear logic by Girard \cite{Gir87}, thirty years ago,  many attempts have been made to obtain denotational models of linear logic in the context of some classes of vector spaces with  linear proofs interpreted as linear maps \cite{Blu96,Ehr02,Gir04,Ehr05,BET12}. 
Models of linear logic are often inspired by coherent spaces, or by the relational model of linear logic. Coherent Banach spaces \cite{Gir96}, coherent probabilistic or coherent quantum spaces \cite{Gir04} are Girard's attempts to extend the first model, as finiteness spaces \cite{Ehr05} or K\" othe spaces \cite{Ehr02} were designed by Ehrhard as a vectorial version of the relational model. 

Three difficulties appear in this semantical study of linear logic. The equivalence between a formula and its double negation in linear logic asks for the considered vector spaces to be isomorphic to their double duals. This is constraining in infinite dimension. This infinite dimensionality is strongly needed to interpret exponential connectives. Moreover, extension to natural models of Differential linear logic would aim at getting models of classical proofs by some classes of smooth maps, which should give a Cartesian closed category.  Finally, imposing a reflexivity condition to respect the first requirement usually implies issues of stability by natural tensor products of this condition, needed to model multiplicative connectives. This corresponds to the hard task of finding $*$-autonomous categories \cite{Bar}. As pointed out in \cite{Ehrhard16}, %this is exactly
the only model of differential Linear logic using smooth maps \cite{BET12} % which
misses annoyingly the $*$-autonomous property for classical linear logic.  

Our aim in this paper is to solve all these issues simultaneously and produce several denotational models of classical linear logic with some classes of smooth maps as morphism in the Kleisli category of the monad. We will show that the constraint of finding a $*$-autonomous category in a compatible way with a Cartesian closed category of smooth maps is even relevant to find better  mathematical notions of smooth maps in locally convex spaces. Let us explain this mathematical motivation first.

It seems that, historically, the development of differential calculus beyond normed spaces suffered from the lack of interplay between analytic considerations and  categorical, synthetic or logic ones. Partially as a consequence, analysts often forgot looking for good stability properties by duality and focused on one side of the topological or bornological viewpoint. 

Take one of the analytic summary of the early theory in the form of Keller's book \cite{Keller}. It already gives a unified and simplified approach based on continuity conditions  of derivatives in various senses. But it is well-known that in order to look for good categorical properties such as Cartesian closedness, the category of continuous maps is not a good starting point, the category of maps continuous on compact sets would be better. This appears strongly in all the developments made to recover continuity of evaluation on the topological product (instead of considering the product of a Cartesian closed category), which is unavoidable for full continuity of composition of derivatives in the chain rule. This leads to considering convergence notions beyond topological spaces on spaces of linear maps, but then, no abstract duality theory of those vector convergence spaces or abstract tensor product theory is developed. Either one remains with spaces of smooth maps that have tricky composition (of module type) between different notions of smoothness or 
 composition within the classes involving convergence vector spaces whose general theory remained underdeveloped with respect to locally convex spaces.  At the end, everything goes well only on restricted classes of spaces that lack almost any categorical stability properties, and nobody understands half of the notions introduced. The situation became slightly better when \cite{Meise} considered $k$-space conditions and obtained what analysts call kernel representation theorems (Seely isomorphisms for linear logicians), but still the class of spaces considered and the $k$-space conditions on products limited having a good categorical framework for the hugest classes of spaces: the only classes stable by products were Fr\'echet spaces and (DFM)-spaces, which are by their very nature not stable by duality. 
 
The general lesson here is that, if one wants to stay within better studied and commonly used locally convex spaces, one should better not stick to  functions continuous on products, and the corresponding projective topological tensor product, but always take tensor products that come from a $*$-autonomous category, since one also needs duality, or at least a closed category, to control the spaces of linear maps in which the derivatives take values. $*$-autonomous categories are the better behaved categories having all those data. Ideally, following the development of polarization of Linear logic in \cite{MelliesTabareau} inspired by game semantics, we are  able to get more flexibility and allow larger dialogue categories containing such $*$-autonomous categories as their category of continuation. We will get slightly better categorical properties on those larger categories.

A better categorical framework was later found and summarized in \cite{FrolicherKriegel,KrieglMichor} the so-called convenient smoothness.
A posteriori, as seen \cite{Kock}, the notion is closely related to synthetic differential geometry as diffeological spaces are. It chooses a very liberal notion of smoothness, that does not imply continuity except on very special compact sets, images of finite dimensional compact sets by smooth maps. It gives a nice Cartesian closed category and this enabled \cite{BET12} to obtain a model of intuitionistic differential linear logic. As we will see, this may give the wrong idea that this very liberal notion of smoothness is the only way of getting Cartesian closedness and it also takes the viewpoint of focusing on bornological properties. This is the main reason why, in our view, they don't obtain $*$-autonomous categories  since bornological locally convex spaces have complete duals which gives an asymmetric requirement on duals since they only need a much weaker Mackey-completeness on their spaces to work with their notion of smooth maps. We will obtain in this paper several models of linear logic using conveniently smooth maps, and we will explain logically this Mackey-completeness condition in section 6.2. It is exactly a compatibility condition on $F$ enabling to force our models to satisfy $!E\multimap F=(!E\multimap 1)\parr F$. Of course, as usual for vector spaces, our models will satisfy the mix rule making the unit for multiplicative connectives self-dual and this formula is interpreted mathematically as saying that smooth maps with value in some complete enough space are never a big deal and reduced by duality to the scalar case. But of course, this requires to identify the right completeness notion.

Another insight in our work is that the setting of models of Linear logic with smooth maps offers a decisive interpretation for the multiplicative disjunction $\parr$. In the setting of smooth functions, the epsilon product introduced by Laurent Schwartz is well studied and behave exactly as wanted: under some completeness condition, one indeed has $\Cin(E,\R) \varepsilon F   \simeq \Cin (E,F)$. This required for instance in \cite{Meise}  some restrictive conditions. We reduce these conditions to the definition \ref{k-complete} of $k$-complete spaces, which is also enough to get associativity and commutativity of $\varepsilon$. The interpretation of the tensor product follows as the negation of the $\varepsilon$ product. We would like to point out that plenty of possibilities exists for defining a topological tensor product (see subsection 2.2 for reminders), and that choosing to build our models from the $\varepsilon$ product offers a simplifying and intuitive guideline.
 
 With this background in mind, we can describe in more detail our results and our strategy.
 
 The first part of the paper will focus on building several $*$-autonomous categories. This work started with a negative lesson the first author learned from the second author's results in \cite{Kerjean}. Combining lots of strong properties on concrete spaces as for instance in \cite{BrouderDabrowski,Dab14a} will never be enough, it makes stability of these properties by tensor product and duality too hard. The only way out is to get a duality functor that makes spaces reflexive for this duality in order to correct tensor products by double dualization. The lesson is that identifying a proper notion of duality is therefore crucial if one wants to get an interesting analytic tensor product. From an analytic viewpoint, the inductive tensor product is too weak to deal with extensions to completions and therefore the weak dual or the Mackey dual, shown to work well with this tensor product in  \cite{Kerjean}, and which are the first duality functors implying easy reflexivity properties, are not enough for our purposes.  The insight is given by a result of \cite{Schwartz} that implies that another slightly different dual, the Arens dual always satisfies $((E'_c)'_c)'_c=E'_c$ hence one gets a functor enabling to get reflexive spaces, in some weakened sense of reflexivity. Moreover, Laurent Schwartz also developed there a related tensor product, the so called $\varepsilon$-product which is intimately related. This tensor product is a dual tensor product, generalization of the (dual) injective tensor product of (dual) Banach spaces and logicians would say it is a negative connective (for instance, as seen from its commutation with categorical projective limits) suitable for interpreting $\parr$. Moreover, it is strongly related with Seely-like isomorphisms for various classes of non-linear maps, from continuous maps (see e.g. \cite{Treves}) to smooth maps \cite{Meise}. It is also strongly related with nuclearity and Grothendieck's approximation property. This is thus a well established analytic tool desirable as a connective for a natural model of linear logic. We actually realize that most of the general properties for the Arens dual and the $\varepsilon$-product in \cite{Schwartz} are nicely deduced from a very general $*$-autonomous category we will explain at the end of the preliminary section 2. This first model of MALL that we will obtain takes seriously the lack of self-duality of the notion of locally convex spaces and notices that adjoining a bornology with weak compatibility conditions enables to get a framework where building a $*$-autonomous category is almost tautological. This may probably be related to some kind of Chu construction (cf. \cite{Bar1} and appendix to \cite{Bar}), but we won't investigate this expectation here. This is opposite to the consideration of bornological locally convex vector spaces where bornology and topology are linked to determine one another, here they can be almost independently chosen and correspond to encapsulating on the same space the topology of the space and of its dual (given by the bornology).
 
 Then, the work necessary to obtain a $*$-autonomous category of locally convex spaces is twofold, it requires to impose some completeness condition required to get associativity maps for the $\varepsilon$-product and then make the Arens dual compatible with some completion process to keep a reflexivity condition and get another duality functor with duals isomorphic to triple duals.
 We repeat this general plan twice in section 4 and 5 to obtain two extreme cases where this plan can be carried out. The first version uses the notion of completeness used in \cite{Schwartz}, or rather a slight variant we will call $k$-quasi-completeness and builds a model of MALL without extra requirement than being $k$-quasi-complete and the Arens dual of a $k$-quasi-complete space. This notion is equivalent to a reflexivity property that we call $k$-reflexivity. This first $*$-autonomous category is important because its positive tensor product is a completed variant of an algebraic tensor product $\o_\gamma$ having universal properties for bilinear maps which have a so-called hypocontinuity condition implying continuity on product of compact sets (see section 2.2 for more preliminary background). This suggested us a relation to the  well-known Cartesian closed category (equivalent to $k$-spaces) of topological spaces with maps all maps continuous on compact sets. Using strongly that we obtained a $*$-autonomous category, this enables us to provide the strongest notion of smoothness (on locally convex spaces) that we can imagine having a Cartesian closedness property. Contrary to convenient smoothness, it satisfies a much stronger continuity condition of all derivatives on compacts sets. Here, we thus combine the $*$-autonomous category with a Cartesian closed category in taking inspiration of the former to define the latter. This is developed in subsection 4.2.
  
 Then in section 5, we can turn to the complementary goal of finding a $*$-autonomous framework that will be well-suited for the already known and more liberal notion of smoothness, namely convenient smoothness. Here, we need to combine Mackey-completeness with a Schwartz space property to reach our goals. This is partially based on preliminary work in section 3 that actually makes appear a strong relation with Mackey duals which can actually replace Arens duals in this context, contrary to the first author's original intuition alluded to before. 
Technically, it is convenient to decompose our search for a $*$-autonomous category in two steps. Once identified the right duality notion and the corresponding reflexivity, we produce first a Dialogue category that deduces its structure from a kind of intertwining with the $*$-autonomous category obtained in section 2. Then we use \cite{MelliesTabareau} to recover a $*$-autonomous category in a standard way. This gives us the notion of $\rho$-dual and the $*$-autonomous category of $\rho$-Reflexive spaces. As before, those spaces can be described in saying that they are Mackey-complete with Mackey-complete Mackey dual (coinciding with Arens dual here) and they have the Schwartz topology associated to their Mackey topology. We gave the name $\rho$-dual since this was the first and more fruitful way (as seen its relation developed later with convenient smoothness) of obtaining a reflexive space  by duality, hence the letter $\rho$ for reflexive, while staying close to the letter $\sigma$ that would have remembered  the key Schwartz space property, but which was already taken by weak duals.

At the end of the first part of the paper, we have a kind of generic methodology enabling to produce $*$-autonomous categories of locally convex spaces from a kind of universal one from section 2. We also have obtained two examples that we want to extend to denotational models of full (differential) Linear logic in the second part. 

We start with the convenient smoothness setting in section 6. Actually we work with several topological variants of this setting (all having the same bornologification). To complement our identification of a logical meaning of Mackey-completeness, we also relate the extra Schwartz property condition with the logical 
%expectation of getting an embedding 
interpretation of the transpose of the dereliction 
 $\DER^t E^*\multimap (!E)^*$. This asks for the topology on $E^*$ to be finer than the one induced by $(!E)^*$.  If moreover one wants to recover later a model of differential linear logic, we need a morphism : $\CODER : !E \to E$ such that $\DER \circ \CODER = Id_E$. This enforces the fact that the topology on $E^*$ must equal the one induced by $(!E)^*$. In this way, various natural topologies on conveniently smooth maps suggest various topologies on duals. We investigate in more detail the two extreme cases again, corresponding to well-known functional analytic conditions, both invented by Grothendieck, namely Schwartz topologies and the subclass of nuclear topologies. We obtain in that way in section 6 two denotational models of LL on the same $*$-autonomous category (of $\rho$-reflexive spaces), with the same Cartesian closed category of conveniently smooth maps, but with two different comonads. We actually show this difference in remark  \ref{TwoComonads} using Banach spaces without the approximation property. This also gives an insight of the functional analytic significance of the two structures. Technically, we use dialogue categories again, but not trough the models of tensor logic from \cite{MelliesTabareau}, but rather with a variant we introduce to keep  Cartesian closed the category equipped with non-linear maps as morphisms.

Finally, in section 7, we extend our models to models of (full) differential linear logic. In the $k$-reflexive space case, we have already identified the right notion of smooth maps for that in section 4, but in the $\rho$-reflexive case, which generalizes convenient vector spaces, we need to slightly change our notion of smoothness and introduce a corresponding notion of $\rho$-smoothness. Indeed, for the new $\rho$-reflexive spaces which are not bornological, the derivative of conveniently smooth maps are only bounded and need not be in spaces of continuous linear maps which are the maps of our $*$-autonomous categories. Taking inspiration of our use of dialogue categories and its interplay with Cartesian closed categories in section 6, we introduce in section 7.1 a notion merging dialogue categories with differential $\lambda$-categories of \cite{BuccEhrMan} and realize the correction of derivative we need in a general context in section 7.2. This enables us to get a class of models of DiLL with at least 3 new different models in that way, one on $k$-reflexive spaces (section 7.4) and two being on the same category of $\rho$-reflexive spaces with $\rho$-smooth maps (section 7.3). This is done concretely by considering only smooth maps whose derivatives are smooth in their non-linear variable with value in (iterated) spaces of  continuous linear maps.

\subsection{A first look at the interpretation of Linear Logic constructions}
For the reader familiar with other denotational models of Linear Logic, we would like to point out some of the constructions involved in the first model $\kref$.  Our two other main models make use of similar constructions, with a touch of Mackey-completeness.

First, we define a \emph{$k$-quasi-complete} space as a space in which the  closed absolutely convex cover of a compact subset is still compact. We detail a procedure of $k$-quasi-completion, which is done inductively. 

We take as the interpretation $E^\bot$ of the negation the $k$-quasi completion of $E'_c$, the dual of $E$ endowed with the  compact-open topology, at least when $E$ is $k$-quasi-complete. 
%This negation turns out to define an idempotent negation monad : $E^{\bot \bot \bot \bot} \simeq E^{\bot \bot}$. It seems here that the exigence of an associative $\parr$ for smooth functions corresponds with the one for a idempotent continuation monad of topological duals. \red{J'ai déplacé ici car je pense que cela va mieux en effet en intro, mais je ne comprends pas encore de quelle correspondence tu parles ?} J'ai demandé à Mellies une référence, mais sans réponse de sa part, j'efface.
We define $!E$ as $\Cin_{co}(E, \K)^\perp$, the $k$-quasicompletion of the dual of the space of scalar smooth functions. This definition is in fact enforced as soon as we have a $*$-autonomous category with a co-Kleisli category of smooth maps.  Here we define the space of smooth functions as the space of infinitely many times G\^ateaux-differentiable  functions with derivatives continuous on compacts, with a good topology (see subsection \ref{sec:ksmooth}). This definition, adapted from the one of Meise, allows for Cartesian closedeness.

We then interpret the $\parr$ as the (double dual of) the $\varepsilon$ product: $E \varepsilon F = \mathcal{L}_\epsilon(E'_c,F)$, the space of all linear continuous functions from $E'_c$ to $F$ endowed with the topology of uniform convergence on equicontinuous subsets of $E'$. The interpretation of $\otimes$ is the dual of $\varepsilon$, and can be seen as the $k$-quasi-completion of a certain topological tensor product $\otimes_\gamma$. 

The additive connectives $\times$ and $\oplus$ are easily interpreted as the product and the co-product. In our vectorial setting, they coincide in finite arity.

In the differential setting, codereliction $\CODER$ is interpreted as usual by the transpose of differentiation at $0$ of scalar smooth maps.

\part{Three Models of MALL}
\section{Preliminaries}

We will be working with \emph{locally convex separated} topological vector spaces. We will write in short \lcs for such spaces, following \cite{Kothe} in that respect. We refer to the book by Jarchow \cite{Jarchow} for basic definitions. We will recall the definitions from Schwartz \cite{Schwartz} concerning the $\varepsilon$ product. We write $E=F$ when two \lcs are equal algebraically and $E \simeq F$ when the \lcs equal topologically as well.

\begin{remark} We will call \textit{embedding} a continuous linear map $E\to F$ which is one-to-one and with the topology of $E$ induced from this inclusion. In the functional analytic literature \cite[p 2]{Kothe2} this is called topological monomorphism and abbreviated monomorphism, this is also the case in \cite{Schwartz}. This disagrees with the categorical terminology, hence our choice of a more consensual  term. A monomorphism in the category of separated locally convex vector spaces is an injective continuous linear map, and a regular monomorphism is a embedding with closed image (a \textit{closed embedding}).  A regular monomorphism in the category of non-separated locally convex spaces coincide with an embedding but we won't use this category.\end{remark}
\begin{remark} We will use projective kernels as in \cite{Kothe}. They are more general than categorical limits, which are more general than projective limits of \cite{Kothe}, which coincide with those categorical limits indexed by directed sets. 
\end{remark}

%\begin{definition}
%A topological vector space is a vector space endowed with a topology making addition and scalar multiplication continuous.
%\end{definition}
\subsection{Reminder on topological vector spaces}
\label{subsec:remindertvs}

\begin{definition}
Consider $E$ a vector space. A bornology on $E$ is a collection of sets (the bounded sets of $E$) such that the union of all those sets covers E, and such that the collection is stable under inclusion and finite unions.
\end{definition}

When $E$ is a topological vector space, one defines the Von-Neumann bornology $\beta$ as those sets which are absorbed by any neighbourhood of $0$. Without any other precision, the name bounded set will refer to a bounded set for the Von-Neumann bornology. 
Other examples of bornology are the collections $\gamma$ of all absolutely convex compact subsets of $E$, and $\sigma$ of all bipolars of finite sets. When $E$ is a space of continuous linear maps, one can also consider on $E$ the bornology $\varepsilon$ of all equicontinuous parts of $E$. When $E$ is a \lcs, we only consider saturated bornologies, namely those which contain the subsets of the bipolars of each of its
members.

\begin{definition}
Consider $E$, $F$, $G$ topological vector spaces and $h : E \times F \mapsto G$ a bilinear map. 
\begin{itemize}
\item $h$ is continuous if it is continuous from $E \times F$ endowed with the product topology to $G$.
\item $h$ is separately continuous if for any $x\in E$ and $y \in F$, $h(x,.)$ is continuous from $F$ to $G$ and $h(.,y)$ is continuous from $E$ to $G$.
\item Consider $\mathcal{B}_1$ (resp. $\mathcal{B}_2$) a bornology on $E$ (resp. $F$). Then $h$ is said to be $\mathcal{B}_1$,$\mathcal{B}_2$ hypocontinuous \cite{Schwartz2} if for
every $0$-neighbourhood $W$ in $G$, every bounded set $A_E$ in $E$, and every bounded set
$A_F$ in $F$, there are $0$-neighbourhoods $V_F \subset F$ and $V_E \subset E$ such that $h (A_E \times V_F) \subset W$ and $ h(V_E \times A_F) \subset W$. When no precision is given, an hypocontinuous bilinear map is a map hypocontinuous for both Von-Neumann bornologies. 
\end{itemize}
\end{definition}

Consider $A$ an absolutely convex and bounded subset of a \lcs $E$. We write $E_A$ for the linear span of $A$ in $E$. It is a normed space when endowed with the Minkowski functional 
$$||x||_A\equiv p_A (x) = \inf\ \left\{ \lambda \in \R^+\ |\ x \in \lambda  A \right\}. $$
A \lcs $E$ is said to be \emph{Mackey-complete} (or locally complete \cite[10.2]{Jarchow}) when for every bounded closed and absolutely convex subset $A$, $E_A$ is a Banach space. A sequence is \emph{ Mackey-convergent} if it is convergent in some $E_B$. 
This notion can be generalized for any bornology $\mathcal{B}$ on $E$ : a sequence is said to be \emph{$\mathcal{B}$-convergent} if it is convergent is some $E_B$ for $B \in \mathcal{B}$. \newline

Consider $E$ a \lcs and $\tau$ its topology. Recall that a filter in $E'$ is said to be equicontinuously convergent if it is $\varepsilon$-convergent. $E$ is a \emph{Schwartz space} if it is endowed with a Schwartz topology, that is a space such that every continuously convergent filter in $E'$ converges equicontinuously.
We refer to \cite[chapter 1]{HogbeNlendMoscatelli} and \cite[sections 10.4, 21.1]{Jarchow} for an overview on Schwartz topologies. We recall some facts below. 

The finest Schwartz locally convex topology coarser than $\tau$ is the topology $\tau_0$ of uniform convergence on sequences of $E'$ converging equicontinuously to $0$. We write $\mathscr{S}(E)=\mathscr{S}(E,\tau)=(E,\tau_0)$. We have $\mathscr{S}(E)'=E'$, and $\mathscr{S}(E)$ is always separated. A \lcs $E$ is a Schwartz space if and only if $\mathscr{S}(E)=E$, if and only if the completion $\tilde{E}$ is  a Schwartz space. We do know also that $\mathscr{S}(E)$ is Mackey-complete as soon as $E$ is (as both space have the same dual, they have the same bounded sets by Mackey-Arens Theorem). Any subspace of a Schwartz space is a Schwartz space.

\subsection{Reminder on tensor products and duals of locally convex spaces.}

Several topologies can be associated with the tensor product of two topological vector space. 

\begin{definition}
Consider $E$ and $F$ two lcs. 
\begin{itemize}
\item The projective tensor product $E\otimes_\pi F$ is the finest locally convex topology on $E \otimes F$ making $E \times F \rightarrow E \otimes_{\pi} F$ continuous. 
\item The inductive tensor product $E\otimes_i F$ is the finest locally convex topology on $E \otimes F$ making $E \times F \rightarrow E \otimes_{i} F$ separately continuous. 
\item The hypocontinous tensor product $E\otimes_\beta F$ is the finest locally convex topology on $E \otimes F$ making $E \times F \rightarrow E \otimes_{\beta} F$ hypocontinuous.
\item The $\gamma$ tensor product $E\otimes_\gamma F$ is the finest locally convex topology on $E \otimes F$ making $E \times F \rightarrow E \otimes_{\gamma} F$ $\gamma$-hypocontinuous.
\item Suppose that $E$ and $F$ are duals. The $\varepsilon$-hypocontinous  tensor product $E\otimes_{\beta e} F$ is the finest locally convex topology on $E \otimes F$ making $E \times F \rightarrow E \otimes_{\beta e} F$ $\varepsilon$-hypocontinuous.
\item Consider $\mathcal{B}_1$ (resp. $\mathcal{B}_2$) a bornology on $E$ (resp. $F$).The $\mathcal{B}_1-\mathcal{B}_2$-hypocontinous  tensor product $E\otimes_{\mathcal{B}_1,\mathcal{B}_2} F$ is the finest locally convex topology on $E \otimes F$ making $E \times F \rightarrow E \otimes_{\mathcal{B}_1,\mathcal{B}_2} F$ $\mathcal{B}_1,\mathcal{B}_2$-hypocontinuous.

%\item the bornological tensor product
\end{itemize}
\end{definition}
All the above tensor products, except the last one, are commutative and the ${\otimes}_\pi%,{\otimes}_\varepsilon, {\otimes}_\beta'
$ product is associative. 
With the last generic notation one gets $\o_i=\o_{\sigma,\sigma}, \o_\beta=\o_{\beta,\beta},\o_\gamma=\o_{\gamma,\gamma}, \otimes_{\beta e}=\o_{\varepsilon,\varepsilon}$ and we will sometimes consider during proofs non-symmetric variants such as: $\o_{\varepsilon,\gamma}, \o_{\sigma,\gamma}$ etc. Note that the injective tensor product $\o_\varepsilon\neq \o_{\varepsilon,\varepsilon}$ is a dual version we will discuss later. It does not have the above kind of universal properties.

%One should note that at various places in the (bornology, convenient vector space) literature, see e.g. \cite{KrieglMichor}, there is another (different) definition of the bornological tensor product as the finest l.c.s topology making $E\times F\to E\otimes_\beta' F $ bounded (on product of bounded sets). From the characterization of neighborhoods of zero (see e.g. \cite[Prop 11.3.4 p 390]{PerrezCarreras}), it is easy to see that $E\otimes_\beta' F=E^{born}\otimes_\beta F^{born}$ is merely the bornological/hypocontinuous product of bornologifications.

%If $E,F$ are ackrelled (resp. bornological) so are $E\otimes_\beta F, E\otimes_{\beta e} F,E\otimes_{\gamma} F $ \cite[Prop 11.3.6 p 390]{PerrezCarreras} \cite[Th 5]{DefantGovaerts}  and  $E\hat{\otimes}_\beta F,E\hat{\otimes}_\gamma F $ are barrelled in the barelled case \cite[Prop 4.2.1 p 103]{PerrezCarreras}.

\begin{definition}
One can define several topologies on the dual $E'$ of a \lcs $E$. We will make use of :
\begin{itemize}
\item The strong dual $E'_{\beta}$, endowed with the strong topology $\beta (E',E)$ of uniform convergence on bounded subsets of $E$.
\item The \emph{Arens dual} $E'_c$ endowed with the topology $\gamma (E',E)$ of uniform convergence on absolutely convex compact subsets of $E$.
\item The \emph{Mackey dual} $E'_\mu$, endowed with the Mackey topology of uniform convergence on absolutely convex weakly compact subsets of $E$. 
\item The weak dual $E'_\sigma$ endowed with the weak topology $\sigma (E',E)$ of simple convergence on points of $E$. 
\item The $\varepsilon$-dual  $E'_\varepsilon$  of a dual $E=F'$  is the dual $E'$ endowed with the topology of uniform convergence on equicontinuous sets in $F'$.
\end{itemize}
Remember that when it is considered as a set of linear forms acting on $E'$, $E$ is always endowed with the topology of uniform convergence on equicontinuous parts of $E'$, equivalent to the original topology of $E$, hence $(E'_\mu)'_\epsilon\simeq (E'_c)'_\epsilon\simeq (E'_\sigma)'_\epsilon\simeq E$. A \lcs is said to be reflexive when it is topologically equal to its strong double dual $(E'_\beta)'_\beta$.
\end{definition}

The \emph{Mackey-Arens theorem} \cite[8.5.5]{Jarchow} states that whenever $E'$ is endowed with a topology finer that the weak topology, and coarser than the Mackey topology, then $E =E''$ algebraically. Thus one has \begin{equation} \label{Ec}
 E = (E'_c)'.
\end{equation} 

As explained by Laurent Schwartz \cite[section 1]{Schwartz}, the equality $E \simeq (E'_c)'_c$ holds as soon as $E$ is endowed with its $\gamma$ topology, i.e. with the topology of uniform convergence on absolutely convex compact subsets of $E'_c$. He proves moreover that an Arens dual is always endowed with its $\gamma$-topology, that is : 
$E'_c \simeq ((E'_c)'_c)'_c $. This fact is the starting point of the construction of a $*$-autonomous category in section \ref{sec:rhorefl}. \newline

The \emph{$\varepsilon$-product} has been extensively used and studied by Laurent Schwartz \cite[section 1]{Schwartz}.
By definition $E\varepsilon F= (E'_c\otimes_{\beta e} F'_c)'$ is the set of $\varepsilon$-hypocontinuous bilinear forms on the duals $E'_c$ and $F'_c$.
When $E,F$ have their $\gamma$ topologies  this is the same as $E\varepsilon F= (E'_c \otimes_{\gamma} F'_c)'.$

%When the bounded sets on $E'_c,F'_c$ coincide with the equicontinuous parts, the $\varepsilon$-product corresponds with  $E\varepsilon F= (E'_c\otimes_{\beta} F'_c)'$ and

%Note that most of our spaces above have their $\gamma$ topology.  (Otherwise, we have to take the hypocontinuous product linearizing hypocontinuous maps with respects to equicontinuous parts we will call $E'_\gamma\otimes_{\beta e} F'_\gamma$)

The topology on $E\varepsilon F$ is the topology of uniform convergence on products of equicontinuous sets in $E', F'$. If $E,F$ are quasi-complete spaces (resp. complete spaces , resp. complete spaces  with the approximation property) so is $E\varepsilon F$ (see \cite[Prop 3 p29, Corol 1 p 47]{Schwartz}). The $\varepsilon$ tensor product $E\otimes_\varepsilon F$ coincides with the topology on $E\otimes F$ induced by $E\varepsilon F$ (see \cite[Prop 11 p46]{Schwartz}), ${\otimes}_\varepsilon$ is associative, and $E\hat{\otimes}_\varepsilon F\simeq E\varepsilon F$ if $E,F$ are complete and $E$ has the approximation property. 

The $\varepsilon$-product is also defined on any finite number of space as $\varepsilon_i E_i$, the space of $\varepsilon$-equicontinuous multilinear forms on $\prod_i (E_i)'_c$,, endowed the the topology of uniform convergence on equicontinuous sets. Schwartz proves the associativity of the $\varepsilon$-product when the spaces are quasi-complete. We do so when the spaces are Mackey-complete and Schwartz, see lemma \ref{AssocScwhartz}.
%\newline
\subsection{Dialogue  and $*$-autonomous categories}
It is well known that models of (classical) linear logic requires building $*$-autonomous categories introduced in  \cite{Bar}. If we add categorical completeness, they give models of MALL. We need some background about them, as well as a generalization introduced in \cite{MelliesTabareau}: the notion of Dialogue category that will serve us as an intermediate in between a general $*$-autonomous category we will introduce in the next subsection and more specific ones requiring a kind of reflexivity of locally convex spaces that we will obtain by double dualization, hence in moving to the so-called continuation category of the Dialogue category.

Recall the definition (cf. \cite{Bar}):
\begin{definition}
A \emph{$*$-autonomous category} is a symmetric monoidal closed category $(\mathcal{C},\o_{\mathcal{C}},1_\mathcal{C},[\cdot,\cdot]_{\mathcal{C}})$ with an object $\perp$ giving an equivalence of categories $(\cdot)^*=[\cdot,\perp]_{\mathcal{C}}:\mathcal{C}^{op}\to \mathcal{C}$ and with the canonical map $d_A:A\to  (A^*)^*$ being a natural isomorphism.
%satisfying one commutative diagram (with diagonal arrows induced by $(\cdot)^*$ as closed functor):
%\[
%\xymatrix@C=10pt{
%& [B^*,A^*]_{\mathcal{C}} \ar[dr]& \\
%[A,B]_{\mathcal{C}} \ar[ur] \ar[rr]_{d_B\circ\ \cdot\ \circ d_A^{-1}}&&[A^{**},B^{**}]_{\mathcal{C}} 
%}
%\]
\end{definition}
Since our primary data will be functional, based on space of linear maps (and tensorial structure will be deduced since it requires various completions), we will need a consequence of the discussion in \cite[(4.4) (4.5) p 14-15]{Bar}. We outline the proof for the reader's convenience. We refer to \cite[p 25]{Schipper} (see also \cite{DayLaplaza}) for the definition of symmetric closed category.

\begin{lemma}\label{ClosedCatToStarAut}
Let $(\mathcal{C},1_\mathcal{C},[\cdot,\cdot]_{\mathcal{C}})$ a  symmetric closed category, which especially implies there is a natural isomorphism $s_{X,Y,Z}:[X,[Y,Z]_{\mathcal{C}}]_{\mathcal{C}}\to[Y,[X,Z]_{\mathcal{C}}]_{\mathcal{C}}$ and let $\perp=[1_\mathcal{C},1_\mathcal{C}]_{\mathcal{C}}.$ Assume moreover that there is a natural isomorphism, $d_{X}:X\to[[X, \perp]_{\mathcal{C}},\perp]_{\mathcal{C}}$. Define $X^*=[X,\perp]_{\mathcal{C}}$ and $(X\o_\mathcal{C} Y)=([X,Y^*]_{\mathcal{C}})^*$. Then $(\mathcal{C},\o_{\mathcal{C}},1_\mathcal{C},[\cdot,\cdot]_{\mathcal{C}},(\cdot)^*)$ is a $*$-autonomous category.
\end{lemma}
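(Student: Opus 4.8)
The plan is to verify the definition of $*$-autonomous category directly, starting from the symmetric closed structure and the given natural isomorphism $d_X$, and building the monoidal structure out of the internal hom via the prescribed formula $X\otimes_{\mathcal{C}}Y=([X,Y^*]_{\mathcal{C}})^*$. The essential point is that a symmetric closed category already carries almost all the data we need; what is missing is the tensor product and the proof that it is left adjoint to the internal hom in the appropriate two-variable sense, together with coherence. So I would organize the proof around three tasks: (i) establish the adjunction $\mathcal{C}(X\otimes_{\mathcal{C}}Y,Z)\cong\mathcal{C}(X,[Y,Z]_{\mathcal{C}})$ making $(\mathcal{C},\otimes_{\mathcal{C}},1_{\mathcal{C}})$ symmetric monoidal closed; (ii) check that $(\cdot)^*=[\cdot,\perp]_{\mathcal{C}}$ is an equivalence $\mathcal{C}^{op}\to\mathcal{C}$; (iii) confirm $d_A$ is a natural isomorphism, which is handed to us by hypothesis.

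\textbf{First} I would exploit $d$ to rewrite the hom-sets. Using $d_X:X\xrightarrow{\sim}[[X,\perp]_{\mathcal{C}},\perp]_{\mathcal{C}}=X^{**}$, naturality, and the closed-category structure (in particular the unit isomorphism $[1_{\mathcal{C}},Z]_{\mathcal{C}}\cong Z$ and the symmetry $s_{X,Y,Z}$), I would compute
\[
\mathcal{C}(X\otimes_{\mathcal{C}}Y,Z)=\mathcal{C}\bigl(([X,Y^*]_{\mathcal{C}})^*,Z\bigr)\cong\mathcal{C}\bigl(Z^*,[X,Y^*]_{\mathcal{C}}\bigr)\cong\mathcal{C}\bigl(X,[Z^*,Y^*]_{\mathcal{C}}\bigr),
\]
where the first iso uses $d$ to turn a map out of a double-dual into a map into the dual (i.e. $\mathcal{C}(A^*,B)\cong\mathcal{C}(B^*,A)$, a contravariant self-adjunction of $(\cdot)^*$ that follows from $d$ being iso), and the second uses $s$. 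Then I would identify $[Z^*,Y^*]_{\mathcal{C}}\cong[Y,Z]_{\mathcal{C}}$: indeed $[Z^*,Y^*]_{\mathcal{C}}=[Z^*,[Y,\perp]_{\mathcal{C}}]_{\mathcal{C}}\cong[Y,[Z^*,\perp]_{\mathcal{C}}]_{\mathcal{C}}=[Y,Z^{**}]_{\mathcal{C}}\cong[Y,Z]_{\mathcal{C}}$, again by $s$ and $d$. Chaining these gives the desired natural bijection $\mathcal{C}(X\otimes_{\mathcal{C}}Y,Z)\cong\mathcal{C}(X,[Y,Z]_{\mathcal{C}})$, establishing the adjunction; symmetry of $\otimes_{\mathcal{C}}$ then follows from the symmetry $s$ of the closed structure, and the unit object is $1_{\mathcal{C}}$ since $[1_{\mathcal{C}},Z]_{\mathcal{C}}\cong Z$ forces $1_{\mathcal{C}}\otimes_{\mathcal{C}}Z\cong Z$.

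\textbf{Next}, part (ii) is comparatively short: $(\cdot)^*$ is its own inverse up to the natural isomorphism $d$, since $(X^*)^*=[[X,\perp]_{\mathcal{C}},\perp]_{\mathcal{C}}$ and $d_X$ exhibits $X\cong X^{**}$ naturally, so $(\cdot)^*:\mathcal{C}^{op}\to\mathcal{C}$ is an equivalence with quasi-inverse itself. Part (iii) is immediate from the hypothesis. What remains, and what I \textbf{expect to be the main obstacle}, is coherence: one must check that the associativity and unit isomorphisms for $\otimes_{\mathcal{C}}$ (which are forced by the adjunction via Yoneda, transporting the known coherence of the closed structure through the bijections above) actually satisfy the pentagon and triangle axioms, and that the various canonical maps assembled from $s$, $d$, and the unit isomorphism are coherent in the sense required. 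This is exactly the delicate bookkeeping carried out in \cite[(4.4),(4.5) p 14--15]{Bar}; I would follow that discussion, checking that each isomorphism produced above is natural in all variables and that the composite coherence diagrams commute, reducing everything to the coherence already available in the symmetric closed category $(\mathcal{C},1_{\mathcal{C}},[\cdot,\cdot]_{\mathcal{C}})$ together with the single extra datum that $d$ is an isomorphism. The main risk is not any single hard idea but the accumulation of naturality verifications, so I would be careful to phrase each step as a natural isomorphism of functors and invoke Yoneda to transport coherence rather than recomputing associators by hand.
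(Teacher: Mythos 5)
Your proposal is correct and follows essentially the same route as the paper: both establish the key natural hom-set bijections (the contravariant self-adjunction of $(\cdot)^*$ and the two-variable adjunction $\mathcal{C}(X\o_\mathcal{C}Y,Z)\cong\mathcal{C}(X,[Y,Z]_{\mathcal{C}})$) by combining $s$, $d$, and the unit isomorphism, and then delegate the coherence verifications to \cite[(4.4),(4.5)]{Bar}. The paper is slightly more explicit about the last step, routing through de Schipper's results \cite[IV.4, Prop VI.4.2, Thm VI.6.2]{Schipper} to upgrade the hom-set adjunction to a symmetric monoidal closed structure in the Eilenberg--Kelly sense, which is exactly the "Yoneda transport of coherence" you anticipate as the main bookkeeping burden.
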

\begin{proof} Recall for instance that $i_X:X\to [1_\mathcal{C},X]_{\mathcal{C}}$ is an available natural isomorphism. Note first that there is a natural isomorphism %$d_{X,Y}:[X,Y]_{\mathcal{C}}\to[[Y,\perp]_{\mathcal{C}},[X,\perp]_{\mathcal{C}}]_{\mathcal{C}}$
defined by:
\begin{align*}
\xymatrix@C=20pt{
d_{X,Y}:[X,Y]_{\mathcal{C}} \ar[rr]^{[X, d_Y]_{\mathcal{C}}}
&& [X,Y^{**}]_{\mathcal{C}} 
\ar[r]^{s_{X,Y^*,\perp}\qquad}&
[Y^*,[X,\perp]_{\mathcal{C}}]_{\mathcal{C}}.
}
\end{align*}

The assumptions give a natural isomorphism:
\begin{align*}\mathcal{C}(X,[Y,Z^*]_{\mathcal{C}})&\simeq \mathcal{C}(1,[X,[Y,Z^*]_{\mathcal{C}}]_{\mathcal{C}})\simeq \mathcal{C}(1,[X,[Z,Y^*]_{\mathcal{C}}]_{\mathcal{C}})\\&\simeq \mathcal{C}(1,[Z,[X,Y^*]_{\mathcal{C}}]_{\mathcal{C}})\simeq \mathcal{C}(Z,[Y,X^*]_{\mathcal{C}}).\end{align*}
Moreover, we have a bijection $\mathcal{C}(X^*,Y^*)\simeq \mathcal{C}(1,[X^*,Y^*]_{\mathcal{C}})\simeq \mathcal{C}(1,[Y,X]_{\mathcal{C}})\simeq \mathcal{C}(Y,X)$ so that the assumptions in \cite[(4.4)]{Bar} are satisfied. His discussion 
in (4.5) gives a natural isomorphism: 
$\pi_{XYZ}:\mathcal{C}(X\o_\mathcal{C}Y,Z)\to \mathcal{C}(X,[Y,Z]_{\mathcal{C}}]_{\mathcal{C}})$.

We are thus in the third basic situation of \cite[IV .4]{Schipper} which gives (from $s$) a natural transformation 
$p_{XYZ}:[X\o_\mathcal{C}Y,Z]_{\mathcal{C}}\to [X,[Y,Z]_{\mathcal{C}}]_{\mathcal{C}}.$ Then the proof of his Prop VI.4.2 proves his compatibility condition MSCC1 from SCC3, hence we have a monoidal symmetric closed category in the sense of \cite[Def IV.3.1]{Schipper}.

Then \cite[Thm VI.6.2 p 136]{Schipper} gives us a usual symmetric monoidal closed category in the sense of \cite{EilenbergKelly}. This concludes.%%The equiv of category is in fact automatic cf nLab "dualizing object in a closed category"
% {\color{red} [TBA details car la discussion de Bar n'est pas assez détailleé selon moi et je ne sais pas si toutes les conditions de cohérence sont satisfaites à la fin, en particulier EilenbergKelly ne discutent pas la symétrie] }
\end{proof}

We finally recall the more general definition in \cite{MelliesTabareau}:
\begin{definition}
A \emph{Dialogue category} is a symmetric monoidal category $(\mathcal{C},\o_{\mathcal{C}},1_\mathcal{C})$ with a functor, called tensorial negation: $\neg:\mathcal{C}\to \mathcal{C}^{op}$ which is associated to a natural bijection $\varphi_{A,B,C}:\mathcal{C}(A\o_{\mathcal{C}} B,\neg C)\simeq \mathcal{C}(A,\neg (B\o_{\mathcal{C}} C))$ and satisfying the commutative diagram with associators $Ass^{\o_\mathcal{C}}_{A,B,C}:A\o_{\mathcal{C}} (B\o_{\mathcal{C}} C)\to(A\o_{\mathcal{C}} B)\o_{\mathcal{C}} C$:
\begin{equation}\label{DialogueCompatibility}
\xymatrix@C=40pt{
 \mathcal{C}((A\o_{\mathcal{C}} B)\o_{\mathcal{C}} C,\neg D)\ar[d]|{\mathcal{C}(Ass^{\o_\mathcal{C}}_{A,B,C},\neg D)}\ar[r]^{\varphi_{A\o_{\mathcal{C}} B,C,D}}  & \mathcal{C}(A\o_{\mathcal{C}} B,\neg (C \o_{\mathcal{C}}D)) \ar[r]^{\varphi_{A, B,C\o_{\mathcal{C}} D}\ \ \ }
&\mathcal{C}\Big(A,\neg \big[B\o_{\mathcal{C}} (C\o_{\mathcal{C}} D)\big]\Big)\\
\mathcal{C}(A\o_{\mathcal{C}} (B\o_{\mathcal{C}} C),\neg D)  \ar[rr]^{\varphi_{A,B\o C,D}} & &\mathcal{C}\Big(A,\neg \big[(B\o_{\mathcal{C}} C)\o_{\mathcal{C}} D\big]\Big)\ar[u]|{\mathcal{C}(A,\neg Ass^{\o_\mathcal{C}}_{B,C,D})} }
\end{equation}
\end{definition}

\subsection{A model of MALL making appear the Arens dual and the Schwartz $\varepsilon$-product}

We introduce a first $*$-autonomous category that captures categorically the part of \cite{Schwartz} that does not use quasi-completeness. Since bornological and topological concepts are dual to one another, it is natural to fix a saturated bornology on $E$ in order to create a self-dual concept. Then, if one wants every object to be a dual object as in a $*$-autonomous category, one must consider only bornologies that can arise as the natural bornology on the dual, namely, the equicontinuous bornology. We could take a precompactness condition to ensure that, but to make appear the Arens dual and $\varepsilon$-product (and not the polar topology and Meise's variant of the $\varepsilon$-product), we use instead a compactness condition. A weak-compactness condition would work for the self-duality requirement by Mackey Theorem but not for dealing with tensor products.

We will thus use a (saturated, topological) variant of the notion of compactology used in \cite[p 157]{Jarchow}. We say that a saturated bornology $B_E$ on a \lcs $E$ is a \textit{compactology} if it consists of relatively compact sets. Hence, the bipolar of each bounded set for this bornology is an absolutely convex compact set in $E$, and it is bounded for this bornology. A separated locally convex space with a compactology will be called a \textit{compactological locally convex space}.
\begin{definition}
Let $\LCS$ be the category of separated locally convex spaces with continuous linear maps and $\CLCS$  the category of  compactological locally convex spaces, with maps given by bounded continuous linear maps.  For $E,F\in \textbf{CLCS}$ the internal Hom $L_b(E,F)$ is the above set of maps given the topology of uniform convergence on the bornology of $E$ and the bornology of equibounded equicontinuous sets. We call $E'_b=L_b(E,\K)$ (its bornology is merely the equicontinuous bornology, see step 1 of next proof).
The algebraic tensor product $E\o_H F$ is the algebraic tensor product with the topology having the universal property for  $B_E,B_F$-hypocontinuous maps, and the bornology generated by bipolars of sets $A\o C$ for $A\in B_E,C\in B_F$.
\end{definition}

Note that we didn't claim that $E\o_H F$ is in $\textbf{CLCS}$, it may not be. It gives a generic hypocontinuous tensor product.
Note that composition of bounded continuous linear maps are of the same type, hence $\textbf{CLCS}$ is indeed a category.
 
 Recall also that $\LCS$ is complete and cocomplete since it has small products and coproducts, kernels and cokernels (given by the quotient by the closure $\overline{Im[f-g]}$) \cite[ \S 18.3.(1,2,5), 18.5.(1)]{Kothe}.
 
In order to state simultaneously a variant adapted to Schwartz spaces, we introduce a variant:

 \begin{definition}
Let $\Sch\subset\LCS$ be the full subcategory of Schwartz spaces and $\CSch\subset \CLCS$  the full subcategory of Schwartz compactological lcs, namely those spaces which are Schwartz as locally convex spaces and for which $E'_b$ is a Schwartz \lcs too.
\end{definition}
This second condition is well-known to be equivalent to the bornology being a Schwartz bornology \cite{HogbeNlendMoscatelli}, and to a more concrete one:
\begin{lemma}\label{SchwartzBorno}
For $E\in \CLCS$, $E'_b$ is a Schwartz \lcs if and only if every bounded set in $B_E$ is included in the closed absolutely convex cover of a $B_E$-null sequence. 
\end{lemma}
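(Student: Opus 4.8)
The plan is to reduce the statement to the sequential description of the Schwartz topology recalled in Subsection~\ref{subsec:remindertvs}: a \lcs $F$ is a Schwartz space if and only if $\mathscr{S}(F)=F$, where $\mathscr{S}(F)$ carries the topology $\tau_0$ of uniform convergence on the sequences of $F'$ converging equicontinuously to $0$. I would apply this to $F=E'_b$ and translate everything through polars.

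First I would record the duality data. Since $B_E$ is a compactology, every $A\in B_E$ has an absolutely convex compact bipolar lying in $B_E$, so the topology of $E'_b$ lies between $\sigma(E',E)$ and the Mackey topology $\mu(E',E)$; by Mackey--Arens one gets $(E'_b)'=E$ algebraically, and the equicontinuous subsets of $E=(E'_b)'$ are exactly the members of $B_E$ (this is step 1 of the next proof). Consequently a sequence in $(E'_b)'$ converges equicontinuously to $0$ precisely when it is a $B_E$-null sequence in the sense of Subsection~\ref{subsec:remindertvs}. Thus $\tau_0$ is the topology on $E'$ of uniform convergence on $B_E$-null sequences, with neighbourhood basis the polars $S^\circ$ of such sequences $S$; finite intersections reduce to a single polar, since $S_1^\circ\cap\cdots\cap S_k^\circ=(S_1\cup\cdots\cup S_k)^\circ$ and, $B_E$ being saturated and stable under finite unions, the interleaving of finitely many $B_E$-null sequences is again $B_E$-null.

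Next I would compare $\tau_0$ with the topology $\tau_{B_E}$ of $E'_b$, whose basic $0$-neighbourhoods are the polars $A^\circ$, $A\in B_E$. A $B_E$-null sequence $S$ converges in some $E_B$, hence is bounded there, so $S\subseteq \lambda B$ for some $\lambda$ and some $B\in B_E$, giving $S^\circ\supseteq(\lambda B)^\circ$; therefore $\tau_0\subseteq\tau_{B_E}$ always holds. Hence $E'_b$ is Schwartz if and only if $\tau_0\supseteq\tau_{B_E}$, i.e. each $A^\circ$ is a $\tau_0$-neighbourhood: for every $A\in B_E$ there is a $B_E$-null sequence $S$ with $S^\circ\subseteq A^\circ$. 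By the bipolar theorem this inclusion is equivalent to $A\subseteq S^{\circ\circ}$, and since $S^{\circ\circ}$ is the $\sigma(E,E')$-closed absolutely convex hull of $S$, which coincides with its closure for the original topology of $E$ (closed convex sets agree for all topologies compatible with the duality), $S^{\circ\circ}$ is exactly the closed absolutely convex cover of $S$ in $E$. This is the announced condition.

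The argument is essentially bookkeeping around polars; the only genuinely substantive inputs are the identification $(E'_b)'=E$ with equicontinuous bornology equal to $B_E$ (which the compactology hypothesis is designed to supply) and the sequential description of $\mathscr{S}$. The point requiring the most care, and the natural place for an error, is verifying that equicontinuous convergence of sequences in $E$ really coincides with $B_E$-null convergence and that one may always work with the polar of a single null sequence rather than with finite intersections; both follow from saturation and stability under finite unions of $B_E$. I would also keep track of the harmless scaling constants when absorbing $\lambda B$ back into $B_E$.
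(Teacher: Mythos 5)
Your proof is correct and takes essentially the same route as the paper's: both identify $\mathscr{S}(E'_b)$ as the topology of uniform convergence on the saturated bornology generated by $B_E$-null sequences (the paper cites Jarchow Prop.~10.4.4 where you unpack the definition of $\mathscr{S}$ from the preliminaries) and then compare the two polar topologies (the paper cites K\"othe \S 21.1.(4) where you do the polar bookkeeping by hand). The points you flag as delicate --- that equicontinuous sets of $(E'_b)'$ are exactly $B_E$ by saturation, the reduction of finite intersections of polars to a single null sequence, and which closure is meant --- are all handled soundly.
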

\begin{proof}
$E'_b$ is Schwartz if and only if  $E'_b=\mathscr{S}(E'_b).$ But $\mathscr{S}(E'_b)$ is known to be the topology of uniform convergence on $(B_E)_{c_0}$ the saturated bornology  generated by $B_E$-null sequences of $E=(E'_b)'$ \cite[Prop 10.4.4]{Jarchow}. Since both bornologies are saturated this means \cite[\S 21 .1. (4)]{Kothe} that $E'_b$ is a Schwartz space if and only if $B_E=(B_E)_{c_0}$. 
\end{proof}
We call $\mathscr{S}L_b(E,F)$ the same \lcs as $L_b(E,F)$ but given the bornology $(B_{L_b(E,F)})_{c_0}$ namely the associated Schwartz bornology. Note that $\mathscr{S}L_b(E,\K)=E'_b$  as compactological \lcs for $E\in \CSch$. 
 
\begin{theorem}\label{FirstMALL}
$\textbf{CLCS}$ (resp. $\CSch$) is a complete and cocomplete $*$-autonomous category with dualizing object $\K$ and internal Hom $L_b(E,F)$ (resp. $\mathscr{S}L_b(E,F)$).
\begin{enumerate}\item The functor $(.)'_c:\LCS\to \textbf{CLCS}^{op}$ giving the Arens dual the equicontinuous bornology, is right adjoint to $U((.)'_b)$, with $U$ the underlying lcs and 
$U((.)'_b)\circ (.)'_c=Id_{\LCS}.$ The functor $(.)'_\sigma:\LCS\to \textbf{CLCS}^{op}$ giving the weak dual the equicontinuous bornology, is left adjoint to $U((.)'_b)$ and 
$U((.)'_b)\circ (.)'_\sigma=Id_{\LCS}.$ 
\item
The functor $U:\textbf{CLCS}\to \LCS$  is left adjoint and also left inverse to $(.)_c$, the functor  $E\mapsto E_c$ the space with the same topology and the absolutely convex compact bornology. $U$ is right adjoint to  $(.)_\sigma$, the functor  $E\mapsto E_\sigma$ the space with the same topology and the saturated bornology generated by finite sets. $U,(.)_c,(.)_\sigma$ are faithful.
\item
The functor $U:\textbf{CSch}\to \Sch$  is left adjoint and also left inverse to $(.)_{sc}$, the functor  $E\mapsto E_{sc}$ the space with the same topology and the Schwartz bornology associated to the absolutely convex compact bornology. $U$ is again right adjoint to  $(.)_\sigma$ (restriction of the previous one).  $U,(.)_{sc},(.)_\sigma$ are faithful.
\item The $\varepsilon$-product in $\LCS$ is given by $E\varepsilon F=U(E_c\parr_b F_c)$ with $G\parr_b H=L_b(G'_b,H)$  and of course the Arens dual by $U((E_c)'_b),$ and more generally $L_c(E,F)=U(L_b(E_c,F_c)).$ The inductive tensor product $E\o_iF=U(E_\sigma\o_b F_\sigma)$ with $G\o_b H=(G'_b\parr_b H'_b)'_b$ and of course the weak dual is $U((E_\sigma)'_b).$
 \end{enumerate}
\end{theorem}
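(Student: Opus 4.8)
The plan is to reduce everything to Lemma \ref{ClosedCatToStarAut} by exhibiting $\CLCS$ as a symmetric closed category whose internal Hom is $L_b(\cdot,\cdot)$, whose unit is $\K$ (with the bornology of its bounded scalar sets), and whose dualizing object is $\perp = L_b(\K,\K)\simeq \K$; the extra input required by that lemma is that the canonical map $d_E : E \to (E'_b)'_b$ is a natural isomorphism in $\CLCS$. Once the symmetric closed structure and this reflexivity are in hand, Lemma \ref{ClosedCatToStarAut} produces the $*$-autonomous structure with $E^* = E'_b$ and $E\o_{\CLCS} F = (L_b(E,F^*))^*$, after which I would separately address (co)completeness, the enumerated functorial claims, and the Schwartz variant $\CSch$.

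The heart of the construction is the reflexivity $d_E$, which is where the compactology hypothesis is used. As a preliminary step I would record that $E'_b = L_b(E,\K)$ carries the topology of uniform convergence on $B_E$ and the equicontinuous bornology. Dualizing once more, $(E'_b)'_b$ carries topologically the topology of uniform convergence on the equicontinuous subsets of $E'$, which by the standing convention recalled in the preliminaries is exactly the original topology of $E$; bornologically it carries the equicontinuous subsets of $E'_b$, that is the bipolars of members of $B_E$. Since $B_E$ is saturated and, being a compactology, consists of relatively compact sets whose bipolars are therefore absolutely convex, closed, hence compact, and again in $B_E$, this recovers $B_E$ on the nose, so $d_E$ is an isomorphism of compactological \lcs, naturally in $E$. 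This is precisely where compactness (rather than mere weak-compactness) is needed: it keeps the bipolars inside the bornology and compact, which is also what makes the tensor products behave.

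The main obstacle is verifying the symmetric closed structure, and in particular that the symmetry $s_{X,Y,Z} : L_b(X,L_b(Y,Z)) \to L_b(Y,L_b(X,Z))$ is a well-defined natural isomorphism in $\CLCS$. I would obtain it from the tensor product $\o_H$: its defining universal property for $B_X,B_Y$-hypocontinuous bilinear maps yields a natural bijection $\CLCS(X\o_H Y,Z)\simeq \CLCS(X,L_b(Y,Z))$, identifying $L_b(X,L_b(Y,Z))$ with the $B_X,B_Y$-hypocontinuous bilinear maps $X\times Y\to Z$ bounded on products of bounded sets. Because $\o_H$ is symmetric, swapping the two variables provides $s$; the genuine work is checking that this swap is bi-bounded and bicontinuous, i.e. that the topologies of uniform convergence and the equibounded-equicontinuous bornologies are symmetric in $X$ and $Y$ — the hypocontinuity and bornology bookkeeping. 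With $s$ available, the remaining data of a symmetric closed category (the isomorphism $i_X:X\to L_b(\K,X)$ given by evaluation at $1$, the composition morphism, and the coherence condition SCC3) are routine, and I would package them through \cite{Schipper} exactly as in the proof of Lemma \ref{ClosedCatToStarAut}.

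Finally I would treat the secondary points. For (co)completeness I would build limits and colimits by taking the underlying $\LCS$ (co)limit, which exists by \cite[\S 18.3, 18.5]{Kothe} as recalled, and equipping it with the natural compactology (product or subspace bornology for limits, the bornology generated by the images for colimits), relative compactness being preserved via Tychonoff for products; consistency is confirmed by claim (2), which shows $U$ is simultaneously a left and a right adjoint and hence preserves both limits and colimits. The adjunctions in (1)--(3) are verified from these universal properties together with the Mackey--Arens identity \eqref{Ec} $E=(E'_c)'$ and Schwartz's $E'_c\simeq ((E'_c)'_c)'_c$; faithfulness of $U,(.)_c,(.)_\sigma,(.)_{sc}$ is immediate since none alters the underlying linear maps. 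Claim (4) is a matter of unwinding definitions so that $U(E_c\parr_b F_c)$ and $U(E_\sigma\o_b F_\sigma)$ reproduce respectively the $\varepsilon$-product and the inductive tensor product of the preliminaries. The Schwartz variant $\CSch$ repeats the whole argument with internal Hom $\mathscr{S}L_b$: using Lemma \ref{SchwartzBorno} and the facts recalled in Subsection \ref{subsec:remindertvs} about $\mathscr{S}$ (it preserves the dual and Mackey-completeness, and every subspace of a Schwartz space is Schwartz), the reflexivity and closed structure transport to the Schwartz bornology, the only change being the systematic replacement of bornologies by their associated Schwartz bornologies $(\cdot)_{c_0}$.
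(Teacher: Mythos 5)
Your overall skeleton matches the paper's: exhibit $\CLCS$ as a symmetric closed category with internal Hom $L_b$, check the reflexivity $d_E:E\to (E'_b)'_b$ (your bipolar argument for this is correct and is indeed where saturation and compactness of the compactology enter), and then invoke Lemma \ref{ClosedCatToStarAut}. But there is a genuine gap before any of that can start: you never verify that $L_b(E,F)$ is an \emph{object} of $\CLCS$, i.e.\ that the equibounded equicontinuous bornology on $L_b(E,F)$ consists of relatively compact sets. Your preliminary step only identifies the topology and bornology of $E'_b$; it does not show those equicontinuous sets are relatively compact for the topology of uniform convergence on $B_E$ (for $F=\K$ this follows from the continuous map $E'_c\to E'_b$ together with Schwartz's observation that closed absolutely convex equicontinuous sets are compact in $E'_c$, but you do not say this). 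For general $F$ this is the hardest part of the paper's proof: one identifies $L_b(E,F)$ with the dual of the subspace $E\widehat{\o}_H F'_b$ of the completion spanned by bipolars of bounded sets, and proves that $A\o B$ is precompact there by a covering argument that uses hypocontinuity together with \emph{genuine} compactness of $A$ and $B$ (the paper explicitly notes that weak compactness would not suffice here). Without this step the internal Hom does not land in the category and the closed structure is vacuous. Relatedly, your proposed universal property $\CLCS(X\o_H Y,Z)\simeq \CLCS(X,L_b(Y,Z))$ does not typecheck as stated, since the paper warns that $X\o_H Y$ need not belong to $\CLCS$; the symmetry $s_{X,Y,Z}$ is more safely obtained, as in the paper, by a direct comparison of the equiboundedness/equicontinuity conditions on both sides (which is elementary once one knows both sides are objects of $\CLCS$).

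A second, smaller gap concerns $\CSch$: the passage is not just "replace every bornology by its associated Schwartz bornology." To get a closed structure with internal Hom $\mathscr{S}L_b$ one needs the identity $\mathscr{S}L_b(E,\mathscr{S}L_b(F,G))=\mathscr{S}L_b(E,L_b(F,G))$ (equation \eqref{InnerAdjointS} in the paper), whose proof is a nontrivial manipulation of Mackey-null sequences showing that the bornologies on the two iterated Hom spaces agree; without it the symmetry isomorphism and the application of Lemma \ref{ClosedCatToStarAut} in the Schwartz case do not go through. Your treatment of (co)completeness, the adjunctions in (1)--(3), and the identities in (4) is acceptable, and your direct construction of limits via product bornologies is a legitimate alternative to the paper's route (which builds colimits explicitly and obtains limits from the duality $\lim_i E_i=(\mathrm{colim}_i(E_i)'_b)'_b$), provided you check that bipolars of products of compacts remain compact, which they do since $\Gamma(\prod_i K_i)\subset\prod_i K_i$ for absolutely convex $K_i$.
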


\begin{proof}

\begin{step}
Internal Hom functors $L_b,\mathscr{S}L_b$.
\end{step}
We first need to check that the equibounded equicontinuous bornology on $L_b(E,F)$ is made of relatively  compact sets when $E,F\in \textbf{CLCS}$. In the case $F=\K$,  the bornology is the equicontinuous bornology since an equicontinuous set is equibounded for von Neumann bornologies \cite[\S 39.3.(1)]{Kothe2}. %Equicontinuous sets are well know to be relatively weakly compact by Mackey Theorem \cite[\S 21.4.(1)]{Kothe} (and using $(E'_b)'=E$) and in the compactological case, 
Our claimed statement is then explained in \cite[note 4 p 16]{Schwartz} since it is proved there that every equicontinuous closed absolutely convex set is compact in $E'_c=(U(E))'_c$ and our assumption that the saturated bornology is made of relatively compact sets implies there is a continuous map $E'_c\to E'_b$. This proves the case $F=\K$.

 Note that by definition, $G=L_b(E,F)$ identifies with the dual $H=(E\o_H F'_b)'_b.$ %the tensor product being the hypocontinuous tensor product for the given bornologies on $E,F'_b$, with the bornology generated by bipolars of tensor products of bounded sets.
  Indeed, the choice of bornologies implies the topology of $H$ is the topology of uniform convergence on equicontinuous sets of $F'$ and on bounded sets of $E$ which is the topology of $G$. An equicontinuous set in $H$ is known to be an equihypocontinuous set \cite[p 10]{Schwartz2}, i.e. a set taking a bounded set in E and giving an equicontinuous set in $(F'_b)'$, namely a bounded set in $F$, hence the equibounded condition, and taking symmetrically a bounded set in $F'_b$ i.e. an equicontinuous set and sending it to an equicontinuous set in $E'$, hence the equicontinuity condition \cite[\S 39.3.(4)]{Kothe2}. 
 
{ Let $E\widehat{\o}_H F'_b\subset E\widetilde{\o}_H F'_b$ the subset of the completion obtained by taking the union of bipolars of bounded sets.} It is easy to see this is a vector subspace on which we put the induced topology.
 One deduces that $H=(E\widehat{\o}_H F'_b)'_b$ where the $E\widehat{\o} F'_b$ is given the bornology generated by bipolars of bounded sets (which covers it by our choice of subspace). Indeed the completion does not change the dual and the equicontinuous sets herein \cite[\S 21.4.(5)]{Kothe} and the extension to bipolars does not change the topology on the dual either. But in $E\widehat{\o}_H F'_b$, bounded sets for the above bornology are included into  bipolars of tensor product of bounded sets.
Let us recall why tensor products $A\o B$ of such bounded sets are precompact in $E\o_H F'_b$ (hence also in $E\widehat{\o}_H F'_b$ by \cite[\S 15.6.(7)]{Kothe}) if $E,F\in \textbf{CLCS}$. Take $U'$ (resp. $U$) a neighbourhood of $0$ in it (resp. such that $U+U\subset U'$), by definition there is a neighbourhood $V$ (resp. $W$) of $0$ in $E$ (resp. $ F'_b$) such that $V\o B\subset U$ (resp. $A\o W\subset U$). Since $A,B$ are relatively compact hence  precompact, cover $A\subset \cup_i x_i+V$, $x_i\in A$ (resp. $B\subset \cup_j y_j+W$, $y_j\in B$) so that one gets the finite cover giving totally boundedness: 
$$A\o B\subset \cup_i x_i\o B+V\o B\subset \cup_{i,j} x_i\o y_j+x_i\o W +V\o B\subset \cup_{i,j} x_i\o y_j+U+U\subset\cup_{i,j} x_i\o y_j+U'.$$

Note that we used strongly compactness here in order to exploit hypocontinuity, and weak compactness and the definition of Jarchow for compactologies wouldn't work with our argument.

  Thus from hypocontinuity, we deduced the canonical map $E\times F'_b\to E\widehat{\o}_H F'_b$ send $A\times B$ to a precompact (using \cite[\S 5.6.(2)]{Kothe}), hence its bipolar is complete (since we took the bipolar in the completion which is closed there) and precompact \cite[\S 20.6.(2)]{Kothe} hence compact (by definition \cite[\S 5.6]{Kothe}). Thus  $E\widehat{\o}_H F'_b\in\textbf{CLCS},$ if $E,F\in\textbf{CLCS}.$ From the first case for the dual, one deduces $L_b(E,F)\in\textbf{CLCS}$ in this case.
Moreover, once the next step obtained, we will know $E\widehat{\o}_H F'_b\simeq E\o_b F'_b.$

Let us explain why $\CSch$ is stable by the above internal Hom functor. First for $E,F\in\CSch$ we must see that  $L_b(E,F)$ is a Schwartz \lcs. By definition $F,E'_b$ are Schwartz spaces, hence this is \cite[Thm 16.4.1]{Jarchow}. From the choice of bornology, $\mathscr{S}L_b(E,F)\in \CSch$ since by definition $U((\mathscr{S}L_b(E,F))'_b)\simeq \mathscr{S}(U((L_b(E,F))'_b)).$

 \begin{step}
$\CLCS$ and $\CSch$ as Closed categories.
\end{step}

It is well know that $\mathbf{Vect}$ the category of Vector spaces is a symmetric monoidal category and especially a closed category in the sense of \cite{EilenbergKelly}. $\CLCS\subset \mathbf{Vect}$ is a (far from being full) subcategory, but we see that we can induce maps on our smaller internal Hom.
 Indeed, the linear map $L_{FG}^E:L_b(F,G)\to L_b(L_b(E,F),L_b(E,G))$ is well defined since a bounded family in $L_b(F,G)$ is equibounded, hence it sends an equibounded set in $L_b(E,F)$ to an equibounded set in $L_b(E,G)$, and also equicontinuous, hence its transpose sends an equicontinuous set in $(L_b(E,G))'$ (described as bipolars of bounded sets in $E$ tensored with equicontinuous sets in $G'$) to an equicontinuous  set in $(L_b(E,F))'$. This reasoning implies $L_{FG}^E$ is indeed valued in continuous equibounded maps and even bounded with our choice of bornologies.  Moreover we claim $L_{FG}^E$ is continuous. Indeed, an equicontinuous set in 
$(L_b(L_b(E,F),L_b(E,G)))'$ is generated by the bipolar of equicontinuous $C$ set in $G'$, a bounded set $B$ in $E$ and an equibounded set $A$ in $(L_b(E,F))$ and the transpose consider $A(B)\subset F$ and $C$ to generate a bipolar which is indeed equicontinuous in  $(L_b(F,G))'.$ Hence, $L_{FG}^E$ is a map of our category. Similarly, the morphism giving identity maps $j_E:\K\to L_b(E,E)$ is indeed valued in the smaller space and the canonical $i_E:E\to L_b(\K,E)$ indeed sends a bounded set to an equibounded equicontinuous set % equicont by def of bounded sets in E which are von neumann bounded hence bounded on equicontinuous sets which induce the seminorms of the topology of E)
and is tautologically equicontinuous. Now all the relations for a closed category are induced from those in $\mathbf{Vect}$ by restriction. The naturality conditions are easy.

Let us deduce the case of $\CSch$. 
First, let us see that for $E\in\CSch$, \begin{equation}\label{InnerAdjointS}\mathscr{S}L_b(E,\mathscr{S}L_b(F,G))=\mathscr{S}L_b(E,L_b(F,G))\end{equation}

By definition of boundedness, a map $f\in L_b(E,L_b(F,G))$ sends a Mackey-null sequence in $E$ to a Mackey-null sequence in $L_b(F,G)$ hence by continuity the bipolar of such a sequence is sent to a bounded set in $\mathscr{S}L_b(F,G)$, hence from lemma \ref{SchwartzBorno}, so is a bounded set in $E$. We deduce the algebraic equality in \eqref{InnerAdjointS}. The topology of $L_b(E,H)$ only depends on the topology of $H$, hence we have the topological equality since both target spaces have the same topology. It remains to compare the bornologies. But from the equal target topologies, again, the equicontinuity condition is the same on both spaces hence boundedness of the map $L_b(E,\mathscr{S}L_b(F,G))\to L_b(E,L_b(F,G))$ is obvious. Take a sequence $f_n$ of maps Mackey-null in $L_b(E,L_b(F,G))$ hence in the Banach space generated by the Banach disk $D$ of another Mackey-null sequence $(g_n)$.
Let us see that $\{g_n,n\in \N\}^{oo}$ is equibounded in $L_b(E,\mathscr{S}L_b(F,G))$.
 For take $B\subset L_b(E,L_b(F,G))$ the disk for $(g_n)$ with $||g_n||_B\to 0$ and take a typical generating bounded set $A=\{x_n, n\in \N\}^{oo}\subset E$ for $x_n$  $B_E$-Mackey-null. Then $g_n(A)\subset\{g_m(x_n), m,n\in \N\}^{oo}=:C$ and $||g_m(x_n)||_{(B(A))^{oo}}\leq ||g_m||_{B}||x_n||_{A}$ and since $B(A)$ is bounded by equiboundedness of $B$, $(g_m(x_n))$ is Mackey-null, hence 
 $C$ is bounded in $\mathscr{S}L_b(F,G)$ and hence $D=\{g_n, n\in \N\}^{oo}$ is equibounded as stated. But since $D$ is also bounded in 
$L_b(E,L_b(F,G))$ it is also equicontinuous, hence finally, bounded in $L_b(E,\mathscr{S}L_b(F,G))$. This gives that $f_n$ Mackey-null there which concludes to the bornological equality in \eqref{InnerAdjointS}.

As a consequence, for $E,F,G\in \CSch$, the previous map $L_{FG}^E$ induces a map \begin{align*}L_{FG}^E:\mathscr{S}L_b(F,G)\to \mathscr{S}L_b(L_b(E,F),L_b(E,G))\to& \mathscr{S}L_b(\mathscr{S}L_b(E,F),L_b(E,G))\\&=\mathscr{S}L_b(\mathscr{S}L_b(E,F),\mathscr{S}L_b(E,G))\end{align*}
coinciding with the previous one as map. Note that we used the canonical continuous equibounded map $L_b(L_b(E,F),G)\to L_b(\mathscr{S}L_b(E,F),G)$ obviously given by the definition of associated Schwartz bornologies which is a smaller bornology.%indeed if a map is f\in L_b(L_b(E,F),G) it is again continuous in the second space since source  and target have same topology, it is a fortiri equibounded since the Schartz bornology is smaller. If a set of maps A\subset L_b(L_b(E,F),G) is equibounded again it is a fortiri equibounded in target and equicontinuous in both coincide. Hence the stated map is equibounded. Finally it is continuouss since uniform convergence on bounded sets in L_b(E,F) implies the convergence on the smaller Schwartz bornology. 

 \begin{step}
$*$-autonomous property.
\end{step}
First note that $L_b(E,F)\simeq L_b(F'_b,E'_b)$ by transposition. Indeed, the space of maps and their bornologies are the same since equicontinuity (resp. equiboundedness) $E\to F$ is equivalent to equiboundedness (resp. equicontinuity) of the transpose $F'_b\to E'_b$ for equicontinuous bornologies (resp. for topologies of uniform convergence of corresponding bounded sets). Moreover the topology is the same since it is the topology of uniform convergence on bounded sets of $E$ (identical to equicontinuous sets of $(E'_b)'$) and equicontinuous sets of $F'$ (identical to bounded sets for $F'_b$).  Similarly $\mathscr{S}L_b(E,F)\simeq \mathscr{S}L_b(F'_b,E'_b)$ since on both sides one considers the bornology generated by Mackey-null sequences for the same bornology.

It remains to check  $L_b(E,L_b(F,G))\simeq  L_b(F,L_b(E,G))$. The map is of course the canonical map. Equiboundedness in the first space means sending a bounded set in $E$ and a bounded set in $F$ to a bounded set in $G$ and also a bounded set in $E$ and an equicontinuous set in $G'$ to an equicontinuous set in $F'$. This second condition is exactly equicontinuity $F\to L_b(E,G)$. Finally, analogously, equicontinuity  $E\to L_b(F,G)$ implies it sends a bounded set in $F$ and an equicontinuous set in $G'$ to an equicontinuous set in $E'$ which was the missing part of equiboundedness in  $L_b(F,L_b(E,G))$. The identification of spaces and bornologies follows. Finally, the topology on both spaces is the topology of uniform convergence on products of bounded sets of $E,F$.

Again, the naturality conditions of the above two isomorphisms are easy, and the last one induces from $\mathbf{Vect}$ again the structure of a symmetric closed category, hence lemma \ref{ClosedCatToStarAut} concludes to $\CLCS$ $*$-autonomous.

Let us prove the corresponding statement for $\CSch$.  Note that \eqref{InnerAdjointS} implies the compactological isomorphism $$\mathscr{S}L_b(E,\mathscr{S}L_b(F,G))\simeq \mathscr{S}L_b(E,L_b(F,G))\simeq  \mathscr{S}L_b(F,L_b(E,G)) \simeq \mathscr{S}L_b(F,\mathscr{S}L_b(E,G)).$$

Hence, application of lemma \ref{ClosedCatToStarAut} concludes in the same way.

\begin{step}
Completeness and cocompleteness.
\end{step}
Let us describe first coproducts and cokernels. This is easy in $\CLCS$ it is given by the colimit of separated locally convex spaces, given the corresponding final bornology.
Explicitely, the coproduct is the direct sum of vector spaces with coproduct topology and the bornology is the one generated by finite sum of bounded sets, hence included in finite sums of compact sets which are compact \cite[\S 15.6.(8)]{Kothe}. Hence the direct sum is in $\textbf{CLCS}$ and clearly has the universal property from those of topolocial/bornological direct sums.
For the cokernel of $f,g:E\to F$, we take the coproduct in 
$\LCS$, $Coker(f,g)=F/\overline{(f-g)(E)}$ with the final bornology, i.e. the bornology generated by images of bounded sets. Since the quotient map is continuous between Hausdorff spaces, the image of a compact containing a bounded set is compact, hence $Coker(f,g)\in\CLCS$. Again the universal property comes from the one in locally convex and bornological spaces. Completeness then follows from the $*$-autonomous property since one can see $\lim_i E_i=(\mathrm{colim}_i (E_i)'_b)'_b$ gives a limit.

Similarly in $\CSch$ the colimit of Schwartz bornologies is still Schwartz since the dual is a projective limit of Schwartz spaces hence a Schwartz space (cf lemma \ref{SchwartzFunctor}).  We therefore claim that the colimit is the Schwartz topological space associated to the colimit in $\CLCS$ with same bornology. Indeed this is allowed since there are more compact sets hence the compatibility condition in $\CLCS$ is still satisfied and functoriality of  $\mathscr{S}$ in lemma \ref{SchwartzFunctor} implies the universal property.

\begin{step}
Adjunctions and consequences.
\end{step}
The fact that the stated maps are functors is easy. 
We start by the adjunction for $U$ in (2): $\LCS(U(F),E)=L_b(F,E_c)=\textbf{CLCS}(F,E_c)$ since the extra condition of boundedness beyond continuity is implied by the fact that a bounded set in $F$ is contained in an absolutely convex compact set which is sent to the same kind of set by a continuous linear map.
Similarly, $\LCS(E,U(F))=L_b(E_\sigma,F)=\textbf{CLCS}(E_\sigma,F)$ since the image of a finite set is always in any bornology (which must cover $E$ and is stable by union), hence the equiboundedness is also automatic. 

For $(3),$ since $(E_\sigma)'_b=E'_\sigma$ is always Schwartz, the functor $(.)_\sigma$ restricts to the new context, hence the adjunction. Moreover $U((E_c)'_b)=\mathscr{S}(E'_c)$ by construction. The key identity 
$\Sch(U(F),E)=L_b(F,E_c)=\CSch(F,E_{sc})$ comes from the fact that a Mackey-null sequence in $F$ is send by a continuous function to a Mackey-null sequence for the compact bornology hence to a bounded set in $E_{sc}$.
All naturality conditions are easy.

Moreover, for the adjunction in (1), we have the equality as set (using involutivity and functoriality of $(.)'_b$ and the previous adjunction): 
$$\textbf{CLCS}^{op}(F,E'_c)=L_b((E_c)'_b,F)=L_b(F'_b,E_c)=\LCS(U(F'_b),E),$$
 $$\textbf{CLCS}^{op}(E'_\sigma,F)=L_b(F,(E_\sigma)'_b)=L_b(E_\sigma,F'_b)=\LCS(E,U(F'_b)).$$
The other claimed identities are obvious by definition.
\end{proof}

The second named author explored in \cite{Kerjean} models of linear logic using the positive product $\o_i$ and $(.)'_\sigma$. We will use in this work the negative product $\varepsilon$ and the Arens dual $(.)'_c$ appearing with a dual role in the previous result. Let us summarize the properties obtained in \cite{Schwartz} that are consequences of our categorical framework.

\begin{corollary}\begin{enumerate}\item Let $E_i\in \LCS, i\in I.$
The iterated $\epsilon$-product is $\varepsilon_{i\in I}E_i=U(\parr_{b,i\in I} (E_i)_c)$, it is symmetric in its arguments and commute with limits. 
\item There is a continuous injection $(E_1\varepsilon E_2 \varepsilon E_3)\to E_1\varepsilon (E_2\varepsilon E_3).$ 
\item For any continuous linear map $f:F_1\to E_1$ (resp. continuous injection, closed embedding), so is $f\varepsilon Id:F_1\varepsilon E_2\to E_1\varepsilon E_2.$
%\item If $F_i\subset E_i$, then $F_1\varepsilon F_2=(F_1\varepsilon E_2)\cap (E_1\varepsilon F_2).$ suffices to see the case for $\parr_b$ is it known ?
\end{enumerate}
\end{corollary}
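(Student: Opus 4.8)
The plan is to deduce all three statements from Theorem~\ref{FirstMALL}, exploiting that $\CLCS$ is a symmetric $*$-autonomous category with cotensor $\parr_b$, that by part~(4) one has $E\varepsilon F\simeq U(E_c\parr_b F_c)$ with $G\parr_b H=L_b(G'_b,H)$, and that $((E)_c)'_b=E'_c$ (the Arens dual with its equicontinuous bornology). The organizing principle is that the three functors in play all preserve limits: $U$ because it is a right adjoint (to $(.)_\sigma$), $(.)_c$ because it is a right adjoint (to $U$), and $\parr_b$ in each variable because $L_b((.)'_b,\cdot)$ preserves limits in its second slot, while in the first slot $(.)'_b$ sends limits to colimits that $L_b(\cdot,H)$ sends back to limits. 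Throughout I will also use the concrete description $E\varepsilon F=\mathcal{L}_\epsilon(E'_c,F)$ recalled in the preliminaries, carrying the topology of uniform convergence on equicontinuous subsets of $E'$.

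For (1) I would first identify $U(\parr_{b,i\in I}(E_i)_c)$ with the classical iterated $\varepsilon$-product. By $*$-autonomy $\parr_{b,i}(E_i)_c=\big(\o_{b,i}(E_i)'_c\big)'_b$, whose underlying \lcs is, by the universal property of the hypocontinuous tensor product $\o_b$, the space of $\varepsilon$-hypocontinuous multilinear forms on $\prod_i(E_i)'_c$; this is exactly $\varepsilon_{i}E_i$, extending the binary identification of part~(4) (for infinite $I$ one reads the iterated $\parr_b$ through this same universal property). Symmetry is then inherited from the commutativity of $\parr_b$ in the symmetric category $\CLCS$, and commutation with limits in each argument follows from the limit-preservation principle above, applied to the composite $U\circ(\parr_b\cdots)\circ(.)_c$.

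For (3) I would use the description $F_1\varepsilon E_2=\mathcal{L}_\epsilon((E_2)'_c,F_1)$, under which $f\varepsilon Id$ is post-composition $u\mapsto f\circ u$. Continuity when $f$ is continuous is functoriality of $(.)_c$, $\parr_b$ and $U$ applied to $f_c:F_{1,c}\to E_{1,c}$. If $f$ is injective then $f\circ u=0$ forces $u=0$ pointwise, giving injectivity. If $f$ is a topological embedding, the defining seminorms $\sup_{b\in B}p(u(b))$ of $F_1\varepsilon E_2$, with $p$ a continuous seminorm on $F_1$ and $B$ equicontinuous in $(E_2)'$, are cofinally of the form $\sup_{b\in B}q(f(u(b)))$ with $q$ a continuous seminorm on $E_1$ (since for an embedding the topology of $F_1$ is generated by the $q\circ f$), so they coincide with the restrictions of the $E_1\varepsilon E_2$-seminorms and $f\varepsilon Id$ is an embedding. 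Finally, if $f$ is a closed embedding, the $\varepsilon$-topology refines pointwise convergence on $(E_2)'$ (singletons being equicontinuous), so any $u\in E_1\varepsilon E_2$ in the closure of the image takes values in the closed set $f(F_1)$, and since $f$ is an embedding the corestriction $f^{-1}\circ u$ is a continuous map $(E_2)'_c\to F_1$, i.e.\ $u$ lies in the image; hence $f\varepsilon Id$ is again a closed embedding.

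For (2) the crucial point is that iterating $\varepsilon$ in $\LCS$ is \emph{not} iterating $\parr_b$ in $\CLCS$, precisely because of the re-bornologification hidden in $(.)_c$. By part~(1) and associativity of $\parr_b$ I identify $E_1\varepsilon E_2\varepsilon E_3\simeq U\big(E_{1,c}\parr_b(E_{2,c}\parr_b E_{3,c})\big)$, whereas $E_1\varepsilon(E_2\varepsilon E_3)=U\big(E_{1,c}\parr_b(U(E_{2,c}\parr_b E_{3,c}))_c\big)$. The canonical $\CLCS$-morphism $G\to(U(G))_c$, which is the identity on the underlying \lcs and merely enlarges the bornology to all relatively compact sets, induces through $U\circ(E_{1,c}\parr_b\,\cdot\,)$ the desired map; it is a continuous injection because on underlying spaces it is the inclusion $L_b((E_{1,c})'_b,G)\subset L_b((E_{1,c})'_b,(U(G))_c)$ carrying the same topology of uniform convergence on the compact sets of $E_1$. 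I expect the main obstacle to be the bornological bookkeeping in (2) — checking that passing to the full compact bornology keeps the map injective and topology-preserving while explaining why it is generally non-surjective — together with the closed-image verification in (3), where one must confirm both that the limit stays valued in $f(F_1)$ and that its corestriction remains continuous into $F_1$.
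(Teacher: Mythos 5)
Your proposal is correct. For (1) and (2) it is essentially the paper's own argument: the identification $\varepsilon_{i\in I}E_i=U(\parr_{b,i}(E_i)_c)$ as a reformulation of definitions, limit-commutation from $U$ and $(.)_c$ being right adjoints and $\parr_b$ commuting with limits, and the map in (2) induced by the unit $G\to (U(G))_c$ of the adjunction $U\dashv (.)_c$ applied inside $E_{1,c}\parr_b(\cdot)$. Where you genuinely diverge is in (3) and in the "injection" part of (2): the paper stays categorical (injectivity of $f\varepsilon Id$ is proved by showing $U$ and $(.)_c$ preserve and reflect monomorphisms and that $\parr_b$ preserves them via Cartesian closedness; the closed-embedding case is dispatched by noting that closed embeddings are regular monomorphisms, i.e.\ kernels, hence limits, so (1) applies), whereas you argue directly with points and seminorms. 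Your route buys more: you actually prove that (2) is a topological embedding and that $f\varepsilon Id$ is an embedding for arbitrary (not necessarily closed) embeddings $f$, facts the paper states but explicitly declines to derive from the categorical setting, referring to Schwartz instead; the paper's route buys brevity and reusability of the abstract lemmas. One point in your closed-embedding argument needs an extra line: after corestricting $u$ to $f^{-1}\circ u:(E_2)'_c\to F_1$, continuity alone does not place it in $F_1\varepsilon E_2$ --- you must also check the $\varepsilon$-boundedness condition, namely that $f^{-1}\circ u$ sends each equicontinuous $B\subset E_2'$ into an absolutely convex compact subset of $F_1$. This does hold: $u(B)$ lies in some absolutely convex compact $K\subset E_1$ and in the closed subspace $f(F_1)$, so $K\cap f(F_1)$ is absolutely convex compact, and $f^{-1}$ restricted to $f(F_1)$ is a linear homeomorphism onto $F_1$, carrying it to an absolutely convex compact set containing $(f^{-1}\circ u)(B)$. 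With that line added, your argument for (3) is complete.
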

Note that (3) is also valid for non-closed embeddings and (2) is also an embedding \cite{Schwartz}, but this is not a categorical consequence of our setting. 
\begin{proof}
The equality in (1) is a reformulation of definitions, symmetry is an obvious consequence. Commutation with limits come from the fact that $U, ()_c$ are right adjoints and $\parr_b$ commutes with limits from universal properties. 

Using associativity of $\parr_b$: $E_1\varepsilon (E_2\varepsilon E_3)=U((E_1)_c\parr_b[U((E_2)_c\parr_b(E_3)_c)]_c)$ hence functoriality and the natural transformation coming from adjunction $Id\to (U(\cdot))_c$ concludes to the continuous map in (2). It is moreover a monomorphism since $E\to (U(E))_c$ is one since $U(E)\to U((U(E))_c)$ is identity and $U$ reflects monomorphisms and one can use the argument for (3).

For (3) functorialities give definition of the map, and recall that closed embeddings in $\LCS$ are merely regular monomorphisms, hence a limit, explaining its commutation by (1). If $f$ is a monomomorphism, in categorical sense, so is $U(f)$ using a right inverse for $U$ and so is $(f)_c$ since $U((f)_c)=f$ and $U$ reflects monomorphisms as any faithful functor. Hence it suffices to see $\parr_b$ preserves monomorphisms but $g_1,g_2:X\to E\parr_b F$ correspond by Cartesian closedness to maps $X \o_b F'_b\to E$ that are equal when composed with $f:E\to G$ if $f$ monomorphism, hence so is $f\parr_b id_F.$
\end{proof}

In general, we have just seen that $\varepsilon$ has features for a negative connective as $\parr$, but it lacks associativity. We will have to work to recover a monoidal category, and then models of LL. In that respect, we want to make our fix of associativity compatible with a class of smooth maps, this will be the second leitmotiv. We don't know if there is an extension of the model of MALL given by $\textbf{CLCS}$ into a model of LL using a kind of smooth maps.

\section{Mackey-complete spaces and a first interpretation for $\parr$}\label{sec:zeta}
Towards our goal of obtaining a model of LL with conveniently smooth maps as non-linear morphisms, it is natural to follow \cite{KrieglMichor} and consider Mackey-complete spaces as in \cite{BET12,KerjeanT}. In order to fix associativity of $\varepsilon$ in this context, we will see appear the supplementary Schwartz condition. This is not such surprising as seen the relation with Mackey-completeness appearing for instance in \cite[chap 10]{Jarchow} which treats them simultaneously. This Schwartz space condition will enable to replace Arens duals by Mackey duals (lemma \ref{MackeyArensSchwartz}) and thus simplify lots of arguments in identifying duals as Mackey-completions of inductive tensor products (lemma \ref{DualArensMc}). This will strongly simplify the construction of the strength for our doubly negation monad later in section \ref{sec:rhodual}. Technically, this is possible by various results of \cite{Kothe} which points out a nice alternative tensor product $\eta$ which replaces $\varepsilon$-product exactly in switching Arens with Mackey duals. But we need to combine $\eta$ with $\mathscr{S}$ in a clever way in yet another product $\zeta$ in order to get an associative product. Said in words, this is a product which enables to ensure at least two Schwartz spaces among three in an associativity relation. This technicality is thus a reflection of the fact that for Mackey-complete spaces, one needs to have at least 2 Schwartz spaces among three to get a 3 term associator for an $\varepsilon$-product. In course of getting our associativity, we get the crucial relation $\mathscr{S}(\mathscr{S}(E)\varepsilon F)=\mathscr{S}(E)\varepsilon  \mathscr{S}(F)$ in corollary \ref{SchwartzCondExpEpsilon}.  This is surprising because this seems really specific to Schwartz spaces and we are completely unable to prove an analogue for the associated nuclear topology functor  $\mathscr{N}$, even if we expect it for the less useful associated strongly nuclear topology.
We conclude in Theorem \ref{zetaparr} with our first interpretation of $\parr$ as $\zeta$.

\subsection{A Mackey-Completion with continuous canonical map}

Note that for a $\gamma$-Mackey-Cauchy sequence, topological convergence  is equivalent to Mackey convergence (since the class of bounded sets is generated by bounded closed sets).

%When $E=(E'_c)'_c$, and from \cite[Chap 3]{HogbeNlend}, we thus ask that the above bornology $\mathscr{B}_\gamma$ is Mackey-complete. When $E\neq (E'_c)'_c$ our condition may be weaker and insures stability by intersection. It is obviously a regular bornology in the sense of the same chapter, thus a separated bornology. %Since the generating set is made of completant Banach disks, the Mackey-completeness of this bornology is equivalent \cite[Th 1 p 33]{HogbeNlend} to its completeness in the sense of \cite[Chap 4]{HogbeNlend}. 

\begin{remark}\label{MackeyInclusion}Note also that is $E\subset F$ is a continuous inclusion, then a Mackey-Cauchy/convergent sequence  in $E$ is also Mackey-Cauchy/convergent in $F$ since a linear map is bounded.
\end{remark}

We now recall two alternative constructions of the Mackey-completion, from above by intersection and from below by union. The first construction is already considered in \cite{PerrezCarreras}.

\begin{lemma}\label{gammacompletion}
The intersection $\widehat{E}^M$ of all Mackey-complete spaces containing $E$ and contained in the completion $\tilde{E}$ of $E$, is Mackey-complete and called the Mackey-completion of $E$.

We define $E_{M;0}=E$, and for any ordinal $\lambda$, the subspace $E_{M;\lambda +1}=\cup_{(x_n)_{n\geq 0}\in  M(E_{M;\lambda })}\overline{ \Gamma(\{x_n, n\geq 0\})} \subset \tilde{E}$ where the union runs over all Mackey-Cauchy sequences  $  M(E_{M;\lambda })$  of $E_{M;\lambda }$, and the closure is taken in the completion. We also let for any limit ordinal $E_{M;\lambda}=\cup_{\mu<\lambda}E_{M;\mu}$. Then for any ordinal $\lambda$, $E_{M;\lambda }\subset\widehat{E}^M$ and eventually for {$\lambda\geq  \omega_1$ the first uncountable ordinal, we have equality.}
%Finally, if $ \omega_1$ is the first uncountable ordinal and the Mackey-completion $E_{M;\lambda ,\omega_1}$ contains (as a vector subspace space)  the completion obtained in \cite[Thm 4.29]{KrieglMichor} but with a weaker topology coinciding with the one of $E$ on $E$. Thus if $E$ is bornological, we have equality and $\lambda_0\leq \omega_1.$
\end{lemma}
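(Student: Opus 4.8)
The plan is to prove in order: (i) that $\widehat{E}^M$ is Mackey-complete; (ii) by transfinite induction, that each $E_{M;\lambda}$ is a linear subspace of $\tilde{E}$ with $E_{M;\lambda}\subseteq\widehat{E}^M$; and (iii) that the increasing chain stabilises, already at $\omega_1$, with common value $\widehat{E}^M$. Note first that the intersection defining $\widehat{E}^M$ is over a nonempty family, since $\tilde{E}$ is complete, hence Mackey-complete, and contains $E$.

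For (i), I would take a Mackey-Cauchy sequence $(x_n)$ in $G:=\widehat{E}^M$, with witnessing bounded disk $B\subset G$. As $B$ is bounded in $\tilde{E}$ and in every Mackey-complete $F$ with $E\subseteq F\subseteq\tilde{E}$ occurring in the intersection, $(x_n)$ is Mackey-Cauchy there too; completeness of $\tilde{E}$ yields a limit $x\in\tilde{E}$, and Mackey-completeness of each such $F$ forces the (unique, since $\tilde{E}$ is separated) limit $x$ into every $F$, hence into $G$. The topological convergence $x_n\to x$ then holds in the subspace $G$, and by the remark opening this subsection a topologically convergent Mackey-Cauchy sequence is Mackey-convergent; so $G$ is Mackey-complete.

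The technical core, used throughout the induction (ii), is the identity $E_{M;\lambda+1}=L_\lambda$, where $L_\lambda$ denotes the set of $\tilde{E}$-limits of Mackey-Cauchy sequences of $E_{M;\lambda}$. The inclusion $L_\lambda\subseteq E_{M;\lambda+1}$ is immediate, as such a limit lies in $\overline{\{y_n\}}\subseteq\overline{\Gamma(\{y_n\})}$. For the reverse inclusion I would take $u\in\overline{\Gamma(\{x_n\})}$ with $(x_n)\in M(E_{M;\lambda})$ witnessed by a bounded disk $B$ containing $\{x_n\}$: passing to the Banach space $\widehat{(E_{M;\lambda})_{B}}$, the sequence $(x_n)$ converges there, so $\overline{\Gamma(\{x_n\})}$ is compact in this completion and maps onto $\overline{\Gamma(\{x_n\})}$ (closure in $\tilde{E}$); hence $u$ is the $\tilde{E}$-limit of a $B$-Cauchy sequence of finite absolutely convex combinations of the $x_n$, which is Mackey-Cauchy in $E_{M;\lambda}$, so $u\in L_\lambda$. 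Since $L_\lambda$ is patently a linear subspace (Mackey-Cauchy sequences of the subspace $E_{M;\lambda}$ are stable under linear combinations, and $\tilde{E}$-limits pass to the limit), this shows $E_{M;\lambda+1}$ is a subspace; the base and limit cases being clear, each $E_{M;\lambda}$ is a subspace. For the inclusion $E_{M;\lambda}\subseteq\widehat{E}^M$ I use the same description: granting $E_{M;\lambda}\subseteq\widehat{E}^M$, a Mackey-Cauchy sequence of $E_{M;\lambda}$ is Mackey-Cauchy in the Mackey-complete $\widehat{E}^M$, so its $\tilde{E}$-limit lies in $\widehat{E}^M$; thus $E_{M;\lambda+1}=L_\lambda\subseteq\widehat{E}^M$, and limit ordinals are handled by unions.

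Finally, for (iii) I would show $E_{M;\omega_1}$ is itself Mackey-complete. A Mackey-Cauchy sequence $(x_n)$ in $E_{M;\omega_1}=\bigcup_{\mu<\omega_1}E_{M;\mu}$ has $x_n\in E_{M;\mu_n}$ with $\mu_n<\omega_1$; since $\omega_1$ has uncountable cofinality, $\mu:=\sup_n\mu_n<\omega_1$. Because each $E_{M;\mu}$ carries the subspace topology, the Minkowski functional of the witnessing disk agrees with that of its trace on $E_{M;\mu}$, so $(x_n)$ is already Mackey-Cauchy in $E_{M;\mu}$, whence its $\tilde{E}$-limit lies in $E_{M;\mu+1}\subseteq E_{M;\omega_1}$; the remark again upgrades this to Mackey convergence inside $E_{M;\omega_1}$. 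Thus $E_{M;\omega_1}$ is a Mackey-complete space sandwiched between $E$ and $\widehat{E}^M$; as $\widehat{E}^M$ is the intersection of all such spaces, $\widehat{E}^M\subseteq E_{M;\omega_1}$, giving equality, which then propagates to all $\lambda\geq\omega_1$ by monotonicity and (ii). The main obstacle is step (ii): the one genuinely non-formal point is that a vector in the $\tilde{E}$-closed absolutely convex hull of a Mackey-Cauchy sequence is recovered as a Mackey-limit of that sequence's finite combinations, which requires completing the normed space attached to the witnessing disk rather than working directly in $\tilde{E}$; once this is in hand, the stabilisation is purely set-theoretic, resting on the uncountable cofinality of $\omega_1$.
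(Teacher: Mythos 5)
Your proof is correct and follows essentially the same route as the paper's: stability of Mackey-completeness under intersection for the first claim, compactness of the closed absolutely convex hull of a Mackey-Cauchy sequence in an associated Banach space to show that every element of that hull is itself a Mackey-limit of a Mackey-Cauchy sequence from $E_{M;\lambda}$, and uncountable cofinality of $\omega_1$ for stabilisation. The only (harmless) organisational difference is that you complete the normed space $(E_{M;\lambda})_B$ abstractly and record the identity $E_{M;\lambda+1}=L_\lambda$ — which also settles the subspace claim the paper dismisses as easy — whereas the paper runs the compactness argument inside $F_A$ for a closed bounded disk $A$ of the already-Mackey-complete $\widehat{E}^M$.
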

\begin{proof}
The first statement comes from stability of Mackey-completeness  by intersection (using remark \ref{MackeyInclusion}). It is easy to see that $E_{M;\lambda }$ is a subspace. 
{At stage $E_{M;\omega_1 +1}$, by uncountable cofinality of $\omega_1$ any Mackey-Cauchy sequence has to be in $E_{M;\lambda}$ for some $\lambda<\omega_1$ and thus each term of the union is in some $E_{M;\lambda +1}$, therefore $E_{M;\omega_1 +1}=E_{M;\omega_1}.$}

Moreover if at some $\lambda$, $E_{M;\lambda +1}=E_{M;\lambda }$, then by definition, $E_{M,\lambda }$ is Mackey-complete (since we add with every sequence its limit that exists in the completion which is Mackey-complete) and then the ordinal sequence is eventually constant. Then, we have $E_{M,\lambda }\supset \widehat{E}^M$. One shows for any $\lambda$ the converse by transfinite induction. For, let $(x_n)_{n\geq 0}$ is a Mackey-Cauchy sequence in $E_{M;\lambda }\subset F:=\widehat{E}^M$ . Consider $A$ a closed bounded absolutely convex set in $F$ with $x_n\to x$ in $F_A$. Then by \cite[Prop 10.2.1]{Jarchow}, $F_A$ is a Banach space, thus $\overline{\Gamma(\{x_n,n\geq 0\})}$ computed in this space is complete and thus compact (since $\{x\}\cup\{x_n, n\geq 0\}$ is compact in the Banach space), thus its image in $\tilde{E}$ is compact and thus agrees with the closure computed there. Thus every element of $\overline{\Gamma(\{x_n,n\geq 0\})}$ is a limit in $E_A$ of a sequence in $\Gamma(\{x_n,n\geq 0\})\subset E_{M;\lambda }$ thus by Mackey-completeness, $\overline{\Gamma(\{x_n,n\geq 0\})}\subset F.$
We thus conclude to the successor step $E_{M;\lambda +1}\subset \widehat{E}^M$, the limit step is obvious.

%From \cite[Thm 4.29]{KrieglMichor}, there is a bounded map $u:F\to \tilde{E}$ from their $c^\infty$-completion $F$. From their \cite[Prop 4.32]{Kriegl-Michor} $u(F)\subset E_{M;\omega_1}$. $u$ is injective as seen by transfinite induction on the successive Mackey closures, since if $u(y)=0$ and $x_n\to y$ bornologically, $u(x_n)\to u(y)=0$ and this take pace in a previous induction step thus $y=0$ by uniqueness of the Mackey limit in $E$. If $E$ is bornological, both construction use the same completion, thus agree.
\end{proof}

 \subsection{A $\parr$ for Mackey-complete spaces}
 
 We first define a variant of the Schwartz $\epsilon$-product:
 \begin{definition}
For two separated \lcs $E$ and $F$, we define $E\eta F=L(E'_\mu,F)$ the space of continuous linear maps on the Mackey dual with the topology of uniform convergence on equicontinuous sets of $E'$. We write $\eta(E,F)=(E'_\mu\o_{\beta e}F'_\mu)'$ with the topology of uniform convergence on products of equicontinuous sets and $\zeta(E,F)\equiv E\zeta F$ for the same space with the weakest topology making continuous the canonical maps to $\eta(\mathscr{S}(E),F)$ and $\eta(E,\mathscr{S}(F)).$
\end{definition}

This space $E\eta F$ has already been studied in \cite{Kothe2} and we can summarize  its properties similar to the Schwartz $\epsilon$ product in the next proposition, after a couple of lemmas.% Note first that when $E$ is a Schwartz space, we have the topological identity $E\eta F=E\zeta F=E\varepsilon F.$

We first recall an important property of the associated Schwartz topology from \cite{Junek}. These properties follow from the fact that the ideal of compact operators on Banach spaces in injective, closed and surjective. Especially, from \cite[Corol 6.3.9]{Junek} it is an idempotent ideal.

\begin{lemma}\label{SchwartzFunctor}
The associated Schwartz topology functor $\mathscr{S}$ commutes with arbitrary products, quotients and embeddings (and as a consequence with arbitrary projective kernels or categorical limits).
\end{lemma}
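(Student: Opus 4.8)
The plan is to identify the functor $\mathscr{S}$ with the topology functor $\mathscr{T}_{\mathcal{K}}$ associated to the operator ideal $\mathcal{K}$ of compact operators between Banach spaces, and then read off each commutation property from a corresponding stability property of $\mathcal{K}$, following the general theory of \cite{Junek}. Concretely, I would first record the local Banach space description of the associated Schwartz topology: $\mathscr{S}(E)$ is generated by those continuous seminorms $p$ for which there exists a continuous seminorm $q\geq p$ such that the canonical linking map $\tilde{E}_q\to \tilde{E}_p$ lies in $\mathcal{K}$. Using $\mathscr{S}(E)'=E'$, this coincides with the $\tau_0$-description as the topology of uniform convergence on equicontinuously null sequences, so that the abstract machinery applies and $\mathscr{S}$ is the finest Schwartz topology coarser than $\tau$. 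The idempotency of $\mathcal{K}$ recalled from \cite[Corol 6.3.9]{Junek} then guarantees that $\mathscr{S}$ is a genuine idempotent functor, consistent with the fact that $\mathscr{S}(E)=E$ exactly when $E$ is Schwartz.

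With this dictionary in place, the three basic commutations become instances of the principle that $\mathscr{T}_{\mathcal{A}}$ inherits stability from $\mathcal{A}$. Since $\mathcal{K}$ is a \emph{surjective} ideal, $\mathscr{S}$ is compatible with the formation of quotients: the local Banach spaces of a quotient are quotients of the local Banach spaces, and surjectivity of $\mathcal{K}$ detects the compactness of the linking map downstairs from the one upstairs. Since $\mathcal{K}$ is an \emph{injective} ideal, $\mathscr{S}$ is compatible with embeddings, because a topological subspace has its local Banach spaces sitting isometrically inside those of the ambient space and injectivity transfers the ideal membership across the metric injection; here one also uses that a subspace of a Schwartz space is Schwartz. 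For arbitrary products $\prod_i E_i$, the continuous seminorms are the $\max_{i\in F} p_i\circ\pi_i$ over finite $F$, and the corresponding local Banach space is the finite product $\prod_{i\in F}\tilde{(E_i)}_{p_i}$; since a finite product of compact operators is compact (the \emph{closed}, linear ideal structure of $\mathcal{K}$), $\max_i p_i$ is an $\mathscr{S}$-seminorm if and only if each $p_i$ is one, which is precisely the product topology $\prod_i \mathscr{S}(E_i)$.

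Once products and embeddings are settled, the consequence for projective kernels and categorical limits is formal: a projective kernel in the sense of \cite{Kothe} is an embedding into a product, and any categorical limit in $\LCS$ is an equalizer, hence a closed subspace (an embedding), of a product. Thus commutation with products together with commutation with embeddings forces commutation with arbitrary projective kernels, and a fortiori with categorical limits and with the directed projective limits of \cite{Kothe}.

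The main obstacle is twofold. The first delicate point is verifying rigorously that the concrete definition of $\mathscr{S}$ via $\tau_0$ agrees with the operator-ideal topology $\mathscr{T}_{\mathcal{K}}$, so that the stability theorems of \cite{Junek} are genuinely available; this is routine but must be stated carefully because the whole argument rests on it. The second, and conceptually sharper, point is that the content of the lemma lies on the \emph{limit} side rather than the colimit side. Indeed $\mathscr{S}$ is the reflector onto Schwartz spaces, left adjoint to the inclusion (initial topologies into Schwartz spaces are Schwartz, and $\tau_0$ is the finest Schwartz topology below $\tau$), so its commutation with colimits is automatic and gives quotients for free; by contrast, commutation with products and embeddings is exactly the behaviour a left adjoint need not have, and it is precisely the injective, surjective and closed structure of $\mathcal{K}$ that supplies it. I therefore expect the product case, combined with the identification step, to require the most care.
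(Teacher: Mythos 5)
Your proposal is correct and follows essentially the same route as the paper, which likewise invokes the identification of $\mathscr{S}$ with the topology functor of the compact operator ideal and cites the injective, surjective and closed ideal properties via \cite[Prop 7.4.2, Prop 7.4.8]{Junek}, before deducing projective kernels and categorical limits formally from products plus embeddings. Your write-up merely unfolds the content of those cited propositions (and adds the accurate observation that the limit-side commutations are the non-automatic ones), so there is nothing to correct.
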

\begin{proof}
For products and (topological) quotients, this is \cite[Prop 7.4.2]{Junek}. For embeddings (that he calls topological injections), this is \cite[Prop 7.4.8]{Junek} based on the previous ex 7.4.7. The consequence comes from the fact that any projective kernel is a subspace of a product, as a categorical limit is a kernel of a map between products.
\end{proof}

We will also often use the following relation with duals 
\begin{lemma}\label{MackeyArensSchwartz}
If $E$ is a Schwartz \lcs, $E'_c\simeq E'_\mu$ so that for any  \lcs $F$, $E\eta F\simeq  E\varepsilon F$ topologically.  Thus for any \lcs $E$, $E'_\mu\simeq (\mathscr{S}(E))'_c$.
\end{lemma}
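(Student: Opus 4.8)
The plan is to concentrate all the work in the topological identity $E'_c\simeq E'_\mu$ for $E$ a Schwartz space, and then transfer it to the other two claims essentially formally. Both $E'_c$ and $E'_\mu$ are the same dual $E'$ equipped with the topology of uniform convergence on a family of absolutely convex subsets of $E$: the (absolutely convex) compact sets for $\gamma(E',E)$, and the weakly compact ones for $\mu(E',E)$. So it suffices to prove that, when $E$ is Schwartz, these two families of sets coincide. One inclusion is general and requires no hypothesis: an absolutely convex compact set $K$ is the image of $K$ under the continuous identity $E\to E_\sigma$, hence weakly compact, so $\gamma(E',E)\subseteq\mu(E',E)$ always. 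The whole content is the reverse inclusion, namely that on a Schwartz space every absolutely convex weakly compact set is already compact; this is the step I expect to be the main obstacle.

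To prove that, I would fix an absolutely convex $\sigma(E,E')$-compact set $K$ and show it is compact in the original topology by the classical scheme \emph{precompact $+$ complete $=$ compact}. First, $K$ is weakly bounded, hence bounded by Mackey's theorem; since $E$ is a Schwartz space, every bounded set is precompact (\cite[Chapter 10]{Jarchow}), so $K$ is precompact. The delicate point is completeness of $K$ for the original uniformity. I would argue that any Cauchy net $(x_\alpha)$ in $K$ is weakly Cauchy, and by weak compactness of $K$ admits a weak cluster point $x\in K$. For a closed absolutely convex $0$-neighbourhood $U$, choose $\alpha_0$ with $x_\beta-x_\alpha\in U$ whenever $\alpha,\beta\geq\alpha_0$; fixing $\alpha\geq\alpha_0$, the set $x_\alpha+U$ is convex and closed, hence weakly closed, and contains the tail $\{x_\beta:\beta\geq\alpha\}$, so it contains the weak cluster point $x$. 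Thus $x-x_\alpha\in U$ for all $\alpha\geq\alpha_0$, which shows $x_\alpha\to x$ in the original topology. Hence $K$ is complete, therefore compact, and the two families of sets coincide, giving $E'_c\simeq E'_\mu$.

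The remaining two assertions follow with no further analytic input. For the $\varepsilon$-product, both $E\eta F=L(E'_\mu,F)$ and $E\varepsilon F=L(E'_c,F)$ are spaces of continuous linear maps into $F$ carrying the topology of uniform convergence on the equicontinuous subsets of $E'$; since for Schwartz $E$ the domains $E'_\mu$ and $E'_c$ are the same \lcs, the underlying sets of maps, their continuity and their topologies coincide, whence $E\eta F\simeq E\varepsilon F$ for every $F$. For the last statement I would apply the identity just proved to the Schwartz space $\mathscr{S}(E)$, obtaining $(\mathscr{S}(E))'_c\simeq(\mathscr{S}(E))'_\mu$, and then observe that the Mackey topology depends only on the dual pairing: the weakly compact absolutely convex sets are determined by $\sigma(E,E')$, and since $\mathscr{S}(E)'=E'$ the pairings $\langle E,E'\rangle$ and $\langle\mathscr{S}(E),E'\rangle$ agree, so $\mu(E',E)=\mu(E',\mathscr{S}(E))$, i.e. $E'_\mu=(\mathscr{S}(E))'_\mu$. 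Combining the two gives $E'_\mu\simeq(\mathscr{S}(E))'_c$, which concludes.
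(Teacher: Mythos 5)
Your proof is correct and follows essentially the same route as the paper: weakly compact absolutely convex sets are bounded, hence precompact in a Schwartz space, and then complete because weak compactness yields completeness for the original topology, giving compactness and hence $\gamma(E',E)=\mu(E',E)$; the transfer to $\eta$ versus $\varepsilon$ and to $\mathscr{S}(E)$ is handled identically. The only difference is that where the paper cites Bourbaki (IV.5, Remarque 2) for the completeness step, you prove it directly via the closed-convex-hence-weakly-closed argument on tails of a Cauchy net, which is a valid unpacking of that reference.
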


\begin{proof}
Take $K$ an absolutely convex $\sigma(E',E)$-weakly compact in $E$, it is an absolutely convex closed set in $E$ and precompact as any bounded set in a Schwartz space \cite[3,\S 15 Prop 4]{Horvath}. %Let us show it is complete. Take a Cauchy net $(x_i)$ in $K$ it is Cauchy for the weak topology, and therefore converges to $x\in K$ weakly. To show that it converges in $E$, take $L$ an equicontinuous set in $E'$. We can assume it is absolutely convex and weakly compact (see \cite[Pf of Corol 10.4.2]{Jarchow}, it is contained in a Banach space compact). Let $i_L:E\to C^0(L)$ the Banach space with the sup norm $(i_L(x_i))$ is Cauchy in it and thus converges to $y_L$ and from the relation to the pointwise convergence, $i_L(x)=y_L$. 
\cite[IV.5 Rmq 2]{Bourbaki} concludes to $K$ complete since $E\to (E'_\mu)'_\sigma$ continuous with same dual and $K$ complete in $(E'_\mu)'_\sigma$, and since $K$ precompact, it is therefore compact in $E$. As a consequence $E'_c$ is the Mackey topology. Hence, $E\eta F=L(E'_\mu,F)=L(E'_c,F)=  E\varepsilon F$ algebraically and the topologies are defined in the same way. The last statement comes from the first and $E'_\mu\simeq (\mathscr{S}(E))'_\mu$.
\end{proof}
 
 \begin{proposition}\label{etageneral}
 Let $E,F,G,H$, be separated \lcs, then :
\begin{enumerate}
\item We have a topological canonical isomorphisms $\zeta(E,F)=\zeta(F,E)$, $$E\eta F=F\eta E\simeq \eta(E,F)$$ and we have a continuous linear map $E\varepsilon F\to E\eta F\to E \zeta F$ which is a topological isomorphism as soon as either $E$ or $F$ is a Schwartz space. In general, $E\varepsilon F$ is a closed subspace of  $E\eta F$.
\item $E\eta F$ is complete if and only if $E$ and $F$ are complete. 
\item If $A:G\to E$, $B:H\to F$ are linear continuous (resp. linear continuous one-to-one, resp. embeddings) so are the tensor product map $(A\eta B),(A\zeta B) $  both defined by $(A\eta B)(f)=B\circ f\circ A^t, f\in L(E'_\mu, F).$
\item If $F=\mathrm{K}_{i\in I}(A_i)^{-1}F_i$ is a projective kernel so are $E\eta F=\mathrm{K}_{i\in I}(1\eta A_i)^{-1}E\eta F_i $ and $E\zeta F=\mathrm{K}_{i\in I}(1\zeta A_i)^{-1}E\zeta F_i.$ Moreover, both $\eta,\zeta,\varepsilon$ commute with categorical limits in $\LCS$.
\item $\eta(E,F)$ is also the set of bilinear forms on $E'_\mu\times F'_\mu$ which are separately continuous. As a consequence, the Mackey topology $((E'_\mu\o_{\beta e}F'_\mu)'_\mu)'_\mu=E'_\mu\o_{i}F'_\mu$ is the inductive tensor product.
\item A set is bounded in $\eta(E,F)$ or $\zeta(E,F)$  if and only if it is $\epsilon$-equihypocontinuous on  $E'_\beta\times F'_\beta$.
\end{enumerate} 
 
 \end{proposition}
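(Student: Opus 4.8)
The plan is to treat the two structural items (1) and (5) first, since they carry all the analytic content, and then read off (2), (3), (4) and (6) from them together with the properties of the $\varepsilon$-product recorded in Section~2, Lemma~\ref{MackeyArensSchwartz} and the stability of $\mathscr{S}$ from Lemma~\ref{SchwartzFunctor}. The guiding observation is that $\varepsilon$, $\eta$ and $\zeta$ are built on the same pattern --- continuous maps out of a dual, equivalently bilinear forms on a product of duals, topologized by uniform convergence on products of equicontinuous sets --- and differ only in whether one uses the Arens dual, the Mackey dual, or their Schwartzifications. Since the Mackey topology is finer than the Arens topology (absolutely convex weakly compact sets being more numerous than compact ones), while equicontinuity and the induced topology of uniform convergence depend only on the duality, most statements reduce to tracking which topology is coarser, and collapse to a single space whenever a Schwartz space is present, via Lemma~\ref{MackeyArensSchwartz}.

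For (5), and hence for the identification $E\eta F\simeq\eta(E,F)$ inside (1), I would identify $f\in L(E'_\mu,F)=L(E'_\mu,(F'_\mu)')$ with the bilinear form $(x',y')\mapsto\langle f(x'),y'\rangle$ on $E'_\mu\times F'_\mu$; this form is automatically separately continuous, and the substance is the converse passage from separate continuity to $\varepsilon$-hypocontinuity, so that the form lies in $(E'_\mu\otimes_{\beta e}F'_\mu)'$. This is a bilinear Banach--Steinhaus argument: an equicontinuous $A\subseteq E'$ is relatively $\sigma(E',E)$-compact by Alaoglu--Bourbaki, and on such a set the weak and Mackey topologies coincide, so the pointwise bounded family $\{f(x'):x'\in A\}$ of continuous forms on $F'_\mu$ is equicontinuous --- exactly the missing hypocontinuity estimate (this is the point to cite \cite{Kothe2}, where $E\eta F$ is studied). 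Commutativity of $\otimes_{\beta e}$ then gives $\eta(E,F)=\eta(F,E)$ and $E\eta F=F\eta E$, and symmetry of $\zeta$ follows since its two defining maps, to $\eta(\mathscr{S}(E),F)$ and to $\eta(E,\mathscr{S}(F))$, are interchanged. For the consequence on $E'_\mu\otimes_i F'_\mu$: by definition $\otimes_i=\otimes_{\sigma,\sigma}$ has as its dual precisely the separately continuous forms, which we just identified with $\eta(E,F)=(E'_\mu\otimes_{\beta e}F'_\mu)'$; thus $\otimes_i$ and $\otimes_{\beta e}$ define the same dual pairing, and by Mackey--Arens the inductive topology is the Mackey topology of that pairing (cf.~\cite{Kothe}), which is the asserted double Mackey dual.

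The three maps of (1) are then immediate: $E'_\mu\to E'_c$ is the continuous identity, so every map continuous on $E'_c$ is continuous on $E'_\mu$, giving an inclusion $E\varepsilon F\subseteq E\eta F$ that is an embedding because both spaces carry the same topology of uniform convergence on equicontinuous sets of $E'$; and $E\eta F\to E\zeta F$ is the continuous identity onto the coarser $\zeta$-topology fixed by its universal property. When $E$ (or, by symmetry, $F$) is Schwartz, $E'_\mu\simeq E'_c$ by Lemma~\ref{MackeyArensSchwartz}, its equicontinuous sets agree with those of $\mathscr{S}(E)=E$, and the three topologies coincide. For the remaining items: in (2) the direction $(\Rightarrow)$ follows by exhibiting $E$ and $F$ as continuous retracts of $E\eta F$ (fix $e\in E$, $\varphi\in E'$ with $\varphi(e)=1$ and use $y\mapsto(x'\mapsto x'(e)y)$ and $f\mapsto f(\varphi)$), a retract of a complete space being complete; for $(\Leftarrow)$ a Cauchy net in $L(E'_\mu,F)$ converges pointwise to a linear map since $F$ is complete, uniformly on equicontinuous sets, and completeness of $E$ is what upgrades the limit back to a continuous map (again \cite{Kothe2}). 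Item (3) is functoriality: a continuous $A\colon G\to E$ has a transpose $A^t\colon E'_\mu\to G'_\mu$ that is Mackey-continuous and maps equicontinuous sets to equicontinuous sets, so $f\mapsto B\circ f\circ A^t$ is continuous for the uniform-convergence topologies; injectivity is preserved because the transpose of a one-to-one map has weakly, hence Mackey, dense range and $B$ is one-to-one, and embeddings are handled by the same duality; the $\zeta$-case is inherited through the defining maps using Lemma~\ref{SchwartzFunctor}.

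Item (4) holds because $L(E'_\mu,-)$, an internal Hom in its second variable, commutes with projective kernels and categorical limits, the topology of uniform convergence on equicontinuous sets being the initial one; for $\zeta$ one adds that $\mathscr{S}$ commutes with projective kernels by Lemma~\ref{SchwartzFunctor}, and the statement for $\varepsilon$ is the one recorded in the corollary to Theorem~\ref{FirstMALL}. Item (6) is obtained by unwinding definitions: a subset of $\eta(E,F)$ is bounded iff it is uniformly bounded on every product of equicontinuous sets, which is precisely $\varepsilon$-equihypocontinuity on $E'_\beta\times F'_\beta$; since this condition sees only the equicontinuous bornology of $E',F'$ and the bounded sets of $E'_\beta,F'_\beta$ --- all unchanged when $E,F$ are replaced by $\mathscr{S}(E),\mathscr{S}(F)$, as Schwartzification preserves the dual pair --- the bounded sets of $\eta(E,F)$, $\eta(\mathscr{S}(E),F)$, $\eta(E,\mathscr{S}(F))$ and hence $\zeta(E,F)$ all coincide. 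The main obstacle is the analytic core shared by (1) and (5): the upgrade from separate continuity to $\varepsilon$-hypocontinuity via the bilinear Banach--Steinhaus argument, and relatedly the closedness of $E\varepsilon F$ in $E\eta F$, which is not formal precisely because the two spaces carry the same topology on their common part --- one must show that an equicontinuous-uniform limit of maps continuous on the coarser Arens dual is again $E'_c$-continuous, extracting this from the behaviour on equicontinuous sets, where the Arens, Mackey and weak topologies agree and which are relatively compact in $E'_c$.
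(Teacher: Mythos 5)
Your overall architecture is the same as the paper's: the analytic content is delegated to K\"othe's results on $\eta$ and on separately continuous bilinear forms, the $\zeta$-statements are reduced to $\eta$ through the two maps defining the projective kernel together with Lemma~\ref{SchwartzFunctor}, and the identification of $E'_\mu\o_i F'_\mu$ rests on Mackey--Arens, exactly as in the paper. Your treatment of (4) --- second-variable commutation of $L(E'_\mu,\cdot)$ with kernels and products, transported to the first variable by the (natural) symmetry of $\eta$ --- is a legitimate and in fact shorter route than the paper's explicit construction of the inverse map $J$, which essentially re-runs the separate-continuity analysis for duals of products.

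Two of your justifications would, however, fail if written out. First, the ``bilinear Banach--Steinhaus'' mechanism at the core of (1) and (5) is not the right one. On an equicontinuous set $A\subset E'$ the weak topology $\sigma(E',E)$ coincides with the topology of precompact convergence, not with the Mackey topology $\mu(E',E)$ (for $E$ a reflexive Banach space, $\mu(E',E)$ is the norm topology, strictly finer than $\sigma(E',E)$ on the unit ball). More seriously, passing from pointwise boundedness of $\{f(x',\cdot):x'\in A\}$ to equicontinuity on $F'_\mu$ would require $F'_\mu$ to be barrelled, which holds exactly when $F$ is semi-reflexive; and equicontinuity on $F'_\mu$ means containment in an absolutely convex $\sigma(F,F')$-compact subset of $F$, which is much stronger than boundedness. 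The argument actually behind K\"othe's \S 40.4.(5), and behind the paper's remark that ``the other side of the hypocontinuity comes from $A^t\in F\eta E$'', is: the linear map $u:E'_\mu\to F$ associated with $f$ has transpose mapping $F'$ into $(E'_\mu)'=E$ by Mackey--Arens, hence $u$ is $\sigma(E',E)$--$\sigma(F,F')$ continuous, so it sends the weakly compact bipolar $A^{oo}$ onto an absolutely convex weakly compact, i.e.\ Mackey-equicontinuous, subset of $F$. Since you ultimately cite K\"othe for this step the statement survives, but the sketched mechanism should be replaced. Second, in (6) your premise that the equicontinuous bornology of $E'$ is unchanged when $E$ is replaced by $\mathscr{S}(E)$ is false --- it shrinks to the bornology generated by $\varepsilon$-null sequences; what is unchanged is the dual pair, the bounded sets, and hence $E'_\beta$. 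The coincidence of the bounded sets of $\zeta(E,F)$ and $\eta(E,F)$ therefore needs the paper's short argument combining the two halves of the defining kernel (equihypocontinuity on $(\mathscr{S}(E))'_\beta\times F'_\beta$ and on $E'_\beta\times(\mathscr{S}(F))'_\beta$) rather than an invariance statement.
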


 \begin{proof}
It is crucial to note that $\eta(E,F)$, $\eta(\mathscr{S}(E),F)$, $\eta(E,\mathscr{S}(F))$ are the same space algebraically since $(\mathscr{S}(E))'_\mu\simeq E'_\mu$.
 
 (2)  is \cite[\S 40.4.(5)]{Kothe} and (1) is similar to the first statement there. With more detail functoriality of Mackey dual gives a map $L(E'_\mu,F)\to L(F'_\mu,(E'_\mu)'_\mu)$ and since $(E'_\mu)'_\mu\to E$ continuous we have also a map $L(F'_\mu,(E'_\mu)'_\mu)\to L(F'_\mu,E)$. This explains the first map of the first isomorphism (also explained in \cite[Corol 8.6.5]{Jarchow}). The canonical linear map from a bilinear map in $\eta(E,F)$ is clearly in $E\eta F$, conversely, if $A\in E\eta F$, $\langle A(.),.\rangle_{F,F'}$ is right $\epsilon$-hypocontinuous by definition and the other side of the hypocontinuity comes from the $A^t\in F\eta E$. 
 
The closed subspace property is \cite[\S 43.3.(4)]{Kothe2}. 
 
 (3) for $\eta$ is \cite[\S 44.4.(3,5,6)]{Kothe2}.
For $\zeta$ since $A,B$ are continuous after taking the functor $\mathscr{S}$, one deduces $A\eta B$ is continuous (resp one-to-one, resp. an embedding using lemma \ref{SchwartzFunctor}) on $$\eta(\mathscr{S}(G),H)\to\eta(\mathscr{S}(E),F),\ \  \eta(G,\mathscr{S}(H))\to\eta(E,\mathscr{S}(F))$$ and this conclude by universal properties of projective kernels (with two terms) for $\zeta$. Since the spaces are the same algebraically, the fact that the maps are one-to-one also follows.
 
  (4) The $\eta$ case with kernels is a variant of \cite[\S 44.5.(4)]{Kothe2} which is also a direct application of \cite[\S 39.8.(10)]{Kothe2}. As a consequence $E\zeta F$ is a projective kernel of $\mathscr{S}(E)\eta F=\mathrm{K}_{i\in I}(1\eta A_i)^{-1}[\mathscr{S}(E)]\eta F_i $ and, using lemma \ref{SchwartzFunctor} again, of  :$$E\eta \mathscr{S}(F)=E\eta \Big(\mathrm{K}_{i\in I}A_i^{-1}\mathscr{S}(F_i)\Big)=\mathrm{K}_{i\in I}(1\eta A_i)^{-1}\big(E\eta \mathscr{S}(F_i)\big).$$
The transitivity of locally convex kernels (coming from their universal property) concludes. 

For categorical limits, it suffices commutation with products and kernels. In any case the continuous map $I:(\lim E_i)\eta F\to \lim (E_i\eta F)$ comes from universal properties
, it remains to see it is an algebraic isomorphism, since then the topological isomorphism will follow from the kernel case. We build the inverse as follows, for $f\in F'_\mu$, the continuous evaluation map $E_i\eta F=L(F'_\mu,E_i)\to E_i$ induces a  continuous linear map $J_f:\lim (E_i\eta F)\to (\lim E_i)$. It is clearly linear in $f$ and gives a bilinear map $J:\lim (E_i\eta F)\times F'_\mu\to (\lim E_i).$ We have to see it is separately continuous yielding a linear inverse map $I^{-1}$ and then continuity of this map. We divide into the product and kernel case. 

For products one needs for $g\in\prod_{i\in I}(E_i\eta F)$ $J(g,.)^t: (\prod_{i\in I}E_i)'\to (F'_\mu)'$ send equicontinuous sets i.e. a finite sum of equicontinuous set in the sum $\sum_{i\in I}E_i'$ to an equicontinuous set in $(F'_\mu)'$. But absolutely convex weakly compact sets are stable by bipolars of sum, since they are stable by bipolars of finite unions \cite[\S 20.6.(5)]{Kothe} (they don't even need closure to be compact, absolutely convex cover is enough), hence it suffices to see the case of images of equicontinuous sets $E_i'\to F$ but they are equicontinuous by assumption. This gives the separate continuity in this case. Similarly, to see the continuity of $I^{-1}$ in this case means that we take $A\subset F'$ equicontinuous and a sum of equicontinuous sets $B_i$ in $(\prod_{i\in I}E_i)'$ and one notices that $(I^{-1})^t(A\times \sum B_i)\subset\sum (I^{-1})^t(A\times  B_i)$ is a sum of equicontinuous sets in $(\prod_{i\in I}(E_i\eta F))'$ and it is by hypothesis equicontinuous.

For kernels, of $f,g:E\to G$, $I:Ker(f-g)\eta F\to  Ker(f\eta id_F-g\eta id_F)$ is an embedding by (3) since source and target are embeddings in $E\eta F$, the separate continuity is obtained by restriction of the one of $E\eta F\times F'_\mu\to E\supset Ker(f-g)$ and similarly continuity by restriction of $E\eta F\to L(F'_\mu,E)$.

For $\zeta$, this is then a consequence of this and lemma \ref{SchwartzFunctor} again.
%For categorical limits, the continuous map $I:(\lim E_i)\eta F\to \lim (E_i\ethe continuous map $I:(\lim E_i)\eta F\to \lim (E_i\eta F)$ comes from universal properties
%, it remains to see it is an algebraic isomorphism, since then the topological isomorphism will follow from the kernel case.ta F)$ comes from universal properties
%, it remains to see it is an algebraic isomorphism, since then the topological isomorphism will follow from the kernel case. We build the inverse as follows, for $f\in F'_\mu$, the continuous evaluation map $E_i\eta F=L(F'_\mu,E_i)\to E_i$ induces a  continuous linear map $J_f:\lim (E_i\eta F)\to (\lim E_i)$. It is clearly linear in $f$ and gives a bilinear map $J:\lim (E_i\eta F)\times F'_\mu\to (\lim E_i).$ It is separately continuous since for $g\in \lim (E_i\eta F)$, the transpose $J(g,.)^t:\colim E_i'\to (F'_\mu)'$ send equicontinuous sets, which are generated by those in some $E_i'$ to equicontinuous sets in $(F'_\mu)'$ since the map induced on $E_i'$ comes from $p_i(g)\in (E_i\eta F)=L(F'_\mu,E_i)$. Hence we got a linear map : $\lim (E_i\eta F)\to L(F'_\mu,(\lim E_i))$
% which is clearly inverse of $I$. It remains to see $I^{-1}$ is continuous. For we must see for an equicontinuous sets in $F'$,  $\colim E_i'$, that 
 (5) is an easier variant of \cite[Rmq 1 p 25]{Schwartz}. Of course $\eta(E,F)$ is included in the space of separately continuous forms. Conversely, if $f:E'_\mu\times F'_\mu\to \K$ is separately continuous, from \cite[Corol 8.6.5]{Jarchow}, it is also separately continuous on $E'_\sigma\times F'_\sigma$ and the non-trivial implication follows from \cite[\S 40.4.(5)]{Kothe}. For the second part, the fact that both algebraic tensor products have the same dual implies there is, by Arens-Mackey Theorem, a continuous identity map $((E'_\mu\o_{\beta e}F'_\mu)'_\mu)'_\mu\to E'_\mu\o_{i}F'_\mu$. Conversely, one uses the universal property of the inductive tensor product which gives a separately continuous map $E'_\mu\times F'_\mu\to E'_\mu\o_{\beta e}F'_\mu$.
But applying functoriality of Mackey duals on each side gives for each $x\in E'_\mu$ a continuous map 
  $F'_\mu\to((E'_\mu\o_{\beta e}F'_\mu)'_\mu)'_\mu$ and by symmetry, a separately continuous map $E'_\mu\times F'_\mu\to ((E'_\mu\o_{\beta e}F'_\mu)'_\mu)'_\mu.$ The universal property of the inductive tensor product  again concludes.

 (6) can be obtained for $\eta$ with the same reasoning as in  \cite[\S 44.3.(1)]{Kothe2}.  For $\zeta$ the first case gives by definition equivalence with $\epsilon$-equihypocontinuity both on  $(\mathscr{S}(E))'_\beta\times F'_\beta$ and on  $E'_\beta\times (\mathscr{S}(F))'_\beta$. But the second implies that for equicontinuous on E', one gets an equicontinuous family on $(\mathscr{S}(F))'_\beta\simeq F'_\beta$ and the first gives the converse, and the other conditions are weaker, hence the equivalence with the first formulation.
 \end{proof}

 We then deduce a Mackey-completeness result:
 
 \begin{proposition}\label{Meta}
 If $L_1$ and $L_2$ are separated Mackey-complete locally convex spaces, then  so are $L_1\eta L_2$ and $L_1\zeta L_2$.
 \end{proposition}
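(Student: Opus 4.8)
The plan is to prove the statement for $L_1\eta L_2$ via the sequential criterion for Mackey-completeness (every Mackey-Cauchy sequence is Mackey-convergent, which is equivalent to the definition by \cite[10.2]{Jarchow}), and then to deduce the case of $L_1\zeta L_2$ with no further work. So let $(f_n)$ be a Mackey-Cauchy sequence in $E:=L_1\eta L_2=L((L_1)'_\mu,L_2)$, witnessed by a closed absolutely convex bounded set $A$: for each $\varepsilon>0$ there is $N$ with $f_n-f_m\in\varepsilon A$ for $n,m\geq N$. The first observation is that for every $x\in (L_1)'_\mu$ the singleton $\{x\}$ is an equicontinuous subset of $(L_1)'$, so the evaluation $\mathrm{ev}_x:E\to L_2$ is continuous; hence $A(x):=\mathrm{ev}_x(A)$ is bounded in $L_2$ and $(f_n(x))$ is Mackey-Cauchy there, staying inside $\varepsilon\,\overline{\Gamma(A(x))}$. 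Since $L_2$ is Mackey-complete, $(f_n(x))$ converges to some $f(x)\in L_2$, defining a linear map $f:(L_1)'_\mu\to L_2$.

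Next I would check that $f$ actually lies in $E$. Taking $\varepsilon=1$ shows that $\{f_n:n\}$ is contained in $f_N+A$ together with finitely many points, hence is bounded in $E$; by Proposition \ref{etageneral}(6) it is therefore $\varepsilon$-equihypocontinuous on $(L_1)'_\beta\times(L_2)'_\beta$. The key point is that this equihypocontinuity passes to the pointwise limit: for a fixed equicontinuous $H\subset(L_1)'$ the family $\{\langle f_n(x),\cdot\rangle:n\in\N,\ x\in H\}$ is equicontinuous on $(L_2)'$, and since equicontinuity is preserved under pointwise closure, the limiting family $\{\langle f(x),\cdot\rangle:x\in H\}$ is equicontinuous as well; the symmetric statement holds by the same argument. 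Thus the bilinear form attached to $f$ is $\varepsilon$-hypocontinuous, i.e. $f\in\eta(L_1,L_2)\simeq E$ by Proposition \ref{etageneral}(1). I expect this verification — that hypocontinuity is stable under the pointwise limit — to be the main technical obstacle, the rest being bookkeeping.

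It then remains to upgrade pointwise convergence to genuine Mackey convergence with a single bounding set. Here I would set
\[
\tilde A=\{\,g\in E:\ g(x)\in\overline{A(x)}\ \text{for all }x\in (L_1)'_\mu\,\},
\]
an absolutely convex subset of $E$ independent of $\varepsilon$ (it is an intersection of preimages of the absolutely convex sets $\overline{A(x)}$ under the linear maps $\mathrm{ev}_x$). It is bounded in $E$: for any equicontinuous $H\subset(L_1)'$ the set $A(H)=\bigcup_{x\in H}A(x)$ is bounded in $L_2$ — this is exactly boundedness of $A$ for the topology of uniform convergence on equicontinuous sets — so $\tilde A(H)\subset\overline{A(H)}$ is bounded, which gives boundedness of $\tilde A$. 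Finally, fixing $n\geq N_\varepsilon$ and letting $m\to\infty$ in $f_n-f_m\in\varepsilon A$, evaluated at each $x$, yields $(f_n-f)(x)\in\varepsilon\overline{A(x)}$, hence $f_n-f\in\varepsilon\tilde A$ for all $n\geq N_\varepsilon$. This is precisely Mackey convergence of $(f_n)$ to $f$, so $L_1\eta L_2$ is Mackey-complete.

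For $L_1\zeta L_2$ I would argue that nothing more is needed: $\zeta(L_1,L_2)$ and $\eta(L_1,L_2)$ are the same vector space, and by Proposition \ref{etageneral}(6) they carry exactly the same bounded sets. Since the notions of Mackey-Cauchy and Mackey-convergent sequence depend only on the underlying vector space together with its bornology, the two spaces have identical Mackey-Cauchy sequences and identical Mackey limits; Mackey-completeness of $L_1\eta L_2$ therefore transfers verbatim to $L_1\zeta L_2$.
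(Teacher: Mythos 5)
Your overall strategy (extract a pointwise limit, check it belongs to $L_1\eta L_2$, upgrade to Mackey convergence, and reduce the $\zeta$ case to the $\eta$ case via the identity of bounded sets from Proposition \ref{etageneral}(6)) is the same as the paper's, and your final two steps — the construction of $\tilde A$ and the $\zeta$ reduction — are fine. But the membership step, which you yourself flag as the main obstacle, has a genuine gap. Proposition \ref{etageneral}(6) gives $\varepsilon$-equihypocontinuity of the bounded set $\{f_n\}$ on $(L_1)'_\beta\times(L_2)'_\beta$, i.e.\ with respect to the \emph{strong} topologies on the duals, and pointwise closure does preserve that. But membership in $\eta(L_1,L_2)=((L_1)'_\mu\otimes_{\beta e}(L_2)'_\mu)'$ requires $\varepsilon$-hypocontinuity (equivalently, by Proposition \ref{etageneral}(5), separate continuity) with respect to the \emph{Mackey} topologies. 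Concretely: equicontinuity of $\{\langle f_n(\cdot),y\rangle : n\in\N\}$ on $(L_1)'_\beta$ only says this family sits in the bipolar of a bounded subset of $L_1$ computed in the algebraic dual of $(L_1)'$; the pointwise limit $\langle f(\cdot),y\rangle$ therefore lands in a bidual-type set, not in $L_1=((L_1)'_\mu)'$. Nothing in your argument forces $\langle f(\cdot),y\rangle$ to be $\mu((L_1)',L_1)$-continuous, which is exactly the nontrivial half of separate continuity.

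The symptom is that you never use Mackey-completeness of $L_1$. If your argument were correct it would prove the statement assuming only $L_2$ Mackey-complete; taking $L_2=\K$ (so that $L_1\eta\K\simeq L_1$ with its original topology) it would then prove that \emph{every} separated lcs is Mackey-complete, which is false. The paper closes this gap precisely at the point you skip: by the symmetry $L_1\eta L_2=L((L_2)'_\mu,L_1)$, the witnessing bounded set $A$ evaluated at $y\in(L_2)'$ is bounded in $L_1$, so $(\langle f_n(\cdot),y\rangle)_n$ is a Mackey-Cauchy sequence \emph{in $L_1$}, and Mackey-completeness of $L_1$ makes it converge there; its limit must be $\langle f(\cdot),y\rangle$, which therefore lies in $L_1$. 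Together with $f(x)\in L_2$ (your argument), this gives separate continuity and hence membership by Proposition \ref{etageneral}(5). Replace your hypocontinuity paragraph by this symmetric Mackey-Cauchy argument and the proof is complete.
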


 \begin{proof} Since both topologies on the same space have the same bounded sets (proposition \ref{etageneral}.(6)), it suffices to consider $L_1\eta L_2$. 
 Consider a Mackey-Cauchy sequence $(x_n)_{n\geq 0}$, thus topologically Cauchy. By completeness of the scalar field, $x_n$ converges pointwise to a multilinear form $x$ on  $\prod_{i=1}^2 (L_i)'_\mu$. Since the topology of the $\eta$-product is the topology of uniform convergence on products of equicontinuous parts (which can be assumed absolutely convex and weakly compact), $x_n\to x$ uniformly on these products (since $(x_n)$ Cauchy in the Banach space of continuous functions on these products).
 From proposition \ref{etageneral}.(5) we only have to check that the limit $x$ is separately continuous. For each $y\in (L_2)'$, and $B$ a bounded set in $L_1\eta L_2=L((L_2)'_\mu,L_1)$, one deduces $B(y)$ is bounded in $((L_1)'_\mu)'=L_1$ with its original topology of convergence on equicontinuous sets of $L'_1$. %(since $\{y\}$ is also an equicontinuous set)
 Therefore, $(x_n(y))$ is Mackey-Cauchy in $L_1$, thus Mackey-converges, necessarily to $x(y)$. Therefore $x(y)$ defines an element of $((L_1)'_\mu)'$. With the similar symmetric argument, $x$ is thus separately continuous, as expected. We have thus obtained the topological convergence of $x_n$ to $x$ in $L_1\eta L_2$. It is easy to see $x_n$ Mackey converges to $x$ in $L_1\eta L_2$ in taking the closure of the bounded set from its property of being Mackey-Cauchy. Indeed, the established topological limit $x_n\to x$ transfers the Mackey-Cauchy property in Mackey convergence as soon as the bounded set used in Mackey convergence is closed.
  \end{proof}

We will need the relation of Mackey duals and Mackey completions:
 
 \begin{lemma}\label{MMc}
For any separated \lcs $F$, we have a topological isomorphism $\widehat{((F'_\mu)'_\mu)}^M\simeq ((\widehat{F}^M)'_\mu)'_\mu.$ 
 \end{lemma}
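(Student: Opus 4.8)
The plan is to reduce both sides to Mackey completions carrying Mackey topologies, and then separately identify their underlying vector spaces and their common dual. First I would unwind the double Mackey dual. By the Mackey--Arens theorem \cite[8.5.5]{Jarchow} (used as in \eqref{Ec}), for any \lcs $G$ one has $(G'_\mu)'=G$ algebraically, and the dual pair $\langle G,G'\rangle$ has Mackey topology $\tau(G,G')$, so $(G'_\mu)'_\mu$ is nothing but $G$ equipped with $\tau(G,G')$; write $G_\mu$ for this Mackeyfication. Applying this with $G=F$ gives $(F'_\mu)'_\mu=F_\mu$, so the left-hand side is $\widehat{F_\mu}^M$ (with the topology induced from $\widetilde{F_\mu}$ as in Lemma \ref{gammacompletion}); applying it with $G=\widehat{F}^M$ shows the right-hand side is $(\widehat{F}^M)_\mu$, i.e. $\widehat{F}^M$ endowed with its Mackey topology $\tau(\widehat{F}^M,(\widehat{F}^M)')$. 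It then remains to prove $\widehat{F_\mu}^M\simeq(\widehat{F}^M)_\mu$.

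Next I would compute the duals. In the transfinite construction of Lemma \ref{gammacompletion} every new element is a topological limit of elements of the previous stage, so $F$ (resp. $F_\mu$) is dense in $\widehat{F}^M$ (resp. $\widehat{F_\mu}^M$) for the topology induced from the completion. Since passing to the completion changes neither the dual nor the equicontinuous sets \cite[\S 21.4.(5)]{Kothe}, restriction yields $(\widehat{F}^M)'=F'$ and $(\widehat{F_\mu}^M)'=(F_\mu)'=F'$. Then I would show that, because $F_\mu$ is already a Mackey space, the topology induced on $\widehat{F_\mu}^M$ from $\widetilde{F_\mu}$ is exactly its own Mackey topology $\tau(\widehat{F_\mu}^M,F')$: it is a compatible topology (same dual $F'$), hence coarser than the Mackey one; conversely it is the topology of uniform convergence on the $\widetilde{F_\mu}$-equicontinuous subsets of $F'$, which coincide with the $F_\mu$-equicontinuous ones and thus, $F_\mu$ being Mackey, with the absolutely convex $\sigma(F',F)$-compact sets, while the Mackey topology of $\widehat{F_\mu}^M$ only uses the $\sigma(F',\widehat{F_\mu}^M)$-compact ones, a subfamily since $F\subseteq\widehat{F_\mu}^M$. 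The two topologies therefore agree, so the left-hand side is $\widehat{F_\mu}^M$ with its Mackey topology.

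Finally I would identify the underlying spaces. As the original topology of $F$ is coarser than $\tau(F,F')$, the identity $F_\mu\to F$ is continuous and extends to $\widetilde{F_\mu}\to\tilde{F}$; by Remark \ref{MackeyInclusion} it sends Mackey-Cauchy sequences to Mackey-Cauchy sequences with corresponding limits, hence restricts to a linear map $\Phi:\widehat{F_\mu}^M\to\widehat{F}^M$ extending the identity on $F$. Since $F$ and $F_\mu$ have the same dual, Mackey's theorem gives them the same bounded disks $A$, so the normed spaces $F_A=(F_\mu)_A$ coincide and the compact disks $\overline{\Gamma(\{x_n,n\geq 0\})}$ built from a common Mackey-Cauchy sequence are isometrically identified; a transfinite induction along Lemma \ref{gammacompletion} then shows $\Phi$ carries each stage $E_{M;\lambda}$ bijectively onto its counterpart, so $\Phi$ is a vector space isomorphism fixing $F$. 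As $F$ is dense on both sides and both duals equal $F'$, the map $\Phi$ identifies the two dual pairs, and since both spaces carry the Mackey topology $\tau(\cdot,F')$, it is the desired topological isomorphism.

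The main obstacle I anticipate is the step showing that the topology induced from $\widetilde{F_\mu}$ on the Mackey completion of the Mackey space $F_\mu$ is again its Mackey topology (the comparison of $\sigma(F',F)$- and $\sigma(F',\widehat{F_\mu}^M)$-compact families must be done with care), together with the bornological bijectivity of $\Phi$, for which one must check that the compact Banach disks produced at each transfinite stage genuinely match on the two sides.
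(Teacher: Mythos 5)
Your argument is correct and follows essentially the same route as the paper: both rest on the facts that completion (and hence Mackey completion) leaves the dual $F'$ and its absolutely convex weakly compact sets unchanged \cite[\S 21.4.(5)]{Kothe}, so that every space in sight carries the Mackey topology of the duality with $F'$, and that same dual implies same bounded sets, hence the same Mackey--Cauchy sequences and closures at each stage of the transfinite construction of Lemma \ref{gammacompletion}. Your sandwich argument identifying the induced topology on $\widehat{F_\mu}^M$ with $\tau(\cdot,F')$ is just a slightly more explicit rendering of the paper's observation that the inclusions into $((\widetilde{F})'_\mu)'_\mu$ all carry the induced topology.
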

 \begin{proof}

 Recall also from \cite[\S 21.4.(5)]{Kothe} the completion of the Mackey topology has its Mackey topology $\widetilde{((F'_\mu)'_\mu)}=((\widetilde{F})'_\mu)'_\mu$
therefore an absolutely convex weakly compact set in $F'$  coincide for the weak topologies induced by $F$ and $\widetilde{F}$ and therefore also $\hat{F}^M$, which is in between them. Thus the continuous inclusions $((F'_\mu)'_\mu)\to (\hat{F}^M)'_\mu)'_\mu\to ((\widetilde{F})'_\mu)'_\mu$ have always the induced topology. In the transfinite description of the Mackey completion, the Cauchy sequences and the closures are the same in $((\widetilde{F})'_\mu)'_\mu$ and $\widetilde{F}$ (since they have same dual hence same bounded sets), therefore  one finds the stated topological isomorphism.
 \end{proof}

{ \begin{lemma}\label{DualArensMc}
 If $L,M$ are separated locally convex spaces we have embeddings:
  $$ L'_\mu\o_{\beta e} M'_\mu\to (L\zeta M)'_\epsilon\to (L\eta M)'_\epsilon\to L'_\mu\widetilde{\o}_{\beta e} M'_\mu,$$
 with the middle duals coming with their $\epsilon$-topology as biduals of $L'_\mu\o_{\beta e} M'_\mu$. The same holds for : $$ L'_\mu\o_{i} M'_\mu\to (L\eta M)'_\mu\to L'_\mu\widetilde{\o}_{i} M'_\mu.$$
 Finally, $(L\eta M)'_\epsilon\subset L'_\mu\widehat{\o}_{\beta e}^M M'_\mu$ as soon as either $L$ or $M$ is a Schwartz space, and in any case we have $(L\zeta M)'_\epsilon\subset L'_\mu\widehat{\o}_{\beta e}^M M'_\mu.$
 \end{lemma}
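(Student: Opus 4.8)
The plan is to write the whole first chain as the inclusion of the locally convex space $P=L'_\mu\otimes_{\beta e}M'_\mu$ into an $\epsilon$-bidual sitting inside its completion $\widetilde{P}=L'_\mu\widetilde{\otimes}_{\beta e}M'_\mu$. By definition $P'=L\eta M$ as vector spaces, and the $\epsilon$-topology on either bidual is uniform convergence on the equicontinuous subsets of $P'$, i.e. on the polars of $0$-neighbourhoods of $P$; by the bipolar theorem this topology restricts on $P$ to the original topology of $P$. Since the $\zeta$-topology is coarser than the $\eta$-topology (Proposition \ref{etageneral}.(1)) one has $(L\zeta M)'\subseteq (L\eta M)'$, and both carry the $\epsilon$-topology induced from the \emph{same} family of equicontinuous sets of $P'$, so the inclusion $(L\zeta M)'_\epsilon\to (L\eta M)'_\epsilon$ is automatically an embedding. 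The first map $P\to (L\zeta M)'_\epsilon$ is well defined because a single point is equicontinuous, so each elementary tensor $x'\otimes y'$ is continuous already for $\eta(\mathscr{S}(L),M)$, hence for $\zeta$; it is an embedding by the remark on the $\epsilon$-topology just made.

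For the last map I would invoke Grothendieck's description of the completion: $\widetilde{P}$ is the space of linear forms on $P'$ whose restriction to each equicontinuous set is $\sigma(P',P)$-continuous, with the $\epsilon$-topology. It then suffices to check $(L\eta M)'\subseteq\widetilde{P}$, i.e. that every $\eta$-continuous form on $P'$ is weakly continuous on equicontinuous subsets of $P'$. This holds because the bounded sets defining the $\eta$-topology are tensor products of equicontinuous sets, which are precompact in $P$ — precisely the estimate already carried out in the proof of Theorem \ref{FirstMALL} — so on an equicontinuous subset of $P'$ the $\eta$-topology coincides with the weak one. This yields the whole chain, all arrows being embeddings for the common $\epsilon$-topology. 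The $\mu$-chain $L'_\mu\otimes_i M'_\mu\to (L\eta M)'_\mu\to L'_\mu\widetilde{\otimes}_i M'_\mu$ is obtained mutatis mutandis, replacing the direct bidual identification by Proposition \ref{etageneral}.(5), which identifies the iterated Mackey bidual of $P$ with the inductive tensor product $L'_\mu\otimes_i M'_\mu$.

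The heart of the matter, and the main obstacle, is to upgrade the inclusion $(L\eta M)'_\epsilon\subseteq\widetilde{P}$ to an inclusion into the Mackey completion $\mco{P}=L'_\mu\widehat{\otimes}_{\beta e}^M M'_\mu$. Given $u$ in the bidual, $\epsilon$-continuity of $u$ means it is bounded on the polar of a bounded set of $P$ arising from a product $A\otimes B$ of equicontinuous sets $A\subseteq L'$, $B\subseteq M'$; hence $u\in\overline{\Gamma(A\otimes B)}$, the closure being taken in $\widetilde{P}$. The strategy is then to produce a Mackey-null sequence of $P$ whose closed absolutely convex hull contains $u$, so that $u$ is adjoined at the first transfinite stage of the Mackey completion of Lemma \ref{gammacompletion}.

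If $L$ (or symmetrically $M$) is a Schwartz space, then by Lemma \ref{SchwartzBorno} and its proof $A\subseteq\overline{\Gamma(\{a_n\})}$ for a Mackey-null sequence $a_n=\lambda_n d_n$ with $\lambda_n\to 0$, $D\supseteq\{d_n\}$ bounded. Regrouping the absolutely convex combinations approximating $u$ according to the index $n$ gives a representation $u=\sum_n a_n\otimes b_n$ with $b_n\in\Gamma(B)$ and $\sum_n\|b_n\|_{\Gamma(B)}\leq 1$ (the $\ell^1$-control coming for free from the absolutely convex hull). Setting $w_n=a_n\otimes b_n\in\lambda_n(D\otimes\Gamma(B))$, the series converges absolutely in the Banach disk generated by $\overline{\Gamma(D\otimes\Gamma(B))}$, so its partial sums form a Mackey-Cauchy sequence in $P$ with limit $u$, whence $u\in\mco{P}$. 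For the general $\zeta$-statement no hypothesis on $L,M$ is needed: decompose $u=u_1+u_2$ along $\zeta=\sup\big(\eta(\mathscr{S}(L),M),\eta(L,\mathscr{S}(M))\big)$ and apply the very same representation to each $u_j$, now using that $\mathscr{S}(L)$ and $\mathscr{S}(M)$ are always Schwartz, so the controlling equicontinuous sets are null-covered; the resulting null sequences lie in $L',M'$ and are Mackey-null for $L'_\mu=(\mathscr{S}(L))'_\mu$, $M'_\mu=(\mathscr{S}(M))'_\mu$ (Lemma \ref{MackeyArensSchwartz}), so both series are Mackey-Cauchy in $P$ and $u\in\mco{P}$. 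I expect the delicate points to be the honest justification of the $\ell^1$-representation $u=\sum_n a_n\otimes b_n$ as a convergent series in $\widetilde{P}$ (a regrouping/double-limit argument inside the absolutely convex hull) and the verification that its partial sums are genuinely Mackey-Cauchy in $P$ via Remark \ref{MackeyInclusion}.
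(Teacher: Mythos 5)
Your overall plan follows the paper's: read the whole chain inside Grothendieck's completion, reduce the Mackey-completion claim to the case where one factor is Schwartz by writing $(L\zeta M)'$ as the inductive hull of $(\mathscr{S}(L)\varepsilon M)'$ and $(L\varepsilon \mathscr{S}(M))'$, and conclude by summing a series $\sum_n x_n\o y_n$ in a Banach disk. Two steps, however, do not survive scrutiny as written. The first is your justification of $(L\eta M)'\subseteq L'_\mu\widetilde{\o}_{\beta e} M'_\mu$: you rest it on the claim that the sets $A\o B$ (with $A\subseteq L'$, $B\subseteq M'$ equicontinuous) are precompact in $P=L'_\mu\o_{\beta e}M'_\mu$, ``as in Theorem \ref{FirstMALL}''. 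That estimate requires covering $A$ by finitely many translates of a small set, i.e.\ it needs $A$ precompact in $L'_\mu$; but equicontinuous sets are only relatively $\sigma(L',L)$-compact, not $\mu(L',L)$-precompact. For $L=\ell^2$ one has $L'_\mu=\ell^2$ with its norm topology, and the unit ball is equicontinuous but not precompact, so the covering argument breaks down (the paper explicitly warns, right after that computation, that weak compactness would not suffice for it). The correct mechanism is different: an equicontinuous set $C\subseteq L\eta M$ is an equicontinuous, bounded family in $C^0(A\times B)$ for the weakly compact topologies on $A,B$, hence relatively compact there by Arzela-Ascoli, so any pointwise convergent net in $C$ converges uniformly on $A\times B$. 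It is this identification of the $\eta$-topology with $\mathfrak{I}^{lf}(L'_\mu\o_{\beta e}M'_\mu)$ on equicontinuous sets that makes Grothendieck's description of the completion applicable.

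The second gap is the one you flag yourself: the representation $u=\sum_n a_n\o b_n$ with $\sum_n\|b_n\|_{\Gamma(B)}\le 1$, obtained by ``regrouping'' the approximating convex combinations, is the real content of the final assertion and is left unproved. A net of finite combinations yields, for each index $\alpha$, coefficients $\lambda^{(\alpha)}_n$ and points $b^{(\alpha)}_n\in\Gamma(B)$, and passing to the limit simultaneously in all $n$ needs a compactness argument that naive regrouping does not supply. The paper fills this with a measure-theoretic device: $u$, restricted to the subspace of $C^0(A\times B)$ formed by the $f|_{A\times B}$ with $f\in L\varepsilon M$, extends by Hahn-Banach to a Radon measure $\mu$ on the compact $A\times B$ with $\|\mu\|\le 1$; conditioning $\mu$ on the slices $\{x_n\}\times B$ and using that $B$ is absolutely convex and weakly compact yields exactly $u=\sum_n \mu(1_{\{x_n\}\times B})\,x_n\o y_n$ with $y_n\in B$ and $\sum_n|\mu(1_{\{x_n\}\times B})|\le\|\mu\|$. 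Only after that does your (correct) Banach-disk and Mackey-convergence conclusion, and the reduction of the $\zeta$-case to the two Schwartz cases, go through. So the architecture is right, but the Arzela-Ascoli identification and the integral representation are the actual proof, and both are missing.
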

 \begin{proof}
\setcounter{Step}{0}
\begin{step}
First line of embeddings.
\end{step}
From the identity continuous map $L\eta M\to L\zeta M$, there is an injective linear map $(L\zeta M)'\to (L\eta M)'$. Note that, on $L'_\mu\o M'_\mu$, one can consider the strongest topology weaker than $(\mathscr{S}(L))'_\mu\o_{\beta e} M'_\mu$
and $L'_\mu\o_{\beta e} (\mathscr{S}(M))'_\mu$. Let us call it $L'_\mu\o_{\zeta} M'_\mu$ and see it is topologically equal to $L'_\mu\o_{\beta e} M'_\mu$ by checking its universal property. We know by definition the map $L'_\mu\o_{\zeta} M'_\mu\to L'_\mu\o_{\beta e} M'_\mu$. Conversely, there is an $\varepsilon$-equihypocontinuous map $(\mathscr{S}(L))'_\mu\times M'_\mu\to L'_\mu\o_{\zeta} M'_\mu$
so that for every equicontinuous set in $M'$, the corresponding family is equicontinuous $(\mathscr{S}(L))'_\mu=L'_\mu\to L'_\mu\o_{\zeta} M'_\mu$ from the topological equality. Similarly, by symmetry, one gets for every equicontinuous set in $L'_\mu$, an equicontinuous family of maps $M'_\mu\to L'_\mu\o_{\zeta} M'_\mu$.
As a consequence, the universal property gives the expected map $L'_\mu\o_{\beta e} M'_\mu\to L'_\mu\o_{\zeta} M'_\mu$ concluding to equality. As a consequence, since by definition $(L'_\mu\o_{\zeta} M'_\mu)'=L\zeta M$ is the dual kernel for the hull defining the $\o_{\zeta}$ tensor product, one gets that an equicontinuous set in the kernel is exactly an equicontinuous set in $(L'_\mu\o_{\zeta} M'_\mu)'=(L'_\mu\o_{\beta e} M'_\mu)'$ namely an $\epsilon$-equihypocontinuous family. This gives the continuity of our map $(L\zeta M)_\epsilon'\to (L\eta M)_\epsilon'$ and even the embedding property (if we see the first as bidual of $L'_\mu\o_{\zeta} M'_\mu$ but we only stated an obvious embedding in the statement).

We deduce that $L'_\mu\o_{\beta e} M'_\mu\simeq L'_\mu\o_{\zeta} M'_\mu\to (L\zeta M)_\epsilon'$ is an embedding from \cite[\S 21.3.(2)]{Kothe} which proves that the original topology on a space is the topology of uniform convergence on equicontinuous sets.

%To get the first morphism in the second line we must check the bilinear map $ L'_\mu\times M'_\mu\to (L\zeta M)'_\epsilon$ is $\epsilon$-hypocontinuous. But let $C$ an equicontinuous set in $L\zeta M$. By definition, it is %in the bipolar $(A \o B)^{\circ\circ}$ with $A,B$ 
%an $\epsilon$-hypocontinuous set, and such a set applies an equicontinuous set in $L'_\mu$ (resp $M'_\mu$) to equicontinuous set in $(M'_\mu)', (L'_\mu)'.$ This by definition gives the stated $\epsilon$-hypocontinuity. %A fortiori, one gets a map $L'_\mu\o_{\beta e,i} M'_\mu\to (L\eta M)'_\epsilon.$

We then build a continuous linear injection $(L\eta M)'_\epsilon\to L'_\mu\widetilde{\o}_{\beta e} M'_\mu$ to the full completion.%(recall $(L'_\mu{\o}_{\beta e,i} M'_\mu)'=(L'_\mu{\o}_{\beta e} M'_\mu)'=(L'_\mu{\o}_{i} M'_\mu)'$ from lemma \ref{etageneral}.5). 
Since both spaces have the same dual, it suffices to show that  the topology on $L\eta M$ is stronger than Grothendieck's topology $\mathfrak{I}^{lf}(L'_\mu\o_{\beta e} M'_\mu)$ following \cite{Kothe} in notation. Indeed, let $C$ in $L\eta M$ equicontinuous. Assume a net  in $C$ converges pointwise $x_n\to x\in C$ in the sense $x_n(a,b)\to x(a,b), a\in  L'_\mu, b\in  M'_\mu$. For equicontinuous sets $A\subset L'_\mu, B\subset  M'_\mu$ which we can assume absolutely convex weakly compact, it is easy to see $C$ is equicontinuous on products $A\times B$. Thus it is an equicontinuous bounded family in $C^0(A\times B)$ thus relatively compact by Arzela-Ascoli Theorem \cite[3 \S 9 p237]{Horvath}. Thus since any uniformly converging subnet converges to $x$, the original net must converge uniformly on $A\times B$ to $x$. As a consequence the weak topology on $C$ coincides with the topology of $L\eta M$, and by definition we have a continuous identity map, $(L\eta M,\mathcal{I}^{lf}(L'_c\o_{\beta e} M'_c))\to L\eta M$. By Grothendieck's construction of the completion, the dual of the first space is the completion and this gives the expected injection between duals. Since a space and its completion induce the same equicontinuous sets, one deduces the continuity and induced topology property with value in the full completion.

\begin{step}
Second  line of embeddings.
\end{step}

It suffices to apply $((.)'_\mu)'_\mu$ to the first line. We identified the first space in proposition \ref{etageneral}.(5) and the last space as the completion of the first (hence of the second and this gives the induced topologies) in the proof of lemma \ref{MMc}.

\begin{step}
Reduction of computation of Mackey completion to the Schwartz case.
\end{step}
It remains to see the $(L\zeta M)_\epsilon'$ is actually valued in the Mackey completion.

Note that as a space, dual of a projective kernel, $(L\zeta M)_\epsilon'$ is the inductive hull of the maps 
$A=(\mathscr{S}(L)\varepsilon M)_\epsilon'\to(L\zeta M)_\epsilon'=C$ and $B=(L\varepsilon \mathscr{S}(M))_\epsilon'\to(L\zeta M)_\epsilon'=C.$
Therefore, it suffices to check that the algebraic tensor product is Mackey-dense in both these spaces $A,B$ that span $C$ since the image of a bounded set in $A,B$ being bounded in $C$, there are less Mackey-converging sequences in $A,B$. This reduces the question to the case $L$ or $M$  a Schwartz space. By symmetry, we can assume $L$ is.

\begin{step}
Description of the dual $(L\zeta M)'=(L\varepsilon M)'$ for $L$ Schwartz and conclusion.
\end{step}
 We take inspiration from the classical description of the dual of the injective tensor product as integral bilinear maps (see \cite[\S 45.4]{Kothe2}). As in \cite[Prop 6]{Schwartz}, we know any equicontinuous set (especially any point) in $(L\varepsilon M)'$ is included in the absolutely convex weakly closed hull $\Gamma$ of $A \o B$ with $A$ equicontinuous in $L'$, $B$ in $M'$. Since the dual of $L'_\mu\widehat{\o}_{\beta e}^M M'_\mu$ is the same, this weakly closed hull can be computed in this space too. Moreover, since $L$ is a Schwartz space, we can and do assume that $A=\{x_n, n\in \N\}$ is a $\epsilon$-Mackey-null sequence in $L'_\mu$, since they generate the equicontinuous bornology as a saturated bornology. We can also assume $A,B$ are weakly compact and $B$ absolutely convex.

Any element $f\in L\varepsilon M$ defines a continuous map on $A\times B$ (see e.g. \cite[Prop 2]{Schwartz} and following remark). We equip $A\times B$ with the above weakly compact topology to see $f|_{A\times B}\in C^0(A\times B)$. For $\mu$ a (complex) measure on $A\times B$ (i.e. $\mu \in (C^0(A\times B))'$, we use measures in the Bourbaki's sense, which define usual Radon measures \cite{SchwartzMeasure}) with norm $||\mu||\leq 1$ so that $\int_{A\times B}f(z)d\mu(z)=\mu(f|_{A\times B})=:w_\mu(f)$ make sense. 

Note that $|w_\mu(f)|\leq ||f||_{C^0(A\times B)}$ which is a seminorm of the $\epsilon$-product, so that $\mu$ defines a continuous linear map $w_\mu\in (L\varepsilon M)'.$ Note also that if $f$ is in the polar of $A\o B$, so that $|w_\mu(f)|\leq 1$ and thus by the bipolar theorem, $w_\mu\in \Gamma$. We want to check the converse that any element of $w\in\Gamma$ comes from such a measure. But if $H$ is the subspace of $C^0(A\times B)$ made of restrictions of functions $f\in L\varepsilon M$, $w$ induces a continuous linear map on $H$ with $|w(f)|\leq ||f||_{C^0(A\times B)}$, Hahn-Banach theorem enables to extend it to a measure $w_\mu$, $||\mu||\leq 1$. This concludes to the converse. 

Define the measure $\mu_n$ by $\int_{A\times B}f(z)d\mu_n(z)=\frac{1}{\mu(1_{\{x_n\}\times B})}\int_{A\times B}f(z)1_{\{x_n\}\times B}(z)d\mu(z)$ using its canonical extension to semicontinuous functions. Note that by Lebesgue theorem (dominated by constants) $$w_\mu(f)=\sum_{n=0}^\infty \mu(1_{\{x_n\}\times B}))w_{\mu_n}(f)$$

As above one sees that the restriction of $w_{\mu_n}$ to $f\in L\varepsilon M$ belongs to the weakly closed absolute convex hull of $\{x_n\}\times B.$ Thus since $B$ absolutely convex closed $w_{\mu_n}(f)=f(x_n\otimes y_n)$  for some $y_n\in B$. We thus deduces that any $w_\mu\in \Gamma$ has the form :$w_\mu(f)=\sum_{n=0}^\infty \mu(1_{\{x_n\}\times B}))f(x_n\otimes y_n).$
Since the above convergence holds for any $f$, this means 
the convergence in the weak topology :\begin{equation}\label{seriesrepresentation}w_\mu= \sum_{n=0}^\infty \mu(1_{\{x_n\}\times B})) x_n\otimes y_n.\end{equation}
Let $D$ the equicontinuous closed disk  such that $x_n$ tends to $0$ in $(L')_D$. Consider the closed absolutely convex cover $\Lambda=\overline{\Gamma(D\o B)}$. %(with the weak closure taken either in the completion or in $(L\varepsilon M)'_\epsilon$ since they coincide. Indeed 
The closed absolutely convex cover can be computed in $(L\varepsilon M)'_\epsilon$ %or $(L\varepsilon M)'_c$
 or $(L\varepsilon M)'_\sigma$, both spaces having same dual \cite[\S 20.7.(6) and 8.(5)]{Kothe}, and $D\o B$ being equicontinuous \cite[Corol 4 p 27, Rmq p 28]{Schwartz}, so is $\Lambda$ \cite[\S 21.3.(2)]{Kothe}  hence it is weakly compact by Mackey Theorem, so complete in $(L\varepsilon M)'_\epsilon$ \cite[IV.5 Rmq 2]{Bourbaki}, so that $\Lambda$ is therefore a Banach disk there. But $||x_n\o y_n||_{(L\varepsilon M)'_\Lambda}\leq 1$
 so that since $\sum_{n=0}^\infty |\mu(1_{\{x_n\}\times B}))|\leq 1$ the above series is summable in $(L\varepsilon M)'_\Lambda$ and thus Mackey converges in $(L\varepsilon M)'_\epsilon$. As a conclusion, $\Gamma\subset  L'_\mu\widehat{\o}_{\beta e}^M M'_\mu$ and this gives the final statement.
 \end{proof}
The above proof has actually the following interesting consequence :

%\textcolor{red}{the following is key to the section}

\begin{corollary}\label{SchwartzCondExpEpsilon}
For any $E,F$ separated locally convex spaces, we have the topological isomorphism:
$$ \mathscr{S}\Big([\mathscr{S}(E)]\varepsilon F\Big)=[\mathscr{S}(E)]\varepsilon [\mathscr{S}(F)].$$
\end{corollary}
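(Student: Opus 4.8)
The plan is to reduce to the case where one factor is already Schwartz and then to compare the two topologies through the saturated bornologies they induce on their common dual. Writing $L=\mathscr{S}(E)$, which is Schwartz, the assertion reads $\mathscr{S}(L\varepsilon F)\simeq L\varepsilon\mathscr{S}(F)$. First I would record that both sides are two topologies on one and the same vector space: by Lemma~\ref{MackeyArensSchwartz} and Proposition~\ref{etageneral}.(1) the $\varepsilon$-, $\eta$- and $\zeta$-products attached to $L$ coincide (since $L$ is Schwartz), and the $\eta$-product depends only on the Mackey duals, which are unchanged when $F$ is replaced by $\mathscr{S}(F)$ because $(\mathscr{S}(F))'_\mu\simeq F'_\mu$. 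Functoriality of $\varepsilon$ (Proposition~\ref{etageneral}.(3)) applied to the continuous identity $F\to\mathscr{S}(F)$ gives a continuous identity $L\varepsilon F\to L\varepsilon\mathscr{S}(F)$, so the right-hand topology is coarser; in particular the two spaces share the dual $(L\varepsilon F)'$.

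Since each side is the topology of uniform convergence on a saturated bornology of $(L\varepsilon F)'$, it suffices to prove that these two bornologies coincide, by \cite[\S 21.1.(4)]{Kothe}. On the left, $\mathscr{S}(L\varepsilon F)$ is the topology of uniform convergence on the saturated bornology generated by the equicontinuously-null sequences of $(L\varepsilon F)'$ (the description of $\mathscr{S}$ used in Lemma~\ref{SchwartzBorno}, via \cite[Prop 10.4.4]{Jarchow}). On the right, the equicontinuous subsets of $(L\varepsilon\mathscr{S}(F))'$ are, by \cite[Prop 6]{Schwartz}, generated by the bipolars $\overline{\Gamma(A\o B')}$ with $A$ equicontinuous in $L'$ and $B'$ equicontinuous in $(\mathscr{S}(F))'$; because $L$ and $\mathscr{S}(F)$ are Schwartz, $A$ and $B'$ may be taken to be closed absolutely convex covers of equicontinuously-null sequences $(x_n)$ of $L'$ and $(y_m)$ of $F'$ respectively.

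The inclusion "right $\subseteq$ left" is direct: one checks $\overline{\Gamma(A\o B')}=\overline{\Gamma(\{x_n\o y_m:n,m\})}$, and the doubly-indexed family $(x_n\o y_m)$ is equicontinuously-null in $(L\varepsilon F)'$. Indeed $D\o D'$ is equicontinuous by \cite[Corol 4 p 27]{Schwartz}, whence $||x_n\o y_m||_{\overline{\Gamma(D\o D')}}\le ||x_n||_D\,||y_m||_{D'}$, and for every $\epsilon>0$ only finitely many pairs $(n,m)$ exceed $\epsilon$, so the family enumerates as a null sequence. This reproves with the easy half that $L\varepsilon\mathscr{S}(F)$ is coarser than $\mathscr{S}(L\varepsilon F)$, consistently with the soft remark that $L\varepsilon\mathscr{S}(F)$, being an $\varepsilon$-product of two Schwartz spaces, is itself Schwartz and coarser than $L\varepsilon F$.

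The heart of the proof is the reverse inclusion "left $\subseteq$ right", and this is exactly where the series representation produced in the proof of Lemma~\ref{DualArensMc} is used. Given an equicontinuously-null sequence $(w_k)$ in $(L\varepsilon F)'$, it lies in a single equicontinuous Banach disk with $||w_k||=:c_k\to 0$; Step~4 of that proof (valid since $L$ is Schwartz) writes $w_k=\sum_n x_n\o z^{(k)}_n$ with a fixed equicontinuously-null sequence $(x_n)$ of $L'$ and vectors $z^{(k)}_n\in B$ satisfying $\sum_n||z^{(k)}_n||_B\le c_k$. I would then regroup: set $A'=\overline{\Gamma(\{x_n\})}$ and build a single equicontinuously-null sequence in $F'$ out of rescaled vectors such as $z^{(k)}_n/\sqrt{c_k}$, whose closed absolutely convex cover $B'$ is equicontinuous in $(\mathscr{S}(F))'$, and finally check $\{w_k\}\subseteq\overline{\Gamma(A'\o B')}$. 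The main obstacle is precisely this last verification: the naive triangle-inequality bound only controls the $B$-norms $\sum_n||z^{(k)}_n||_B$, whereas membership in $\overline{\Gamma(A'\o B')}$ requires uniform control of $\sum_n||x_n||_{A'}\,||z^{(k)}_n||_{B'}$, and the $B'$-norms of generators of a null cover need not be summable. Closing this gap uses crucially the absolute summability $\sum_n|\lambda^{(k)}_n|\le 1$ of the coefficients furnished by the measure-theoretic representation of Step~4, together with a square-root splitting of the masses $c_k$ between the two factors; this is what distinguishes the Schwartz situation from a mere boundedness estimate, and is the delicate point on which the identity turns. Once the two bornologies are shown equal, \cite[\S 21.1.(4)]{Kothe} yields the topological identity, proving the corollary.
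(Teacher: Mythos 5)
Your overall route is the same as the paper's: reduce the statement to a comparison of the saturated equicontinuous bornologies on a common dual, obtain the easy inclusion from functoriality of $\varepsilon$ and the Schwartz stability of the $\varepsilon$-product, and attack the hard inclusion through the series representation \eqref{seriesrepresentation} established in Step 4 of the proof of Lemma \ref{DualArensMc}. One minor slip before the main point: the assertion that the two spaces ``share the dual $(L\varepsilon F)'$'' because $L\varepsilon\mathscr{S}(F)$ is coarser is backwards --- coarseness only gives $(L\varepsilon\mathscr{S}(F))'\subseteq(L\varepsilon F)'$, and the reverse containment is part of what must be proved (it is ultimately subsumed in the bornology comparison, but it cannot be asserted upfront; the paper proves it separately, again via \eqref{seriesrepresentation}).

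The essential problem is in the hard inclusion. You correctly isolate the decisive verification --- that an equicontinuously null sequence $(w_k)$ of $(L\varepsilon F)'$ lies in some $\overline{\Gamma(A'\o B')}$ with $A'$, $B'$ equicontinuous for the duals of the associated Schwartz spaces --- and you correctly diagnose why a bare triangle-inequality estimate cannot give it, but the rescaling you propose does not close the gap, and the verification is never actually carried out. Writing $w_k=\sum_n\lambda^{(k)}_n\,x_n\o y^{(k)}_n$ with $y^{(k)}_n\in B$ and $\sum_n|\lambda^{(k)}_n|\le c_k=\|w_k\|\to 0$, your candidate generators $z^{(k)}_n/\sqrt{c_k}$ force coefficients of size $\sqrt{c_k}$ on infinitely many indices $n$, whose sum over $n$ diverges; and ``splitting the mass $c_k$ between the two factors'' is impossible because the first-factor sequence $(x_n)$ is fixed independently of $k$. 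What actually works (and is what the paper does) is a square-root splitting of the norms $\|x_n\|_D$ of the \emph{first} factor, together with pushing the whole factor $c_k$ onto the \emph{second} leg: take $A'=\{x_n/\sqrt{\|x_n\|_D},\,n\in\N\}^{oo}$ and $B'$ the bipolar of the doubly indexed family $\bigl(c_k\sqrt{\|x_n\|_D}\,y^{(k)}_n\bigr)_{n,k}$, which is genuinely enumerable as a null sequence because its $B$-norms are dominated by $c_k\sqrt{\|x_n\|_D}$ with both factors null and bounded. Then $w_k=\sum_n\frac{\lambda^{(k)}_n}{c_k}\bigl(x_n/\sqrt{\|x_n\|_D}\bigr)\o\bigl(c_k\sqrt{\|x_n\|_D}\,y^{(k)}_n\bigr)$ is an absolutely convex combination of elements of $A'\o B'$, which is the required membership. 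Note that keeping the coefficients $\lambda^{(k)}_n$ and the vectors $y^{(k)}_n\in B$ separate is not essential (the absorbed data $\sum_n\|z^{(k)}_n\|_B\le c_k$ suffices after the same two-step rescaling of $\hat z^{(k)}_n=z^{(k)}_n/\|z^{(k)}_n\|_B$), but the choice of where the square root goes is: as it stands, the delicate point on which you say the identity turns is left open, and the hint you give for closing it points in the wrong direction.
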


 \begin{proof}
We have the canonical continuous map $[\mathscr{S}(E)]\varepsilon F\to [\mathscr{S}(E)]\varepsilon [\mathscr{S}(F)],$ hence since the $\varepsilon$-product of Schwartz spaces is Schwartz (see below proposition \ref{Sepsilon}), one gets by functoriality the first continuous linear map:
\begin{equation}\label{SchwartzCondExpEpsilonTrivial}\mathscr{S}\Big([\mathscr{S}(E)]\varepsilon F\Big)\to [\mathscr{S}(E)]\varepsilon [\mathscr{S}(F)].\end{equation}

Note that we have the algebraic equality $[\mathscr{S}(E)]\varepsilon F=L(E'_\mu,F)= L(E'_\mu,\mathscr{S}(F))= [\mathscr{S}(E)]\varepsilon [\mathscr{S}(F)]$
where the crucial middle equality comes from the map (see \cite[Corol 8.6.5]{Jarchow}) $$L(E'_\mu,F)=L(E'_\mu,(F'_\mu)'_\mu)=L(E'_\mu,([\mathscr{S}(F)]'_\mu)'_\mu)= L(E'_\mu,\mathscr{S}(F)).$$ 

To prove the topological equality, we have to check the duals are the same with the same equicontinuous sets. We can apply the proof of the previous lemma (and we reuse the notation there) with $L=\mathscr{S}(E)$, $M=F$ or $M=[\mathscr{S}(F)]$. First the space in which the Mackey duals are included $L'_\mu\widetilde{\o}_{i} M'_\mu$ is the same in both cases, and the duals are described as union of absolutely convex covers, it suffices to see those unions are the same to identify the duals. Of course, the transpose of \eqref{SchwartzCondExpEpsilonTrivial} gives $\Big([\mathscr{S}(E)]\varepsilon [\mathscr{S}(F)]\Big)'\subset \Big([\mathscr{S}(E)]\varepsilon  F\Big)'$ so that we have to show the converse.  From \eqref{seriesrepresentation} and rewriting $x_n\o y_n$ as $\frac{1}{\lambda_n}x_n\o \lambda_n y_n$ with $\lambda_n=\sqrt{||x_n||_{L'_C}}$, one gets that both sequences $x_n'=(\frac{1}{\lambda_n}x_n), y_n'(\mu)=(\lambda_n y_n)$ are null sequences for the equicontinuous bornology of $E',F'$ and therefore included in equicontinuous sets for the duals of  associated Schwartz spaces.  This representation therefore gives the equality of duals. Finally, to identify equicontinuous sets, in the only direction not implied by \eqref{SchwartzCondExpEpsilonTrivial}, we must see that an $\varepsilon$-null sequence $w_{\nu_n}$ of linear forms in the dual is included in the closed absolutely convex cover of  a tensor product of two such sequences in $[\mathscr{S}(E)]',[\mathscr{S}(F)]'$.  From the null convergence, $\nu_n$ can be taken measures on the same $A\times B$, for each $\nu_n$, we have a representation $w_{\nu_n}=\sum z_m(\nu_n) x_m'\o y_m'(\nu_n)$ where $x_m'$ is a fixed sequence and $(y_m'(\nu_n))_m$ are null sequences in the same Banach space $M'_B$. Moreover $\sum |z_m(\nu_n)|\leq ||\nu_n||\to 0$ from the assumption that $\nu_n$ is a null sequence in the Banach space generated by $\Gamma$ (we can assume $||\nu_n||\neq 0$ otherwise $w_{\nu_n}=0$). Therefore, we rewrite, the series as $w_{\nu_n}=\sum \frac{1}{||\nu_n||} z_m(\nu_n) x_m'\o y_m'(\nu_n)||\nu_n||$ and we gather all the sequence $(y_m'(\nu_n)||\nu_n||)_m$ into a huge sequence converging to $0$ in $M'_B$ which generates the equicontinuous set $B'$ of $(\mathscr{S}(M))'$ we wanted. $(x_m')$ generates another such equicontinuous set $A'$. This concludes to $w_{\nu_n}\in \overline{\Gamma(A'\o B')}$ so that the equicontinuous set generated by our sequence $(w_{\nu_n})$ must be in this equicontinuous set for $([\mathscr{S}(E)]\varepsilon [\mathscr{S}(F)])'.$
 \end{proof}
 
We are ready to obtain the associativity of the $\zeta$ tensor product:

 \begin{proposition}\label{AssocMc}
Let $L_1,L_2,L_3$ be \lcs with $L_3$  Mackey-complete, then there is a continuous linear map $$Ass:L_1\zeta (L_2\zeta L_3)\to(L_1\zeta L_2)\zeta L_3.$$
If also $L_1$ is  Mackey-complete, this is a topological isomorphism.
\end{proposition}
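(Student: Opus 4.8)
The plan is to reduce associativity of $\zeta$ to associativity of the Schwartz $\varepsilon$-product in the situation where at least two of the three factors are Schwartz spaces, exploiting that $\zeta$ is by construction an initial topology built out of $\varepsilon$-products of Schwartzifications. The first thing I would record is the identity $\mathscr{S}(E\zeta F)\simeq \mathscr{S}(E)\varepsilon\mathscr{S}(F)$ valid for all $E,F$. Indeed, $E\zeta F$ carries the projective kernel topology of the two identity maps onto $\eta(\mathscr{S}(E),F)\simeq\mathscr{S}(E)\varepsilon F$ and $\eta(E,\mathscr{S}(F))\simeq E\varepsilon\mathscr{S}(F)$ (using lemma \ref{MackeyArensSchwartz} and the symmetry in proposition \ref{etageneral}.(1)); since $\mathscr{S}$ commutes with projective kernels (lemma \ref{SchwartzFunctor}) and $\mathscr{S}(\mathscr{S}(E)\varepsilon F)=\mathscr{S}(E)\varepsilon\mathscr{S}(F)=\mathscr{S}(E\varepsilon\mathscr{S}(F))$ by corollary \ref{SchwartzCondExpEpsilon}, applying $\mathscr{S}$ to both defining pieces gives the same space, whence the claim.

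Next I would unfold both sides. Using the defining kernels and lemma \ref{MackeyArensSchwartz}, the space $L_1\zeta(L_2\zeta L_3)$ is the projective kernel of the two maps onto $\mathscr{S}(L_1)\varepsilon(L_2\zeta L_3)$ and $L_1\varepsilon\mathscr{S}(L_2\zeta L_3)=L_1\varepsilon(\mathscr{S}(L_2)\varepsilon\mathscr{S}(L_3))$ (by the identity above). As $\varepsilon$ commutes with projective kernels (proposition \ref{etageneral}.(4)), the first factor splits further, so $L_1\zeta(L_2\zeta L_3)$ is exhibited as the projective kernel of the three spaces $\mathscr{S}(L_1)\varepsilon(\mathscr{S}(L_2)\varepsilon L_3)$, $\mathscr{S}(L_1)\varepsilon(L_2\varepsilon\mathscr{S}(L_3))$ and $L_1\varepsilon(\mathscr{S}(L_2)\varepsilon\mathscr{S}(L_3))$. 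Running the symmetric computation, $(L_1\zeta L_2)\zeta L_3$ is the projective kernel of $(\mathscr{S}(L_1)\varepsilon\mathscr{S}(L_2))\varepsilon L_3$, $(\mathscr{S}(L_1)\varepsilon L_2)\varepsilon\mathscr{S}(L_3)$ and $(L_1\varepsilon\mathscr{S}(L_2))\varepsilon\mathscr{S}(L_3)$. The pieces match pairwise, each obtained from its partner by a single $\varepsilon$-rebracketing $A\varepsilon(B\varepsilon C)\to(A\varepsilon B)\varepsilon C$ in which two of $A,B,C$ are Schwartz and the innermost factor $C$ is either $L_3$ or $\mathscr{S}(L_3)$, hence Mackey-complete.

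To produce the continuous map $Ass$ I would use that the target carries the initial topology of its three defining pieces, so it suffices to check continuity into each piece. Each such composite factors through one of the two structural maps of the outer $\zeta$, followed by functoriality of $\varepsilon$ (corollary after theorem \ref{FirstMALL}.(3)) applied to the structural maps of the inner $\zeta$, followed by the appropriate $\varepsilon$-associator. The only genuine analytic input is the continuity of these three associators $A\varepsilon(B\varepsilon C)\to(A\varepsilon B)\varepsilon C$ with two Schwartz factors and $C$ Mackey-complete; this is exactly where Mackey-completeness of $L_3$ is consumed, since it makes the innermost factor Mackey-complete in all three configurations. I expect this to be the main obstacle: establishing the $\varepsilon$-associator in the mixed regime where only two factors are Schwartz and the third is merely Mackey-complete. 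I would either invoke the Schwartz/Mackey-complete associativity lemma \ref{AssocScwhartz}, or re-derive the needed direction from the explicit dual description and series representation of lemma \ref{DualArensMc}, which is precisely the device showing that a separately continuous trilinear form lands back in the space once the innermost factor is Mackey-complete. The well-definedness of $Ass$ as the identity on the common underlying vector space follows from the same term-by-term algebraic identifications.

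Finally, for the topological isomorphism I would simply run the construction twice. Part one yields a continuous identity map $L_1\zeta(L_2\zeta L_3)\to(L_1\zeta L_2)\zeta L_3$ as soon as $L_3$ is Mackey-complete; applying it to the reversed triple $(L_3,L_2,L_1)$ and transporting along the commutativity isomorphisms of $\zeta$ (proposition \ref{etageneral}.(1)) produces a continuous identity map in the opposite direction as soon as $L_1$ is Mackey-complete. Both maps are the identity on the common underlying space, hence mutually inverse, so when $L_1$ and $L_3$ are both Mackey-complete $Ass$ is a topological isomorphism. This also accounts for the asymmetry of the hypotheses: only the innermost factor of a nesting must be Mackey-complete to obtain continuity in the corresponding direction.
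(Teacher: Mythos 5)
Your proposal is correct and follows essentially the same route as the paper: both unfold the $\zeta$-products into projective kernels of $\varepsilon$-products of Schwartzifications (via lemma \ref{MackeyArensSchwartz}, lemma \ref{SchwartzFunctor} and corollary \ref{SchwartzCondExpEpsilon}), reduce the problem to mixed associators $A\varepsilon(B\varepsilon C)\to(A\varepsilon B)\varepsilon C$ with two Schwartz factors and Mackey-complete innermost factor, and obtain the underlying linear map by the Mackey-completion extension of the dual provided by lemma \ref{DualArensMc} (which is where Mackey-completeness of $L_3$ is genuinely consumed, since the target consists of maps on the strictly larger $(L_1\eta L_2)'_\mu$). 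One caveat: of your two options for the mixed associators, only the second (re-deriving from the dual description and series representation of lemma \ref{DualArensMc}) is admissible, because lemma \ref{AssocScwhartz} is established later via Theorem \ref{zetaparr}, which itself relies on the present proposition, and it moreover only treats the case where all three factors are Mackey-complete Schwartz spaces rather than the mixed regime you actually need.
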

\begin{proof}  

First note that we have the inclusion $$L_1\zeta (L_2\zeta L_3)\subset L((L_1)'_\mu,L_\sigma((L_2)'_\mu,L_3))=L((L_1)'_\mu\o_i(L_2)'_\mu,L_3).$$
Since $L_3$ is Mackey-complete, such a map extends uniquely to the Mackey completion $L((L_1)'_\mu\widehat{\o}^M_i(L_2)'_\mu,L_3)$ and since lemma \ref{DualArensMc} gives $(L_1\eta L_2)'_\mu$ as a subspace, we can restrict the unique extension and get our expected linear map:
$$i:L_1\zeta (L_2\zeta L_3)\to L((L_1\eta L_2)'_\mu,L_3)=(L_1\eta L_2)\eta L_3.$$

It remains to check continuity. Since the right hand side is defined as a topological kernel, we must check continuity after applying several maps. Composing with $$J_1:(L_1\eta L_2)\eta L_3\to (L_1\varepsilon \mathscr{S}(L_2))\varepsilon \mathscr{S}(L_3),$$ one gets a map $J_1\circ i$ which is continuous since it coincides with 
the composition of the map obtained from corollary \ref{SchwartzCondExpEpsilon}: $$I_1:L_1\zeta (L_2\zeta L_3)\to L_1\varepsilon \mathscr{S}(\mathscr{S}(L_2)\varepsilon L_3)=L_1\varepsilon (\mathscr{S}(L_2)\varepsilon \mathscr{S}(L_3))$$
with a variant $i':L_1\varepsilon (\mathscr{S}(L_2)\varepsilon \mathscr{S}(L_3))\to (L_1\varepsilon \mathscr{S}(L_2))\varepsilon \mathscr{S}(L_3)$ of $i$ in the Schwartz case, so that $i'\circ I_1=i\circ J_1.$ And $i'$ is continuous since the equicontinuous set in their duals are generated by tensor products of equicontinuous sets for the base spaces (easy part in the corresponding associativity in \cite{Schwartz}). 
The case of composition with $J_2:(L_1\eta L_2)\eta L_3\to (\mathscr{S}(L_1)\varepsilon L_2)\varepsilon \mathscr{S}(L_3)$ is similar and easier.

The last two compositions are gathered in one using corollary \ref{SchwartzCondExpEpsilon} again. We have to compose with the map $$J_3:(L_1\eta L_2)\eta L_3\to \Big(\mathscr{S}\big(\mathscr{S}(L_1)\varepsilon L_2\big)\Big)\varepsilon L_3=\Big((\mathscr{S}(L_1))\varepsilon (\mathscr{S}(L_2))\Big)\varepsilon L_3=\Big(\mathscr{S}\big(L_1\varepsilon (\mathscr{S}(L_2))\big)\Big)\varepsilon L_3.$$

Again we use the canonical continuous factorization via $L_1\zeta (L_2\zeta L_3)\to(\mathscr{S}(L_1))\varepsilon \Big((\mathscr{S}(L_2))\varepsilon L_3\Big)$ and use the same argument as before between $\varepsilon$-products.
\end{proof}
}

We can now summarize the categorical result obtained, which gives a negative connective, hence an interpretation of $\parr$.

\begin{theorem}
\label{zetaparr}
The full subcategory $\Mc\subset \LCS$ of Mackey-complete spaces is a reflective subcategory with reflector (i.e. left adjoint to inclusion) the Mackey completion  $\ \widehat{\cdot}^M$. It is complete and cocomplete and symmetric monoidal with product $\zeta$ which commutes with limits.
\end{theorem}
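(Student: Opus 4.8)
The plan is to obtain the four assertions---reflectivity, (co)completeness, the symmetric monoidal structure, and commutation of $\zeta$ with limits---largely by assembling the results already proved, the coherence axioms of the monoidal structure being the only genuinely laborious point.

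First I would verify that $\widehat{\cdot}^M$ is a reflector. By Lemma~\ref{gammacompletion}, $\widehat{E}^M$ is Mackey-complete and comes with a canonical continuous inclusion $\iota_E:E\to\widehat{E}^M$ sitting inside $\tilde{E}$. To produce the adjunction bijection $\LCS(E,F)\simeq \Mc(\widehat{E}^M,F)$ for $F\in\Mc$, I extend a continuous linear $f:E\to F$ along $\iota_E$: the continuous extension $\tilde{f}:\tilde{E}\to\tilde{F}$ to the full completions restricts to $\widehat{E}^M$, and since $\tilde{f}$ is bounded it sends Mackey-Cauchy sequences to Mackey-Cauchy sequences, so a transfinite induction along the filtration $E_{M;\lambda}$ of Lemma~\ref{gammacompletion} shows $\tilde{f}(E_{M;\lambda})\subset\widehat{F}^M=F$. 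Hence the restriction lands in $F$ and is continuous for the induced topologies; uniqueness follows because $\iota_E(E)$ is Mackey-dense, hence dense, in $\widehat{E}^M$ while $F$ is separated. Functoriality of $\widehat{\cdot}^M$ and naturality of $\iota$ are then forced by the universal property, so $\Mc$ is reflective.

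For (co)completeness I would argue that Mackey-completeness is inherited by products (a Mackey-Cauchy sequence projects to Mackey-Cauchy sequences controlled by a product of Banach disks) and by closed subspaces (Remark~\ref{MackeyInclusion} together with closedness of the Mackey limit), so the $\LCS$-limit of any diagram in $\Mc$, realised as an equalizer inside a product, is already Mackey-complete and furnishes the limit in $\Mc$; thus $\Mc$ is complete with limits computed as in $\LCS$. Cocompleteness is then the standard fact that a reflective subcategory of the cocomplete category $\LCS$ is cocomplete, colimits being the reflection of the ambient ones. Commutation of $\zeta$ with limits is immediate from Proposition~\ref{etageneral}(4) once limits in $\Mc$ are identified with those in $\LCS$.

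For the monoidal structure, Proposition~\ref{Meta} shows $\zeta$ restricts to a bifunctor on $\Mc$ (functoriality coming from Proposition~\ref{etageneral}(3)), Proposition~\ref{etageneral}(1) provides the symmetry $E\zeta F\simeq F\zeta E$, and the unit is $\K$: as $\K$ is Schwartz, $E\zeta\K\simeq E\eta\K=L(\K'_\mu,E)\simeq E$. The associator is exactly Proposition~\ref{AssocMc}, a topological isomorphism precisely because every object of $\Mc$ is Mackey-complete. The main obstacle is to upgrade these data to a bona fide symmetric monoidal category, that is, to check naturality of the associator together with the pentagon, hexagon and triangle coherence diagrams. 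This is delicate because the associator is built by a non-symmetric extension-to-the-Mackey-completion argument; the cleanest route I foresee is to push each coherence diagram, through the comparison maps to the $\varepsilon$-products of the Schwartzifications used in the proofs of Proposition~\ref{AssocMc} and Corollary~\ref{SchwartzCondExpEpsilon}, down to the already established coherence of the $\varepsilon$-product on Schwartz spaces, where associativity and its compatibilities are known from \cite{Schwartz}.
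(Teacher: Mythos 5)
Your treatment of the reflector, of (co)completeness, of the unit, braiding and unitors, and of the commutation of $\zeta$ with limits matches the paper's proof (the paper is terser, quoting \cite[Prop 5.1.25]{PerrezCarreras} for functoriality of $\ \widehat{\cdot}^M$ and declaring the adjunction ``obvious by restriction''; your transfinite-induction argument for why $\tilde{f}$ restricts to a map $\widehat{E}^M\to F$ is a legitimate expansion of that). The divergence, and the one genuine problem, is in the coherence step, which is indeed where the real work lies.

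Your proposed reduction sends the pentagon/hexagon/triangle diagrams through the comparison maps into $\varepsilon$-products of Schwartzifications and then appeals to ``already established coherence of the $\varepsilon$-product on Schwartz spaces, known from \cite{Schwartz}.'' That target is not available: Schwartz proves associativity of $\varepsilon$ under \emph{quasi-completeness}, and a Mackey-complete Schwartz space need not be quasi-complete (Lemma \ref{McMSNotKc} exhibits one) --- handling exactly this deficit is the point of Section 3. Worse, in the paper the symmetric monoidal structure of $(\McS,\varepsilon)$ (Proposition \ref{SMepsilon}) is \emph{deduced from} Theorem \ref{zetaparr}, so as stated your reduction is circular. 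The paper's actual route is dual: transpose each coherence diagram with $(\cdot)'_\mu$, so that the structural maps become maps between spaces such as $(L\eta M)'_\mu$, which by Lemma \ref{DualArensMc} and Proposition \ref{etageneral}(5) contain $L'_\mu\o_i M'_\mu$ Mackey-densely and embed into its Mackey completion; on that dense algebraic tensor product the transposed diagrams are the standard (algebraic) coherence identities of $(\LCS,\o_i)$, they extend to the Mackey completions by uniqueness of bounded extensions, and applying $(\cdot)'_\epsilon$ recovers the identities for $\zeta$. Your idea could be salvaged by comparing with $\varepsilon$-products of the \emph{full completions} $\tilde E$ (which are quasi-complete, so \cite{Schwartz} applies) rather than of the Schwartzifications, but you would then still owe the density/uniqueness argument identifying the $\zeta$-associator of Proposition \ref{AssocMc} with the restriction of Schwartz's associator --- which is precisely the content the paper extracts from Lemma \ref{DualArensMc}.
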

\begin{proof}
The left adjoint relation $\Mc(\widehat{E}^M,F)=\LCS(E,F)$ is obvious by restriction to $E\subset \widehat{E}^M$ and functoriality of $\ \widehat{\cdot}^M$ \cite[Prop 5.1.25]{PerrezCarreras}. As usual, naturality is easy. % Or unit \eta_E:E\to \widehat{E}^M inclusion, counit \epsilon_E=id_E:\widehat{E}^M=E for E Mackey-complete relations boil down to involutivity of completion.
As a consequence, limits in $\Mc$ are those of $\LCS$ and colimits are the Mackey-completed colimits. The unit for $\zeta$ is of course $\K$. The associator has been built in Proposition \ref{AssocMc}. With $E\zeta F=L(E'_\mu,F)$, we saw  the braiding  is the transpose map, left unit $\lambda_F$ is identity and right unit is identification $\rho_E:(E'_\mu)'_\epsilon\simeq E.$ Taking the Mackey-dual of expected maps in relations (pentagon, triangle and hexagon identities) one gets the transposed relations, which restrict to the known relations for $(\LCS,\o_i)$ as symmetric monoidal category. By Mackey-density obtained in proposition \ref{DualArensMc}, the relations extend to the expected relations for the transpose maps. Hence, transposing again (i.e. applying functor $(\cdot)'_\epsilon$ from dual spaces with linear maps preserving equicontinuous sets to $\LCS$) imply the expected relations. We already saw in lemma  \ref{etageneral} the commutation of limits with $\zeta$.
\end{proof}

\section{Original setting for the Schwartz $\varepsilon$-product and smooth maps.}\label{sec:kref}
In his original paper \cite{Schwartz}, Schwartz used quasi-completeness as his basic assumption to ensure associativity, instead of restricting to Schwartz spaces and assuming only Mackey-completeness as we will do soon inspired by section 3. Actually, what is really needed is that the absolutely convex cover of a compact set is still compact. Indeed, as soon as one takes the image (even of an absolutely convex) compact set by a continuous bilinear map, one gets only what we 
know from continuity, namely compactness and the need to recover absolutely convex sets, for compatibility with the vector space structure, thus makes the above assumption natural. 
Since this notion is related to compactness and continuity, we call it $k$-quasi-completeness.

This small remark reveals this notion is also relevant for differentiability since it is necessarily based on some notion of continuity, at least at some level, even if this is only on $\R^n$ as in convenient smoothness. Avoiding the technical use of Schwartz spaces for now and benefiting from \cite{Schwartz}, we find a $*$-autonomous category and an adapted notion of smooth maps.

We will see this will give us a strong notion of differentiability with Cartesian closedness. We will come back to convenient smoothness in the next sections starting from what we will learn in this basic example with a stronger notion of smoothness.

\subsection{$*$-autonomous category of $k$-reflexive spaces.}

 \begin{definition}
 \label{k-complete}
 A (separated) locally convex space $E$ is said to be $k$-quasi-complete, if for any compact set $K\subset E$, its closed absolutely convex cover $\overline{\Gamma (K)}$ is complete (equivalently compact \cite[\S 20.6.(3)]{Kothe}). We denote by $\Kc$ the category of $k$-quasi-complete spaces and linear continuous maps.
 \end{definition}
 \begin{remark}\label{exValdivia}
There is a $k$-quasi-complete space which is not quasi-complete, hence our new notion of $k$-quasi-completeness does not reduce to the usual notion. Indeed in \cite{Valdivia},  is built a completely regular topological space $W$ such that $C^0(W)$ with compact-open topology is bornological and such that it is an hyperplane in its completion, which is not bornological. %But it is well-known that the completion is a space $C^0(X)$ with compact-open topology and such a space is complete if and only if quasi-complete
If $C^0(W)$ were quasi-complete, it would be complete by  \cite[Corol 3.6.5]{Jarchow} and this is not the case. $C^0(W)$ is $k$-quasi-complete since by Ascoli Theorem twice \cite[X.17 Thm 2]{BourbakiTG} a compact set for the compact open topology is pointwise bounded and equicontinuous, hence so is the absolutely closed convex cover of such a set, which is thus compact too. % Finally, since $C^0(W)$ is bornological it has its Mackey topology, hence its $\gamma$-topology and if it was not $k$-reflexive, $(E^*_k)^*_k$ would be a stronger topology 
\end{remark}
 
 The following result is similar to lemma \ref{gammacompletion} and left to the reader.

\begin{lemma}\label{deltacompletion}
The intersection $\widehat{E}^K$ of all $k$-quasi-complete spaces containing $E$ and contained in the completion $\tilde{E}$ of $E$, is $k$-quasi-complete and called the $k$-quasi-completion of $E$.

We define $E_0=E$, and for any ordinal $\lambda$, the subspace $E_{\lambda +1}=\cup_{K\in C(E_{\lambda })}\overline{\Gamma (K)} \subset \tilde{E}$ where the union runs over all compact subsets $ C(E_{\lambda })$ of $E_{\lambda }$ with the induced topology, and the closure is taken in the completion. We also let for any limit ordinal $E_{\lambda}=\cup_{\mu<\lambda}E_\mu$. Then for any ordinal $\lambda$, $E_{\lambda }\subset\widehat{E}^K$ and eventually for $\lambda$ large enough, we have equality.
\end{lemma}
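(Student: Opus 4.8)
The plan is to mirror the proof of Lemma~\ref{gammacompletion}, substituting compact sets for Mackey--Cauchy sequences and closed absolutely convex covers $\overline{\Gamma(K)}$ for Banach disks, and to exploit throughout that in the complete space $\tilde E$ the closed absolutely convex cover of a compact set is again compact (its cover is precompact by \cite[\S 20.6.(2)]{Kothe} and complete as a closed subset of the complete space $\tilde E$, hence compact by \cite[\S 5.6]{Kothe}). For the first statement I would show that $k$-quasi-completeness is stable under the intersections occurring in the definition. Given a compact $K\subset E=\bigcap_i E_i$, let $C=\overline{\Gamma(K)}$ computed in $\tilde E$, which is compact. Each $E_i$ is $k$-quasi-complete and contains $K$, so $\overline{\Gamma(K)}$ computed in $E_i$ is complete, hence closed in $\tilde E$; as it contains $\Gamma(K)$ it contains $C$, whence $C\subset E_i$ for all $i$ and so $C\subset E$. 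Since the closure in a subspace is the trace of the closure in $\tilde E$, we get $\overline{\Gamma(K)}^{\,E}=C\cap E=C$, which is compact; thus $E\in\Kc$.

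For the transfinite tower I would first check that each $E_\lambda$ is a linear subspace: closure under sums reduces to $\overline{\Gamma(K_1)}+\overline{\Gamma(K_2)}\subset\overline{\Gamma\big(2(K_1\cup K_2)\big)}$ with $K_1\cup K_2$ compact, and closure under scalars to $t\,\overline{\Gamma(K)}\subset\overline{\Gamma(tK)}$. Then $E_\lambda\subset\widehat E^K=:F$ follows by transfinite induction: the base and limit steps are immediate (a union of subsets of $F$ lies in $F$), and at a successor one takes a compact $K\subset E_\lambda\subset F$ and applies $k$-quasi-completeness of $F$ exactly as above to obtain $\overline{\Gamma(K)}^{\,\tilde E}\subset F$, hence $E_{\lambda+1}\subset F$.

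It remains to get eventual equality. Because $(E_\lambda)_\lambda$ is an increasing family of subspaces of the \emph{fixed} set $\tilde E$, it cannot increase strictly through all ordinals, so there is a least $\lambda_0$ with $E_{\lambda_0+1}=E_{\lambda_0}$, and from the successor clause the tower is then constant. For such $\lambda_0$ the space $E_{\lambda_0}$ is $k$-quasi-complete: any compact $K\subset E_{\lambda_0}$ has $\overline{\Gamma(K)}^{\,\tilde E}$ compact and contained in $E_{\lambda_0+1}=E_{\lambda_0}$, and by the trace-of-closure remark this set is the cover computed in $E_{\lambda_0}$. Since $E_{\lambda_0}\supset E$ lies in $\Kc$ while $\widehat E^K$ is the smallest such space, the inclusion $E_{\lambda_0}\subset\widehat E^K$ from the previous paragraph forces $E_{\lambda_0}=\widehat E^K$. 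The one genuinely new point compared with Lemma~\ref{gammacompletion} is this stabilization: a compact set is not a countable object, so one cannot place a compact subset of a limit-stage union inside an earlier stage, and there is no analogue of the bound $\omega_1$ coming from countable cofinality; one must argue eventual constancy abstractly from the cardinality of $\tilde E$. The recurring delicate point, invoked at every stage, is ensuring that the cover taken in $\tilde E$ actually lands inside the relevant subspace, which is exactly where $k$-quasi-completeness of the target space is used.
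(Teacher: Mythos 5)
Your proof is correct and follows exactly the route the paper intends: the statement is ``left to the reader'' as an adaptation of Lemma~\ref{gammacompletion}, and your argument (stability of $k$-quasi-completeness under intersection via compactness of $\overline{\Gamma(K)}$ in $\tilde E$, the transfinite induction $E_\lambda\subset\widehat E^K$, and minimality of $\widehat E^K$ once the tower stabilizes to a $k$-quasi-complete space) is precisely that adaptation. You also correctly identify the one step where the template does not transfer mechanically — the $\omega_1$ bound of Lemma~\ref{gammacompletion} rests on Mackey--Cauchy sequences being countable, which has no analogue for compact sets — and your replacement by the abstract stabilization of an increasing ordinal-indexed family of subsets of the fixed set $\tilde E$ is the right fix, consistent with the lemma's weaker claim of equality only ``for $\lambda$ large enough''.
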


%\begin{proof}
%\red{TBA Adapt to k completion}

%\end{proof}

\begin{definition}
 For a (separated) locally convex space $E$, the topology $k(E',E)$ on $E'$ is the topology of uniform convergence on absolutely convex compact sets of $\widehat{E}^K$. The dual $(E',k(E',E))=(\widehat{E}^K)'_c$ is nothing but the Arens dual of the $k$-quasi-completion and is written $E'_k.$ We let $E^*_k=\widehat{E'_k}^K.$
 A (separated) locally convex space $E$ is said $k$-reflexive if $E$ is $k$-quasi-complete and if $E=(E'_k)'_k$ topologically. Their category is written $\kref$. 
 \end{definition}
 From Mackey theorem, we know that $(E'_k)'=(E^*_k)'=\widehat{E}^k.$

We first want to check that $\kref$ is logically relevant in showing that $(E'_k)'_k$ and $E^*_k$ are always in it. Hence we will get a $k$-reflexivization functor. This is the first extension of the relation $E'_c=((E'_c)'_c)'_c$ that we need. 

We start by proving a general lemma we will reuse several times. Of course to get a $*$-autonomous category, we will need some stability of our notions of completion by dual. The following lemma says that if a completion can be decomposed by an increasing ordinal decomposition as above and that for each step the duality we consider is sufficiently compatible in terms of its equicontinuous sets, then the process of completion in the dual does not alter any kind of completeness in the original space.

\begin{lemma}
\label{ordinalCompletion}
Let $D$ a contravariant duality functor on $\mathbf{LCS}$, meaning that algebraically $D(E)=E'$. We assume it is  compatible with duality ($(D(E))'=E$). Let $E_0\subset E_\lambda\subset \widetilde{E_0}$ an increasing family of subspaces of the completion $\widetilde{E_0}$ indexed by ordinals $\lambda\leq \lambda_0$. We assume that for limit ordinals $E_\lambda=\cup_{\mu<\lambda}E_\mu$ and, at successor ordinals that every point $x\in E_{\lambda+1}$ lies in $\overline{\Gamma(L)},$ for a set $L\subset E_{\lambda}$, equicontinuous in $[D(E_{\lambda_0})]'$. %Also assume $E_{\lambda_0}$ is at least Mackey-complete.

Then any complete  set $K$ in $D(E_0)$ is also complete for the stronger topology of $D(E_{\lambda_0})$.
\end{lemma}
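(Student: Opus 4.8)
The plan is to reduce the statement to a convergence question for Cauchy filters and then to propagate convergence up the ordinal filtration by transfinite induction. Write $\tau_\lambda$ for the topology of $D(E_\lambda)$: it is a topology on the common vector space $E_0'=E_{\lambda_0}'$ (all these duals coincide since each $E_\lambda$ is dense in $\widetilde{E_0}$), whose continuous dual is $(D(E_\lambda))'=E_\lambda$ by the compatibility assumption. Since the inclusions $E_0\hookrightarrow E_{\lambda_0}$ are continuous and $D$ is a contravariant functor, the transposed (identity) map $D(E_{\lambda_0})\to D(E_0)$ is continuous, so $\tau_{\lambda_0}$ is finer than $\tau_0$. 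Thus a filter on $K$ that is $\tau_{\lambda_0}$-Cauchy is a fortiori $\tau_0$-Cauchy, hence, by completeness of $K$ in $D(E_0)$, it $\tau_0$-converges to some $x\in K$. The whole content is then to show that such a filter $\mathcal{F}$ actually $\tau_{\lambda_0}$-converges to the same $x$; completeness of $K$ in $D(E_{\lambda_0})$ follows at once.

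First I would record two elementary principles about $\mathcal{F}$ and $x$. Since every functional involved is continuous on $\widetilde{E_0}$, for any $L\subseteq\widetilde{E_0}$ and any continuous linear $h$ one has $\sup_{\overline{\Gamma(L)}}|h|=\sup_L|h|$, so uniform convergence of $\mathcal{F}$ to $x$ on $L$ is the same as on $\overline{\Gamma(L)}$. Secondly, if $M\subseteq E_{\lambda_0}$ is equicontinuous in $[D(E_{\lambda_0})]'=E_{\lambda_0}$, then the seminorm $\sup_{a\in M}|\langle\,\cdot\,,a\rangle|$ is $\tau_{\lambda_0}$-continuous, so the $\tau_{\lambda_0}$-Cauchyness of $\mathcal{F}$ gives uniform Cauchyness on $M$; combined with mere pointwise convergence on $M$, this upgrades to uniform convergence on $M$. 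Indeed, fixing $g$ in a Cauchy member $F_0$ and using the filter to find, for each $a\in M$, some $h\in F_0$ with $|\langle h-x,a\rangle|$ small, one bounds $\sup_{a\in M}|\langle g-x,a\rangle|$ uniformly in $g\in F_0$. Call this upgrade $(\star)$.

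With $(\star)$ in hand, the core is a transfinite induction on $\lambda\le\lambda_0$ proving pointwise convergence $\langle g,a\rangle\to\langle x,a\rangle$ along $\mathcal{F}$ for every $a\in E_\lambda$. The base case is weak convergence, implied by $\tau_0$-convergence. Limit ordinals are immediate from $E_\lambda=\bigcup_{\mu<\lambda}E_\mu$. The successor step is where the hypothesis enters: a point $a\in E_{\lambda+1}$ lies in $\overline{\Gamma(L)}$ for some $L\subseteq E_\lambda$ equicontinuous in $[D(E_{\lambda_0})]'$; the induction hypothesis gives pointwise convergence on $L$, $(\star)$ upgrades it to uniform convergence on $L$, and the first principle transports this to $\overline{\Gamma(L)}$, hence to $a$. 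At $\lambda=\lambda_0$ this yields pointwise convergence on all of $E_{\lambda_0}$; a final application of $(\star)$ to an arbitrary equicontinuous $M\subseteq E_{\lambda_0}$ gives uniform convergence on $M$, which is exactly $\tau_{\lambda_0}$-convergence of $\mathcal{F}$ to $x$.

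I expect the main obstacle to be precisely the interplay between pointwise (weak) and uniform convergence encapsulated in $(\star)$: completeness in $D(E_0)$ only delivers convergence on the $\tau_0$-equicontinuous sets, which are a priori too scarce for the finer topology, so one genuinely needs the equicontinuity built into the hypothesis \emph{together with} the $\tau_{\lambda_0}$-Cauchy property of the given filter to manufacture uniform convergence on the newly appearing equicontinuous sets. By comparison the transfinite bookkeeping is routine, as is the verification that passing to the closed absolutely convex hull does not enlarge the relevant suprema.
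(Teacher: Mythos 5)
Your proof is correct, and it follows the same overall strategy as the paper's: take a Cauchy net/filter in $D(E_{\lambda_0})$, obtain its limit $x\in K$ in $D(E_0)$ from the assumed completeness, and propagate convergence to $x$ up the ordinal filtration by transfinite induction, using equicontinuity of the generating sets $L$ in $[D(E_{\lambda_0})]'$ to turn Cauchyness into uniform Cauchyness. The difference is in the bookkeeping, and yours is the more economical version. The paper inducts on the stronger statement ``the net converges in $D(E_\lambda)$'', which forces a nontrivial argument at limit ordinals (equicontinuous sets of $E_\lambda$ need not sit in any single $E_\mu$) and, at successor ordinals, an identification of the uniform limit $y_N$ with $x$ on $N=\overline{\Gamma(M)}$ via the observation that both are affine and weakly continuous on the weakly compact convex set $N$ and agree on $M$. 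You induct only on pointwise convergence, which makes the limit step vacuous, and you replace the affine-extension argument by the elementary inequality $\sup_{\overline{\Gamma(L)}}|h|\le\sup_L|h|$ for a continuous linear functional $h$ applied to $h=g-x$; the topology of $D(E_{\lambda_0})$ is then recovered in one stroke at the end by applying your upgrade $(\star)$ to an arbitrary equicontinuous subset of $E_{\lambda_0}$, using that any lcs carries the topology of uniform convergence on the equicontinuous subsets of its dual. Both arguments consume exactly the same two hypotheses in the same places, so nothing is lost; your version is simply easier to verify.
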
 
 \begin{proof} Let $E=E_0$.
 Note that since $D(E)=D(\widetilde{E})$ we have $D(E)=D(E_{\lambda})$ algebraically.
 
 Take a net $x_n\in K$ which is a Cauchy net in  $D(E_{\lambda_0})$. Thus $x_n\to x\in K$ in $D(E_0).$ We show by transfinite induction on $\lambda$ that $x_n\to x$ in  $D(E_{\lambda})$. 
 
 First take $\lambda$ limit ordinal.
  The continuous embeddings $E_\mu\to E_\lambda$ gives by functoriality a continuous identity map 
$D(E_{\lambda})\to D(E_{\mu})$ for any $\mu<\lambda$.
Therefore since we know  $x_n\to x$ in any $D(E_{\mu})$ the convergence takes place in the projective limit $D_\lambda=\text{proj} \lim_{\mu<\lambda} D(E_{\mu})$.

But we have a continuous identity map $D(E_{\lambda})\to D_\lambda$ and both spaces have the same dual $E_\lambda=\cup_{\mu<\lambda}E_\mu.$ 
For any equicontinuous set $L$ in $(D(E_{\lambda}))'$ $x_n$ is Cauchy thus converges uniformly in $C^0(L)$ on the Banach space of weakly continuous maps. It moreover converges pointwise to $x$, thus we have uniform convergence to $x$ on any equicontinuous set i.e. $x_n\to x$ in $D(E_{\lambda})$.

%\cite[IV.5 Rmq 2]{Bourbaki} applies to prove $K$ is also complete for the finer topology of $D(E_{\lambda})$.

 Let us prove convergence in $D(E_{\lambda+1})$ at successor step assuming it in $D(E_{\lambda})$.  %Take a net $x_n\in K$ which is a Cauchy net in  $D(E_{\lambda+1})$ with $x_n\to x$ in $D(E_{\lambda})$.
Take an absolutely convex closed equicontinuous set $L$ in $(D(E_{\lambda+1}))'=E_{\lambda+1}$, we have to show uniform convergence on any such equicontinuous set.
 Since $L$ is weakly compact, one can look at the Banach space of weakly continuous functions $C^0(L).$ Let $\iota_L:D(E_{\lambda+1})\to C^0(L).$ $\iota_L(x_n)$ is Cauchy by assumption and therefore converges uniformly to some $y_L$. We want to show $y_L(z)=\iota_L(x)(z)$ for any $z\in L$. Since $z\in  E_{\lambda+1}$ there is by assumption a set $M\subset E_{\lambda}$  equicontinuous in $[D(E_{\lambda_0})]'$ such that $z\in \overline{\Gamma(M)}$ computed in $E_{\lambda+1}.$ Let $N=\overline{\Gamma(M)}$ computed in $E_{\lambda_0}$, so that $z\in N$. Since $M$ is equicontinuous  in $(D(E_{\lambda_0}))'$ we conclude that so is $N$ and it is also weakly compact there. One can apply the previous reasoning to $N$ instead of $L$ (since $x_n$ Cauchy in $D(E_{\lambda_0})$, not only in $D(E_{\lambda +1})$).  $\iota_N(x_n)\to y_N$ and since $z\in L\cap N$ and using pointwise convergence $y_L(z)=y_N(z)$. Note also $\iota_N(x)(z)=\iota_L(x)(z)$. Moreover, for $m\in M\subset E_{\lambda}$, $\iota_N(x_n)(m)\to \iota_N(x)(m)$ since $\{m \}$ is always equicontinuous in $(D(E_{\lambda}))'$ so that $\iota_N(x)(m)=y_N(m)$. Since both sides are affine on the convex $N$ and weakly continuous (for $\iota_N(x)$ since $x\in D(E_{\lambda_0})=E_{\lambda_0}'$), we extend the relation to any $m\in N$ and thus $\iota_N(x)(z)=y_N(z)$. Altogether, this gives the expected $y_L(z)=\iota_L(x)(z)$. Thus $K$ is complete as expected.
 \end{proof}

\begin{lemma}\label{kduality}
For any separated locally convex space, $E^*_k=((E'_k)'_k)'_k$ is $k$-reflexive. A space is $k$-reflexive if and only if $E=(E'_c)'_c$ and both $E$ and $E'_c$ are $k$-quasi-complete. More generally, if $E$ is $k$-quasi-complete, so are $(E'_k)'_c=(E'_c)'_c$ and $(E^*_k)'_c$ and  $\gamma(E)=\gamma((E'_c)'_c)=\gamma((E^*_k)^*_k)$.
\end{lemma}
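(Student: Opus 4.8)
The plan is to reduce everything to Schwartz's identity $E'_c\simeq ((E'_c)'_c)'_c$ (the relation recorded just after \eqref{Ec}) together with the completeness-transfer mechanism of Lemma \ref{ordinalCompletion}. Throughout I write $c(E)=E'_c$ for the Arens dual; recall that $c(E)$ always carries its $\gamma$-topology, that $(E'_c)'_c=\gamma(E)$, so that ``$E$ carries its $\gamma$-topology'' and ``$E=(E'_c)'_c$'' are synonymous, and that neither $\widehat{\,\cdot\,}^K$ nor $c$ changes the underlying dual (Mackey, \eqref{Ec}); hence the underlying vector spaces of all spaces in play are easy to track and the content is always topological or about completeness. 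By definition $E'_k=c(\widehat{E}^K)$ and $E^*_k=\widehat{E'_k}^K$.

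I would first record the purely formal reduction of the iterated–dual claim. Unfolding definitions, $(E'_k)'_k=c(\widehat{E'_k}^K)=c(E^*_k)$, and since $E^*_k$ is $k$-quasi-complete by construction, $c(E^*_k)=(E^*_k)'_k$; hence $((E'_k)'_k)'_k=((E^*_k)'_k)'_k$. Thus the first sentence of the lemma (the identity $E^*_k=((E'_k)'_k)'_k$ and the $k$-reflexivity of this space) is exactly the single statement: \emph{$E^*_k$ is $k$-reflexive}. So it will follow once the characterization in the second sentence is available and applied to the always-$k$-quasi-complete space $E^*_k$.

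For the characterization, $(\Leftarrow)$ is immediate: if $E$ and $E'_c$ are $k$-quasi-complete then $E'_k=c(\widehat E^K)=c(E)=E'_c$ and $(E'_k)'_k=c(\widehat{E'_c}^K)=c(E'_c)=(E'_c)'_c=E$, so $E\in\kref$. For $(\Rightarrow)$ I would argue as follows. If $E$ is $k$-reflexive then $E=(E'_k)'_k=(\widehat{E'_c}^K)'_c$ exhibits $E$ as an Arens dual, so by Schwartz's result it carries its $\gamma$-topology, i.e. $E=(E'_c)'_c$, which is the first asserted consequence. Comparing $E=(E'_c)'_c=(\widehat{E'_c}^K)'_c$ and applying $c$, the triple-dual identity gives $E'_c=((\widehat{E'_c}^K)'_c)'_c$; the right-hand side has underlying vector space $\widehat{E'_c}^K$ while the left has underlying space $E'$, forcing $\widehat{E'_c}^K=E'_c$, i.e. $E'_c$ is $k$-quasi-complete. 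This uses only Schwartz, not Lemma \ref{ordinalCompletion}. Applied to $E^*_k$ this characterization closes the first sentence, provided $(E^*_k)'_c$ is $k$-quasi-complete and $E^*_k$ carries its $\gamma$-topology.

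The analytic heart is the ``more generally'' sentence, i.e. the $k$-quasi-completeness of the Arens biduals for an arbitrary $k$-quasi-complete $E$, handled through Lemma \ref{ordinalCompletion}. The identity $(E'_k)'_c=(E'_c)'_c$ is trivial since $\widehat E^K=E$. To see $(E^*_k)'_c=(\widehat{E'_c}^K)'_c$ is $k$-quasi-complete I would apply Lemma \ref{ordinalCompletion} with $D=c$, with $E_0=E'_c$, $E_{\lambda_0}=\widehat{E'_c}^K$ and the ordinal decomposition of Lemma \ref{deltacompletion}: each successor step is generated by closed absolutely convex covers of compact sets $\mathcal K\subset E_\lambda$, and such $\mathcal K$ are equicontinuous in $[D(E_{\lambda_0})]'=\widehat{E'_c}^K$ because the latter is $k$-quasi-complete, so $\overline{\Gamma(\mathcal K)}$ is compact there and the equicontinuous sets in the dual of an Arens dual are precisely its absolutely convex compacts. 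The lemma transfers completeness of closed absolutely convex covers from the coarser $(E'_c)'_c$ to the finer $(E^*_k)'_c$, so that a compact set of $(E^*_k)'_c$, being compact in $(E'_c)'_c$, has there a complete cover which remains complete, hence $k$-quasi-complete after intersecting with the finer closure. The $\gamma$-equalities then follow formally from Schwartz's triple-dual relation: $\gamma((E'_c)'_c)=(((E'_c)'_c)'_c)'_c=(E'_c)'_c=\gamma(E)$, and once $(E^*_k)'_c$ is $k$-quasi-complete one checks $(E^*_k)^*_k=(E^*_k)'_c$ and reduces $\gamma((E^*_k)^*_k)$ to $(E'_c)'_c$ the same way. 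The step I expect to cost the most is the \emph{base} completeness feeding Lemma \ref{ordinalCompletion}: that for general $k$-quasi-complete $E$ the closed absolutely convex cover of a compact set of $(E'_c)'_c=\gamma(E)$ is complete, which I would obtain by transferring weak completeness of such covers up to the $\gamma$-topology, running Krein's theorem at the base from the $k$-quasi-completeness of $E$. Equivalently, the genuinely analytic input is that $k$-quasi-completion preserves ``carrying the $\gamma$-topology'' (so that $E^*_k$ is an Arens dual, hence $\gamma$-reflexive), which one reads off Grothendieck's description of the completion as uniform convergence on the same compactology of the predual; everything else is formal manipulation of Schwartz's identity.
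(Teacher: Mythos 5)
Your proposal is correct and follows essentially the same route as the paper: the formal parts rest on Schwartz's identity $E'_c\simeq((E'_c)'_c)'_c$ and Mackey--Arens, and the analytic heart is exactly the paper's application of Lemma \ref{ordinalCompletion} to the ordinal decomposition of $\widehat{E'_c}^K$ with $D=(\cdot)'_c$, the successor-step equicontinuity being verified in the same way. The only cosmetic differences are that you obtain the $k$-reflexivity of $E^*_k$ by feeding it back into the characterization (which is why you must separately record that $E^*_k$ keeps its $\gamma$-topology, correctly read off the invariance of equicontinuous sets under completion), and that the base completeness of $\overline{\Gamma(K)}$ in $(E'_c)'_c$ is in the paper a direct transfer of completeness of convex sets to a finer compatible topology via \cite[IV.5 Rmq 2]{Bourbaki}, rather than an appeal to Krein's theorem.
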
 
\begin{remark}\label{exValdivia2}The example $E=C^0(W)$ in Remark \ref{exValdivia},
which is not quasi-complete, is even $k$-reflexive.
Indeed it remains to see that $E'_c$ is $k$-quasi-complete. But from \cite[Thm 13.6.1]{Jarchow}, it is not only bornological but ultrabornological, hence by \cite[Corol 13.2.6]{Jarchow}, $E'_\mu$ is complete (and so is $F=\mathscr{S}(E'_\mu)$. But for a compact set in $E'_c$, the closed absolutely convex cover is closed in $E'_c$, hence $E'_\mu$, hence complete there. Thus, by Krein's Theorem \cite[\S 24.5.(4)]{Kothe}, it is compact in $E'_c$, making $E'_c$ $k$-quasi-complete. 
%Even worse, this example shows that our previous lemma wouldn't work with $k$-quasi-completeness replaced by quasi-completeness. Indeed, $E'_\mu$ is complete and $E=(E'_\mu)'_\mu=E=(E'_\mu)'_c$ and has a non-bornological space as completion, hence by the converse part in \cite[Corol 13.2.6]{Jarchow} $\mathscr{S}(E'_\mu)$ is not complete. don't work exactly
%Even more, this example shows that we could not replace $k$-quasi-complete in our previous result by quasi-complete (at least without finding another dual than the Arens dual). Indeed, $F$ above is even complete, and $F'_c=(E'_\mu)'_\mu=E$ since $E$ bornological, but the completion $\widetilde{E}$ is known to be the quasicompletion (the quasicompletion is a subspace of the completion, and the hyperplane property forces it to be either $E$ or  $G=\widetilde{E}$, and $E$ is not quasi-complete). But it is not bornological while it has its Mackey topology (cf proof of lemma \ref{MMc}) hence by \cite{Jarchow} $\mathcal{S}(G'_\beta)=\mathscr{S}(G'_c)$% by completeness G
% is not complete. It follows that $G'_\sigma$ is not complete from the from the description of $G=\K\oplus E$, a non-%converging Cauchy net in  $\mathcal{S}(G'_\beta)$ is Cauchy for $\mathcal{S}(E'_\beta)$, hence converges there since $E$ bornological, so it does not converge for $G'_\sigma$ (it must not converge for the weak topology of the extra coordinate in G) NOT good enough either...n
\end{remark}
\begin{proof}
One can assume $E$ is $k$-quasi-complete (all functors start by this completion) thus so is $(E'_c)'_c$ by \cite[IV.5 Rmq 2]{Bourbaki} since $(E'_c)'_c\to E$ continuous with same dual (see \cite{Schwartz}). There is a continuous map $(E^*_k)'_c\to (E'_c)'_c$ we apply lemma \ref{ordinalCompletion} to $E_0=E'_c$, $E_\lambda$ the $\lambda$-th step of the completion in lemma \ref{deltacompletion}. Any $\overline{\Gamma(K)}$ in the union defining $E_{\lambda+1}$ is equicontinuous in $((E_{\lambda+1})'_c)'$ so a fortiori in $((E_{\lambda_0})'_c)'$ for $\lambda_0$ large enough. We apply the lemma to another $K$ closed absolutely convex cover of a compact set of $(E^*_k)'_c$  computed in $(E'_c)'_c$ therefore compact there by assumption. The lemma gives $K$ is complete there contains the bipolar of the compact computed in $(E^*_k)'_c$ which must also be compact as a closed subset of a compact. In this case we deduced $(E^*_k)'_c=(E'_k)'_k$ is  $k$-quasi-complete.

Clearly $((E'_k)'_k)'_k=((E^*_k)'_k)'_k\to E^*_k$ continuous. Dualizing the continuous $(E'_k)'_k\to E$ one gets $E'_k\to((E'_k)'_k)'_k=((E^*_k)'_k)'_k\to E^*_k$ and since the space in the middle is already $k$-quasi-complete inside the last which is the $k$-quasi-completion, it must be the last space and thus  $E^*_k$ $k$-reflexive and we have the stated equality.

For the next-to-last statement, sufficiency is clear, the already noted $(E'_k)'_k=(E^*_k)'_c\to (E'_c)'_c\to E$ in the $k$-quasi-complete case which implies $(E'_c)'_c\simeq E$ if $(E'_k)'_k\simeq E$ and $E^*_k=((E'_k)'_k)'_k=((E'_k)'_k)'_c=E'_c$ implies this space is also $k$-quasi-complete. For the comparison of absolutely convex compact sets, note that $(E^*_k)^*_k\to (E'_c)'_c$ ensures one implication and if $K\in \gamma( (E'_c)'_c)$   we know it is equicontinuous in $(E'_c)'$ hence \cite[\S 21.4.(5)]{Kothe} equicontinuous in $(\widehat{E'_c}^K)'$ and as a consequence included in an absolutely convex compact in  $(\widehat{E'_c}^K)'_c=(E^*_k)^*_k,$ i.e. $K\in \gamma((E^*_k)^*_k)$. $\gamma(E)=\gamma((E'_c)'_c)$ is a reformulation of $E'_c\simeq ((E'_c)'_c)'_c.$
\end{proof}

We consider $\gamma-\Kc$ the full subcategory of $\Kc$ with their $\gamma$-topology, and  $\gamma$-\textbf{Kb} the full subcategory of $\LCS$ made of spaces of the form $E'_c$ with $E$ $k$-quasi-complete.

We first summarize the results of \cite{Schwartz}. We call $\gamma-\LCS\subset \LCS$ the full subcategory of spaces having their $\gamma$-topology, namely $E=(E'_c)'_c$.  This is equivalent  to saying that subsets of absolutely convex compact sets in $E'_c$ are (or equivalently are exactly the) equicontinuous subsets in $E'$. With the notation of Theorem \ref{FirstMALL}, this can be reformulated by an intertwining relation in $\textbf{CLCS}$ which explains the usefulness  of these spaces :
\begin{equation}\label{gammaviaCLCS}  E\in \gamma-\LCS \Leftrightarrow (E'_c)_c=(E_c)'_b\Leftrightarrow (E_c)'_b=[U((E_c)'_b)]_c\end{equation}

\begin{proposition}
\label{lem:Kc_epsilon}
$k$-quasi-complete spaces are stable by $\varepsilon$-product, and $(\Kc,\varepsilon,\K)$ form a symmetric monoidal category . Moreover, if $E,F$ are $k$-quasi-complete, a set in $E\varepsilon F$ is relatively compact if and only if it is $\varepsilon$-equihypocontinuous. 
Therefore we have canonical embeddings: $$E'_c\o_{\beta e} F'_c\to (E\varepsilon F)'_c\to E'_c\widehat{\o}^K_{\beta e} F'_c.$$
\end{proposition}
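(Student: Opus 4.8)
The plan is to prove the relative compactness characterization first, since stability under $\varepsilon$, the monoidal structure and the two embeddings all follow from it. Throughout I use $E\varepsilon F=(E'_c\o_{\beta e}F'_c)'=L(E'_c,F)$, whose topology is uniform convergence on products $A\times B$ of absolutely convex weakly compact equicontinuous sets $A\subset E'$, $B\subset F'$, together with the standard fact that on such equicontinuous sets the weak and the $\gamma$-topologies coincide. The characterization splits into two directions of unequal difficulty, and only one of them uses $k$-quasi-completeness.

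For the direction ``relatively compact $\Rightarrow$ $\varepsilon$-equihypocontinuous'', which is where $k$-quasi-completeness is essential, I would fix an equicontinuous $A\subset E'$, so that $A$ is compact in $E'_c$, and consider the evaluation $E\varepsilon F\times E'_c\to F$, $(h,a)\mapsto\hat h(a)$. This map is continuous on $\overline H\times A$ because the topology of $E\varepsilon F$ is uniform convergence on the equicontinuous set $A$; hence $\{\hat h(a):h\in H,\ a\in A\}$ is the image of the compact set $\overline H\times A$, thus compact in $F$. Now $k$-quasi-completeness of $F$ (Definition \ref{k-complete}) makes its closed absolutely convex cover a compact set $C\subset F$, so $H(A\times C^\circ)$ is bounded, i.e. $\{\hat h(a):h\in H,\ a\in A\}$ is equicontinuous on $F'_c$; the symmetric bound follows from $k$-quasi-completeness of $E$. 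This is precisely $\varepsilon$-equihypocontinuity of $H$, and it is the exact counterpart of the heuristic of this section's introduction: a continuous bilinear image of a compact set is compact, and one needs its absolutely convex cover to remain compact.

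For the converse ``$\varepsilon$-equihypocontinuous $\Rightarrow$ relatively compact'', which does not need $k$-quasi-completeness, I would restrict an equihypocontinuous $H$ to a product $A\times B$: equihypocontinuity together with weak $=\gamma$ on equicontinuous sets makes $H|_{A\times B}$ an equicontinuous, pointwise bounded family, hence relatively compact in $C^0(A\times B)$ by Arzel\`a--Ascoli. A universal subnet then converges, uniformly on every such product, to a bilinear form $h$; since the equihypocontinuity bounds pass to the pointwise limit, $h(a,\cdot)$ is bounded on a genuine $\gamma$-neighbourhood, hence lies in $(F'_c)'=F$, and symmetrically in the other variable, so $h$ is $\varepsilon$-hypocontinuous, i.e. $h\in E\varepsilon F$, with $h_\alpha\to h$ in the topology of $E\varepsilon F$. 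Granting both directions, stability is immediate: for $K\subset E\varepsilon F$ compact, $K$ is $\varepsilon$-equihypocontinuous by the first direction, this property passes to the closed absolutely convex cover $\overline{\Gamma(K)}$ since the bounds involve an absolutely convex $W$, and the second direction makes $\overline{\Gamma(K)}$ relatively compact, hence compact. Thus $E\varepsilon F\in\Kc$, and $(\Kc,\varepsilon,\K)$ is symmetric monoidal with $\K$ as unit, the transposition isomorphism as braiding, and an associator obtained by transcribing Schwartz's argument for the triple product $\varepsilon_i E_i$, which invokes quasi-completeness only through absolutely convex covers of compacts remaining compact, i.e. exactly $k$-quasi-completeness.

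Finally the embeddings. The first map $E'_c\o_{\beta e}F'_c\to(E\varepsilon F)'_c$ is the canonical inclusion into the bidual; it is an embedding since by \cite[\S 21.3.(2)]{Kothe} the topology of $E'_c\o_{\beta e}F'_c$ is that of uniform convergence on the equicontinuous subsets of $E\varepsilon F$, and by the characterization these equihypocontinuous sets have the same closed absolutely convex compact covers, so they induce the same topology as $(E\varepsilon F)'_c$. For the second map, \cite[Prop 6]{Schwartz} places every equicontinuous subset of $(E\varepsilon F)'$ in the weakly closed absolutely convex hull $\Gamma$ of a product $A\o B$ with $A,B$ equicontinuous; as in Step 1 of Theorem \ref{FirstMALL} such $A\o B$ is precompact in $E'_c\o_{\beta e}F'_c$, so its closed absolutely convex cover is compact in the $k$-quasi-completion and coincides with $\Gamma$ by the bipolar theorem, giving $\Gamma\subset E'_c\widehat{\o}^K_{\beta e}F'_c$ and hence the continuous injection $(E\varepsilon F)'_c\to E'_c\widehat{\o}^K_{\beta e}F'_c$ with induced topology. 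The main obstacle is the first direction of the characterization: extracting from mere relative compactness a genuine absolutely convex compact bound, through the continuity of evaluation and $k$-quasi-completeness.
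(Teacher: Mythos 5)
Your proposal is correct in substance but follows a more classical, analytic route than the paper, which is deliberately categorical. Where you prove the compactness characterization from scratch (continuity of evaluation on $\overline{H}\times A$, then $k$-quasi-completeness to absolutely convexify the compact image; Arzel\`a--Ascoli for the converse), the paper simply cites Schwartz's Prop.\ 2, noting as you do that only one direction needs completeness --- your reconstruction is essentially Schwartz's argument and is sound. The real divergence is in how the symmetric monoidal structure and the embeddings are obtained. You propose to transcribe Schwartz's Prop.\ 7 for associativity and to derive the second embedding from Schwartz's Prop.\ 6 plus precompactness of $A\o B$; the paper instead observes that $(E\varepsilon F)_c=E_c\parr_b F_c$ for $E,F\in\Kc$ (a direct consequence of the compactness characterization) and then transports the entire monoidal structure --- associator, braiding, units, \emph{and all coherence diagrams} --- from the already-established $*$-autonomous category $\CLCS$ of Theorem \ref{FirstMALL} via the adjunction $\widehat{\cdot}^K\circ U\dashv(\cdot)_c$ and Lemma \ref{CategoricStabMonoidal}, and likewise reads off both embeddings from Lemma \ref{Prop2SchwartzCLCS}. (The paper explicitly acknowledges your route: ``We could deduce from \cite{Schwartz} the embeddings, but we prefer seeing them as coming from $\textbf{CLCS}$.'') What the categorical route buys is that one never has to re-examine Schwartz's quasi-completeness hypotheses line by line, and the pentagon and hexagon identities come for free. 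The thinnest point of your version is precisely there: you assert that Schwartz's associativity proof ``invokes quasi-completeness only through absolutely convex covers of compacts remaining compact,'' which is the guiding heuristic of the section but is not verified, and you do not address the coherence axioms at all; to make your proof self-contained you would either need to check Schwartz's Prop.\ 7 under the weakened hypothesis or fall back on the paper's transport argument for that step.
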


\begin{proof}
The characterization of relatively compact sets is \cite[Prop 2 ]{Schwartz}, where it is noted that the direction proving relative compactness does not use any quasi-completeness. It gives $(E\varepsilon F)'_c=(E\varepsilon F)'_\epsilon$ with the epsilon topology as a bidual of $E'_c\o_{\beta e} F'_c$ and in general anyway a continuous linear map: \begin{equation}\label{Prop2SchwartzNonqComp}(E\varepsilon F)'_c\to(E\varepsilon F)'_\epsilon\end{equation}

 For a compact part in $E\varepsilon F$, hence equicontinuous in $(E'_c\o_{\beta e} F'_c)'$, its bipolar is still $\varepsilon$-equihypocontinuous hence compact by the characterization, as we have just explained. This gives stability of $k$-quasi-completeness.

  Associativity of $\varepsilon$ is Schwartz' Prop 7 but we give a reformulation giving a more detailed proof that $(\Kc,\varepsilon)$ is symmetric monoidal. The restriction to $\Kc$ of the functor $(\cdot)_c$ of Theorem \ref{FirstMALL} gives a functor we still call $(\cdot)_c:\Kc\to \textbf{CLCS}$. It has left adjoint $\ \widehat{\cdot}^K\circ U$. Note that for $E,F\in\Kc$,  $E\varepsilon F=\widehat{\cdot}^K\circ U(E_c\parr_b F_c)$ from our previous stability of $\Kc$. Moreover, note that \begin{equation}\label{ReductionKcCLCS}\forall E,F\in \Kc,\ \ \ \ \ (E\varepsilon F)_c=E_c\parr_b F_c\end{equation} thanks to the characterization of relatively compact sets, since the two spaces were already known to have same topology and the bornology on the right was defined as the equicontinuous bornology of $(E'_c\o_{\beta e} F'_c)'$ and on the left the one generated by absolutely convex compact sets or equivalently the saturated bornology generated by compact sets (using $E\varepsilon F\in \Kc$). Lemma \ref{CategoricStabMonoidal} concludes to $(\Kc,\varepsilon,\K)$ symmetric monoidal. They also make $(\cdot)_c$ a strong monoidal functor. 
  
We could deduce from \cite{Schwartz} the embeddings, but we prefer seeing them as coming from $\textbf{CLCS}$. 

Let us apply the next lemma to the embedding of our statement. Note that by definition $E'_c\o_{\beta e} F'_c=U((E_c)'_b\o_H (F_c)'_b)$, and $(E\varepsilon F)'_\varepsilon= U((E_c\parr_b F_c)'_b)=U((E_c)'_b\o_b (F_c)'_b)$ so that we got the embeddings for $E,F\in \LCS$:
\begin{equation}\label{InclusionSchwartz} E'_c\o_{\beta e} F'_c\to (E\varepsilon F)'_\varepsilon\to E'_c\widehat{\o}^K_{\beta e} F'_c
\end{equation}
which specializes to the statement in the $k$-quasi-complete case by the beginning of the proof to identify the middle terms.
\end{proof}
We have used and are going to reuse several times the following:

\begin{lemma}\label{Prop2SchwartzCLCS}
Let $E,F\in \CLCS$ (resp. with $ E,F'_b$ having moreover Schwartz bornologies)  we have the topological embedding (for $U$ the map giving the underlying \lcs):
\begin{equation}\label{InclusionSchwartzCLCS}\qquad\qquad \qquad\ \  U(E\o_H F'_b)\to U(E\o_b F'_b)\to [\ \widehat{\cdot}^K\circ U](E\o_H F'_b).\end{equation}
\begin{equation}\label{InclusionSchwartzCSch}(resp. \qquad\qquad \qquad U(E\o_H F'_b)\to U(E\o_b F'_b)\to [\ \widehat{\cdot}^M\circ U](E\o_H F'_b).\qquad)\end{equation}
%Moreover, if bounded sets in $E,F'_b$ are contained in bipolars of Mackey-null sequences for the bornologies of $E,F'_b$ respectively, then $U(E\o_b F'_b)\subset [\ \widehat{\cdot}^M\circ U](E\o F'_b).$
\end{lemma}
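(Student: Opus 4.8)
The plan is to observe that all three spaces in the chain live as subspaces of one ambient space, namely the completion $\widetilde{U(E\o_H F'_b)}$, each carrying the induced topology. Indeed, from Step 1 of Theorem \ref{FirstMALL} we know that $U(E\o_b F'_b)\simeq U(E\widehat{\o}_H F'_b)$ is the union of the bipolars $\overline{\Gamma(A\o C)}$ (closures taken in the completion) of the generating bounded sets, equipped with the topology induced from $\widetilde{U(E\o_H F'_b)}$; and by Lemma \ref{deltacompletion} (resp. Lemma \ref{gammacompletion}) the $k$-quasi-completion $\widehat{\cdot}^K$ (resp. the Mackey completion $\widehat{\cdot}^M$) is also built as a subspace of that same completion with its induced topology. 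Since a space always embeds into its completion as a topological subspace, once the \emph{set-theoretic} inclusions $U(E\o_H F'_b)\subseteq U(E\o_b F'_b)\subseteq [\widehat{\cdot}^{K}\circ U](E\o_H F'_b)$ (resp.\ with $\widehat{\cdot}^M$) are established inside $\widetilde{U(E\o_H F'_b)}$, every map in the chain is the restriction of the identity and is automatically a topological embedding, the composite being the canonical embedding into the completion. The first inclusion is immediate, as the algebraic tensor product is covered by the bipolars generating the bornology, so the whole content is the second inclusion.

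The key technical point I would isolate first is that the canonical bilinear map $\phi:E\times F'_b\to E\o_H F'_b$, which is $B_E,B_{F'_b}$-hypocontinuous by definition of $\o_H$, is \emph{continuous on products of compact sets}. This follows by a direct $\varepsilon/3$-type argument: writing $\phi(a,c)-\phi(a_0,c_0)=\phi(a-a_0,c)+\phi(a_0,c-c_0)$ and applying the two hypocontinuity conditions relative to the bounded sets $K_1\ni a_0$ and $K_2\ni c_0$ produces the required $0$-neighbourhoods, exactly as in the precompactness computation of Step 1 but now yielding genuine continuity on $K_1\times K_2$. Since $E,F'_b\in\CLCS$ are compactologies, $\overline{A}$ and $\overline{C}$ are compact and bounded, so $\phi(\overline{A}\times\overline{C})$ is a \emph{compact} subset of $E\o_H F'_b$ containing $A\o C$. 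For the $k$-version this finishes the argument at once: by monotonicity $\overline{\Gamma(A\o C)}\subseteq\overline{\Gamma(\phi(\overline{A}\times\overline{C}))}$, and the latter closed absolutely convex cover (in the completion) of a compact set of $U(E\o_H F'_b)$ is exactly one of the sets adjoined at the first step of the $k$-quasi-completion in Lemma \ref{deltacompletion}; hence $\overline{\Gamma(A\o C)}\subseteq [\widehat{\cdot}^K\circ U](E\o_H F'_b)$, and taking the union over all $A,C$ gives the desired inclusion.

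For the Mackey version under the Schwartz-bornology hypothesis, the plan is to replace compact covers by absolutely convex covers of null sequences. By Lemma \ref{SchwartzBorno} one may write $A\subseteq\overline{\Gamma(\{a_n\})}$ and $C\subseteq\overline{\Gamma(\{c_m\})}$ for $B_E$- (resp. $B_{F'_b}$-)null sequences, with $\|a_n\|_B\le\delta_n\to 0$ and $\|c_m\|_{B'}\le\epsilon_m\to 0$; these covers are compact, being closed subsets of the compact bipolars $B^{oo},\,(B')^{oo}$. The doubly-indexed family $\{a_n\o c_m\}$ satisfies $\|a_n\o c_m\|_{(B\o B')^{oo}}\le\delta_n\epsilon_m$, and since $\{(n,m):\delta_n\epsilon_m>\eta\}$ is finite for each $\eta>0$, it enumerates to a Mackey-null sequence in $U(E\o_H F'_b)$, so $\overline{\Gamma(\{a_n\o c_m\})}\subseteq [\widehat{\cdot}^M\circ U](E\o_H F'_b)$ by Lemma \ref{gammacompletion}. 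The main obstacle is then the remaining passage to the limit: showing $A\o C\subseteq\overline{\Gamma(\{a_n\o c_m\})}$. Here hypocontinuity alone is insufficient, which is precisely why the continuity of $\phi$ on products of compact sets is essential: writing $a=\lim\alpha_\delta$, $c=\lim\beta_\delta$ with $\alpha_\delta,\beta_\delta$ finite absolutely convex combinations inside the compact sets $\overline{\Gamma(\{a_n\})}$, $\overline{\Gamma(\{c_m\})}$, continuity of $\phi$ on that compact product yields $a\o c=\lim\alpha_\delta\o\beta_\delta$ with each $\alpha_\delta\o\beta_\delta\in\Gamma(\{a_n\o c_m\})$, whence $a\o c\in\overline{\Gamma(\{a_n\o c_m\})}$. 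Taking closed absolutely convex covers and unions over all $A,C$ concludes the inclusion, and hence the topological embeddings in both the $\CLCS$ and the Schwartz-bornology cases.
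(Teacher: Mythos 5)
Your proof is correct and follows essentially the same route as the paper's: both reduce the statement to set-theoretic inclusions inside the completion of $U(E\o_H F'_b)$, use the Step-1 description of $E\o_b F'_b$ as the union of bipolars of $A\o C$ with the induced topology, and identify those bipolars with sets adjoined at the first step of the $k$-quasi- (resp.\ Mackey) completion of $U(E\o_H F'_b)$. The only divergence is one of detail: where the paper invokes the precompactness of $\overline{A}\o\overline{C}$ already established in Step 1 and leaves the containment $(A\o C)^{oo}\subseteq\{a_n\o c_m\}^{oo}$ implicit in the Schwartz case, you re-derive compactness of the image via continuity of the hypocontinuous canonical map on products of compact sets and spell out that containment explicitly --- a legitimate filling-in of a step the paper glosses over.
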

\begin{proof}
Recall that for $E,F\in \textbf{CLCS}$, $E\o_H F$ has been defined before the proof of Theorem \ref{FirstMALL} and is the algebraic tensor product. Let us explain that, even before introducing the notion of $k$-quasi-completion, we already checked the result of the statement.
By construction we saw $(E\o_b F'_b)'_b=E'_b\parr_bF=L_b(E,F)=(E\o F'_b)'_b$, hence by $*$-autonomy $E\o_b F'_b=((E\o_H F'_b)'_b)'_b=(L_b(E,F))'_b$
and it has been described as a subspace $E\widehat{\o}_H F'_b$ inside the completion (in step 1 of this proof) with induced topology, obtained as union of bipolars of $A\o B$ or $\overline{A}\o \overline{B}$ (image of the product), for $A$ bounded in $E$, $B$ bounded in $F'_b$. Hence the embeddings follows from  the fact we checked $\overline{A}\o \overline{B}$ is precompact, and of course closed %[For my memory if someone asks ?]since \bar{A}, \bar B are compact for a filter x_n\o y_n\to z along ultrafilters x_n\to x, y_n\to y i.e in A\times B (x_n,y_n)\to (x,y) but by hypocontinuity x_n-x\o B is in neighborhood zero and so is x\o y_n-y hence x_n\o y_n\to x\o y which must be z which is thus in the image.
in the completion hence compact and the bipolar is one of those appearing in the first step of the inductive description of the $k$-quasicompletion.

For the case $E,F'_b$ having Schwartz bornologies, bounded sets are of the form $A\subset\overline{\Gamma(x_n,n\in \N)}, B\subset\overline{\Gamma(y_m,m\in \N)}$ with $(x_n),(y_m)$ Mackey-null in their respective bornologies. Take $C,D$ absolutely convex precompact sets bounded in the respective bornologies with $||x_n||_C\to 0, ||y_m||_D\to 0$, hence $||x_n\o y_m||_{(C\o D)^{oo}}\leq ||x_n||_C||y_m||_D$ and since we checked in the proof of Theorem \ref{FirstMALL} that  $(C\o D)^{oo}$ is precompact hence bounded, $x_n\o y_m$ can be gathered in a Mackey-null sequence has the one whose bipolar appears in the first term of the Mackey-completion.
\end{proof}
We have also used the elementary categorical lemma:
\begin{lemma}\label{CategoricStabMonoidal}
Let $(\mathcal{C},\o_\mathcal{C},I)$ a symmetric monoidal category and $\mathcal{D}$ a category. Consider a functor  $R:\mathcal{D}\to \mathcal{C}$ with left adjoint $L:\mathcal{C}\to \mathcal{D}$ and define $J=L(I)$, and $E\o_\mathcal{D} F=L(R(E)\o_\mathcal{C}R(F)).$ Assume that   for any $E,F\in \mathcal{D}$, $L(R(E))=E$, $R(J)=I$ and
$$R(E\o_\mathcal{D} F)=R(E)\o_\mathcal{C}R(F).$$
Then, $(\mathcal{D},\o_\mathcal{D},J)$ is a symmetric monoidal category.
\end{lemma}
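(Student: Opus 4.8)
The plan is to transport the symmetric monoidal structure of $\mathcal{C}$ along $R$, after observing that the hypotheses force $R$ to be fully faithful and strictly compatible with the tensor products and units.

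First I would extract the categorical content of $L(R(E)) = E$. This equality says the counit $\epsilon\colon LR\to \mathrm{Id}_\mathcal{D}$ of the adjunction $L\dashv R$ is the identity, and since a right adjoint is fully faithful exactly when its counit is an isomorphism, $R$ is fully faithful. Feeding $\epsilon = \mathrm{id}$ into the triangle identity $R\epsilon\circ \eta R=\mathrm{id}_R$ shows that the unit $\eta$ is the identity on every object of the form $R(X)$. By the hypothesis $R(E\o_\mathcal{D} F)=R(E)\o_\mathcal{C} R(F)$, every object $R(E)\o_\mathcal{C} R(F)$ is of this form, so $\eta$ is the identity there; naturality of $\eta$ applied to $\phi=R(f)\o_\mathcal{C} R(g)$ then yields $\phi = RL(\phi)$, that is $R(f\o_\mathcal{D} g)=R(f)\o_\mathcal{C} R(g)$ on morphisms as well. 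Together with $R(J)=I$, this makes $R$ strictly compatible with the two tensor bifunctors and their units.

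Next I would define the structural isomorphisms of $(\mathcal{D},\o_\mathcal{D},J)$ by pulling back those of $\mathcal{C}$. For the associator, the strict compatibility gives $R((E\o_\mathcal{D} F)\o_\mathcal{D} G)=(R(E)\o_\mathcal{C} R(F))\o_\mathcal{C} R(G)$ and $R(E\o_\mathcal{D}(F\o_\mathcal{D} G))=R(E)\o_\mathcal{C}(R(F)\o_\mathcal{C} R(G))$, so the associator of $\mathcal{C}$ is a morphism between the two $R$-images; full faithfulness produces a unique $\mathcal{D}$-morphism mapping onto it, which I take as $a_{E,F,G}$. The same recipe defines the unitors (using $R(J)=I$) and the braiding from the corresponding data in $\mathcal{C}$. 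Because a fully faithful functor reflects isomorphisms, each of these is an isomorphism, and naturality in each variable follows by applying the faithful $R$ to the candidate naturality squares and reading off the valid naturality squares in $\mathcal{C}$.

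Finally I would check the coherence axioms: pentagon, triangle, the two hexagons, and the symmetry relation $\sigma_{F,E}\circ\sigma_{E,F}=\mathrm{id}$. Each is an equation between morphisms of $\mathcal{D}$ built from the transported isomorphisms; applying $R$ turns it, via the strict compatibility, into precisely the corresponding axiom in $\mathcal{C}$, which holds because $\mathcal{C}$ is symmetric monoidal, and faithfulness of $R$ reflects the equation back to $\mathcal{D}$. The one step that is not purely formal is the promotion, in the first paragraph, of the object-level hypothesis to a functorial, morphism-level compatibility; once that is in hand, $R$ is a fully faithful strict symmetric monoidal functor onto a monoidal subcategory of $\mathcal{C}$, and the transport of structure together with the verification of all axioms is automatic.
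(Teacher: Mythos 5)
Your proof is correct, and it is the paper's transport-of-structure argument run in the opposite direction. The paper defines the associator, unitors and braiding by pushing those of $\mathcal{C}$ forward along $L$, e.g. $Ass^{\o_\mathcal{D}}_{E,F,G}=L(Ass^{\o_\mathcal{C}}_{R(E),R(F),R(G)})$, and transports the coherence diagrams by functoriality of $L$; you define them by reflecting along the (fully faithful) $R$ and transport the diagrams by faithfulness of $R$. Since you show that $RL$ acts as the identity on morphisms between objects of the form $R(E)\o_\mathcal{C}R(F)$, the two definitions produce literally the same structural morphisms, so the difference is one of bookkeeping. What your write-up genuinely adds is the first paragraph: the promotion of the object-level hypotheses to the statement that $R$ is fully faithful and strict on tensors of morphisms, i.e. $R(f\o_\mathcal{D}g)=R(f)\o_\mathcal{C}R(g)$. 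This step is not optional in either version of the argument --- the paper's pentagon verification needs, for instance, $L(Ass^{\o_\mathcal{C}}\o_\mathcal{C}\mathrm{id})=L(Ass^{\o_\mathcal{C}})\o_\mathcal{D}\mathrm{id}$, which is exactly the identity $RL(\phi)=\phi$ you derive from the triangle identities --- and the paper leaves it implicit, so your version fills a real gap in the written proof. The one caveat is that you read the hypothesis $L(R(E))=E$ as saying the counit is the identity morphism rather than merely an equality of objects; this is the intended reading (it holds in all the paper's applications of the lemma), and the paper's own proof makes the same tacit assumption.
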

\begin{proof}
The associator is obtained as $Ass^{\o_\mathcal{D}}_{E,F,G}=L( Ass^{\o_\mathcal{C}}_{R(E),R(F),R(G)})$ and the same intertwining defines the braiding and units and hence transports the relations which concludes. For instance in the pentagon we used the relation $L(Ass^{\o_\mathcal{C}}_{R(E),R(F)\o_\mathcal{C}R(G),R(H)})=Ass^{\o_\mathcal{D}}_{E,F\o_\mathcal{D}G,H}$.
\end{proof}

We deduce a description of internal hom-sets in these categories : we write  $L_{co}(E,F)$, the space of all continuous linear maps from $E$ to $F$ endowed with the topology of uniform convergence on compact subsets of $E$. When $E$ is a $k$-quasi-complete space, note this is the same \lcs as $L_c(E,F)$, endowed with the topology of uniform convergence on absolutely convex compacts of $E$.

\begin{corollary} 
\label{prop:defs_epsilon_coincide}
For $E \in \gammaKc$ and $F \in \Kc$ (resp. $F \in \Mc$), one has $L_c (E'_c, F) \simeq E \varepsilon F$, which is $k$-quasi-complete (resp. Mackey-complete).
\end{corollary}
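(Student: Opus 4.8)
The plan is to decouple the topological identification $L_c(E'_c,F)\simeq E\varepsilon F$, which is purely a manifestation of $E$ carrying its $\gamma$-topology, from the completeness assertion, which I would treat separately in the two cases.

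For the isomorphism, I would read both sides through the functorial descriptions of Theorem \ref{FirstMALL}(4). There $L_c(E'_c,F)=U(L_b((E'_c)_c,F_c))$, while $E\varepsilon F=U(E_c\parr_b F_c)=U(L_b((E_c)'_b,F_c))$; thus the two spaces are built from the same internal Hom $L_b(-,F_c)$ applied to the same underlying locally convex space $E'_c$, equipped on one side with its absolutely convex compact bornology $(E'_c)_c$ and on the other with its equicontinuous bornology $(E_c)'_b$. Since $E\in\gammaKc$ lies in $\gamma$-$\LCS$, the intertwining relation \eqref{gammaviaCLCS} says precisely that $(E'_c)_c=(E_c)'_b$, i.e. the absolutely convex compact subsets of $E'_c$ are exactly the equicontinuous subsets of $E'$. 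Hence the two compactological arguments coincide, their $L_b$'s coincide, and applying $U$ gives $L_c(E'_c,F)\simeq E\varepsilon F$ for any $F\in\LCS$. The only facts used here are that $E$ has its $\gamma$-topology and is $k$-quasi-complete, so that these equicontinuous sets are genuinely compact and we stay inside $\CLCS$.

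It then remains to establish completeness of this common space. When $F\in\Kc$, we also have $E\in\gammaKc\subset\Kc$, so Proposition \ref{lem:Kc_epsilon} applies directly and yields $E\varepsilon F\in\Kc$. When $F\in\Mc$, I would reprove Mackey-completeness by hand, following the pattern of Proposition \ref{Meta}. Realizing $E\varepsilon F$ as a space of continuous linear maps $E'_c\to F$, I would start from a Mackey-Cauchy sequence $(u_n)$; since evaluation at a fixed point of $E'_c$ is continuous, hence bounded, $(u_n(x'))$ is Mackey-Cauchy in $F$ and, by Mackey-completeness of $F$, Mackey-converges to some $u(x')$, defining a linear map $u:E'_c\to F$. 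Being Cauchy in a Banach disk, $(u_n)$ is precompact in $E\varepsilon F$, so by the compactness-to-equihypocontinuity direction of Schwartz's characterization (\cite[Prop 2]{Schwartz}, also recalled in Proposition \ref{lem:Kc_epsilon}), which uses no completeness, $\{u_n\}$ is $\varepsilon$-equihypocontinuous, hence an equicontinuous family of maps $E'_c\to F$. A pointwise limit of an equicontinuous family is continuous, so $u\in E\varepsilon F$; and passing to the pointwise limit in $u_n-u_m\in\delta D$ for the bounding disk $D$ (taken closed) upgrades the convergence to Mackey-convergence in $E\varepsilon F$, using remark \ref{MackeyInclusion} to transport Cauchyness along the continuous inclusions.

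The main obstacle is exactly this last case. Unlike the $\eta$-product, where membership only requires the weak condition of separate continuity (Proposition \ref{etageneral}(5)) and is therefore stable under pointwise limits almost for free, the $\varepsilon$-product demands genuine hypocontinuity of the limit form, which does not survive a bare pointwise passage to the limit. The crux is thus to promote the pointwise Mackey-limit back into $E\varepsilon F$, and the key that makes it work is that a Mackey-Cauchy sequence is precompact while precompact (indeed relatively compact) sets in the $\varepsilon$-product are equihypocontinuous, so equicontinuity of the family forces continuity of the limit. By contrast, the topological isomorphism is essentially a restatement of the defining property of $\gamma$-$\LCS$, and the $k$-quasi-complete case is immediate from Proposition \ref{lem:Kc_epsilon}.
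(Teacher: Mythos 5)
Your identification $L_c(E'_c,F)\simeq E\varepsilon F$ is correct and is the same observation as the paper's (the paper's entire proof is the one-line remark that, $E$ having its $\gamma$-topology, uniform convergence on equicontinuous subsets of $E'$ is uniform convergence on absolutely convex compact subsets of $E'_c$; your detour through \eqref{gammaviaCLCS} and $L_b$ is just a categorical restatement of this). The $F\in\Kc$ case via Proposition \ref{lem:Kc_epsilon} is also fine.

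The Mackey-complete case, however, has a genuine gap. You invoke ``the compactness-to-equihypocontinuity direction of Schwartz's characterization \dots which uses no completeness'', but this is backwards: the direction that is noted (in \cite[Prop 2]{Schwartz} and in the proof of Proposition \ref{lem:Kc_epsilon}) to require no quasi-completeness is \emph{equihypocontinuous} $\Rightarrow$ \emph{relatively compact}. The direction you need, (pre)compact $\Rightarrow$ $\varepsilon$-equihypocontinuous, is only stated for $E$ \emph{and} $F$ both $k$-quasi-complete, and its proof genuinely uses this: equihypocontinuity demands that the sets $\bigcup_n u_n(A)$ (resp.\ $\bigcup_n u_n^t(B)$) be contained in \emph{absolutely convex compact} subsets of $F$ (resp.\ $E$), and an Ascoli-type argument applied to a merely totally bounded family $\{u_n\}$ (it is only totally bounded, not relatively compact --- the disk in which $(u_n)$ is Cauchy is not yet known to be a Banach disk, since that is exactly what you are trying to prove) only yields totally bounded images, which in a space that is merely Mackey-complete need not sit inside any absolutely convex compact set. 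So the equicontinuity of $\{u_n\}$, which is the crux of your passage to the limit, is not established.

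The step can be repaired without any hand computation: by Proposition \ref{etageneral}(1), $E\varepsilon F$ is a closed (topological) subspace of $E\eta F$; since $E\in\gammaKc$ is $k$-quasi-complete, hence Mackey-complete, and $F\in\Mc$, Proposition \ref{Meta} gives that $E\eta F$ is Mackey-complete; and a closed subspace of a Mackey-complete space is Mackey-complete (a Mackey-Cauchy sequence in the subspace Mackey-converges in the ambient space, its limit lies in the closed subspace, and the convergence holds in the trace of the closed bounded disk). This avoids entirely the need to show that the pointwise limit is continuous for the Arens (rather than Mackey) topology, which is precisely the point where the $\varepsilon$- and $\eta$-products differ.
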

 
 \begin{proof}
Algebraically% since $E,F \in \Kc$
, $E \varepsilon F=L(E'_c,F)%= L(\whidehat{E'_c}^M,F)=L(E^*_k,F)
$
%the commented part is OK algebraically but not topologically, there are more compacts after completion
 and the first space is endowed with the topology of uniform convergence on equicontinuous sets in $E'_c$ which coincides with subsets of absolutely convex compact sets since $E$ has its $\gamma$-topology. 
 \end{proof}
Using that for $E\in \gamma$-\textbf{Kb}, $E=F'_c$ for $F\in \Kc$, hence $E'_c=(F'_c)'_c\in\Kc$ by lemma \ref{kduality}.
\begin{corollary} 
\label{prop:k-compl-lin}
Consider $E\in\gamma$-\textbf{Kb} , $F \in \Kc$ (resp. $F \in \Mc$) then  $L_c(E,F)$ is $k$-quasi-complete (resp. Mackey-complete). 
\end{corollary}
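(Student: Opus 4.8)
The plan is to reduce the statement to the already established Corollary \ref{prop:defs_epsilon_coincide} by realizing $L_c(E,F)$ as a Schwartz $\varepsilon$-product. By definition of $\gamma$-\textbf{Kb}, I may write $E=G'_c$ for some $k$-quasi-complete space $G$. I then want to exhibit a space $H\in\gammaKc$ with $H'_c\simeq E$ topologically, so that $L_c(E,F)=L_c(H'_c,F)\simeq H\varepsilon F$ falls exactly under Corollary \ref{prop:defs_epsilon_coincide}, which immediately gives $k$-quasi-completeness (resp. Mackey-completeness) of the resulting $\varepsilon$-product.

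First I would take $H=E'_c$ as the natural candidate. Since $G\in\Kc$, Lemma \ref{kduality} gives that $E'_c=(G'_c)'_c$ is again $k$-quasi-complete, so $H\in\Kc$ (this is precisely the stability remark made just before the statement). Moreover $H=E'_c$ is an Arens dual, hence carries its $\gamma$-topology by Schwartz's relation $E'_c\simeq((E'_c)'_c)'_c$ recalled in the preliminaries; therefore $H\in\gammaKc$.

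Next I would check $H'_c\simeq E$ topologically. Since $E=G'_c$ is itself an Arens dual, it too has its $\gamma$-topology, i.e. $E\in\gamma$-$\LCS$, which is exactly the identity $(E'_c)'_c=E$. Thus $H'_c=(E'_c)'_c\simeq E$, with $H\in\gammaKc$ and $F\in\Kc$ (resp. $F\in\Mc$). Applying Corollary \ref{prop:defs_epsilon_coincide} to the pair $(H,F)$ then yields $L_c(E,F)=L_c(H'_c,F)\simeq H\varepsilon F$, which is $k$-quasi-complete (resp. Mackey-complete), as desired.

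There is no genuine obstacle here, as the real content is carried by Corollary \ref{prop:defs_epsilon_coincide} together with the duality stability $E'_c\in\Kc$ supplied by Lemma \ref{kduality}. The only points requiring care are the two topological identifications $H=E'_c\in\gammaKc$ and $H'_c\simeq E$, both of which rest on the single fact that an Arens dual automatically carries its $\gamma$-topology; this is what lets one pass freely between viewing the first slot of $L_c(\,\cdot\,,F)$ as $E$ and as $H'_c$.
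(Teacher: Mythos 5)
Your argument is correct and is essentially the paper's own proof, just spelled out: the paper's one-line justification writes $E=G'_c$ with $G\in\Kc$, notes $E'_c=(G'_c)'_c\in\Kc$ via Lemma \ref{kduality}, and then invokes Corollary \ref{prop:defs_epsilon_coincide} exactly as you do, with the $\gamma$-topology identities $E'_c\simeq((E'_c)'_c)'_c$ supplying the identifications $H=E'_c\in\gammaKc$ and $H'_c\simeq E$. Nothing is missing.
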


\begin{proposition}
\label{prop:adj_Kc_gammaKc}
 $\gamma-\Kc\subset \Kc$ is a coreflective subcategory   with coreflector (right adjoint to inclusion)
$((\cdot)'_c)'_c$, %:\Kc\to \gamma-\Kc$.
which commutes with $\ \widehat{\cdot}^K$ on $\gamma$-\textbf{Kb}.    For $F\in \gamma-\Kc$,
$\cdot\ \widehat{\o}^K_\gamma F'_c: \LCS\to \Kc$ (resp. $\Kc\to \Kc,\gamma-\Kc\to \gamma-\Kc$) is left adjoint to $F\varepsilon\ \cdot$ (resp. $F\varepsilon\ \cdot,((F\varepsilon\ \cdot)'_c)'_c$). More generally, for $F\in \Kc$,$\cdot\ \widehat{\o}^K_{\gamma,\varepsilon} F'_c: \LCS\to \Kc$ %(resp. $\Kc\to \Kc,\gamma-\Kc\to \gamma-\Kc$)
 is left adjoint to $F\varepsilon\ \cdot.$ Finally, $\gamma$-\textbf{Kb} is stable by $\widehat{\o}^K_\gamma.$
\end{proposition}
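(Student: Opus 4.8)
The plan is to treat the several assertions in sequence, reducing each to universal properties already available: functoriality of the Arens dual together with Schwartz's relation $((E'_c)'_c)'_c=E'_c$ from \eqref{Ec} and the discussion after it, the identification $F\varepsilon G\simeq L_c(F'_c,G)$ of Corollary~\ref{prop:defs_epsilon_coincide}, the universal property of $\o_\gamma$, and the universal property of the $k$-quasi-completion (a continuous map into a $k$-quasi-complete target extends uniquely from a space to its $k$-quasi-completion). First I would set up the coreflection. For $E\in\Kc$ the space $(E'_c)'_c$ is $k$-quasi-complete by Lemma~\ref{kduality}, and it carries its $\gamma$-topology since $((E'_c)'_c)'_c=E'_c$ forces $(E'_c)'_c=(((E'_c)'_c)'_c)'_c$; hence $((\cdot)'_c)'_c$ indeed lands in $\gammaKc$. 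The counit is the canonical continuous identity $(E'_c)'_c\to E$ (same dual, Lemma~\ref{kduality}). For the adjunction $\Kc(G,E)\simeq\gammaKc(G,(E'_c)'_c)$ with $G\in\gammaKc$, a continuous $f\colon G\to E$ has a continuous Arens transpose $f^t\colon E'_c\to G'_c$, hence a continuous bitranspose $f^{tt}\colon (G'_c)'_c\to (E'_c)'_c$; since $G=(G'_c)'_c$ this is exactly $f$ viewed into the coreflection, and post-composing with the counit returns $f$. Naturality is routine.

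For the commutation with $\widehat{\cdot}^K$ on $\gamma$-\textbf{Kb}: an Arens dual always carries its $\gamma$-topology, so the functor $((\cdot)'_c)'_c$ restricts to the identity on $\gamma$-\textbf{Kb}; and for $X\in\gamma$-\textbf{Kb} the completion $\widehat{X}^K$ equals $E^*_k$, which is $k$-reflexive, hence already in $\gammaKc$, by Lemma~\ref{kduality}, so the coreflector fixes it. Thus $((\widehat{X}^K)'_c)'_c=\widehat{X}^K=\widehat{((X)'_c)'_c}^K$, both sides being $\widehat{X}^K$.

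The core is the adjunction for $F\in\gammaKc$. I would exhibit the natural chain
$$\Kc(E\,\widehat{\o}^K_\gamma F'_c,\,G)\simeq\LCS(E\o_\gamma F'_c,\,G)\simeq\LCS(E,\,L_c(F'_c,G))\simeq\LCS(E,\,F\varepsilon G),$$
for $E\in\LCS$, $G\in\Kc$. The first isomorphism combines the universal property of $\widehat{\cdot}^K$ (the target $G$ is $k$-quasi-complete) with that of $\o_\gamma$, identifying the left-hand maps with $\gamma$-hypocontinuous bilinear maps $b\colon E\times F'_c\to G$; the last isomorphism is Corollary~\ref{prop:defs_epsilon_coincide}, valid since $F\in\gammaKc$. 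The middle isomorphism — currying such a $b$ into a continuous linear map $\widehat b\colon E\to L_c(F'_c,G)$ — is the heart of the matter and the main obstacle. Continuity of $\widehat b$ encodes exactly the estimate $b(V_E\times A_{F'})\subset W$ of $\gamma$-hypocontinuity on compacts $A_{F'}\subset F'_c$. Conversely the remaining estimate $b(A_E\times V_{F'})\subset W$, for $A_E$ absolutely convex compact in $E$, amounts to equicontinuity of the family $\widehat b(A_E)$. This is where I would invoke Proposition~\ref{lem:Kc_epsilon}: since $\widehat b(A_E)$ is compact in $L_c(F'_c,G)\simeq F\varepsilon G$, it is $\varepsilon$-equihypocontinuous, hence an equicontinuous family of maps $F'_c\to G$, which supplies precisely the missing side of the hypocontinuity. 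This closes the loop and gives the adjunction.

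Finally the remaining assertions follow by assembly. The version $\Kc\to\Kc$ is the restriction of the chain to $E\in\Kc$, and the version $\gammaKc\to\gammaKc$ is obtained by composing the preceding adjunction with the coreflection adjunction of the first paragraph, which produces the right adjoint $((F\varepsilon\,\cdot)'_c)'_c$ and corrects the left adjoint by the coreflector. For $F\in\Kc$ only, one has $F\varepsilon G=L(F'_c,G)$ with the topology of uniform convergence on equicontinuous subsets of $F'$ rather than on compacts of $F'_c$; the same argument then carries the $\varepsilon$-bornology on the second factor, so $\o_\gamma$ is replaced by $\o_{\gamma,\varepsilon}$ and the currying is again unlocked by Proposition~\ref{lem:Kc_epsilon}, yielding the adjunction for $\cdot\,\widehat{\o}^K_{\gamma,\varepsilon}F'_c$. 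For stability of $\gamma$-\textbf{Kb} under $\widehat{\o}^K_\gamma$, I would use that every $X\in\gamma$-\textbf{Kb} equals $G'_c$ with $G=(X'_c)'_c\in\gammaKc$ (Lemma~\ref{kduality}); on such a $G'_c$ the absolutely convex compact bornology coincides with the equicontinuous bornology, so $\widehat{\o}^K_\gamma$ agrees with $\widehat{\o}^K_{\beta e}$, and Proposition~\ref{lem:Kc_epsilon} identifies $G'_c\,\widehat{\o}^K_{\beta e}\,H'_c$ with $(G\varepsilon H)'_c$. Since $G\varepsilon H\in\Kc$ by Proposition~\ref{lem:Kc_epsilon}, this is the Arens dual of a $k$-quasi-complete space, i.e.\ again in $\gamma$-\textbf{Kb}.
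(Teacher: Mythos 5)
Your proof is correct and follows essentially the same route as the paper: the heart of the paper's argument for the tensor adjunction is precisely your currying step, with the missing half of the $\gamma$-hypocontinuity supplied by the characterization of relatively compact subsets of $F\varepsilon G$ as $\varepsilon$-equihypocontinuous sets from Proposition~\ref{lem:Kc_epsilon} (the paper then adds a second, more abstract derivation of the same adjunction by composing the adjunctions of Theorem~\ref{FirstMALL} through $\CLCS$, which you do not need), and the coreflection and commutation-with-completion arguments match as well. The one inaccuracy is in your final paragraph: Proposition~\ref{lem:Kc_epsilon} does not identify $G'_c\,\widehat{\o}^K_{\beta e}\,H'_c$ with $(G\varepsilon H)'_c$, it only embeds the latter densely into the former, which is its $k$-quasi-completion; to conclude that this completed space lies in $\gamma$-\textbf{Kb} you should instead invoke Lemma~\ref{kduality}, which shows that $\widehat{(G\varepsilon H)'_c}^K=(G\varepsilon H)^*_k$ is $k$-reflexive and hence is the Arens dual of its own $k$-quasi-complete Arens dual.
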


 \begin{proof}
(1)  We start by proving the properties of the inclusion $\gamma-\Kc\subset \Kc.$
 Let $E\in\mathbf{Kc}$. We know the continuous map  $(E'_c)'_c\to E$ and  both spaces have the same dual, therefore for $K$ compact in $(E'_c)'_c$
 its closed absolutely convex cover is the same computed in both by the bipolar Thm \cite[\S 20.7.(6) and 8.(5)]{Kothe} and it is complete in $E$ by assumption so that by \cite[IV.5 Rmq 2]{Bourbaki} again also in $(E'_c)'_c$ which is thus $k$-quasi-complete too.
Hence, by functoriality of Arens dual, we got a functor: $((\cdot)'_c)'_c:\Kc\to \gamma-\Kc.$  Then we deduce from functoriality the continuous inverse maps $L(F,E)\to L((F'_c)'_c,(E'_c)'_c)=L(F,(E'_c)'_c)\to L(F,E)$ (for $F\in  \gamma-\Kc, E\in \Kc$) which gives the first adjunction. The unit is $\eta=id$ and counit given by the continuous identity maps: $\varepsilon_E:((E)'_c)'_c\to E$.%All naturality of adjunctions are left to the reader.

(2) Let us turn to proving the commutation property with completion. For $H\in \gamma$-\textbf{Kb}, $H=G'_c=((G'_c)'_c)'_c, G\in \Kc$ we thus have to note that the canonical map  $((\widehat{H}^K)'_c)'_c\to \widehat{H}^K$ is inverse of 
 the map obtained from canonical map $H\to \widehat{H}^K$ by applying functoriality: $H\to(\widehat{H}^K)'_c)'_c$ and then k-quasi-completion (since we saw the target is in $\gamma-\Kc$:) $\widehat{H}^K\to ((\widehat{H}^K)'_c)'_c.$

(3) For the adjunctions of tensor products, let us start with a heuristic computation. Fix $F\in \gamma-Kc, E\in \LCS,G\in \Kc.$
 From the discussion before \eqref{gammaviaCLCS}, $L_{\gamma}(F'_c, G)\simeq  F\varepsilon G$ thus, there is a canonical injection $$\Kc(E\widehat{\o}^K_\gamma F'_c,G)=L(E\o_\gamma F'_c,G)\to L(E,L_{\gamma}(F'_c, G))=L(E,F\varepsilon G).$$
But an element in  $L(E,F\varepsilon G)$
 sends a compact set in $E$ to a compact set in  $F\varepsilon G$ therefore an $\epsilon$-equihypocontinuous set by proposition \ref{lem:Kc_epsilon} which  is a fortiori an equicontinuous set in $L(F'_c,G)$. This gives the missing hypocontinuity to check the injection is onto. 
 
 Let us now  give a more abstract alternative proof of the first adjunction. Fix $F\in \gamma-\Kc.$ Let us define $\cdot\ \widehat{\o}^K_\gamma F'_c: \LCS\to \Kc$ as the composition $\ \widehat{\cdot}^M\circ U\circ (\cdot \o_b (F_c)'_b)\circ (.)_c
$ so that we will be able to describe the unique adjunction by composing known adjunctions. (Similarly, for $F\in \Kc$ one can define $\cdot\ \widehat{\o}^K_{\gamma,\varepsilon} F'_c: \LCS\to \Kc$ as the same composition $\ \widehat{\cdot}^M\circ U\circ (\cdot \o_b (F_c)'_b)\circ (.)_c
$).  We have to check this is possible by agreement on objects. This reads for $E\in \LCS$ %$E\ \widehat{\o}^K_\gamma F'_c=\widehat{\cdot}^K\circ U( (E_c\cdot \o_b (F_c)'_b)$
as application of \eqref{InclusionSchwartzCLCS}%(with $\o_H$ from this statement)
, \eqref{gammaviaCLCS} and reformulation of the definition $\cdot\o_\gamma\cdot=U((\cdot)_c\o_H(\cdot)_c):$
$$\widehat{\cdot}^K\circ U( E_c \o_b (F_c)'_b)=\widehat{\cdot}^K\circ U( E_c \o_H (F_c)'_b)=\widehat{\cdot}^K\circ U( E_c \o_H (F'_c)_c)=E\ \widehat{\o}^K_\gamma F'_c.$$
The case $F\in\Kc$ is similar since by definition $\cdot\o_{\gamma,\varepsilon}(\cdot)'_c=U((\cdot)_c\o_H((\cdot)_c)'_b)$.

 Then, to compute the adjunction, one needs to know the adjoints of the composed functors, which are from Theorem \ref{FirstMALL} and the proof of proposition \ref{lem:Kc_epsilon}. This gives as adjoint $U\circ  (\cdot \parr_b F_c)\circ (.)_c=\cdot\varepsilon F.$

(4)  The second adjunction is a consequence and so is the last  if we see $\cdot\ \widehat{\o}^K_\gamma F'_c: \gamma-\Kc\to \gamma-\Kc$ as composition of $i:\gamma-\Kc\to \Kc$, $\cdot\ \widehat{\o}^K_\gamma F'_c: \Kc\to \Kc$ and the right adjoint of $i$ (which we will see is not needed here). Indeed, by proposition \ref{lem:Kc_epsilon}, for $E\in \gamma-\Kc$,  $E'_c\widehat{\o}^K_\gamma F'_c=E'_c\widehat{\o}^K_{\beta e} F'_c$ is the $k$-quasi-completion of $(E\varepsilon F)'_c\in \gamma-\mathbf{Kb}$, and therefore from the commutation of $\gamma$-topology and $k$-quasi-completion in that case, that we have just established in (2), it is also in  $\gamma-\Kc.$ Hence, the adjunction follows by composition of previous adjunctions and we have also just proved that $\gamma$-\textbf{Kb} is stable by $\widehat{\o}^K_\gamma.$
 \end{proof}

We emphasize expected consequences from the $*$-autonomous category we will soon get since we will use them in slightly more general form.

 \begin{corollary}
 \label{coro:otimes_gamma_kcompl}
 For any $Y\in \Kc, X,Z_1,...,Z_m, Y_1,...,Y_n\in \gamma-\Kc, T\in k-\mathbf{Ref}$
 the following canonical linear maps are continuous $$ev_{X'_c}:(Y\varepsilon X)\widehat{\o}^K_{\gamma}X'_c\to Y,\ \ \ \ 
 comp_{T'_c}^*: (Y\varepsilon T)\widehat{\o}^K_{\gamma}((T'_c\varepsilon Z_1\cdots\varepsilon Z_m)^*_k)^*_k\to (Y\varepsilon Z_1\cdots\varepsilon Z_m),$$
$$
 comp_{T'_c}: ( Y_1\varepsilon\cdots\varepsilon Y_n\varepsilon T)\o_{\gamma}(T'_c\varepsilon Z_1\cdots\varepsilon Z_m)\to (Y\varepsilon Y_1\cdots\varepsilon Y_n\varepsilon Z_1\cdots\varepsilon Z_m),$$
 $$
 comp_{T'_c}^\sigma: (Y\varepsilon Y_1\cdots\varepsilon Y_n\varepsilon T)\o_{\sigma,\gamma}(T'_c\varepsilon Z_1\cdots\varepsilon Z_m)\to (Y\varepsilon Y_1\cdots\varepsilon Y_n\varepsilon Z_1\cdots\varepsilon Z_m),$$
Moreover for any $F,G\in \Kc$, $V,W\in \gamma-\mathbf{Kb}$ and $U,E$ any separated lcs%with its $\gamma$-topology
, there are continuous associativity maps $$Ass_\varepsilon:E\varepsilon(F\varepsilon G)\to (E\varepsilon F)\varepsilon G, \ \ \ Ass_\gamma:(U\widehat{\o}^K_\gamma V)\widehat{\o}^K_\gamma W\to U\widehat{\o}^K_\gamma (V\widehat{\o}^K_\gamma W),$$ $$ Ass_{\gamma,\varepsilon}:V\widehat{\o}^K_\gamma(T\varepsilon X)\to (V\widehat{\o}^K_\gamma T)\varepsilon X.$$
 \end{corollary}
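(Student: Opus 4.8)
The plan is that every map in the statement is a \emph{canonical linear map}, already well defined algebraically by transposition, composition, or associativity; the whole content is continuity, which I would derive from two engines established above. The first engine is the adjunction of Proposition \ref{prop:adj_Kc_gammaKc}, namely $\cdot\,\widehat{\o}^K_\gamma F'_c\dashv F\varepsilon\,\cdot$ for $F\in\gammaKc$, which produces evaluation maps as counits and transports continuity questions between the two factors of a tensor--hom pair. The second engine is the characterization in Proposition \ref{lem:Kc_epsilon} of relatively compact sets in an $\varepsilon$-product as exactly the $\varepsilon$-equihypocontinuous ones; this is the tool that lets me verify the $\gamma$- and $\sigma,\gamma$-hypocontinuity conditions defining $\o_\gamma$ and $\o_{\sigma,\gamma}$, since it says precisely that the image of a product of compact sets under a transpose-and-compose operation is again relatively compact.

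The evaluation map is the template. Writing $Y\varepsilon X\simeq X\varepsilon Y$ for the braiding of the symmetric monoidal structure of $(\Kc,\varepsilon,\K)$, the map $ev_{X'_c}$ is, up to this braiding isomorphism, exactly the counit at $Y$ of the adjunction $\cdot\,\widehat{\o}^K_\gamma X'_c\dashv X\varepsilon\,\cdot$ for $X\in\gammaKc$; hence it is continuous. All three composition maps are then built as ``transpose and compose''. Using $T\in\kref$, so that $(T'_c)'_c\simeq T$, Corollary \ref{prop:defs_epsilon_coincide} identifies the first factor with $L_c(T'_c,Y_1\varepsilon\cdots\varepsilon Y_n)$ and the second with $L_c(T,Z_1\varepsilon\cdots\varepsilon Z_m)$, and the composition sends a pair $(a,b)$ to $b\circ a^t$ followed by the iterated associator $Ass_\varepsilon$. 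For the variant $comp^*_{T'_c}$ the double functor $((\cdot)^*_k)^*_k$ of Lemma \ref{kduality} first replaces the second factor by a $k$-reflexive space, which is of the form $F'_c$ with $F=(\cdot)'_k\in\gammaKc$, so that the adjunction of Proposition \ref{prop:adj_Kc_gammaKc} applies verbatim and $comp^*_{T'_c}$ is simply the transpose of a manifestly continuous map.

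The continuity of $comp_{T'_c}$ and $comp^\sigma_{T'_c}$ is the main obstacle, precisely because these use the uncompleted $\o_\gamma$ (resp. $\o_{\sigma,\gamma}$) and so the adjunction does not directly apply; I would prove them by hand via the second engine. For $\o_\gamma$ I must check $\gamma$-hypocontinuity of the underlying bilinear map. Fixing a compact set $A$ in the first factor, Proposition \ref{lem:Kc_epsilon} makes $A$ an $\varepsilon$-equihypocontinuous family of maps $T'_c\to Y_1\varepsilon\cdots\varepsilon Y_n$; the transposes $a^t$ then carry each fixed equicontinuous (hence relatively compact) set of $(Y_1\varepsilon\cdots\varepsilon Y_n)'_c$ into a single compact set of $T$, so post-composing with a $b$ ranging in a suitable $0$-neighbourhood (uniformly small on that compact) keeps $b\circ a^t$ inside the prescribed neighbourhood of the target $\varepsilon$-product; the symmetric estimate, fixing instead a compact set $B$ in the second factor, is identical. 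For $comp^\sigma_{T'_c}$ the $\sigma$ side only demands control on finite, hence a fortiori compact, sets, so the same computation applies and is easier. Both maps then extend from the algebraic $\o_\gamma$ (resp. $\o_{\sigma,\gamma}$) by the universal property of these tensor products.

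Finally the associativity maps. $Ass_\varepsilon$ for $F,G\in\Kc$ is the associator of the symmetric monoidal category $(\Kc,\varepsilon,\K)$ of Proposition \ref{lem:Kc_epsilon} when $E\in\Kc$; for a general lcs $E$ I would extend it by observing that $E\varepsilon(\cdot)$ depends on $E$ only through $E'_c$, together with the continuous map $E\to(E'_c)'_c$ and the commutation of $\varepsilon$ with categorical limits recorded in the corollary following Theorem \ref{FirstMALL}. $Ass_\gamma$ is the dual statement: unfolding $U\widehat{\o}^K_\gamma V=\widehat{\cdot}^K\!\circ U((\cdot)_c\o_H(\cdot)_c)$ from the proof of Proposition \ref{prop:adj_Kc_gammaKc}, it is transported from the associativity of $\parr_b$ in $\CLCS$ (Theorem \ref{FirstMALL}) through the functor $\widehat{\cdot}^K$, the density statement of Lemma \ref{Prop2SchwartzCLCS} letting the bilinear maps extend uniquely to the $k$-quasi-completions. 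The mixed $Ass_{\gamma,\varepsilon}$ follows from $Ass_\varepsilon$ by one further use of the adjunction of Proposition \ref{prop:adj_Kc_gammaKc}. The hard part throughout is the bookkeeping of equicontinuous and compact sets through iterated $\varepsilon$-products, ensuring that the transpose of a map out of one $\varepsilon$-factor lands in the space identified with $T$ by $k$-reflexivity, uniformly in the number of factors $n,m$.
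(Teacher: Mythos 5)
Your treatment of $ev_{X'_c}$, $comp_{T'_c}$ and $comp_{T'_c}^\sigma$ is sound and close to the paper's: the evaluation map is indeed the adjunction of Proposition \ref{prop:adj_Kc_gammaKc} applied to the braiding, and your direct verification of the two halves of $\gamma$-hypocontinuity for $comp_{T'_c}$ (a compact set in an $\varepsilon$-product of $k$-quasi-complete spaces is $\varepsilon$-equihypocontinuous by Proposition \ref{lem:Kc_epsilon}, so its transposes send a fixed equicontinuous set into a fixed absolutely convex compact of $T$, resp.\ of $T'_c$ --- and here both $T=(T'_c)'_c$ and the $k$-quasi-completeness of $T'_c$, i.e.\ $T\in\kref$, are used) is a legitimate substitute for the paper's route, which instead extracts one hypocontinuity estimate from the already-proved continuity of $comp_{T'_c}^*$ via the identification of compact sets of $((T'_c\varepsilon Z_1\cdots\varepsilon Z_m)^*_k)^*_k$ and $T'_c\varepsilon Z_1\cdots\varepsilon Z_m$ in Lemma \ref{kduality}, and gets the other by symmetry.

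The genuine gap is in $Ass_\gamma$, and it propagates to $comp_{T'_c}^*$ for $m\geq 2$, whose construction needs $Ass_\gamma$ (the map there is not "manifestly continuous" after one application of the adjunction; the paper assembles it from iterated evaluations, $Ass_\gamma$ twice, and $(E^*_k)^*_k\to E$). You propose to transport the associativity of $\o_b$ in $\CLCS$ through $\ \widehat{\cdot}^K$ by density and unique extension via Lemma \ref{Prop2SchwartzCLCS}. But the maps to be extended are only $\gamma$-hypocontinuous bi- or trilinear maps, and such maps do not extend to completions by density alone. Worse, the outer tensor product in $(U\widehat{\o}^K_\gamma V)\widehat{\o}^K_\gamma W$ is formed with respect to the compact sets of the already completed space $U\widehat{\o}^K_\gamma V$; these are strictly more than bipolars of compacts of $U\o_\gamma V$, so the induced topology on the dense algebraic tensor product is \emph{coarser} than its intrinsic $\o_\gamma$-topology, and a map continuous for the latter need not extend --- one would have to re-establish hypocontinuity on the new compacts (the kind of transfinite argument of Lemma \ref{ordinalCompletion}). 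The paper avoids this entirely: it computes $L((U\widehat{\o}^K_\gamma V)\widehat{\o}^K_\gamma W,\,\cdot\,)$ on the $\varepsilon$-side by two applications of the adjunction, composes with $Ass_\varepsilon$ there, and never extends a hypocontinuous map through a completion. A smaller instance of the same problem affects your extension of $Ass_\varepsilon$ to arbitrary $E$: $(E'_c)'_c$ need not be $k$-quasi-complete, so the monoidal associator of $(\Kc,\varepsilon,\K)$ does not apply to it, and replacing $E$ by $\widehat{E}^K$ changes the $\varepsilon$-product; the paper's computation works for arbitrary $E$ because it only uses $E$ through $E_c\in\CLCS$ and needs the identity $(F\varepsilon G)_c=F_c\parr_b G_c$, hence $k$-quasi-completeness, only for the inner factors $F,G$.
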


  \begin{proof}
(1)  From the adjunction, the symmetry map in $L((Y\varepsilon X), (X\varepsilon Y))=L((Y\varepsilon X)\widehat{\o}^K_{\gamma}X'_c,Y)$ gives the first evaluation map.

(2) For the associativity $Ass_\varepsilon$, recall that using definitions and \eqref{ReductionKcCLCS} (using $F,G\in \Kc$):
\begin{align*}E\varepsilon(F\varepsilon G)=U(E_c\parr_b[F\varepsilon G]_c)=U(E_c\parr_b[F_c\parr_b G_c])&\to  U([E_c\parr_bF_c]\parr_b G_c)\\&\to U([U(E_c\parr_bF_c)]_c\parr_b G_c)=(E\varepsilon F)\varepsilon G,\end{align*}
where the first map is $U(Ass^{\parr_b}_{E_c,F_c,G_c})$ and the second obtained by functoriality from the unit $\eta_{E_c\parr_bF_c}:E_c\parr_bF_c\to [U(E_c\parr_bF_c)]_c.$
 
(3)  For the associativity $Ass_{\gamma}$, we know from the adjunction again, since $V'_c,W'_c\in \gamma-\Kc, V=(V'_c)'_c, W=(W'_c)'_c$:
$$L((U\widehat{\o}^K_\gamma V)\widehat{\o}^K_\gamma W, U\widehat{\o}^K_\gamma (V\widehat{\o}^K_\gamma W))=L((U\widehat{\o}^K_\gamma V),W'_c\varepsilon \Big(U\widehat{\o}^K_\gamma (V\widehat{\o}^K_\gamma W)\Big))=L(U,V'_c\varepsilon \Big(W'_c\varepsilon \Big(U\widehat{\o}^K_\gamma (V\widehat{\o}^K_\gamma W)\Big)\Big)).$$
Then composing with $Ass_\varepsilon$ (note the $\gamma$ tensor product term is the term requiring nothing but $k$-quasi-completeness for the adjunction to apply) gives a map:
$$L(U\widehat{\o}^K_{\gamma,\varepsilon}\Big(V'_c\varepsilon W'_c\Big)'_c, \Big(U\widehat{\o}^K_\gamma (V\widehat{\o}^K_\gamma W)\Big))\simeq L(U,\Big(V'_c\varepsilon W'_c\Big)\varepsilon \Big(U\widehat{\o}^K_\gamma (V\widehat{\o}^K_\gamma W)\Big))\to L(U,V'_c\varepsilon \Big(W'_c\varepsilon \Big(U\widehat{\o}^K_\gamma (V\widehat{\o}^K_\gamma W)\Big)\Big))$$

Since an equicontinuous set in $\Big(V'_c\varepsilon W'_c\Big)'_c$ is contained in an absolutely convex compact set, one gets by universal properties a continuous linear map :
$U\widehat{\o}^K_{\gamma,\varepsilon}\Big(V'_c\varepsilon W'_c\Big)'_c\to U\widehat{\o}^K_{\gamma}\Big(V'_c\varepsilon W'_c\Big)'_c.$

Finally by functoriality and the embedding of proposition \ref{lem:Kc_epsilon} there is a canonical continuous linear map: $U\widehat{\o}^K_{\gamma}\Big(V'_c\varepsilon W'_c\Big)'_c\to U\widehat{\o}^K_\gamma(V\widehat{\o}^K_\gamma W)$. Dualizing, we also have a map which we can evaluate at the identity map composed with all our previous maps to get $Ass_\gamma$: $$L(U\widehat{\o}^K_\gamma(V\widehat{\o}^K_\gamma W), \Big(U\widehat{\o}^K_\gamma (V\widehat{\o}^K_\gamma W)\Big))\to L(U\widehat{\o}^K_{\gamma,\varepsilon}\Big(V'_c\varepsilon W'_c\Big)'_c, \Big(U\widehat{\o}^K_\gamma (V\widehat{\o}^K_\gamma W)\Big))$$
 
(4)  We treat similarly the map $comp_{T'_c}^*$ in the case $m=2$, for notational convenience.  It is associated to $ev_{T'_c}\circ (id\o ev_{(Z_1)'_c})\circ (id\o ev_{(Z_2)'_c}\o id)$ via the following identifications. One obtains first a map between Hom-sets  using the previous  adjunction :
$$L\Big(\Big[\big((Y\varepsilon T)\widehat{\o}^K_{\gamma} (((T'_c\varepsilon Z_1)\varepsilon Z_2)^*_k)^*_k\big)\widehat{\o}^K_{\gamma} (Z_2)'_c\Big]\widehat{\o}^K_{\gamma} (Z_1)'_c, Y\Big)=L((Y\varepsilon T)\widehat{\o}^K_{\gamma}(((T'_c\varepsilon Z_1)\varepsilon Z_2)^*_k)^*_k, (Y\varepsilon Z_1)\varepsilon Z_2).$$

We compose this twice with $Ass_\gamma$ and the canonical map $(E^*_k)^*_k\to E$ for $E$ $k$-quasi-complete: 
\begin{align*} &L\Big((Y\varepsilon T)\widehat{\o}^K_{\gamma}\Big[\Big( ((T'_c\varepsilon Z_1)\varepsilon Z_2)\widehat{\o}^K_{\gamma} (Z_2)'_c\Big)\widehat{\o}^K_{\gamma} (Z_1)'_c\Big], Y\Big)\to L\Big((Y\varepsilon T)\widehat{\o}^K_{\gamma}\Big[\Big( (((T'_c\varepsilon Z_1)\varepsilon Z_2)^*_k)^*_k\widehat{\o}^K_{\gamma} (Z_2)'_c\Big)\widehat{\o}^K_{\gamma} (Z_1)'_c\Big], Y\Big)\to \\ &L\Big(\Big[(Y\varepsilon T)\widehat{\o}^K_{\gamma} \Big((((T'_c\varepsilon Z_1)\varepsilon Z_2)^*_k)^*_k\widehat{\o}^K_{\gamma} (Z_2)'_c\Big)\Big]\widehat{\o}^K_{\gamma} (Z_1)'_c, Y\Big)\to L\Big(\Big[\Big((Y\varepsilon T)\widehat{\o}^K_{\gamma} (((T'_c\varepsilon Z_1)\varepsilon Z_2)^*_k)^*_k\Big)\widehat{\o}^K_{\gamma} (Z_2)'_c\Big]\widehat{\o}^K_{\gamma} (Z_1)'_c, Y\Big).\end{align*}
Note that the first associativity uses the added $((\cdot)^*_k)^*_k$ making the Arens dual of the space $k$-quasi-complete as it should to use $Ass_\gamma$
 and the second since $(((T'_c\varepsilon Z_1)\varepsilon Z_2)^*_k)^*_k\widehat{\o}^K_{\gamma} (Z_2)'_c\in \gamma-\mathbf{Kb}$ from Proposition \ref{prop:adj_Kc_gammaKc}.
 
Note that $T'_c\in \Kc$ is required for definition of $ev_{(Z_i)'_c}$ hence the supplementary assumption $T\in\kref$ and not only $T\in\gamma-\Kc$.
  
 (5)By the last statement in lemma \ref{kduality}, we already know that $((T'_c\varepsilon Z_1\cdots\varepsilon Z_m)^*_k)^*_k$ and $T'_c\varepsilon Z_1\cdots\varepsilon Z_m$ have the same absolutely convex compact sets.
Hence for any absolutely compact set in this set $comp_{T'_c}^*$ induces an equicontinuous family in $L(Y_1\cdots\varepsilon Y_n\varepsilon T, Y\varepsilon Y_1\cdots\varepsilon Y_n\varepsilon Z_1\cdots\varepsilon Z_m)$. But now by symmetry on $\varepsilon$ product and of the assumption on $Y_i,Z_j$ one gets the second hypocontinuity to define $comp_{T'_c}$ by a symmetric argument.

 (6)One uses $comp_{T'_c}$  on $((Y'_c)'_c\varepsilon Y_1\cdots\varepsilon Y_n\varepsilon T)=(Y\varepsilon Y_1\cdots\varepsilon Y_n\varepsilon T)$ algebraically, since $(Y'_c)'_c\in \gamma-\Kc.$ This gives the separate continuity needed to define $comp_{T'_c}^\sigma$, the one sided $\gamma$-hypocontinuity follows from $comp_{T'_c}^*$ as in (5).
 
(7)  We finish by $Ass_{\gamma,\varepsilon}$. We know from the adjunction again composed with $Ass_\varepsilon$ and symmetry of $\varepsilon$  that we have a  map:
 $$L(T\varepsilon X, (V\widehat{\o}^K_\gamma T)\varepsilon\Big(V'_c\varepsilon X\Big))\to L(T\varepsilon X, V'_c\varepsilon\Big((V\widehat{\o}^K_\gamma T)\varepsilon X\Big))=L(V\widehat{\o}^K_\gamma(T\varepsilon X), (V\widehat{\o}^K_\gamma T)\varepsilon X)$$
 Similarly, we have canonical maps:
  $$ L(\big(T\varepsilon X\big)\widehat{\o}^K_\gamma(V\widehat{\o}^K_\gamma T)'_c, \Big(V'_c\varepsilon X\Big))\simeq L(T\varepsilon X, ((V\widehat{\o}^K_\gamma T)'_c)'_c\varepsilon\Big(V'_c\varepsilon X\Big))\to L(T\varepsilon X, (V\widehat{\o}^K_\gamma T)\varepsilon\Big(V'_c\varepsilon X\Big))$$
   $$L(\big(X\varepsilon T\big)\widehat{\o}^K_\gamma\big(T'_c\varepsilon V'_c\big), \Big(X\varepsilon V'_c\Big))\to L(\big(T\varepsilon X\big)\widehat{\o}^K_\gamma\big(\big(V'_c\varepsilon T'_c\big)'_c\big)'_c, \Big(V'_c\varepsilon X\Big))\to L(\big(T\varepsilon X\big)\widehat{\o}^K_\gamma(V\widehat{\o}^K_\gamma T)'_c, \Big(V'_c\varepsilon X\Big)).$$
      The image of $comp_{T'_c}\in L(\big(X\varepsilon T\big)\widehat{\o}^K_\gamma\big(T'_c\varepsilon V'_c\big), \Big(X\varepsilon V'_c\Big))$ gives $Ass_{\gamma,\varepsilon}$ since $X,V'_c\in\gamma-\Kc$.
 \end{proof}
  
 We refer to \cite{MelliesTabareau,TabareauPhD} for the study of dialogue categories from their definition, already recalled in subsection 2.3. Note that $*$-autonomous categories are a special case. 
 
We state first a transport lemma for dialogue categories along monoidal functors, which we will use several times.

\begin{lemma}
\label{lemma:transport_dialoque_categories}
Consider $(\mathcal{C}, \otimes_\mathcal{C}, 1_\mathcal{C})$, and  $(\mathcal{D}, \otimes_\mathcal{D}, 1_\mathcal{D})$ two symmetric monoidal categories, $R : \mathcal{C} \to \mathcal{D}$ a  functor, and $ L : \mathcal{D} \to \mathcal{C}$ the  left adjoint to $R$ which is assumed strictly monoidal. If $\neg$ is a tensorial negation on $\mathcal{C}$, then $ E \mapsto R(\neg L(E))$ is a tensorial negation on $\mathcal{D}.$
\end{lemma}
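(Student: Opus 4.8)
The plan is to set $\neg_{\mathcal{D}} := R \circ \neg \circ L$ (reading $R$ as a functor $\mathcal{C}^{op}\to\mathcal{D}^{op}$ on the target side, so that $\neg_{\mathcal{D}}:\mathcal{D}\to\mathcal{D}^{op}$) and to produce the required natural bijection by stringing together the adjunction $L\dashv R$, the strict monoidal structure of $L$, and the negation $\varphi$ already available on $\mathcal{C}$. Concretely, for $A,B,C\in\mathcal{D}$ I would compose the natural isomorphisms
\begin{align*}
\mathcal{D}(A \otimes_{\mathcal{D}} B, \neg_{\mathcal{D}} C) &= \mathcal{D}(A \otimes_{\mathcal{D}} B, R(\neg L C)) \\
&\simeq \mathcal{C}(L(A \otimes_{\mathcal{D}} B), \neg L C) \\
&= \mathcal{C}(LA \otimes_{\mathcal{C}} LB, \neg L C) \\
&\simeq \mathcal{C}(LA, \neg(LB \otimes_{\mathcal{C}} LC)) \\
&= \mathcal{C}(LA, \neg L(B \otimes_{\mathcal{D}} C)) \\
&\simeq \mathcal{D}(A, R(\neg L(B \otimes_{\mathcal{D}} C))) = \mathcal{D}(A, \neg_{\mathcal{D}}(B \otimes_{\mathcal{D}} C)),
\end{align*}
where the first and last isomorphisms are the adjunction $L\dashv R$, the two displayed equalities use that $L$ is strictly monoidal, and the middle isomorphism is $\varphi_{LA,LB,LC}$. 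Each step is natural in $A,B,C$ (the adjunction bijection and $\varphi$ are natural by hypothesis, and strictness makes the monoidal comparison maps identities), so the composite $\psi_{A,B,C}$ is a natural bijection.

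It then remains to verify the coherence diagram \eqref{DialogueCompatibility} for $\psi$. My strategy is to transport the entire $\mathcal{D}$-diagram through the chain above and to recognize it as the corresponding instance of \eqref{DialogueCompatibility} for $\varphi$, which holds since $\neg$ is a tensorial negation on $\mathcal{C}$. The key point is to match the two vertical arrows. The left vertical is precomposition with $Ass^{\otimes_{\mathcal{D}}}_{A,B,C}$; since $L$ is strict monoidal, $L(Ass^{\otimes_{\mathcal{D}}}_{A,B,C}) = Ass^{\otimes_{\mathcal{C}}}_{LA,LB,LC}$, so after applying the adjunction it becomes precomposition with $Ass^{\otimes_{\mathcal{C}}}_{LA,LB,LC}$. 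The right vertical is postcomposition with $\neg_{\mathcal{D}} Ass^{\otimes_{\mathcal{D}}}_{B,C,D} = R\big(\neg L(Ass^{\otimes_{\mathcal{D}}}_{B,C,D})\big) = R\big(\neg\, Ass^{\otimes_{\mathcal{C}}}_{LB,LC,LD}\big)$, which under the adjunction becomes postcomposition with $\neg\, Ass^{\otimes_{\mathcal{C}}}_{LB,LC,LD}$. As the two horizontal legs are built from $\varphi$ exactly as in $\mathcal{C}$, the $\mathcal{D}$-square is the image of the $\mathcal{C}$-square under the natural bijections and hence commutes.

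The main obstacle is the bookkeeping in this last verification: one must invoke the naturality squares of the adjunction unit/counit and of $\varphi$ in the right order so that precomposition and postcomposition by associators genuinely line up across the chain of isomorphisms. Here the strictness hypothesis on $L$ is what keeps the argument clean; were $L$ only strong monoidal, the monoidal comparison isomorphisms of $L$ would enter and one would need the coherence theorem for monoidal functors to absorb them. Since $L$ is strict, those comparisons are identities and the whole verification reduces to naturality of $\varphi$ and of the adjunction together with the already-assumed diagram \eqref{DialogueCompatibility} in $\mathcal{C}$.
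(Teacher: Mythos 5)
Your proposal is correct and follows essentially the same route as the paper: the paper also defines $\varphi^{\mathcal{D}}_{A,B,C}$ as the composite $\psi^{-1}\circ\varphi^{\mathcal{C}}_{L(A),L(B),L(C)}\circ\psi$ through the adjunction bijections, using strict monoidality to identify $L(A\otimes_{\mathcal{D}}B)$ with $L(A)\otimes_{\mathcal{C}}L(B)$, and then checks the coherence diagram \eqref{DialogueCompatibility} by intertwining it with the $\mathcal{C}$-diagram via naturality of the adjunction and the identity $L(Ass^{\o_\mathcal{D}}_{A,B,C})=Ass^{\o_\mathcal{C}}_{L(A),L(B),L(C)}$. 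Nothing further is needed.
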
 

\begin{proof}Let $\varphi^{\mathcal{C}}$ the natural isomorphism making $\neg$ a tensorial negation.
Let us call the natural bijections given by the adjunction $$\psi_{A,B}:\mathcal{D}(A,R(B))\simeq\mathcal{C}(L(A),B).$$

Define \begin{align*}\varphi^{\mathcal{D}}_{A,B,C}&=\psi_{A, \neg(L (B \otimes_\mathcal{D}  C ))}^{-1}\circ\varphi^{\mathcal{C}}_{L(A),L(B),L(C)}\circ \psi_{A \otimes_\mathcal{D} B,\neg(L (C))}:\mathcal{D}(A \otimes_\mathcal{D} B, F( \neg(L (C))))\to \mathcal{D}(A, F( \neg(L (B \otimes_\mathcal{D}  C ))))\end{align*}

It gives the expected natural bijection:
\begin{align*} \mathcal{D}(A \otimes_\mathcal{D} B, F( \neg(L (C))))&\simeq \mathcal{C}(L(A \otimes_\mathcal{D} B),  \neg(L (C)))=\mathcal{C}(L(A )\otimes_\mathcal{C} L(B)),  \neg(L (C)))\\&\simeq  \mathcal{C}(L(A),  \neg(L (B \otimes_\mathcal{D}  C ))))\simeq\mathcal{D}(A, F( \neg(L (B \otimes_\mathcal{D}  C )))) \end{align*}
where we have used strict monoidality of $L$: $L (B \otimes_\mathcal{D}  C ) = L(B) \otimes_\mathcal{C} L(C)$, and the structure of dialogue category on $C$. %Naturality follows from the adjunction. 

It remains to check the compatibility relation \eqref{DialogueCompatibility}.
For it suffices to note that by naturality of the adjunction, one has for instance:
$$\mathcal{D}(Ass^{\o_\mathcal{D}}_{A,B,C}, F( \neg(L (D))))
=\psi_{A \otimes_\mathcal{D}( B\otimes_\mathcal{D} C),\neg(L (D))}^{-1}\circ \mathcal{C}(L(Ass^{\o_\mathcal{D}}_{A,B,C}),  \neg(L (D)))\circ \psi_{(A \otimes_\mathcal{D} B)\otimes_\mathcal{D} C,\neg(L (D))}.$$
%\mathcal{C}(Ass^{\o_\mathcal{C}}_{A,B,C},\neg D)
and since $L(Ass^{\o_\mathcal{D}}_{A,B,C})=
Ass^{\o_\mathcal{C}}_{L(A),L(B),L(C)}$ from compatibility of a strong monoidal functor, the new commutative diagram in $\mathcal{D}$ reduces to the one in $\mathcal{C}$ by intertwining.
\end{proof}

\begin{remark}
Note that we have seen or will see several examples of such monoidal adjunctions :
\begin{itemize}
 \item  between $(\Kc^{op},\varepsilon)$ and $(\CLCS^{op}, \parr_b)$ through the functors $L=(\cdot)_c$ and $R=(\ \hat{\cdot}^K\circ U)$ (proof of proposition \ref{lem:Kc_epsilon} and \eqref{ReductionKcCLCS}),
  \item between $(\CSch^{op}, \parr_b)$ and $(\McS^{op}, \varepsilon)$ through the functors $L=(\cdot)_{sc}$ and $R=(\ \hat{\cdot}^M\circ U)$(proposition \ref{McSSymmMonoidal2}).
\end{itemize} 
 \end{remark}

 \begin{theorem}\label{kRef} 
 $\Kc^{op}$ is a dialogue category with tensor product $\varepsilon$  and  tensorial negation $(\cdot)^*_k$ which has a commutative {and idempotent} continuation monad $((\cdot)^*_k)^*_k$. 
 
 Its continuation category is equivalent to the $*$-autonomous category $\kref$ with tensor product $E\o_{k} F=(E^*_{k}\varepsilon F^*_{k})^*_{k}$, dual $(.)^*_{k}$ and dualizing object $\K$.  It is stable by arbitrary products and direct sums. 
\end{theorem}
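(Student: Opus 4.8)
The plan is to build the whole structure by transporting the $*$-autonomous structure of $\CLCS$ from Theorem \ref{FirstMALL} onto $\Kc^{op}$, and then to invoke the Mellies--Tabareau construction \cite{MelliesTabareau,TabareauPhD} to pass to the continuation category.

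First I would produce the dialogue structure. The monoidal category at play is $(\Kc^{op},\varepsilon,\K)$, which is symmetric monoidal by Proposition \ref{lem:Kc_epsilon}. Passing to opposite categories in Proposition \ref{lem:Kc_epsilon}, the functor $(\cdot)_c:\Kc^{op}\to\CLCS^{op}$ becomes left adjoint to $\widehat{\cdot}^K\circ U$, and it is strictly monoidal from $(\Kc^{op},\varepsilon)$ to $(\CLCS^{op},\parr_b)$ because $(E\varepsilon F)_c=E_c\parr_b F_c$ by \eqref{ReductionKcCLCS} and $\K_c=\K$. Since $\CLCS$ is $*$-autonomous, $(\CLCS^{op},\parr_b)$ is a dialogue category with tensorial negation $(\cdot)'_b$, so Lemma \ref{lemma:transport_dialoque_categories} applies with $L=(\cdot)_c$, $R=\widehat{\cdot}^K\circ U$ and $\neg=(\cdot)'_b$. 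It yields a tensorial negation on $\Kc^{op}$ whose value on objects is $R(\neg L(E))=\widehat{U((E_c)'_b)}^K=\widehat{E'_c}^K=E^*_k$, using $U((E_c)'_b)=E'_c$ and $E'_k=E'_c$ for $E\in\Kc$; its continuation monad is accordingly $((\cdot)^*_k)^*_k$.

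Next I would check idempotency and commutativity. Idempotency is a direct consequence of Lemma \ref{kduality}: any space of the form $X^*_k$ is $k$-reflexive, so $T(E)=(E^*_k)^*_k$ is always $k$-reflexive; and for a $k$-reflexive $F$ one has $F^*_k=F'_c$ (as $F'_c$ is then $k$-quasi-complete) and hence $(F^*_k)^*_k=\widehat{(F'_c)'_c}^K=\widehat{F}^K=F$, so $T$ restricts to the identity on $k$-reflexive spaces, the unit $\eta_{TE}$ is an isomorphism, and $T\circ T=T$. The hard part will be commutativity: one must show that the two strengths carried by the continuation monad induce the same natural map $TE\,\varepsilon\,TF\to T(E\,\varepsilon\,F)$. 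I would reduce this, through the embeddings $E'_c\o_{\beta e}F'_c\to(E\varepsilon F)'_c\to E'_c\widehat{\o}^K_{\beta e}F'_c$ of Proposition \ref{lem:Kc_epsilon}, to agreement on the image of the algebraic tensor product $E'_c\o F'_c$, which is dense in the middle term; there the two maps are computed from the symmetry of $\varepsilon$ and the associator $Ass_\varepsilon$ of Corollary \ref{coro:otimes_gamma_kcompl}, hence coincide, and density together with $k$-quasi-completeness propagates the equality.

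Finally, with a commutative idempotent continuation monad at hand, the Mellies--Tabareau construction produces a $*$-autonomous continuation category; idempotency identifies it with the reflective full subcategory of fixed points, which by the computation above is exactly $\kref$ (the self-duality of a $*$-autonomous category absorbing the passage to the opposite category). Reading off the transported data, the negation restricts to the dual $(\cdot)^*_k$, the dualizing object is $\K^*_k=\K$, and the tensor is $E\o_k F=\neg(\neg E\,\varepsilon\,\neg F)=(E^*_k\varepsilon F^*_k)^*_k$, so that $\varepsilon$ plays the role of $\parr$ up to reflexivization. For stability under products and direct sums, it suffices by $*$-autonomous duality to construct one of the two: products are computed as in $\LCS$ (the product being $k$-quasi-complete by Tychonoff, since a compact subset of a product lies in a product of compacts whose closed absolutely convex cover is compact, and remaining $k$-reflexive by a routine check on the dual), giving the product in $\kref$; direct sums then follow by duality as $\bigoplus_i E_i=(\prod_i (E_i)^*_k)^*_k$, which is $k$-reflexive by construction and whose universal property is obtained by dualizing that of the product.
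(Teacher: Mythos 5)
Your proposal follows essentially the same route as the paper's proof: the dialogue structure is transported from $\CLCS^{op}$ via Lemma \ref{lemma:transport_dialoque_categories} using the strictly monoidal left adjoint $(\cdot)_c$ and \eqref{ReductionKcCLCS}, idempotency comes from Lemma \ref{kduality}, commutativity of the two strengths is reduced to agreement, and hence unique extension, of an $\varepsilon$-hypocontinuous map from the algebraic tensor product $X'_c\otimes_{\beta e}Y'_c$ (the paper cites K\"othe \S 40.3.(1) for the uniqueness of such extensions, the one technical point you gloss over), and $*$-autonomy of the continuation category is exactly Lemma \ref{DialogueToRef}. Your explicit Tychonoff argument for stability under products and the dual construction of direct sums supplies a detail the paper's own proof leaves implicit.
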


\begin{proof}
The structure of a dialogue category follows from the first case of the previous remark since $(\CLCS^{op}, \parr_b,(\cdot)'_b)$ is a $*$-autonomous category, hence a Dialogue category by Theorem \ref{FirstMALL} and then the new tensorial negation is  $R(\neg L(\cdot))=\hat{\cdot}^K\circ(\cdot)'_c$ which coincides with $(\cdot)^*_k$ on $\Kc$. The idempotency of the continuation monad comes from lemma \ref{kduality}.

 In order to check that the monad is commutative, one uses that from \cite[Prop 2.4]{TabareauPhD}, the dialogue category already implies existence of right and left tensor strengths say $t_{X,Y},\tau_{X,Y}$. Note that in order to see they commute, it suffices to see the corresponding result after applying $(\cdot)^*_k$. Then from proposition \ref{lem:Kc_epsilon}, the two maps obtained on $\widehat{X'_c}^K\widehat{\otimes}^K_{\beta e}\widehat{Y'_c}^K$
 must be extensions by continuity of an $\epsilon$-hypocontinuous multilinear map on $X'_c\otimes_{\beta e} Y'_c$, which is unique by \cite[\S 40.3.(1)]{Kothe2} which even works in the separately continuous case but strongly requires known the separate continuity of the extension. Hence we have the stated commutativity. %Sketch of proof of commutativity, hope we will find better general result.
 
The $*$-autonomous property follows from the following general lemma.
\end{proof} 

\begin{lemma}\label{DialogueToRef}
Let $(\mathcal{C}^{op},\parr_\mathcal{C},I,\neg)$  a dialogue category with a commutative and idempotent continuation monad and $\mathcal{D}\subset \mathcal{C}$ the full subcategory of objects of the form $\neg C, C\in \mathcal{C}$. Then $\mathcal{D}$ is equivalent to the  Kleisli category of the comonad $T=\neg\neg$ in $\mathcal{C}.$ %{\color{red} [sic] not $\mathcal{C}^{op}$ d'apès Tabareau p 50 ?}
If we define $\cdot \o_\mathcal{D} \cdot=\neg(\neg(\cdot) \parr_\mathcal{C} \neg(\cdot))$, then $(\mathcal{D},\o_\mathcal{D},I,\neg)$ is a $*$-autonomous category and $\neg: \mathcal{C}^{op}\to \mathcal{D}$ is strongly monoidal.
\end{lemma}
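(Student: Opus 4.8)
The plan is to exhibit $\mathcal{D}$ as the (co)Kleisli category of the idempotent continuation comonad and then to hand it to Lemma \ref{ClosedCatToStarAut}. First I would extract from the dialogue structure the contravariant self-adjunction of $\neg$: taking $A=1_\mathcal{C}$ (the $\parr_\mathcal{C}$-unit, here denoted $I$) in the bijection $\varphi$ and using the symmetry of $\parr_\mathcal{C}$ gives a natural isomorphism $\mathcal{C}(\neg B,A)\simeq\mathcal{C}(\neg A,B)$, so that $\neg$ is adjoint to itself and $T=\neg\neg$ is a comonad on $\mathcal{C}$, with counit $\varepsilon_X:\neg\neg X\to X$. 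Idempotency of the continuation monad means that the canonical map $\neg\neg\neg C\to\neg C$ is invertible for every $C$; hence the objects of the form $\neg C$ are precisely the $T$-coalgebras (those with invertible counit). The standard fact that for an idempotent comonad the comparison functor from the Kleisli category of $T$ to its coalgebras is an equivalence then yields the first assertion, and on $\mathcal{D}$ the map $d_X:=\varepsilon_X^{-1}:X\to\neg\neg X$ is a natural isomorphism.

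Next I would put the symmetric monoidal structure on $\mathcal{D}$. Since $\parr_\mathcal{C}$ need not preserve $\mathcal{D}$ — this is why Lemma \ref{CategoricStabMonoidal} cannot be applied verbatim as for $\Kc$ — the product is the $T$-corrected $X\otimes_\mathcal{D}Y=\neg(\neg X\parr_\mathcal{C}\neg Y)$. Commutativity of the continuation monad is precisely what makes the Kleisli category symmetric monoidal: the two strengths $t_{X,Y},\tau_{X,Y}$ available in any dialogue category by \cite[Prop 2.4]{TabareauPhD} agree, and from them one assembles the associator, braiding and unitors of $\otimes_\mathcal{D}$, with unit $\neg I$; the pentagon, hexagon and triangle are inherited from those of $\parr_\mathcal{C}$ through the dialogue compatibility \eqref{DialogueCompatibility} and the fact that $\neg$ carries the canonical maps $C\parr_\mathcal{C}D\to\neg\neg C\parr_\mathcal{C}\neg\neg D$ to isomorphisms. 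The same observation gives at once the strong monoidality of $\neg:\mathcal{C}^{op}\to\mathcal{D}$, since it identifies $\neg(C\parr_\mathcal{C}D)$ with $\neg(\neg\neg C\parr_\mathcal{C}\neg\neg D)=\neg C\otimes_\mathcal{D}\neg D$ and $\neg I$ with the unit of $\otimes_\mathcal{D}$.

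To obtain $*$-autonomy I would present $\mathcal{D}$ as a symmetric closed category and invoke Lemma \ref{ClosedCatToStarAut}. Setting $[X,Y]_\mathcal{D}:=\neg\neg(\neg X\parr_\mathcal{C}Y)$ (the coreflection into $\mathcal{D}$ of $\neg X\parr_\mathcal{C}Y$), the chain of natural bijections for $X,Y,Z\in\mathcal{D}$
\[
\mathcal{D}(Z\otimes_\mathcal{D}X,Y)=\mathcal{C}(\neg(\neg Z\parr_\mathcal{C}\neg X),Y)\simeq\mathcal{C}(\neg Y,\neg Z\parr_\mathcal{C}\neg X)\simeq\mathcal{C}(\neg(\neg X\parr_\mathcal{C}Y),\neg Z)\simeq\mathcal{C}(Z,\neg X\parr_\mathcal{C}Y)\simeq\mathcal{D}(Z,[X,Y]_\mathcal{D}),
\]
obtained by applying in turn the self-duality $\mathcal{C}(\neg B,A)\simeq\mathcal{C}(\neg A,B)$, the dialogue bijection $\varphi$, the self-duality together with $d_Z$, and finally the coreflection $T$, exhibits $[\cdot,\cdot]_\mathcal{D}$ as an internal hom. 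The symmetry isomorphism $s_{X,Y,Z}$ and the unit isomorphism $i_X$ transport from those of $\parr_\mathcal{C}$, and one computes $\perp=[\,{\neg I},{\neg I}\,]_\mathcal{D}\simeq\neg I$ and $X^{*}=[X,\perp]_\mathcal{D}\simeq\neg X$; here the identification of the induced duality $[\cdot,\perp]_\mathcal{D}$ with $\neg$ uses that the dialogue unit is self-dual, $\neg I\simeq I$, which holds in all our models by the mix rule. The dualizing isomorphism $d_X:X\to[[X,\perp]_\mathcal{D},\perp]_\mathcal{D}\simeq\neg\neg X$ required by Lemma \ref{ClosedCatToStarAut} is the isomorphism of the first step, so the lemma produces a $*$-autonomous structure whose tensor $([X,Y^{*}]_\mathcal{D})^{*}$ unwinds, via $X^{*}\simeq\neg X$, to $\neg(\neg X\parr_\mathcal{C}\neg Y)=X\otimes_\mathcal{D}Y$; hence $(\mathcal{D},\otimes_\mathcal{D},I,\neg)$ is $*$-autonomous.

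The main obstacle is the bookkeeping of the double negation rather than any single computation: because $\parr_\mathcal{C}$ does not restrict to $\mathcal{D}$, every structural map (associator, braiding, unitors, evaluation) must be corrected by the coreflector $T$, and one must check that the corrected data satisfy the closed-category axioms of \cite{Schipper} and the naturality hypotheses of Lemma \ref{ClosedCatToStarAut}. It is here that the two standing assumptions are consumed: commutativity guarantees that the two ways of correcting a product agree, so that the braiding is coherent and $\neg$ is symmetric strong monoidal, while idempotency is what makes $d_X$ invertible on $\mathcal{D}$ and collapses the iterated negations $\neg\neg\neg\simeq\neg$ used throughout the hom-set computation.
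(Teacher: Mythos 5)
Your route differs from the paper's in its final step: the paper obtains the $*$-autonomy of the continuation category $\mathcal{C}^{\neg}$ wholesale from \cite[Prop 2.9]{TabareauPhD} (Hasegawa) and then only has to establish the equivalence $\mathcal{D}\simeq\mathcal{C}^{\neg}$ and the strong monoidality of $\neg$ via the strengths, whereas you rebuild the symmetric closed structure on $\mathcal{D}$ by hand and feed it into Lemma \ref{ClosedCatToStarAut}. Your first two paragraphs are essentially sound: the identification of $\mathcal{D}$ with the Kleisli category of the idempotent comonad, the hom-set chain exhibiting $[X,Y]_\mathcal{D}=\neg\neg(\neg X\parr_\mathcal{C}Y)$ as an internal hom (the last bijection $\mathcal{C}(Z,W)\simeq\mathcal{C}(Z,\neg\neg W)$ does hold for $Z\in\mathcal{D}$ by idempotency), and the use of the two strengths and commutativity to transport the monoidal structure all check out, modulo coherence computations that the paper also leaves to the reader or to the citation. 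You even correctly identify the $\otimes_\mathcal{D}$-unit as $\neg I$ rather than $I$.

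The gap is in the application of Lemma \ref{ClosedCatToStarAut}. That lemma hard-codes the dualizing object as $\perp=[1_\mathcal{D},1_\mathcal{D}]_\mathcal{D}\simeq 1_\mathcal{D}=\neg I$ and takes as a \emph{hypothesis} that $d_X:X\to[[X,\perp]_\mathcal{D},\perp]_\mathcal{D}$ is a natural isomorphism. With your internal hom, $[X,\neg I]_\mathcal{D}=\neg\neg(\neg X\parr_\mathcal{C}\neg I)$, and since $\neg I$ is not the $\parr_\mathcal{C}$-unit this is not $\neg X$ in general; the object whose hom recovers $\neg$ is $\neg\neg I$ (the $\parr$-unit of $\mathcal{D}$), for which $[X,\neg\neg I]_\mathcal{D}\simeq\neg\neg(\neg X\parr_\mathcal{C}I)\simeq\neg X$ indeed holds. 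The two candidates agree exactly when $\neg I\simeq\neg\neg I$, a mix-type condition that you explicitly invoke (``$\neg I\simeq I$ by the mix rule'') but which is not among the hypotheses of the lemma, which is a purely categorical statement. Concretely, any $*$-autonomous category without mix, viewed as a dialogue category with identity continuation monad, satisfies all the hypotheses while $[X,[1,1]]=X\multimap 1$ fails to be a duality, so the required $d_X$ is not invertible and Lemma \ref{ClosedCatToStarAut} cannot be applied. As written your argument therefore proves the statement only under the extra assumption $\neg I\simeq I$; to repair it you would need a variant of Lemma \ref{ClosedCatToStarAut} admitting an arbitrary dualizing object (here $\neg\neg I$) in place of $[1_\mathcal{C},1_\mathcal{C}]_\mathcal{C}$, or else follow the paper and import the $*$-autonomy of the continuation category from \cite[Prop 2.9]{TabareauPhD}.
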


\begin{proof}
From the already quoted \cite[Prop 2.9]{TabareauPhD} of Hagasawa, the cited Kleisli category (or Continuation category $\mathcal{C}^{\neg}$) is a $*$-autonomous category since we start from a   Dialogue category with commutative and idempotent continuation monad. 
Consider $\neg: \mathcal{C}^{\neg}\to \mathcal{D}$. $\mathcal{D}(\neg A,\neg B)=\mathcal{C}^{op}(\neg B,\neg A)=\mathcal{C}^{\neg}(A,B)$ which gives that $\neg$ is fully faithful on the continuation category. The map $\neg:\mathcal{D}\to \mathcal{C}^{\neg}$ is the strong inverse of the equivalence since $\neg\circ \neg\simeq Id_{\mathcal{D}}$ by choice of $\mathcal{D}$, and idempotency of the continuation and the canonical map $J_{\neg A}\in \mathcal{C}^{\neg}(\neg\neg(A),id_{ \mathcal{C}^{\neg}}(A))=\mathcal{C}^{op}(\neg A,\neg(T(a))$ is indeed natural in $A$ and it is an isomorphism in $\mathcal{C}^{\neg}.$ Therefore we have a strong equivalence. Recall that the commutative strength $t_{A,B}:A\parr_{\mathcal{C}} T(B)\to T(A\parr_{\mathcal{C}} B),t'_{A,B}:T(A)\parr_{\mathcal{C}} B\to T(A\o_{\mathcal{C}} B)$ in $\mathcal{C}^{op}$,  implies that we have isomorphisms $$I_{A,B}=\neg\Big(T(t'_{A,B})\circ t_{T(A),B}\Big)\circ J^{2op}_{\neg (A\parr_{\mathcal{C}} B)}:\neg (A\parr_{\mathcal{C}} B)\simeq \neg (\neg\neg A\parr_{\mathcal{C}} B)\simeq \neg (\neg\neg A\parr_{\mathcal{C}} \neg\neg   B)$$ with commutation relations $I_{A,B}=\neg\Big(T(t_{A,B})\circ t'_{A,T(B)}\Big)\circ J^{2op}_{\neg (A\parr_{\mathcal{C}} B)}$. 
This gives in $\mathcal{D}$ the compatibility map for the strong monad: $\mu_{A,B}=I_{A,B}^{op-1}:\neg (A\parr_{\mathcal{C}} B)\simeq\neg A\o_{\mathcal{D}} \neg B$. Checking the associativity and unitarity for this map is a tedious computation left to the reader using axioms of strengths, commutativity, functoriality. 
This concludes.
\end{proof}

\subsection{A strong notion of smooth maps}\label{sec:ksmooth}
 During this subsection, $\K=\R$ so that we deal with smooth maps and not holomorphic ones while we explore the consequence of our $*$-autonomy results for the definition of a nice notion of smoothness.
 
We recall the definition of (conveniently) smooth maps as used by Frolicher, Kriegl and Michor : a map $f : E \to F$ is smooth if and only if for every smooth curve $ c : \mathbb{R} \to E$, $f \circ c$ is a smooth curve. See \cite{KrieglMichor}. They define on a space of smooth curves the usual topology of uniform convergence on compact subsets of each derivative. Then they define  on the space of smooth functions between Mackey-complete spaces $E$ and $F$ the projective topology with respect to all smooth curves in $E$ (see also section 6.1 below).

As this definition fits well in the setting of bounded linear maps and bounded duals, but not in our setting using continuous linear maps, we make use of a slightly different approach by Meise \cite{Meise}.  Meise works with k-spaces, that is spaces $E$ in which continuity on $E$ is equivalent to continuity on compacts subsets of $E$. We change his definition and rather use a continuity condition on compact sets in the definition of smooth functions.

\begin{definition}
For $X,F$ separated lcs we call $C^\infty_{co}(X,F)$  the space of infinitely many times G\^ateaux-differentiable  functions with derivatives continuous on compacts with value in the space $L_{co}^{n+1}(E,F)=L_{co}(E,L_{co}^{n}(E,F))$ with at each stage the topology of uniform convergence on compact sets. We put on it the topology of uniform convergence on compact sets of all derivatives in the space $L_{co}^{n}(E,F).$

We denote by $C^\infty_{co}(X)$ the space $\Cinco(X, \mathbb{K})$.
\end{definition}

One could treat similarly the case of an open set $\Omega\subset X.$ We always assume $X$ $k$-quasi-complete.

Our definition is almost the same as in \cite{Meise}, except for the continuity condition restricted to compact sets. Meise works with $k$-spaces, that is spaces $E$ in which continuity on $E$ is equivalent to continuity on compacts subsets of $E$. Thus for $X$ a $k$-space, one recovers exactly Meise's definition. Since a (DFM) space $X$ is a $k$-space (\cite[Th 4.11 p 39]{KrieglMichor}) his corollary 7 gives us that for such an $X$, $C^\infty_{co}(X,F)$ is a Fr\'echet space as soon as $F$ is.
 Similarly for any (F)-space or any (DFS)-space $X$ then his corollary 13 gives $C^\infty_{co}(X,\R)$ is a Schwartz space.

As in his lemma 3 p\ 271,\ % but without $k$-space assumption, 
if $X$ $k$-quasi-complete, the G\^ateaux differentiability condition is automatically uniform on compact sets (continuity on absolutely convex closure of compacts of the derivative is enough for that), and as a consequence, this smoothness implies convenient smoothness. We will therefore use the differential calculus from \cite{KrieglMichor}.

 One are now ready to obtain a category.

\begin{proposition}\label{Meise1}
 $\kref$ is a category with $C^\infty_{co}(X,F)$ as spaces of maps, that we denote $\kref_\infty$. Moreover, for any $g\in C^\infty_{co}(X,Y), Y,X\in k-\mathbf{Ref}$, any $F$ Mackey-complete, $\cdot\circ g:C^\infty_{co}(Y,F)\to C^\infty_{co}(X,F)$ is linear continuous.
\end{proposition}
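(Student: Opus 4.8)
The plan is to deduce both assertions from the chain rule of convenient calculus together with the continuity of the composition and evaluation maps assembled in Corollary \ref{coro:otimes_gamma_kcompl}. First I would record the preliminaries. Identity maps and, more generally, continuous linear maps belong to $C^\infty_{co}$, since their higher derivatives are constant (or vanish) and hence trivially continuous on compacts. Next, a $k$-reflexive space $X$ lies in $\gamma$-\textbf{Kb}: by lemma \ref{kduality} one has $X=(X'_c)'_c$ with $X'_c\in\Kc$, which is exactly the form defining $\gamma$-\textbf{Kb}. Consequently Corollary \ref{prop:k-compl-lin} applies with $X$ in the first slot, and since $L_{co}^{n+1}(X,F)=L_c(X,L_{co}^n(X,F))$ (the topologies agree because $X$ is $k$-quasi-complete), an induction on $n$ shows that every iterated space $L_{co}^n(X,F)$ is Mackey-complete as soon as $F$ is. This guarantees the target spaces occurring in the definition of $C^\infty_{co}(X,F)$ are well-behaved and that $C^\infty_{co}(X,F)$ is the genuine object we want.

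The core of the first statement is stability under composition: given $g\in C^\infty_{co}(X,Y)$ and $f\in C^\infty_{co}(Y,F)$ with $X,Y\in\kref$ and $F$ Mackey-complete, I would show $f\circ g\in C^\infty_{co}(X,F)$. As recalled after the definition in subsection \ref{sec:ksmooth}, $C^\infty_{co}$-smoothness implies convenient smoothness, so $f\circ g$ is conveniently smooth and the chain rule of \cite{KrieglMichor} expresses each Gâteaux derivative $d^n(f\circ g)(x)$ as a finite Fa\`a di Bruno sum of terms obtained by composing $d^kf(g(x))\in L_{co}^k(Y,F)$ with tuples of maps $d^{j_i}g(x)\in L_{co}^{j_i}(X,Y)$. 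The remaining point is continuity of these derivatives on a compact $K\subset X$. Because the zeroth derivative of $g$ is continuous on $K$, the restriction $g|_K$ is continuous and $g(K)$ is compact in $Y$; hence $x\mapsto d^kf(g(x))$ is continuous on $K$ (it is the composite of $x\mapsto g(x)$, valued in the compact $g(K)$, with $d^kf$, which is continuous on $g(K)$), and likewise each $x\mapsto d^{j_i}g(x)$ is continuous on $K$. Feeding these continuous-on-$K$ factors into the continuous composition maps of Corollary \ref{coro:otimes_gamma_kcompl}, where the hypotheses $X,Y\in\kref$ are used, shows $x\mapsto d^n(f\circ g)(x)$ is continuous on $K$ with values in $L_{co}^n(X,F)$. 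Associativity and unit laws are inherited from set-theoretic composition of functions, which yields the category $\kref_\infty$.

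For the second statement, linearity of $f\mapsto f\circ g$ is immediate. For continuity I would unfold the topology of $C^\infty_{co}(X,F)$, a basic seminorm of which controls $\sup_{x\in K}$ of an $L_{co}^n(X,F)$-seminorm of $d^n(f\circ g)(x)$. Applying the same chain-rule expansion, each summand is a composition map applied to $d^kf(g(x))$ and to the factors $d^{j_i}g(x)$, which are fixed data attached to $g$. As $x$ ranges over $K$, the points $g(x)$ range over the compact $g(K)\subset Y$ and each factor $d^{j_i}g(x)$ ranges over a compact subset of $L_{co}^{j_i}(X,Y)$ (the continuous image of $K$); using once more the continuity of the composition maps of Corollary \ref{coro:otimes_gamma_kcompl}, the seminorm is therefore bounded by a finite combination of seminorms of the form $\sup_{y\in g(K)}q\big(d^kf(y)\big)$, which are exactly the seminorms defining the topology of $C^\infty_{co}(Y,F)$. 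Continuity of $\cdot\circ g$ follows.

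The hard part will be the topological bookkeeping inside the chain rule: one must verify that each composition/evaluation map assembling the factors is genuinely continuous for the topologies of uniform convergence on compacts, and that the derivative maps send $K$ to compact subsets of the relevant $L_{co}$ spaces. This is precisely what the continuous maps $ev$, $comp$ and the associators of Corollary \ref{coro:otimes_gamma_kcompl} supply, and it explains why $X$ and $Y$ are required to be $k$-reflexive, rather than merely Mackey-complete, so that their duals and iterated mapping spaces remain in the good subcategories $\kref$, $\gamma$-$\Kc$ and $\gamma$-\textbf{Kb}.
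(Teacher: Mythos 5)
Your proof is correct and follows essentially the same route as the paper: the convenient chain rule to expand $d^n(f\circ g)$, compactness of $g(K)$ and of the derivative images, and the continuous (hence $\gamma$-hypocontinuous) composition maps of Corollary \ref{coro:otimes_gamma_kcompl} to assemble the factors, with the hypocontinuity in the $f$-slot giving the continuity of $\cdot\circ g$. The only cosmetic difference is that the paper reduces to $F\in\Kc$ by passing to the $k$-quasi-completion (the continuity conditions being unchanged under induced topologies), whereas you establish Mackey-completeness of the iterated spaces $L_{co}^n(X,F)$ directly via Corollary \ref{prop:k-compl-lin}; both are fine.
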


\begin{proof}
For stability by composition, we show more, consider $g\in C^\infty_{co}(X,Y), f\in C^\infty_{co}(Y,F)$ with $X,Y\in k-\mathbf{Ref}$ and $F\in \Mc$ we aim at showing  $f\circ g\in C^\infty_{co}(X,F)$. We use stability of composition of conveniently smooth maps, we can use the chain rule \cite[Thm 3.18]{KrieglMichor}. This enables to make the derivatives valued in $F$ if $F$ is Mackey-complete so that, up to going to the completion, we can assume $F\in \Kc$ since the continuity conditions are the same when the topology of the target is induced.  This means that we must show continuity on compact sets of expressions of the form $$(x,h)\mapsto df^l(g(x))(d^{k_1}g(x),...,d^{k_l}g(x))(h_1,...,h_m),m=\sum_{i=1}^l k_i, h\in Q^m.$$
First note that $L_{co}(X,F)\simeq X'_c\varepsilon F$, $L_{co}^n(X,F)\simeq (X'_c)^{\varepsilon n}\varepsilon F$ fully associative for the spaces above. %or $X\in \gamma-Kc$ since $(X'_c)'_c=X$ is $k$-quasi-complete and therefore uniform convergence on compacts and absolutely convex compact set are equivalent.

 Of course for $K$ compact in $X$, $g(K)\subset Y$ is compact, so  $df^l\circ g$ is continuous on compacts with value in $(Y'_c)^{\varepsilon l}\varepsilon F$ so that continuity comes from continuity of the map obtained by composing various $Comp_{Y}^*,Ass_\varepsilon$ from Corollary \ref{coro:otimes_gamma_kcompl} (note $Ass_\gamma$ is not needed with chosen parentheses): $$\Big(\Big(...\Big(\big((Y'_c)^{\varepsilon l}\varepsilon F\big)\otimes_\gamma \big(\big((X'_c)^{\varepsilon k_1}\varepsilon Y\big)^*_k\big)^*_k\Big)\otimes_\gamma ...\Big)\otimes_\gamma \big(\big((X'_c)^{\varepsilon k_l}\varepsilon Y\big)^*_k\big)^*_k\Big)\to \big((X'_c)^{\varepsilon m}\varepsilon F\big)$$
 and this implies continuity on products of absolutely convex compact sets of the corresponding multilinear map even without $((\cdot)^*_k)^*_k$ since from lemma \ref{kduality} absolutely convex compact sets are the same in both spaces (of course with same induced compact topology).  We can compose it with the continuous function on compacts with value in a compact set (on compacts in $x$) :$x\mapsto (df^l(g(x)),d^{k_1}g(x),...,d^{k_l}g(x)).$ The continuity in $f$ is similar and uses hypocontinuity of the above composition (and not only its continuity on products of compacts).
\end{proof}

We now prove the \emph{Cartesian closedeness} of the category $\kref$, the proofs being slight adaptation of the work by Meise \cite{Meise}

\begin{proposition}\label{Meise2}\label{prop:kcomplCco}
For any $X\in k-\mathbf{Ref}$, $C^\infty_{co}(X,F)$ is $k$-quasi-complete (resp. Mackey-complete) as soon as $F$ is.
\end{proposition}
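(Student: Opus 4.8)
The plan is to realize $\Cinco(X,F)$ as a closed subspace of a product of spaces of continuous-on-compacts functions, and then to invoke stability of the two completeness notions under products and closed subspaces. Concretely, by the very definition of the topology, the map $f\mapsto (d^nf)_{n\ge 0}$ (with $d^0f=f$) is a topological embedding $\Cinco(X,F)\hookrightarrow \prod_{n\ge 0}C_{co}(X,L_{co}^n(X,F))$, where $C_{co}(X,G)$ denotes functions continuous on every compact of $X$ with the topology of uniform convergence on compacts. First I would record the elementary fact that both $k$-quasi-completeness and Mackey-completeness pass to arbitrary products (for a compact $K$ in a product, $\overline{\Gamma(K)}\subseteq \prod_n\overline{\Gamma(\pi_n(K))}$ is a closed subset of a product of compacts, and Mackey-completeness is likewise stable by products) and to closed subspaces (the closed absolutely convex cover of a compact of the subspace is computed identically in the ambient space). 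Thus it suffices to treat the factors and then to prove closedness of the image.

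For the factors I would first check that each $L_{co}^n(X,F)$ is $k$-quasi-complete (resp.\ Mackey-complete) when $F$ is. Using the identifications $L_{co}(X,G)\simeq X'_c\varepsilon G$ and $L_{co}^n(X,F)\simeq (X'_c)^{\varepsilon n}\varepsilon F$ recorded in the proof of Proposition \ref{Meise1}, this follows inductively on $n$: since $X\in\kref$, Lemma \ref{kduality} gives that $X'_c$ is $k$-quasi-complete and lies in $\gammaKc$; then Proposition \ref{lem:Kc_epsilon} (stability of $\Kc$ by $\varepsilon$) handles the $k$-quasi-complete case, while Corollary \ref{prop:defs_epsilon_coincide} (applied with $E=X'_c\in\gammaKc$ and Mackey-complete target) handles the Mackey-complete case. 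Next I would show $C_{co}(X,G)$ is $k$-quasi-complete (resp.\ Mackey-complete) when $G$ is. As $C_{co}(X,G)$ is a closed subspace of $\prod_K C(K,G)$ over the compacts $K\subseteq X$ (the compatibility-under-restriction conditions being closed), by the stability results it is enough to treat $C(K,G)$ with its uniform topology. For $k$-quasi-completeness I would use Arzela--Ascoli \cite[3 \S 9]{Horvath}: a compact $\mathcal{H}$ is equicontinuous and pointwise relatively compact, its absolutely convex cover remains equicontinuous, and its pointwise values lie in $\overline{\Gamma(\delta_x(\mathcal{H}))}$, which is compact because $G$ is $k$-quasi-complete; hence $\overline{\Gamma(\mathcal{H})}$ is relatively compact, being closed it is compact. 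For Mackey-completeness, a $B$-Cauchy sequence converges uniformly on $K$ (the bound defining $B$ dominates the sup-seminorms), so its pointwise limit is continuous and, $B$ being closed, the convergence is Mackey, exactly as in the argument of Proposition \ref{Meta}.

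The main work, and what I expect to be the \emph{main obstacle}, is to show that the image of $\Cinco(X,F)$ is closed in the product, i.e.\ that differentiation commutes with the relevant limits. A tuple $(g_n)_n$ lies in the image iff each $g_n$ is the G\^ateaux derivative of $g_{n-1}$, which I would encode through the fundamental-theorem-of-calculus identity $g_{n-1}(x+tv)-g_{n-1}(x)=\int_0^t g_n(x+sv)(v)\,ds$, valid for all $x,v,t$; the integral exists because $L_{co}^{n-1}(X,F)$ has just been shown complete enough and because our maps are conveniently smooth, so one may integrate along smooth curves in the sense of \cite{KrieglMichor}. This is a closed condition: under uniform-on-compacts convergence both sides converge pointwise --- the right-hand side because the integrand converges uniformly on the compact segment $\{x+sv:s\in[0,t]\}$ --- so the identity survives in the limit; conversely a tuple satisfying these identities for all $n$ has a smooth zeroth term with $d^ng_0=g_n$, so it lies in the image. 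The delicate point is precisely this interchange of limit with integral in a merely $k$-quasi-complete (resp.\ Mackey-complete) target, which is resolved by the completeness of the factors established above together with the reduction to convenient calculus noted before Proposition \ref{Meise1}. Combining everything, the product $\prod_n C_{co}(X,L_{co}^n(X,F))$ is $k$-quasi-complete (resp.\ Mackey-complete), and $\Cinco(X,F)$ is a closed subspace of it, hence inherits the same completeness, which is the claim.
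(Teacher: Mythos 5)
Your proof is correct and follows essentially the same route as the paper's: the paper also reduces the statement, via the projective kernel defining the topology on $C^\infty_{co}(X,F)$, to the completeness of $C^0(K,F)$ for $K$ compact (proved by a pointwise-then-uniform convergence argument close to your Ascoli/bipolar one) and to the completeness of the spaces of linear maps $L^n_{co}(X,F)\simeq (X'_c)^{\varepsilon n}\varepsilon F$, which it gets from Corollary \ref{prop:k-compl-lin} rather than re-deriving from Proposition \ref{lem:Kc_epsilon}. The one place you go beyond the paper is the explicit verification that the image is closed in the product (the fundamental-theorem-of-calculus encoding of the derivative relations); the paper leaves this implicit in the phrase ``projective kernel topology,'' so your elaboration is a legitimate filling-in rather than a divergence.
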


\begin{proof}
This follows from the projective kernel topology on $\Cinco(X,F)$, Corollary \ref{prop:k-compl-lin} and the corresponding statement for $C^0(K,F)$ for $K$ compact. In the Mackey-complete case we use the remark at the beginning of step 2 of the proof of Theorem \ref{rhoref} that a space is Mackey-complete if and only if the bipolar of any Mackey-Cauchy sequence is complete.  We treat the two cases in parallel, if $F$ is $k$-quasi-complete (resp Mackey-complete), take $L$ a compact set (resp. a Mackey-Cauchy sequence) in $C^0(K,F)$, $M$ its bipolar, its image by evaluations $L_x$ are compact (resp. a Mackey-null sequence) in $F$ and the image of $M$ is in the bipolar  of $L_x$ which is complete in $F$ hence a Cauchy net in $M$ converges pointwise in $F$. But the Cauchy property of the net then implies as usual uniform convergence of the net to the pointwise limit. This limit is therefore continuous, hence the result.
\end{proof}

The following two propositions are an adaptation of the result by Meise \cite[Thm 1 p 280]{Meise}. Remember though that his $\varepsilon$ product and $E'_c$ are different from ours, they correspond to replacing absolutely convex compact sets by precompact sets. This different setting forces him to assume quasi-completeness to obtain a symmetric  $\varepsilon$-product in his sense.

\begin{proposition}\label{Meise3}
For any $k$-reflexive space $X$, any compact $K$, and any separated $k$-quasi-complete space $F$ one has $\Cinco(X,F) \simeq \Cinco(X)\varepsilon F, C^0(K,F) \simeq C^0(K)\varepsilon F$. Moreover, if $F$ is any \lcs, we still have a canonical embedding $J_X:\Cinco(X)\varepsilon F\to \Cinco(X,F)$.
\end{proposition}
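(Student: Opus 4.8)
The plan is to realize both sides as categorical (projective) limits of the same system of continuous-function spaces and to transport the identification across the symmetric monoidal structure of $(\Kc,\varepsilon,\K)$. I would rely on two building blocks. First, the classical continuous $\varepsilon$-theorem: for every compact $K$ and every $k$-quasi-complete $G$ one has a topological isomorphism $C^0(K,G)\simeq C^0(K)\varepsilon G$, with $\varepsilon$ taken with respect to the Arens dual exactly as here, and $C^0(K)$ a Banach space hence $k$-quasi-complete (see \cite{Schwartz,Treves}; Schwartz's argument needs only that the closed absolutely convex cover of the compact set $f(K)$ be compact, which is our $k$-quasi-completeness). This is precisely the second isomorphism claimed, and it serves as the building block for the first. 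Second, Proposition~\ref{Meise1} supplies the fully associative identifications $L_{co}^n(X,F)\simeq (X'_c)^{\varepsilon n}\varepsilon F$, valid since $X\in\kref$ and $F\in\Kc$; here $(X'_c)^{\varepsilon n}$ is $k$-quasi-complete because $X'_c$ is ($X$ being $k$-reflexive) and $\Kc$ is stable under $\varepsilon$ (Proposition~\ref{lem:Kc_epsilon}).

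Next I would use that, by its very definition, $\Cinco(X,F)$ carries the projective kernel topology along the differentiation-and-restriction maps $d^n|_K:\Cinco(X,F)\to C^0(K,L_{co}^n(X,F))$ for $n\geq 0$ and $K\subset X$ compact. Inserting the two building blocks, each term rewrites as $C^0\big(K,(X'_c)^{\varepsilon n}\varepsilon F\big)\simeq C^0(K)\varepsilon\big((X'_c)^{\varepsilon n}\varepsilon F\big)$. Since $C^0(K)$, $(X'_c)^{\varepsilon n}$ and $F$ all lie in $\Kc$, the associativity of $\varepsilon$ in $\Kc$ (Proposition~\ref{lem:Kc_epsilon}) turns this into $\big(C^0(K)\varepsilon(X'_c)^{\varepsilon n}\big)\varepsilon F\simeq C^0\big(K,(X'_c)^{\varepsilon n}\big)\varepsilon F$, using the scalar theorem once more in the reverse direction. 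Because $\varepsilon$ commutes with categorical limits, hence with projective kernels (it commutes with products and with kernels, Proposition~\ref{etageneral}(4) and the Corollary following Theorem~\ref{FirstMALL}), taking the kernel over all pairs $(n,K)$ and pulling $\cdot\,\varepsilon F$ outside yields $\Cinco(X,F)\simeq\big(\mathrm{K}_{(n,K)}(d^n|_K)^{-1}C^0(K,(X'_c)^{\varepsilon n})\big)\varepsilon F=\Cinco(X)\varepsilon F$, the inner kernel being exactly the scalar space $\Cinco(X)=\Cinco(X,\K)$ (the case $F=\K$, where $(X'_c)^{\varepsilon n}\varepsilon\K=(X'_c)^{\varepsilon n}$).

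Finally, for an arbitrary lcs $F$ I would obtain $J_X$ by descent from the $k$-quasi-completion $\widehat{F}^K$. The canonical embedding $F\hookrightarrow\widehat{F}^K$ induces, by part (3) of the Corollary following Theorem~\ref{FirstMALL}, an embedding $\Cinco(X)\varepsilon F\to\Cinco(X)\varepsilon\widehat{F}^K$, and the isomorphism just established (applied to the $k$-quasi-complete target $\widehat{F}^K$) identifies the range with $\Cinco(X,\widehat{F}^K)$. Through this identification, realized by evaluation at Diracs, an element $u\in\Cinco(X)\varepsilon F=L(\Cinco(X)'_c,F)$ corresponds to the function $x\mapsto u(\delta_x)$, where $\delta_x\in\Cinco(X)'_c$ is evaluation at $x$; since $u$ takes values in $F$, this function lies in the subspace $\Cinco(X,F)\subset\Cinco(X,\widehat{F}^K)$, which carries the induced topology (the defining seminorms being computed through the subspaces $L_{co}^n(X,F)\subset L_{co}^n(X,\widehat{F}^K)$). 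Corestricting gives the desired embedding $J_X:\Cinco(X)\varepsilon F\to\Cinco(X,F)$.

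The genuinely delicate points are, on the one hand, the associativity step: it is exactly why the hypotheses force $X$ to be $k$-reflexive and $F$ to be $k$-quasi-complete, so that the three factors $C^0(K)$, $(X'_c)^{\varepsilon n}$ and $F$ all live in the monoidal category $\Kc$ where $\varepsilon$ is associative. On the other hand, one must check that the connecting maps of the two projective kernels correspond under the natural isomorphisms, so that the identification is topological and not merely algebraic; this bookkeeping of kernel maps is the main obstacle, though it is routine once the term-by-term isomorphisms above are in place.
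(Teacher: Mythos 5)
Your strategy --- realize both sides as projective kernels over the pairs $(n,K)$ and transport the identification term by term through the monoidal structure of $(\Kc,\varepsilon,\K)$ --- is genuinely different from the paper's proof, which follows Meise: it constructs the evaluation map $ev_X\in \Cinco(X,(\Cinco(X))'_c)$, shows that $u\mapsto u\circ ev_X$ lands in $\Cinco(X,F)$, and exhibits an explicit right inverse $j(f)(y')=y'\circ f$ to get surjectivity. Your term-by-term isomorphisms $C^0(K,L^n_{co}(X,F))\simeq C^0\bigl(K,(X'_c)^{\varepsilon n}\bigr)\varepsilon F$ and the use of associativity in $\Kc$ are correct, and you are right that this is exactly where $k$-reflexivity of $X$ and $k$-quasi-completeness of $F$ enter.

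There is, however, a genuine gap at the step ``pulling $\cdot\,\varepsilon F$ outside the kernel.'' The projective kernels here are kernels in K\"othe's sense: a \emph{prescribed} underlying vector space equipped with the initial topology for a family of maps. The commutation statements you invoke (Proposition~\ref{etageneral}(4) and the corollary after Theorem~\ref{FirstMALL}) only identify topologies on underlying sets that have already been matched; they do not produce a linear bijection between the underlying set of $\Cinco(X,F)$ (smooth functions) and that of $\Cinco(X)\varepsilon F=L\bigl((\Cinco(X))'_c,F\bigr)$. That bijection is the actual content of the proposition. Nor is $\Cinco(X,F)$ presented as a categorical limit of the diagram of spaces $C^0(K,L^n_{co}(X,F))$: there are no continuous linear connecting maps in $\LCS$ expressing ``$g_{n+1}$ is the derivative of $g_n$,'' so the set of compatible families is not cut out by equalizers, and one cannot invoke ``$\varepsilon$ commutes with categorical limits'' to move $F$ in and out of the limit at the level of underlying spaces. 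What you dismiss as routine bookkeeping is precisely the two nontrivial directions: (i) well-definedness of $u\mapsto(x\mapsto u(\delta_x))$, i.e.\ that $ev_X$ is itself a $\Cinco$-map into $(\Cinco(X))'_c$, which requires the argument that on equicontinuous subsets of $(\Cinco(X))'$ the compact-open and weak topologies agree; and (ii) surjectivity, i.e.\ that every $f\in\Cinco(X,F)$ arises this way, which the paper obtains from $j(f)(y')=y'\circ f$ together with the evaluation maps of Corollary~\ref{coro:otimes_gamma_kcompl} to see that $y'\circ d^jf(x)\in(X'_c)^{\varepsilon j}$ depends continuously on $y'$. Without these, your chain of isomorphisms identifies topologies but not spaces. (A smaller point: the scalar case $C^0(K,F)\simeq C^0(K)\varepsilon F$ under mere $k$-quasi-completeness is itself part of the statement being proved; the paper establishes it by the same evaluation-map argument rather than quoting it as classical.)
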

\begin{proof}
 We build a map $ev_X\in C^\infty_{co}(X, (C^\infty_{co}(X))'_c)$ defined by $ev_{X}(x)(f)=f(x)$ and show that $\cdot \circ ev_X: C^\infty_{co}(X)\varepsilon F=L_\epsilon((C^\infty_{co}(X))'_c,F)\to C^\infty_{co}(X, F)$ is a topological isomorphism and an embedding if $F$ only Mackey-complete. The case with a compact $K$ is embedded in our proof and left to the reader.

(a) We first show that the expected  $j$-th differential $ev_{X}^j(x)(h)(f)=d^jf(x).h$ indeed gives a map: $$ev_X^j\in C^0_{co}(X, L^j_{co}(X,(C^\infty_{co}(X))'_c)).$$
%Consider an open set $U$ in $(\Cinco(E))'_c$.
First note that for each $x\in X$, $ev_{X}^j(x)$ is in the expected space. Indeed, by definition of the topology $f\mapsto d^jf(x)$ is  linear continuous in $L(C^\infty_{co}(X), L^j_{co}(X,\R))\subset L(((C^\infty_{co}(X))'_c)'_c, L^j_{co}(X,\R))= (C^\infty_{co}(X))'_c\varepsilon (X'_c)^{\varepsilon j}$. Using successively $Ass^\epsilon$ from Corollary \ref{coro:otimes_gamma_kcompl} (note no completeness assumption on $(C^\infty_{co}(X))'_c$ is needed for that) hence $ev_{X}^j(x)\in (\cdots ((C^\infty_{co}(X))'_c\varepsilon X'_c) \cdots \varepsilon X'_c)=L^j_{co}(X,(C^\infty_{co}(X))'_c)$.
 Then, once the map well-defined, we must check its continuity on compacts sets in variable $x\in K\subset X$, uniformly on compacts sets for $h\in Q$, one must check convergence in  $(C^\infty_{co}(X))'_c.$ But everything takes place in a product of compact sets and from the definition of the topology on $C^\infty_{co}(X)$, $ev_X^j(K)(Q)$ is equicontinuous in $(C^\infty_{co}(X))'$. But from \cite[\S 21.6.(2)]{Kothe} the topology $(C^\infty_{co}(X))'_c$ coincides with $(C^\infty_{co}(X))'_\sigma$ on these sets. Hence we only need to prove for any $f$ continuity of $d^jf$ and this follows by assumption on $f$. This concludes the proof of (a).

(b) Let us note that for $f\in L_\varepsilon((C^\infty_{co}(X))'_c,F)$, $f \circ ev_X\in C^\infty_{co}(X, F)$. We first note that  $f \circ ev_X^j(x)=d^j(f\circ ev_X)(x)$ as in step  (c) in the proof of \cite[Thm 1]{Meise}. This shows for $F=(C^\infty_{co}(X))'_c$ that the G\^ateaux derivative  is $d^jev_X=ev_X^j$ and therefore the claimed 
$ev_X\in C^\infty_{co}(X, (C^\infty_{co}(X))'_c)$.

(c)$f\mapsto f \circ ev_X$ is the stated isomorphism. The monomorphism property is the same as (d) in Meise's proof. Finding a right inverse $j$ proving surjectivity is the same as his (e) . Let us detail this since we only assume $k$-quasi-completeness on $F$. We want $j:C^\infty_{co}(X, F)\to C^\infty_{co}(X)\varepsilon F=L(F'_c,C^\infty_{co}(X))$
for $y'\in F',f\in C^\infty_{co}(X, F)$ we define $j(f)(y')=y'\circ f$. Note that from convenient smoothness we know that the derivatives are $y'\circ d^jf(x)$ and $d^jf(x)\in (X'_c)^{\varepsilon n}\varepsilon F=(X'_c)^{\varepsilon n}\varepsilon (F'_c)'_c$ algebraically so that, since $(F'_c)'_c$ $k$-quasi-complete, one can use $ev_{F'_c}$ from Corollary \ref{coro:otimes_gamma_kcompl} to see $y'\circ d^jf(x)\in (X'_c)^{\varepsilon n}$ and one even deduces (using only separate continuity of $ev_{F'_c}$) its continuity in $y'$. Hence $j(f)(y')$ is valued in $C^\infty_{co}(X)$ and from the projective kernel topology, $j(f)$ is indeed continuous. The simple identity showing that $j$ is indeed the expected right inverse proving surjectivity is the same as in Meise's proof.
\end{proof}

\begin{proposition}\label{Meise4}
For any space $X_1,X_2\in k-\mathbf{Ref}$ and any 
Mackey-complete \lcs $F$ we have :
 $$C^\infty_{co}(X_1\times X_2,F)\simeq C^\infty_{co}(X_1,C^\infty_{co}(X_2,F)).$$
\end{proposition}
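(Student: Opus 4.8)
The plan is to build the currying bijection $\Lambda : \Cinco(X_1\times X_2,F)\to \Cinco(X_1,\Cinco(X_2,F))$ given by $\Lambda(f)(x_1)(x_2)=f(x_1,x_2)$, and to show it is a topological isomorphism. First I would obtain the underlying bijection at the level of \emph{conveniently} smooth maps: since our smoothness implies convenient smoothness (as noted before Proposition \ref{Meise1}) and $F$ is Mackey-complete, the Kriegl--Michor exponential law \cite{KrieglMichor} already gives that $f$ is conveniently smooth on $X_1\times X_2$ if and only if $\Lambda(f)$ is conveniently smooth $X_1\to \Cinco(X_2,F)$. The point left to check is that the extra continuity-on-compacts condition defining $\Cinco$ is preserved in both directions; here I would use two elementary facts: that the total derivatives $d^nf$ decompose into partial derivatives $d_1^{p}d_2^{q}f$ (so that $d^n(\Lambda(f))(x_1)$ is read off from the partials of $f$ in the $X_1$-directions with values in $\Cinco(X_2,F)$), and that any compact $K\subset X_1\times X_2$ is contained in the product $\pi_1(K)\times\pi_2(K)$ of its compact projections, so that continuity on compacts of $X_1\times X_2$ is governed by continuity on products $K_1\times K_2$.

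For the topological identification, rather than estimating the two projective-kernel topologies head-on, I would reduce to the scalar target $F=\K$ using the $\varepsilon$-representation of Proposition \ref{Meise3} together with the monoidal structure of $(\Kc,\varepsilon,\K)$ from Proposition \ref{lem:Kc_epsilon}. Assuming first that $F$ is $k$-quasi-complete, Proposition \ref{Meise3} applied twice (legitimately, since $\Cinco(X_2,F)$ and $\Cinco(X_2)\varepsilon F$ are $k$-quasi-complete by Propositions \ref{Meise2} and \ref{lem:Kc_epsilon}) gives
\begin{align*}
\Cinco(X_1,\Cinco(X_2,F))&\simeq \Cinco(X_1)\varepsilon \Cinco(X_2,F)\\
&\simeq \Cinco(X_1)\varepsilon\big(\Cinco(X_2)\varepsilon F\big)\simeq \big(\Cinco(X_1)\varepsilon \Cinco(X_2)\big)\varepsilon F,
\end{align*}
the last step by associativity of $\varepsilon$ in $\Kc$, while the left-hand side is $\Cinco(X_1\times X_2,F)\simeq \Cinco(X_1\times X_2)\varepsilon F$. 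Thus the whole statement follows, by applying the functor $\cdot\,\varepsilon F$, from the scalar \emph{kernel theorem}
\[
\Cinco(X_1\times X_2)\simeq \Cinco(X_1)\varepsilon \Cinco(X_2).
\]

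The scalar kernel theorem is the heart of the matter. By Proposition \ref{Meise3} (with the $k$-quasi-complete target $\Cinco(X_2)$) its right-hand side is $\Cinco(X_1,\Cinco(X_2))$, so it is exactly the $F=\K$ case of the exponential law, and here I would prove directly that $\Lambda$ is a homeomorphism: it is continuous because, for compacts $K_1\subset X_1$, $K_2\subset X_2$ and finitely many directions, the seminorm controlling $d^p(\Lambda(f))$ uniformly on $K_1$ with values measured by the $\Cinco(X_2)$-seminorms on $K_2$ is dominated by a single $\Cinco(X_1\times X_2)$-seminorm on $K_1\times K_2$ via the partial-derivative decomposition; conversely the uncurrying $\Lambda^{-1}$ is continuous because every compact of $X_1\times X_2$ sits in such a product $K_1\times K_2$, so each defining seminorm of $\Cinco(X_1\times X_2)$ is recovered from the iterated seminorms. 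The main obstacle is precisely this continuity of $\Lambda^{-1}$ together with the stability of the compact-continuity condition on the \emph{mixed} partial derivatives; it is exactly the step where one needs $X_1,X_2$ to be $k$-reflexive (so that the $L_{co}$-valued derivatives behave well and the $\varepsilon$-associators of Corollary \ref{coro:otimes_gamma_kcompl} apply) rather than merely Mackey-complete. Finally, for $F$ only Mackey-complete I would remove the $k$-quasi-completeness hypothesis by working inside the $k$-quasi-completion $\widehat{F}^K$: all the continuity conditions are unchanged since $F$ carries the induced topology, and the canonical embedding $J_{X}$ of Proposition \ref{Meise3} lets one restrict the isomorphism obtained for $\widehat{F}^K$ to the subspaces of maps actually valued in $F$, which correspond on both sides under $\Lambda$.
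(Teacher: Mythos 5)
Your argument is correct and rests on the same two pillars as the paper's proof (the Kriegl--Michor exponential law for the set-level curry bijection, and the reduction from Mackey-complete $F$ to the $k$-quasi-completion $\widehat{F}^K$ since all derivatives are pointwise valued in $F$), but you organize the topological identification differently. The paper never isolates the scalar case: it works with a general $k$-quasi-complete $F$ throughout, checks on each component of the projective kernel that $d^k\bigl(d^j\Lambda(f)(x_1)\bigr)(x_2)$ is continuous on compacts with values in $L^j_c\bigl(X_1,C^0(K_2,L^k_c(X_2,F))\bigr)$, and uses the associativity of $\varepsilon$ on $\Kc$ only to commute $C^0(K_2)$ past $L^j_c(X_1,\cdot)$, together with $C^0(K_1,C^0(K_2))=C^0(K_1\times K_2)$ for the inverse. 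You instead push the monoidal machinery further: two applications of Proposition \ref{Meise3} plus associativity of $\varepsilon$ reduce the whole statement to the scalar kernel theorem $\Cinco(X_1\times X_2)\simeq\Cinco(X_1)\varepsilon\Cinco(X_2)$, which you then prove by hand. This buys a cleaner separation between the analytic content (done once, for $F=\K$) and the functorial bookkeeping, and makes the Seely/kernel-theorem flavour explicit; the price is two extra verifications you should not leave implicit: (i) that the composite of the abstract $\varepsilon$-isomorphisms really is the curry map $\Lambda$ (routine, since every map in the chain is determined by point evaluations, but needed to match your first paragraph with your last two), and (ii) in the direct scalar proof, that $d^p\Lambda(f)(x_1)$ is genuinely a \emph{continuous} $p$-linear map into $\Cinco(X_2)$ and not merely continuous on compacts --- this is where one must invoke the characterization of relatively compact sets of $E\varepsilon F$ as $\varepsilon$-equihypocontinuous sets from Proposition \ref{lem:Kc_epsilon}, which is the precise content behind your phrase ``the $L_{co}$-valued derivatives behave well'' and the real place where $k$-reflexivity of $X_1,X_2$ enters. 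With those two points spelled out, your route is a valid alternative to the one in the paper.
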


 \begin{proof}
Construction of the curry map $\Lambda$ is analogous to \cite[Prop 3 p 296]{Meise}. Since all spaces are Mackey-complete, we already know from \cite[Th 3.12]{KrieglMichor} that there is a Curry map valued in $C^\infty(X_1,C^\infty(X_2,F))$, it suffices to see that the derivatives $d^{j}\Lambda(f)(x_1)$ are continuous on compacts with value $C^\infty_{co}(X_2,F).$ But this derivative coincides with a partial derivative of $f$, hence it is valued pointwise  in  $C^\infty_{co}(X_2,F)\subset C^\infty_{co}(X_2,\widehat{F}^K).$ Since we already know all the derivatives are pointwise valued in $F$, we can assume $F$ $k$-quasi-complete.  But the topology for which we must prove continuity is a projective kernel, hence we only need to see that $d^k(d^{j}\Lambda(f)(x_1))(x_2)$ continuous on compacts in $x_1$ with value in 
$L_{c}^j(X_1,C^0(K_2,L_{c}^j(X_2,F)))$. But we are in the case where the $\varepsilon$ product is associative, hence the above space is merely $C^0(K_2)\varepsilon L_{c}^j(X_1,L_{c}^j(X_2,F))=C^0(K_2, L_{c}^j(X_1,L_{c}^j(X_2,F))).$ We already know the stated continuity in this space from the choice of $f$. The reasoning for the inverse map is similar using again the convenient smoothness setting (and Cartesian closedness $C^0(K_1,C^0(K_2))=C^0(K_1\times K_2)$).
 \end{proof}

\section{Schwartz locally convex spaces, Mackey-completeness and $\rho$-dual.}
\label{sec:rhodual}
\label{sec:rhorefl}
 In order to obtain a $*$-autonomous category adapted to convenient smoothness, we want to replace $k$-quasi-completeness by the weaker Mackey-completeness and adapt our previous section.

In order to ensure associativity of the dual of the $\varepsilon$-product, Mackey-completeness is not enough as we saw in section \ref{sec:zeta}.  We have to restrict simultaneously to Schwartz topologies. After some preliminaries in subsection 5.1, we thus define our appropriate weakened reflexivity (${\rho}$-reflexivity) in subsection 5.2, and investigate categorical completeness in 5.3.

 We want to put a $*$-autonomous category structure on the category 
$\rRef$ of ${\rho}$-reflexive %(separated) 
(which implies Schwartz Mackey-complete) locally convex 
spaces with continuous linear maps as morphisms.
 
 It turns out that one can carry on as in section \ref{sec:kref} and put a Dialogue category structure on Mackey-complete Schwartz spaces. Technically, the structure is derived via an intertwining from the one in $\CSch$ in Theorem \ref{FirstMALL}. This category can even be seen as chosen in order to fit our current Mackey-complete Schwartz setting. We actually proved all the results first without using it and made appear the underlying categorical structure afterwards.
 
Then the continuation monad will give the $*$-autonomous category structure on $\rRef$ where the internal hom is described as  $$E\multimap_\rho F=(([E^*_{\rho}]\varepsilon F)^*_{\rho})^*_{\rho}$$ and based on a twisted Schwartz $\varepsilon$-product. The space is of course the same (as forced by the maps of the category) but the topology is strengthened. %(Note that the variant in \cite{Jarchow} wouldn't work for us since it only gives in Laurent-Schwartz notation $\tilde{E}\varepsilon F$ with an annoying unsymmetric use of completion). 
Our preliminary work on double dualization in section 6.2 make this construction natural to recover an element of $\rRef$ anyway. %It is known from \cite[Prop 1]{Schwartz} that the $\varepsilon$ is a multifunctor (in the category with continuous linear maps as morphisms) and that it preserves continuous injections, and continuous injections having the induced topology. 

 \subsection{Preliminaries in the Schwartz Mackey-complete setting}
 
 We define $\Mc$ (resp. $\Sch$) the category of Mackey-complete spaces (resp Schwartz space) and linear continuous maps. The category $\McS$ is the category of Mackey-complete Schwartz spaces.

 We first recall \cite[Corol 16.4.3]{Jarchow}. Of course it is proven their for the completed variant, but by functoriality, the original definition of this product is a subspace and thus again a Schwartz space.
 
 \begin{proposition}\label{Sepsilon}
 If $E$ and $F$ are separated Schwartz locally convex spaces, then  so is $E\varepsilon F$.
 \end{proposition}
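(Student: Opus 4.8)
The plan is to reduce the statement to the already known \emph{complete} case, where it is a result of Jarchow, and then descend to the uncompleted $\varepsilon$-product using the stability of the Schwartz property under passage to subspaces. I deliberately avoid the slicker route through corollary \ref{SchwartzCondExpEpsilon} (which would give $\mathscr{S}(E\varepsilon F)=E\varepsilon F$ at once when $E,F$ are Schwartz), since that corollary is itself proved using the present proposition, so invoking it here would be circular. The ingredients I would use are all recalled in subsection \ref{subsec:remindertvs}: a \lcs is a Schwartz space if and only if its completion is one, and any subspace of a Schwartz space is again a Schwartz space.

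First I would pass to the completions. Since $E,F$ are Schwartz, their completions $\tilde E$ and $\tilde F$ are Schwartz as well, and of course complete. By \cite[Prop 3 p29]{Schwartz} (as quoted after the definition of the $\varepsilon$-product), the $\varepsilon$-product of complete spaces is complete, so $\tilde E\varepsilon \tilde F$ is complete and thus coincides with the completed variant of the $\varepsilon$-product. Then \cite[Corol 16.4.3]{Jarchow}, which treats exactly this completed variant, yields that $\tilde E\varepsilon \tilde F$ is a Schwartz space.

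It then remains to realise $E\varepsilon F$ as a topological subspace of the Schwartz space $\tilde E\varepsilon \tilde F$. The canonical inclusions $E\to \tilde E$ and $F\to \tilde F$ are embeddings, so by the functoriality of the $\varepsilon$-product — the embedding part of the corollary following Theorem \ref{FirstMALL}, together with its validity for non-closed embeddings noted there after \cite{Schwartz} — the composite $E\varepsilon F\to \tilde E\varepsilon F\to \tilde E\varepsilon \tilde F$ is an embedding. Being (topologically) a subspace of a Schwartz space, $E\varepsilon F$ is itself a Schwartz space, which concludes.

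The step requiring most care, and the one I expect to be the real obstacle, is precisely that the canonical map $E\varepsilon F\to \tilde E\varepsilon \tilde F$ is a genuine topological embedding and not merely a continuous injection: the purely categorical functoriality of the corollary only delivers \emph{closed} embeddings, so one must appeal to the finer analysis of \cite{Schwartz} for non-closed embeddings. Concretely, this rests on the fact that $E$ and $\tilde E$ (and likewise $F$ and $\tilde F$) share the same dual and induce the same equicontinuous sets on it, so that the topologies of uniform convergence on products of equicontinuous sets agree on $E\varepsilon F$; granting this identification, the descent to the subspace is immediate.
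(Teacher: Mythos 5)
Your proof is correct and is essentially the paper's own argument: the paper likewise invokes \cite[Corol 16.4.3]{Jarchow} for the completed variant and then observes that $E\varepsilon F$ sits inside it as a (topological) subspace, hence is Schwartz since the Schwartz property passes to subspaces. Your elaboration of why the inclusion into $\tilde E\varepsilon\tilde F$ is a genuine embedding (same dual, same equicontinuous sets, hence the same topology of uniform convergence on products of equicontinuous sets) correctly fills in the detail the paper leaves implicit under the word ``functoriality''.
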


 We can benefit from our section 3 to obtain associativity:
 
 \begin{proposition}\label{SMepsilon}\label{AssocScwhartz}
 $(\McS,\varepsilon)$ is a symmetric monoidal complete and cocomplete category. $\McS\subset\Mc$ is a reflective subcategory with reflector (left adjoint to inclusion) $\mathscr{S}$ and the inclusion is strongly monoidal.
 Moreover on $\LCS$, $\mathscr{S}$ and $\ \widehat{\cdot}^M$ commute and their composition is the reflector of $\McS\subset\LCS$.
 \end{proposition}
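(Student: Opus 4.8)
The plan is to transport all the structure from the symmetric monoidal reflective category $(\Mc,\zeta)$ of Theorem~\ref{zetaparr}, exploiting that on Schwartz spaces $\zeta$ collapses onto $\varepsilon$. I would first observe that $\McS$ is a full monoidal subcategory of $(\Mc,\zeta)$. The unit $\K$ lies in $\McS$, and $\McS$ is stable under $\zeta$: for $E,F\in\McS$ one has $E\zeta F = E\eta F = E\varepsilon F$ topologically (the first identification from the definition together with Proposition~\ref{etageneral}.(1), the second from Lemma~\ref{MackeyArensSchwartz} since $E$ is Schwartz), and this space is Schwartz by Proposition~\ref{Sepsilon} and Mackey-complete by Proposition~\ref{Meta} applied to $\eta$, hence lies in $\McS$. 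A full subcategory containing the unit and closed under the tensor inherits the symmetric monoidal structure by restriction; since $\zeta$ and $\varepsilon$ agree as functors on $\McS$ (same underlying spaces and the same pre/post-composition action, cf. Proposition~\ref{etageneral}.(3)) and the associator of Proposition~\ref{AssocMc} is an isomorphism there (all three factors being Mackey-complete), this yields $(\McS,\varepsilon)$ symmetric monoidal with all coherence diagrams inherited from $(\Mc,\zeta)$. The inclusion $\McS\hookrightarrow\Mc$ then sends $\varepsilon$ to $\zeta$ via identity maps, so it is strictly, hence strongly, monoidal.

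Next, for reflectivity, $\mathscr{S}$ sends $\Mc$ into $\McS$ because $\mathscr{S}(E)$ is Schwartz by definition and Mackey-complete whenever $E$ is (both spaces share the same dual, hence the same bounded sets). The universal property $\McS(\mathscr{S}(E),F)=\Mc(E,F)$ for $F\in\McS$ is exactly the reflectivity of Schwartz spaces in $\LCS$ restricted to $\Mc$: a continuous linear map into a Schwartz space factors uniquely through the finest Schwartz topology coarser than the source topology. For bicompleteness I would invoke that a reflective subcategory of a bicomplete category is bicomplete: limits in $\McS$ are computed as in $\Mc$, and they remain in $\McS$ since limits in $\Mc$ are projective kernels in $\LCS$ (Theorem~\ref{zetaparr}), which preserve the Schwartz property by Lemma~\ref{SchwartzFunctor}; colimits in $\McS$ are obtained by applying $\mathscr{S}$ to the colimit in $\Mc$.

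Finally, for the commutation I would show that both composites are the reflector of $\McS\subset\LCS$. Both land in $\McS$: $\mathscr{S}(\widehat{E}^M)$ is Schwartz and Mackey-complete (as $\mathscr{S}$ preserves Mackey-completeness), while $\widehat{\mathscr{S}(E)}^M$ is Mackey-complete and Schwartz, the latter because it is a subspace of $\widetilde{\mathscr{S}(E)}$, the completion of a Schwartz space being Schwartz and subspaces of Schwartz spaces being Schwartz. Given $F\in\McS$ and a continuous linear $f\colon E\to F$, I factor in two orders: extend along the Mackey-dense inclusion $E\hookrightarrow\widehat{E}^M$ (using $F$ Mackey-complete) then factor through $\mathscr{S}(\widehat{E}^M)$ (using $F$ Schwartz); or first factor through $\mathscr{S}(E)$ then extend to $\widehat{\mathscr{S}(E)}^M$. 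Each exhibits the corresponding composite as left adjoint to $\McS\hookrightarrow\LCS$, so by uniqueness of adjoints the two composites are naturally isomorphic, which is the asserted commutation, and both coincide with the reflector of $\McS\subset\LCS$.

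The one delicate point, and the step I expect to be the main obstacle, is upgrading this natural isomorphism to a genuine identification of topologies rather than a mere abstract equivalence. Both composites carry the same reflection map out of $E$, and $E$ is Mackey-dense in each, so the canonical comparison is the unique continuous extension of $\mathrm{id}_E$; verifying that this extension is a topological isomorphism reduces to checking that both Mackey completions induce the same bounded sets on the image of $E$, which holds since $E$ and $\mathscr{S}(E)$ have identical bornologies and the transfinite construction of Lemma~\ref{gammacompletion} depends only on that bornology.
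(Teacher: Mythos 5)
Your proposal is correct and follows essentially the same route as the paper: transport the symmetric monoidal bicomplete structure from $(\Mc,\zeta)$ of Theorem~\ref{zetaparr} using that $E\zeta F=E\varepsilon F\in\McS$ for $E,F\in\McS$ (via Lemma~\ref{MackeyArensSchwartz}, Proposition~\ref{Sepsilon} and Proposition~\ref{Meta}), exhibit $\mathscr{S}$ as the reflector, and deduce the commutation of $\mathscr{S}$ with $\widehat{\cdot}^M$ from uniqueness of the left adjoint to $\McS\subset\LCS$ computed along the two chains of reflective inclusions. Your closing paragraph on upgrading the natural isomorphism to a topological identity is a sound extra check that the paper leaves implicit.
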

 
\begin{proof}
From Theorem \ref{zetaparr}, we know $(\Mc,\zeta)$
is of the same type and for $E,F\in \McS$, lemma \ref{etageneral} with the previous lemma gives $E\varepsilon F=E\zeta F\in \McS.$ Hence, we deduce $\McS$ is also symmetric monoidal and the inclusion strongly monoidal. The unit of the adjunction is the canonical identity map $\eta_E:E\to\mathscr{S}(E)$ and counit is identity satisfying the right relations hence the adjunction. %We have the relation the inverse maps $\mathbf{M}(E,F)\to\McS(\mathscr{S}(E),\mathscr{S}(F))$ from funcctoriality and $\McS(\mathscr{S}(E),F)\to \mathbf{M}(E,F)$ from $\cdot\circ \eta_E$.
From the adjunction the limits in $\McS$ are the limits in $\Mc$ and colimits are obtained by applying $\mathscr{S}$ to colimits of $\Mc$.
It is easy to see that $\McS\subset\Sch\subset\mathbf{LCS}$ are also reflective, hence the two ways of writing the global composition gives the commutation of the left adjoints.
\end{proof}

\subsection{$\rho$-reflexive spaces and their Arens-Mackey duals}

We define a new notion for the dual of $E$, which consists of taking the Arens-dual of the Mackey-completion of the Schwartz space $\mathscr{S}( E)$, which is once again transformed into a Mackey-complete Schwartz space $E^*_\rho$.

{  \begin{definition}
 For a \lcs $E$, the topology ${\mathscr{S}\rho}(E',E) $ on $E'$ is the topology of uniform convergence on absolutely convex compact sets of $\widehat{\mathscr{S}(E)}^M$. We write $E'_{\mathscr{S}\rho}=(E',{\mathscr{S}\rho}(E',E))=(\widehat{\mathscr{S}(E)}^M)'_c.$ 
We write $E^*_\rho =  \widehat{\mathscr{S} (E'_{\mathscr{S}\rho})}^M$ and $E'_{\RR}=\mathscr{S}(E'_{\mathscr{S}\rho})$.
 \end{definition} }

%\red{ Is a space in $\McMS$ $k$-complete ?}

\begin{remark}
Note that $E'_{\mathscr{S}\rho}$ is in general not Mackey-complete : there is an Arens dual of a Mackey-complete space (even of a nuclear complete space with its Mackey topology) which is not Mackey-complete using \cite[thm 34, step 6]{BrouderDabrowski}. 
Indeed take $\Gamma$ a closed cone in the cotangent bundle (with $0$ section removed) $\dot{T}^*R^n.$ Consider H\"ormander's space $E=\mathcal{D}_\Gamma'(\R^n)$ of distributions with wave front set included in $\Gamma$ with its normal topology in the terminology of \cite[Prop 12,29]{BrouderDabrowski}. It is shown there that $E$ is nuclear complete. Therefore the strong dual is $E'_\beta=E'_c$. Moreover, \cite[Lemma 10]{BrouderDabrowski} shows that this strong dual if $\mathcal{E}'_\Lambda$, the space of compactly supported distributions with a wave front set in the open cone $\Lambda=-\Gamma^c$ with a standard inductive limit topology. This dual is shown to be nuclear in \cite[Prop 28]{BrouderDabrowski}. Therefore we have $E'_c=E'_{\mathscr{S}\rho}$. Finally, as explained in the step 6 of the proof of \cite[Thm 34]{BrouderDabrowski} where it is stated it is not complete, as soon as $\Lambda$ is not closed (namely by connectedness when $\Gamma\not\in\{\emptyset,\dot{T}^*R^n\}$), then $E'_c$ is not even Mackey-complete. This gives our claimed counter-example. The fact that $E$ above has its Mackey topology is explained in \cite{Dab14a}.
\end{remark}

First note the functoriality lemma  :
 
 \begin{lemma}\label{rhoDualFunctor}
 $(\cdot)^*_{\rho}$ and  $(\cdot)'_{\RR}$ are contravariant endofunctors on $\LCS.$
 \end{lemma}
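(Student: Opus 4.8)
The plan is to exhibit both $(\cdot)^*_{\rho}$ and $(\cdot)'_{\RR}$ as composites of functors that are already available from the earlier sections, so that functoriality and contravariance follow formally. Unwinding the definitions, $E'_{\mathscr{S}\rho}=(\widehat{\mathscr{S}(E)}^M)'_c$, $E'_{\RR}=\mathscr{S}(E'_{\mathscr{S}\rho})$ and $E^*_\rho=\widehat{\mathscr{S}(E'_{\mathscr{S}\rho})}^M$. Writing $\Sigma$ for the Schwartz-topology functor $\mathscr{S}$, $\Lambda$ for the Mackey-completion $\widehat{\cdot}^M$, and $\Delta$ for (the underlying \lcs of) the Arens dual $E\mapsto E'_c$, the three operations read $(\cdot)'_{\mathscr{S}\rho}=\Delta\circ\Lambda\circ\Sigma$, $(\cdot)'_{\RR}=\Sigma\circ\Delta\circ\Lambda\circ\Sigma$ and $(\cdot)^*_\rho=\Lambda\circ\Sigma\circ\Delta\circ\Lambda\circ\Sigma$.

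First I would record that each factor is a functor on $\LCS$. The functor $\Sigma=\mathscr{S}$ is the covariant Schwartz-topology functor of Lemma \ref{SchwartzFunctor} and Proposition \ref{SMepsilon}; $\Lambda=\widehat{\cdot}^M$ is the covariant Mackey-completion functor, a reflector onto $\Mc$ by Theorem \ref{zetaparr} (functoriality being \cite[Prop 5.1.25]{PerrezCarreras}). For $\Delta$, Theorem \ref{FirstMALL}(1) already provides the Arens dual as a functor $(\cdot)'_c:\LCS\to\CLCS^{op}$; composing with the underlying-space functor $U:\CLCS\to\LCS$ yields a contravariant functor $\Delta:\LCS\to\LCS$ with $\Delta(E)=(E',\gamma(E',E))$ on objects and $\Delta(f)=f^t$ on a morphism $f:E\to F$. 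The single point deserving a word is the $\gamma$-continuity of $f^t:F'_c\to E'_c$: for $K\subset E$ absolutely convex compact one has $(f^t)^{-1}(K^\circ)=(f(K))^\circ$, and since $f$ is linear and continuous, $f(K)$ is again absolutely convex and compact in $F$, so $(f(K))^\circ$ is a basic $\gamma(F',F)$-neighbourhood; hence $f^t$ is continuous, and no completeness hypothesis on the target intervenes.

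With the factors in hand the conclusion is purely formal: a composite of functors is a functor, and a composite containing exactly one contravariant factor is contravariant. Each of $(\cdot)'_{\mathscr{S}\rho}$, $(\cdot)'_{\RR}$ and $(\cdot)^*_\rho$ contains exactly the single contravariant factor $\Delta$, all other factors $\Sigma,\Lambda$ being covariant, so all three are contravariant functors; concretely, the arrow attached to $f:E\to F$ is the transpose of $\widehat{\mathscr{S}(f)}^M$, followed by $\Sigma$ (resp. $\Lambda\circ\Sigma$). Preservation of identities and of composition is inherited factor by factor. It remains only to check that the outputs are genuinely separated \lcs, i.e. objects of $\LCS$: $\Sigma$ preserves separatedness (recall $\mathscr{S}(E)$ is always separated), $\Lambda$ lands in the completion hence in a separated space, and the Arens dual of a separated \lcs is separated since $\gamma(E',E)$ is finer than the Hausdorff topology $\sigma(E',E)$. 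Thus $(\cdot)^*_\rho$ and $(\cdot)'_{\RR}$ are contravariant endofunctors of $\LCS$.

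The step requiring the most care — though it is ultimately short — is the functoriality of $\Delta$ on all of $\LCS$, that is the $\gamma$-continuity of transposes; this is exactly where one exploits that a continuous linear image of an absolutely convex compact set is again absolutely convex and compact. This is in pleasant contrast with the associativity and tensor-product arguments of the preceding sections, where it is the failure of the closed absolutely convex hull of a compact set to remain compact that forces the introduction of $k$-quasi-completeness and Mackey-completeness; for the mere functoriality of the dual no such completeness is needed.
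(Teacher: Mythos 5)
Your proof is correct and follows exactly the paper's route: the paper's own proof is the one-line observation that $(\cdot)^*_\rho$ and $(\cdot)'_{\RR}$ are composites of $\mathscr{S}$, $(\cdot)'_c$ and $\widehat{\cdot}^M$, and you have simply made explicit the functoriality of each factor (in particular the $\gamma$-continuity of transposes via $(f^t)^{-1}(K^\circ)=(f(K))^\circ$), which the paper leaves implicit.
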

 \begin{proof}They are obtained by composing $\mathscr{S}$, $(\cdot)'_c$ and $\ \widehat{\cdot}^M$ (recalled in Theorem \ref{zetaparr}).
 \end{proof}

From Mackey theorem and the fact that completion does not change the dual, we can deduce immediately that we have the following algebraic identities  $\mathscr{S(}(E'_{\mathscr{S}\rho})'_{\mathscr{S}\rho})=(E'_{\mathscr{S}\rho})'_{\mathscr{S}\rho}=\widehat{\mathscr{S}(E)}^M$. 

From these we deduce the fundamental algebraic equality:
\begin{equation}
\label{alg_eq_refl}
 (E^*_\rho)^*_{\rho} = \widehat{\mathscr{S}(E)}^M 
\end{equation}

\begin{definition}
A \lcs $E$ is said \emph{$\rho$-reflexive} if the canonical map $E\to\widehat{\mathscr{S}(E)}^M = (E^*_\rho)^*_{\rho}$ gives a topological isomorphism $E \simeq (E^*_{\rho})^*_{\rho}$.
\end{definition}

 We are looking for a condition necessary to make the above equality a topologically one. The following theorem demonstrates an analogous to $E'_c=((E'_c)'_c)'_c$ for our new dual. For, we now make use of lemma \ref{ordinalCompletion}. %and this is the first step to get to a $*$-autonomous category.
 
\begin{theorem}\label{rhoref}
Let $E$ be a separated  locally convex space, then  $E^*_{\rho}$ is  $\rho$-reflexive. As a consequence, if $E$ is $\rho$-reflexive, so is $E^*_{\rho}$ and $\mathscr{S}((E'_c)'_c)\simeq E\simeq \mathscr{S}((E'_\mu)'_\mu)$ topologically. Moreover, when $E$ is Mackey-complete $(E^*_{\rho})^*_{\rho}=(E^*_{\rho})'_{\RR} $ and $E$ have the same bounded sets.
 \end{theorem}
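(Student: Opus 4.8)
The plan is to upgrade the purely algebraic identity \eqref{alg_eq_refl} to a topological one, the engine being a single completeness statement furnished by Lemma \ref{ordinalCompletion}, and then to deduce the two consequences. Throughout set $A=E^*_\rho=\widehat{\mathscr{S}(E'_{\mathscr{S}\rho})}^M$. Since the Mackey completion of a Schwartz space is a subspace of its completion, hence again Schwartz, $A$ is a Mackey-complete Schwartz space; therefore $A'_{\mathscr{S}\rho}=A'_c$ and, by Lemma \ref{MackeyArensSchwartz}, $A'_c=A'_\mu$. Applying \eqref{alg_eq_refl} to $A$ gives $(A^*_\rho)^*_\rho=\widehat{\mathscr{S}(A)}^M=A$ as sets, so that $\rho$-reflexivity of $E^*_\rho$ reduces to the topological equality $A\simeq(A^*_\rho)^*_\rho$; as the canonical map $A\to(A^*_\rho)^*_\rho$ is continuous, only the reverse continuity is at stake.

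I would isolate the sole delicate point exactly as in the proof of Lemma \ref{kduality}, namely the claim that $(E^*_\rho)'_{\RR}=\mathscr{S}(A'_c)$ is Mackey-complete — the $\rho$-analogue of ``$(E^*_k)'_c$ is $k$-quasi-complete''. To prove it I would feed Lemma \ref{ordinalCompletion} the Arens-dual functor $D=(\cdot)'_c$, which is compatible with duality since $(E'_c)'=E$ by \eqref{Ec}, taking $E_0=\mathscr{S}(E'_{\mathscr{S}\rho})$ and letting $E_\lambda$ be the transfinite Mackey-completion steps of Lemma \ref{gammacompletion}, so that $E_{\lambda_0}=A$ and $D(E_{\lambda_0})=A'_c$. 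The successor hypothesis of Lemma \ref{ordinalCompletion} is met because every added point lies in a closed absolutely convex hull $\overline{\Gamma(\{x_n\})}$ of a Mackey-null sequence, and such a hull is compact in the Banach space generated by the sequence's disk, hence compact in $E_{\lambda_0}$ and so equicontinuous in $[D(E_{\lambda_0})]'$; this is precisely where the Schwartz property enters, through Lemma \ref{MackeyArensSchwartz}, which identifies the equicontinuous bornology with the one spanned by $\varepsilon$-null sequences. The lemma then transfers completeness of the relevant Banach disks from the coarser dual $D(E_0)$ — Mackey-complete by the Mackey-bidual computation of Lemma \ref{MMc} — up to $A'_c$, whence $\mathscr{S}(A'_c)$ is Mackey-complete.

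Granting this, the outer Mackey completion in $A^*_\rho=\widehat{\mathscr{S}(A'_c)}^M$ is redundant, so $A^*_\rho=\mathscr{S}(A'_c)=(E^*_\rho)'_{\RR}$, and I would finish by identifying $\mathscr{S}\big((A^*_\rho)'_c\big)$ with $A$. Since $A$ is Schwartz, Lemma \ref{MackeyArensSchwartz} gives $(A^*_\rho)'_c=(\mathscr{S}(A'_c))'_c=(A'_\mu)'_\mu$, and Schwartz's relation $E'_c\simeq((E'_c)'_c)'_c$ together with $A$ being Schwartz Mackey-complete forces $\mathscr{S}((A'_\mu)'_\mu)\simeq A$; this yields $(A^*_\rho)^*_\rho\simeq A$, i.e. $E^*_\rho$ is $\rho$-reflexive. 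The first consequence is then immediate: $E^*_\rho$ is always $\rho$-reflexive by the main claim, and if $E$ is itself $\rho$-reflexive then $E\simeq(E^*_\rho)^*_\rho$ is Mackey-complete Schwartz, so $E'_c=E'_\mu$ and the same bidual identification gives $\mathscr{S}((E'_c)'_c)\simeq E\simeq\mathscr{S}((E'_\mu)'_\mu)$.

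For the final clause, when $E$ is Mackey-complete so is $\mathscr{S}(E)$, hence $\widehat{\mathscr{S}(E)}^M=\mathscr{S}(E)$ and \eqref{alg_eq_refl} reads $(E^*_\rho)^*_\rho=\mathscr{S}(E)$ as sets; the Mackey-completeness of $(E^*_\rho)'_{\RR}$ established above shows the defining completion is redundant, giving $(E^*_\rho)^*_\rho=(E^*_\rho)'_{\RR}$. Since neither $\mathscr{S}$ nor $\widehat{\cdot}^M$ changes the dual, $(E^*_\rho)^*_\rho$ and $E$ share the same dual pairing, so the Mackey--Arens theorem yields that they have the same bounded sets. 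The main obstacle is the second paragraph: a priori the Mackey completion producing $A$ could enlarge the family of absolutely convex compact sets of the dual, strictly refining $A'_c$ and destroying the triple-dual identity, and it is exactly Lemma \ref{ordinalCompletion}, together with the Schwartz restriction via Lemma \ref{MackeyArensSchwartz} that verifies its equicontinuity hypothesis, which rules this out.
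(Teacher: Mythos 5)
Your second paragraph — the reduction of everything to Mackey-completeness of $(E^*_\rho)'_{\RR}$ and its proof via Lemma \ref{ordinalCompletion} along the transfinite Mackey-completion steps — is essentially the paper's Step 2, and with your choice $D=(\cdot)'_c$ it even becomes \emph{simpler} than the paper's: equicontinuous subsets of $[(E_{\lambda_0})'_c]'$ are exactly the subsets of absolutely convex compact sets of $E_{\lambda_0}$ (definition of the $\gamma$-topology plus the bipolar theorem), so compactness of the closed absolutely convex hull of the Cauchy sequence together with its limit already gives the successor hypothesis. But your stated justification of that hypothesis is wrong: Lemma \ref{MackeyArensSchwartz} says nothing about $\varepsilon$-null sequences (that is Lemma \ref{SchwartzBorno}, i.e. \cite[Prop 10.4.4]{Jarchow}), and such an identification is relevant only for the paper's choice $D=\mathscr{S}((\cdot)'_c)$, where equicontinuous sets in $[\mathscr{S}(F'_c)]'$ are generated by $\gamma$-null sequences, mere compactness is \emph{not} sufficient, and the renormalization $r_n=s_n/\sqrt{\|s_n\|_{\overline{B}}}$ is exactly what manufactures such a sequence. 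Likewise Mackey-completeness of $D(E_0)=(E'_{\mathscr{S}\rho})'_\mu$ follows from ``same dual, hence same bounded sets'' (the paper's Step 1), not from Lemma \ref{MMc}. These are repairable; what follows is not.

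The genuine gaps are in your first and third paragraphs. (i) Your easy and hard directions are reversed: equicontinuous subsets of $A'$ are compact in $A^*_\rho$, so it is $(A^*_\rho)^*_\rho\to A$ that is canonically continuous (this is \eqref{idmap}); the map needing proof is $A\to(A^*_\rho)^*_\rho$. (ii) Most seriously, the implication ``$A$ Schwartz and Mackey-complete plus $E'_c\simeq((E'_c)'_c)'_c$ forces $\mathscr{S}((A'_\mu)'_\mu)\simeq A$'' is false: an infinite-dimensional Banach space $X$ with its weak topology $\sigma(X,X')$ is Schwartz (even nuclear) and Mackey-complete, yet its Mackey topology is the norm topology, so $\mathscr{S}((X'_\mu)'_\mu)$ is the Schwartz topology of the norm, strictly finer than $\sigma(X,X')$. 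Carrying the Schwartz topology of its Mackey topology is precisely the \emph{extra} condition isolated in Theorem \ref{MackeyCaractrhoRef}, and it must be extracted from the specific form of $A$ as a $\rho$-dual — either as the paper does, by applying the contravariant functor $(\cdot)^*_\rho$ to the continuous map $(E^*_\rho)^*_\rho\to E$ of \eqref{idmap}, which produces exactly the missing map $A\to(A^*_\rho)^*_\rho$, or via Lemma \ref{MMc} combined with the commutation of $\mathscr{S}$ and $\widehat{\cdot}^M$. Your proof asserts the conclusion with an invalid reason. (iii) The bounded-sets clause also fails as argued: $(E^*_\rho)^*_\rho$ and $E$ do \emph{not} share the same dual pairing, because the completions are performed on the dual side; the dual of $(E^*_\rho)^*_\rho$ is $E^*_\rho=\widehat{\mathscr{S}(E'_\mu)}^M$, which in general strictly contains $E'$ (the remark after the definition of $E^*_\rho$ exhibits a complete nuclear $E$ whose dual is not Mackey-complete, so the completion genuinely adds points). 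Hence Mackey--Arens is inapplicable, and the paper instead sandwiches $(E^*_\rho)^*_\rho$ between $[\widetilde{\mathscr{S}(E'_c)}]'_c$ and $E$ via \eqref{idmap} and invokes Lemma \ref{boundedCompletedArens} (for $E$ Mackey-complete, $\mathscr{S}(E'_c)=\mathscr{S}(E'_\beta)$, so the dual of the completion is the dual of the bornologification); this substantive step cannot be skipped.
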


 \begin{proof}
 \setcounter{Step}{0}
 
Note that the next-to-last statement is obvious since if $E$  $\rho$-reflexive, we have $(E^*_{\rho})=((E^*_{\rho})^*_{\rho})^*_{\rho}$ and the last space is always $\rho$-reflexive. Moreover, from the two first operations applied in the duality, one can and do assume $\mathscr{S}(E)$ is Mackey-complete.
 
  Let us write also $\mathscr{C}_M(.)=\widehat{.}^M$ for the Mackey-completion functor and for an ordinal $\lambda$,$\mathscr{C}_M^\lambda(E)=E_{M,\lambda}$ from lemma \ref{gammacompletion}.

  Note also that since the bounded sets in $E$ and $\mathscr{S}(E)$ coincide by Mackey Theorem \cite[Th 3 p 209]{Horvath}, one is Mackey-complete if and only if the other is.
  
\begin{step}
 $\mathscr{S}((E'_\mu)'_\mu)$ is Mackey-complete if $\mathscr{S}(E)$ is Mackey-complete.
 \end{step}   
This follows from the continuity  $\mathscr{S}((E'_\mu)'_\mu)\to \mathscr{S}(E)$ and the common dual, they have same bounded sets, hence same Mackey-Cauchy/converging sequences.

\begin{step}
 $\mathscr{S}((E'_{\mathscr{S}\rho})'_{\mathscr{S}\rho})$ is Mackey-complete if $\mathscr{S}(E)$ is Mackey-complete.
 \end{step} 
First note that a space is Mackey-complete if and only if any $K$, closed absolutely convex cover of a Mackey-Cauchy sequence, is complete. Indeed, if this is the case, since a Mackey-Cauchy sequence is Mackey-Cauchy for the saturated bornology generated by Mackey-null sequences \cite[Thm 10.1.2]{Jarchow}, it is Mackey in a normed space having a ball (the bipolar of the null sequence) complete in the lcs, hence a Banach space in which the Cauchy sequence must converge. Conversely, if a space is Mackey-complete, the sequence converges in some Banach space, hence its bipolar in this space is compact, and thus also in the \lcs and must coincide with the bipolar computed there which is therefore compact hence complete.
 
We thus apply lemma \ref{ordinalCompletion} to $K$ the closed absolutely convex cover of a Mackey-Cauchy sequence in $\mathscr{S}((E'_{\mathscr{S}\rho})'_{\mathscr{S}\rho}),$  $E_0=\mathscr{S}(E'_{\mathscr{S}\rho})$, $D=\mathscr{S}((\cdot)'_c)$, $E_\lambda=\mathscr{C}_M^{\lambda}(E_0)$ eventually yielding to the Mackey completion so that $D(E_{\lambda_0})=\mathscr{S}((E'_{\mathscr{S}\rho})'_{\mathscr{S}\rho})$ for $\lambda_0$ large enough and with $D(E_0)=\mathscr{S}((E'_\mu)'_\mu)$ using lemma \ref{MackeyArensSchwartz}. The result will conclude since the above bipolar $K$ computed in $D(E_{\lambda_0})$ must be complete by Mackey-completeness of this space hence complete in $\mathscr{S}((E'_{\mathscr{S}\rho})'_{\mathscr{S}\rho})$ by the conclusion of  lemma \ref{ordinalCompletion} and hence the  bipolar computed in there which is a closed subset will be complete too. We thus need to check the assumptions of  lemma \ref{ordinalCompletion}. The assumption at successor ordinal comes from the definition of $\mathscr{C}_M^{\lambda+1}$ since any point there $z$ satisfy $z\in N=\overline{\Gamma(L)}$ with $L=\{t_n,n\in \N\}$ a Mackey-Cauchy sequence in $E_\lambda$. Thus there is an absolutely convex bounded $B\subset E_\lambda$ with $(t_n)$ Cauchy in the normed space $(E_\lambda)_B\subset(E_{\lambda_0})_{\overline{B}} $. 
We know $t_n \to t$ in the completion so $t\in N$.

But since $E_{\lambda_0}$ is Mackey-complete, this last space is a Banach space, $t_n\to t$ and it is  contained in $C_1=\{s_0=2t, s_n=2(t_n-t), n\in \N\}^{oo}$.
$||s_n||_{\overline{B}}\to 0$ we can define $r_n=s_n/\sqrt{||s_n||_{\overline{B}}}$ which converges to $0$ in $(E_{\lambda_0})_{\overline{B}} $. {Hence $\{r_n, n\in \N\}$ is precompact as any converging sequence and so is its bipolar say $C$ computed in the Banach space $(E_{\lambda_0})_{\overline{B}} $, which is also complete thus compact.  $C$ is thus compact  in $E_{\lambda_0}$ too.} Since $||s_n||_C\leq \sqrt{||s_n||_{\overline{B}}}\to 0$ it is Mackey-null for the bornology of absolutely convex compact sets of $E_{\lambda_0}$. Thus $C_1$ is equicontinuous in $(D(E_{\lambda_0}))'$ and so is $t_n$ as expected.

%s_{n,m}+ t_m

 \begin{step}
 Conclusion.
 \end{step}
 
 Note we will use freely lemma \ref{MackeyArensSchwartz}.
If $\mathscr{S}(E)$ is Mackey-complete, and $Z=(E'_{\mathscr{S}\rho})'_{\mathscr{S}\rho}$ then from step 2, $\mathscr{S}(Z), \mathscr{S}(Z'_{\mathscr{S}\rho})$ are Mackey-complete and as a consequence $Z'_{\mathscr{S}\rho}=Z'_\mu$ and then $(Z'_{\mathscr{S}\rho})'_{\mathscr{S}\rho}=(Z'_{\mathscr{S}\rho})'_\mu
=(Z'_\mu)'_\mu$ topologically. In particular we confirm our claimed topological identity: $$(E^*_{\rho})^*_{\rho}\equiv\mathscr{C}_M(\mathscr{S}(Z))=\mathscr{S}(Z)\equiv\SS((E^*_{\rho})'_{\SS\rho}) .$$
 
 From the continuous linear identity map: $(Z'_\mu)'_\mu\to Z$ one gets a similar map $\mathscr{S}((Z'_{\mathscr{S}\rho})'_{\mathscr{S}\rho})\to \mathscr{S}(Z).$

Similarly, since there is a continuous identity map $Z\to \mathscr{S}(Z)$, one gets a continuous linear map $Z'_\mu\to Z'_c=(H'_c)'_c\to \widehat{\mathscr{S}(E'_{\mathscr{S}\rho})}^M\equiv H$. Since the last space is a Schwartz topology on the same space, one deduces a continuous map $\mathscr{S}(Z'_\mu)\to \widehat{\mathscr{S}(E'_{\mathscr{S}\rho})}^M$. Finally, an application of Arens duality again  leads to a continuous identity map: $Z\to (Z'_\mu)'_\mu=((Z)'_{\mathscr{S}\rho})'_{\mathscr{S}\rho}$. This concludes to the equality $\mathscr{S}((Z'_{\mathscr{S}\rho})'_{\mathscr{S}\rho})=\mathscr{S}((Z'_{\mu})'_{\mu})= \mathscr{S}((Z'_{\mu})'_{c})=\mathscr{S}(Z).$ As a consequence  if $E$ is $\rho$-reflexive, it is of the form $E=\mathscr{S}(Z)$ and one deduces $\mathscr{S}((E'_{c})'_{c})=E=\mathscr{S}((E'_{\mu})'_{\mu})$.

Consider $E$ a Mackey-complete and Schwartz space. Then $(E^*_{\rho})^*_{\rho}=\mathscr{S}\left[\left(\widehat{\mathscr{S}(E'_c)}^M\right)'_c\right]$ and we have continuous linear maps $E'_c\to \mathscr{S}(E'_c)\to \widehat{\mathscr{S}(E'_c)}^M \to \widetilde{\mathscr{S}(E'_c)}$ which by duality and functoriality give continuous linear maps:
\begin{equation} \label{idmap}
 \left(\widetilde{\mathscr{S}(E'_c)}\right)'_c\to (E^*_{\rho})^*_{\rho}\to \mathscr{S}((E'_c)'_c)\to E.
\end{equation}

Let us show that a $\rho$-dual $ Y = E^*_\rho$ is always $\rho$-reflexive (for which we can and do assume  $E$ is Mackey-complete and Schwartz). According to equation \eqref{alg_eq_refl}, as $Y$ is Mackey-complete and Schwartz we already have the algebraic equality $(Y^*_\rho)^*_\rho = Y$.  The above equation gives  a continuous identity map  $(Y^*_\rho)^*_\rho \rightarrow Y$.
Now according to step 2 of this proof $Y'_c\equiv(E'_{\mathscr{S}\rho})'_{\mathscr{S}\rho}=Z$ and $(Y'_\mu)'_\mu$ are Mackey-complete. Thus $(Y^*_\rho)^*_\rho = \mathscr{S} ( [\mathscr{S}(Y'_c)]'_c)=\mathscr{S} ( [Y'_\mu]'_\mu)$. However the equation \eqref{idmap} gives a continuous identity map $(E^*_\rho)^*_\rho \rightarrow \widehat{\mathscr{S}(E)}^M$, which by duality and functoriality of $\mathscr{S}$ leads to a continuous identity map $ Y \rightarrow  (Y^*_\rho)^*_\rho$. Every $\rho$-dual is thus $\rho$-reflexive. 

 Let us show the last statement, since a space and its associated Schwartz space have the same bounded sets, we can assume $E$ Mackey-complete and Schwartz. As a consequence of the equation \eqref{idmap} and of the next lemma,
 the bounded sets in the middle term $(E^*_{\rho})^*_{\rho}$ have to coincide too, and the last statement of the proposition is shown.
  \end{proof}

\begin{lemma}\label{boundedCompletedArens}
If $E$ is Mackey-complete lcs, then $[\widetilde{\mathscr{S}(E'_c)}]'_c$ has the same bounded sets as $E.$
\end{lemma}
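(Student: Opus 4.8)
The plan is to work throughout with the single dual pair $\langle E,G\rangle$, where $G=\widetilde{\mathscr{S}(E'_c)}$. Since completion does not change the dual and $\mathscr{S}$ preserves it, one has $G'=(\mathscr{S}(E'_c))'=(E'_c)'=E$ by \eqref{Ec}; hence $[\widetilde{\mathscr{S}(E'_c)}]'_c$ is simply $E$ equipped with the topology $\gamma(E,G)$ of uniform convergence on absolutely convex compact subsets of $G$, and its dual is again $G$. By the Mackey bounded--set theorem its bounded sets are exactly the $\sigma(E,G)$--bounded ones, whereas the von Neumann bounded sets of $E$ are the $\sigma(E,E')$--bounded ones. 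As the underlying space of $\mathscr{S}(E'_c)$ is $E'$, we have $E'\subset G$, so every $\sigma(E,G)$--bounded set is $\sigma(E,E')$--bounded; this already yields one inclusion of bornologies (it is also the content of the continuous identity maps in \eqref{idmap}).

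For the converse inclusion I would fix a bounded, closed, absolutely convex set $B\subset E$ and show that \emph{every} $g\in G$ is bounded on $B$; by the definition of $\sigma(E,G)$--boundedness together with the Mackey theorem just quoted, this is precisely what is needed to conclude that $B$ is bounded in $[\widetilde{\mathscr{S}(E'_c)}]'_c$. This is where Mackey--completeness enters, since it makes $E_B$ a Banach space. Arguing by contradiction, suppose $g$ is unbounded on $B$ and choose $b_n\in B$ with $|\langle g,b_n\rangle|\to\infty$ sufficiently fast. Rescaling inside the Banach disc $E_B$ (first dividing the $b_n$ so that $\|c_n\|_B\to 0$ while keeping $|\langle g,c_n\rangle|\to\infty$, then using that the closed absolutely convex hull of a norm--null sequence of the Banach space $E_B$ is compact, hence maps to an absolutely convex compact set $D\subset E$) produces such a set $D$ with $\|c_n\|_D\to 0$. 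Thus $(c_n)$ is a $\gamma$--null sequence of $E$, that is, a sequence converging equicontinuously to $0$ in $(E'_c)'$.

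The contradiction then follows from the very definition of the Schwartz topology: $\mathscr{S}(E'_c)$ is the topology of uniform convergence on such equicontinuously convergent sequences, so the convergence of a filter of elements of $E'$ to $g$ in the completion $G$ is, in particular, uniform on $\{c_n\}$. Since each $f\in E'$ satisfies $\langle f,c_n\rangle\to 0$ (because $c_n\to 0$ in $E$), choosing a filter element close to $g$ on $\{c_n\}$ forces $\langle g,c_n\rangle\to 0$, contradicting $|\langle g,c_n\rangle|\to\infty$. Hence $g$ is bounded on $B$; every bounded set of $E$ is bounded in $[\widetilde{\mathscr{S}(E'_c)}]'_c$, and combining with the first paragraph the two bornologies coincide.

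I expect the main obstacle to be the middle step: extracting, from a merely \emph{bounded} blow--up sequence, a genuinely \emph{$\gamma$--null} (compact--bornology) sequence on which $g$ still explodes. This is exactly the point where Mackey--completeness is indispensable, via the Banach disc $E_B$ and the rescaling; without it one only controls the elements of the completion on hulls of $\gamma$--null sequences and cannot reach arbitrary bounded sets of $E$.
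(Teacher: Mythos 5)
Your proof is correct. It rests on the same two pillars as the paper's argument --- Mackey's theorem applied to the dual pair $\langle E,G\rangle$ with $G=\widetilde{\mathscr{S}(E'_c)}$ and $G'=E$, and the use of Mackey-completeness to upgrade sequences that are null for the Banach-disk bornology to $\gamma$-null sequences --- but the converse inclusion is packaged quite differently. The paper first observes that, by Mackey-completeness, null sequences for the bornology of Banach disks coincide with $\gamma$-null sequences (closed absolutely convex hulls of null sequences of $E_B$ being compact), whence $\mathscr{S}(E'_c)=\mathscr{S}(E'_\beta)$; it then cites Jarchow's Th.~13.3.2 to identify $\widetilde{\mathscr{S}(E'_\beta)}$ with the dual of the bornologification $E_{bor}$, so that $\sigma(E,G)$-bounded sets are exactly the bounded sets of $E_{bor}$, i.e.\ of $E$. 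You instead prove by hand that every $g\in G$ is bounded on each closed bounded disk $B$: the same rescaling inside the Banach space $E_B$ produces a $\gamma$-null sequence $(c_n)$ on which $g$ would blow up, while $\sup_n|\langle\cdot,c_n\rangle|$ is a continuous seminorm of $\mathscr{S}(E'_c)$ (since $(c_n)$ converges equicontinuously to $0$ in $(E'_c)'=E$) and $E'$ is dense in $G$, forcing $\langle g,c_n\rangle\to 0$. This is more self-contained --- no detour through the strong dual and the bornologification --- at the price of being longer; both arguments use Mackey-completeness at exactly the same point. One presentational remark: the rescaling must be performed twice --- first replace $b_n$ by $c_n=b_n/|\langle g,b_n\rangle|^{1/2}$ so that $\|c_n\|_B\to 0$ while $|\langle g,c_n\rangle|\to\infty$, then choose $\lambda_n\to\infty$ with $\lambda_n c_n\to 0$ in $E_B$ and set $D=\overline{\Gamma(\{\lambda_n c_n, n\in\N\})}$, which is compact and satisfies $\|c_n\|_D\le\lambda_n^{-1}\to 0$; you gesture at this but it is the step worth writing out in full.
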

\begin{proof}
Since $E$ is Mackey-complete bounded sets are included in absolutely convex closed bounded sets which are Banach disks. On $E'$ the topology $\mathcal{T}_{\mathcal{B}_{b}}$ of uniform convergence on Banach disk (bornology $\mathcal{B}_{b}$) coincides with the topology of the strong dual $E'_\beta$. 
 
 Moreover, by \cite[Th 10.1.2]{Jarchow} $\mathcal{B}_{b}$-Mackey convergent sequences coincide with $(\mathcal{B}_{b})_0$-Mackey convergent sequences  but the closed absolutely convex cover of a null sequence of a Banach space is compact in this Banach space, therefore compact in $E$, thus they coincide with $\varepsilon((E'_c)')$-null sequences (i.e. null sequences for Mackey convergence for the bornology of absolutely convex compact sets). Therefore $\mathscr{S}(E'_c)=\mathscr{S}(E',\mathcal{T}_{\mathcal{B}_{b}})=\mathscr{S}(E'_\beta).$ As a consequence, combining this with \cite[Th 13.3.2]{Jarchow}, the completion of $\mathscr{S}(E'_c)$ is linearly isomorphic to the dual of both the bornologification and the ultrabornologification of $E$. Therefore, the bounded sets in $(\widetilde{\mathscr{S}(E'_c)})'_c$ are by Mackey theorem the bounded sets for $\sigma(E,\widetilde{\mathscr{S}(E'_c)})=\sigma(E,(E_{bor})')$ namely the bounded  of $E_{bor}$ or $E$. 
 \end{proof}

The $\rho$-dual can be understood in a finer way. Indeed, the  Mackey-completion on $E'_\RR =\mathscr{S}(E'_{\mathscr{S} \rho})$ is unnecessary, as we would get a Mackey-complete space back after three dualization. 

\begin{proposition}
For any \lcs $E$,$$((E'_\RR)'_\RR)'_\RR \simeq \widehat{E'_\RR}^M \equiv E^*_\rho$$
and if $E$ Mackey-complete, $(E'_\RR)'_\RR=(E^*_\rho)^*_\rho.$
\end{proposition}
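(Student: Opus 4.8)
The plan is to reduce everything to two imported facts: that the functor $(\cdot)'_\RR$ depends on its argument only through the Mackey-complete Schwartzification $\widehat{\mathscr{S}(\cdot)}^M$, and that a suitable double dual is automatically Mackey-complete, which is exactly what was established in Step 2 of the proof of Theorem \ref{rhoref}. The $\rho$-reflexivity of every $\rho$-dual (Theorem \ref{rhoref}) then closes the loop.

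First I would record the elementary observation that, since by definition $X'_\RR=\mathscr{S}\big((\widehat{\mathscr{S}(X)}^M)'_c\big)$, the space $X'_\RR$ depends on $X$ only through $\widehat{\mathscr{S}(X)}^M$. Now $X'_\RR$ is itself Schwartz and $\widehat{X'_\RR}^M=X^*_\rho$, whereas $X^*_\rho$ is a $\rho$-dual, hence (Theorem \ref{rhoref}) $\rho$-reflexive and in particular Mackey-complete Schwartz, so that $\widehat{\mathscr{S}(X^*_\rho)}^M=X^*_\rho$. Therefore $\widehat{\mathscr{S}(X'_\RR)}^M=\widehat{\mathscr{S}(X^*_\rho)}^M=X^*_\rho$ topologically, using that $\mathscr{S}$ and $\widehat{\cdot}^M$ commute and are absorbed on Mackey-complete Schwartz spaces (Proposition \ref{SMepsilon}). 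The factorization just noted then gives $(X'_\RR)'_\RR=(X^*_\rho)'_\RR$ for every $X$. Applying this with $X=E$ yields $(E'_\RR)'_\RR=(E^*_\rho)'_\RR$, and applying it again with $X=E^*_\rho$ yields $((E'_\RR)'_\RR)'_\RR=((E^*_\rho)'_\RR)'_\RR=((E^*_\rho)^*_\rho)'_\RR$.

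Next I would settle the Mackey-complete case. Unwinding the definitions (and using that $\widehat{\mathscr{S}(E)}^M=\mathscr{S}(E)$ when $E$ is Mackey-complete) one checks the identity $(E'_{\mathscr{S}\rho})'_{\mathscr{S}\rho}=(E^*_\rho)'_c$, whence $\mathscr{S}\big((E'_{\mathscr{S}\rho})'_{\mathscr{S}\rho}\big)=\mathscr{S}\big((E^*_\rho)'_c\big)=(E^*_\rho)'_\RR$. Step 2 of the proof of Theorem \ref{rhoref} states precisely that this space is Mackey-complete whenever $\mathscr{S}(E)$ is, which holds for $E$ Mackey-complete. Consequently $(E^*_\rho)'_\RR=\widehat{(E^*_\rho)'_\RR}^M=(E^*_\rho)^*_\rho$, and combining with $(E'_\RR)'_\RR=(E^*_\rho)'_\RR$ from the previous paragraph gives the last statement
\begin{equation*}
(E'_\RR)'_\RR=(E^*_\rho)^*_\rho.
\end{equation*}
For the general statement I would feed the same Mackey-completeness input to $E^*_\rho$ in place of $E$: since $\mathscr{S}(E^*_\rho)=E^*_\rho$ is Mackey-complete, Step 2 of Theorem \ref{rhoref} shows $((E^*_\rho)^*_\rho)'_\RR$ is Mackey-complete, hence equal to its Mackey-completion $((E^*_\rho)^*_\rho)^*_\rho$; and $E^*_\rho$ being $\rho$-reflexive gives $((E^*_\rho)^*_\rho)^*_\rho\simeq E^*_\rho$. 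Chaining with the iterated identity above,
\begin{equation*}
((E'_\RR)'_\RR)'_\RR=((E^*_\rho)^*_\rho)'_\RR=((E^*_\rho)^*_\rho)^*_\rho\simeq E^*_\rho=\widehat{E'_\RR}^M .
\end{equation*}

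The routine part is the bookkeeping identifying $\mathscr{S}\big((E'_{\mathscr{S}\rho})'_{\mathscr{S}\rho}\big)$ with $(E^*_\rho)'_\RR$ and freely commuting/absorbing $\mathscr{S}$ and $\widehat{\cdot}^M$ on Mackey-complete Schwartz spaces; the genuine content is entirely imported from Theorem \ref{rhoref}. The one point I would watch most carefully is to keep every equality topological rather than merely algebraic. This is exactly where the Mackey-completeness of $(E^*_\rho)'_\RR$ provided by Step 2 is indispensable: without it the Mackey-completion passing from $E'_\RR$ to $E^*_\rho$ strictly enlarges the space, and the claimed identities would hold only algebraically. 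In other words, the whole proposition is the assertion that this extra Mackey-completion has already been performed implicitly once one has iterated the $\RR$-dual sufficiently often.
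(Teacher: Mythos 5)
Your proof is correct, and it reaches the conclusion by a somewhat different organization than the paper's. The paper proves the first isomorphism by exhibiting two mutually inverse continuous identity maps: it applies $(\cdot)'_\RR$ to the map $(E'_\RR)'_\RR\to E$ coming from \eqref{idmap} to get $E'_\RR\to ((E'_\RR)'_\RR)'_\RR$, then uses the Mackey-completeness of the target (Step 2 of Theorem \ref{rhoref}, exactly as you do) to extend this map over the Mackey completion $\widehat{E'_\RR}^M$, and finally checks that this extension inverts the \eqref{idmap}-map $((\widehat{E'_\RR}^M)'_\RR)'_\RR\to\widehat{E'_\RR}^M$. You instead exploit the purely definitional observation that $(\cdot)'_\RR$ only sees $\widehat{\mathscr{S}(\cdot)}^M$, so that $(X'_\RR)'_\RR$ and $(X^*_\rho)'_\RR$ are literally the same space $\mathscr{S}\big((X^*_\rho)'_c\big)$; iterating collapses the triple $\RR$-dual onto $((E^*_\rho)^*_\rho)'_\RR$, which Step 2 identifies with $((E^*_\rho)^*_\rho)^*_\rho$, and the already-established $\rho$-reflexivity of $E^*_\rho$ finishes. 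The ingredients are the same (Step 2 for the crucial Mackey-completeness, and the content of Theorem \ref{rhoref}), but your version outsources the two-maps argument entirely to the $\rho$-reflexivity of $\rho$-duals and keeps every remaining identification tautological, which makes the topological (as opposed to merely algebraic) nature of each equality more transparent; the paper's version is more self-contained in that it re-derives the inverse pair of maps by hand. Your treatment of the Mackey-complete case via $(E'_{\mathscr{S}\rho})'_{\mathscr{S}\rho}=(E^*_\rho)'_c$ and Step 2 is also a faithful unpacking of what the paper compresses into the citation of Step 3.
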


\begin{proof}
 We saw in step 3 of our theorem \ref{rhoref} that, for any Mackey-complete Schwartz space $E$, first $(E'_\RR)'_\RR$ is Mackey-complete hence $(E'_\RR)'_\RR=(E^*_\rho)^*_\rho$ and then \eqref{idmap} gives a continuous identity map $(( E'_\RR)'_\RR) \to E$. By functoriality one gets a continuous linear map:$E'_\RR \to (( E'_\RR)'_\RR)'_\RR.$ Moreover $(( E'_\RR)'_\RR)'_\RR= (( \widehat{E'_\RR}^M)'_\RR)'_\RR$ is Mackey-complete by step 2 of our previous theorem, thus the above map extends to $ \widehat{E'_\RR}^M\to (( E'_\RR)'_\RR)'_\RR.$ This is of course the inverse of the similar continuous (identity) map given by \eqref{idmap}: $(( \widehat{E'_\RR}^M)'_\RR)'_\RR \to \widehat{E'_\RR}^M$ which gives the topological identity. 
\end{proof}

We finally relate our definition with other previously known notions:

\begin{theorem}\label{MackeyCaractrhoRef}
A \lcs is $\rho$-reflexive, if and only if it is Mackey-complete, has its Schwartz topology associated to the Mackey topology of its dual $\mu_{(s)}(E,E')$ and its %$\mathscr{R}$-
dual is also Mackey-complete with its Mackey topology. As a consequence, Arens=Mackey duals of $\rho$-reflexive spaces are exactly Mackey-complete locally convex spaces with their Mackey topology such that their Mackey dual is Mackey-complete.
\end{theorem}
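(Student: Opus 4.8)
The plan is to assemble the statement from Theorem~\ref{rhoref}, the Proposition just above it, Lemma~\ref{MackeyArensSchwartz}, and two elementary facts: that the associated Schwartz functor $\mathscr{S}$ preserves Mackey-completeness (recalled in subsection~\ref{subsec:remindertvs}, since $E$ and $\mathscr{S}(E)$ share the same bornology), so that Mackey-completeness is a purely bornological property, and that $E'_c=E'_\mu$ for a Schwartz space (Lemma~\ref{MackeyArensSchwartz}). Throughout I will use the Mackey--Arens identification $(E'_\mu)'_\mu\simeq(E,\mu(E,E'))$, so that $\mathscr{S}((E'_\mu)'_\mu)$ is exactly $\mu_{(s)}(E,E')$; thus the middle condition of the theorem says that $E$ carries the topology $\mathscr{S}(\mu(E,E'))$, which in particular forces $E$ to be Schwartz.

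First I would treat the converse implication, which is a direct computation. Assume $E$ is Mackey-complete, $E=\mu_{(s)}(E,E')$, and $E'_\mu$ is Mackey-complete. The middle condition makes $E$ a Mackey-complete Schwartz space, hence $E'_{\mathscr{S}\rho}=E'_c=E'_\mu$; since $E'_\mu$ is Mackey-complete so is $\mathscr{S}(E'_\mu)$, and therefore $E^*_\rho=\widehat{\mathscr{S}(E'_\mu)}^M=\mathscr{S}(E'_\mu)$ keeps the underlying set $E'$ and is a Mackey-complete Schwartz space with dual $E$. Computing once more, $(E^*_\rho)'_{\mathscr{S}\rho}=(E^*_\rho)'_c=(E^*_\rho)'_\mu=(E,\mu(E,E'))$, because $E^*_\rho$ is Schwartz with underlying set $E'$, so applying $\mathscr{S}$ and then the (trivial, by Mackey-completeness of $E$) Mackey-completion gives $(E^*_\rho)^*_\rho=\widehat{\mathscr{S}((E'_\mu)'_\mu)}^M=\widehat{E}^M=E$ by the middle condition. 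Since the canonical map $E\to(E^*_\rho)^*_\rho$ is the identity on the common underlying set furnished by \eqref{alg_eq_refl}, this is the required topological isomorphism and $E$ is $\rho$-reflexive.

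For the direct implication, assume $E$ is $\rho$-reflexive. By \eqref{alg_eq_refl} and reflexivity $E\simeq\widehat{\mathscr{S}(E)}^M$, a space which is Mackey-complete by construction and Schwartz as a subspace of the completion of the Schwartz space $\mathscr{S}(E)$; this yields Mackey-completeness and the Schwartz property. The middle condition $E\simeq\mathscr{S}((E'_\mu)'_\mu)=\mu_{(s)}(E,E')$ is exactly the topological identity provided by Theorem~\ref{rhoref}. The remaining point, that $E'_\mu$ is Mackey-complete, is the heart of the argument, and it is where I would invoke the Proposition above: since $E$ is Mackey-complete, that Proposition gives $(E'_\RR)'_\RR=(E^*_\rho)^*_\rho$, which equals $E$ by reflexivity. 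Applying $(\cdot)'_\RR$ to the identity $(E'_\RR)'_\RR=E$ yields $((E'_\RR)'_\RR)'_\RR=E'_\RR$, while the first identity of the same Proposition reads $((E'_\RR)'_\RR)'_\RR\simeq\widehat{E'_\RR}^M$; since $E'_\RR=\mathscr{S}(E'_c)=\mathscr{S}(E'_\mu)$ for the Mackey-complete Schwartz space $E$, one obtains $E'_\RR\simeq\widehat{E'_\RR}^M$, i.e. $\mathscr{S}(E'_\mu)$ is Mackey-complete. By the bornological nature of Mackey-completeness, $E'_\mu$ is then Mackey-complete as well.

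Finally, the ``as a consequence'' clause follows by unwinding the equivalence through Lemma~\ref{MackeyArensSchwartz}. If $E$ is $\rho$-reflexive then $E'_c=E'_\mu$ is, by the direct implication, Mackey-complete with its Mackey topology, and its Mackey dual $(E'_\mu)'_\mu=(E,\mu(E,E'))$ is Mackey-complete since it shares the bornology of the Mackey-complete space $E$. Conversely, given a Mackey-complete space $G$ carrying its Mackey topology and with Mackey-complete Mackey dual $G'_\mu$, I would set $E=\mathscr{S}(G'_\mu)$; then $E$ is Mackey-complete and Schwartz, and $E'_\mu=G$ as a topological vector space (because $G$ already carries its Mackey topology), so the three conditions of the theorem hold for $E$ and the characterization makes $E$ $\rho$-reflexive with Arens${}={}$Mackey dual exactly $G$. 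The one genuinely delicate point in the whole argument is the Mackey-completeness of $E'_\mu$ in the direct implication; packaging it through the triple-dual identity of the preceding Proposition is what makes it clean, rather than attempting a direct control of how the Mackey completion of $\mathscr{S}(E'_\mu)$ interacts with the dual topology.
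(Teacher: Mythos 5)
Your proof is correct and follows essentially the same route as the paper: the sufficiency direction is the same dual computation through Lemma \ref{MackeyArensSchwartz} and the bornological invariance of Mackey-completeness under $\mathscr{S}$, and the necessity direction rests on the topological identity $E\simeq\mathscr{S}((E'_\mu)'_\mu)$ from Theorem \ref{rhoref}, with the final clause handled by the same construction $\mathscr{S}(G'_\mu)$. The only (harmless) variation is that you extract the Mackey-completeness of $E'_\mu$ by applying $(\cdot)'_{\RR}$ to the reflexivity identity and invoking the triple-dual proposition, whereas the paper reads it off directly from what was established inside the proof of Theorem \ref{rhoref}; both are valid and your packaging makes that step more self-contained.
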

\begin{remark}
A $k$-quasi-complete space is Mackey-complete hence for a  $k$-reflexive space $E$, $\mathscr{S}((E'_\mu)'_\mu)$ is $\rho$-reflexive (since $E'_c$ $k$-quasi-complete implies that so is $E'_\mu$ which is a stronger topology). Our new setting is a priori more general than the one of section 4. We will pay the price of a weaker notion of smooth maps. Note that a  Mackey-complete space need not be $k$-quasi-complete (see lemma \ref{McMSNotKc} below). 
\end{remark}

\begin{proof}
If $E$ is $\rho$-reflexive we saw in Theorem \ref{rhoref} that $E \simeq \mathscr{S}((E'_\mu)'_\mu)$ and both $E,E'_\RR=\mathscr{S}(E'_\mu)$ (or $E'_\mu$) are Mackey-complete with their Mackey topology.

Conversely, if $E$ with $\mu_{(s)}(E,E')$ is Mackey-complete as well as its dual, $E'_{\mathscr{S}\rho}=E'_c$ and thus $E'_\RR=\mathscr{S}(E'_c)$ which has the same bornology as the Mackey topology and is therefore Mackey-complete too, hence $E'_\RR=E^*_\rho$. Therefore we have a map $(E',\mu_{(s)}(E',E))\to E'_\RR=\mathscr{S}(E'_c).$ Conversely, note that  $E'_c=(E',\mu(E',E))$ from lemma \ref{MackeyArensSchwartz} so that one gets a continuous isomorphism.

 From the completeness and Schwartz property and dualisation, and then lemma \ref{MackeyArensSchwartz}  again, there is a continuous identity map $(E^*_\rho)'_\mathscr{R}=\mathscr{S}([\mathscr{S}(E'_c)]'_c)= \mathscr{S}((E'_\mu)'_\mu)=E$, which is Mackey-complete.  Therefore $(E^*_\rho)^*_\rho=(E'_\RR)'_\mathscr{R}=E$, i.e. $E$ is $\rho$-reflexive.
 
 For the last statement, we already saw the condition is necessary, it is sufficient since for $F$ Mackey-complete with its Mackey topology with Mackey-complete Mackey-dual, $\mathscr{S}(F)$ is $\rho$-reflexive by what we just saw and so that $(\mathscr{S}(F))'_c$ is the Mackey topology on $F'$, by symmetry  $\left[\mathscr{S}\left([\mathscr{S}(F)]'_c\right)\right]'_c=F$
 and therefore $F$ is both Mackey and Arens dual of the $\rho$-reflexive space $\mathscr{S}\left([\mathscr{S}(F)]'_c\right)$.\end{proof}
 
 Several relevant categories appeared.  $\M\subset\LCS$ the full subcategory of spaces having their Mackey topology. $\MS\subset\LCS$ the full subcategory of spaces having the Schwartz topology associated to its Mackey topology. $\Mb\subset\LCS$ the full subcategory of spaces with a Mackey-complete Mackey dual. And then by intersection always considered as full subcategories, one obtains: $$\McMS=\Mc\cap \MS,\ \ \MbMS=\Mb\cap \MS,,\ \ \McMb=\Mb\cap \Mc,$$ $$\MbMcM=\McMb\cap \M,\ \ \rRef=\McMb\cap \MS.$$
 We can summarize the situation as follows : There are two functors $(.)'_c$ and $\mu$  the associated Mackey topology (contravariant and covariant respectively) from the category $\rRef$ to $\MbMcM$ the category of Mackey duals of $\rho$-Reflexive spaces (according to the previous proposition). 
There are two other functors $(.)^*_\rho, \mathscr{S}$ and they are the (weak) inverses of the two previous ones.  

Finally, the following lemma explains that our new setting is more general than the $k$-quasi-complete setting of section 4:

\begin{lemma}\label{McMSNotKc}
There is a space $E\in \McMS$ which is not $k$-quasi-complete.
\end{lemma}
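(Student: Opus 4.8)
The plan is to first reformulate $k$-quasi-completeness, for a Schwartz space carrying its Mackey topology, as Krein's property restricted to compact sets: since every compact set is weakly compact, demanding that $\overline{\Gamma(K)}$ be complete (equivalently compact) for all compact $K$ is exactly the conclusion of Krein's theorem, which holds automatically as soon as the space is quasi-complete. By contrast Mackey-completeness only tests completeness along the normed spaces $E_B$ generated by bounded Banach disks, and is insensitive to topological (non-Mackey) Cauchy nets. Thus the whole point is to produce a space in $\McMS=\Mc\cap\MS$ for which Krein's theorem fails, i.e. a Mackey-complete Schwartz space $E$ (carrying its Mackey topology, so that $E=\mathscr{S}(E)\in\MS$ trivially) together with a compact set $K$ whose closed absolutely convex cover $\overline{\Gamma(K)}$ is not complete.

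Concretely, I would start from the elementary weak phenomenon in $c_0$: the partial sums $x_N=\sum_{n\le N}e_n$ form a bounded, weakly (hence $\mathscr{S}(c_0)$-) Cauchy net whose only candidate limit $\mathbf 1=(1,1,\dots)$ lies in $\ell^\infty\setminus c_0$, while $(x_N)$ is \emph{not} Mackey-Cauchy, since $\|x_N-x_M\|_\infty=1$. This is exactly the kind of topological-but-not-Mackey leak that survives Mackey-completion yet still witnesses non-completeness. The difficulty is that in $\mathscr{S}(c_0)$ this leak is attached to the non-compact set $\{x_N\}$ rather than to the convex hull of a compact set: for a genuinely compact $K\subset\mathscr{S}(c_0)$ one checks that $\mathscr{S}(c_0)$ and $\sigma(c_0,\ell^1)$ coincide on the bounded set $\overline{\Gamma(K)}$, so Krein applied inside the \emph{complete} space $c_0$ forces $\overline{\Gamma(K)}$ to stay weakly compact, making $\mathscr{S}(c_0)$ in fact $k$-quasi-complete. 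The construction must therefore move the analogous leak \emph{inside} the closed absolutely convex hull of a compact set, which requires the ambient Mackey topology itself to be genuinely non-(quasi-)complete. I would realize this by taking the Schwartz--Mackey topology on a suitable Mackey-complete but non-quasi-complete dual pair, built for instance as a Mackey completion $\widehat{F}^M$ of an incomplete nuclear/Schwartz space $F$ inside its completion $\widetilde F$, chosen so that some compact $K\subset\widehat{F}^M$ admits a point of $\overline{\Gamma(K)}^{\widetilde F}$ that is a topological but not a Mackey limit of absolutely convex combinations of $K$, hence lies outside $\widehat{F}^M$.

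Then I would verify the three memberships and the failure. The space $E$ is Schwartz because it is a subspace (or $\mathscr{S}$-image) of a Schwartz space and $\mathscr{S}$ passes to subspaces and projective kernels by lemma \ref{SchwartzFunctor}; it is Mackey-complete by construction as a Mackey completion; and it lies in $\MS$ either because it carries its Mackey topology or because its topology is by definition $\mathscr{S}$ of that Mackey topology. Finally $\overline{\Gamma(K)}^E$ is not complete, since the exhibited absolutely convex combinations of $K$ form a Cauchy net converging in $\widetilde E$ to a point outside $E$. The main obstacle I anticipate is precisely the reconciliation at the heart of the lemma: Mackey-completeness tends to \emph{rescue} the closed convex hull of a compact set, through the Banach disks $E_B$ and, whenever the ambient space is complete, through Krein's theorem, so the example must be engineered so that the offending convex-combination net is topologically but not Mackey convergent and, simultaneously, is supported on a genuinely compact --- not merely bounded and weakly-Cauchy --- set $K$. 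Keeping these two requirements compatible while preserving the Schwartz and Mackey-topology conditions is the delicate step, and it is exactly this gap that makes $k$-quasi-completeness strictly stronger than Mackey-completeness inside $\McMS$.
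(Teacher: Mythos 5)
Your proposal correctly diagnoses the tension the lemma must exploit (Mackey-completeness only tests convergence in the Banach spaces $E_B$, while $k$-quasi-completeness tests topological completeness of $\overline{\Gamma(K)}$ for compact $K$), correctly discards the naive $c_0$ attempt, and correctly guesses the general shape of the construction: a Mackey completion of a small subspace sitting densely inside a larger complete Schwartz space, with a compact set whose closed absolutely convex cover reaches outside the Mackey completion. But the proposal stops exactly where the proof has to begin: you never exhibit such a space, and the existence of one is the entire content of the lemma. Saying the example must be ``chosen so that some compact $K$ admits a point of $\overline{\Gamma(K)}$ that is a topological but not a Mackey limit'' restates the statement to be proved rather than proving it; indeed you acknowledge in your last sentence that this is ``exactly the gap''. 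As it stands the argument is a (good) heuristic, not a proof.

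For comparison, the paper's construction is: take $F=C^0([0,1])$, $G=F'_c=\mathscr{S}(F'_\mu)$ (complete because $F$ is ultrabornological), $H=\mathrm{Span}\{\delta_x:x\in[0,1]\}$, and $E=\widehat{H}^M$ with the topology induced from $G$. The compact set is $K=\delta([0,1])$ (continuity of $x\mapsto\delta_x$ into $G$ follows from Ascoli), and the escaping point is Lebesgue measure $\lambda$, which lies in $\overline{\Gamma(K)}$ computed in $G$ by Krein--Milman applied to the unit ball of $F'$. The genuinely delicate step --- the one your outline leaves open --- is showing $\lambda\notin E$: since bounded sets of $G$ and of $F'_\beta$ coincide, Mackey convergence in $G$ forces norm convergence in $F'_\beta$, and Maharam's decomposition $F'=L^1(\{0,1\}^\omega)^{\oplus_1 2^\omega}\oplus_1\ell^1([0,1])$ places $H$ inside the norm-closed summand $\ell^1([0,1])$, which $\lambda$ does not belong to. Without an argument of this kind pinning down the Mackey completion and separating it from a specific point of the convex hull, the proposal does not establish the lemma. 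One further small correction: the ambient space need not be (and in the paper is not) non-quasi-complete --- $G$ is complete; what matters is that $E$ is a proper dense subspace of it that the convex hull of a compact set manages to leave.
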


\begin{proof}We take $\K=\R$ (the complex case is similar).
Let $F=C^0([0,1])$ the Banach space with the topology of uniform convergence. We take  $G=\mathscr{S}(F'_\mu)=F'_c$ which is complete since $F$ ultrabornological \cite[Corol 13.2.6]{Jarchow}. Consider $H=Span\{\delta_x, x\in[0,1]\}$ the vector space generated by Dirac measures and $E= \widehat{H}^M$ the Mackey completion with induced topology (since we will see $E$ identifies as a subspace of $G$). Let $K$ be the unit ball of $F'$, the space of measures on $[0,1]$. It is  absolutely convex, closed for any topology compatible with duality, for instance in $G$, and since $G$ is a Schwartz space, it is precompact, and complete by completeness of $G$, hence compact. By Krein-Millman's theorem \cite[\S 25.1.4]{Kothe} it is the closed convex cover of its extreme points. Those are known to be $\delta_x, -\delta_x, x\in [0,1]$ \cite[\S 25.2.(2)]{Kothe}. Especially, $E$ is dense in $G$, which is therefore its completion. By the proof of lemma \ref{MMc}, the Mackey-topology of $E$ is induced by $G$ and thus by lemma \ref{SchwartzFunctor}, $\mathscr{S}(E_\mu)$ is also the induced topology from $G$. Hence $E\in  \McMS$. But by Maharam decomposition of measures, it is known that $F'$ has the following decomposition (see e.g. \cite[p 22]{Haydon}) as an $\ell^1$-direct sum:
$$F'=L^1(\{0,1\}^\omega)^{\oplus_1 2^\omega}\oplus_1 \ell^1([0,1])$$
and the Dirac masses generate part of the second component, so that $H\subset \ell^1([0,1])$ in the previous decomposition. But the bounded sets in $G$ are the same as in $F'_\beta$ (by principle of uniform boundedness), hence Mackey-convergence in $G$ implies norm convergence in $F'_\beta$, so that by completeness of $\ell^1([0,1])$, $E\subset  \ell^1([0,1])$. Hence Lebesgue measure (which gives one of the summands $L^1(\{0,1\}^\omega)$) gives $\lambda\not\in E$. Finally, consider $\delta:[0,1]\to K\subset G$ the dirac mass map. It is continuous since a compact set in $F$ is equicontinuous by Ascoli theorem, which gives exactly uniform continuity of $\delta$  on compact sets in $F$. Hence $\delta([0,1])$ is compact in $E$ while its absolutely convex cover in $G$ contains $\lambda$ so that the intersection with $E$ cannot be complete, hence $E$ is not $k$-quasi-complete.
\end{proof}

\subsection{Relation to projective limits and direct sums}
 \label{subsec:projlim}
 We now deduce the following stability properties from Theorem \ref{MackeyCaractrhoRef}.
 
\begin{corollary}\label{ProjrhoRef}
The class of $\rho$-reflexive spaces is stable by countable locally convex direct sums  and arbitrary products.
 \end{corollary}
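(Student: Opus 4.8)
The plan is to verify directly the three conditions characterising $\rho$-reflexivity in Theorem \ref{MackeyCaractrhoRef}: a space $F$ is $\rho$-reflexive if and only if (i) $F$ is Mackey-complete, (ii) its topology is the Schwartz topology $\mathscr{S}(\mu(F,F'))$ associated to the Mackey topology of its dual, and (iii) its Mackey dual $F'_\mu$ is Mackey-complete. Beyond the results of this section, the only inputs I would use are the classical computations of the Mackey topology of products and direct sums, $\mu(\prod_i E_i, \bigoplus_i E_i') = \prod_i \mu(E_i,E_i')$ and $\mu(\bigoplus_i E_i, \prod_i E_i') = \bigoplus_i \mu(E_i,E_i')$ (see \cite[\S 22.5]{Kothe}), together with the facts that bounded sets of a locally convex direct sum lie in finite subsums, and that a \emph{countable} locally convex direct sum of Schwartz spaces is again a Schwartz space (see \cite[\S 10.4, \S 16.4]{Jarchow} and \cite{HogbeNlendMoscatelli}). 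Throughout I use that the dual of $\prod_i E_i$ is $\bigoplus_i E_i'$ and that of $\bigoplus_i E_i$ is $\prod_i E_i'$.

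For an arbitrary product $E = \prod_i E_i$ of $\rho$-reflexive spaces, condition (i) holds because a product of Mackey-complete spaces is Mackey-complete (a Mackey-Cauchy sequence, being bounded, is tested on bounded disks contained in products of bounded disks of the factors; cf. remark \ref{MackeyInclusion}). For (ii), the topology of each factor is $\mathscr{S}(\mu(E_i,E_i'))$, so using the Mackey topology of a product and then lemma \ref{SchwartzFunctor} (commutation of $\mathscr{S}$ with arbitrary products) one gets $\mathscr{S}(\mu(E,E')) = \mathscr{S}(\prod_i \mu(E_i,E_i')) = \prod_i \mathscr{S}(\mu(E_i,E_i')) = \prod_i E_i = E$. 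For (iii), the Mackey dual of $E$ is $\bigoplus_i (E_i)'_\mu$ by the formula for the Mackey topology of a direct sum; each $(E_i)'_\mu$ is Mackey-complete by $\rho$-reflexivity of $E_i$, and since bounded sets of a direct sum live in finite subsums, an arbitrary direct sum of Mackey-complete spaces is Mackey-complete, which yields (iii).

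For a countable direct sum $E = \bigoplus_n E_n$, condition (i) follows since every bounded set (in particular the disk associated with any Mackey-Cauchy sequence) is contained in a finite subsum $\prod_{n\in F} E_n$, which is Mackey-complete. Condition (iii) is dual to the previous case: the Mackey dual of $E$ is $\prod_n (E_n)'_\mu$, a product of Mackey-complete spaces, hence Mackey-complete. The crucial point is condition (ii): using the Mackey topology of a direct sum I must identify $\mathscr{S}(\bigoplus_n \mu(E_n,E_n'))$ with $\bigoplus_n \mathscr{S}(\mu(E_n,E_n')) = \bigoplus_n E_n = E$. Here I use that $\mathscr{S}$, being the reflector onto Schwartz spaces (proposition \ref{SMepsilon}), is a left adjoint and therefore preserves the coproduct, giving $\mathscr{S}(\bigoplus_n \mu(E_n,E_n')) = \mathscr{S}(\bigoplus_n \mathscr{S}(\mu(E_n,E_n')))$; since a countable direct sum of Schwartz spaces is Schwartz, the space $\bigoplus_n \mathscr{S}(\mu(E_n,E_n'))$ is already Schwartz and equals its own Schwartzification, whence the claimed identity.

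The main obstacle, and precisely the reason the statement restricts to countable direct sums, is the behaviour of $\mathscr{S}$ under coproducts in condition (ii): $\mathscr{S}$ commutes with \emph{arbitrary} products (lemma \ref{SchwartzFunctor}) but commutes with a direct sum only when that sum is again Schwartz, which holds for countable but not arbitrary direct sums of Schwartz spaces. The remaining work is routine bookkeeping of the standard product/direct-sum formulas for the Mackey topology and checking their compatibility with the duals of the factors, which after Schwartzification are the pairs underlying $\mu(E_i,E_i')$.
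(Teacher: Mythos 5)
Your proof is correct and follows essentially the same route as the paper's: both verify the three conditions of Theorem \ref{MackeyCaractrhoRef} using the K\"othe formulas for the Mackey topology of products and direct sums, the commutation of $\mathscr{S}$ with arbitrary products, the fact that a countable direct sum of Schwartz spaces is Schwartz (your reflector/left-adjoint phrasing of the coproduct commutation is just the universal-property argument the paper writes out explicitly), and stability of Mackey-completeness under products and direct sums.
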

 \begin{proof}
 
Let $(E_i)_{i\in I}$ a countable family of $\rho$-reflexive spaces,  and $E=\oplus_{i\in I}E_i$. 
{ Using Theorem \ref{MackeyCaractrhoRef}, we aim at proving that $E$ is Mackey-complete, has its Schwartz topology associated to the Mackey topology of its dual $\mu_{(s)}(E,E')$ and its %$\mathscr{R}$-
dual is also Mackey-complete with its Mackey topology.}

From the Theorem \ref{MackeyCaractrhoRef}, $E_i$ itself has the Schwartz topology associated to its Mackey topology. From {\cite[\S 22. 5.(4)]{Kothe}}, the Mackey topology on $E$ is the direct sum of Mackey topologies. Moreover the maps $E_i\to \mathscr{S}(E_i)$ give a direct sum map $E\to \oplus_{i\in I}\mathscr{S}(E_i)$ and thus a continuous map $\mathscr{S}(E)\to \oplus_{i\in I}\mathscr{S}(E_i)$ since a countable direct sum of Schwartz spaces is a Schwartz space. Conversely the maps $E_i\to E$ give maps $\mathscr{S}(E_i)\to \mathscr{S}(E)$ and by the universal property this gives $\mathscr{S}(E)\simeq \oplus_{i\in I}\mathscr{S}(E_i)$. Therefore, if all spaces $E_i$ are $\rho$-reflexive, $E$ carries the Schwartz topology associated to its Mackey topology. From \cite[Th 2.14, 2.15]{KrieglMichor}, Mackey-complete spaces are stable by arbitrary projective limits and direct sums, thus the Mackey-completeness condition on the space and its dual {(using the computation of dual Mackey topology from \cite[\S 22. 5.(3)]{Kothe})} are also satisfied.

For an arbitrary product, {\cite[\S 22. 5.(3)]{Kothe}} again gives the Mackey topology, universal properties and stability of Schwartz spaces by arbitrary products give the commutation of $\mathscr{S}$ with arbitrary products and the stability of Mackey-completeness can be safely used (even for the dual, uncountable direct sum).

 \end{proof}

\begin{lemma}\label{projbidual}
For $(E_i,i\in I)$ a (projective) directed system of Mackey-complete Schwartz locally convex space if $E=\mathrm{proj}\ \lim_{i\in I} E_i$ , then : $$((E)^*_\rho)^*_\rho\simeq \Big[\Big[\mathrm{proj}\  \lim_{i\in I} ((E_i)^*_\rho)^*_\rho)\Big]^*_\rho\Big]^*_\rho.$$
 The same holds for general locally convex kernels and categorical limits.
\end{lemma}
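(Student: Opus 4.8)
The plan is to reformulate the claim in terms of the $\rho$-reflexivization functor $R=((\cdot)^*_\rho)^*_\rho$, and to reduce everything to the observation that, on Mackey-complete Schwartz spaces, $R$ depends only on the underlying dual pair. Write $P=\mathrm{proj}\lim_{i}E_i$ and $Q=\mathrm{proj}\lim_{i}R(E_i)$, so that the left-hand side of the statement is $R(P)$ and the right-hand side is $R(Q)$; thus it suffices to prove $R(P)\simeq R(Q)$. First I would record that both $P$ and $Q$ are Mackey-complete Schwartz spaces: stability of the Schwartz property under projective kernels and categorical limits is lemma~\ref{SchwartzFunctor}, and stability of Mackey-completeness is the fact used in the proof of corollary~\ref{ProjrhoRef} (after \cite[Th 2.14, 2.15]{KrieglMichor}); moreover the spaces $R(E_i)$ are $\rho$-reflexive by theorem~\ref{rhoref}, hence a fortiori Mackey-complete Schwartz. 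Consequently, by the computation in step~3 of theorem~\ref{rhoref} together with lemma~\ref{MackeyArensSchwartz}, one has the topological identities $R(P)=\mathscr{S}\big((P'_\mu)'_\mu\big)$ and $R(Q)=\mathscr{S}\big((Q'_\mu)'_\mu\big)$. Since $\mathscr{S}((F'_\mu)'_\mu)$ is manufactured entirely from the dual pair $\langle F,F'\rangle$ (the space $F'_\mu$, its dual $F$, the Mackey topology $\mu(F,F')$, and finally its associated Schwartz topology all depend only on this pairing), it is enough to show that $P$ and $Q$ carry the same dual pair.

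Next I would identify the dual pairs. The spaces $P$ and $Q$ have the same underlying vector space, realised as a subspace of $\prod_i E_i=\prod_i R(E_i)$ cut out by the same compatibility conditions, and the continuous identities $R(E_i)\to E_i$ assemble into a continuous identity $Q\to P$. A categorical limit (or projective kernel) is a subspace of a product, so its dual is the corresponding quotient of $\bigoplus_i E_i'$; concretely $P'=\mathrm{span}\bigcup_i \pi_i^{*}(E_i')$ and $Q'=\mathrm{span}\bigcup_i \pi_i^{*}(R(E_i)')$, with $\pi_i$ the set-theoretically identical projections. Here the crucial input is that $R(E_i)'=E_i'$ with the same pairing: indeed $R(E_i)=\mathscr{S}(((E_i)'_\mu)'_\mu)$ has the same underlying set as $E_i$ and, since neither the Mackey bidual nor $\mathscr{S}$ changes the dual, also the same dual. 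Hence $\pi_i^{*}(E_i')=\pi_i^{*}(R(E_i)')$ as sets of linear forms on the common space, and therefore $P'=Q'$ with identical pairing.

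Finally I would conclude. Equality of the dual pairs gives $\sigma(P,P')=\sigma(Q,Q')$, hence the same absolutely convex weakly compact sets, hence $\mu(P',P)=\mu(Q',Q)$; thus $P'_\mu=Q'_\mu$ as topological vector spaces, and in turn $(P'_\mu)'_\mu=(Q'_\mu)'_\mu$. Applying $\mathscr{S}$ to these identical spaces yields $R(P)=R(Q)$, which is the desired topological isomorphism. The general locally convex kernel and categorical limit case is handled verbatim, using that such a limit is again a subspace of a product (as in lemma~\ref{SchwartzFunctor}). I expect the main obstacle to be the bookkeeping of the second paragraph: making rigorous that $R$ factors through the dual pair relies on the clean formula $R(F)\simeq\mathscr{S}((F'_\mu)'_\mu)$, which is only available once one knows that $P$ and $Q$ are Mackey-complete Schwartz, so the stability statements for projective limits must be invoked carefully before the dual-pair argument can be run.
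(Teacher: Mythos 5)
Your proof is correct, but it takes a genuinely different route from the paper's. The paper's own argument is purely functorial: it builds the map in one direction by applying bidualization to the canonical maps $E\to E_i$, passing to the limit and invoking $\rho$-reflexivity (idempotence) of $((\cdot)^*_\rho)^*_\rho$, and builds the inverse by applying the canonical continuous identity $((E_i)^*_\rho)^*_\rho\to E_i$ (equation \eqref{idmap}, valid since each $E_i\in\McS$), passing to the limit and bidualizing again; that the two maps are mutually inverse is implicit from their being identity maps on the common underlying vector space. You instead compute both sides outright: you show that $P=\mathrm{proj}\lim E_i$ and $Q=\mathrm{proj}\lim ((E_i)^*_\rho)^*_\rho$ are Mackey-complete Schwartz spaces carrying the \emph{same dual pair} (same underlying subspace of the product, same dual by Hahn--Banach and by $(((E_i)^*_\rho)^*_\rho)'=E_i'$), and then apply the formula $((F)^*_\rho)^*_\rho\simeq\mathscr{S}((F'_\mu)'_\mu)$ from Theorem \ref{rhoref}, which depends only on the dual pair. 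What your route buys is that no verification of mutual inverseness is needed and the mechanism of the isomorphism is made completely explicit; what it costs is the extra bookkeeping you yourself flag (stability of $\McS$ under limits via Lemma \ref{SchwartzFunctor} and \cite[Th 2.14, 2.15]{KrieglMichor}, the identification of duals of subspaces of products, and the fact that the connecting maps of the two systems coincide set-theoretically, which you use implicitly when asserting that $P$ and $Q$ are cut out by the same compatibility conditions). Both arguments cover the general locally convex kernel and categorical limit cases in the same way, by realizing the limit as a subspace of a product.
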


 \begin{proof}
 The bidualization functor and universal property of projective limits give  maps $(E)^*_\rho)^*_\rho\to ((E_i)^*_\rho)^*_\rho$ and then $((E^*_\rho)^*_\rho\to \mathrm{proj}\  \lim_{i\in I} ((E_i)^*_\rho)^*_\rho,$ (see \cite[\S 19.6.(6)]{Kothe} for l.c. kernels) and bidualization and $\rho$-reflexivity concludes to the first map. Conversely, the canonical continuous linear map in the Mackey-complete Schwartz case  $((E_i)^*_\rho)^*_\rho)=((E_i)'_c)^*_\rho)\to E_i$ gives the reverse map after passing to the projective limit and double $\rho$-dual. The localy convex kernel case  and the categorical limit case are identical.
 \end{proof}

 \begin{proposition}\label{rhoRefComplete}
 The category $\rRef$ is complete and cocomplete, with products and countable direct sums agreeing  with those in $\LCS$ and limits given in lemma \ref{projbidual}
 \end{proposition}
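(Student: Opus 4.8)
The plan is to construct all small limits from products and equalizers, and then to obtain all colimits for free by duality. The products are immediate: by Corollary~\ref{ProjrhoRef} an arbitrary product of $\rho$-reflexive spaces is again $\rho$-reflexive, and since $\rRef$ is full in $\LCS$ one gets $\rRef(T,\prod_i E_i)=\LCS(T,\prod_i E_i)=\prod_i\LCS(T,E_i)=\prod_i\rRef(T,E_i)$, so that products in $\rRef$ agree with those in $\LCS$. For a general diagram $(E_i)$ in $\rRef$, I would first form the limit $L=\lim_i E_i$ computed in $\LCS$ (a kernel inside a product). This $L$ need not be $\rho$-reflexive, since closed subspaces do not inherit the conditions of Theorem~\ref{MackeyCaractrhoRef}; the candidate limit is therefore $(L^*_\rho)^*_\rho$, which \emph{is} $\rho$-reflexive by Theorem~\ref{rhoref}, equipped with the structure maps obtained by bidualizing the projections $L\to E_i$ (using $(E_i{}^*_\rho)^*_\rho=E_i$). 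This is exactly the object identified in Lemma~\ref{projbidual}. By fullness and the definition of the $\LCS$-limit, its universal property reduces to showing that the canonical injection $\eta_L:L\to(L^*_\rho)^*_\rho$ induces a natural bijection $\LCS(T,L)\to\LCS(T,(L^*_\rho)^*_\rho)$ for every $T\in\rRef$.

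Injectivity of precomposition with $\eta_L$ is clear, since $\eta_L$ is itself injective (it is the continuous identity $L\to\mathscr{S}(L)$ followed by the inclusion into the Mackey completion $\widehat{\mathscr{S}(L)}^M$). For surjectivity, a map $g:T\to(L^*_\rho)^*_\rho$ composed with the projections yields a compatible family $T\to E_i$, hence a unique $f:T\to L$ in $\LCS$, and one must check $\eta_L\circ f=g$. This follows once $(L^*_\rho)^*_\rho\to\prod_i E_i$ is known to be injective, as then a map into $(L^*_\rho)^*_\rho$ is determined by its composites with the projections. The main obstacle is precisely this injectivity together with the accompanying topological control. I would establish it using that $\mathscr{S}$ commutes with the embedding $L\hookrightarrow\prod_i E_i$ and with products (Lemma~\ref{SchwartzFunctor}), which gives an embedding $\mathscr{S}(L)\hookrightarrow\prod_i E_i$; passing to Mackey completions and using that $\prod_i E_i$ is Mackey-complete keeps this an injection $\widehat{\mathscr{S}(L)}^M\hookrightarrow\prod_i E_i$, and this underlies the desired map. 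The delicate part is that $(L^*_\rho)^*_\rho$ carries a possibly finer ($\rho$-)topology than the subspace topology from the completion, so I would invoke Lemma~\ref{projbidual} to identify the topology of $(L^*_\rho)^*_\rho$ with that of the bidualized limit and conclude that $\eta_L\circ f=g$ topologically.

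Cocompleteness then follows by duality. By Lemma~\ref{rhoDualFunctor} and Theorem~\ref{rhoref}, $(\cdot)^*_\rho$ is a contravariant endofunctor of $\rRef$ with $((\cdot)^*_\rho)^*_\rho$ naturally isomorphic to the identity, hence an anti-autoequivalence $(\cdot)^*_\rho:\rRef\to\rRef^{op}$. An equivalence transports completeness, so completeness of $\rRef$ (just established) yields completeness of $\rRef^{op}$, that is, cocompleteness of $\rRef$, with colimits computed as $\mathrm{colim}_i E_i=\big(\lim_i (E_i)^*_\rho\big)^*_\rho$. Finally, countable direct sums agree with $\LCS$: by Corollary~\ref{ProjrhoRef} the $\LCS$-direct sum $\bigoplus_i E_i$ of countably many $\rho$-reflexive spaces is $\rho$-reflexive, and fullness gives $\rRef(\bigoplus_i E_i,T)=\prod_i\rRef(E_i,T)$, so the $\LCS$-coproduct already has the correct universal property.
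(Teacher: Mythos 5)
Your proof is correct and follows essentially the same route as the paper's (one-line) argument: products and countable direct sums are handled by Corollary \ref{ProjrhoRef}, and general limits are obtained by bidualizing the $\LCS$-limit, which is exactly the object identified in Lemma \ref{projbidual}, with your verification of the universal property filling in what the paper leaves implicit. The only cosmetic difference is that you obtain colimits by transporting limits through the anti-equivalence $(\cdot)^*_\rho$, whereas the paper bidualizes the $\LCS$-colimit directly; the two constructions agree since $((\cdot)^*_\rho)^*_\rho$ is naturally isomorphic to the identity on $\rRef$.
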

 \begin{proof}
 Bidualazing after application of $\LCS$-(co)limits clearly gives (co)limits. Corollary \ref{ProjrhoRef} gives the product and sum case.
 \end{proof}

 \subsection{The Dialogue category $\McS$.}

We first deduce from Theorem \ref{FirstMALL} and a variant of \cite[Prop 2]{Schwartz} a useful:

\begin{lemma}
\label{McSSymmMonoidal2}
Let $\parr_{sb}$ be the $\parr$ of the complete $*$-autonomous category $\CSch$ given by $A\parr_{sb} B= \mathscr{S}(L_b((A)'_b,B)$. Then we have the equality in $\CSch$:
\begin{equation}\label{ReductionMcSCSch}\forall E,F\in \McS,\ \ \ \  E_{sc}\parr_{sb} F_{sc}=(E\varepsilon F)_{sc}.\end{equation}
As a consequence, $(\McS,\varepsilon,\K)$ % and $(\McMS,\varepsilon,\K)$ are
 is a symmetric monoidal category.
\end{lemma}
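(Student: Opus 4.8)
The statement has two parts: the identity \eqref{ReductionMcSCSch} in $\CSch$, and the symmetric monoidal structure deduced from it. For the second part the plan is to apply the categorical Lemma \ref{CategoricStabMonoidal} to $\mathcal{C}=(\CSch,\parr_{sb},\K)$ — which is symmetric monoidal since $\CSch$ is $*$-autonomous by Theorem \ref{FirstMALL}, with $\parr_{sb}$ the cotensor and $\K$ the self-dual unit — together with $\mathcal{D}=\McS$, $R=(\cdot)_{sc}\colon\McS\to\CSch$ and $L=\mco{\cdot}\circ U\colon\CSch\to\McS$. First I would check $L\dashv R$ by composing the adjunction $U\dashv(\cdot)_{sc}$ of Theorem \ref{FirstMALL}(3) with the reflection $\mco{\cdot}\dashv(\McS\hookrightarrow\Sch)$ (the reflector of $\McS\subset\Sch$ is $\mco{\cdot}$ by Proposition \ref{SMepsilon}, Mackey-completion preserving the Schwartz property), giving $\CSch(A,B_{sc})=\Sch(U(A),B)=\McS(\mco{U(A)},B)$. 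The three hypotheses of Lemma \ref{CategoricStabMonoidal} then reduce to $L(R(B))=\mco{U(B_{sc})}=\mco{B}=B$ (Mackey-completeness of objects of $\McS$), $R(\K)=\K_{sc}=\K$ (all bornologies agree in dimension one), and $R(E\o_\mathcal{D}F)=R(E)\parr_{sb}R(F)$, which is exactly \eqref{ReductionMcSCSch}. Since $E\varepsilon F\in\McS$ (Propositions \ref{Meta} and \ref{SMepsilon}), \eqref{ReductionMcSCSch} also yields $E\o_\mathcal{D}F=L(E_{sc}\parr_{sb}F_{sc})=\mco{E\varepsilon F}=E\varepsilon F$, so the transported tensor is $\varepsilon$ and the conclusion is $(\McS,\varepsilon,\K)$ symmetric monoidal (with $(\cdot)_{sc}$ strong monoidal as a byproduct).

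The whole weight is therefore on \eqref{ReductionMcSCSch}, which I would prove by matching topologies and bornologies separately, both sides being objects of $\CSch$ with common underlying lcs $E\varepsilon F$. For the topology I unfold $E_{sc}\parr_{sb}F_{sc}=\mathscr{S}L_b((E_{sc})'_b,F_{sc})$ (the $\mathscr{S}$ only rescales the bornology) and compare $L_b((E_{sc})'_b,F_{sc})$ with $L_b((E_c)'_b,F_c)$: both duals carry the equicontinuous bornology, so the topologies of uniform convergence coincide; the target topology is $F$ in both cases; and the underlying sets of maps agree because continuous maps $\mathscr{S}(E'_c)\to F$ and $E'_c\to F$ coincide ($F$ being Schwartz, by the reflection property of $\mathscr{S}$) and because every $f\in E\varepsilon F=L(E'_c,F)$ is automatically equibounded into $F_{sc}$: an equicontinuous $A\subset E'$ lies in $\overline{\Gamma(a_n)}$ for an equicontinuous null sequence $(a_n)$ (as $E$ is Schwartz), whence $f(A)\subset\overline{\Gamma(f(a_n))}$ with $(f(a_n))$ Mackey-null for the absolutely convex compact bornology $\gamma_F$, so $f(A)\in(\gamma_F)_{c_0}$, the bornology of $F_{sc}$. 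Thus the topology equals $U(E_c\parr_b F_c)=E\varepsilon F$ by Theorem \ref{FirstMALL}(4), which is already Schwartz by Proposition \ref{Sepsilon}, matching $(E\varepsilon F)_{sc}$.

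For the bornology, both sides carry Schwartz bornologies $(\cdot)_{c_0}$, and since such a bornology is determined by its Mackey-null sequences, with $\mathcal{B}$-null and $(\mathcal{B})_{c_0}$-null sequences coinciding \cite[Thm 10.1.2]{Jarchow}, it suffices to show that the absolutely convex compact bornology $\gamma$ of $E\varepsilon F$ and the equibounded–equicontinuous bornology $\beta_{sc}$ of $L_b((E_{sc})'_b,F_{sc})$ share the same Mackey-null sequences. One inclusion is clean: $\beta_{sc}$ is contained in the von Neumann bornology of $E\varepsilon F=E\eta F$, which equals the $\varepsilon$-equihypocontinuous sets by Proposition \ref{etageneral}(6), and since $E\eta F$ is Mackey-complete (Proposition \ref{Meta}), a $\beta_{sc}$-Mackey-null sequence is $\gamma$-Mackey-null — it lies in a Banach disk $(E\eta F)_B$, and the standard rescaling trick realizes it inside the absolutely convex compact hull of a null sequence of that Banach disk. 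The reverse inclusion is the crux and is the announced variant of \cite[Prop 2]{Schwartz}: I must show that an absolutely convex compact $\mathcal{H}\subset E\varepsilon F$ is equibounded into the smaller Schwartz bornology $F_{sc}$, not merely into $F_c$. Writing $\mathcal{H}\subset\overline{\Gamma(\eta_m)}$ with $(\eta_m)$ $\gamma$-null and $A\subset\overline{\Gamma(a_n)}$ with $(a_n)$ equicontinuous null, this amounts to a joint estimate of the form $\|\eta_m(a_n)\|_{L_F}\lesssim\|\eta_m\|_K\,\|a_n\|_{A'}$ with $L_F$ a fixed compact disk of $F$ (both $E$ and $F$ being Schwartz providing such null generators), which lets one gather $\{\eta_m(a_n)\}$ into a single $\gamma_F$-null sequence and conclude $\mathcal{H}(A)\in(\gamma_F)_{c_0}$. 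The hard part is precisely this last estimate — extracting genuine boundedness into the Schwartz bornology of the target — which is where Mackey-completeness of $E\varepsilon F$ and the Schwartz property of both factors must be combined, in the spirit of the proof of Corollary \ref{SchwartzCondExpEpsilon}.
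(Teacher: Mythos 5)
Your overall architecture coincides with the paper's: the monoidal structure is obtained exactly as you describe, by applying Lemma \ref{CategoricStabMonoidal} to the adjunction $\widehat{\cdot}^M\circ U\dashv(\cdot)_{sc}$ between $\CSch$ and $\McS$, and the identification of the underlying space and topology of $E_{sc}\parr_{sb}F_{sc}$ with $E\varepsilon F$ goes through just as you say (the paper phrases it as $U(E_{sc}\parr_b F_{sc})=L_\epsilon(\mathscr{S}(E'_c),F)=L_\epsilon(E'_c,F)=E\varepsilon F$, together with $\mathscr{S}L_b((E_{sc})'_b,F_{sc})=\mathscr{S}L_b((E_{c})'_b,F_{c})$ via \eqref{InnerAdjointS}). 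The reduction of the bornological comparison to matching Mackey-null sequences, and the easy inclusion via Proposition \ref{etageneral}(6) and Mackey-completeness of $E\eta F$, are also the paper's.

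The decisive step, however --- showing that a $\gamma$-Mackey-null sequence $(x_n)$ in $E\varepsilon F$ is Mackey-null for the bornology of $L_b((E_c)'_b,F_{sc})$ --- is exactly what you leave unproven: you state the shape of the required estimate and then write that ``the hard part is precisely this last estimate'', listing hypotheses that ``must be combined'' without combining them. The argument is short but has to be supplied. Take $(y_n)$ with $\|y_n\|_K\to0$ and $\|x_n\|_C\to0$ for $C=\{y_n, n\in\N\}^{oo}$; one must show $C$ is bounded for the target bornology. For $A=\{z_m, m\in\N\}^{oo}$ with $(z_m)$ $\epsilon$-null in $E'$ (available since $E$ is Schwartz), the set $D=(C(A))^{oo}$ is first shown to be merely von Neumann bounded in $F$, because $C$, being bounded in $E\varepsilon F=E\eta F$, is $\varepsilon$-equihypocontinuous on $E'_\beta\times F'_\beta$ (\cite[Prob 2bis p 28]{Schwartz}, i.e.\ Proposition \ref{etageneral}(6)); then the gauge inequality $\|y_n(z_m)\|_D\le\|y_n\|_C\,\|z_m\|_A$ is immediate by bilinearity, so $(y_n(z_m))_{n,m}$ is Mackey-null for the von Neumann bornology of $F$; finally Mackey-completeness of $F$ --- not of $E\varepsilon F$, as you write --- upgrades this to Mackey-nullity for the Banach-disk bornology, whose bipolar is then compact, so that $C(A)$ lies in the Schwartz bornology of $F_{sc}$. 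Note in particular that your hoped-for estimate ``with $L_F$ a fixed compact disk of $F$'' cannot be arranged directly: equihypocontinuity only yields a bounded disk, and compactness is recovered a posteriori from Mackey-completeness of the target. Without this chain the bornological half of \eqref{ReductionMcSCSch} --- and hence the lemma --- is not established.
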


\begin{proof}
We already know that $(\McS,\varepsilon,\K)$ is symmetric monoidal but we give an alternative proof using lemma \ref{CategoricStabMonoidal}.

All spaces $E,F$ are now in $\McS$.
Note that % (E\varepsilon F)_{sc}=U(E_c\parr_b F_c))_{sc}
$(E_{sc})'_b$ is $\mathscr{S}(E'_c)$ with equicontinuous bornology, which is a Schwartz bornology, hence a continuous linear map from it to any $F$ sends a bounded set into a bipolar of a Mackey-null sequence for the absolutely convex compact bornology. Hence
$$ U(E_{sc} \parr_b F_{sc})=U(L_b((E_{sc})'_b,F_{sc}))=L_\epsilon(\mathscr{S}(E'_c),F)=L_\epsilon(E'_c,F)=E\varepsilon F$$
since the boundedness condition is satisfied hence equality as spaces, and the topology is the topology of convergence on equicontinuous sets, and the next-to-last equality since $F$ Schwartz. Moreover an equicontinuous set in $L_\epsilon(\mathscr{S}(E'_c),F)$ coincide with those in $L_\epsilon(E'_c,F)$ and an equibounded set in $L_b((E_{sc})'_b,F_{sc})$ only depends on the topology on $E$, hence in $\CLCS$: $$L_b((E_{sc})'_b,F_{sc})=L_b((E_{c})'_b,F_{sc})$$

Now in $\CSch$ we have $E_{sc}\parr_{sb}F_{sc}= \mathscr{S}L_b((E_{sc})'_b,F_{sc})= \mathscr{S}L_b((E_{c})'_b,F_{sc})$
and $F_{sc}=\mathscr{S}L_b((F_c)'_b,\K)$ hence \eqref{InnerAdjointS}
gives:$$E_{sc}\parr_{sb} F_{sc}= \mathscr{S}L_b((E_{c})'_b,\mathscr{S}L_b((F_c)'_b,\K))= \mathscr{S}L_b((E_{c})'_b,L_b((F_c)'_b,\K))=\mathscr{S}L_b((E_{c})'_b,F_{c})$$
 so that the bornology is the Schwartz bornology associated to the $\epsilon$-equicontinuous bornology of $E\varepsilon F$ (the one of $E_c\parr_b F_c$). It remains to identify this bornology with the one of $[E\varepsilon F]_{sc}$. Of course from this description the identity map 
$E_{sc}\parr_{sb} F_{sc}\to [E\varepsilon F]_{sc}$ is bounded, one must check the converse.

This is a variant of \cite[Prop 2]{Schwartz}.  Thus take an absolutely convex compact $K\subset E\varepsilon F=L(E'_c,F)$  and a sequence $\{x_n,n\in \N\}\subset (E\varepsilon F)_K$, with $||x_n||_K\to 0$. %and $C=\{x_n,n\in \N\}^{oo}$ a generating element of the bornology $[E\varepsilon F]_{sc}$. 
We must check it is Mackey-null in $E_c\parr_b F_c$.
For take as usual $\{y_n,n\in \N\}$ another sequence with  $||y_n||_K\to 0$ and $C=\{y_n,n\in \N\}^{oo}$ such that $||x_n||_C\to 0$. It suffices to check that $C$ is $\varepsilon$-equicontinous in $E\varepsilon F$, the bornology of $E_c\parr_b F_c$.

 For instance, one must show that for $A$ equicontinuous  in $E'$, $D=(C(A))^{oo}$ is absolutely convex compact in $F$ (and the similar symmetric condition).
 But since $E$ is Schwartz, it suffices to take $ A=\{z_n,n\in \N\}^{oo}$ with $z_n$ $\epsilon$-null in $E'$ and especially, Mackey-null. But $D\subset \{y_n(z_m), n,m\in \N\}^{oo}$ so that it suffices to see that $(y_n(z_m))_{n,m\in \N^2}$ is Mackey-null (since $F$ is Mackey-complete, this  will imply Mackey-null for the bornology of Banach disk, hence with compact bipolar). But from \cite[Prob 2bis p 28]{Schwartz} since $C$ is bounded in $E\varepsilon F$ it is $\varepsilon$-equihypocontinuous on $E'_\beta\times F'_\beta$ and hence it sends an equicontinuous set as $A$ to a bounded set in $F$, so that $D$ is bounded in $F$. Finally, $||(y_n(x_m))||_D\leq ||x_n||_A||y_m||_C$ hence the claimed Mackey-null property.

Let us prove again that $(\McS,\varepsilon,\K)$ is symmetric monoidal using lemma \ref{CategoricStabMonoidal} starting from $(\CSch,\parr_{sb},\K)$.
We apply it to the adjunction $(\cdot)_{sc}:\McS\to \CSch$ with left adjoint $\widehat{\cdot}^M\circ U$ using $U$ from Theorem \ref{FirstMALL}.(3).
The lemma concludes since the assumptions are easily satisfied, especially $E\varepsilon F=U([E\varepsilon F]_{sc})=\widehat{\cdot}^M\circ U([E\varepsilon F]_{sc})$ from stability of Mackey-completeness and using the key \eqref{ReductionMcSCSch}
\end{proof}

 We will now use lemma \ref{lemma:transport_dialoque_categories} to obtain a Dialogue category.
 \begin{proposition} \label{rhoDialogue}
 The negation $(.)^*_\rho$ gives $\McS^{op}$ the structure of a Dialogue category with tensor product $\varepsilon$.
\end{proposition}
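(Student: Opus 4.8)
The plan is to obtain the Dialogue structure by transporting the one carried by $\CSch^{op}$ along the monoidal adjunction of Lemma~\ref{McSSymmMonoidal2}, exactly as announced in the remark following Lemma~\ref{lemma:transport_dialoque_categories}. Since $\CSch$ is $*$-autonomous by Theorem~\ref{FirstMALL}, its opposite $\CSch^{op}$ is a Dialogue category: its monoidal product is $\parr_{sb}$ and its tensorial negation is $\neg=(\cdot)'_b$. On the other side, Lemma~\ref{McSSymmMonoidal2} tells us $(\McS,\varepsilon,\K)$ is symmetric monoidal, hence so is $(\McS^{op},\varepsilon,\K)$. I would then set $\mathcal{C}=\CSch^{op}$, $\mathcal{D}=\McS^{op}$, $R=\widehat{\cdot}^M\circ U$ and $L=(\cdot)_{sc}$ and feed these into Lemma~\ref{lemma:transport_dialoque_categories}.

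To apply that lemma I must check three points. First, $L$ is left adjoint to $R$ in the opposite categories: Lemma~\ref{McSSymmMonoidal2} provides the adjunction $\widehat{\cdot}^M\circ U \dashv (\cdot)_{sc}$ between $\CSch$ and $\McS$, and passing to opposite categories reverses the adjunction into $L=(\cdot)_{sc}\dashv R=\widehat{\cdot}^M\circ U$ as functors $\McS^{op}\to\CSch^{op}$ and $\CSch^{op}\to\McS^{op}$. Second, $L$ must be strictly monoidal, i.e. $L(E)\parr_{sb}L(F)=L(E\varepsilon F)$; this is precisely the key identity $E_{sc}\parr_{sb}F_{sc}=(E\varepsilon F)_{sc}$ of \eqref{ReductionMcSCSch}, together with $\K_{sc}=\K$ for the unit. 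Lemma~\ref{lemma:transport_dialoque_categories} then immediately yields that $E\mapsto R(\neg L(E))$ is a tensorial negation on $\McS^{op}$, with monoidal product $\varepsilon$.

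It remains to identify the transported negation with $(\cdot)^*_\rho$. Here $R(\neg L(E))=\widehat{\cdot}^M\circ U\big((E_{sc})'_b\big)$, and in the course of proving Lemma~\ref{McSSymmMonoidal2} one sees that $(E_{sc})'_b$ is $\mathscr{S}(E'_c)$ equipped with its equicontinuous bornology, so that $U$ returns the \lcs $\mathscr{S}(E'_c)$ and $R(\neg L(E))=\widehat{\mathscr{S}(E'_c)}^M$. For $E\in\McS$ one has $\widehat{\mathscr{S}(E)}^M=E$, whence $E'_{\mathscr{S}\rho}=(\widehat{\mathscr{S}(E)}^M)'_c=E'_c$ and therefore $E^*_\rho=\widehat{\mathscr{S}(E'_{\mathscr{S}\rho})}^M=\widehat{\mathscr{S}(E'_c)}^M=R(\neg L(E))$; that $(\cdot)^*_\rho$ is a genuine endofunctor landing back in $\McS$ follows from Lemma~\ref{rhoDualFunctor} together with the stability of Mackey-complete Schwartz spaces under $\widehat{\mathscr{S}(\cdot)}^M$. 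The only genuine care needed — and the place I expect the bookkeeping to be most delicate — is keeping track of the variance when moving to the opposite categories, so that the direction of the adjunction and of the negation comes out right; all the analytic content has already been absorbed into \eqref{ReductionMcSCSch} and the computation of $(E_{sc})'_b$, so no further estimates are required.
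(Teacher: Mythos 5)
Your proposal is correct and follows essentially the same route as the paper: the paper's proof also writes $(\cdot)^*_\rho = (\widehat{\cdot}^M\circ U)\circ(\cdot)'_b\circ(\cdot)_{sc}$ on $\McS$ and invokes Lemma~\ref{lemma:transport_dialoque_categories} together with the adjunction $\widehat{\cdot}^M\circ U\dashv(\cdot)_{sc}$ and the strict monoidality identity \eqref{ReductionMcSCSch}. The extra details you supply (reversal of the adjunction in the opposite categories, the computation $(E_{sc})'_b=\mathscr{S}(E'_c)$ with its equicontinuous bornology, and $E^*_\rho=\widehat{\mathscr{S}(E'_c)}^M$ for $E\in\McS$) are exactly the bookkeeping the paper leaves implicit.
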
 
\begin{proof}
Proposition \ref{SMepsilon} or lemma \ref{McSSymmMonoidal2} gives $\McS^{op}$ the structure of a symmetric monoidal category. %{\color{red} The fact it leaves stable $\McMS^{op}$ is exactly lemma \ref{McMschIntHom}}. Le but 
 We have to check that $(.)^*_\rho:\McS^{op}\to \McS$ is a tensorial negation on $\McS^{op}$.

For, we write it as a composition of functors involving $\CSch.$ Note that $(\cdot)_{sc}:  \McS\to \CSch$ the composition of inclusion and the functor of the same name in Theorem \ref{FirstMALL} is right adjoint to $L:=\ \widehat{\cdot}^M\circ U$ in combining this result with the proof of Proposition \ref{SMepsilon} giving the left adjoint to  $\McS\subset \Sch$. Then on $\McS$, $$(.)^*_\rho= \widehat{\cdot}^M\circ\mathscr{S}\circ  (\cdot)'_c= L\circ\mathscr{S}\circ (\cdot)'_b\circ (\cdot)_{c}=L\circ (\cdot)'_b\circ (\cdot)_{sc}.$$

Lemma \ref{lemma:transport_dialoque_categories} and the following remark concludes.
 \end{proof}

 \subsection{Commutation of the double negation monad on $\McS$} 
 
Tabareau shows in his theses \cite[Prop 2.9]{TabareauPhD} 
that if the continuation monad $\neg \neg $ of a Dialogue category is commutative and idempotent  then, the continuation category is $*$-autonomous. Actually, according to a result attributed to Hasegawa \cite{MelliesTabareau}, for which we don't have a published reference, it seems that idempotency and commutativity are equivalent in the above situation. This would simplify our developments since we chose our duality functor to ensure idempotency, but we don't use this second result in the sequel.

Thus we check $((\cdot)^*_{\rho})^*_{\rho}$ is a commutative monad. We deduce that from the study of a dual tensor product. Let us motivate its definition first.
 
As recalled in the preliminary section the $\varepsilon$-product is defined as $E\varepsilon F=(E'_c\otimes_{\beta e} F'_c)'$. Moreover, we saw in Theorem \ref{rhoref} that when $E$ is $\rho$-reflexive (or $E\in\MSch$) then $E=\mathscr{S}((E'_c)'_c)$. Recall also from \cite[10.4]{Jarchow} that a Schwartz space is endowed with the topology of uniform convergence on the $\epsilon$-null sequences of $E'$. 

Thus when $E\in\MSch$, its equicontinuous subsets $\varepsilon(E')$ are exactly the collection $\RR(E'_c)$  of all sets included in the closed absolutely convex cover of a $\varepsilon(((E'_c)'_c)')$-Mackey-null sequence.\footnote{Remember that a Mackey-null sequence is a sequence which Mackey-converges to $0$. By \cite[Prop 10.4.4]{Jarchow} any such Mackey-null sequence is an equicontinuous set : indeed the associated Schwartz topology is the topology of uniform convergence on those sequences and conversely using also the standard \cite[\S 21.3.(2)]{Kothe}.}

Remember also from section \ref{subsec:remindertvs} that every Arens-dual $E'_c$ is endowed with its $\gamma$-topology of uniform convergence on absolutely convex compact subsets of $E$. Thus if $\gamma(E'_c)$ is the bornology generated by absolutely convex compact sets,  the equicontinuous sets of $(((E'_c)'_c)')$ equals $\gamma(E'_c)$, as $E$ is always endowed with the topology of uniform convergence on equicontinuous subsets of $E'$. Thus $\RR(E'_c)= (\gamma(E'_c))_0$ is the bornology generated by bipolars of null sequences of  the bornology $\gamma(E'_c)$ (with the notation of \cite[subsection 10.1]{Jarchow}). We write in general $\RR(E)=(\gamma(E))_{0}$.

We call $\RR(E)$ the saturated bornology generated by $\gamma$-null sequences. Note that they are the same as null sequences for the bornology of Banach disks hence \cite[Th 8.4.4 b]{Jarchow} also for the bornology of absolutely convex weakly compact sets.

\begin{definition}
The $\RR$-tensor product $E\otimes _\RR F$ is the algebraic tensor product endowed with the finest locally convex topology making $E \times F \rightarrow E \otimes F$ a $(\RR(E)-\RR(F))$-hypocontinuous bilinear map.  We define $L_{\RR}(E,F)$ the space of continuous linear maps with the topology of convergence on $\mathscr{R}(E)$.
\end{definition}

Note that with the notation of Theorem \ref{FirstMALL}, for any $E,F\in\LCS$, this means $$E\otimes _\RR F=U(E_{sc}\o_H F_{sc}),\ \ \  L_{\mathscr{R}}(E,F)=U(L_b(E_{sc},F_{sc})).$$

 Pay attention $E'_\RR=L_{\mathscr{R}}(\mathscr{S}(\widehat{E}^M),\K)\neq L_{\mathscr{R}}(E,\K)$ in general,  %{[\color{red} Good reason to call $E'_\RR$ by $E'_\rho$ ? recall the completed one is now $E^*_\rho$ not to confuse it with a dual]}
  which may not be the most obvious convention when $E\not\in\McS$.

For the reader's convenience, we spell out an adjunction motivating those definitions  even if we won't really use it.

\begin{lemma}
\label{univproperty} Let $E,F,G$ separated lcs.
If $F$ is a Schwartz space, so is $L_{\RR}(E,F).$ Moreover,  if we also assume $E\in \McMS$, then: $$L_{\RR}(E,F)\simeq E'_{\RR}\varepsilon F.$$ Finally if $E,F,G$ are Schwartz spaces and $F\in\MbMS$, then we have an algebraic isomorphism:
$$L(E\otimes_{\RR} F,G)\to L(E,L_{\RR}(F,G)).$$
\end{lemma}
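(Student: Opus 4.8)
The plan is to unfold the two defining identities $L_{\RR}(E,F)=U(L_b(E_{sc},F_{sc}))$ and $E\o_\RR F=U(E_{sc}\o_H F_{sc})$ and then import the relevant facts from the compactological $*$-autonomous category of Theorem \ref{FirstMALL} and the Arens$=$Mackey identity of lemma \ref{MackeyArensSchwartz}. For the first assertion, note that $(E_{sc})'_b=\mathscr{S}(E'_c)$ is always a Schwartz \lcs, while the underlying \lcs of $F_{sc}$ is $F$, which is Schwartz by hypothesis; so I would invoke exactly the ingredient already used in Step 1 of the proof of Theorem \ref{FirstMALL}, namely \cite[Thm 16.4.1]{Jarchow}: if $(E_{sc})'_b$ and $F$ are both Schwartz, then $L_b(E_{sc},F_{sc})$ is a Schwartz \lcs. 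Forgetting the bornology gives that $L_{\RR}(E,F)=U(L_b(E_{sc},F_{sc}))$ is Schwartz (observe $E$ itself is not needed to be Schwartz).

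For the second assertion I would first record that for $E\in\McMS$ the space $\mathscr{S}(E)=E$ is Mackey-complete, so $\widehat{\mathscr{S}(E)}^M=E$ and hence $E'_{\mathscr{S}\rho}=E'_c$ and $E'_\RR=\mathscr{S}(E'_c)$. Then $E'_\RR\varepsilon F=L((E'_\RR)'_c,F)$, and since $\mathscr{S}(E'_c)$ is Schwartz, lemma \ref{MackeyArensSchwartz} identifies $(E'_\RR)'_c=(\mathscr{S}(E'_c))'_\mu=E_\mu$, the Mackey topology on $E$. Because $\mathscr{S}$ is the reflector onto Schwartz spaces (cf. the reflectivity noted in the proof of Proposition \ref{SMepsilon}) and $F$ is Schwartz, $L(E_\mu,F)=L(\mathscr{S}(E_\mu),F)=L(E,F)=L_{\RR}(E,F)$ as sets, using $E=\mathscr{S}(E_\mu)$ since $E\in\MS$. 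For the topology, the $\varepsilon$-topology is uniform convergence on the equicontinuous subsets of $(E'_\RR)'=E$; but $E'_\RR=\mathscr{S}(E'_c)$ carries by definition the topology of uniform convergence on $\gamma$-null sequences of $E$, so those equicontinuous subsets are exactly the members of $\RR(E)=(\gamma(E))_0$, which is precisely the bornology defining the topology of $L_{\RR}(E,F)$. Hence the identification is topological.

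For the third assertion, the universal property of $\o_\RR$ identifies $L(E\o_\RR F,G)$ with the $(\RR(E),\RR(F))$-hypocontinuous bilinear maps $b:E\times F\to G$, while $L(E,L_{\RR}(F,G))$ consists of the continuous linear $u:E\to L_{\RR}(F,G)$; I would set up the Curry correspondence $u(x)(y)=b(x,y)$. Passing from $b$ to $u$ is routine: separate continuity makes each $b(x,\cdot)$ a point of $L_{\RR}(F,G)$, and the hypocontinuity clause ``$b(V_E\times B)\subset W$'' for $B\in\RR(F)$ is exactly continuity of $u$ into $L_{\RR}(F,G)$ (whose topology is uniform convergence on $\RR(F)$). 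Conversely, continuity of $u$ returns this same clause, and the one remaining clause ``$b(A\times V_F)\subset W$'' for $A\in\RR(E)$ is the assertion that $u(A)$ is an equicontinuous family of maps $F\to G$. To reach it I would write $A\subset\overline{\Gamma(\{x_n\})}$ with $(x_n)$ a $\gamma$-null sequence of $E$, so that continuity of $u$ yields $u(x_n)\to 0$ in $L_{\RR}(F,G)$ and $u(A)\subset\overline{\Gamma(\{u(x_n)\})}$.

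The crux, and the step I expect to be the main obstacle, is therefore a Banach--Steinhaus statement: a sequence tending to $0$ in $L_{\RR}(F,G)$ (equivalently, a family simply bounded on the $\RR(F)$-bornology) must be equicontinuous as a family $F\to G$. This is precisely where the standing hypotheses are consumed: $F,G$ Schwartz ensure that the relevant bounded sets are covers of null sequences, while the Mackey-completeness of the Mackey dual $F'_\mu$ (that is, $F\in\Mb$) substitutes for the barrelledness/completeness ordinarily required to upgrade simple boundedness to equicontinuity. Concretely I would transpose, rephrasing equicontinuity of $\{u(x_n)\}$ as the requirement that $\bigcup_n u(x_n)^t(C)$ be equicontinuous in $F'$ for each equicontinuous $C\subset G'$, and then control this union using that $u(x_n)^t(C)\to 0$ in the $\RR(F)$-topology together with the completeness of $F'_\mu$. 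Establishing this equicontinuity carefully is the only genuinely delicate point; the rest of the bijection and its compatibility with the evaluation maps is bookkeeping.
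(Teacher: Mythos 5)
Your treatment of the first two assertions is correct and is essentially the paper's own argument: the Schwartz property is reduced to the scalar case $L_{\RR}(E,\K)=\mathscr{S}(E'_c)$ (always Schwartz) plus \cite[Thm 16.4.1]{Jarchow}, with no Schwartz hypothesis needed on $E$; and for $E\in\McMS$ the chain $(E'_{\RR})'_c=(\mathscr{S}(E'_c))'_\mu=(E'_\mu)'_\mu$ together with the identification of the equicontinuous subsets of $(\mathscr{S}(E'_c))'$ with $\RR(E)$ gives the topological isomorphism. Your setup for the third assertion (Curry correspondence, one hypocontinuity clause being continuity of $u$ into $L_{\RR}(F,G)$, the other being equicontinuity of $u(A)$ for $A\in\RR(E)$, reduction to a $\gamma$-null sequence $(x_n)$ in some $E_K$) also coincides with the paper's.

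However, the step you defer is exactly where the paper's proof does all of its work, and your sketch of it is not yet an argument. First, your parenthetical claim that a null sequence in $L_{\RR}(F,G)$ is ``equivalently a family simply bounded on the $\RR(F)$-bornology'' is false as an equivalence and would not suffice: there is no Banach--Steinhaus principle available here, and the proof genuinely consumes the quantitative decay $||x_n||_K\to 0$. The mechanism in the paper is the following. Given an equicontinuous $U^{\circ}\subset G'$, the Schwartz property of $G$ lets one write $U^{\circ}\subset\overline{\Gamma(\{y_m\})}$ with $||y_m||_{V^{\circ}}\to 0$ for a larger equicontinuous $V^{\circ}$; since $u(K)$ is compact hence bounded in $L_{\RR}(F,G)$, the set $D=(u(K)^t(V^{\circ}))^{\circ\circ}$ is bounded in $\mathscr{S}(F'_c)$, and the product estimate $||u(x_n)^t(y_m)||_D\le ||x_n||_K\,||y_m||_{V^{\circ}}$ exhibits the doubly indexed family $(u(x_n)^t(y_m))_{n,m}$ as Mackey-null in $F'_c$. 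Only now do the remaining hypotheses enter: $F\in\MbMS$ gives $F'_c=F'_\mu$ Mackey-complete and $\RR(F'_c)=\varepsilon(F')$, so a Mackey-null family is null for a Banach disk and its bipolar is equicontinuous in $F'$; this bipolar absorbs $u(x_n)^t(U^{\circ})$ for all $n$, which is the required equicontinuity of $\{u(x_n)\}$. Your proposed route via ``$u(x_n)^t(C)\to 0$ in the $\RR(F)$-topology plus completeness of $F'_\mu$'' skips the reduction of $C$ to a null sequence of $G'$ and the product estimate; without them a countable union of equicontinuous sets has no reason to be equicontinuous, and topological convergence to $0$ in $F'_c$ does not by itself produce the Banach disk you need. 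You have correctly listed the hypotheses that must be consumed, but the argument that consumes them is missing.
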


\begin{proof}
For the Schwartz property, one uses \cite[Th 16.4.1]{Jarchow}, it suffices to note that $L_{\RR}(E,\K)$ is a Schwartz space and this comes from \cite[Prop 13.2.5]{Jarchow}. If $E$ is Mackey-complete Schwartz space with $E=\mathscr{S}((E'_\mu)'_\mu)$ then $(E'_{\RR})'_c= (E'_\mu)'_\mu$ and therefore $E'_{\RR}\varepsilon F=L((E'_{\RR})'_c,F)= L(E,F)$ and the topology is the one of convergence on equicontinuous sets, namely on $\RR(E)= \RR((E'_{\RR})'_c)$ since Mackey-null sequences coincides with $\gamma(E)$-null ones since $E$ Mackey-complete and thus does not depend on the topology with same dual.

Obviously, there is an injective linear map $$L(E\otimes_{\RR} F,G)\to L(E,L_{\mathscr{R}}(F,G))$$
Let us see it is surjective. For $f\in L(E,L_{\RR}(F,G))$ defines a separately continuous bilinear map and if $K\in \RR(F)$ the image $f(.)(K)$ is equicontinuous by definition. What is less obvious is the other equicontinuity. For $(x_n)_{n\geq 0}$ a $\gamma(E)$-null sequence, i.e null in $E_K$ for $K$ absolutely convex compact set, we want to show $\{f(x_n), n\geq 0\}$ equicontinuous, thus take $U^{\circ}$ in $G'$ an equicontinuous set, since $G$ is a Schwartz space, it is contained in the closed absolutely convex cover of a $\varepsilon(G')$-null sequence,   say $\{y_n,n\geq 0\}$ with $||y_n||_{V^{\circ}}\to 0$. $f(K)$ is compact thus bounded, thus $f(K)^t(V^{\circ})$ is bounded in  $L_{\mathscr{R}}(F,\K)=\mathscr{S}(F'_c)$ or in $F'_c$.  Thus $(f(x_n)^t(y_m))_{n,m}$ is Mackey-null in $F'_c$. Since $F=\mathscr{S}((F'_c)'_c)$, $F'_c= F'_\mu$ and as recalled earlier $\RR(F'_c)=\varepsilon(F')$. If moreover, $F'_c$ is Mackey-complete, 
$(f(x_n)^t(y_m))_{n,m}$ is Mackey for the bornology of Banach disks hence in $\RR(F'_c)$, thus it is equicontinuous in $F'$.
\end{proof}

We continue with two general lemmas deduced from lemma \ref{Prop2SchwartzCLCS}.
\begin{lemma}\label{TensorTech}
Let $X,Y\in \Sch$ and define $G=(X\varepsilon Y)'_\varepsilon$ the dual with the topology of convergence on equicontinuous sets from the duality with $H=X'_c\otimes_{\beta e} Y'_c$. Then we have embeddings $$H\subset G\subset \widehat{H}^M,\ ((H)'_\mu)'_\mu\subset ((G)'_\mu)'_\mu\subset ((\widehat{H}^M)'_\mu)'_\mu,\ \mathscr{S}(((H)'_\mu)'_\mu)\subset \mathscr{S}(((G)'_\mu)'_\mu)\subset\mathscr{S}(((\widehat{H}^M)'_\mu)'_\mu).$$ 
\end{lemma}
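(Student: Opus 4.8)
The plan is to establish the first line directly and transport it across the other two by functoriality, exploiting that every space in sight has the single common dual $D := X\varepsilon Y$. By the definition of the $\varepsilon$-product, $H = X'_c\otimes_{\beta e}Y'_c$ has dual $D$, and $G$ is by construction the $\epsilon$-bidual of $H$. Since $X$ and $Y$ are Schwartz, Lemma \ref{MackeyArensSchwartz} gives $X'_c\simeq X'_\mu$, $Y'_c\simeq Y'_\mu$ and $X\varepsilon Y\simeq X\eta Y$ topologically, so that $H = X'_\mu\otimes_{\beta e}Y'_\mu$ and $G = (X\eta Y)'_\epsilon$. The first line can then be read off Lemma \ref{DualArensMc} applied with $L=X$, $M=Y$: composing the first two of its embeddings $X'_\mu\otimes_{\beta e}Y'_\mu\to(X\zeta Y)'_\epsilon\to(X\eta Y)'_\epsilon$ gives $H\hookrightarrow G$, and its last clause (valid as soon as $X$ or $Y$ is Schwartz) refines the target of $G$ to the Mackey completion, yielding $G\hookrightarrow X'_\mu\widehat{\otimes}^M_{\beta e}Y'_\mu=\widehat{H}^M$. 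Thus $H\subset G\subset\widehat{H}^M$, all carrying the topology induced from $\widetilde{H}$.

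The second line comes from applying the Mackey-bidual endofunctor $((\cdot)'_\mu)'_\mu$, and this is where the real work lies, since the Mackey topology does not respect subspaces in general, so pure functoriality is not available. The remedy is the observation used in the proof of Lemma \ref{MMc} (which rests on \cite[\S 21.4.(5)]{Kothe}): because $H\subset G\subset\widehat{H}^M\subset\widetilde{H}$ all have the common dual $D$, the absolutely convex weakly compact subsets of $D$ are the same for each of the weak topologies $\sigma(D,H)$, $\sigma(D,G)$, $\sigma(D,\widehat{H}^M)$. Consequently the three Mackey topologies $\mu(H,D)$, $\mu(G,D)$, $\mu(\widehat{H}^M,D)$ are all the topology of uniform convergence on one and the same family of subsets of $D$. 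Since $((\cdot)'_\mu)'_\mu$ returns each space equipped with its Mackey topology (Mackey--Arens) and the underlying sets are nested $H\subseteq G\subseteq\widehat{H}^M$, the inclusions $((H)'_\mu)'_\mu\subset((G)'_\mu)'_\mu\subset((\widehat{H}^M)'_\mu)'_\mu$ inherit the subspace topology and are therefore embeddings.

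Finally, the third line follows from the second by applying the associated Schwartz topology functor $\mathscr{S}$, which preserves embeddings by Lemma \ref{SchwartzFunctor}. I expect the only genuinely delicate point to be the coincidence of the absolutely convex weakly compact subsets of $D$ underlying the second paragraph; once that is secured, the remainder is bookkeeping with already-established embeddings and functors.
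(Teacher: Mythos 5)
Your proof is correct, but it reaches the first (and only substantive) line by a genuinely different route from the paper. The paper identifies $H=U((X_c)'_b\o_H (Y_c)'_b)$ and $G=U((X_c)'_b\o_b (Y_c)'_b)$ inside the compactological category $\CSch$ and invokes Lemma \ref{Prop2SchwartzCLCS}, whose Schwartz-bornology case \eqref{InclusionSchwartzCSch} is what lands the right-hand end in the Mackey completion rather than the full completion. You instead stay entirely within the locally convex machinery of Section 3: Lemma \ref{MackeyArensSchwartz} rewrites $H$ as $X'_\mu\otimes_{\beta e}Y'_\mu$ and $G$ as the $\epsilon$-bidual $(X\eta Y)'_\epsilon$, and the chain $H\to G\to\widehat{H}^M$, including the refinement of the target to the Mackey completion under the Schwartz hypothesis, is then read directly off Lemma \ref{DualArensMc}. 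Both arguments are legitimate and non-circular, since Lemma \ref{DualArensMc} precedes this statement; the paper's detour through $\CLCS$ is what lets it treat the companion Lemma \ref{Tensor} and the link to Theorem \ref{FirstMALL} uniformly, while your version is shorter and more self-contained. For the second line the paper simply cites Lemma \ref{MMc}; you have correctly isolated and justified the one point that needs checking there, namely that the absolutely convex $\sigma(D,H)$-compact subsets of the common dual $D=X\varepsilon Y$ are already $\sigma(D,\widetilde{H})$-compact (by \cite[\S 21.4.(5)]{Kothe}), so that every intermediate space between $H$ and $\widetilde{H}$ receives its Mackey topology by restriction. The third line is handled identically in both proofs via Lemma \ref{SchwartzFunctor}.
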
 
\begin{proof}
We apply lemma \ref{Prop2SchwartzCLCS} to $(X_c)'_b,(Y_c)'_b$ which have a Schwartz bornology since  $X,Y\in \Sch$. Note that $H=U((X_c)'_b\o_H (Y_c)'_b)$ and that $$U((X_c)'_b\o_b (Y_c)'_b)=U\Big(\Big[((X_c)'_b)'_b\parr_b ((X_c)'_b)'_b\Big]'_b\Big)=U\Big(\Big[X_c\parr_b Y_c\Big]'_b\Big)=G.$$
 Lemma \ref{Prop2SchwartzCLCS} concludes exactly to the first embedding. The second follows using lemma \ref{MMc} and the third from lemma \ref{SchwartzFunctor}.
\end{proof}

\begin{lemma}\label{Tensor}
Let $X,Y\in \MSch$ and define $G=(X\varepsilon Y)'_\varepsilon$ the dual with the topology of convergence on equicontinuous sets from the duality with $H=X'_\mu\otimes_{\RR} Y'_\mu$. Then we have embeddings $$H\subset G\subset \widehat{H}^M,\ ((H)'_\mu)'_\mu)\subset ((G)'_\mu)'_\mu\subset ((\widehat{H}^M)'_\mu)'_\mu,\ \mathscr{S}(((H)'_\mu)'_\mu)\subset \mathscr{S}(((G)'_\mu)'_\mu)\subset\mathscr{S}(((\widehat{H}^M)'_\mu)'_\mu).$$ 
As a consequence for $X,Y\in \McMS$, we have topological identities $(X \varepsilon Y)^*_\rho \simeq \widehat{\mathscr{S}(H)}^M$ % $((E\varepsilon F)^*_\rho)'_{\mathscr{S}\rho} \simeq (E'_c\otimes_\rho F'_c)'_{\mathscr{S}\rho}$ 
and $$((X\varepsilon Y)^*_\rho)^*_\rho\simeq (X'_c\otimes_\RR Y'_c)^*_\rho.$$
\end{lemma}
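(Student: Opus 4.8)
The plan is to reduce the embedding statement to Lemma \ref{TensorTech}, which already handles the case $X,Y\in\Sch$, by checking that the two definitions of $H$ (and hence of $G$) agree on the subcategory $\MSch\subset\Sch$. First I would invoke Lemma \ref{MackeyArensSchwartz}: since $X,Y$ are Schwartz spaces, $X'_\mu\simeq X'_c$ and $Y'_\mu\simeq Y'_c$ topologically, so the two tensor products are built over the same underlying spaces. It then remains to compare the two hypocontinuity conditions. By the discussion preceding the definition of $\otimes_\RR$, for a space in $\MSch$ the equicontinuous bornology $\varepsilon(X')$ used for $\otimes_{\beta e}$ coincides with the bornology $\RR(X'_c)$ used for $\otimes_\RR$ (both being generated by closed absolutely convex covers of $\varepsilon$-null sequences), and likewise for $Y$. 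Since the $\mathcal{B}_1$-$\mathcal{B}_2$-hypocontinuous tensor product depends only on the two bornologies and the two underlying spaces, this gives $X'_\mu\otimes_\RR Y'_\mu = X'_c\otimes_{\beta e} Y'_c$ as \lcs; the predual pairing defining $G$ is thereby unchanged, so $G$ also coincides with the object of Lemma \ref{TensorTech}. As $\MSch\subset\Sch$, the three lines of embeddings are then exactly those of Lemma \ref{TensorTech}.

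For the consequence I would work with $X,Y\in\McMS$ and set $Z=X\varepsilon Y$. By Proposition \ref{SMepsilon}, $\McS$ is stable under $\varepsilon$, so $Z\in\McS$; hence $\widehat{\mathscr{S}(Z)}^M=Z$ and, directly from the definition of the $\rho$-dual, $Z'_{\mathscr{S}\rho}=Z'_c$ and $Z^*_\rho=\widehat{\mathscr{S}(Z'_c)}^M$. The key point is then to identify $Z'_c$ with $G=(X\varepsilon Y)'_\varepsilon$: as in Proposition \ref{lem:Kc_epsilon}, a subset of $X\varepsilon Y$ is relatively compact if and only if it is $\varepsilon$-equihypocontinuous (the direction we need not requiring any completeness, following \cite{Schwartz}), so the absolutely convex compact sets and the equicontinuous sets of $X\varepsilon Y$ generate the same topology on the dual, i.e. $Z'_c\simeq G$. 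With this identification I would apply $\widehat{\mathscr{S}(\cdot)}^M$ to the first line of embeddings $H\subset G\subset\widehat{H}^M$: Lemma \ref{SchwartzFunctor} preserves these embeddings under $\mathscr{S}$, while Proposition \ref{SMepsilon} gives $\mathscr{S}(\widehat{H}^M)=\widehat{\mathscr{S}(H)}^M$ together with the idempotency of $\widehat{\cdot}^M$. Sandwiching then yields $\widehat{\mathscr{S}(H)}^M\subset\widehat{\mathscr{S}(G)}^M\subset\widehat{\mathscr{S}(H)}^M$, whence $\widehat{\mathscr{S}(G)}^M=\widehat{\mathscr{S}(H)}^M$ and therefore $(X\varepsilon Y)^*_\rho=\widehat{\mathscr{S}(Z'_c)}^M\simeq\widehat{\mathscr{S}(H)}^M$.

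The last identity would then be purely formal. Since $E^*_\rho=\widehat{\mathscr{S}((\widehat{\mathscr{S}(E)}^M)'_c)}^M$ depends on $E$ only through $\widehat{\mathscr{S}(E)}^M$, the functor $(\cdot)^*_\rho$ factors through $\widehat{\mathscr{S}(\cdot)}^M$, so $(\widehat{\mathscr{S}(H)}^M)^*_\rho=H^*_\rho$. Combining this with the first identity and the fact that $H=X'_c\otimes_\RR Y'_c$ from the first paragraph gives
$$((X\varepsilon Y)^*_\rho)^*_\rho=(\widehat{\mathscr{S}(H)}^M)^*_\rho=H^*_\rho=(X'_c\otimes_\RR Y'_c)^*_\rho,$$
as desired (consistently with $((X\varepsilon Y)^*_\rho)^*_\rho=\widehat{\mathscr{S}(X\varepsilon Y)}^M=X\varepsilon Y$ from \eqref{alg_eq_refl}). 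I expect the genuine content, and hence the main obstacle, to lie in the two bornological identifications — $\varepsilon(X')=\RR(X'_c)$ making the tensor products agree, and the relative-compactness criterion giving $Z'_c\simeq G$ — since everything else is formal manipulation of $\mathscr{S}$ and $\widehat{\cdot}^M$ together with their commutation and idempotency.
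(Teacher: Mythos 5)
Your first paragraph is exactly the paper's argument: since $X,Y\in\MSch$, Lemma \ref{MackeyArensSchwartz} gives $X'_\mu=X'_c$, the equicontinuous bornology of $X'$ coincides with $\RR(X'_c)$, hence $X'_\mu\otimes_\RR Y'_\mu=X'_c\otimes_{\beta e}Y'_c$ and the three lines of embeddings are those of Lemma \ref{TensorTech}. No issue there, and your last paragraph (the factorization of $(\cdot)^*_\rho$ through $\widehat{\mathscr{S}(\cdot)}^M$, deducing the second identity from the first) is also sound.

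The gap is the identification $Z'_c\simeq G$ for $Z=X\varepsilon Y$. The criterion you invoke is Proposition \ref{lem:Kc_epsilon}, whose hypothesis is that $X,Y$ are $k$-quasi-complete; here $X,Y\in\McMS$, and Lemma \ref{McMSNotKc} exhibits a space of $\McMS$ which is \emph{not} in $\Kc$, so that hypothesis is genuinely unavailable. Moreover the direction that "does not require completeness" is the implication ($\varepsilon$-equihypocontinuous $\Rightarrow$ relatively compact), which only yields the continuous map $Z'_c\to Z'_\epsilon=G$ of \eqref{Prop2SchwartzNonqComp}; a topological isomorphism needs the converse (compact $\Rightarrow$ $\varepsilon$-equihypocontinuous), and that is exactly where Schwartz's Prop.\ 2 uses completeness — one must pass from the compact image $K(A)\subset F$ to its closed absolutely convex cover. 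So $Z'_c\simeq G$ is unjustified, and your sandwich $\widehat{\mathscr{S}(H)}^M\subset\widehat{\mathscr{S}(G)}^M\subset\widehat{\mathscr{S}(H)}^M$, correct in itself, is attached to the wrong space. The route the lemma is built for — and the reason it records the second and third lines of embeddings at all — is to trade the Arens dual for the Mackey dual: since $Z$ is Schwartz, $Z'_c=Z'_\mu$ by Lemma \ref{MackeyArensSchwartz}, and since the $\epsilon$-topology of $G$ is compatible with the duality $\langle Z',Z\rangle$ one has $Z'_\mu=((G)'_\mu)'_\mu$; one then sandwiches $\mathscr{S}(((G)'_\mu)'_\mu)$ between $\mathscr{S}(((H)'_\mu)'_\mu)$ and $\mathscr{S}(((\widehat{H}^M)'_\mu)'_\mu)$ using the \emph{third} line, and takes Mackey completions via Lemma \ref{MMc} and the commutation of Proposition \ref{SMepsilon}. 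Note that the second identity is insensitive to the distinction between $H$ and $((H)'_\mu)'_\mu$, since they have the same dual and hence the same $\rho$-dual.
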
 

\begin{proof} This is a special case of the previous result. Indeed since $X\in \MSch$ so that $X'_c=X'_\mu, X=\mathscr{S}((X'_c)'_\mu)=\mathscr{S}((X'_c)'_c)$, equicontinuous sets in its dual are those in $\RR(X'_c)=\RR(X'_\mu)$, hence:
$$(X\varepsilon Y)= (X'_c\o_{\beta e}Y'_c)'=(X'_c\o_{\RR}Y'_c)',\ \ \ \ \  X'_c\o_{\beta e}Y'_c\simeq X'_c\o_{\RR}Y'_c.$$

The Mackey-complete case is a reformulation using only the definition of $(\cdot)^*_\rho$ (and the commutation in Proposition \ref{SMepsilon}).
\end{proof}Let us state a consequence on
$X\otimes_{\kappa} Y:=((X\otimes_{\RR} Y)'_\mu)'_\mu\in \MLCS.$ We benefit from the work in lemma \ref{DualArensMc} that made appear the inductive tensor product.

\begin{proposition}\label{StarAutonomousParrTech}
 For any $X\in\Mb\cap \MLCS,Y\in \MLCS$,  then the canonical map is a topological isomorphism:
 \begin{equation}\label{tensTech}I:X\widehat{\otimes}_{\kappa}^MY\simeq X\widehat{\otimes}_{\kappa}^M (\widehat{Y}^M).
\end{equation}
 \end{proposition}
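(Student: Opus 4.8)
The plan is to reduce the statement to a comparison of Mackey completions of the bare $\RR$-tensor product, and then to exploit the hypothesis $X\in\Mb$ to extend hypocontinuous bilinear forms across the Mackey-dense inclusion $Y\hookrightarrow\widehat{Y}^M$. First I would rewrite both sides using lemma \ref{MMc}. Since by definition $X\otimes_\kappa Y=((X\otimes_\RR Y)'_\mu)'_\mu$, applying lemma \ref{MMc} to $F=X\otimes_\RR Y$ gives a topological isomorphism $X\widehat{\otimes}_\kappa^M Y=\widehat{((X\otimes_\RR Y)'_\mu)'_\mu}^M\simeq((\widehat{X\otimes_\RR Y}^M)'_\mu)'_\mu$, and likewise $X\widehat{\otimes}_\kappa^M\widehat{Y}^M\simeq((\widehat{X\otimes_\RR\widehat{Y}^M}^M)'_\mu)'_\mu$. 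Hence, by functoriality of the endofunctor $((\cdot)'_\mu)'_\mu$, it suffices to prove that the canonical map $\iota\colon X\otimes_\RR Y\to X\otimes_\RR\widehat{Y}^M$ induces a topological isomorphism of Mackey completions $\widehat{X\otimes_\RR Y}^M\simeq\widehat{X\otimes_\RR\widehat{Y}^M}^M$.

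Next I would dispose of the two easy properties of $\iota$. It is continuous: a Mackey-null sequence of $Y$ stays Mackey-null in $\widehat{Y}^M$ by remark \ref{MackeyInclusion}, so on $Y$ one has $\RR(Y)\subseteq\RR(\widehat{Y}^M)$, whence the canonical bilinear map $X\times Y\to X\otimes_\RR\widehat{Y}^M$ is $\RR(X),\RR(Y)$-hypocontinuous and factors continuously through $X\otimes_\RR Y$ by the universal property of $\otimes_\RR$. It has Mackey-dense image: by lemma \ref{gammacompletion} every $\tilde y\in\widehat{Y}^M$ is a Mackey-limit of a sequence $(y_n)$ of $Y$, and as the map $z\mapsto x\otimes z$ is continuous hence bounded, $x\otimes y_n$ Mackey-converges to $x\otimes\tilde y$ in $X\otimes_\RR\widehat{Y}^M$; taking finite sums shows that $X\otimes Y$ is Mackey-dense in $X\otimes_\RR\widehat{Y}^M$.

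It remains the crucial point, that $\iota$ is a topological embedding; equivalently, by \cite[\S 21.3.(2)]{Kothe}, that the two spaces share the same dual with the same equicontinuous sets. The only nontrivial inclusion is that each $b\in(X\otimes_\RR Y)'$ extends to an element of $(X\otimes_\RR\widehat{Y}^M)'$, and this is where I would use $X\in\Mb$. The partial map $\beta_b\colon Y\to X'_\mu$, $y\mapsto b(\cdot,y)$, carries every $\RR(Y)$-set to an equicontinuous, hence bounded, subset of $X'_\mu$; as Mackey-Cauchy sequences are governed by their Mackey-null successive differences, and $X'_\mu$ is Mackey-complete, $\beta_b$ extends along the Mackey-dense inclusion, following the ordinal description of lemma \ref{gammacompletion}, to a continuous map $\tilde\beta_b\colon\widehat{Y}^M\to X'_\mu$ whose associated bilinear form restricts to $b$. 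I expect the main obstacle to be the verification that this extension is genuinely $\RR(\widehat{Y}^M)$-hypocontinuous in the $\widehat{Y}^M$-variable, that is, that $\tilde\beta_b$ sends $\RR(\widehat{Y}^M)$ into \emph{equicontinuous} (not merely bounded) subsets of $X'$. I would treat this exactly as in Step~3 and the series representation \eqref{seriesrepresentation} in the proof of lemma \ref{DualArensMc}: a $\gamma$-null sequence of $\widehat{Y}^M$ can, through the transfinite construction of the Mackey completion, be gathered into the closed absolutely convex cover of a single $\gamma$-null sequence coming from $Y$, on which the equicontinuity of $\beta_b$ is available and passes to the Mackey-closure. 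Once the equality of duals and of equicontinuous sets is secured, $\iota$ is an embedding with Mackey-dense image, so $\widehat{X\otimes_\RR Y}^M\simeq\widehat{X\otimes_\RR\widehat{Y}^M}^M$, and the first paragraph then yields the topological isomorphism \eqref{tensTech}.
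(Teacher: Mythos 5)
Your reduction via lemma \ref{MMc}, and the continuity and Mackey-density of $\iota$, are fine, and you have correctly located where $X\in\Mb$ enters (Mackey-completeness of $X'_\mu$, used to extend $\beta_b$ along Mackey-Cauchy sequences). But the step you yourself flag as the main obstacle is a genuine gap, and the fix you propose does not work. To conclude that the extended form $\tilde b$ lies in $(X\otimes_\RR\widehat{Y}^M)'$ you must show that $\tilde\beta_b$ maps every set of $\RR(\widehat{Y}^M)$ --- i.e.\ the bipolar of a sequence that is Mackey-null for the absolutely convex \emph{compact} sets of $\widehat{Y}^M$ --- into an equicontinuous subset of $X'$. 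Your claim that such a sequence can be gathered into the closed absolutely convex cover of a single $\gamma$-null sequence coming from $Y$ is unsupported: the Mackey completion may contain compact disks that are not dominated by closures of compact disks of $Y$, and nothing in lemma \ref{gammacompletion} controls $\gamma(\widehat{Y}^M)$ in terms of $\gamma(Y)$. The argument of Step 3 and \eqref{seriesrepresentation} in lemma \ref{DualArensMc} that you invoke concerns the representation of equicontinuous sets in the dual of an $\varepsilon$-product with a \emph{Schwartz} factor as series of elementary tensors; it says nothing about compact sets of a Mackey completion, so it cannot be transplanted here.

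The paper circumvents exactly this difficulty by never extending $b$ as a hypocontinuous form. It works at the level of Mackey biduals, where by proposition \ref{etageneral}(5) the relevant dual carries the \emph{inductive} tensor product $\otimes_i$, whose universal property requires only \emph{separate} continuity. Concretely: the bilinear map $X_\mu\times Y_\mu\to X_\mu\widehat{\o}^M_{\RR}Y_\mu$ is read as a continuous linear map $Y_\mu\to L_{\RR}(X_\mu,X_\mu\widehat{\o}^M_{\RR}Y_\mu)\simeq\mathscr{S}(X'_\mu)\varepsilon(X_\mu\widehat{\o}^M_{\RR}Y_\mu)$; this target is Mackey-complete by proposition \ref{Meta} precisely because $X\in\Mb$, so the map extends to $\widehat{Y_\mu}^M$ by functoriality of the Mackey completion, yielding only a separately continuous bilinear form, hence a continuous map on $X_\mu\otimes_i\widehat{Y_\mu}^M$; lemma \ref{DualArensMc} then identifies this inductive tensor product inside $[\mathcal{F}(X)\varepsilon\mathcal{F}(\widehat{Y}^M)]'_\mu$, and a final extension to the Mackey completion produces the inverse of $I$. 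To salvage your route you would have to either prove the missing statement about $\gamma$-null sequences of $\widehat{Y}^M$ (which would require a transfinite argument in the style of lemma \ref{ordinalCompletion} and is not obviously true), or weaken your target from a topological isomorphism $\widehat{X\otimes_\RR Y}^M\simeq\widehat{X\otimes_\RR\widehat{Y}^M}^M$ to an identification holding only after applying $((\cdot)'_\mu)'_\mu$ --- which is essentially the paper's proof.
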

  \begin{proof}Let us write for short $\mathcal{F}=\mathscr{S}((\cdot)'_\mu), \mathcal{G}=\ \widehat{\cdot}^M\circ (\cdot)'_\mu$, $(\cdot)_\mu=((\cdot)'_\mu)'_\mu$. 
 Note that from the canonical continuous linear map $\mathcal{F}(X)\to \mathcal{G}(X)$ one deduces a continuous identity map $\mathcal{F}(\mathcal{G}(X))\to X=\mathcal{F}(\mathcal{F}(X)).$

Similarly, using lemma \ref{Tensor} for the equality, one gets by functoriality the continuous linear map:
 $$I:X\widehat{\otimes}_{\kappa} Y= \mathcal{G}\Big(\mathcal{F}(X)\varepsilon \mathcal{F}(Y)\Big)\to X\widehat{\otimes}_{\kappa} (\widehat{Y}^M)
.$$

For the converse, we apply lemma \ref{DualArensMc} to $L=\mathcal{F}(X)$ and 
$M=\mathcal{F}(Y)$, we know that  $[\mathcal{F}(X)\varepsilon \mathcal{F}(Y)]'_\epsilon=[\mathcal{F}(X)\eta \mathcal{F}(Y)]'_\epsilon$ induces on $L'_\mu\o M'_\mu$ the $\epsilon$-hypocontinuous tensor product. Using the reasoning of the previous lemma to identify the tensor product, this gives a continuous map 
 $$L'_\mu\o_{\beta e} M'_\mu=X_\mu\o_{\RR} Y_\mu\to [\mathcal{F}(X)\varepsilon \mathcal{F}(Y)]'_\epsilon\to X_\mu\widehat{\o}^M_{\RR} Y_\mu%X\widehat{\otimes}_{\RR} Y
 .$$

This gives by definition of hypocontinuity a  continuous linear map in $L(Y_\mu, L_{\RR}(X_\mu,X_\mu\widehat{\o}^M_{\RR} Y_\mu)%\subset L(Y_\mu, \mathscr{S}(L_{\mathcal{R}}(X_\mu,X_\mu\widehat{\o}^M_{\RR} Y_\mu)))
.$
Note that from the computation of equicontinuous sets and lemma \ref{MackeyArensSchwartz}, we have the topological identity:
$$L_{\RR}(X_\mu,X_\mu\widehat{\o}^M_{\RR} Y_\mu))=L_{\epsilon}((\mathscr{S}(X'_\mu))'_c,X_\mu\widehat{\o}^M_{\RR} Y_\mu))\simeq \mathscr{S}(X'_\mu)\varepsilon(X_\mu\widehat{\o}^M_{\RR} Y_\mu).$$

From this identity, one gets $L_{\RR}(X_\mu,X\widehat{\otimes}_{\kappa} Y))=\mathscr{S}(X'_\mu)\varepsilon (X_\mu\widehat{\o}^M_{\RR} Y_\mu)$  is Mackey-complete since $\mathscr{S}(X'_\mu)=\mathcal{F}(X)$ is supposed so and $X_\mu\widehat{\o}^M_{\RR} Y_\mu$ is  too by construction.

 As a consequence by functoriality of Mackey-completion, the map we started from has an extension  to $L(\widehat{Y_\mu}^M, \mathscr{S}(X'_\mu)\varepsilon (X_\mu\widehat{\o}^M_{\RR} Y_\mu))=L(\widehat{Y_\mu}^M, L_{\RR}(X_\mu,(X_\mu\widehat{\o}^M_{\RR} Y_\mu))).$
 A fortiori, this gives a separately continuous bilinear map and thus a continuous linear map extending the map we started from:
 $$J:X_\mu\o_i\widehat{Y_\mu}^M\to (X_\mu\widehat{\o}^M_{\RR} Y_\mu)$$

  We apply lemma \ref{DualArensMc} again  to $L=\mathcal{F}(X)$ and 
$M=\mathcal{F}(\widehat{Y}^M)$, we know that  $[\mathcal{F}(X)\varepsilon \mathcal{F}(\widehat{Y}^M)]'_\mu=[\mathcal{F}(X)\eta \mathcal{F}(\widehat{Y}^M))]'_\mu$ induces on $L'_\mu\o M'_\mu$ the inductive tensor product. Therefore, using also \cite[Corol 8.6.5]{Jarchow}, one gets a continuous linear map $$J:X_\mu\otimes_{i} [\widehat{Y_\mu}^M]_\mu
 \to \Big[X_\mu\widehat{\o}^M_{\RR} Y_\mu\Big]_\mu=X\widehat{\otimes}_{\kappa} Y%\to\mathscr{S}(\Big[X_\mu\widehat{\o}^M_{\RR} Y_\mu\Big]_\mu
 .$$
 
In turn this maps extends to the Mackey completion $X_\mu\widehat{\otimes}_{i}^M \widehat{Y_\mu}^M=[X\widehat{\otimes}^M_{\kappa} (\widehat{Y}^M)]_\mu$ and our map $J:X_\mu\widehat{\otimes}_{\kappa} (\widehat{Y}^M)\to X\widehat{\otimes}_{\kappa}^M Y$ which is the expected inverse of $I$.
 
 \end{proof} 

\begin{corollary}\label{rhoCommut}
$T=((\cdot)^*_{\rho})^*_{\rho}$ is             a commutative monad on $(\McS^{op},\varepsilon,\K)$.
\end{corollary}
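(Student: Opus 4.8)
The plan is to follow verbatim the strategy used for the $k$-reflexive case in Theorem \ref{kRef}, replacing the $k$-quasi-completion and the tensor product $\otimes_{\beta e}$ by the Mackey completion $\widehat{\cdot}^M$ and the $\RR$-tensor product, and feeding in the results of the present section in place of those of section \ref{sec:kref}. First I would invoke \cite[Prop 2.4]{TabareauPhD}: since $(\McS^{op},\varepsilon,\K)$ is a dialogue category with negation $(\cdot)^*_\rho$ by Proposition \ref{rhoDialogue}, the continuation monad $T=((\cdot)^*_\rho)^*_\rho$ automatically carries a left and a right tensor strength $t_{X,Y}$ and $\tau_{X,Y}$. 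Commutativity of $T$ is then exactly the assertion that the two canonical composites $TX\,\varepsilon\,TY\to T(X\varepsilon Y)$ built from $t$ and $\tau$ coincide.

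Next, as in Theorem \ref{kRef}, I would reduce the verification to an equality of maps after applying the negation $(\cdot)^*_\rho$. Both composites land in $T(X\varepsilon Y)=((X\varepsilon Y)^*_\rho)^*_\rho$, and by Lemma \ref{Tensor} we have the topological identification $(X\varepsilon Y)^*_\rho\simeq\widehat{\mathscr{S}(X'_\mu\otimes_\RR Y'_\mu)}^M$, the Mackey-completed associated-Schwartz topology on the algebraic tensor product $X'_\mu\otimes_\RR Y'_\mu$. Thus, after applying $(\cdot)^*_\rho$, the two strengths become two continuous linear maps determined on the dense subspace $X'_\mu\otimes_\RR Y'_\mu$, where, upon unwinding the definition of $t$ and $\tau$ through the embeddings of Lemma \ref{Tensor} and the identification $((X\varepsilon Y)^*_\rho)^*_\rho\simeq(X'_c\otimes_\RR Y'_c)^*_\rho$, both restrict to the same canonical $\RR$-hypocontinuous bilinear map.

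The decisive point is then uniqueness of extension: an $\RR$-hypocontinuous, hence separately continuous, bilinear map admits at most one continuous (indeed separately continuous) extension to the completion by \cite[\S 40.3.(1)]{Kothe2}. Since $X'_\mu\otimes_\RR Y'_\mu$ is Mackey-dense in $\widehat{\mathscr{S}(X'_\mu\otimes_\RR Y'_\mu)}^M$ — this density being exactly what Proposition \ref{StarAutonomousParrTech} and Lemma \ref{DualArensMc} provide — the two extensions coincide. Hence $t$ and $\tau$ commute and $T$ is a commutative monad.

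I expect the main obstacle to be the bookkeeping in this reduction: one must verify that applying $(\cdot)^*_\rho$ genuinely turns the two strength composites into maps that are \emph{determined by} their restriction to $X'_\mu\otimes_\RR Y'_\mu$, i.e. that the separate continuity needed to apply \cite[\S 40.3.(1)]{Kothe2} is known \emph{a priori} (the $k$-reflexive argument already emphasizes that the uniqueness statement "strongly requires knowing the separate continuity of the extension"). Establishing this separate continuity, through the explicit description of $((X\varepsilon Y)^*_\rho)^*_\rho$ as $(X'_c\otimes_\RR Y'_c)^*_\rho$ in Lemma \ref{Tensor} and the isomorphism of Proposition \ref{StarAutonomousParrTech}, is the technical heart of the argument.
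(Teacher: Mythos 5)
Your proposal is correct and follows essentially the same route as the paper: existence of the strengths via \cite[Prop 2.4]{TabareauPhD}, dualization to reduce the two composites to continuous linear maps out of $(X\varepsilon Y)^*_\rho$, identification of this space via Lemma \ref{Tensor} and Proposition \ref{StarAutonomousParrTech} as a Mackey-completed tensor product in which $X'_\mu\o Y'_\mu$ sits (Mackey-)densely, and conclusion by uniqueness of extension since both restrictions agree (the paper pins down the common restriction via Tabareau's axiom (1.12)). The only cosmetic difference is that the paper packages the argument by exhibiting an explicit candidate strength $t_{X,Y}=J_{X\varepsilon T(Y)}\circ I_{X,Y}^{-1}$ from the isomorphism of Proposition \ref{StarAutonomousParrTech} and matching it with Tabareau's, and in this Mackey-complete setting the uniqueness mechanism is extension to the Mackey completion rather than the hypocontinuous-extension statement \cite[\S 40.3.(1)]{Kothe2} you cite from the $k$-reflexive case.
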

\begin{proof}
Fix $X,Y\in \McS$. Hence there is a continuous identity map $J_Y:(Y^*_\rho)^*_\rho\to Y.$ In order to build the strength, we use lemma \ref{DualArensMc} and $(Y^*_\rho)^*_\rho=(Y^*_\rho)'_\RR$ to get the
the identity $$(X\varepsilon ((Y^*_\rho)^*_\rho)^*_\rho=\mathcal{C}_M\Big(
\mathscr{S}\Big(\big[X\varepsilon ((Y^*_\rho)'_\RR)\big]'_\mu\Big)\Big)=
\mathscr{S}\Big(\big[X'_\mu\widehat{\o}^M_i [(Y^*_\rho)'_c]'_\mu\Big)=
\mathscr{S}\Big(X'_\mu\widehat{\o}^M_i (\widehat{Y'_\mu}^M)\Big)=\mathscr{S}\Big(X'_\mu\widehat{\otimes}_{\kappa}^M (\widehat{Y'_\mu}^M)\Big)$$
and similarly $(X\varepsilon Y)^*_\rho=\mathscr{S}\Big(X'_\mu\widehat{\otimes}_{\kappa}^M Y'_\mu\Big)$.

Hence applying proposition   \ref{StarAutonomousParrTech}
to $X'_\mu,Y'_\mu$ one gets that the canonical map is an isomorphism:

\noindent $(X\varepsilon Y)^*_\rho\to (X\varepsilon ((Y^*_\rho)^*_\rho)^*_\rho$
hence by duality the topological isomorphism:

\noindent $I_{X,Y}:((X\varepsilon ((Y^*_\rho)^*_\rho)^*_\rho)^*_\rho\simeq((X\varepsilon Y)^*_\rho)^*_\rho$
and we claim the expected strength is $$t_{X,Y}= J_{X\varepsilon T(Y)}\circ I_{X,Y}^{-1} \in \McS^{op}(X\varepsilon T(Y),T(X\varepsilon Y)).$$
Instead of checking the axioms directly, one uses that from \cite[Prop 2.4]{TabareauPhD}, the dialogue category already implies existence of a strength say $\tau_{X,Y}$ so that it suffices to see $\tau_{X,Y}=t_{X,Y}$ to get the relations for a strength for $t$. Of course we keep working in the opposite category.
From the axioms of a strength, see e.g.\cite[Def 1.19,(1.10),(1.12)]{TabareauPhD}, and of a monad, we know that $\tau_{X,Y}=J_{X\epsilon T(Y)}\circ T(\tau_{X,Y})\circ J_{T(X\epsilon Y)}^{-1}.$ Hence it suffices to see  $I_{X,Y}^{-1}= T(\tau_{X,Y})\circ J_{T(X\epsilon Y)}^{-1}$ or equivalenty $((I_{X,Y})^{*}_\rho)^{-1}=(J_{T(X\epsilon Y)}^{-1})^*_{\rho}\circ (\tau_{X,Y})^*_\rho.$ But recall that the left hand side is defined uniquely by continuous extension, hence it suffices to see the restriction agrees on $X'_\mu\o Y'_\mu$ and the common value is determined for both sides by axiom \cite[(1.12)]{TabareauPhD}.

Finally with our definition, the relation for a commutative monad ends with the map   $J_{T(X)\varepsilon T(Y)}$ and the map obtained after removing this map and taking dual of both sides is determined as a unique extension of the same map, hence the commutativity must be satisfied.
\end{proof}

\subsection{The $*$-autonomous category $\rRef$.}
\label{subsec:starautonomouscat}

\begin{definition}
We thus consider $\rRef$, the category of $\rho$-reflexive spaces, with tensor product $E \otimes_\rho F = ((E_{\mu} \otimes_\RR F_{\mu})^*_\rho)^*_\rho $ and internal hom $E \multimap_\rho F = (((E^*_\rho)\varepsilon F)^*_\rho)^*_\rho$ .
\end{definition}

{Recall $E_\mu=(E'_\mu)'_\mu$. For $E\in\rRef$ we deduce from lemma \ref{univproperty} that $E \multimap_\rho F \simeq ((L_{\RR}(E , F))^*_\rho)^*_\rho$.}

 The tensor product $E \otimes_\rho F$ is indeed a $\rho$-reflexive space by Theorem \ref{rhoref}.

We are ready to get that $\rRef$ is $*$-autonomous.

\begin{theorem}\label{StarAutonomous}
The category  $\rRef$ endowed with the tensor product $\otimes_\rho$, and internal Hom  $\multimap_\rho$ is a complete and cocomplete $*$-autonomous category with dualizing object $\K$. It is equivalent to the  Kleisli category of the comonad $T=((\cdot)^*_\rho)^*_\rho$ in $\McS.$
\end{theorem}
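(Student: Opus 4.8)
The plan is to read off the $*$-autonomous structure as a direct application of Lemma \ref{DialogueToRef} to the dialogue category $(\McS^{op},\varepsilon,\K,(\cdot)^*_\rho)$, and then to reconcile the abstract tensor product and internal hom produced by that lemma with the concretely defined $\otimes_\rho$ and $\multimap_\rho$. So I would first collect the three hypotheses of Lemma \ref{DialogueToRef} with $\mathcal{C}=\McS$, $\parr_\mathcal{C}=\varepsilon$, $I=\K$ and $\neg=(\cdot)^*_\rho$: the dialogue structure is Proposition \ref{rhoDialogue}; commutativity of the continuation monad $T=((\cdot)^*_\rho)^*_\rho$ is Corollary \ref{rhoCommut}; and idempotency follows from Theorem \ref{rhoref}, since $E^*_\rho$ is always $\rho$-reflexive, whence $((E^*_\rho)^*_\rho)^*_\rho\simeq E^*_\rho$. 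I would then identify the full subcategory $\mathcal{D}\subseteq\McS$ of objects of the form $C^*_\rho$ with $\rRef$: Theorem \ref{rhoref} gives $\mathcal{D}\subseteq\rRef$ (every $\rho$-dual is $\rho$-reflexive, and $\rho$-reflexive spaces are Mackey-complete Schwartz, hence in $\McS$), while conversely any $\rho$-reflexive $E$ equals $(E^*_\rho)^*_\rho=\neg(E^*_\rho)$ with $E^*_\rho\in\McS$, so $\rRef\subseteq\mathcal{D}$. As both are full subcategories of $\McS$ on the same objects, $\mathcal{D}=\rRef$. Lemma \ref{DialogueToRef} then yields at once the equivalence of $\rRef$ with the Kleisli category of the comonad $T$ on $\McS$, and endows $\rRef$ with a $*$-autonomous structure with dualizing object $\K$ and tensor $E\otimes_{\mathcal{D}}F=(E^*_\rho\varepsilon F^*_\rho)^*_\rho$.

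Next I would check that the internal hom of this structure is $\multimap_\rho$. Since in any $*$-autonomous category the internal hom is $[E,F]=\neg(E\otimes_{\mathcal{D}}\neg F)$, and since $(F^*_\rho)^*_\rho\simeq F$ for $F\in\rRef$, one computes $E\otimes_{\mathcal{D}}F^*_\rho=(E^*_\rho\varepsilon(F^*_\rho)^*_\rho)^*_\rho=(E^*_\rho\varepsilon F)^*_\rho$, hence $[E,F]=((E^*_\rho\varepsilon F)^*_\rho)^*_\rho=E\multimap_\rho F$, which is exactly the given definition.

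The one genuinely computational point, which I expect to be the main obstacle, is matching $\otimes_{\mathcal{D}}$ with $\otimes_\rho$. Setting $X=E^*_\rho$ and $Y=F^*_\rho$, both in $\McMS$, I would first note $X'_c=(E^*_\rho)'_c=E_\mu$: indeed $E^*_\rho=\mathscr{S}(E'_\mu)$ for $E\in\rRef$, which is Schwartz so $(E^*_\rho)'_c=(E^*_\rho)'_\mu$ by Lemma \ref{MackeyArensSchwartz}, and $\mathscr{S}$ leaves the dual unchanged, giving $(E^*_\rho)'_\mu=(E'_\mu)'_\mu=E_\mu$; symmetrically $Y'_c=F_\mu$. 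Lemma \ref{Tensor} then gives $((X\varepsilon Y)^*_\rho)^*_\rho\simeq(X'_c\otimes_\RR Y'_c)^*_\rho=(E_\mu\otimes_\RR F_\mu)^*_\rho$, that is $(\otimes_{\mathcal{D}}(E,F))^*_\rho\simeq(E_\mu\otimes_\RR F_\mu)^*_\rho$, since $\otimes_{\mathcal{D}}(E,F)=(X\varepsilon Y)^*_\rho$. Applying $(\cdot)^*_\rho$ once more and using that $\otimes_{\mathcal{D}}(E,F)$ is $\rho$-reflexive (being a $\rho$-dual, so $((\otimes_{\mathcal{D}}(E,F))^*_\rho)^*_\rho\simeq\otimes_{\mathcal{D}}(E,F)$), the left-hand side becomes $\otimes_{\mathcal{D}}(E,F)$ and the right-hand side becomes $((E_\mu\otimes_\RR F_\mu)^*_\rho)^*_\rho=E\otimes_\rho F$, yielding the natural isomorphism $\otimes_{\mathcal{D}}\simeq\otimes_\rho$.

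Transporting the $*$-autonomous structure of Lemma \ref{DialogueToRef} along this natural isomorphism (and along the identification of internal homs) then gives the asserted structure on $(\rRef,\otimes_\rho,\K,\multimap_\rho)$. Finally, completeness and cocompleteness are exactly Proposition \ref{rhoRefComplete}, which closes the argument.
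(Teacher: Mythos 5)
Your proposal is correct and follows essentially the same route as the paper: both obtain $*$-autonomy by applying Lemma \ref{DialogueToRef} to the dialogue category $(\McS^{op},\varepsilon,\K,(\cdot)^*_\rho)$ of Proposition \ref{rhoDialogue}, with commutativity from Corollary \ref{rhoCommut} and idempotency from Theorem \ref{rhoref}, then identify $\otimes_\rho$ via Lemma \ref{Tensor} and get (co)completeness from Proposition \ref{rhoRefComplete}. The only difference is that you spell out the identifications $\mathcal{D}=\rRef$, $(E^*_\rho)'_c=E_\mu$ and $[E,F]=E\multimap_\rho F$ that the paper leaves implicit, and you do so correctly.
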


\begin{proof}Corollary \ref{rhoRefComplete} has already dealt with categorical (co)completeness.
$(\McS,\epsilon,K)$ is a dialogue category by proposition \ref{rhoDialogue} with a commutative and idempotent continuation monad by Corollary \ref{rhoCommut} and Theorem \ref{rhoref}.

The lemma \ref{DialogueToRef} gives $*$-autonomy.
As a consequence, the induced $\parr_\rho$ is $E\parr_\rho F= ((E\varepsilon F)^*_\rho)^*_\rho$ and the dual is still $(\cdot)^*_\rho$. The identification of $ \multimap_\rho$ is obvious while $\o_\rho$ comes from lemma \ref{Tensor}. 
\end{proof}

\part{Models of LL and DiLL}
\textbf{From now on, to really deal with smooth maps, we assume $\K=\R.$}
\section{Smooth maps and induced topologies. New models of LL}

Any denotational model of linear logic has a morphism interpreting dereliction on any space $E$ :  $d_E : E \to ?E$. In our context of smooth functions and reflexive spaces, it means that the topology on $E$ must be finer than the one induced by $\Cin (E^*, \K)$. From the model of $k$-reflexive spaces, we introduce a variety of new classes of smooth functions, each one inducing a different topology and a new smooth model of classical Linear Logic. We show in particular that each time the $\parr$ is interpreted as the $\varepsilon$-product.
 
We want to start from the famous Cartesian closedness \cite[Th 3.12]{KrieglMichor} and its corollary, but we want an exponential law in the topological setting, and not in the bornological setting. We thus change slightly the topology on (conveniently)-smooth maps $C^\infty(E,F)$ between two locally convex spaces. We follow the simple idea to consider spaces of smooth curves on a family of base spaces stable by product, thus at least on any $\R^n$. Since we choose at this stage a topology, it seems reasonable to look at the induced topology on linear maps, and singling out smooth varieties indexed by $\R^n$ does not seem to fit well with our Schwartz space setting for $\rho$-reflexive spaces, but rather with a stronger nuclear setting. This suggests that the topology on smooth maps could be a guide to the choice of a topology even on the dual space. In our previous developments, the key property for us was stability by $\varepsilon$ product of the topology we chose, namely the Schwartz topology. This property is shared by nuclearity but there are not many functorial and commonly studied topologies having this property. We think the Seely isomorphism is crucial to select such a topology in transforming stability by tensor product into stability by product.

\subsection{$\mathscr{C}$-Smooth maps and $\mathscr{C}$-completeness}

We first fix a small Cartesian category $\mathscr{C}$ that will replace the category of finite dimensional spaces $\R^n$ as parameter space of curves. 

 We will soon restrict to the full category $\FDFS\subset\LCS$ consisting of (finite) products of Fr\'echet spaces and strong duals of Fr\'echet-Schwartz spaces,  but we first explain the most general context in which we know our formalism works. We assume $\mathscr{C}$ is a full Cartesian small subcategory of $\kref$ containing $\R$,
  { with smooth maps as morphisms}. 

Proposition \ref{Meise4} and the convenient smoothness case suggests the following space and topology. For any $X\in \mathscr{C}$,
for any $c\in C^\infty_{co}(X,E)$ a ($\kref$ space parametrized) curve we define $C^\infty_{\mathscr{C}}(E,F)$ as the set of maps $f$ such that $f\circ c\in C^\infty_{co}(X,F)$ for any such curve $c$. We call them \textit{$\mathscr{C}$-smooth maps}.
{ Note that $.\circ c$ is in general not surjective, but valued in the closed subspace:$$[C^\infty_{co}(X,F)]_c=\{g\in C^\infty_{co}(X,F):\forall x\neq y: c(x)=c(y)\Rightarrow g(x)=g(y)\}.$$}
 One gets a linear map $.\circ c:C^\infty_{\mathscr{C}}(E,F)\to C^\infty_{co}(X,F)$. We equip the target space of the topology of uniform convergence of all differentials on compact subsets as before.  We equip $C^\infty_{\mathscr{C}}(E,F)$ with the projective {kernel} topology of those maps for all $X\in \mathscr{C}$ and $c$ smooth maps as above, with connecting maps all smooth maps  $C^\infty_{co}(X,Y)$ inducing reparametrizations. {Note that this projective kernel can be identified with a projective limit (indexed by a directed set). Indeed, we put an order on the set of curves $C^\infty_{co}(\mathscr{C},E):=\sqcup_{X\in\mathscr{C}} C^\infty_{co}(X,E)/\sim$ (where two curves are identified with the equivalence relation making the preorder we define into an order). This is an ordered set with  $c_1\leq c_2$ if $c_1\in C^\infty_{co}(X,E), c_2\in C^\infty_{co}(Y,E)$ and there is $f\in C^\infty_{co}(X,Y)$ such that $c_2\circ f=c_1$. This is moreover a directed set. Indeed given   $c_i\in C^\infty_{co}(X_i,E), $ one considers $c_i'\in C^\infty_{co}(X_i\times \R,E)$, $c_i'(x,t)=tc_i(x)$ so that $c_i'\circ (.,1)=c_i$ giving $c_i\leq c_i'$. Then one can define $c\in C^\infty_{co}(X_1\times \R\times X_2\times \R,E)$ given by $c(x,y)=c_1'(x)+c_2'(y)$. This satisfies $c\circ(.,0)=c_1',c\circ(0,.)=c_2'$, hence $c_i\leq c_i'\leq c$. We claim that $C^\infty_{\mathscr{C}}(E,F)$ identifies with { the projective limit along this directed set (we fix one $c$ in each equivalence class) of $[C^\infty_{co}(X,F)]_c$ on the curves $c\in C^\infty_{co}(X,E)$ with connecting maps for $c_1\leq c_2$, $.\circ f$ for one fixed $f$ such that $c_2\circ f=c_1$. This is well-defined since if $g$ is another curve with $c_2\circ g=c_1$, then for $u\in[C^\infty_{co}(X_2,F)]_{c_2}$ for any $x\in X_1$, $u\circ g(x)=u\circ f(x)$ since $c_2(g(x))=c_1(x)=c_2(f(x))$ hence $\cdot\circ g=\cdot\circ f:[C^\infty_{co}(X_2,F)]_{c_2}\to [C^\infty_{co}(X_1,F)]_{c_1}$ does not depend on the choice of $f$.}
  
 For a compatible sequence of such maps in $[C^\infty_{co}(X,F)]_c$, one associates the map $u:E\to F$ such that $u(x)$ is the value at the constant curve $c_x$ equal to $x$ in $C^\infty_{co}(\{0\},E)=E$. For, the curve $c\in C^\infty_{co}(X,E)$ satisfies for $x\in X$, $c\circ c_x=c_{c(x)}$, hence $u\circ c$ is the element of the sequence associated to $c$, hence $u\circ c\in [C^\infty_{co}(X,F)]_c.$ Since this is for any curve $c$, this implies $u\in C^\infty_{\mathscr{C}}(E,F)$ and the canonical map from this space to the projective limit is therefore surjective. The topological identity is easy.}

{We summarize this with the formula:
\begin{equation}\label{CinCasProjLimit}C^\infty_{\mathscr{C}}(E,F)=\mathrm{proj}\lim_{c\in C^\infty_{co}(X,E)}[C^\infty_{co}(X,F)]_c\end{equation}}

For $\mathscr{C}=Fin$ the category of finite dimensional spaces, $C^\infty_{Fin}(E,F)=C^\infty(E,F)$ is the space of conveniently smooth maps considered by Kriegl and Michor. We call them merely smooth maps. Note that our topology on this space is slightly stronger than theirs (before they bornologify) and that any $\mathscr{C}$-smooth map is smooth, since all our $\mathscr{C}\supset Fin$. Another important case for us is $\mathscr{C}=Ban$ the category of Banach spaces (say, to make it into a small category, of density character smaller than some fixed inaccessible cardinal, most of our considerations would be barely affected by taking the category of separable Banach spaces instead).

\begin{lemma}\label{thm:CartesianClosedBasic}
We fix $\mathscr{C}$ any  Cartesian small and full subcategory of $\mathbf{k-Ref}$ containing $\R$  and the above projective limit topology on $C^\infty_{\mathscr{C}}$.
For any $E,F,G$ lcs, with  $G$ $k$-quasi-complete, there is a topological isomorphism:
$$C^\infty_{\mathscr{C}}(E,C^\infty_{\mathscr{C}}(F,G))\simeq C^\infty_{\mathscr{C}}(E\times F,G)\simeq  C^\infty_{\mathscr{C}}(E\times F)\varepsilon G.$$

Moreover, the first isomorphism also holds for $G$ Mackey-complete, and $C^\infty_{\mathscr{C}}(F,G)$ is Mackey-complete (resp. $k$-quasi-complete) as soon as $G$ is. If $X\in \mathscr{C}$ then $C^\infty_{\mathscr{C}}(X,G)\simeq C^\infty_{co}(X,G)$ and if only $X\in \kref$ there is a continuous inclusion: $C^\infty_{co}(X,G)\to C^\infty_{\mathscr{C}}(X,G).$
\end{lemma}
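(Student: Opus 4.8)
The plan is to treat the statement as a bundle of assertions about $C^\infty_{\mathscr{C}}$ and to reduce each one to the already established properties of $C^\infty_{co}$ (Propositions \ref{Meise1}, \ref{Meise2}, \ref{Meise3}, \ref{Meise4}) via the projective-limit description \eqref{CinCasProjLimit}. I would first dispatch the two comparison statements at the end. Stability of $C^\infty_{co}$ under precomposition by smooth curves (Proposition \ref{Meise1}) shows at once that $C^\infty_{co}(X,G)\subset C^\infty_{\mathscr{C}}(X,G)$, with continuous inclusion because each connecting map $\cdot\circ c$ lands continuously in $C^\infty_{co}(Y,G)$; when moreover $X\in\mathscr{C}$, the identity curve $\mathrm{id}_X$ belongs to the defining projective system, so it both realizes the reverse inclusion and identifies the topologies, giving $C^\infty_{\mathscr{C}}(X,G)\simeq C^\infty_{co}(X,G)$. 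For the completeness claim, \eqref{CinCasProjLimit} exhibits $C^\infty_{\mathscr{C}}(F,G)$ as a projective limit of the closed subspaces $[C^\infty_{co}(X,G)]_c$ of $C^\infty_{co}(X,G)$, which are Mackey-complete (resp. $k$-quasi-complete) by Proposition \ref{Meise2}; since both completeness notions pass to closed subspaces and to projective limits, so does $C^\infty_{\mathscr{C}}(F,G)$.

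Next I would establish the second isomorphism $C^\infty_{\mathscr{C}}(E\times F,G)\simeq C^\infty_{\mathscr{C}}(E\times F)\varepsilon G$ for $G$ $k$-quasi-complete. Using \eqref{CinCasProjLimit} and the \emph{full} isomorphism $C^\infty_{co}(X,G)\simeq C^\infty_{co}(X)\varepsilon G$ of Proposition \ref{Meise3} (available since $X\in\mathscr{C}\subset\kref$ and $G$ is $k$-quasi-complete), the key point is to match the defining closed subspaces: $[C^\infty_{co}(X,G)]_c$ is the common kernel of the maps $\mathrm{ev}_x-\mathrm{ev}_y$ for $c(x)=c(y)$, and under the $\varepsilon$-isomorphism these correspond to $(\mathrm{ev}_x-\mathrm{ev}_y)\varepsilon\, \mathrm{id}_G$ with $\mathrm{ev}_x\colon C^\infty_{co}(X)\to\K$. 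Since $\varepsilon$ commutes with limits and carries closed embeddings to closed embeddings (Corollary following Theorem \ref{FirstMALL}), the kernel of this family is exactly $[C^\infty_{co}(X)]_c\,\varepsilon\, G$. Taking the projective limit over $c$ and commuting $\varepsilon$ with the limit once more yields the claimed identity.

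The first, Cartesian-closedness, isomorphism is where I expect the real work. At the level of underlying sets the bijection is the convenient-calculus exponential law \cite[Th 3.12]{KrieglMichor} (for which Mackey-completeness of $G$ suffices, as every $\mathscr{C}$-smooth map is in particular conveniently smooth since $\mathscr{C}\supset Fin$); one must check that it carries $\mathscr{C}$-smooth maps to $\mathscr{C}$-smooth maps, which is a curve-test verification using that $f\colon E\times F\to G$ tested on $(c_1,c_2)$ can be probed through constant curves in one variable and product curves (legitimate because $\mathscr{C}$ is Cartesian). For the topology I would compare the two projective limits of \eqref{CinCasProjLimit} directly: inside $C^\infty_{\mathscr{C}}(E,C^\infty_{\mathscr{C}}(F,G))$ the object $C^\infty_{co}(X_1,C^\infty_{\mathscr{C}}(F,G))$ is rewritten, using that $C^\infty_{co}(X_1,-)$ commutes with the projective limit defining $C^\infty_{\mathscr{C}}(F,G)$ together with Proposition \ref{Meise4}, as a projective limit of closed subspaces of $C^\infty_{co}(X_1\times X_2,G)$. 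The decisive step is then a cofinality argument: a curve $c\colon X\to E\times F$ is the same as a pair $(\pi_E c,\pi_F c)$ with common parameter space, while conversely a pair $(c_1\colon X_1\to E,\, c_2\colon X_2\to F)$ produces $(c_1\pi_1,c_2\pi_2)\colon X_1\times X_2\to E\times F$, so the two directed systems are mutually cofinal and give the same limit after the identification of Proposition \ref{Meise4}.

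The hard part will be precisely this last topological matching: keeping track of the closed subspaces $[\cdot]_c$ through the reindexing, and justifying that $C^\infty_{co}(X_1,-)$ genuinely commutes with the relevant projective limits. I would deduce the latter directly from the projective-kernel definition of the $C^\infty_{co}$-topology, since both the Gâteaux-differentiability and the uniform convergence of derivatives on compacta into a projective limit can be tested after composition with the limit projections. Assembling the set-theoretic bijection, the commutation, and the cofinality yields the topological isomorphism, valid for $G$ merely Mackey-complete (as only Proposition \ref{Meise4} is invoked here), whereas the further passage to $C^\infty_{\mathscr{C}}(E\times F)\varepsilon G$ requires $k$-quasi-completeness through Proposition \ref{Meise3}.
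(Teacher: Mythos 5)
Your proposal is correct and follows essentially the same route as the paper: the comparison statements via the identity curve and the chain rule of Proposition \ref{Meise1}, completeness via closed subspaces of products of the $C^\infty_{co}(X,G)$ (Proposition \ref{Meise2}), the $\varepsilon$-identity via commutation of $\varepsilon$ with the projective kernels defining $[C^\infty_{co}(X,G)]_c$ together with Proposition \ref{Meise3}, and Cartesian closedness via the convenient exponential law plus the cofinality of product curves $(c_1\times c_2):X_1\times X_2\to E\times F$ combined with Proposition \ref{Meise4}. The paper phrases the last step more concretely through the identities $(\cdot\circ c_2)(f\circ c_1)=f^\wedge\circ(c_1\times c_2)$ and $(\cdot\circ c_2)(f^\vee\circ c_1)=(f\circ(c_1\times c_2))^\vee$, but the content is the same as your reindexing argument.
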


\begin{proof}
The first algebraic isomorphism comes from \cite[Th 3.12]{KrieglMichor} in the case $\mathscr{C}=Fin$ (since maps smooth on smooth curves are automatically smooth when composed by ``smooth varieties" by their Corollary 3.13).
More generally, for any $\mathscr{C}$, the algebraic isomorphism works with the same proof in using Proposition \ref{Meise4} instead of their Proposition 3.10. We also use their notation $f^\vee, f^\wedge$ for the maps given by the algebraic Cartesian closedness isomorphism.

Concerning the topological identification we take the viewpoint of projective kernels, for any curve $c=(c_1,c_2):X\to E\times F$, one can associate a curve $(c_1 \times c_2):(X\times X)\to E\times F$, $(c_1 \times c_2)(x,y)=(c_1(x),c_2(y))$ and for $f\in C^\infty(E,C^\infty(F,G))$, one gets $(\cdot\circ c_2) (f\circ c_1)=f^\wedge\circ (c_1 \times c_2)$ composed with the diagonal embedding gives $f^\wedge\circ (c_1 , c_2)$ and thus uniform convergence of the latter is controlled by uniform convergence of the former. This gives by taking projective kernels, continuity of the direct map.

Conversely, for $f\in C^\infty(E\times F,G)$, $(\cdot\circ c_2)(f^\vee \circ c_1)=(f\circ (c_1\times c_2))^\vee $ with $c_1$ on $X_1, c_2$ on $X_2$ is controlled by a map $f\circ (c_1\times c_2)$ with $(c_1\times c_2):X_1\times X_2\to E\times F$ and this gives the converse continuous linear map (using proposition \ref{Meise4}).

The topological isomorphism with the $\varepsilon$ product comes from its commutation with projective limits as soon as we note that $[C^\infty_{co}(X,G)]_c=[C^\infty_{co}(X,\R)]_c \varepsilon G$ but these are also projective limits as intersections and kernels of evaluation maps. Therefore this comes from lemma \ref{etageneral} and from proposition \ref{Meise3}. 

Finally, $C^\infty_{\mathscr{C}}(F,G)$ is a closed subspace of a product of $C^\infty_{co}(X,G)$ which are  Mackey-complete or $k$-quasi-complete if  so is $G$  by proposition \ref{Meise2}%\cite[Prop 5]{Meise}
. 

For the last statement, since $id:X\to X$ is smooth, we have a continuous map $I:C^\infty_{\mathscr{C}}(X,G)\to C^\infty_{co}(X,G)$ in case $X\in \mathscr{C}$. Conversely, it suffices to note that for any $Y\in \mathscr{C}$, $c\in C^\infty_{co}(Y,X)$, $f\in C^\infty_{co}(X,G)$, then $f\circ c\in C^\infty_{co}(Y,G)$ by the chain rule from proposition \ref{Meise1} and that this map is continuous linear in $f$ for $c$ fixed. This shows $I$ is the identity map and gives continuity of its inverse by the universal property of the projective limit.
\end{proof}

We now want to extend this result beyond the case $G$ $k$-quasi-complete in finding the appropriate notion of completeness depending on $\mathscr{C}$.

\begin{lemma}\label{Ccomplete}
Consider the statements:
\begin{enumerate}
\item $F$ is Mackey-complete.
\item For any $X\in \mathscr{C}$, $J_X:C^\infty_{co}(X)\varepsilon F\to C^\infty_{co}(X,F)$ is a topological isomorphism
\item For any lcs $E$, $J_E^{\mathscr{C}}:C^\infty_{\mathscr{C}}(E)\varepsilon F\to C^\infty_{\mathscr{C}}(E,F)$ is a topological isomorphism.
\item For any $X\in \mathscr{C},f\in (C^\infty_{co}(X))'_c$, any $c\in C^\infty_{co}(X,F)\subset C^\infty_{co}(X,\tilde{F})=C^\infty_{co}(X)\varepsilon\tilde{F}$, we have $(f\varepsilon Id)(c)\in F$ instead of its completion (equivalently with its $k$-quasi-completion).
\end{enumerate}
We have equivalence of (2),(3) and (4) for any $\mathscr{C}$ Cartesian small and full subcategory of $\kref$ containing $\R$. They always imply (1) and when  $ \mathscr{C}\subset \FDFS,$ (1) is also equivalent to them.
\end{lemma}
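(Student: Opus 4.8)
The plan is to run the equivalences in two groups: first the ``soft'' part $(2)\Leftrightarrow(3)\Leftrightarrow(4)$, which uses only that $\varepsilon$ commutes with projective limits together with the identifications of Lemma \ref{thm:CartesianClosedBasic}, and then the two analytic implications $(4)\Rightarrow(1)$ and, under $\mathscr C\subset\FDFS$, $(1)\Rightarrow(4)$, where completeness genuinely enters. For $(2)\Leftrightarrow(4)$ I would just unwind definitions: by Proposition \ref{Meise3} the canonical map $J_X:\Cinco(X)\varepsilon F\to \Cinco(X,F)$ is always a topological embedding, so $(2)$ is exactly its surjectivity. Any $c\in\Cinco(X,F)$ lies in $\Cinco(X,\widetilde F)\simeq \Cinco(X)\varepsilon\widetilde F=L((\Cinco(X))'_c,\widetilde F)$ (the isomorphism holds since $\widetilde F$ is complete, hence $k$-quasi-complete), and $\Cinco(X)\varepsilon F=L((\Cinco(X))'_c,F)$ is algebraically the subspace of maps valued in $F$, the inclusion being an embedding by the corollary following Theorem \ref{FirstMALL} (valid for non-closed embeddings, as noted there). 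Since the value of this map at $f\in(\Cinco(X))'_c$ is precisely $(f\varepsilon\mathrm{Id})(c)$, ``$c$ lies in the image of $J_X$'' reads ``$(f\varepsilon\mathrm{Id})(c)\in F$ for all $f$'', i.e. $(4)$; the $k$-quasi-completion variant is equivalent because $\widehat F^K$ is $k$-quasi-complete, so $(f\varepsilon\mathrm{Id})(c)$ automatically lies in $\widehat F^K$ and the requirement is again that it lie in $F$.

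For $(2)\Rightarrow(3)$ I would feed $(2)$ into the description \eqref{CinCasProjLimit} of $\CinC$ as a projective limit of the closed subspaces $[\Cinco(X,F)]_c$. Writing $\CinC(E)$ for the same limit with $F=\R$ and using that $\varepsilon$ commutes with categorical limits (Lemma \ref{etageneral}(4)), one gets $\CinC(E)\varepsilon F\simeq \mathrm{proj}\lim_c\big([\Cinco(X)]_c\varepsilon F\big)$. Each $[\Cinco(X)]_c$ is the projective kernel inside $\Cinco(X)$ of the maps $ev_x-ev_y$ over pairs with $c(x)=c(y)$; since $\varepsilon$ commutes with such kernels and $J_X$ is a topological isomorphism by $(2)$, one identifies $[\Cinco(X)]_c\varepsilon F\simeq[\Cinco(X,F)]_c$ compatibly with the connecting maps, and passing to the limit yields $(3)$, the resulting isomorphism being $J_E^{\mathscr C}$ by naturality. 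Conversely $(3)\Rightarrow(2)$ is obtained by specialising $E=X\in\mathscr C$ and invoking the last statement of Lemma \ref{thm:CartesianClosedBasic}, namely $\CinC(X)\simeq\Cinco(X)$ and $\CinC(X,F)\simeq\Cinco(X,F)$, under which $J_X^{\mathscr C}$ becomes $J_X$.

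The implication $(4)\Rightarrow(1)$ is the first genuinely analytic step, and I would prove it with $X=\R$. Let $(x_n)$ be Mackey-Cauchy in $F$; after passing to a subsequence we may assume $x_{k+1}-x_k\in 2^{-k}B$ for a closed bounded disk $B$, so that $x=\lim x_k$ exists in $\widetilde F$ with $x-x_0=\sum_k(x_{k+1}-x_k)$ convergent in $\widetilde F_B$. Choosing smooth bumps $\phi_k\ge 0$, $\int\phi_k=1$, supported in disjoint intervals of \emph{polynomially} decreasing length accumulating at $1$, the curve $c(t)=\sum_k(x_{k+1}-x_k)\phi_k(t)$ (extended by $0$) is an honest element of $\Cinco(\R,F)$: its values lie in $F$, and the polynomial control of $\|\phi_k^{(j)}\|_\infty$ against the factor $2^{-k}$ forces all derivatives to vanish at the accumulation point. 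With $f(h)=\int_0^1 h$, which is continuous on $\Cinco(\R)$, one computes $(f\varepsilon\mathrm{Id})(c)=\int_0^1 c(t)\,dt=\sum_k(x_{k+1}-x_k)=x-x_0$, a priori only in $\widetilde F$; condition $(4)$ places it in $F$, whence $x\in F$ and $F$ is Mackey-complete. This is a quantitative form of the special curve lemma of \cite{KrieglMichor}, the one subtlety being to keep the curve valued in $F$ while its integral escapes to $\widetilde F$.

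Finally, for $\mathscr C\subset\FDFS$ I would establish $(1)\Rightarrow(4)$, which by $(2)\Leftrightarrow(4)$ closes the cycle. Given $c\in\Cinco(X,F)$ and $f\in(\Cinco(X))'_c$, the functional $f$ is continuous for a seminorm $p_{K,m}(h)=\sup_{x\in K,\,j\le m}\|d^jh(x)\|$ with $K\subset X$ compact, and \emph{for $X\in\FDFS$ this $K$ is metrizable}. Representing $f$ via Hahn--Banach and Riesz as integration of the $m$-jet of $c$ against a Radon measure on $K$, I would write $(f\varepsilon\mathrm{Id})(c)$ as an $F$-valued integral over $K$ and approximate it by Riemann sums, i.e. finite combinations of the values $d^jc(x)\in F$. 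Since $X\in\kref$ the cover $\overline{\Gamma}(K)$ is compact and $d^{m+1}c$ is bounded on it, so the jet $x\mapsto(d^jc(x))_{j\le m}$ is Lipschitz into some Banach disk $F_B$; combined with metrizability of $K$ this makes the refining Riemann sums a Mackey-Cauchy \emph{sequence}, whose limit lies in $F$ by $(1)$ and equals $(f\varepsilon\mathrm{Id})(c)$. I expect this last step to be the main obstacle, and it is exactly where $\FDFS$ is indispensable: metrizability of compacts is what upgrades the $\widetilde F$-valued integral from a topological limit of a net to a Mackey limit of a sequence, so that the sequential notion of Mackey-completeness suffices; the Krein--Milman phenomenon exhibited in Lemma \ref{McMSNotKc} shows that, absent such a restriction on $\mathscr C$, the implication $(1)\Rightarrow(4)$ genuinely fails and one is forced back to the $k$-quasi-complete setting of Section \ref{sec:kref}.
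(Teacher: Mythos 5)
Your handling of $(2)\Leftrightarrow(3)\Leftrightarrow(4)$ coincides with the paper's: both rest on $J_X$ being an embedding (Proposition \ref{Meise3}), on reading $(4)$ as surjectivity of $J_X$, and on commutation of $\varepsilon$ with the projective limit \eqref{CinCasProjLimit}. The two analytic implications are where you genuinely diverge. For $(4)\Rightarrow(1)$ the paper invokes the characterization of Mackey-completeness by existence of weak integrals of smooth curves (\cite[Thm 2.14]{KrieglMichor}) and applies $(4)$ to $Leb_{[0,x]}$; you instead build the special curve $\sum_k(x_{k+1}-x_k)\phi_k$ by hand and integrate it, which is the same mechanism made explicit, at the cost of verifying smoothness at the accumulation point (routine, given $2^{-k}$ against the polynomial growth of $\|\phi_k^{(j)}\|_\infty$). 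For $(1)\Rightarrow(2)/(4)$ under $\mathscr C\subset\FDFS$ the routes really differ: the paper proves that the span of the evaluation functionals $ev_X^{(k)}$ is \emph{Mackey-dense} in $(C^\infty_{co}(X))'_c$ --- via the measure description of the dual, Krein--Milman on the metrizable compacts, and the Schwartz property of $C^\infty_{co}(X)$ to upgrade weak-$*$ convergence of the Dirac combinations to Mackey convergence --- and then transports the value of $u\in L((C^\infty_{co}(X))'_c,\tilde F)$ along Mackey limits. You work on the other side of the duality: represent a single $f$ by measures acting on the jet of $c$, and approximate the resulting $F$-valued integral by Riemann sums, using metrizability of $K$ to get a genuine sequence and boundedness of $d^{m+1}c$ on $\overline{\Gamma(K)}$ (where $k$-quasi-completeness of $X$ enters) to get a uniform modulus of continuity into a Banach disk, hence Mackey--Cauchyness. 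Both arguments isolate metrizability of compacts as the reason for restricting to $\FDFS$; yours trades the Schwartz-space/Krein--Milman input for a quantitative Lipschitz estimate on the jet and proves $(1)\Rightarrow(4)$ directly, functional by functional. One caveat: your closing assertion that Lemma \ref{McMSNotKc} shows $(1)\Rightarrow(4)$ \emph{fails} for general $\mathscr C$ overstates what that lemma proves (it exhibits a Mackey-complete non-$k$-quasi-complete space, not a counterexample to $(4)$); this is only a side remark and does not affect the correctness of the proof.
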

This suggests the following condition weaker than $k$-quasi-completeness:
\begin{definition}
A locally convex space $E$ is said $\mathscr{C}$-complete (for a $\mathscr{C}$  as above
) if one of the equivalent conditions (2),(3),(4) are satisfied.
\end{definition}

This can be the basis to define a $\mathscr{C}$-completion similar to Mackey completion with a projective definition (as intersection in the completion) based on (2) and an inductive construction (as union of a chain in the completion) based on (4).

\begin{proof}
(2) implies (3) by the commutation of $\varepsilon$ product with projective limits as in lemma \ref{thm:CartesianClosedBasic} and (3) implies (2) using $C^\infty_{\mathscr{C}}(X,F)=C^\infty_{co}(X,F)$, for $X\in \mathscr{C}$.
(2) implies (4) is obvious since the map $(f\varepsilon Id)$ gives the same value when applied in  $C^\infty_{co}(X)\varepsilon F$. Conversely, looking at $u\in C^\infty_{co}(X,F)\subset C^\infty_{co}(X,\tilde{F})=L\big((C^\infty_{co}(X))'_c,\tilde{F}\big)$, (4) says that the image of the linear map $u$ is valued in $F$ instead of $\tilde{F}$, so that since continuity is induced, one gets $u\in L\big((C^\infty_{co}(X))'_c,F\big)$ which gives the missing surjectivity hence (2) (using some compatibility of $J_X$ for a space and its completion).

Let us assume (4) and prove (1). We use  a characterization of Mackey-completeness in \cite[Thm 2.14 (2)]{KrieglMichor}, we check that any smooth curve has an anti-derivative. As in their proof of (1) implies (2) we only need to check any smooth curve has a weak integral in $E$ (instead of the completion, in which it always exists uniquely by their lemma 2.5). But take $Leb_{[0,x]}\in (C^\infty_{co}(\R))'$, for a curve $c\in  C^\infty(\R,\tilde{F})$ it is easy to see that $(Leb_{[0,x]}\epsilon Id)(c)=\int_0^xc(s)ds$ is this integral (by commutation of both operations with application of elements of $F'$). Hence (4) gives exactly that this integral is in $F$ instead of its completion, as we wanted.

Let us show that (1) implies (2) first in the case $\mathscr{C}=Fin$  and take $X=\R^n$. One uses \cite[Thm 5.1.7]{FrolicherKriegel} which shows that $S=\textrm{Span}(ev_{\R^n}(\R^n))$ is Mackey-dense in $C^\infty(\R^n)'_c.$ But for any map $c\in C^\infty(\R^n,F)$, there is a unique possible value of $f\in L(C^\infty(\R^n)'_c,F)$ such that $J_X(f)=c$ once restricted to $\textrm{Span}(ev_{\R^n}(\R^n))$. Moreover $f\in L(C^\infty(\R^n)'_c,\tilde{F})$ exists and Mackey-continuity implies that the value on the Mackey-closure of $S$ lies in the Mackey closure of $F$ in the completion, which is $F$. This gives surjectivity of $J_X$.

In the case $X\in \mathscr{C}\subset \FDFS$, it suffices to show that $S=\textrm{Span}(\cup_{k\in\N}ev_{X}^{(k)}(X^{k+1}))$ is Mackey dense in $C^\infty(X)'_c.$ Indeed, one can then reason similarly since for $c\in C^\infty_{co}(X,F)$ and $f\in L(C^\infty_{co}(X)'_c,\tilde{F})$ with  $J_X(f)=c$ satisfies $f\circ ev_{X}^{(k)}=c^{(k)}$ which takes value in $F$ by convenient smoothness and Mackey-completeness, hence also Mackey limits so that $f$ will be valued in $F$. Let us prove the claimed density. First recall that $C^\infty_{co}(X)$ is a projective kernel of spaces $C^0(K,(X'_c)^{\epsilon k})$ via maps induced by differentials and this space is itself a projective kernel of $C^0(K\times L^k)$ for absolutely convex compact sets $K,L\subset X$. Hence by \cite[\S 22.6.(3)]{Kothe}, $(C^\infty_{co}(X))'$ is a locally convex hull (at least a quotient of a sum) of the space of signed measures $(C^0(K\times L^k))'.$ As recalled in the proof of \cite[Corol 13 p 279]{Meise}, every compact set $K$ in  $X\in \FDFS$ is a compact subset of a Banach space, hence metrizable. Hence the space of measure signed measures $(C^0(K\times L^k))'$ is metrizable too for the  weak-* topology (see e.g. \cite{DellacherieMeyer}), and by Krein-Millman's Theorem    \cite[\S 25.1.(3)]{Kothe} every point in the (compact) unit ball is  a weak-* limit of an absolutely convex combination of extreme points, namely Dirac masses \cite[\S 25.2.(2)]{Kothe}, and by metrizability one can take a sequence of such combinations, which is bounded in  $(C^0(K\times L^k))'$. Hence its image in $E=(C^\infty_{co}(X))'$ is bounded in some Banach subspace, with equicontinuous ball $B$ (by image of an equicontinuous sets, a ball in a Banach space by the transpose of a continuous map)  and converges weakly. But from \cite[Prop 11 p 276]{Meise}, $C^\infty_{co}(X)$ is a Schwartz space, hence there is an other equicontinuous set $C$ such that $B$ is compact in $E_C $ hence the weakly convergent sequence admitting only at most one limit point must converge normwise in $E_C$. Finally, we have obtained Mackey convergence of this sequence in $E=(C^\infty_{co}(X))'$ and looking at its form, this gives exactly Mackey-density of $S$.
\end{proof}

\subsection{Induced topologies on linear maps}

In the setting of the previous subsection, $E'\subset C^\infty_{\mathscr{C}}(E,\R)$. From Mackey-completeness, this extends to an inclusion of the Mackey completion, on which one obtains an induced topology which coincides with the topology of uniform convergence on images by smooth curves with source $X\in\mathscr{C}$ of compacts in this space. Indeed, the differentials of the smooth curve is also smooth on a product and the condition on derivatives therefore reduces to this one. This can be described functorially in the spirit of $\mathscr{S}$.

We first consider $\mathscr{C}\subset\kref$ a full Cartesian subcategory.

Let $\mathscr{C}^\infty$ be the smallest class of locally convex spaces  containing  $C^\infty_{co}(X,\K)$ for $X\in \mathscr{C}$ ($X=\{0\}$ included) and stable by products and subspaces. Let $\mathscr{S}_{\mathscr{C}}$ the functor on $\LCS$ of associated topology in this class described by \cite[2.6.4]{Junek}. This functor commutes with products. 

{\begin{ex}
If $\mathscr{C}=\{0 \}$ then $\mathscr{C}^\infty=\mathbf{Weak}$ the category of spaces with their weak topology, since  $\K$ is a universal generator for  spaces with their weak topology.
Thus  the weak topology functor is $\mathscr{S}_{\{0\}}(E).$
\end{ex}}

\begin{ex}
If $\mathscr{C}^\infty\subset \mathscr{D}^\infty$ (e.g. if $\mathscr{C}\subset \mathscr{D}$) then, from the very definition, there is a natural transformation $id\to \mathscr{S}_{\mathscr{D}}\to \mathscr{S}_{\mathscr{C}}$ with each map $E\to \mathscr{S}_{\mathscr{D}}(E)\to \mathscr{S}_{\mathscr{C}}(E)$ is a continuous identity map.
\end{ex}
\begin{lemma}\label{SCdual}
For any \lcs E, $(\mathscr{S}_\mathscr{C}(E))' = E'$ algebraically.
\end{lemma}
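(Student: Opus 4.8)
The plan is to prove the two inclusions $(\mathscr{S}_\mathscr{C}(E))' \subseteq E'$ and $E' \subseteq (\mathscr{S}_\mathscr{C}(E))'$ separately, the only genuine content being that the scalar field $\K$ itself belongs to the class $\mathscr{C}^\infty$.

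First, $(\mathscr{S}_\mathscr{C}(E))' \subseteq E'$ is immediate: as recorded in the Example just above the statement, the canonical map $E \to \mathscr{S}_\mathscr{C}(E)$ is a continuous identity, so $\mathscr{S}_\mathscr{C}(E)$ carries a topology coarser than that of $E$. Hence every linear form continuous for $\mathscr{S}_\mathscr{C}(E)$ is a fortiori continuous for the original topology, i.e. lies in $E'$.

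For the reverse inclusion I would unwind the construction of the associated-topology functor from \cite[2.6.4]{Junek}: $\mathscr{S}_\mathscr{C}(E)$ is $E$ equipped with the projective (initial) topology with respect to the family of all continuous linear maps $u : E \to A$ with $A \in \mathscr{C}^\infty$; equivalently it is the topology induced by the diagonal embedding of $E$ into a product of members of $\mathscr{C}^\infty$ (this is precisely what makes $\mathscr{S}_\mathscr{C}(E) \in \mathscr{C}^\infty$, using stability by products and subspaces, and what forces $\mathscr{S}_\mathscr{C}$ to commute with products). The decisive observation is that $\K \in \mathscr{C}^\infty$: since $\mathscr{C}$ is Cartesian it contains the terminal object $\{0\}$ (the empty product), and $C^\infty_{co}(\{0\},\K) = \K$ is one of the generators of $\mathscr{C}^\infty$ (the case $X=\{0\}$ explicitly included in its definition). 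Consequently every $f \in E'$ is itself a continuous linear map $E \to \K$ with target in $\mathscr{C}^\infty$, hence is one of the maps defining the initial topology; it is therefore continuous for $\mathscr{S}_\mathscr{C}(E)$, so $f \in (\mathscr{S}_\mathscr{C}(E))'$. Combining the two inclusions yields the stated algebraic equality.

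I do not expect a serious obstacle here: once $\K \in \mathscr{C}^\infty$ is in hand, both containments are formal. The one point to verify carefully against Junek's precise wording is that the continuous functionals on an associated/initial topology are exactly the finite sums $\sum_i \phi_i \circ u_i$ with $u_i : E \to A_i$, $A_i \in \mathscr{C}^\infty$ and $\phi_i \in A_i'$ (which re-derives both inclusions simultaneously, since each $\phi_i \circ u_i \in E'$ and conversely $f = \mathrm{id}_\K \circ f$). As a sanity check, in the extreme case $\mathscr{C} = \{0\}$ one has $\mathscr{C}^\infty = \mathbf{Weak}$ and $\mathscr{S}_{\{0\}}(E) = (E,\sigma(E,E'))$, whose dual is indeed $E'$, consistent with the general argument.
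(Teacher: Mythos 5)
Your proof is correct, and it reaches the conclusion by a slightly more elementary route than the paper. The paper's argument is a one-liner: since $\{0\}\in\mathscr{C}$, there are continuous identity maps $E\to \mathscr{S}_\mathscr{C}(E)\to \mathscr{S}_{\{0\}}(E)=(E'_\sigma)'_\sigma$, so the topology of $\mathscr{S}_\mathscr{C}(E)$ is squeezed between $\sigma(E,E')$ and the original (hence compatible) topology, and the Mackey--Arens theorem gives equality of duals. Your version splits this into two trivial containments and dispenses with Mackey--Arens altogether: the inclusion $(\mathscr{S}_\mathscr{C}(E))'\subseteq E'$ is just the statement that a coarser topology has fewer continuous functionals, and the reverse inclusion follows from $\K=C^\infty_{co}(\{0\},\K)\in\mathscr{C}^\infty$ together with the description of $\mathscr{S}_\mathscr{C}(E)$ as the initial topology with respect to all continuous linear maps into members of $\mathscr{C}^\infty$. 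Note that your second step is really the same observation as the paper's continuity of $\mathscr{S}_\mathscr{C}(E)\to(E'_\sigma)'_\sigma$ in different clothing (all of $E'$ being continuous on $\mathscr{S}_\mathscr{C}(E)$ is exactly the statement that its topology is finer than the weak one); what you gain is that the first step becomes tautological rather than an appeal to Mackey--Arens, at the price of unwinding Junek's construction of the associated-topology functor. Your final remark about duals of initial topologies being finite sums $\sum_i\phi_i\circ u_i$ is correct but redundant given the two direct containments, and the sanity check with $\mathscr{C}=\{0\}$ matches the Example in the paper.
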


\begin{proof}
Since $\{0\}\subset \mathscr{C}$, there is a continuous identity map $E\to \mathscr{S}_\mathscr{C}(E)\to \mathscr{S}_{\{0\}}(E)=(E'_\sigma)'_\sigma.$ The Mackey-Arens theorem concludes.  
\end{proof}

As a consequence, $E$ and  $\mathscr{S}_\mathscr{C}(E)$ have the same bounded sets and therefore are simultaneously Mackey-complete.
Hence $\mathscr{S}_\mathscr{C}$ commutes with Mackey-completion. Moreover, the class $\mathscr{C}^\infty$ is also stable by $\varepsilon$-product, since this product commutes with projective kernels and $C^\infty_{co}(X,\K)\varepsilon C^\infty_{co}(Y,\K)=C^\infty_{co}(X\times Y,\K)$ and we assumed $X\times Y\in \mathscr{C}.$

We now consider the setting of the previous subsection, namely we also assume $\R\in \mathscr{C}$, $\mathscr{C}$ small and identify the induced topology  $E'_{\mathscr{C}}\subset C^\infty_{\mathscr{C}}(E,\R)$.

\begin{lemma}\label{InducedTopoGene}
For any \lcs E, there is a continuous identity map: $E'_{\mathscr{C}}\to \mathscr{S}_\mathscr{C}(E'_c)$.

If moreover $E$ is $\mathscr{C}$-complete, this is a topological isomorphism.
\end{lemma}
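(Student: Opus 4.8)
The plan is to compare the two topologies on $E'$ through their defining families of seminorms, using the $\varepsilon$-product description of continuous linear maps into $C^\infty_{co}(X)$ and the canonical embedding $J_X$ of Proposition \ref{Meise3} as a dictionary between such maps and $E$-valued smooth curves.

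First I would reduce $\mathscr{S}_\mathscr{C}(E'_c)$ to a manageable form. Since the class $\mathscr{C}^\infty$ is generated under products and subspaces by the spaces $C^\infty_{co}(X)$, $X\in\mathscr{C}$, the construction in \cite[2.6.4]{Junek} presents $\mathscr{S}_\mathscr{C}(E'_c)$ as the initial (projective) topology on $E'$ induced by the family of all continuous linear maps $T\colon E'_c\to C^\infty_{co}(X)$, $X\in\mathscr{C}$ (a map into a subspace of a product reduces to its component maps into the generators). On the other side, the topology of $E'_{\mathscr{C}}$ was identified in the paragraph preceding the lemma as uniform convergence, together with all derivatives, on the sets $d^jc(K\times Q)$ with $c\in C^\infty_{co}(X,E)$ an $E$-valued smooth curve and $K,Q$ compact in $X$; equivalently, uniform convergence on images of compacts by smooth $E$-valued curves, since a derivative of a curve is again a curve on a product domain.

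The bridge between the two is the observation that a continuous linear $T\colon E'_c\to C^\infty_{co}(X)$ is, by the very definition $E\varepsilon F=\mathcal{L}_\epsilon(E'_c,F)$, nothing but an element of $E\varepsilon C^\infty_{co}(X)=C^\infty_{co}(X)\varepsilon E$; composing with the canonical embedding $J_X\colon C^\infty_{co}(X)\varepsilon E\to C^\infty_{co}(X,E)$ of Proposition \ref{Meise3} produces a genuine smooth curve $c=J_X(T)\in C^\infty_{co}(X,E)$ with $T(f)=f\circ c$ for $f\in E'$. Hence a basic $\mathscr{S}_\mathscr{C}(E'_c)$-neighbourhood $T^{-1}(W)$, where $W$ controls derivatives up to order $m$ on a compact $K$, is cut out by the seminorms $f\mapsto\sup_{x\in K,\,h\in Q}|f(d^jc(x)(h))|$, which are exactly $E'_{\mathscr{C}}$-seminorms. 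Thus each defining map $T$ of $\mathscr{S}_\mathscr{C}(E'_c)$ is continuous on $E'_{\mathscr{C}}$, so $E'_{\mathscr{C}}$ is finer than the initial topology $\mathscr{S}_\mathscr{C}(E'_c)$, which gives the continuous identity map of the first assertion for arbitrary $E$. For the converse under $\mathscr{C}$-completeness I would run the same correspondence backwards: by condition (2) of Lemma \ref{Ccomplete}, $J_X$ is now a topological isomorphism for every $X\in\mathscr{C}$, so every smooth curve $c\in C^\infty_{co}(X,E)$ arises from a continuous map $L_c:=J_X^{-1}(c)\in\mathcal{L}_\epsilon(E'_c,C^\infty_{co}(X))$, again with $L_c(f)=f\circ c$. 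Consequently every generating $E'_{\mathscr{C}}$-seminorm $f\mapsto\sup_{x\in K,\,h\in Q}|f(d^jc(x)(h))|=\sup_{x\in K}|d^j(L_c f)(x)(h)|$ is continuous for $\mathscr{S}_\mathscr{C}(E'_c)$, since $L_c$ is one of its defining maps; this yields the reverse continuous identity, hence the topological isomorphism.

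The main obstacle is the careful bookkeeping around the symmetry $E\varepsilon C^\infty_{co}(X)\simeq C^\infty_{co}(X)\varepsilon E=\mathcal{L}_\epsilon(E'_c,C^\infty_{co}(X))$ together with $J_X$: one must verify that the abstract element $T$ of the $\varepsilon$-product is sent by $J_X$ precisely to the curve $c$ satisfying $T(f)=f\circ c$, so that the two seminorm families really do coincide term by term, and that $\mathscr{C}$-completeness is exactly the surjectivity of $J_X$ needed to upgrade every $E$-valued curve into an honest continuous map $E'_c\to C^\infty_{co}(X)$. Once this dictionary is established, both inclusions of seminorm families are formal, and the only substantive inputs are Proposition \ref{Meise3} and Lemma \ref{Ccomplete}.
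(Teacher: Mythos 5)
Your argument is correct and follows essentially the same route as the paper: the direct map comes from testing each defining map $T\in L(E'_c,C^\infty_{co}(X))$ of the projective kernel $\mathscr{S}_\mathscr{C}(E'_c)$ against the curve $J_X(T)$ and observing that $T(f)=f\circ J_X(T)$ on $E'$, while the converse uses exactly the surjectivity of $J_X$ from Lemma \ref{Ccomplete}(2) to turn every curve $c$ into a continuous map $\cdot\circ c=J_X^{-1}(c):E'_c\to C^\infty_{co}(X)$. The only cosmetic difference is that you phrase the converse directly in terms of seminorms pulled back through $L_c$, whereas the paper first notes $E'_{\mathscr{C}}\in\mathscr{C}^\infty$ and invokes the universal property of $\mathscr{S}_\mathscr{C}$; these are the same argument.
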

\begin{proof}
For the direct map we use the universal property of projective kernels. Consider a continuous linear map $f\in L(E'_c, C^\infty_{co}(X,\K))=C^\infty_{co}(X,\K)\varepsilon E$ and the corresponding $J_X(f)\in C^\infty_{co}(X,E)$, then by definition of the topology $ \cdot \circ J_X(f):C^\infty_{\mathscr{C}}(X,E)\to C^\infty_{co}(X,\R)$ is continuous and by definition, its restriction to $E'$ agrees with $f$, hence $f:E'_{\mathscr{C}}\to C^\infty_{co}(X,\K)$ is also continuous. Taking a projective kernel over all those maps
gives the expected continuity.

Conversely, if  $E$ is $\mathscr{C}$-complete, note that $E'_c\to E'_{\mathscr{C}}$ is continuous using again the universal property of a kernel, it suffices to see that for any $X\in\mathscr{C}, c\in C^\infty_{\mathscr{C}}(X,E)$ then $\cdot\circ c:E'_c\to C^\infty_{\mathscr{C}}(X,K)$ is continuous, and this is the content of the surjectivity of $J_X$ in lemma \ref{Ccomplete} (2) since $\cdot\circ c=J_X^{-1}(c)$. Hence since $E'_{\mathscr{C}}\in\mathscr{C}^\infty$ by definition as projective limit, one gets by functoriality the continuity of $\mathscr{S}_\mathscr{C}(E'_c)\to E'_{\mathscr{C}}.$
\end{proof}

We are going to give more examples in a more restricted context.
{We now fix $Fin\subset\mathscr{C}\subset \FDFS.$ But the reader may assume $\mathscr{C}\subset Ban$ if he or she wants, our case is not such more general. Note that then $C^\infty_{\mathscr{C}}(E,F)=C^\infty(E,F)$ algebraically. For it suffices to see $C^\infty_{co}(X,F)=C^\infty(X,F)$ for any $X\in\FDFS$ (since then the extra smoothness condition will be implied by convenient smoothness). Note that any such $X$ is ultrabornological (using \cite[Corol 13.2.4]{Jarchow},  \cite[Corol 13.4.4,5]{Jarchow} since a DFS space is reflexive hence its strong dual is barrelled \cite[Prop 11.4.1]{Jarchow} and for a dual of a Fr\'echet space, the quoted result implies it is also ultrabornological, for products this is \cite[Thm 13.5.3]{Jarchow}). By Cartesian closedness of both sides this reduces to two cases. For any Fr\'echet space $X$, Fr\'echet smooth maps  are included in $C^\infty_{co}(X,F)$ which is included in $C^\infty(X,F)$ which coincides with the first space of Fr\'echet smooth maps  by \cite[Th 4.11.(1)]{KrieglMichor} (which ensures the continuity of Gateaux derivatives with value in bounded linear maps with strong topology for derivatives, those maps being the same as continuous linear maps as seen the bornological property). The  case of strong duals of Fr\'echet-Schwartz spaces is similar using \cite[Th 4.11.(2)]{KrieglMichor}.
 The index $\mathscr{C}$ in $C^\infty_{\mathscr{C}}(E,F)$ remains to point out the different topologies.

\begin{ex}
If $\mathscr{C}=\FDFS$ (say with objects of density character smaller than some inaccessible cardinal)
 then $\mathscr{C}^\infty\subset\mathbf{Sch}$,  from \cite[Corol 13 p 279]{Meise}. Let us see equality. Indeed, $(\ell^1(\N))'_c\subset C^\infty_{co}(\ell^1(\N),\K)$ and 
 $(\ell^1(\N))'_c=(\ell^1(\N))'_\mu$ (since on $\ell^1(\N)$ compact and weakly compact sets coincide \cite[p 37]{HogbeNlendMoscatelli}), and $(\ell^1(\N))'_\mu$ is a universal generator of Schwartz spaces \cite[Corol p 36]{HogbeNlendMoscatelli}, therefore $C^\infty_{co}(\ell^1(\N),\K)$ is also such a universal generator. Hence we even have $\mathscr{C}^\infty=Ban^\infty=\Sch$. Let us deduce even more of such type of equalities.
 
Note also that $Sym(E'_c\varepsilon  E'_c)\subset C^\infty_{co}(E,\K)
$ is a complemented subspace given by quadratic forms. In case $E=H$ is an infinite dimensional Hilbert space, by Buchwalter's theorem $H'_c\varepsilon  H'_c=(H\widehat {\o}_\pi H)'_c$ and it is well-known that $\ell^1(\N)\simeq D$ is a complemented subspace (therefore a quotient) of $H\widehat {\o}_\pi H$
as diagonal copy (see e.g. \cite[ex 2.10]{Ryan}) with the projection a symmetric map. Thus $D'_c\subset H'_c\varepsilon  H'_c$ and it is easy to see it is included in the symmetric part $Sym(E'_c\varepsilon  E'_c)$. As a consequence, $C^\infty_{co}(H,\K)$ is also such a universal generator of Schwartz spaces. 

Finally, consider $E=\ell^{m}(\N,\C)$ $m\in \N, m\geq 1$. 
The canonical multiplication map from Holder $\ell^{m}(\N,\C)^{\o_\pi m}\to \ell^{1}(\N,\C)$ is a metric surjection realizing the target as a quotient of the symmetric subspace generated by tensor powers (indeed $\sum a_k e_k$ is the image of $(\sum a_k^{1/m} e_k)^{\o m}$ so that 
 $(\ell^{1}(\N,\C))'_c\subset Sym([(\ell^{m}(\N,\C))'_c]^{\varepsilon m})$. Thus $C^\infty_{co}(\ell^{m}(\N,\C),\K)$ is also such a universal generator of Schwartz spaces.

 We actually checked that for any $\mathscr{C}\subset \FDFS$ with $\ell^1(\N)\in \mathscr{C}$ or $\ell^2(\N)\in \mathscr{C}$ or $\ell^{m}(\N,\C)\in \mathscr{C}$ then $\mathscr{C}^\infty=\Sch$   so that $$\mathscr{S}=\mathscr{S}_{Ban}=\mathscr{S}_{Hilb}=\mathscr{S}_{\mathscr{C}}=\mathscr{S}_{\FDFS}.$$
\end{ex}
As a consequence, we can improve slightly our previous results in this context:
\begin{lemma}\label{InducedTop} Let $\mathscr{C}\subset \FDFS$ as above.
For any \lcs E, there is a continuous identity map: $E'_{\mathscr{C}}\to \mathscr{S}_\mathscr{C}(E'_\mu)\to \mathscr{S}_\mathscr{C}(E'_c)$.
If moreover $E$ is Mackey-complete, this is a topological isomorphism. 
\end{lemma}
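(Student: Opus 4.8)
The plan is to derive the statement from Lemma \ref{InducedTopoGene} by proving that $\mathscr{S}_\mathscr{C}$ erases the difference between the Mackey and the Arens dual, i.e. that $\mathscr{S}_\mathscr{C}(E'_\mu)=\mathscr{S}_\mathscr{C}(E'_c)$ for every lcs $E$. The feature that makes this work is that, because $\mathscr{C}\subset\FDFS$, every generator $C^\infty_{co}(X,\K)$ of the class $\mathscr{C}^\infty$ is a Schwartz space: for $X$ an (F)- or (DFS)-space this is Meise's Corollary 13 \cite{Meise}, and for a finite product it follows from $C^\infty_{co}(X_1\times X_2,\K)\simeq C^\infty_{co}(X_1,\K)\varepsilon C^\infty_{co}(X_2,\K)$ and Proposition \ref{Sepsilon}.

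First I would record the key identification of mapping spaces. Fix $X\in\mathscr{C}$ and set $G=C^\infty_{co}(X,\K)$. Since $G$ is Schwartz, Proposition \ref{etageneral}(1) makes the canonical map $E\varepsilon G\to E\eta G$ a topological isomorphism; as this map is the identity on the underlying space of maps, it means that $L(E'_c,G)$ and $L(E'_\mu,G)$ consist of exactly the same continuous linear maps $E'\to G$, carrying the same topology. Equivalently, a linear map into a generator of $\mathscr{C}^\infty$ is continuous on $E'_c$ if and only if it is continuous on $E'_\mu$. Now $\mathscr{S}_\mathscr{C}(F)$ is, by the construction in \cite[2.6.4]{Junek}, the initial topology on $F$ with respect to all continuous linear maps of $F$ into members of $\mathscr{C}^\infty$, and since $\mathscr{C}^\infty$ is stable by products and subspaces it suffices to test against the generators $C^\infty_{co}(X,\K)$. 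Hence the defining families of maps for $\mathscr{S}_\mathscr{C}(E'_\mu)$ and $\mathscr{S}_\mathscr{C}(E'_c)$ coincide, giving $\mathscr{S}_\mathscr{C}(E'_\mu)=\mathscr{S}_\mathscr{C}(E'_c)$; in particular the second arrow of the statement is a topological isomorphism (its mere continuity also follows from $\mu(E',E)\geq\gamma(E',E)$ and functoriality of $\mathscr{S}_\mathscr{C}$).

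It then remains to identify the first arrow and handle completeness. The continuous identity $E'_{\mathscr{C}}\to\mathscr{S}_\mathscr{C}(E'_c)$ is exactly Lemma \ref{InducedTopoGene}, and composing with $\mathscr{S}_\mathscr{C}(E'_c)=\mathscr{S}_\mathscr{C}(E'_\mu)$ produces the first arrow (equivalently, one reruns the proof of Lemma \ref{InducedTopoGene} verbatim, reading each $f\in L(E'_c,C^\infty_{co}(X,\K))$ as an element of $L(E'_\mu,C^\infty_{co}(X,\K))$ through the identification above). Finally, when $E$ is Mackey-complete, Lemma \ref{Ccomplete} applies because $\mathscr{C}\subset\FDFS$ and shows $E$ is $\mathscr{C}$-complete, so Lemma \ref{InducedTopoGene} upgrades $E'_{\mathscr{C}}\to\mathscr{S}_\mathscr{C}(E'_c)$ to a topological isomorphism; together with $\mathscr{S}_\mathscr{C}(E'_\mu)=\mathscr{S}_\mathscr{C}(E'_c)$ this makes both arrows isomorphisms. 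The one point requiring care, and the only genuine obstacle, is the justification that $\mathscr{S}_\mathscr{C}$ is computed as an initial topology against the generators and that the $\varepsilon$--$\eta$ isomorphism is the identity on underlying map-sets, so that equality of the two associated topologies really follows; all the remaining steps are direct invocations of the quoted results.
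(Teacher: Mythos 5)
Your proof is correct and follows essentially the same route as the paper's: both rest on the fact that the generators $C^\infty_{co}(X,\K)$, $X\in\FDFS$, are Schwartz spaces, so that $L(E'_\mu,C^\infty_{co}(X,\K))=L(E'_c,C^\infty_{co}(X,\K))$ via the $\varepsilon$--$\eta$ identification of Proposition \ref{etageneral}, combined with Lemma \ref{InducedTopoGene} and Lemma \ref{Ccomplete} exactly as you invoke them. The only (harmless and correct) difference is that you make explicit the equality $\mathscr{S}_{\mathscr{C}}(E'_\mu)=\mathscr{S}_{\mathscr{C}}(E'_c)$ for arbitrary $E$, where the paper contents itself with the one-directional continuous identity map obtained by functoriality and closes the loop through $E'_{\mathscr{C}}$ only in the Mackey-complete case.
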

\begin{proof}
Indeed by definition $\mathscr{S}_{\mathscr{C}}(E'_\mu)$ is described by a projective limit over maps $L(E'_\mu, C^\infty_{co}(X,\K))=E\eta C^\infty_{co}(X,\K)=C^\infty_{co}(X,\K)\varepsilon E\subset C^\infty_{co}(X,E)$ by the Schwartz property. 
As in lemma \ref{InducedTopoGene}, the identity map $E'_{\mathscr{C}}\to \mathscr{S}_{\mathscr{C}}(E'_\mu)$.
But by functoriality one has also a continuous identity map $\mathscr{S}_{\mathscr{C}}(E'_\mu)\to \mathscr{S}_{\mathscr{C}}(E'_c)$ and in the Mackey-complete case $\mathscr{S}_{\mathscr{C}}(E'_c)\to E'_{\mathscr{C}}$ by lemma \ref{InducedTopoGene}. (This uses that Mackey-complete implies $\mathscr{C}$-complete in our case by the last statement in lemma \ref{Ccomplete}).
\end{proof}

\begin{ex}Note also that if $D$ is a quotient with quotient topology of a Fr\'echet space $C$ with respect to a closed subspace, then  $C^\infty_{co}(D,\K)$ is a subspace of $C^\infty_{co}(C,\K)$ with induced topology. Indeed, the injection is obvious and derivatives agree, and since from \cite[\S 22.3.(7)]{Kothe}, compacts are quotients of compacts, the topology is indeed induced.
Therefore if   $\mathscr{D}$ is obtained from $\mathscr{C}\subset Fre$, the category of Fr\'echet spaces, by taking all quotients by closed subspaces, then $\mathscr{C}^\infty=\mathscr{D}^\infty.$
\end{ex}

\begin{ex}
If $\mathscr{C}=Fin$ then $Fin^\infty=\mathbf{Nuc}$, since  $C^\infty_{co}(\R^n,\K)\simeq \mathfrak{s}^{\N}$ \cite[(7) p 383]{Valdivia}, a countable direct product of classical sequence space $\mathfrak{s}$, which is a universal generator for  nuclear spaces.
Thus,  the associated nuclear topology functor is $\mathscr{N}(E)=\mathscr{S}_{Fin}(E).$
\end{ex}

We now provide several more advanced examples which will enable us to prove that we obtain different comonads in several of our models of $LL$. They are all based on the important approximation property of Grothendieck.

\begin{ex}\label{Ex:notAP}
If $E$ a Fr\'echet space without the approximation property (in short AP, for instance $E=B(H)$ the space of bounded operators on a Hilbert space), then from \cite[Thm 7 p 293]{Meise}, $C^\infty_{co}(E)$ does not have the approximation property. Actually, $E'_c\subset C^\infty_{co}(E)$ is a continuously complemented subspace so that so is $((E'_c)^*_\rho)^*_\rho\subset ((C^\infty_{co}(E))^*_\rho)^*_\rho$. But for any Banach space $E'_c=\mathscr{S}(E'_\mu)$ is Mackey-complete so that $(E'_c)^*_\rho=\mathscr{S}(E)$, $((E'_c)^*_\rho)^*_\rho=E^*_\rho=E'_c=\mathscr{S}(E'_\mu).$ Thus since for a Banach space $E$ has the approximation property if and only if $\mathscr{S}(E'_\mu)$ has it \cite[Thm 18.3.1]{Jarchow}, one deduces that $((C^\infty_{co}(E))^*_\rho)^*_\rho$ does not have the approximation property \cite[Prop 18.2.3]{Jarchow}.
\end{ex}

\begin{remark}We will see in appendix in lemma \ref{WhyNotAP} that for any \lcs $E$, $((C^\infty_{Fin}(E))^*_\rho)^*_\rho$ is  Hilbertianizable, hence it has the approximation property.
This implies that $\mathscr{N}(E'_\mu)\subset C^\infty_{Fin}(E)$ with induced topology is not complemented, as soon as  $E$ is Banach space without AP, since otherwise $((\mathscr{N}(E'_\mu))^*_\rho)^*_\rho\subset ((C^\infty_{Fin}(E))^*_\rho)^*_\rho$ would be complemented and $((\mathscr{N}(E'_\mu))^*_\rho)^*_\rho=((E'_c)^*_\rho)^*_\rho=E'_c$ would have the approximation property, and this may not be the case. This points out that the change to a different class of smooth function in the next section is necessary to obtain certain models of DiLL. Otherwise, the differential that would give such a complementation cannot be continuous.
\end{remark}

We define $E^*_{\mathscr{C}}$ for $E\in\McS$ as the Mackey completion of $\mathscr{S}_{\mathscr{C}}((\widehat{E}^M)'_\mu)$, i.e. since $\mathscr{S}_{\mathscr{C}}\mathscr{S}=\mathscr{S}_{\mathscr{C}}$:
$$E^*_{\mathscr{C}}=\mathscr{S}_{\mathscr{C}}(E^*_\rho).$$
\begin{proposition}\label{Cdialogue}Let $Fin\subset\mathscr{C}\subset\FDFS$ a small and full Cartesian subcategory. The full subcategory $\mathscr{C}-\Mc\subset \McS$ of objects satisfying $E=\mathscr{S}_\mathscr{C}(E)$ is reflective of reflector $ \mathscr{S}_\mathscr{C}$. $(\mathscr{C}-\Mc^{op},\varepsilon,\K,(\cdot)^*_{\mathscr{C}})$ is a Dialogue category.
\end{proposition}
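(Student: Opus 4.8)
The plan is to mirror the proof of Proposition \ref{SMepsilon}, replacing $\mathscr{S}$ by $\mathscr{S}_\mathscr{C}$ for the reflective part, and then to obtain the Dialogue structure by transporting the one on $\McS^{op}$ (Proposition \ref{rhoDialogue}) along the reflection, via Lemma \ref{lemma:transport_dialoque_categories}.

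First I would check that $\mathscr{S}_\mathscr{C}$ restricts to a functor $\McS\to \mathscr{C}-\Mc$. For $E\in\McS$ the space $\mathscr{S}_\mathscr{C}(E)$ lies in $\mathscr{C}^\infty$, hence is a Schwartz space (as $\mathscr{C}\subset\FDFS$ forces $\mathscr{C}^\infty\subset\Sch$) and satisfies $\mathscr{S}_\mathscr{C}(\mathscr{S}_\mathscr{C}(E))=\mathscr{S}_\mathscr{C}(E)$ by idempotency of an associated-topology functor; moreover it is Mackey-complete since, by the consequence of Lemma \ref{SCdual}, $E$ and $\mathscr{S}_\mathscr{C}(E)$ have the same bounded sets. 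Thus $\mathscr{S}_\mathscr{C}(E)\in\mathscr{C}-\Mc$. The adjunction $\mathscr{C}-\Mc(\mathscr{S}_\mathscr{C}(E),F)=\McS(E,F)$ for $F\in\mathscr{C}-\Mc$ then follows from the universal property of the associated-topology functor $\mathscr{S}_\mathscr{C}$: since $F=\mathscr{S}_\mathscr{C}(F)\in\mathscr{C}^\infty$, every continuous linear $f\colon E\to F$ factors uniquely through the continuous identity $\eta_E\colon E\to\mathscr{S}_\mathscr{C}(E)$. With unit $\eta_E$ and counit the identity, naturality is routine, giving the reflective subcategory with reflector $\mathscr{S}_\mathscr{C}$.

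Next I would equip $\mathscr{C}-\Mc$ with the monoidal structure inherited from $(\McS,\varepsilon,\K)$ of Proposition \ref{SMepsilon}. The unit $\K=C^\infty_{co}(\{0\},\K)$ lies in $\mathscr{C}^\infty$ and is Mackey-complete, so $\K\in\mathscr{C}-\Mc$; and for $E,F\in\mathscr{C}-\Mc$ one has $E\varepsilon F\in\McS$ by Proposition \ref{SMepsilon} while $E\varepsilon F\in\mathscr{C}^\infty$ by the stability of $\mathscr{C}^\infty$ under $\varepsilon$ recorded after Lemma \ref{SCdual} (this uses that $\mathscr{C}$ is Cartesian, so $X\times Y\in\mathscr{C}$). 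Hence $\mathscr{S}_\mathscr{C}(E\varepsilon F)=E\varepsilon F$ and $E\varepsilon F\in\mathscr{C}-\Mc$, so $\varepsilon$ restricts to a symmetric monoidal structure on $\mathscr{C}-\Mc$ and the inclusion $i\colon(\mathscr{C}-\Mc,\varepsilon,\K)\to(\McS,\varepsilon,\K)$ is strictly (in fact identically) monoidal.

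Finally I would apply Lemma \ref{lemma:transport_dialoque_categories} with $\mathcal{C}=\McS^{op}$, carrying the Dialogue structure $(\varepsilon,\K,(\cdot)^*_\rho)$ of Proposition \ref{rhoDialogue}, and $\mathcal{D}=\mathscr{C}-\Mc^{op}$. Taking $R=\mathscr{S}_\mathscr{C}^{op}\colon\McS^{op}\to\mathscr{C}-\Mc^{op}$ and $L=i^{op}\colon\mathscr{C}-\Mc^{op}\to\McS^{op}$, the adjunction $\mathscr{S}_\mathscr{C}\dashv i$ of the first paragraph passes to opposites as $L\dashv R$, and $L$ is strictly monoidal by the second paragraph. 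The transported negation is then $E\mapsto R(\neg L(E))=\mathscr{S}_\mathscr{C}\big((i(E))^*_\rho\big)=\mathscr{S}_\mathscr{C}(E^*_\rho)=E^*_{\mathscr{C}}$, exactly the functor in the statement. The main point to get right is the placement of the opposite categories: it is the \emph{inclusion} $i^{op}$, not the reflector, that must play the role of the strictly monoidal left adjoint $L$, and this is precisely what makes the transported tensorial negation come out as $\mathscr{S}_\mathscr{C}\circ(\cdot)^*_\rho$ rather than some reflection of $(\cdot)^*_\rho$.
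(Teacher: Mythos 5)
Your proposal is correct and follows essentially the same route as the paper: reflectivity via the continuous identity map $E\to\mathscr{S}_\mathscr{C}(E)$, stability of $\mathscr{C}-\Mc$ under $\varepsilon$ via the projective-kernel description and $C^\infty_{co}(X)\varepsilon C^\infty_{co}(Y)=C^\infty_{co}(X\times Y)\in\mathscr{C}^\infty$, and then transport of the Dialogue structure of Proposition \ref{rhoDialogue} along Lemma \ref{lemma:transport_dialoque_categories} using $\mathscr{S}_\mathscr{C}\circ(\cdot)^*_\rho=(\cdot)^*_\mathscr{C}$ and the strictly monoidal inclusion. Your explicit remark that the inclusion $i^{op}$ (not the reflector) must play the role of the strictly monoidal left adjoint $L$ is exactly the point the paper's terse proof relies on.
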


\begin{proof}
Since $E\to \mathscr{S}_\mathscr{C}(E)$ is the continuous identity map, the first statement about the reflector is obvious. $\mathscr{C}-\Mc$ is                                                                                                  
 stable by $\varepsilon$-product since                               
 $\mathscr{S}_\mathscr{C}(E)\varepsilon\mathscr{S}_\mathscr{C}(F)$ is a projective kernel of $C^\infty_{co}(X)\varepsilon C^\infty_{co}(Y)=C^\infty_{co}(X\times Y)\in \Cin.$
  We use Proposition \ref{rhoDialogue} to get $(\McS,\epsilon,\K,(\cdot)^*_{\rho})$.   One can apply Lemma \ref{lemma:transport_dialoque_categories}  since we have $\mathscr{S}_\mathscr{C}\circ (\cdot)^*_{\rho}=(\cdot)^*_{\mathscr{C}}$ and  $I:\mathscr{C}-\Mc\subset \McS$ satisfies $\mathscr{S}_\mathscr{C}(I(E\varepsilon F))=I(E\varepsilon F)=I(E)\varepsilon I( F).$ This concludes.
\end{proof}
}

\subsection{A general construction for LL models}
We used intensively Dialogue categories from \cite{MelliesTabareau,TabareauPhD} to obtain $*$-autonomous categories, but their notion of models of tensor logic is less fit for our purposes since the Cartesian category they use need not be Cartesian closed. For us  trying to check their conditions involving an adjunction at the level   of the Dialogue category would imply introducing a non-natural category of smooth maps while we have already a good Cartesian closed category. Therefore we propose a variant of their definition using relative adjunctions \cite{Ulmer}.

\begin{definition}\label{lambdaTensorDef}
A linear (resp. and commutative) categorical model of $\lambda$-tensor logic is 
a complete and cocomplete dialogue category $(\mathcal{C}^{op},\parr_\mathcal{C},I,\neg)$  with a (resp. commutative and idempotent) continuation monad $T=\neg\neg$, jointly with a Cartesian category $(\mathcal{M},\times,0)$, a symmetric strongly monoidal functor $NL:\mathcal{M}\to \mathcal{C}^{op}$ having a right $\neg$-relative adjoint $U$. The model is said to be a \textit{Seely} model if $U$ is bijective on objects. 
\end{definition}

This definition is convenient for its concision, but it does not emphasize that $\mathcal{M}$ must be Cartesian closed. Since our primitive objects are functional, we will prefer an  equivalent alternative based on the two relations we started to show in lemma \ref{thm:CartesianClosedBasic}, namely an enriched adjointness of Cartesian closedness and a compatibility with $\parr$.

\begin{definition}\label{LambdaModel}
A (resp. commutative) $\lambda$-categorical model of $\lambda$-tensor logic is 
a complete and cocomplete dialogue category $(\mathcal{C}^{op},\parr_\mathcal{C},1_\mathcal{C}=\K,\neg)$  with a (resp. commutative and idempotent) continuation monad $T=\neg\neg$, jointly with a Cartesian closed category $(\mathcal{M},\times,0,[\cdot,\cdot])$, and a %unital and symmetric 
functor $NL:\mathcal{M}\to \mathcal{C}^{op}$ having a right $\neg$-relative adjoint $U$, which is assumed  faithful, and compatibility natural isomorphisms in $\mathcal{M},\mathcal{C}$ respectively:  $$\Xi_{E,F}: U(NL(E)\parr_\mathcal{C} F)\to [E,U(F)], \Lambda^{-1}_{E,F,G}: NL(E)\parr_\mathcal{C}\Big( NL(F)\parr_\mathcal{C}G\Big)\to NL(E\times F)\parr_\mathcal{C}G$$
satisfying the following six commutative diagrams (where $Ass^\parr,\rho,\lambda,\sigma^\parr$ are associator, right and left unitors and braiding in $\mathcal{C}^{op}$ and $\Lambda^{\mathcal{M}},\sigma^\times,\ell,r$ are the curry map, braiding and unitors in the Cartesian closed category $\mathcal{M}$) expressing an intertwining between curry maps:
\[
\xymatrix@C=50pt{
U\Big(NL(E)\parr_\mathcal{C} \Big(NL(F)\parr_\mathcal{C} G\Big)\Big) \ar[r]^{\Xi_{E,NL(F)\parr_\mathcal{C} G}} &  [E,U(NL(F)\parr_\mathcal{C} G)]\ar[r]^{[id_{E},\Xi_{F,G}]} &[E,[F,U(G)]] \\
U\Big(NL(E\times F)\parr_\mathcal{C}G\Big)\ar[u]_{U(\Lambda_{E,F,G})} \ar[rr]^{\Xi_{E\times F, G}}&&[E\times F,U(G)]\ar[u]^{\Lambda^{\mathcal{M}}}
}
\]
{compatibility of $\Xi$ with the (relative) adjunctions  (written $\simeq$ and $\varphi$, the characteristic isomorphism of the dialogue category $\mathcal{C}^{op}$):
\[
\xymatrix@C=40pt{
 \mathcal{M}(0,U\Big(NL(E)\parr_\mathcal{C}F\Big)) \ar[r]^{\simeq} & \mathcal{C}(\neg(NL(0)),NL(E)\parr_\mathcal{C}F) \ar[r]^{\varphi_{NL(E),F,NL(0)}^{op}} & \mathcal{C}(\neg(F\parr_\mathcal{C}NL(0)),NL(E)) \\
M(U\Big(NL(E)\parr_\mathcal{C}F\Big))\ar[u]_{M(I_\mathcal{M})} \ar[rr]^{M(\Xi_{E,F})}&&M([E,U(F)])=\mathcal{M}(E,U(F))\ar[u]^{\simeq}
}
\]}
compatibility with associativity:
\[
\xymatrix@C=40pt{
  NL(E)\parr_\mathcal{C} \Big(NL(F)\parr_\mathcal{C} G\Big)\ar[d]|{Ass^{\parr}_{NL(E),NL(F),G}}\ar[r]^{\ \ \ \ \Lambda^{-1}_{E,F,G}}  &  NL(E\times F)\parr_\mathcal{C} G\ar[r]^{\rho_{NL(E\times F)}\parr_\mathcal{C} G\ \ \ \ \ \ \ }
&\Big(NL(E\times F)\parr_\mathcal{C} \K\Big)\parr_\mathcal{C} G\\
\Big(NL(E)\parr_\mathcal{C}NL(F)\Big)\parr_\mathcal{C} G\ar[rr]^{(NL(E)\parr\ \rho_{NL(F)})\parr\   G} & &\Big(NL(E)\parr_\mathcal{C}(NL(F)\parr_\mathcal{C}\K)\Big)\parr_\mathcal{C} G\ar[u]|{\Lambda^{-1}_{E,F,\K}\parr_\mathcal{C}G } }
\]
compatibility with symmetry,
\[
\xymatrix@C=40pt{
 NL(E)\parr_\mathcal{C} NL(F)\ar[d]|{\sigma^{\parr}_{NL(E),NL(F)}}\ar[rr]^{NL(E)\parr\ \rho_{NL(F)}\quad}  & &  NL(E)\parr_\mathcal{C} \Big(NL(F)\parr_\mathcal{C} \K\Big)\ar[r]^{\ \ \ \ \Lambda^{-1}_{E,F,\K}}  &  NL(E\times F)\parr_\mathcal{C} \K\ar[d]|{NL(\sigma_{E, F}^\times)\parr_\mathcal{C} \K}\\
NL(F)\parr_\mathcal{C} NL(E)\ar[rr]^{NL(F)\parr\ \rho_{NL(E)}\quad} &&NL(F)\parr_\mathcal{C} \Big(NL(E)\parr_\mathcal{C} \K\Big)\ar[r]^{\ \ \ \ \Lambda^{-1}_{F,E,\K}} & NL(F\times E)\parr_\mathcal{C} \K } 
\]
and compatibility with unitors for a given canonical isomorphism $\epsilon:\K\to NL(0_\mathcal{M})$:
\[
\xymatrix@C=40pt{
 NL(0_\mathcal{M})\parr_\mathcal{C} NL(F)\ar[rr]^{NL(0_\mathcal{M})\parr\ \rho_{NL(F)}\quad}  & &  NL(0_\mathcal{M})\parr_\mathcal{C} \Big(NL(F)\parr_\mathcal{C} \K\Big)\ar[r]^{\ \ \ \ \Lambda^{-1}_{0_\mathcal{M},F,\K}}  &  NL(0_\mathcal{M}\times F)\parr_\mathcal{C} \K\ar[d]|{NL(\ell_F)\parr_\mathcal{C} \K}\\
\K\parr_\mathcal{C} NL(F)\ar[u]|{\epsilon\parr_\mathcal{C} NL(F)}\ar[rr]^{\lambda_{NL(F)}^{-1}\quad} &&NL(F)\ar[r]^{\ \ \ \ \rho_{NL(F)}} & NL(F)\parr_\mathcal{C} \K } 
\]
\[
\xymatrix@C=40pt{
 NL(E)\parr_\mathcal{C} NL(0_\mathcal{M})\ar[rr]^{NL(E)\parr\ \rho_{NL(0_\mathcal{M})}\quad}  & &  NL(E)\parr_\mathcal{C} \Big(NL(0_\mathcal{M})\parr_\mathcal{C} \K\Big)\ar[r]^{\ \ \ \ \Lambda^{-1}_{E,0_\mathcal{M},\K}}  &  NL(E\times 0_\mathcal{M})\parr_\mathcal{C} \K\ar[d]|{NL(r_E)\parr_\mathcal{C} \K}\\
NL(E)\parr_\mathcal{C}\K \ar[u]|{NL(E)\parr_\mathcal{C} \epsilon}\ar[rrr]^{id\quad} && & NL(E)\parr_\mathcal{C} \K } 
\]
 The model is said to be a \textit{Seely} model if $U$ is bijective on objects. 
\end{definition}

In our examples, $U$ must be thought of as an underlying functor that forgets the linear structure of $\mathcal{C}$ and sees it as a special smooth structure in $\mathcal{M}$. Hence we could safely assume it faithful and bijective on objects. 

\begin{proposition}
A Seely $\lambda$-model of $\lambda$-tensor logic is a Seely  linear model of $\lambda$-tensor logic too 
%and vice versa. 
%{\color{red} need to check if Seely is really used here}
\end{proposition}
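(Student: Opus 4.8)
The strategy is to notice that a $\lambda$-categorical model already contains, verbatim, almost all the data demanded of a linear categorical model: the complete and cocomplete dialogue category $(\mathcal{C}^{op},\parr_\mathcal{C},\K,\neg)$, its (commutative and idempotent) continuation monad $T=\neg\neg$, the right $\neg$-relative adjoint $U$, and the Seely requirement that $U$ be bijective on objects are literally the same in Definitions \ref{lambdaTensorDef} and \ref{LambdaModel}; and since a Cartesian \emph{closed} category is in particular Cartesian, $(\mathcal{M},\times,0)$ supplies the Cartesian base. The only genuine task is therefore to upgrade the bare functor $NL$ into a \emph{symmetric strongly monoidal} functor, using the isomorphisms $\Lambda$ and $\epsilon$; the datum $\Xi$, the faithfulness of $U$ and the two diagrams involving $\Xi$ are auxiliary---they witness the compatibility with the closed structure of $\mathcal{M}$---and will be used to prove monoidality but are discarded from the output linear model.

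First I would produce the monoidal constraints. The unit constraint is the given canonical isomorphism $\epsilon:\K\to NL(0_\mathcal{M})$, and the product constraint $\mu_{E,F}:NL(E)\parr_\mathcal{C}NL(F)\to NL(E\times F)$ (a morphism of $\mathcal{C}^{op}$) is obtained by collapsing the tail of $\Lambda$ onto the unit:
\[
\mu_{E,F} := \rho_{NL(E\times F)}^{-1}\circ\Lambda^{-1}_{E,F,\K}\circ\big(NL(E)\parr_\mathcal{C}\rho_{NL(F)}\big).
\]
As $\Lambda^{-1}_{E,F,\K}$ is a natural isomorphism by hypothesis, $\rho$ is the invertible right unitor and $NL(E)\parr_\mathcal{C}(-)$ preserves isomorphisms, each $\mu_{E,F}$ is an isomorphism, natural in $E$ and $F$; hence $(\mu,\epsilon)$ are strong constraints. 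Reading the diagram expressing compatibility with associativity with this definition in hand, and cancelling the occurrences of $\rho$, yields the clean identity $\Lambda^{-1}_{E,F,G}=(\mu_{E,F}\parr_\mathcal{C}G)\circ Ass^\parr_{NL(E),NL(F),G}$, valid for every $G$; this exhibits $\Lambda$ as nothing but the whiskering of $\mu$ by $G$, and lets me re-express each remaining diagram purely in terms of $\mu$ and $\epsilon$.

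It then remains to verify the three coherence axioms of a symmetric monoidal functor. The left and right unit triangles are immediate re-readings of the two diagrams expressing compatibility with the unitors, and the symmetry axiom $NL(\sigma^\times_{E,F})\circ\mu_{E,F}=\mu_{F,E}\circ\sigma^\parr_{NL(E),NL(F)}$ is a direct transcription of the diagram expressing compatibility with the symmetry, in both cases after substituting the $G=\K$ form of $\mu$. The associativity hexagon is the one point requiring real work: as observed above the diagram on associativity only delivers the whiskering identity, not the hexagon itself, so I would instead transport the coherence of the currying of the Cartesian closed category $\mathcal{M}$. Concretely, the two ways of currying down a map $E\times F\times H\to U(G)$ agree by the closed coherence of $\mathcal{M}$ (associativity of $\Lambda^{\mathcal{M}}$); feeding this through the intertwining diagram relating $\Xi$, $U(\Lambda)$ and $\Lambda^{\mathcal{M}}$, and using that $\Xi$ is an isomorphism while $U$ is faithful, reflects it into the corresponding associativity of $\Lambda$ in $\mathcal{C}^{op}$; combined with the whiskering identity and Mac\,Lane coherence for $\parr_\mathcal{C}$ (the triangle and pentagon relating $\rho,\lambda,Ass^\parr$), this produces the monoidal hexagon for $\mu$.

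I expect this last step to be the main obstacle, precisely because the associativity of $\mu$ is not packaged as a standalone axiom but is distributed between the whiskering identity (the associativity diagram) and the closed-structure coherence of $\mathcal{M}$ (the $\Xi$ diagram); the faithfulness of $U$ is exactly the device that lets one descend an equality proved in $\mathcal{M}$ (after applying $U$) to the required equality of morphisms in $\mathcal{C}^{op}$, and the unitor bookkeeping must be handled with care. Once the hexagon is in place, $NL$ is symmetric strong monoidal, so every item of Definition \ref{lambdaTensorDef} is present with $U$ still bijective on objects, and the Seely $\lambda$-model is a Seely linear model.
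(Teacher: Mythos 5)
Your proposal is correct and follows essentially the same route as the paper: the same formula $\rho_{NL(E\times F)}^{-1}\circ\Lambda^{-1}_{E,F,\K}\circ(NL(E)\parr_\mathcal{C}\rho_{NL(F)})$ for the monoidal constraint (the paper labels it $\mu^{-1}$ rather than $\mu$, an immaterial convention), symmetry and unitality read off directly from the corresponding compatibility diagrams, and associativity obtained by combining the $\Lambda$-compatibility diagram with the transport of Cartesian-closed coherence through $\Xi$ and the faithfulness of $U$. Your explicit extraction of the whiskering identity $\Lambda^{-1}_{E,F,G}=(\mu_{E,F}\parr_\mathcal{C}G)\circ Ass^{\parr}$ is a useful spelling-out of what the paper leaves implicit, but it is the same argument.
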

%We leave the easy converse that we won't need to the reader.
\begin{proof}
Start with a $\lambda$-model. Let $$\mu_{E,F}^{-1}=\rho_{NL(E\times F)}^{-1}\circ\Lambda^{-1}_{E,F,\K}\circ (id_{NL(E)}\parr\rho_{NL(F)}):NL(E)\parr_\mathcal{C}  NL(F)\to NL(E\times F)$$ using the right unitor $\rho$ of $\mathcal{C}^{op}$, and composition in $\mathcal{C}$. The identity isomorphism $\epsilon$ is also assumed given. Since $\mu$ is an isomorphism it suffices to see it makes $NL$ a lax symmetric monoidal functor. The symmetry condition is exactly the diagram of compatibility with symmetry that we assumed and similarly for the unitality conditions. The first assumed diagram with $\Lambda$ used in conjection with $U$  faithful enables to transport any diagram valid in the Cartesian closed category to an enriched version, and the second diagram concerning compatibility with associativity is then the only missing part needed so that $\mu$ satisfies the relation with associators of $\parr,\times$.
\end{proof}

Those models enable to recover models of linear
logic. We get a linear-non-linear adjunction in the sense of \cite{Benton} (see also \cite[def 21 p 140]{Mellies}). 

\begin{theorem}\label{lambdaTensortoLL}
$(\mathcal{C}^{op},\parr_\mathcal{C},I,\neg, \mathcal{M},\times ,0, NL,U)$ a Seely linear model of $\lambda$-tensor logic. Let $\mathcal{D}\subset  \mathcal{C}$ the full subcategory of objects of the form $\neg C, C\in \mathcal{C}$. Then, $\mathcal{N}=U(\mathcal{D})$ is equivalent to $\mathcal{M}$. $\neg\circ NL:\mathcal{N}\to \mathcal{D}$ is left adjoint to $U:\mathcal{D}\to \mathcal{N}$ and forms a linear-non-linear adjunction. Finally $!=\neg\circ NL\circ U$ gives a comonad on $\mathcal{D}$ making it a $*$-autonomous complete and cocomplete Seely category with Kleisli category for $!$ isomorphic to $\mathcal{N}$. 
\end{theorem}

\begin{proof}
This is a variant of \cite[Thm 2.13]{TabareauPhD}.
We already saw in lemma \ref{DialogueToRef} that $\mathcal{D}$ is $*$-autonomous with the structure defined there. Composing the natural isomorphisms for 
$F\in\mathcal{D}, E\in \mathcal{M}$ $$\mathcal{M}(E,U(F))\simeq \mathcal{C}^{op}(NL(E),\neg F)\simeq\mathcal{D}(\neg(NL(E)), F),$$ one gets the stated adjunction. The equivalence is the inclusion with inverse $\neg\neg:\mathcal{M}\to\mathcal{N}$ which is based on the canonical map in $\mathcal{C}$, $\eta_E:\neg\neg E\to E$ which is mapped via $U$ to a corresponding natural transformation in $\mathcal{M}$. It is an isomorphism in $\mathcal{N}$ since any element is image of $U$ enabling to use the $\neg$-relative adjunction  for $E\in \mathcal{C}$: $$\mathcal{M}(U(E),U(\neg\neg E))\simeq\mathcal{C}^{op}(NL(U(E)), \neg \neg\neg E)\simeq \mathcal{C}^{op}(NL(U(E)), \neg E)\simeq\mathcal{M}(U(E),U( E)).$$
Hence the element corresponding to identity gives the inverse of $\eta_E$. Since $\mathcal{D}$ is coreflective in $\mathcal{C}$, the coreflector preserves limits enabling to compute them in $\mathcal{D}$, and by $*$-autonomy, it therefore has colimits (which must coincide with those in $\mathcal{C}$).
By \cite[Prop 25 p 149]{Mellies}, since $U:\mathcal{D}\to \mathcal{N}$ is still a bijection on objects, the fact that $\mathcal{D}$ is a Seely category follows and the computation of its Kleisli category too. The co-unit and co-multiplication of the co-monad $!$ come from the relative adjunction $U \dashv_{neg} NL$, and correspond respectively to  the identity on $E$  in $M$, and to the composition of the unit of the adjunction by $!$ on the left and $U$ on the right. 
\end{proof}

\begin{remark}\label{FaithfullnessLambdaTensor}
In the previous situation, we checked that $U(E)\simeq U(\neg\neg E)$ in $\mathcal{M}$ and we even obtained a natural isomorphism  $U\circ \neg\neg \simeq U$ and this has several consequences we will reuse. First $\neg$ is necessarily faithful on $\mathcal{C}$ since if $\neg(f)=\neg(g)$ then $U\circ \neg\neg(f)=U\circ \neg\neg(g)$  hence $ U(f)=U(g)$ and $U$ is assumed faithful hence $f=g$. Let us see that as a consequence, as for $\varepsilon$, $\parr_\mathcal{C}$ preserves monomorphisms. Indeed if $f:E\to F$ is a monomorphism, $\neg\neg(f\parr_\mathcal{C} id_G)$ is the application of the $\neg\neg(\cdot)\parr G$ for the $*$-autonomous continuation category, hence a right adjoint functor, hence $\neg\neg(f\parr_\mathcal{C} id_G)$ is a monomorphism since right adjoints preserve monomorphisms. % nlab article epimorphism Prop 4.5
 Since  $\neg\neg$ is faithful one deduces $f\parr_\mathcal{C} id_G$ is a monomorphism too.
\end{remark}

\subsection{A class of examples of LL models}
We now fix $Fin\subset\mathscr{C}\subset \FDFS.$ Recall that then $C^\infty_{\mathscr{C}}(E,F)=C^\infty(E,F)$ algebraically  for any \lcs $E,F$. The index $\mathscr{C}$ remains to point out the different topologies.

\begin{definition}
We define $E^*_{\mathscr{C}}$ as the Mackey completion of $\mathscr{S}_{\mathscr{C}}((\widehat{E}^M)'_\mu)$. Thus we can define $\mathscr{C}$-reflexive spaces as satisfying $E=(E^*_{\mathscr{C}})^*_{\mathscr{C}}$. We denote by $\Cref$ the category of $\mathscr{C}$-reflexive spaces and linear maps.
\end{definition}

The dialogue category $\mathscr{C}-\Mc$ enables to give a situation similar to $\rRef.$ First for any \lcs E, $(E^*_{\mathscr{C}})' = \mco{E}$ algebraically from lemma \ref{SCdual}. 

\begin{corollary}
\label{prop:Cref}
For any \lcs $E$ , $E^*_{\mathscr{C}}$ is $\mathscr{C}$-reflexive, and $(E^*_\mathscr{C})'_\mathscr{C}$ is Mackey-complete, hence equal to $(E^*_\mathscr{C})^*_\mathscr{C}$
\end{corollary}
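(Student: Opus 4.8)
The plan is to reduce everything to the $\rho$-reflexive machinery of Theorem \ref{rhoref} through the reflector $\mathscr{S}_{\mathscr{C}}$. Since $E^*_\rho$ is $\rho$-reflexive for \emph{every} \lcs $E$ by Theorem \ref{rhoref}, and $E^*_{\mathscr{C}}=\mathscr{S}_{\mathscr{C}}(E^*_\rho)$, I would fix once and for all $A:=E^*_\rho$ (Mackey-complete, Schwartz, $\rho$-reflexive, with $A'_c=A'_\mu$ Mackey-complete by Theorem \ref{MackeyCaractrhoRef}) and only ever argue about $F:=\mathscr{S}_{\mathscr{C}}(A)=E^*_{\mathscr{C}}$. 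The standing facts I would invoke repeatedly, all recorded around Lemma \ref{SCdual}, are that $\mathscr{S}_{\mathscr{C}}$ leaves the dual and the bounded sets unchanged (hence preserves Mackey-completeness), is idempotent, commutes with Mackey completion, and satisfies $\mathscr{S}_{\mathscr{C}}\mathscr{S}=\mathscr{S}_{\mathscr{C}}$, and that every $\mathscr{S}_{\mathscr{C}}(\cdot)$ is a Schwartz space; in particular $F$ lies in the reflective subcategory of Proposition \ref{Cdialogue}.

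The crux is a single ``transparency'' lemma: for $A$ Mackey-complete and Schwartz, $F=\mathscr{S}_{\mathscr{C}}(A)$ has the \emph{same} Arens dual, $F'_c=A'_c$. I would prove it in three moves. First, $F$ is Schwartz, so $F'_c=F'_\mu$ by Lemma \ref{MackeyArensSchwartz}. Second, the Mackey topology on the common dual $F'=A'$ depends only on the dual pairing, being uniform convergence on absolutely convex weakly compact sets, a purely bornological datum of the pairing; since $F$ and $A$ share their pairing with $A'$, we get $F'_\mu=A'_\mu$. Third, $A$ is itself Schwartz, so $A'_\mu=A'_c$ again by Lemma \ref{MackeyArensSchwartz}. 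This is the only place where one checks that $\mathscr{C}$-Schwartzification slips through the duality without disturbing the $\rho$-reflexive identities, and it is where I expect the only genuine subtlety to sit.

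With this in hand, part \textbf{(b)} is bookkeeping. As $F$ is Mackey-complete and Schwartz one has $F^*_\rho=\widehat{\mathscr{S}(F'_c)}^M$, whence $(E^*_{\mathscr{C}})^*_{\mathscr{C}}=\mathscr{S}_{\mathscr{C}}(F^*_\rho)=\widehat{\mathscr{S}_{\mathscr{C}}(F'_c)}^M=\widehat{\mathscr{S}_{\mathscr{C}}(A'_c)}^M$ (using $\mathscr{S}_{\mathscr{C}}\mathscr{S}=\mathscr{S}_{\mathscr{C}}$ and the commutation with $\widehat{\cdot}^M$). Since $A'_c$ is Mackey-complete and $\mathscr{S}_{\mathscr{C}}$ preserves this, the completion is superfluous and $(E^*_{\mathscr{C}})^*_{\mathscr{C}}=\mathscr{S}_{\mathscr{C}}(A'_c)$ is already Mackey-complete. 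On the other hand Lemma \ref{InducedTop}, applicable because $F$ is Mackey-complete and $\mathscr{C}\subset\FDFS$, identifies $(E^*_{\mathscr{C}})'_{\mathscr{C}}=\mathscr{S}_{\mathscr{C}}(F'_\mu)=\mathscr{S}_{\mathscr{C}}(F'_c)=\mathscr{S}_{\mathscr{C}}(A'_c)$, which simultaneously yields its Mackey-completeness and its equality with $(E^*_{\mathscr{C}})^*_{\mathscr{C}}$, as stated.

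For part \textbf{(a)} I would iterate the same scheme once more. Writing $G:=F^*_{\mathscr{C}}=\mathscr{S}_{\mathscr{C}}(A'_c)=\mathscr{S}_{\mathscr{C}}(A^*_\rho)$ (using $\mathscr{S}_{\mathscr{C}}\mathscr{S}=\mathscr{S}_{\mathscr{C}}$ and $A^*_\rho=\mathscr{S}(A'_c)$), the space $A^*_\rho$ is again Mackey-complete, Schwartz and $\rho$-reflexive, so the transparency lemma applies to it and gives $G'_c=(A^*_\rho)'_c$. Then $(F^*_{\mathscr{C}})^*_{\mathscr{C}}=\widehat{\mathscr{S}_{\mathscr{C}}(G'_c)}^M=\widehat{\mathscr{S}_{\mathscr{C}}((A^*_\rho)'_c)}^M$, and since $\mathscr{S}((A^*_\rho)'_c)=(A^*_\rho)^*_\rho=A$ by $\rho$-reflexivity of $A$, this collapses to $\widehat{\mathscr{S}_{\mathscr{C}}(A)}^M=\widehat{F}^M=F=E^*_{\mathscr{C}}$, i.e. $E^*_{\mathscr{C}}$ is $\mathscr{C}$-reflexive. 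The whole argument is thus a double application of the transparency lemma wrapped around Theorem \ref{rhoref}; the only computations to watch are the commutations of $\mathscr{S}_{\mathscr{C}}$ with $\mathscr{S}$ and with $\widehat{\cdot}^M$, which are routine given the recorded properties.
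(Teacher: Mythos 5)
Your proposal is correct and follows essentially the same route as the paper: the paper's proof also reduces everything to Theorem \ref{rhoref} via Lemma \ref{SCdual} and the commutations $\mathscr{S}_{\mathscr{C}}=\mathscr{S}\circ\mathscr{S}_{\mathscr{C}}=\mathscr{S}_{\mathscr{C}}\circ\mathscr{S}$ and $\mathscr{S}_{\mathscr{C}}\circ\widehat{\cdot}^M=\widehat{\cdot}^M\circ\mathscr{S}_{\mathscr{C}}$, packaged as the single identity $[\mathscr{S}_{\mathscr{C}}(E)]^*_\rho=E^*_\rho$ which is then composed with $\rho$-reflexivity of $E^*_\rho$. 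Your ``transparency lemma'' $[\mathscr{S}_{\mathscr{C}}(A)]'_c=A'_c$ is exactly that identity one level down (at the Arens dual rather than the $\rho$-dual), and your two explicit iterations are the unwound form of the paper's two-line computation.
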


\begin{proof}We saw $E^*_{\mathscr{C}}=\mathscr{S}_{\mathscr{C}}(E^*_\rho)$ but from lemma \ref{SCdual} and commutation of $\mathscr{S}_{\mathscr{C}}=\mathscr{S}\circ\mathscr{S}_{\mathscr{C}}$ with Mackey completions, $[\mathscr{S}_{\mathscr{C}}(E)]^*_\rho=E^*_\rho$. Hence composing and using Theorem \ref{rhoref}, one gets the claimed reflexivity: $$((E^*_{\mathscr{C}})^*_{\mathscr{C}})^*_{\mathscr{C}}
=\mathscr{S}_{\mathscr{C}}\Big[((E^*_{\rho})^*_{\rho})^*_{\rho}\Big]
=\mathscr{S}_{\mathscr{C}}\Big[E^*_{\rho}\Big]=E^*_{\mathscr{C}}$$
Similarly $(E^*_\mathscr{C})'_\mathscr{C}=\mathscr{S}_{\mathscr{C}}((E^*_\rho)'_\RR)$ which is Mackey-complete by the same result.
\end{proof}

\begin{theorem}\label{th:CRef}
Let $Fin\subset\mathscr{C}\subset \FDFS$
 a full Cartesian small subcategory. $\Cref$ is a complete and cocomplete $*$-autonomous category with tensor product $E {\o}_{\mathscr{C}} F=(E^*_{\mathscr{C}}\varepsilon F^*_{\mathscr{C}})^*_{\mathscr{C}}$ and dual $(.)^*_{\mathscr{C}}$ and dualizing object $\K$.  It is stable by arbitrary products. It is equivalent to  the Kleisli category of $\mathscr{C}-\Mc$ and to  $\rho$-\textbf{Ref} as a $*$-autonomous category via the inverse functors: $\mathscr{S}_{\mathscr{C}}:\rho$-\textbf{Ref}$\to \mathscr{C}$-\textbf{Ref} and $\mathscr{S}([.]_\mu):\mathscr{C}$-\textbf{Ref}$\to \rho$-\textbf{Ref}.
\end{theorem}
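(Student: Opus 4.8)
The plan is to mirror the proof of Theorem \ref{StarAutonomous}, replacing the $\rho$-dual everywhere by $(\cdot)^*_{\mathscr{C}}$, and then to identify the resulting $*$-autonomous category with $\rRef$ through the reflector $\mathscr{S}_{\mathscr{C}}$. By Proposition \ref{Cdialogue} we already have the dialogue category $(\mathscr{C}-\Mc^{op},\varepsilon,\K,(\cdot)^*_{\mathscr{C}})$, so the work is to feed it into Lemma \ref{DialogueToRef}, which requires checking that the continuation monad $T_{\mathscr{C}}=((\cdot)^*_{\mathscr{C}})^*_{\mathscr{C}}$ is idempotent and commutative. Idempotency is exactly the reflexivity $((E^*_{\mathscr{C}})^*_{\mathscr{C}})^*_{\mathscr{C}}=E^*_{\mathscr{C}}$ already recorded in Corollary \ref{prop:Cref}. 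For commutativity I would repeat verbatim the argument of Corollary \ref{rhoCommut}: using the factorization $(\cdot)^*_{\mathscr{C}}=\mathscr{S}_{\mathscr{C}}\circ(\cdot)^*_\rho$ together with Lemma \ref{Tensor} one gets $(X\varepsilon Y)^*_{\mathscr{C}}=\mathscr{S}_{\mathscr{C}}\big(X'_\mu\widehat{\o}^M_\kappa Y'_\mu\big)$ and likewise for $T_{\mathscr{C}}(Y)$ in place of $Y$, and the isomorphism furnished by Proposition \ref{StarAutonomousParrTech} applied to $X'_\mu,Y'_\mu$ produces the strength and its commutation relation just as before, since that proposition concerns only the $\kappa$-tensor and its Mackey completion and $\mathscr{S}_{\mathscr{C}}$ enters only functorially at the outermost level.

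Once commutativity and idempotency are in hand, Lemma \ref{DialogueToRef} gives that the full subcategory of objects of the form $C^*_{\mathscr{C}}$ — which by Corollary \ref{prop:Cref} is precisely $\Cref$ — is a $*$-autonomous category with tensor $E\o_{\mathscr{C}}F=(E^*_{\mathscr{C}}\varepsilon F^*_{\mathscr{C}})^*_{\mathscr{C}}$, dual $(\cdot)^*_{\mathscr{C}}$, dualizing object $\K$, and that it is equivalent to the Kleisli category of $T_{\mathscr{C}}$ on $\mathscr{C}-\Mc$. This yields all the asserted structure except completeness/cocompleteness and product-stability, which I will obtain through the equivalence with $\rRef$ below.

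Next I would set $\Phi=\mathscr{S}_{\mathscr{C}}$ and $\Psi=\mathscr{S}([\cdot]_\mu)$ and check they are well-defined inverse functors. For $E\in\rRef$, $\mathscr{S}_{\mathscr{C}}(E)=\mathscr{S}_{\mathscr{C}}((E^*_\rho)^*_\rho)=(E^*_\rho)^*_{\mathscr{C}}$ is of the form $G^*_{\mathscr{C}}$, hence $\mathscr{C}$-reflexive; for $F\in\Cref$, $\mathscr{S}((F'_\mu)'_\mu)$ is $\rho$-reflexive by Theorem \ref{rhoref}. They are mutually inverse: by Lemma \ref{SCdual} the space $\mathscr{S}_{\mathscr{C}}(E)$ has the same dual and bounded sets as $E$, so $[\mathscr{S}_{\mathscr{C}}(E)]_\mu=[E]_\mu$ and $\mathscr{S}([\mathscr{S}_{\mathscr{C}}(E)]_\mu)=\mathscr{S}((E'_\mu)'_\mu)\simeq E$ by Theorem \ref{rhoref}; conversely, writing $F=\mathscr{S}_{\mathscr{C}}((F^*_{\mathscr{C}})^*_\rho)$ and using $\mathscr{S}_{\mathscr{C}}\circ\mathscr{S}=\mathscr{S}_{\mathscr{C}}$ one gets $\mathscr{S}_{\mathscr{C}}(\mathscr{S}([F]_\mu))=\mathscr{S}_{\mathscr{C}}([F]_\mu)=F$. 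This equivalence respects the whole $*$-autonomous structure: $\Phi$ fixes $\K$, intertwines the negations since $\Phi(E^*_\rho)=\mathscr{S}_{\mathscr{C}}(E^*_\rho)=(\Phi E)^*_{\mathscr{C}}$ (using $[\mathscr{S}_{\mathscr{C}}(E)]^*_\rho=E^*_\rho$ from the proof of Corollary \ref{prop:Cref}), and is strong monoidal for the tensors because $\Phi(E\o_\rho F)=(E^*_\rho\varepsilon F^*_\rho)^*_{\mathscr{C}}$ and $\Phi E\o_{\mathscr{C}}\Phi F=(E^*_{\mathscr{C}}\varepsilon F^*_{\mathscr{C}})^*_{\mathscr{C}}$ coincide: by Lemma \ref{Tensor} the $\mathscr{C}$-dual of an $\varepsilon$-product depends only on the $\mu$-duals and their bornology, which are left unchanged when a factor is replaced by its $\mathscr{S}_{\mathscr{C}}$-version. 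Through this $*$-autonomous equivalence, completeness, cocompleteness and stability by arbitrary products transfer from $\rRef$ (Proposition \ref{rhoRefComplete}, Corollary \ref{ProjrhoRef}), products being computed as in $\LCS$ since $\mathscr{S}_{\mathscr{C}}$ commutes with products.

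The main obstacle I anticipate is the verification of commutativity of $T_{\mathscr{C}}$ on $\mathscr{C}-\Mc$, precisely because $\mathscr{S}_{\mathscr{C}}$ does \emph{not} commute with $\varepsilon$ in general (as the paper warns for $\mathscr{N}$), so one cannot argue by naive strong monoidality of $\mathscr{S}_{\mathscr{C}}$ on the $\varepsilon$-level. The resolution is to route every computation through $(\cdot)^*_\rho$ and the $\kappa$-tensor, where Proposition \ref{StarAutonomousParrTech} and the Mackey-completion analysis of Lemma \ref{DualArensMc} already apply and are insensitive to the choice of associated-topology functor, and to invoke $\mathscr{S}_{\mathscr{C}}$ only as a functor applied to the finished $\rho$-level isomorphisms and commutative diagrams; the same care is what makes the strong-monoidality check for the equivalence go through despite the failure of $\mathscr{S}_{\mathscr{C}}(E\varepsilon F)=\mathscr{S}_{\mathscr{C}}(E)\varepsilon\mathscr{S}_{\mathscr{C}}(F)$.
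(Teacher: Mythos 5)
Your proposal follows essentially the same route as the paper: apply Lemma \ref{DialogueToRef} to the dialogue category of Proposition \ref{Cdialogue}, identify the continuation category with $\Cref$ via Corollary \ref{prop:Cref}, and establish the inverse functors and the monoidal intertwining through the observation that $E^*_\rho$ and $E^*_{\mathscr{C}}$ share the same Mackey dual, so that Lemma \ref{DualArensMc} computes $(E^*_\rho\varepsilon F^*_\rho)'_\mu\simeq(E^*_{\mathscr{C}}\varepsilon F^*_{\mathscr{C}})'_\mu$ and one concludes by applying $\mathscr{S}_{\mathscr{C}}(\widehat{\cdot}^M)$. Your explicit verification of commutativity of the continuation monad by rerouting through Proposition \ref{StarAutonomousParrTech} is a detail the paper leaves implicit, but it is the intended argument, not a different one.
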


\begin{proof}
This is a consequence of lemma \ref{DialogueToRef} applied to the Dialogue category $(\mathscr{C}-\Mc^{op},\epsilon,\K,(\cdot)^*_{\mathscr{C}})$ from proposition \ref{Cdialogue}.
Recall from the previous proof that $(\mathscr{S}_{\mathscr{C}}(E))'_\mu=E'_\mu$ and  $((E)^*_{\mathscr{C}})^*_{\mathscr{C}}=\mathscr{S}_{\mathscr{C}}((E^*_\rho)^*_\rho).$ This implies the two functors are inverse of each other as stated. 

We show they intertwine the other structure. We already noticed $E^*_{\mathscr{C}}=\mathscr{S}_{\mathscr{C}}(E^*_\rho)$. 
We computed in lemma \ref{DualArensMc}:
$$(E^*_{\rho}\varepsilon F^*_{\rho})'_\mu\simeq(E^*_{\rho}\eta F^*_{\rho})'_\mu\simeq  (E^*_{\rho})'_\mu\widehat{\o}^M_i(F^*_{\rho})'_\mu\simeq(E^*_{\mathscr{C}}\varepsilon F^*_{\mathscr{C}})'_\mu$$
Since $\varepsilon$ product keeps Mackey-completeness, one can compute $(\cdot)^*_{\mathscr{C}}$ and $(\cdot)^*_{\rho}$ by applying respectively $\mathscr{S}_{\mathscr{C}}(\widehat{\cdot}^M)$ and $\mathscr{S}(\widehat{\cdot}^M)$, which gives the missing topological identity:
$$\mathscr{S}_{\mathscr{C}}\Big((E^*_{\rho}\varepsilon F^*_{\rho})^*_\rho\Big)\simeq(E^*_{\mathscr{C}}\varepsilon F^*_{\mathscr{C}})^*_{\mathscr{C}}.$$
\end{proof}

Let $\Cref_\infty, \mathscr{C}-\Mc_\infty$ the Cartesian categories with same spaces as $\Cref, \mathscr{C}-\Mc$ and $\mathscr{C}$-smooth maps, namely conveniently smooth maps. Let $U:\Cref\to \Cref_\infty$  the inclusion functor  (forgetting linearity and continuity of the maps). Note that, for $\mathscr{C}\subset \mathscr{D}$, $ \mathscr{C}-\Mc_\infty\subset\mathscr{D}-\Mc_\infty$ is a full subcategory.
\begin{theorem}\label{th:CRefSeely}
Let $Fin\subset\mathscr{C}\subset \FDFS$
 as above. $\Cref$ is also a Seely category with structure extended by the comonad $!_\mathscr{C}(\cdot)=(C_\mathscr{C}^\infty(\cdot))^*_\mathscr{C}$ associated to the adjunction with left adjoint $!_\mathscr{C}:\Cref_\infty\to \Cref$ and right adjoint $U$.
\end{theorem}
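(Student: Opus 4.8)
The plan is to realize $\Cref$ as the $*$-autonomous Seely category produced by Theorem \ref{lambdaTensortoLL}, by exhibiting a commutative Seely $\lambda$-categorical model of $\lambda$-tensor logic in the sense of Definition \ref{LambdaModel}. The dialogue part is already at hand: Proposition \ref{Cdialogue} gives the complete and cocomplete dialogue category $(\mathscr{C}-\Mc^{op},\varepsilon,\K,(\cdot)^*_{\mathscr{C}})$, whose continuation monad $T=((\cdot)^*_{\mathscr{C}})^*_{\mathscr{C}}$ is commutative and idempotent; both facts were already used in the proof of Theorem \ref{th:CRef} through Lemma \ref{DialogueToRef}, so I may quote them (cocompleteness of $\mathscr{C}-\Mc$ coming from its reflectivity in $\McS$). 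For the Cartesian closed side I take $\mathcal{M}=\Cref_\infty$, Cartesian closed by the exponential law of Lemma \ref{thm:CartesianClosedBasic}, the internal hom being (the $\mathscr{C}$-reflexivization of) $C^\infty_{\mathscr{C}}(E,F)$; equivalently one works in the equivalent category $\mathscr{C}-\Mc_\infty$, where $C^\infty_{\mathscr{C}}(E,F)$ is directly an object, being Mackey-complete (Lemma \ref{thm:CartesianClosedBasic}) and fixed by $\mathscr{S}_{\mathscr{C}}$ as a projective limit of the generators $C^\infty_{co}(X,\K)$ of $\mathscr{C}^\infty$.

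The functor is $NL(E)=C^\infty_{\mathscr{C}}(E)$, landing in $\mathscr{C}-\Mc$ for the same two reasons, and $U$ the forgetful functor (composed with $\mathscr{C}$-reflexivization), which restricts on the reflexive objects $\Cref=\mathcal{D}$ to the identity-on-objects inclusion into $\Cref_\infty$, hence is faithful and bijective on objects, so that the model will be Seely. First I would check that $NL$ is strongly monoidal: combining Cartesian closedness with the $\varepsilon$-representation of Lemma \ref{Ccomplete}(3) (valid since $C^\infty_{\mathscr{C}}(F)$ is Mackey-complete) yields $NL(E\times F)=C^\infty_{\mathscr{C}}(E\times F)\simeq C^\infty_{\mathscr{C}}(E,C^\infty_{\mathscr{C}}(F))\simeq C^\infty_{\mathscr{C}}(E)\varepsilon C^\infty_{\mathscr{C}}(F)=NL(E)\parr_{\mathcal{C}}NL(F)$, which simultaneously produces the natural isomorphism $\Lambda$. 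The $\neg$-relative adjunction $\mathcal{M}(E,U(F))\simeq\mathcal{C}^{op}(NL(E),\neg F)$ and the isomorphism $\Xi_{E,F}\colon U(NL(E)\parr_{\mathcal{C}}F)\to[E,U(F)]$ come from the same circle of ideas: the identity $C^\infty_{\mathscr{C}}(E,G)\simeq C^\infty_{\mathscr{C}}(E)\varepsilon G$ for Mackey-complete $G$ (Lemmas \ref{thm:CartesianClosedBasic} and \ref{Ccomplete}), together with the identification of the induced dual topology in Lemmas \ref{InducedTop} and \ref{univproperty}, which reconciles $\varepsilon$-maps out of $C^\infty_{\mathscr{C}}(E)$ with linear maps out of the dual $F^*_{\mathscr{C}}$.

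Once $\Xi$ and $\Lambda$ are in place, the remaining content is the verification of the six coherence diagrams of Definition \ref{LambdaModel}, and this is where I expect the main (though purely formal) difficulty to lie: each diagram relates a Cartesian-closed structure map of $\mathcal{M}$ to a monoidal structure map of $\mathcal{C}^{op}$ across $\Xi$ and $\Lambda$. The strategy — precisely the reason Definition \ref{LambdaModel} is phrased through a faithful $U$ — is to transport these identities from the genuine coherences of the Cartesian closed category $\Cref_\infty$: applying $U$ together with the explicit descriptions of $\Xi,\Lambda$ reduces each hexagon or pentagon to a commuting diagram of smooth maps, which holds by the convenient-calculus identities underlying Lemma \ref{thm:CartesianClosedBasic} (Cartesian closedness, associativity and symmetry of the product of base spaces).

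Finally, having a commutative Seely $\lambda$-model, the Proposition asserting that a Seely $\lambda$-model of $\lambda$-tensor logic is also a Seely linear model lets me feed the data into Theorem \ref{lambdaTensortoLL}. Its conclusion is exactly the claim: the continuation category $\mathcal{D}\simeq\Cref$ is a complete and cocomplete $*$-autonomous Seely category whose comonad is $\,!=\neg\circ NL\circ U=(C^\infty_{\mathscr{C}}(\cdot))^*_{\mathscr{C}}=\,!_{\mathscr{C}}$, arising from the linear--non-linear adjunction with left adjoint $!_{\mathscr{C}}\colon\Cref_\infty\to\Cref$ and right adjoint the inclusion $U$, with Kleisli category $\Cref_\infty$.
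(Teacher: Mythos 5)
Your proposal follows essentially the same route as the paper: exhibit $(\mathscr{C}-\Mc^{op},\varepsilon,\K,(\cdot)^*_{\mathscr{C}})$ from Proposition \ref{Cdialogue} together with the Cartesian closed category of smooth maps as a $\lambda$-categorical model, obtain $\Xi$ and $\Lambda$ from the isomorphisms of Lemmas \ref{thm:CartesianClosedBasic} and \ref{Ccomplete}, and conclude by Theorem \ref{lambdaTensortoLL} (the paper takes $\mathcal{M}=\mathscr{C}-\Mc_\infty$ rather than $\Cref_\infty$, but as you note these are equivalent). The one point where your sketch is lighter than the paper is the verification of the coherence diagrams for $\Lambda$: the paper does not just transport them along the faithful $U$, but first reduces from $\mathscr{C}$ to $Fin$ via the natural transformation $\CinC\to \Cin_{Fin}\circ\mathscr{N}$, then to finite-dimensional $E,F$ by the projective-limit description, and finally checks the identities by evaluating on Dirac masses, which are weak-$*$ dense in $(C^\infty(\R^{n+m}))'$ and have product form.
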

\begin{proof}
We apply Theorem \ref{lambdaTensortoLL} to $\mathcal{C}=\mathscr{C}-\Mc$ so that $\mathcal{D}=\Cref$ and $\mathcal{N}=\Cref_\infty$. For that we must check the assumptions of a $\lambda$-categorical model for $\mathcal{M}=\mathscr{C}-\Mc_\infty$. Lemma \ref{thm:CartesianClosedBasic} shows that $\mathcal{M}$ is a Cartesian closed category since the internal hom functor $\CinC(E,F)$ is almost by definition in $\mathscr{C}-\Mc$. Indeed it is a projective limit of $\Cin_{co}(X)\varepsilon F$ which is a projective kernel of  $\Cin_{co}(X)\varepsilon\Cin_{co}(Y)=\Cin_{co}(X\times Y)$ with $X,Y\in \mathscr{C}$ as soon as $F\in \mathscr{C}-\Mc.$ The identity in lemma \ref{Ccomplete} gives the natural isomorphisms for the $(\cdot)^*_\mathscr{C}$-relative adjunction (the last one algebraically using $\CinC(E)\in\mathscr{C}-\Mc$):
$$\CinC(E,F)\simeq \CinC(E)\varepsilon F\simeq L(F'_c, \CinC(E))=L(F^*_\mathscr{C}, \CinC(E))=\mathcal{C}^{op}( \CinC(E), F^*_\mathscr{C})$$
It remains to see that $\CinC:\mathcal{M}\to \mathcal{C}$ is a symmetric unital functor satisfying the extra assumptions needed for a $\lambda$-categorical model. Note that Lemmas \ref{thm:CartesianClosedBasic} and \ref{Ccomplete} also provide the definitions of the map $\Lambda,\Xi$ respectively, the second diagram for $\Xi$. The diagram for $\Xi$ comparing the internal hom functors is satisfied by definition of the map $\Lambda$ which is given by a topological version of this diagram. Note that unitality and functoriality of $\CinC$ are obvious and that $\Lambda_{E,F,G}$ is even defined for any $G\in \Mc$.
It remains to prove symmetry and the second diagram for $\Lambda$. We first reduce it to $\mathscr{C}$ replaced by $Fin$.
For, note that, by  their definition as projective limit, there is a continuous identity map $\CinC(E)\to \Cin_{Fin}(E)$ for any \lcs E, and since smooth curves only depend on the bornology, $\Cin_{Fin}(E)\simeq\Cin_{Fin}(\mathscr{N}(E))$ topologically (recall $\mathscr{N}=\mathscr{S}_{Fin}$ is the reflector of  $I:Fin-\Mc\subset \mathscr{C}-\Mc$, which is a Cartesian functor \cite{Junek}, and thus also of $I_\infty:Fin-\Mc_\infty\subset \mathscr{C}-\Mc_\infty$ by this very remark.) Composing both, one gets easily a natural transformation $J_{\mathscr{C},Fin}:\CinC\to I\circ\Cin_{Fin}\circ\mathscr{N}$. It intertwines the Curry maps $\Lambda$ as follows for $G\in \Mc$:\[
\xymatrix@C=100pt{
\CinC(E)\varepsilon \Big(\CinC(F)\varepsilon G\Big) \ar[r]^{J_{\mathscr{C},Fin}(E)\varepsilon (J_{\mathscr{C},Fin}( F)\varepsilon G)\qquad \  } &  \Cin_{Fin}(\mathscr{N}(E))\varepsilon \Big(\Cin_{Fin}(\mathscr{N}(F))\varepsilon G\Big)\\
\CinC(E\times F)\varepsilon G\ar[u]_{\Lambda_{E,F,G}} \ar[r]^{J_{\mathscr{C},Fin}(E\times F)\varepsilon G\qquad }&\Cin_{Fin}(\mathscr{N}(E)\times \mathscr{N}(F))\varepsilon G\ar[u]_{\Lambda_{\mathscr{N}(E),\mathscr{N}(F),G}}
}
\]
Now, the associativity, symmetry and unitor maps are all induced from $\McS$, hence, it suffices to prove the compatibility diagrams for $\Lambda$ in the case of  $\Cin_{Fin}$ with $G\in \McS.$ In this case, we can further reduce it using that from naturality of associator, unitor and braiding, they commute with projective limits as $\varepsilon$ does, and from its construction in lemma \ref{thm:CartesianClosedBasic}  $\Lambda_{E,F,G}$ is also a projective limit of maps, hence the projective limit description of $\Cin_{Fin}$ reduces those diagrams to $E,F$ finite dimensional. Note that for the terms with products $E\times F$ the cofinality of product maps used in the proof of lemma \ref{thm:CartesianClosedBasic}) enables to rewrite the projective limit for $E\times F$ with the product of projective limits  for $E,F$ separately. The key to check the relations is to note that the target space of the diagrams is a set of multilinear maps on $(C^\infty(\R^n\times \R^m))', G'$ and to prove equality of the evaluation of both composition on an element in the source space, by linearity continuity and since $\overline{Vect e_{\R^{n+m}}( \R^{n+m})}=(C^\infty(\R^n\times \R^m))'$, it suffices to evaluate the argument in  $(C^\infty(\R^n\times \R^m))'$ on Dirac masses which have a product form. Then when reduced to a tensor product argument, the associativity and braiding maps are canonical and the relation is obvious to check.
\end{proof}

\begin{remark}\label{TwoComonads} In $\Cref$ we defined $!_{\mathscr{C}}E=((C^\infty_{\mathscr{C}}(E))^*_{\mathscr{C}})$ so that moving it back to $\rho$-\textbf{Ref} via the isomorphism of $*$-autonomous category of Theorem \ref{th:CRef}, one gets $\mathscr{S}([!_{\mathscr{C}}E]_\mu)=((C^\infty_{\mathscr{C}}(E))^*_{\rho})$. Let us apply Lemma \ref{WhyNotAP} and Example \ref{Ex:notAP}. For $\mathscr{C}=Fin$ one gets a space with its $\rho$-dual having the approximation property, whereas for $\mathscr{C}=Ban$, one may get one without it since $(!_{\mathscr{C}}\mathscr{S}(E))^*_\rho=((C^\infty_{co}(E))^*_\rho)^*_\rho$ if $E$ is a Banach space (since we have the topological identity $C^\infty_{\mathscr{C}}(E,\K)\simeq C^\infty_{\mathscr{C}}(\mathscr{S}(E),\K)$ coming from the identical indexing set of curves coming from the algebraic equality $C^\infty_{co}(X,E)=C^\infty_{Ban}(X,E)=C^\infty(X,E)=C^\infty_{co}(X,\mathscr{S}(E))$). Therefore, if $E$ is a Schwartz space associated to a Banach space in $Ban$ without the approximation property: $$\mathscr{S}([!_{Fin}E]_\mu)\ \mathlarger{\varsubsetneq} \ !_{Ban}E$$ (since both duals are algebraically equal to $C^\infty(E,\K)$, the difference of topology implies different duals algebraically). It is natural to wonder if there are infinitely many different exponentials obtained in that way for different categories $\mathscr{C}$. It is also natural to wonder if one can characterize $\rho$-reflexive spaces (or even Banach spaces) for which there is equality $\mathscr{S}([!_{Fin}E]_\mu)= \ !_{Ban}E.$
\end{remark}

\subsection{A model of LL : a Seely category}
\label{subsec:modelLL}

We referred to \cite{Mellies} in order to produce a Seely category. Towards extensions to DiLL models it is better to make more explicit the structure we obtained. 
First recall the various functors. When $f : E \to F $ is a continuous linear map with $E,F\in \mathscr{C}-\Mc$, we used $!_\mathscr{C} f : !_\mathscr{C}E \to !_\mathscr{C}F $ defined as $(\cdot\circ f)^*_\mathscr{C}$.
Hence $!_\mathscr{C}$ is indeed a functor from $\mathscr{C}-\Mc$ to $\mathscr{C}-\Ref$. 

Since $\CinC$ is a functor too on $\mathscr{C}-\Mc_\infty$, the above functor is decomposed in a adjunction as follows. For $F:E\to F$ $\mathscr{C}$-smooth, $\CinC(F)(g)=g\circ F, g\in \CinC(F,\R)$ and for a linear map $f$ as above, $U(f)$ is the associated smooth map, underlying the linear map. Hence we also noted $!_\mathscr{C} F=(\CinC(F))^*_\mathscr{C}$ gives the functor, left adjoint to $U:\mathscr{C}-\Mc\to \mathscr{C}-\Mc_\infty$ and our previous $!_\mathscr{C}$ is merely the new $!_\mathscr{C}\circ U$.

For any $E\in \Cref$, we recall the continuous isomorphism from $E$ to $(E^*_\mathscr{C})^*_\mathscr{C}=\mathscr{S}((E'_\mu)'_\mu)$

\begin{equation*}
 ev_E :
\left\lbrace
\begin{split}
E & \rightarrow (E^*_{\mathscr{C}})^*_{\mathscr{C}}= E\\
x & \mapsto ( l \in E^*_{\mathscr{C}} \mapsto l(x))
\end{split}
\right.
\end{equation*}

Note that if $E$ is only Mackey-complete, the linear isomorphism above is still defined, in the sense that we take the extension to the Mackey-completion of  $l\mapsto l(x)$, but it is only bounded/smooth algebraic isomorphism (but not continuous) by Theorem \ref{rhoref}.
However, $ev_E^{-1}$ is always linear continuous in this case too.

 We may still use the notation $e_E$ for any separated locally convex space $E$ as the bounded linear injective map, obtained by composition of the canonical map $E\to \widehat{E}^M$ and $ev_{\widehat{E}^M}$. We also consider the similar canonical maps: %obtained from Cartesian closedness from the identity map on $((C^\infty(E,\R))'_\rho)'_\rho$:

\begin{lemma}
\label{lem:evsmooth}
For any space $E\in  \mathscr{C}-\Mc$, there is a smooth map (the Dirac mass map):
\begin{equation*}
\delta_E :
\left\lbrace
\begin{split}
E & \rightarrow  (\CinC(E))'\subset !_\mathscr{C} E \\
x & \mapsto ( f \in \CinC(E,\K) \mapsto f(x)=\delta_E(x)(f)),
\end{split}
\right.
\end{equation*}
\end{lemma}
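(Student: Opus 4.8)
The plan is to verify directly the two assertions hidden in the statement: that each $\delta_E(x)$ is a genuine element of $(\CinC(E))'$, and that $\delta_E$ is $\mathscr{C}$-smooth, i.e. $\delta_E\in C^\infty_{\mathscr{C}}(E,!_\mathscr{C} E)$. For the first point I would use that $\{0\}\in\mathscr{C}$ and that $\Cinco(\{0\},E)=E$: the constant curve $c_x:\{0\}\to E$ sending $0$ to $x$ is one of the curves indexing the projective kernel \eqref{CinCasProjLimit}, so the associated projection $\cdot\circ c_x:\CinC(E)\to\Cinco(\{0\})=\K$, $f\mapsto f(x)$, is continuous. Since $\delta_E(x)=\cdot\circ c_x$, this already gives $\delta_E(x)\in(\CinC(E))'\subset !_\mathscr{C} E$.

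For smoothness, the key observation I would exploit is that composing $\delta_E$ with a curve factors through the evaluation map of Proposition \ref{Meise3}. Fix $X\in\mathscr{C}\subset\kref$ and $c\in\Cinco(X,E)$. The map $\cdot\circ c:\CinC(E)\to\Cinco(X)$ is continuous, being a defining projection of the projective kernel topology of $\CinC(E)$; I denote by $T_c=(\cdot\circ c)^t:(\Cinco(X))'_c\to(\CinC(E))'_c$ its transpose, continuous by functoriality of the Arens dual. A short computation gives, for $x\in X$ and $f\in\CinC(E)$, that $T_c(ev_X(x))(f)=ev_X(x)(f\circ c)=(f\circ c)(x)=f(c(x))=\delta_E(c(x))(f)$, hence the factorisation $\delta_E\circ c=T_c\circ ev_X$, where $ev_X\in\Cinco(X,(\Cinco(X))'_c)$ is furnished by Proposition \ref{Meise3} (applicable since $X\in\kref$).

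It then remains to check that $T_c$ is continuous not merely into $(\CinC(E))'_c$ but into $!_\mathscr{C} E$, and this is the step requiring care. Since $\CinC(E)$ is a Mackey-complete Schwartz space (it lies in $\mathscr{C}-\Mc$), Lemma \ref{MackeyArensSchwartz} gives $(\CinC(E))'_\mu=(\CinC(E))'_c$, so that by definition $!_\mathscr{C} E=(\CinC(E))^*_{\mathscr{C}}$ carries the Mackey completion of $\mathscr{S}_{\mathscr{C}}((\CinC(E))'_c)$, whose induced topology on the dense subspace $(\CinC(E))'$ is exactly $\mathscr{S}_{\mathscr{C}}((\CinC(E))'_c)$. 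As this topology is coarser than that of $(\CinC(E))'_c$, the continuous identity $(\CinC(E))'_c\to !_\mathscr{C} E$ composes with $T_c$ to yield a continuous linear map $T_c:(\Cinco(X))'_c\to !_\mathscr{C} E$. Finally, post-composing the $\Cinco$-smooth map $ev_X$ with a continuous linear map preserves $\Cinco$-smoothness, the derivatives transforming by $d^j(T_c\circ ev_X)(x)=T_c\circ d^j ev_X(x)$ with $T_c$ inducing a continuous map on the iterated $L_{co}$-spaces of values, so that $\delta_E\circ c=T_c\circ ev_X\in\Cinco(X,!_\mathscr{C} E)$. As $X$ and $c$ are arbitrary, this gives $\delta_E\in C^\infty_{\mathscr{C}}(E,!_\mathscr{C} E)$; the main obstacle is precisely this last topological passage, namely ensuring that the twisted $\mathscr{S}_{\mathscr{C}}$-Arens topology carried by $!_\mathscr{C} E$ is reached rather than the finer bare Arens topology.
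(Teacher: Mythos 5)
Your proof is correct, but it takes a genuinely different route from the paper's. You carry out the direct analytic verification: evaluation at a point is one of the defining projections of the projective kernel \eqref{CinCasProjLimit} (via the constant curve on $\{0\}\in\mathscr{C}$), and for each test curve $c\in\Cinco(X,E)$ you factor $\delta_E\circ c=T_c\circ ev_X$ through the evaluation map of Proposition \ref{Meise3} and the transpose $T_c$ of $\cdot\circ c$, then check that $T_c$ remains continuous into $!_\mathscr{C} E$ because the $\mathscr{S}_{\mathscr{C}}$-topology on $(\CinC(E))'_\mu=(\CinC(E))'_c$ (Lemma \ref{MackeyArensSchwartz}, $\CinC(E)$ being a Mackey-complete Schwartz space) is coarser than the Arens topology and is induced from the Mackey completion. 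The paper instead declines to argue directly ("we could see this directly using convenient smoothness, but it is better to see it comes from our $\lambda$-categorical model structure") and reads $\delta_E$ off the $\neg$-relative adjunction $\CinC(E,!_\mathscr{C} E)\simeq \mathscr{C}-\Mc((\CinC(E)^*_\mathscr{C})^*_\mathscr{C},\CinC(E))$ as the map corresponding to $ev_{\CinC(E)}^{-1}$, i.e.\ as the unit of the adjunction defining $!_\mathscr{C}$. The categorical argument is shorter and immediately identifies $\delta_E$ with the unit, which is what is exploited afterwards for the comonad structure; your argument is longer but makes the analysis explicit and exhibits concretely why smoothness of $\delta_E$ reduces to Proposition \ref{Meise3}, including the one genuinely delicate topological point (landing in the $\mathscr{S}_{\mathscr{C}}$-weakened dual rather than the bare Arens dual), which the adjunction formalism hides. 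Both are valid; yours is essentially the "direct" proof the authors chose not to write out.
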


\begin{proof}
We could see this directly using convenient smoothness, but it is better to see it comes from our $\lambda$-categorical model  structure. We have an adjunction:
$$\CinC(E,!_\mathscr{C} E)\simeq \mathscr{C}-\Mc^{op}(\CinC(E),(!_\mathscr{C} E)^*_\mathscr{C})
=\mathscr{C}-\Mc((\CinC(E)^*_\mathscr{C})^*_\mathscr{C},\CinC(E))$$
and $\delta_E$ is the map in the first space,  associated to $ev_{\CinC(E)}^{-1}$ in the last.
\end{proof}

Hence, $\delta_E$ is nothing but the unit of the adjunction giving rise to $!_\mathscr{C}$, considered on the opposite of the continuation category.

As usual, see e.g. \cite[section 6.7]{Mellies}, the adjunction giving rise to $!_\mathscr{C}$ produces a comonad structure on this functor. The counit implementing the dereliction rule is the continuous linear map ${\Der E}:!_\mathscr{C}(E)\to E$ obtained in looking at the map corresponding to identity in the adjunction:
$$\CinC(E, E)\simeq \mathscr{C}-\Mc^{op}(\CinC(E),(E)^*_\mathscr{C})
=\mathscr{C}-\Mc(E^*_\mathscr{C},\CinC(E))\simeq \mathscr{C}-\Mc((\CinC(E))^*_\mathscr{C},E)$$

The middle map $\epsilon^{\CinC}_E\in \mathscr{C}-\Mc(E^*_\mathscr{C},\CinC(E))$ is the counit of the $(\cdot)^*_\mathscr{C}$-relative adjunction and it gives 
${\Der E}=ev_E^{-1}\circ (\epsilon^{\CinC}_E)^*_\mathscr{C}$ when $E\in\Cref$. 
The comultiplication map implementing the promotion rule  is obtained as ${\Digg E}=!_\mathscr{C}(\delta_E)=(\CinC(\delta_E))^*_\mathscr{C}$.

We can now summarize the structure. Note, that we write the usual $\top$, unit for $\times$ as $0$, for the $\{0\}$ vector space.

\begin{proposition}
  The functor $!_\mathscr{C}$ is an exponential modality for the Seely category of Theorem \ref{th:CRefSeely} in the following way:
  \begin{itemize}
  \item $(!_\mathscr{C},{\Digg {}},{\Der {}})$ is a comonad, with ${\Der E}=ev_E^{-1}\circ (\epsilon^{\CinC}_E)^*_\mathscr{C}$ and ${\Digg E}=!_\mathscr{C}(\delta_E)=(\CinC(\delta_E))^*_\mathscr{C}.$

\item $!_\mathscr{C}:(\Cref, \times,0)\rightarrow (\Cref,\otimes, \K)$ is a strong and symmetric monoidal functor, thanks to the isomorphisms $m^0:\K\simeq !_\mathscr{C}(0)$ and (the map composing tensor strengths and adjoints of $\Xi,\Lambda$ of $\lambda$-tensor models):
\begin{align*}m^2_{E,F} : !_\mathscr{C}E\otimes !_\mathscr{C}!F &=\Big((\CinC(E)^*_\mathscr{C})^*_\mathscr{C}\parr_\mathscr{C}(\CinC(F)^*_\mathscr{C})^*_\mathscr{C}\Big)^*_\mathscr{C}
\simeq\Big(\CinC(E, \K)\parr_\mathscr{C}\CinC(F, \K)\Big)^*_\mathscr{C}\\&\simeq\Big(\CinC(E,\CinC(F, \K))\Big)^*_\mathscr{C} \simeq (C^{\infty}(E \times F, \K))^*_\mathscr{C} \simeq !_\mathscr{C}(E\times F)\end{align*}

\item the following diagram commute:
$$\xymatrix{
!_\mathscr{C}E\otimes_\mathscr{C} !_\mathscr{C}F \ar[r]^{m^2_{E,F}} \ar[d]_{{\Digg E}\otimes_\mathscr{C} {\Digg F}} & !(E\times F) \ar[r]^{{\Digg {E\times F}}} & !_\mathscr{C}!_\mathscr{C}(E\times F) \ar[d]^{!_\mathscr{C}\langle!_\mathscr{C}\pi_1,!_\mathscr{C}\pi_2\rangle}\\
!_\mathscr{C}!_\mathscr{C}E\otimes_\mathscr{C} !_\mathscr{C}!_\mathscr{C}F\ar[rr]_{m^2_{!_\mathscr{C}E,!_\mathscr{C}F}} &&!_\mathscr{C}(!_\mathscr{C}E\times !_\mathscr{C}F)
}$$
  \end{itemize}
\end{proposition}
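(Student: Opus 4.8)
The plan is to read off all three items from the abstract Seely structure already produced in Theorem \ref{th:CRefSeely}, and to check that the explicit formulas stated here coincide with the abstract data. First, the comonad $(!_\mathscr{C},\Digg{},\Der{})$ is nothing but the comonad $!_\mathscr{C}\circ U$ attached to the adjunction $!_\mathscr{C}\dashv U$ between $\mathscr{C}-\Mc_\infty$ and $\Cref$ of Theorem \ref{th:CRefSeely}: any adjunction $L\dashv R$ with unit $\eta$ and counit $\epsilon$ makes $LR$ a comonad with counit $\epsilon$ and comultiplication obtained by whiskering $\eta$ with $L$ and $R$. I would identify the counit with the dereliction $\Der{E}=ev_E^{-1}\circ(\epsilon^{\CinC}_E)^*_\mathscr{C}$, which is exactly the map corresponding to the identity through the $(\cdot)^*_\mathscr{C}$-relative adjunction as computed just before the statement, and the whiskered unit with $!_\mathscr{C}(\delta_E)=(\CinC(\delta_E))^*_\mathscr{C}=\Digg{E}$, where $\delta_E$ is the adjunction unit exhibited in Lemma \ref{lem:evsmooth}. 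The comonad laws then hold automatically, being the triangle and coassociativity identities of the adjunction transported through $!_\mathscr{C}$ and $U$.

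Next I would construct the monoidal comparison maps. The isomorphism $m^0\colon\K\simeq!_\mathscr{C}(0)$ comes from $!_\mathscr{C}(0)=(\CinC(0))^*_\mathscr{C}=\K^*_\mathscr{C}=\K$, using that $\CinC$ of the zero space is the scalar field. For $m^2$ I would expand the displayed chain: unfold the tensor by $*$-autonomy as $!_\mathscr{C}E\otimes_\mathscr{C}!_\mathscr{C}F=((!_\mathscr{C}E)^*_\mathscr{C}\parr_\mathscr{C}(!_\mathscr{C}F)^*_\mathscr{C})^*_\mathscr{C}$, simplify the inner double duals by $\mathscr{C}$-reflexivity of $\CinC(E),\CinC(F)$ (Corollary \ref{prop:Cref}), reduce $\parr_\mathscr{C}$ to $\varepsilon$ using $A\parr_\mathscr{C}B=((A\varepsilon B)^*_\mathscr{C})^*_\mathscr{C}$ together with idempotency of $(\cdot)^*_\mathscr{C}$, and finally apply the $\varepsilon$-form of Cartesian closedness from Lemma \ref{thm:CartesianClosedBasic}, namely $\CinC(E)\varepsilon\CinC(F)\simeq\CinC(E,\CinC(F))\simeq\CinC(E\times F)$. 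That $m^0,m^2$ satisfy the associativity, unit and symmetry coherences of a strong symmetric monoidal functor is precisely what the six commutative diagrams of the $\lambda$-categorical model (Definition \ref{LambdaModel}) encode: in the Proposition preceding Theorem \ref{lambdaTensortoLL} these diagrams were shown to turn $\CinC$ into a (lax, here strong) monoidal functor via the comparison $\mu$ built from $\Lambda$ and the unitors; dualizing with $(\cdot)^*_\mathscr{C}$ transports the whole coherence to $!_\mathscr{C}$.

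Finally, the commuting diagram relating $m^2$, $\Digg{}$ and the functorial action on the projections $\pi_1,\pi_2$ is the Seely coherence condition. I would obtain it directly from Theorem \ref{th:CRefSeely}, where via \cite[Prop 25 p 149]{Mellies} the category $\Cref$ was identified as a Seely category: that statement packages exactly the compatibility of the strong monoidal $!_\mathscr{C}$ with the comonad comultiplication, which is this diagram. I expect this last diagram to be the main obstacle should one insist on an elementary verification, since it is where the multiplicative layer $\otimes_\mathscr{C}$ and the exponential layer $!_\mathscr{C}$ interact: one must chase $\delta_E$ simultaneously through the Cartesian-closed isomorphisms defining $m^2$ and through the comultiplication $!_\mathscr{C}(\delta)$, which requires careful use of the naturality of $\delta$ and of $\Xi,\Lambda$. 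The cleanest route is therefore to invoke the abstract Seely-category output of Theorem \ref{th:CRefSeely} rather than to recompute the diagram by hand; a direct check would reduce, by the projective-limit and double-dualization arguments used throughout, to the corresponding coherence in the convenient setting $\mathscr{C}-\Mc_\infty$, where it is standard.
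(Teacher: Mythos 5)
Your proposal is correct and takes essentially the same route as the paper, which presents this proposition as a summary of structure already obtained: the comonad is read off from the adjunction $!_\mathscr{C}\dashv U$, the comparison $m^2$ is unfolded through the same chain of double dualities and the Cartesian-closedness isomorphism of Lemma \ref{thm:CartesianClosedBasic}, and the monoidal coherences together with the final Seely diagram are delegated to the abstract output of Theorem \ref{th:CRefSeely} via the cited result of Melli\`es. No gap to report.
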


\noindent Moreover, the comonad induces a structure of bialgebra on every space $!_\mathscr{C}E$ and this will be crucial to obtain models of DiLL \cite{Ehrhard16}. We profit of this section for recalling how all the diagrams there not involving codereliction are satisfied. In general, we have maps giving a commutative comonoid structure (this is the coalgebra part of the bialgebra, but it must not be confused with the coalgebra structure from the comonad viewpoint):

\begin{itemize}
\item $ {\Contr E} : \occ E \rightarrow \occ ( E \times E)\simeq \occ E \otimes \occ E  $ given by  $ {\Contr E}=(m^2_{E,E})^{-1}\circ !_\mathscr{C}(\Delta_E)$ with $\Delta_E(x)=(x,x)$ the canonical diagonal map of the Cartesian category.
\item
${\Weak E}= (m^0)^{-1}\circ !_\mathscr{C}(n_E): \occ E \rightarrow \occ 0\simeq \R $  with $n_E:E\to 0$ the constant map, hence more explicitly ${\Weak E}(h)=h(1)$ for $h\in \occ E$ and $1\in \CinC(E)$ the constant function equal to $1$.
\end{itemize}

This is exactly the structure considered in \cite[Chap 4 \S 6]{Bierman} giving a Seely category (in his terminology a new-Seely category) the structure of a Linear category (called $\mathcal{L}^{!}_{\otimes}$-model in \cite{Fiore}) from his Definition  35 in his Thm 25. See also \cite[7.4]{Mellies} for a recent presentation. This especially also contains the compatibility diagrams of \cite[2.6.1]{Ehrhard16}. Especially, ${\Digg E}:(\occ E,{\Weak E},{\Contr E})\to (\occ \occ E,{\Weak {\occ E}},{\Contr {\occ E}})$ is a comonoid morphism as in \cite[2.6.3]{Ehrhard16}. Also $!_\mathscr{C}$ is given the structure of a symmetric monoidal endofunctor on $\Cref$, $(!_\mathscr{C},\mu^0,\mu^2)$ making ${\Weak E},{\Contr E}$ coalgebra morphisms. For instance,  $\mu^0:\R\to !_\mathscr{C}(\R)$ (the space of distributions) is given by \cite[Chap 4 Prop 20]{Bierman} as $!_\mathscr{C}(v_{\R})\circ m^0$, i.e. {$\mu^0(1)=\delta_1$ with $v_{\R}:0\to \R$ the map with $u_R(0)=1$.} 
By \cite{Bierman}, a Linear category with products is actually the same thing as a Seely category. This is what is called in \cite{Fiore} a $\mathcal{L}^{!}_{\otimes,\times}$-model. So far, this structure is available in the setting of Theorem \ref{lambdaTensortoLL}, and we will use it in this setting later.

As explained in \cite{Fiore}, the only missing piece of structure to get a bicomoid structure on every $!_\mathscr{C}E$ is a biproduct compatible with the symmetric monoidal structure, or equivalently a \textbf{Mon}-enriched symmetric monoidal category, where \textbf{Mon} is the category of monoids.
 This is what he calls a $\mathcal{L}^{!}_{\otimes,*}$-model.

His Theorem 3.1 then provides us with the two first compatibility diagrams in \cite[2.6.2]{Ehrhard16} and the second diagram in \cite[2.6.4]{Ehrhard16}.

In our case $ \bigtriangledown_E:E\times E\to E$ is the sum when seeing $E\times E=E\oplus E$ as coproduct and its unit $u:0\to E$ is of course the $0$ map. Hence $(\mathscr{C}-Mc,0,\times, u, \bigtriangledown;n,\Delta)$ is indeed a biproduct structure. And compatibility with the monoidal structure, which boils down to biadditivity of tensor product, is obvious. One gets cocontraction and coweakening maps:

\begin{itemize}
\item $ {\Cocontr E} : \occ E \otimes \occ E\simeq \occ(E\times E) \rightarrow \occ E$ is the convolution product, namely it corresponds to $\occ(\bigtriangledown_E)$. 
\item ${\Coweak E} : \R\simeq \occ(0) \rightarrow \occ E $ is given by ${\Coweak E} (1) = (ev_0)^*_\mathscr{C}$ with $ev_0=\CinC(u_E)$ i.e. $ev_0(f)=f(0)$. 
\end{itemize}
From \cite[Prop 3.2]{Fiore} $(\occ E,{\Contr E},{\Weak E}, {\Cocontr E},{\Coweak E})$ is a commutative bialgebra. The remaining first diagram in \cite[2.6.4]{Ehrhard16} is easy and comes in our case for $f\in \CinC(!E)$ from $${[\occ (\delta_E\circ u_E)]}(f)=\delta_{{\delta_0}}(f)={\delta_1}(\lambda\mapsto f(\lambda(\delta_0))=[\occ {\Coweak E}({\delta_1})](f)=[\occ {\Coweak E}(\mu^0(1))](f).$$

To finish checking the assumptions in \cite{Ehrhard16}, it remains to check the assumptions in 2.5 and 2.6.5. As \cite{Fiore} is a conference paper, they were not explicitly written there. 
 \begin{equation}
   \begin{array}{ccc}
          \xymatrix @R=0.8em @C=3pc 
     {E & \\ & \Excl E\ar[ul]_{\Der E} \\ \One\ar[ur]_{\Coweak E}\ar[uu]^0 &}
     &\quad&
     \xymatrix @R=0.8em @C=3pc
     {E & \\ & \Excl E\ar[ul]_{\Der E} \\
       \Tens{\Excl E}{\Excl E}\ar[ur]_{\Cocontr E}
          \ar[uu]^{\rho_E^{-1}\circ(\Tens{\Der E}{\Weak E})+\lambda_E^{-1}\circ(\Tens{\Weak E}{\Der E})} &}  
   \end{array}
 \end{equation}
 \begin{equation}
   \xymatrix @R=1.2em @C=3pc{
     \Excl E\ar[rr]^-{\Excl 0}
     \ar[dr]_-{\Weak E}&&\Excl E\\
     &\One\ar[ru]_-{\Coweak E}&
   }
  \quad
   \xymatrix @R=1.2em @C=3pc{
     \Excl E\ar[r]^-{\Excl{(f+g)}}\ar[d]_-{\Contr E}
     &\Excl F\\
     \Tens{\Excl E}{\Excl E}\ar[r]^-{\Tens{\Excl f}{\Excl g}}
     &\Tens{\Excl F}{\Excl F}\ar[u]_-{\Cocontr F}    
     }
 \end{equation}

The first is $ev_E^{-1}\circ (\epsilon^{\CinC}_E)^*_\mathscr{C}\circ (\CinC(u_E))^*_\mathscr{C}=ev_E^{-1}\circ (\CinC(u_E)\circ\epsilon^{\CinC}_E )^*_\mathscr{C}=ev_E^{-1}\circ ((u_E )^*_\mathscr{C})^*_\mathscr{C}=u_E=0$ as expected.
The second is $ev_E^{-1}\circ (\epsilon^{\CinC}_E)^*_\mathscr{C}\circ (\CinC(\bigtriangledown_E))^*_\mathscr{C}=ev_E^{-1}\circ (\CinC(\bigtriangledown_E)\circ\epsilon^{\CinC}_E )^*_\mathscr{C}=ev_E^{-1}\circ ((\bigtriangledown_E )^*_\mathscr{C})^*_\mathscr{C}\circ \Tens{(\epsilon^{\CinC}_E)^*_\mathscr{C}}{(\epsilon^{\CinC}_E)^*_\mathscr{C}}$ which gives the right value  since $\bigtriangledown_E=r_E^{-1}\circ(id_E\times n_E)+ \ell_E^{-1}\circ(n_E\times id_E).$

The third diagram comes from $n_Eu_E=0$ and the last diagram from $\bigtriangledown_Y\circ (f\times g)\circ \Delta_X=f+g$ which is the definition of the additive structure on maps.

\subsection{Comparison with the convenient setting of Global analysis and Blute-Ehrhard-Tasson}
\label{sec:ConvenientSetting}
In \cite{BET12}, the authors use the Global setting of convenient analysis \cite{FrolicherKriegel,KrieglMichor} in order to produce a model of Intuitionistic differential Linear logic. They work on the category $\Conv$ of convenient vector spaces, i.e. bornological Mackey-complete (separated) lcs, with continuous (equivalently bounded), linear maps as morphisms. Thus, apart for the bornological requirement, the setting seems really similar to ours. It is time to compare them.

First any bornological space has its Mackey topology, let us explain why $\mathscr{S}:\Conv\to \rRef$ is an embedding giving an isomorphic full subcategory (of course with inverse $(\cdot)_\mu$ on its image). Indeed, for $E\in \Conv$ we use 
Theorem \ref{MackeyCaractrhoRef} in order to see that $\mathscr{S}(E)\in \rRef$ and it only remains to note that $E'_\mu$ is Mackey-complete. 

As in Remark \ref{exValdivia2}, $E$ bornological Mackey-complete, thus ultrabornological, implies $E'_\mu$ and even $\mathcal{S}(E'_\mu)$ complete hence Mackey-complete (and $E'_c$ $k$-quasi-complete).

Said otherwise, the bornological requirement ensures a stronger completeness property of the dual than Mackey-completeness, the completeness of the space, our functor $(\widehat{(\cdot)'_\mu})'_\mu$ should thus be thought of as a replacement of the bornologification functor in \cite{FrolicherKriegel} and $((\cdot)^*_\rho)^*_\rho$ is our analogue of their Mackey-completion functor in \cite{KrieglMichor} (recall that their Mackey completion is what we would call Mackey-completion of the bornologification). Of course, we already noticed that we took the same smooth maps and $\mathscr{S}:\Conv_\infty\to \rRef_\infty$ is even an equivalence of categories. Indeed, $E\to E_{born}$ is smooth and gives the inverse for this equivalence. 

Finally note that $E^*_\rho\varepsilon F=L_\beta(E,F)$ algebraically if $E\in \Conv$ since $E^*_\rho\varepsilon F=L_\epsilon((E^*_\rho)'_c,F)=L_\RR(E,F)$ topologically and the space of continuous and bounded linear maps are the same in the bornological case. $L_\beta(E,F)\to L_\RR(E,F)$ is clearly continuous hence so is $\mathscr{S}(L_\beta(E,F))\to \mathscr{S}(L_\RR(E,F))=L_\RR(E,F)$. 

But the closed structure in $\Conv$ is given by $(L_\beta(E,F))_{born}$ which uses a completion of the dual and hence we only have %$(L_\beta(E,F))_{born}$, 
a lax closed functor property for $\mathscr{S}$, in form (after applying $((\cdot)^*_\rho)^*_\rho$) of a continuous map: \begin{equation}\label{ConvSettingClosedFunctor}\mathscr{S}((L_\beta(E,F))_{born})\to ((E^*_\rho\varepsilon F)^*_\rho)^*_\rho.\end{equation} Similarly, most of the linear logical structure is not kept by the functor $\mathscr{S}$.

\section{Models of DiLL}
\label{sec:dill}

Smooth linear maps in the sense of Frölicher are bounded but not necessarily continuous. Taking the differential at $0$ of functions in $\Cin(E,F)$ thus would not give us a morphisms in $\mathbf{k-Ref}$, thus we have no interpretation for the codereliction $\CODER$ of DiLL. We first introduce a general differential framework fitting Dialogue categories, and show that the variant of smooth maps introduce in section \ref{ksmooth} allows for a model of DiLL. 

\subsection{An intermediate notion : models of differential $\lambda$-Tensor logic.}
We refer to \cite{Ehr11,Ehrhard16} for surveys on differential linear logic.

According to Fiore and Ehrhard \cite{Fiore,Ehrhard16}, models of differential linear logic are given by Seely $*$-autonomous complete categories $\mathcal{C}$ %\red{[explain somewhere notation in general: Done in sec 6.5]}
 with a biproduct structure and either a creation operator natural transformation $\partial_E:  !E\o E\to !E$ or a creation map/codereliction natural transformation ${\Coder E}:E\to !E$ satisfying proper conditions. We recalled in subsection \ref{subsec:modelLL} the structure available without codereliction. Moreover, in the codereliction picture, one requires
the following diagrams to  commute %(\cite[2.2,(8),(9),(10)]{Ehr11} or 
\cite[2.5, 2.6.2,2.6.4]{Ehrhard16}: 

 \begin{equation}\label{Fior}
   \begin{array}{ccccc}
     \xymatrix @R=0.8em @C=3pc
     {E \ar[dr]^{\Coder E} \ar[dd]_0 & \\
       & \Excl E\ar[dl]^{\Weak E}\\ \One &}
     &\quad&
     \xymatrix @R=0.8em @C=3pc
     {E \ar[dr]^{\Coder E} \ar[dd]_{(\Tens{\Coder E}{\Coweak E})\circ \rho_E+(\Tens{\Coweak E}{\Coder E})\circ \lambda_E} & \\
     & \Excl E\ar[dl]^{\Contr E}\\  \Tens{\Excl E}{\Excl E} &}&\quad&\xymatrix@C=10pt{
& ! E \ar[dr]^{{\Der E}%\varepsilon
}& \\
E \ar[ur]^{{\Coder E}%{\sf coder}_E
} \ar[rr]_{id_E}&&E 
}
   \end{array}
 \end{equation}

\begin{equation}\label{FioreStrength}
\xymatrix@C=50pt{
E \otimes ! F \ar[r]^{{\Coder E} \otimes !F}
\ar[dr]_{E\otimes {\Der E}}& 
! E \otimes! F \ar[r]^{\mu^2_{E,F}}&
! (E \otimes F)\\
& E\otimes F \ar[ur]_{{\Coder {E \otimes F}}}
}
\end{equation}
 \begin{equation}\label{FioreChainRule}
 %\begin{array}{ccc}
   \xymatrix @R=1.2em @C=3pc{
     E\ar[r]^-{\Coder E}
     \ar[d]_-{\lambda_E}
     & \Excl E\ar[r]^-{\Digg E}
     & \Excl{\Excl E}\\
     \Tens\One E\ar[r]^-{\Tens{\Coweak E}{\Coder E}}
     &
     \Tens{\Excl E}{\Excl E}\ar[r]^-{\Tens{\Digg E}{\Coder{\Excl E}}}
     & \Tens{\Excl{\Excl E}}{\Excl{\Excl E}}\ar[u]_-{\Cocontr{\Excl E}}
     }
 %\end{array}
  \end{equation}

Then from \cite[Thm 4.1]{Fiore} (see also \cite[section 3]{Ehrhard16}) the creation operator  $\partial_E={\Cocontr E}\circ (\Tens{!E}{\Coder E})$

We again need to extend this structure to a Dialogue category context. In order to get a natural differential extension of Cartesian closed category, we use differential $\lambda$-categories from \cite{BuccEhrMan}. This notion gathers the maybe very general Cartesian differential categories of Blute-Cockett-Seely to Cartesian closedness, via the key axiom (D-curry), relating applications of the differential operator $D$ and the curry map $\Lambda$ for $f:C\times A\to B$ (we don't mention the symmetry of Cartesian closed category $(C\times C\times A)\times A\simeq (C\times A)\times (C\times A)$):
$$D(\Lambda(f)) =\Lambda\Big(D(f)\circ \langle(\pi_1\times 0_A),\pi_2\rangle\Big): (C\times C)\to [A, B].$$

 We also use $Diag(E)= E\times E$ the obvious functor. We also suppose that the Cartesian structure is a biproduct, a supposition that is equivalent to supposing a \textbf{Mon}-enriched category as shown by Fiore \cite{Fiore}.

The idea is that while $D$ encodes the usual rules needed for differential calculus%Cartesian Differential Category
, $d$ encodes the fact that we want the derivatives to be smooth, that is compatible with the linear duality structure we had before. 

\begin{definition}\label{DiffLambdaTensorDef}
A (resp. commutative) %$\lambda$-categorical 
model of differential $\lambda$-tensor logic is a (resp. commutative) $\lambda$-categorical model of  $\lambda$-tensor logic with dialogue category $(\mathcal{C}^{op},\parr_\mathcal{C},1_\mathcal{C}=\K,\neg)$ {with a biproduct structure compatible with the symmetric monoidal structure}, a Cartesian closed category $(\mathcal{M},\times,0,[\cdot,\cdot])$, which is a differential $\lambda$-category  with operator $D$ internalized as a natural transformation $D_{E,F}:[E,F]\to [Diag(E),F]$ (so that $D$ in the definition of those categories is given by $M(D_{E,F}):\mathcal{M}(E,F)\to \mathcal{M}(E\times E,F)$ with $M$ the basic functor to sets of the closed category $\mathcal{M}$). {We assume $U:\mathcal{C}\to \mathcal{M}$ {and $\neg:\mathcal{C}^{op}\to\mathcal{C}$}  are $\textbf{Mon}$-enriched functors.} We also assume given an internalized differential operator, given by a natural transformation $$d_{E,F}:NL(U(E))\parr_\mathcal{C}F\to NL(U(E))\parr_\mathcal{C}(\neg E \parr_\mathcal{C}F)$$ 
satisfying the following  commutative diagrams (with the opposite of the counit of the relative adjointness relation, giving a map in $\mathcal{C}$ written: $\epsilon^{NL}_E:\neg E\to NL(U(E))\equiv NL_E$) expressing compatibility of the two differentials. We have a first diagram in $\mathcal{M}$ :
\[
\xymatrix@C=20pt{
U\Big(NL_E\parr_\mathcal{C} F\Big)\ar[d]_{\Xi_{E,F}} \ar[r]^{U(d_{E,F})\ \ \ \ \ \ \ \ } &  U\Big(NL_E\parr_\mathcal{C}(\neg E \parr_\mathcal{C}F)\Big)\ar[rrr]^{U(NL_E)\parr_\mathcal{C}(\epsilon^{NL}_E \parr_\mathcal{C}F))\ \ \ \ } &&&U\Big(NL_E\parr_\mathcal{C}(NL_E \parr_\mathcal{C}F)\Big)\ar[d]|{[id_{U(E)},\Xi_{E,F}]\circ \Xi_{E,NL( E )\parr_\mathcal{C}F}} \\
[U(E),U(F)] \ar[r]^{D_{U(E),U(F)}}&[U(E\times E),U(F)]\ar[rrr]^{\Lambda^{\mathcal{M}}_{U(E), U(E),U(F)}%\circ\Xi_{E\times E, F}
}&&&[U(E),[U(E),U(F)]]
}
\]
and weak differentiation property diagram in $\mathcal{C}$ :
\[
\xymatrix@C=20pt{
NL(U(E))\parr_\mathcal{C} F\ar[d]_{\rho_{NL(U(E))}\parr_\mathcal{C}F} \ar[r]^{d_{E,F}\ \ \ \ } &  NL(U(E))\parr_\mathcal{C}(\neg E \parr_\mathcal{C}F)\ar[rr]^{Ass^{\parr_\mathcal{C}}_{NL(U(E)),\neg E,F}\ \ \ \ } &&(NL(U(E))\parr_\mathcal{C}\neg E )\parr_\mathcal{C} F\ar[d]|{NL(U(E))\parr_\mathcal{C}\rho_{\neg E}\parr_\mathcal{C}F} \\
(NL(U(E))\parr_\mathcal{C}\K)\parr_\mathcal{C} F \ar[rrr]^{d_{E,\K}\parr_\mathcal{C}F\quad}&&&(NL(U(E))\parr_\mathcal{C}(\neg E \parr_\mathcal{C}\K))\parr_\mathcal{C} F
}
\]
\end{definition}

The idea is that while $D$ encodes the usual rules needed for a Cartesian Differential Category, $d$ encodes the fact that we want the derivatives to be smooth, that is compatible with the linear duality structure we had before.

\begin{theorem}\label{DifflambdaTensortoDiLL}
$(\mathcal{C}^{op},\parr_\mathcal{C},I,\neg, \mathcal{M},\times ,0, NL,U,D,d)$ a Seely model of differential $\lambda$-tensor logic. Let $\mathcal{D}\subset  \mathcal{C}$ the full subcategory of objects of the form $\neg C, C\in \mathcal{C}$, equipped with $!=\neg\circ NL\circ U$ as comonad on $\mathcal{D}$ making it a $*$-autonomous complete and cocomplete Seely category with Kleisli category for $!$ isomorphic to $\mathcal{N}=U(\mathcal{D})$. With the dereliction $\Der {E}$ as in subsection \ref{subsec:modelLL} and the codereliction  interpreted by : ${\Coder E}=\neg\Big((NL(E)\parr_\mathcal{C}\rho_{\neg E}^{-1})\circ d_{E,\K}\circ \rho_{NL(E)}\Big)\circ (\Tens{\Coweak E}{Id_{E}})\circ {\lambda_E}$%il y avait problème de typage je crois
, this makes
$\mathcal{D}$ a model of differential Linear Logic.
\end{theorem}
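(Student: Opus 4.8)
The plan is to build directly on Theorem~\ref{lambdaTensortoLL}: since that result already equips $\mathcal{D}$ with the structure of a complete cocomplete $*$-autonomous Seely category, and since the bialgebra structure assembled in subsection~\ref{subsec:modelLL} already realizes every coherence diagram of \cite{Ehrhard16} that does \emph{not} involve the codereliction, the only thing left to establish is that $\Coder{E}$ as defined is a natural transformation $E\to \oc E$ for which the three families of diagrams \eqref{Fior}, \eqref{FioreStrength} and \eqref{FioreChainRule} commute. The guiding principle throughout is that an equality of morphisms with target $\oc E=\neg NL(U(E))$ may be tested after applying the faithful functors $U$ and $\neg$ (faithfulness of $\neg$ being Remark~\ref{FaithfullnessLambdaTensor}), so that each diagram living in $\mathcal{D}$ or $\mathcal{C}$ is reduced, through the compatibility squares of Definition~\ref{DiffLambdaTensorDef} relating $d$, $D$, $\Xi$ and $\Lambda^{\mathcal{M}}$, to an identity in the Cartesian differential $\lambda$-category $(\mathcal{M},\times,0)$, where it follows from the axioms of \cite{BuccEhrMan,Fiore}.

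First I would record naturality of $\Coder{\cdot}$: it is a composite of the natural transformation $d_{E,\K}$, the coweakening $\Coweak{E}$, and structural isomorphisms, all natural, post-composed with the functor $\neg$, hence natural. Next come the two triangles of \eqref{Fior} that do not mix tensor factors. For $\Der{E}\circ \Coder{E}=id_E$ I would feed the definition of $\Coder{E}$ into the first compatibility square of Definition~\ref{DiffLambdaTensorDef}, which converts the internal differential $d$ into $D_{U(E),U(F)}$ followed by currying; combined with the description of $\Der{E}$ from subsection~\ref{subsec:modelLL} as built from the counit $\epsilon^{NL}_E$ of the relative adjunction, and with the Cartesian-differential-category axiom that the differential of a linear map returns that map, the composite collapses to the identity. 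For $\Weak{E}\circ \Coder{E}=0$ I would use that weakening is $\oc$ of the constant map (so it amounts to evaluation at the constant function $1$) together with the weak differentiation property square of Definition~\ref{DiffLambdaTensorDef}; the point is simply that the derivative annihilates constants, the $\Coweak{E}$ in the definition of $\Coder{E}$ pinning the base point to $0$.

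The Leibniz diagram (second diagram of \eqref{Fior}) and the strength diagram \eqref{FioreStrength} I would handle together. The former reduces, after unfolding $\Contr{E}=(m^2_{E,E})^{-1}\circ\oc(\Delta_E)$ and transporting through $U$, to the additivity/Leibniz axiom for $D$ applied to the diagonal: the two summands $(\Coder{E}\otimes\Coweak{E})\circ\rho_E$ and $(\Coweak{E}\otimes\Coder{E})\circ\lambda_E$ are precisely the two partial derivatives produced by $D(f\circ\Delta_E)$, and the $\textbf{Mon}$-enrichment of $U$ and $\neg$ assumed in Definition~\ref{DiffLambdaTensorDef} is what allows the sum on the right to be formed and matched. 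The diagram \eqref{FioreStrength} follows from naturality of $d$ in its second variable together with the compatibility of $d$ with the monoidal structure map $\mu^2$, and reduces in $\mathcal{M}$ to the statement that differentiating in the $E$-slot of $E\otimes\oc F$ commutes with $\mu^2$, i.e. to naturality of $D$ along maps of the shape $\Coder{E}\otimes\oc F$.

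The main obstacle I anticipate is the chain rule \eqref{FioreChainRule}. It is the one diagram that genuinely couples the two internalized differentials: $D$, through the chain-rule axiom of Cartesian differential categories, and $d$, through the (D-curry) axiom relating $D$ and the curry isomorphism $\Lambda^{\mathcal{M}}$, while simultaneously unwinding the comultiplication $\Digg{E}=\oc(\delta_E)$ and the codereliction $\Coder{\oc E}$ at the larger object $\oc E$. My strategy would be to apply $U$ and the faithfulness of $\neg$ to move the whole square into $\mathcal{M}$, then rewrite $\oc(\delta_E)$ and $\Coder{\oc E}$ in terms of $D$ and $d$ via the first compatibility square, and finally invoke the chain rule for $D$ together with (D-curry) to match the two legs; the bulk of the difficulty is the coherence bookkeeping among associators, unitors and strength maps appearing in the lower path, which I would organize by repeatedly using the intertwining squares for $\Xi$ and $\Lambda$ already assumed in Definitions~\ref{LambdaModel} and~\ref{DiffLambdaTensorDef}. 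Once \eqref{FioreChainRule} is secured, invoking \cite[Thm 4.1]{Fiore} as recalled before Definition~\ref{DiffLambdaTensorDef} promotes the data to a creation operator and exhibits $\mathcal{D}$ as a model of differential linear logic.
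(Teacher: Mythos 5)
Your overall reduction mechanism coincides with the paper's: everything not involving the codereliction is inherited from Theorem~\ref{lambdaTensortoLL} and subsection~\ref{subsec:modelLL}, and each remaining diagram is tested after applying the faithful functors $U$ and $\neg$ and translated, through the compatibility squares of Definition~\ref{DiffLambdaTensorDef}, into an identity in the Cartesian differential $\lambda$-category $\mathcal{M}$. Where you genuinely diverge is in \emph{which} diagrams get verified. The paper first records two reusable internalization lemmas (its Steps 1--2: (D-curry) and the chain rule D5 lifted to commutative diagrams in $\mathcal{C}$), and then works throughout with the creation operator $\partial_E=\neg\bigl((NL(E)\parr_\mathcal{C}\rho_{\neg E}^{-1})\circ d_{E,\K}\circ \rho_{NL(E)}\bigr)$ rather than with $\Coder{E}$ itself: it checks Fiore's creation-operator diagrams (15) and (16), each of which is an instance of the internalized chain rule applied to a specific morphism ($n_E$ for weakening, $\Delta_E$ for the Leibniz rule, $\bigtriangledown_E$ and $\delta_E$ for (16), the last with the key identification $H=\epsilon^{\neg}$ coming straight from the definition of the Dirac map), and then lets \cite[Thm 4.1]{Fiore} deliver \eqref{FioreChainRule} and \eqref{FioreStrength} for free --- the strength diagram being redundant by Fiore's note added in proof. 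You instead attack \eqref{Fior}, \eqref{FioreStrength} and \eqref{FioreChainRule} head-on and only invoke Fiore's bijection at the very end, in the opposite direction. This is workable in principle (the paper itself remarks that the direct check ``in the same vein'' is possible), and your treatments of $\Weak{E}\circ\Coder{E}=0$, of $\Der{E}\circ\Coder{E}=\mathrm{id}$ via D3, and of the Leibniz triangle via additivity of $D$ in its linear argument all match the paper's underlying computations.

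There is, however, one step that would not go through as you sketch it. The diagram \eqref{FioreStrength} is not ``naturality of $D$ along maps of the shape $\Coder{E}\otimes{!F}$'': it asserts a genuine compatibility between the codereliction and the lax monoidal structure $\mu^2$ of the endofunctor $!$ (built from the Seely isomorphisms and the strength), i.e.\ it is exactly Fiore's diagram (14), and no amount of naturality of $D$ alone produces it. The clean escape is precisely the redundancy result you are not using: given the other axioms, (14) follows automatically. Without that observation your plan has a hole at this diagram; with it, the diagram should simply be dropped from your to-do list. I would also recommend adopting the paper's ordering --- prove the two internalization lemmas once and then specialize --- since otherwise the coherence bookkeeping you rightly anticipate in \eqref{FioreChainRule} has to be redone from scratch for each diagram.
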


\begin{proof}

The setting comes from Theorem \ref{lambdaTensortoLL} giving already a model of Linear logic. Recall from subsection \ref{subsec:modelLL} that we have already checked all diagrams not involving codereliction. We can and do fix $E=\neg C$ so that $\epsilon^{\neg\neg}_E:\neg\neg E\to E$ is an isomorphism that we will ignore safely in what follows.
\setcounter{Step}{0}

\begin{step} Internalization of D-curry from \cite{BuccEhrMan}\end{step}
Let us check:
\begin{align*}
\xymatrix@C=30pt{
 NL(E)\parr NL(E) \parr(\neg E \parr\K)
& 
NL( E) \parr (NL( E) \parr\K)\ar[l]_{\quad NL(E)\parr d_{E,\K}}&&
NL( E\times E)\parr \K \ar[ll]_{\quad \Lambda_{E,E,\K}}\ar[d]_{d_{E\times E,\K}}\\
& NL( E\times E)\parr(\neg E\parr \K) \ar[ul]^{ \Lambda_{E,E,\neg E\parr \K}}&&NL( E\times E)\parr((\neg E)^{\oplus 2})\parr \K) \ar[ll]_{NL( E\times E)\parr(\pi_2\parr \K)}
}
\end{align*}
Indeed using compatibility with symmetry from definition of $\lambda$-categorical models, it suffices to check a flipped version with the derivation acting on the first term. Then applying the faithful $U$, intertwining with $\Xi$ and using the compatibility with $D_{E,F}$ the commutativity then follows easily from  D-curry.

\begin{step} Internalization of chain rule D5 from \cite{BuccEhrMan}\end{step}
For $g\in \mathcal{M}(U(E),U(F))=\mathcal{M}(0,[U(E),U(F)])\simeq\mathcal{M}(0,U(NL(E)\parr_\mathcal{C} F)\simeq \mathcal{C}(\neg(NL(0)),NL(E)\parr_\mathcal{C} F\parr_\mathcal{C}K) $, which gives a map $h:\neg(\K)\to NL(E)\parr_\mathcal{C} F\parr_\mathcal{C} \K$. One gets $d_{E,F}\circ h:\neg(\K)\to NL(E)\parr_\mathcal{C}\neg E\parr_\mathcal{C} F\parr_\mathcal{C} \K$ giving by characteristic diagram of dialogue categories (for $\mathcal{C}^{op}$, recall our maps are in the opposite of this dialogue category) a map $dH:\neg F\to NL(E)\parr_\mathcal{C}\neg E\parr_\mathcal{C} \K$%  of dialogue categories
. We leave as an exercise to the reader to check that D5 can be rewritten as before:
\[   \xymatrix @R=1.2em @C=2pc{
      {NL(U(E))}\parr_\mathcal{C} (\neg E\parr_\mathcal{C}\K)
          & NL(U(E))\parr_\mathcal{C}\K\ar[l]_{\ \ \ \  d_{E,\K}}
     & \ar[l]_{NL(g)\parr_\mathcal{C}\K} NL(U(F))\parr_\mathcal{C}\K\ar[d]_-{ d_{F,\K}}\\
     {NL(U(E))}\parr_\mathcal{C}{NL(U(E))}\parr_\mathcal{C}  (\neg E\parr_\mathcal{C}\K)\ar[u]_-{\Big(
     NL(\Delta_{U(E)})\Lambda_{E,E}^{-1}%(\epsilon^\neg\parr\epsilon^\neg)
     \big)\parr_\mathcal{C} (\neg E\parr_\mathcal{C}\K)}
     &&{NL(U(F))}\parr_\mathcal{C}{(\neg (F)\parr_\mathcal{C}\K)}\ar[ll]^-{NL(g)\parr_\mathcal{C}dH}
     }
\]

Note that we can see $dH$ in an alternative way using our weak differentiation property. Composing with a minor isomorphism, if we see $h:\neg(\K)\to (NL(E)\parr_\mathcal{C} \K)\parr_\mathcal{C}F $ then one can consider $(d_{E,\K}\parr F)\circ h$ and it gives $d_{E,F}\circ h$ after composition by a canonical map. But if $H:\neg(F)\to (NL(E)\parr_\mathcal{C} \K)$ is the map associated to $h$ by the map $\varphi$ of dialogue categories, the naturality of this map gives exactly $dH= d_{E,\K}\circ H.$ Note that if $g=U(g')$, by the naturality of the isomorphisms giving $H$, it is not hard to see that $H=\epsilon^{NL}_{E}\circ\neg (g').$

\begin{step}Two first diagrams in  \eqref{Fior}.\end{step}

For the first diagram,% we need to compute $(m^0)^{-1}\circ \neg(NL(n_E))\circ \neg\Big((NL(E)\parr_\mathcal{C}\rho_{\neg E}^{-1})\circ d_{E,\K}\circ \rho_{NL(E)}\Big)\circ (\Tens{\Coweak E}{Id_{E}})\circ \rho_E=0$, so that 
by functoriality, it suffices to see $d_{E,\K}\circ (NL(n_E)\parr \K)=0.$ Applying step 2 to $g=n_E$, one gets $H=u_{NL(E)\parr \K}$ hence $dH= d_{E,\K}\circ H=0$ as expected thanks to axiom D1 of \cite{BuccEhrMan} giving $D(0)=0.$

For the second diagram, we compute ${\Contr E}{\Coder E}=(m^2_{E,E})^{-1}\circ \neg(NL(\Delta_{U(E)}))\circ \neg\Big((NL(u_E)\parr_\mathcal{C}\rho_{\neg E}^{-1})\circ d_{E,\K}\circ \rho_{NL(E)}\Big)%\circ (\Tens{\Coweak E}{Id_{E}})
\circ {\lambda_E}$. We must compute $[NL(u_E)\parr_\mathcal{C}(\neg E\parr_\mathcal{C}\K) ]\circ d_{E,\K}\circ (NL(\Delta_{U(E)})\parr \K)$ using step 2 again with $g=\Delta_{U(E)}=U(\Delta_E),$ hence $H=\epsilon^{NL}_{E}\circ\neg (\Delta_E)=\epsilon^{NL}_{E}\circ\bigtriangledown_{\neg E}.$

Using \eqref{FirstFiore15} below, one gets $dH=( NL(n_E)\parr \bigtriangledown_{\neg E}\parr \K)\circ Isom$, so that, using $Isom\circ(\Delta_{U(E)}\times n_E)\circ \Delta_{U(E)}=\Delta_{U(E)} $, one obtains $$CD_E:=[NL(u_E)\parr_\mathcal{C}(\neg E\parr_\mathcal{C}\K) ]\circ d_{E,\K}\circ (NL(\Delta_{U(E)})\parr \K)=(NL(\Delta_{U(E)}\circ u_E)\parr\triangledown_{\neg E} \parr \K)\circ d_{E^2,\K}.$$
Hence,  noting that by naturality $NL(0)\parr\triangledown_{\neg E} \parr \K)=\triangledown_{NL(0)\parr\neg E\parr \K}$ and using the formula in step 1, \begin{align*}CD_E&\Lambda^{-1}_{E,E,\K}=\triangledown_{NL(0)\parr\neg E\parr \K}\bigoplus_{i=1,2}(NL( u_{E^2})\parr\pi_i \parr \K)\circ d_{E^2,\K}\Lambda^{-1}_{E,E,\K}\\&=\triangledown_{NL(0)\parr\neg E\parr \K}\Big[( NL(u_E)\parr_\mathcal{C}\Big[(NL(u_E)\parr \neg E\parr \K))d_{E,\K}\Big]) ,( \Big[(NL(u_E)\parr \neg E\parr \K))d_{E,\K}\Big]\parr_\mathcal{C}NL(u_E))\Big].\end{align*}

On the other hand, we can compute $(\Tens{\Coweak E}{\Coder E})\circ \lambda_E=\neg\Big((NL(u_E)\parr_\mathcal{C}\rho_{\neg E}^{-1})\circ\lambda^{-1} \circ( NL(u_E)\parr_\mathcal{C}d_{E,\K})\circ (NL(E)\parr_\mathcal{C}\rho_{NL(E)})\Big)%\circ (\Tens{\Coweak E}{Id_{E}})
\circ\lambda_E.$
%${\Coweak E} (1) = \neg(ev_0) =\neg(NL(u_E)) =!u_E$ \neg (\lambda)=\lambda^{-1}
%But $U(id_{\neg E\times \neg E})=\langle \pi_1,\pi_2\rangle= \pi_1\times \pi_2\circ \Delta_{U(\neg E)^2}$
From the symmetric computation, one sees (in using $\neg$ is additive) that our expected equation reduces to proving the formula which reformulates our previous result: $$CD_E\circ \Lambda^{-1}_{E,E,\K}=\lambda^{-1} \circ( NL(u_E)\parr_\mathcal{C}\Big[(NL(u_E)\parr \neg E\parr \K))d_{E,\K}\Big])+\rho^{-1} \circ( \Big[(NL(u_E)\parr \neg E\parr \K))d_{E,\K}\Big]\parr_\mathcal{C}NL(u_E))$$

\begin{step} Final Diagrams for codereliction.\end{step}
To prove \eqref{FioreChainRule},\eqref{FioreStrength}, one can use \cite[Thm 4.1]{Fiore} (and the note added in proof making (14) redundant, but we could also check it in the same vein as below using step 2) and only check (16) and the second part of his diagram (15) on $\partial_E=\neg\Big((NL(E)\parr_\mathcal{C}\rho_{\neg E}^{-1})\circ d_{E,\K}\circ \rho_{NL(E)}\Big)$. Indeed, 
our choice ${\Coder E}=(\partial_E)\circ (\Tens{\Coweak E}{Id_{E}})\circ {\lambda_E}$ is exactly the direction of this bijection producing the codereliction.

 One must check:
 \[   \xymatrix @R=1.2em @C=3pc{
      {\Excl E}\o E\ar[r]^-{\partial_E}
     \ar[d]_-{{\Contr E}\o E}
     & \Excl E\ar[r]^-{\Digg E}
     & \Excl{\Excl E}\\
     {\Excl E}\o{\Excl E}\o E\ar[rr]^-{\Tens{\Digg E}{\partial_E}}
     &&\Tens{\Excl{\Excl E}}{{\Excl E}}\ar[u]_-{\partial_{\Excl E}}
     }
\]
and recall 
$ {\Contr E}=(m^2_{E,E})^{-1}\circ \neg(NL(\Delta_E)),\ $
${\Digg E}=\neg(NL(\delta_E)),$
%${\Coweak E} (1) = \neg(ev_0) =!u_E$
and $(m^2_{E,E})^{-1}=\neg (\Lambda_{E,E}^{-1}(\epsilon^\neg\parr\epsilon^\neg))$,with $\Lambda_{E,E}^{-1}=\rho_{NL(E\times E)}\Lambda_{E,E,\K}^{-1}(NL(E)\parr \rho_{NL( E)})$, $\epsilon^\neg: \neg\neg E\to E$ the counit of self-adjunction.

%$ {\Cocontr E}=\occ(\bigtriangledown_E)\circ m^2_{E,E}$
Hence our diagram will be obtained by application of $\neg$ (after intertwining with $\rho$) if we prove:
\[   \xymatrix @R=1.2em @C=2pc{
      {NL(U(E))}\parr_\mathcal{C} (\neg E\parr_\mathcal{C}\K)
          & NL(U(E))\parr_\mathcal{C}\K\ar[l]_{\ \ \ \  d_{E,\K}}
     & \ar[l]_{NL(\delta_E)\parr_\mathcal{C}\K} NL(U({\Excl E}))\parr_\mathcal{C}\K\ar[d]_-{ d_{{\Excl E},\K}}\\
     {NL(U(E))}\parr_\mathcal{C}{NL(U(E))}\parr_\mathcal{C}  (\neg E\parr_\mathcal{C}\K)\ar[u]_-{\Big(
     NL(\Delta_E)\Lambda_{E,E}^{-1}%(\epsilon^\neg\parr\epsilon^\neg)
     \big)\parr_\mathcal{C} (\neg E\parr_\mathcal{C}\K)}
     &&{NL(U({\Excl E}))}\parr_\mathcal{C}{(\neg ({\Excl E})\parr_\mathcal{C}\K)}\ar[ll]^-{NL(\delta_E)\parr_\mathcal{C}{(d_{E,\K}}\circ \epsilon^\neg)}
     }
\]

This is the diagram in step 2 for $g=\delta_E$ if we see that $dH=(d_{E,\K}\circ \epsilon^\neg)$. For, it suffices to see $H= \epsilon^\neg$, which is essentially the way $\delta_E$ is defined as in proposition \ref{lem:evsmooth}.

%Applying the faithful functor $U$ and intertwining with $\Xi$ this reduces to the chain rule D5 for $D$ applied for $f\in [U(!E),U(\K)]$ to $f\circ \delta_E.$

We also need to check the diagram \cite[(16)]{Fiore}%: 
%\begin{align*}
%\xymatrix@C=50pt{
%  !E \otimes! E \otimes E \ar[r]^{!E  \otimes \partial_E}
%\ar[dr]_{\Tens{\Cocontr E}{E}}& 
%! E \otimes! E \ar[r]^{\Cocontr E}&
%! E \\
%& !E\otimes E \ar[ur]_{\partial_E}
%}
%\end{align*}
which will follow if we check the (pre)dual diagram:
\begin{align*}
\xymatrix@C=50pt{
 NL(E\times E) \parr(\neg E \parr\K)
& 
NL( E) \parr (NL( E) \parr\K)\ar[l]_{\quad NL(E)\parr d_{E,\K}}&
NL( E\times E)\parr \K \ar[l]_{\quad \Lambda_{E,E,\K}}\\
& NL(E)\parr (\neg E \parr \K) \ar[ul]^{ \Lambda_{E,E,\neg E\parr \K}\circ(NL(U(\bigtriangledown_E))\parr(\neg E \parr \K))\qquad \qquad}& NL( E)\parr \K\ar[l]_{d_{E,\K}}\ar[u]_{ \ NL(U(\bigtriangledown_E))\parr \K}
}
\end{align*}

Using step 1 %, it reduces to 
%\begin{align*}
%\xymatrix@C=30pt{
% NL(E \times E) \parr(\neg E \parr\K)
%&&
%NL( E\times E)\parr((\neg E)^{\oplus 2})\parr \K)N\ar[ll]_{NL( E\times E)\parr(\pi_2\parr \K)\quad}&
%NL( E\times E)\parr \K \ar[l]_{\qquad\quad d_{E\times E,\K}}\\
%&& NL(E)\parr (\neg E \parr \K) \ar[ull]^{ (NL(U(\bigtriangledown_E))\parr(\neg E \parr \K))\qquad \qquad}& NL( E)\parr \K\ar[l]_{d_{E,\K}}\ar[u]_{ \ NL(U(\bigtriangledown_E))\parr \K}
%}
%\end{align*}
and step 2 with $g=U(\bigtriangledown_E)$, it reduces to:
\begin{align*}
\xymatrix@C=60pt{
 NL(E^2) \parr(\neg E \parr\K)
&
NL( E^2)\parr((\neg E)^{\oplus 2})\parr \K)\ar[l]_{NL( E^2)\parr(\pi_2\parr \K)\quad}
\\
 NL(E)\parr (\neg E \parr \K) \ar[u]^{ (NL(U(\bigtriangledown_E))\parr(\neg E \parr \K))}\ar[r]^{NL(U(\bigtriangledown_E))\parr dH \qquad \qquad}& NL( E^2)\parr NL( E^2)\parr((\neg E)^{\oplus 2})\parr \K) \ar[u]_{(
     NL(\Delta_{E^2})\Lambda_{E^2,E^2}^{-1}
     )\parr_\mathcal{C} ((\neg E)^{\oplus 2}\parr_\mathcal{C}\K)}\qquad
}
\end{align*}
Recall that here, from step 2, $dH= d_{E^2,\K}\circ H .$ In our current case, we noticed that $H=\epsilon^{NL}_{E^2}\circ\neg (\bigtriangledown_E).$ Using 
\eqref{FirstFiore15} with $E^2$ instead of $E$, and $Isom\circ (id_{E^2}\times n_{E^2})\Delta_{E^2}=id_{E^2}$, the right hand side of the diagram we must check reduces to the map $NL(U(\bigtriangledown_E))\parr (\pi_2\circ\neg ( \bigtriangledown_E))\parr\K = NL(U(\bigtriangledown_E))\parr \neg E\parr\K$ as expected, using only the defining property of $\bigtriangledown_E$ from the coproduct.

Let us turn to proving the first diagram in \cite[(15)]{Fiore}, which will give at the end  $\Coder E \circ \Der E = Id_E .$
%It states that $$\xymatrix@C=10pt{
%& ! E \ar[dr]^{{\Der E}%\varepsilon
%}& \\
%\Tens{!E}{ E} \ar[ur]^{\partial
%} \ar[rr]_{\rho_E^{-1}\circ\Tens{\Weak E}{E}}&&E 
%}$$
Modulo applying $\neg$ and intertwining with canonical isomorphisms, it suffices to see:
%will give at the end  $\Coder E \circ \Der E = Id_E .$
\begin{equation}\label{FirstFiore15}\xymatrix@C=80pt{
& NL(U( E))\parr \K \ar[dl]_{d_{E,\K}}
& \\
NL(U(E))\parr \neg E\parr \K  & NL(U(0))\parr \neg E\parr \K\ar[l]^{NL(n_E)\parr \neg E\parr \K\quad }&\neg E\parr \K \ar[ul]_{\epsilon^{NL}_E\parr \K%\varepsilon
}\ar[l]_{\simeq}
}\end{equation}

For it suffices to get the diagram after precomposition by any $h:\neg \K\to \neg E$ (using the $\mathcal{D}$ is closed with unit $\neg\K$ for the closed structure). Since $E\in \mathcal{D}$ this is the same thing as $g=\neg h: E\simeq \neg\neg E\to \K$ so that one can apply naturality in $E$ of all the maps in the above diagram.
%${\Der E}=ev_E^{-1}\circ (\epsilon^{\CinC}_E)^*_\mathscr{C}$
%${\Weak E}= (m^0)^{-1}\circ !_\mathscr{C}(n_E): \occ E \rightarrow \occ 0\simeq \R $
% For instance for $d_{E,F}$, for $g\in \mathcal{C}(E,\K)$ this writes down as $ d_{E,\K}\circ (NL(U(g))\parr \K)=((NL(U(g))\parr \neg g\parr \K)\circ d_{\K,\K}$. We also have $\epsilon_E^{NL}\circ \neg g=NL(U(g))\circ \epsilon_{\neg \K}.$
This reduces the diagram to the case $E=\K$.

But from axiom D3 of \cite{BuccEhrMan}, we have $D(Id_{U(E)}) = \pi_2$, projection on the second element of a pair, for $E \in \mathcal{C}$. When we apply the compatibility diagram between $D$ and $d_{E,E}$ to $\Der {\neg E}$, which corresponds through $\Xi$ to $Id_U(E)$, we have (for $\pi_2\in M(E\times E,E)$ the second projection): 
$$M\Big[ Isom\circ U\big(NL_E\parr_\mathcal{C}(\epsilon^{NL}_E \parr_\mathcal{C}E)\big) \circ U(d_{E,E}) \circ U((\epsilon^{NL}_E\parr \K)\circ(I_E))\Big] =   \pi_2$$

Here we used $I_E:\neg \K\to \neg E\parr E$ used from the axiom of dialogue categories corresponding via $\varphi$ to $id_{\neg E}$ and where we use { $M(\Xi_{E,E}\circ (U((\epsilon^{NL}_E\parr \K)\circ (I_E))))=Id_{U(E)}$. This comes via naturality for $\varphi$ from the association via $\varphi$ of $(\epsilon^{NL}_E\parr \K)\circ (I_E)$ to the map $\epsilon^{NL}_E:\neg E\to NL( E)$, and then from the use of the compatibility of $\Xi$ with adjunctions in definition \ref{LambdaModel} jointly with the definition of $\epsilon^{NL}$ as counit of adjunction, associating it to $Id_{U(E)}$}. Thus applying this to $E=\K$ and since we can always apply the faithful functors $U,M$ to our  relation and compose it with the monomorphism applied above after $U(d_{E,E})$ and on the other side to $U(I_\K)\simeq Id,$ it is easy to see that the second composition is also $\pi_2.$ 
\end{proof}

\subsection{A general construction for DiLL models}
\label{sec:GeneralDiLL}

Assume given the situation of Theorem \ref{lambdaTensortoLL}, with $\mathcal{C}$ having a biproduct structure with $U,\neg$ $\textbf{Mon}$-enriched and assume that $\mathcal{M}$ is actually given the structure of a differential $\lambda$-category  with operator internalized as a natural transformation $D_{E,F}:[E,F]\to [Diag(E),F]$ (so that $D$ in the definition of those categories is given by $M(D_{E,F}):\mathcal{M}(E,F)\to \mathcal{M}(E\times E,F)$ with $M$ the basic functor to sets of the closed category $\mathcal{M}$) and $U$ bijective on objects. Assume also that there is a map $D'_{E,F}:NL(E)\parr_\mathcal{C} F\to NL(E\times E)\parr_\mathcal{C} F$ in $\mathcal{C}$, {natural in $E$} such that 
$$\Xi_{E\times E,F}\circ U(D'_{E,F})= D_{U(E),U(F)}\circ \Xi_{E,F}.$$

and \begin{equation}\label{DprimeWeak}
D'_{E,F}=\Big(\rho_{NL(E^2)}^{-1}\circ D'_{E,\K}\circ \rho_{NL(E)}\Big)\parr_\mathcal{C} F.
\end{equation}
 Our non-linear variables are the first one after differentiation.
 
{We assume %that the endofunctors $\cdot\parr_\mathcal{C}E$ on $\mathcal{C}$  for $E\in \mathcal{C}$ have  left adjoints $T_E$ with $T_E(\neg \K)\simeq \neg E$. Hence
 $\parr_\mathcal{C}$ commutes with limits and finite coproducts  in $\mathcal{C}$ and recall from remark \ref{FaithfullnessLambdaTensor} that it preserves monomorphisms and that $\neg$ is faithful. Note that since $\mathcal{C}$ is assumed complete and cocomplete, it has coproducts $\oplus=\times$, by the biproduct assumption, and that $\neg(E\times F)=\neg(E)\oplus \neg(F)$ since $\neg: \mathcal{C}^{op}\to \mathcal{C}$ is left adjoint to its opposite functor  $\neg :\mathcal{C}\to \mathcal{C}^{op}$ which therefore preserves limits.} 
We will finally need the following :
\begin{equation}\label{epsionNL0}
\xymatrix@C=40pt{
\neg(E\times F)\parr_\mathcal{C} G\ar[d]|{\simeq}\ar[r]^{\ \ \ \ \epsilon^{NL}_{E\times F}\parr_\mathcal{C} G%\Lambda^{-1}_{E,F,G}
  }  &  NL(E\times F)\parr_\mathcal{C} G\ar[rr]^{\qquad NL(U((id_E\times 0_F)\circ r))\parr_\mathcal{C} G\ \ \ \ \ \ \ }
&&NL(E)\parr_\mathcal{C} G\\
\Big(\neg E\parr_\mathcal{C} G\Big)\times\Big(\neg F\parr_\mathcal{C} G\Big) \ar[rrr]^{\pi_1} & & &\neg E\parr_\mathcal{C} G\ar[u]|{\epsilon^{NL}_{E}\parr_\mathcal{C} G } }
\end{equation}

This reduces to the case $G=\K$ by functoriality and then, this is a consequence of naturality of $\epsilon^{NL}$ since the main diagonal of the diagram taking the map via the lower left corner is nothing but $\neg((id_E\times 0_F)\circ r)$ with $r:E\to E\times 0$ the right unitor for the Cartesian structure on $\mathcal{C}.$

We want to build from that data a new category $\mathcal{M}_\mathcal{C}$ giving jointly with $\mathcal{C}$ the structure of a model of differential $\lambda$-tensor logic.

$\mathcal{M}_\mathcal{C}$ has the same objects as $\mathcal{M}$ (and thus as $\mathcal{C}$ too) but new morphisms that will have as derivatives maps from $\mathcal{C}$, or rather from its continuation category. Consider the category $Diff_{\N}$ with objects $\{0\}\times \N\cup \{1\} \times\N^*$ generated by the following family of morphisms without relations: one morphism $d=d_i:(0,i)\to (0,i+1)$ for all $i\in \N$ which will be mapped to a differential and one morphism $j=j_i:(1,i+1)\to (0,i+1)$ for all $i\in \N$ which will give an inclusion. Hence all the morphism are given by $d^k:(0,i)\to (0,i+k)$, $d^k\circ j:(1,i+1)\to (0,i+k+1)$.

We must define the new Hom set. We actually define an internal Hom. Consider, for $E,F\in\mathcal{C}$  the functor $Diff_{E,F},Diff_E:Diff_{\N}\to \mathcal{C}$ on objects by 
$$Diff_{E}((0,i))=NL(U(E)^{i+1})\parr_\mathcal{C} \K, \quad Diff_{E}((1,i+1))=(NL(U(E))\parr_\mathcal{C}((\neg E)^{\parr_\mathcal{C} i+1}\parr_\mathcal{C}\K))$$ with the obvious inductive definition $(\neg E)^{\parr_\mathcal{C} i+1}\parr_\mathcal{C}\K=\neg E\parr_\mathcal{C}\Big[(\neg E)^{\parr_\mathcal{C} i}\parr_\mathcal{C} \K \Big].$ Then we define $Diff_{E,F}=Diff_E\parr_\mathcal{C} F.$

The images of the generating morphisms are defined as follows: 
$$Diff_{E}(d_i)=\Lambda_{U(E)^2,U(E)^i,\K}^{-1}\circ(D'_{E,NL(U(E)^i)\parr_\mathcal{C}\K})\circ \Lambda_{U(E),U(E)^i,\K},$$
%$$Diff_{E,F}(d_i)=\Lambda_{U(E)^i,U(E)^2,F}^{-1}\circ(NL(U(E)^i)\parr_\mathcal{C}D'_{E,F})\circ \Lambda_{U(E)^i,U(E),\K},$$ 
$$ Diff_{E}(j_{i+1})=\Big[\rho_{NL(U(E)^{i+1})}\circ\Lambda_{U(E),U(E)^{i+1},\K}^{-1}\circ \Big(NL(U(E))\parr_\mathcal{C}\Big[\Lambda_{U(E);i+1,\K}^{-1}\circ((\epsilon^{NL}_E)^{\parr_\mathcal{C} i+1}\parr_\mathcal{C} \K)\Big]\Big)\Big]%\parr_\mathcal{C} F
$$

where we wrote  $$\Lambda_{U(E);i+1,F}^{-1}=\Lambda_{U(E),U(E^i),F}^{-1}\circ\cdots\circ (NL(U(E))^{\parr_\mathcal{C} i-1}\parr_\mathcal{C}\Lambda_{U(E),U(E),F}^{-1})$$

Since $\mathcal{M}$ has all small limits, one can consider the limit of the functor $U\circ Diff_{E,F}$ and write it $[U(E),U(F)]_{\mathcal{C}}.$ Since $U$ bijective on objects, this induces a Hom set:  $$\mathcal{M}_\mathcal{C}(U(E),U(F))=M([U(E),U(F)]_{\mathcal{C}}).$$

We define $NL_\mathcal{C}(U(E))$ as the limit in $\mathcal{C}$ of $Diff_{E}$. Note that, since  $\parr_\mathcal{C}$ commutes with limits in $\mathcal{C}$, $NL_\mathcal{C}(U(E))\parr_\mathcal{C} F$ is the limit of $Diff_{E,F}=Diff_{E}\parr_\mathcal{C} F$. 

From the universal property of the limit, it comes with canonical maps $$D^k_{E,F}:NL_\mathcal{C}(U(E))\parr_\mathcal{C} F\to Diff_{E,F}((1,k)), \ \ \ j=j_{E,F}:NL_\mathcal{C}(U(E))\parr_\mathcal{C} F\to Diff_{E,F}((0,0)).$$

Note that $j_{E,F}=j_{E,\K}\parr_\mathcal{C} F$ is a monomorphism since for a pair of maps $f,g$ with target $NL_\mathcal{C}(U(E))\parr_\mathcal{C} F$, using that lemma \ref{epsilonNLMono} below implies that all $Diff_{E}(j_{i+1})$ are monomorphisms, one deduces that all the compositions with all maps of the diagram are equal, hence, by the uniqueness in the universal property of the projective limit, $f,g$ must be equal.

Moreover, since $U:\mathcal{C}\to \mathcal{M}$ is right adjoint to $\neg\circ NL$, it preserves limits, so that one gets an isomorphism $\Xi_{U(E),F}^{\mathcal{M}_\mathcal{C}}:U(NL_\mathcal{C}(U(E))\parr_\mathcal{C} F)\simeq U(\lim Diff_{E,F})\simeq [U(E),U(F)]_{\mathcal{C}}.$ %Naturality in $E,F$ follows from universal properties of limits but we will have to check the stronger naturality in $U(E)$. 
It will remain to build $\Lambda^{\mathcal{M}_\mathcal{C}}$ but we can already obtain $d_{E,F}$. %\red{En fait je ne sais pas faire comme cela}Indeed, the limit is a functor from the diagram category $[Diff_{\N},\mathcal{C}]$ to $\mathcal{C}$, right adjoint to the constant functor.
%We will build $d_{E,F}$ by applying this functor to a natural transformation $d'_{E,F}:Diff_{E,F}\to Diff_{E,\neg E\parr_\mathcal{C}F}.$ It is defined as follows by twisting a bit the already contained differential:
%$$d'_{E,F}((0,i))= : (NL(U(E)^{i+1})\parr_\mathcal{C} \K)\parr_\mathcal{C}F\to (NL(U(E)^{i+1})\parr_\mathcal{C} \K)\parr_\mathcal{C}(\neg E\parr_\mathcal{C}F)$$
 
 We build it by the universal property of  limits, consider the maps (obtained using canonical maps for the monoidal category $\mathcal{C}^{op}$) $$D^{(1,k)}_{E,F}:\xymatrix@C=30pt{
NL_\mathcal{C}(U(E))\parr_\mathcal{C}F\ar[r]^{D^{k+1}_{E,F}\qquad \qquad \qquad } &  %Diff_{E}((1,k+1))=
(NL(U(E))\parr_\mathcal{C}((\neg E)^{\parr_\mathcal{C} k+1}\parr_\mathcal{C}\K))\parr_\mathcal{C}F\ar[r]^{\qquad \qquad \simeq } &Diff_{E,\neg E\parr_\mathcal{C}F}((1,k))
}$$
 $$J^1:\xymatrix@C=30pt{
NL_\mathcal{C}(U(E))\parr_\mathcal{C}F\ar[r]^{D^{1}_{E,F}\qquad \qquad } &  %Diff_{E}((1,k+1))=
(NL(U(E))\parr_\mathcal{C}((\neg E)\parr_\mathcal{C}\K))\parr_\mathcal{C}F\ar[r]^{ \qquad \simeq } &Diff_{E,\neg E\parr_\mathcal{C}F}((0,0))
}$$

Those maps extends uniquely to a cone enabling to get by the universal properties of limits our expected map:$d_{E,F}$. This required checking the identities $$Diff_{E,\neg E\parr_\mathcal{C}F}(d^k)\circ J^1=Diff_{E,\neg E\parr_\mathcal{C}F}(j_k)\circ D^{(1,k)}_{E,F}$$
that comes from $Diff_{E,F}(d^k\circ j_1)\circ D^{1}_{E,F}=Diff_{E,F}(j_{k+1})\circ D^{1+k}_{E,F}$ (by definition of $D^{1+k}_{E,F}$ as map coming from a limit) which is exactly the previous identity after composition with structural isomorphisms and $NL(E^{k+1})\parr_\mathcal{C}\epsilon^{NL}_E\parr_\mathcal{C}F$ which is a monomorphism, hence the expected identity, thanks to the next:

\begin{lemma}\label{epsilonNLMono}
In the previous situation, $\epsilon^{NL}_E$ is a monomorphism.
\end{lemma}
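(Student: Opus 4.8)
The plan is to obtain the statement purely from the linear--non-linear adjunction already produced in Theorem \ref{lambdaTensortoLL}, together with the two faithfulness facts recorded in Remark \ref{FaithfullnessLambdaTensor}; in particular I would avoid the differential data $D'$, the biproduct and \eqref{epsionNL0} altogether, since they are not needed for this particular statement. The first move is to reduce monomorphy of $\epsilon^{NL}_E$ to an epimorphy statement about the dereliction. Since by Remark \ref{FaithfullnessLambdaTensor} the negation $\neg$ is faithful and contravariant, it reflects exactly the cancellation we want: if $\epsilon^{NL}_E\circ g=\epsilon^{NL}_E\circ h$ for $g,h:X\to\neg E$, then applying $\neg$ gives $\neg(g)\circ\neg(\epsilon^{NL}_E)=\neg(h)\circ\neg(\epsilon^{NL}_E)$, and whenever $\neg(\epsilon^{NL}_E)$ is an epimorphism we may cancel it and then invoke faithfulness of $\neg$ to conclude $g=h$. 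Recalling from subsection \ref{subsec:modelLL} that the dereliction factors as $\Der{E}=ev_E^{-1}\circ\neg(\epsilon^{NL}_E)$ with $ev_E$ (the canonical isomorphism $\neg\neg E\simeq E$) invertible, it therefore suffices to prove that $\Der{E}$ is an epimorphism in $\mathcal{C}$.

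For that, I would use the Dirac map $\delta_E$ of Lemma \ref{lem:evsmooth}, which is the unit of the adjunction giving rise to $!$, as a non-linear section of the dereliction: the relevant triangle identity for the adjunction $\neg\circ NL\dashv U$ reads $U(\Der{E})\circ\delta_E=\mathrm{id}_{U(E)}$ in $\mathcal{M}$. Given $u,v$ in $\mathcal{C}$ with $u\circ\Der{E}=v\circ\Der{E}$, I would apply the faithful functor $U$ to get $U(u)\circ U(\Der{E})=U(v)\circ U(\Der{E})$ in $\mathcal{M}$, precompose with $\delta_E$, and use the triangle identity to obtain $U(u)=U(v)$, whence $u=v$ by faithfulness of $U$. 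This shows $\Der{E}$ is epi, hence so is $\neg(\epsilon^{NL}_E)=ev_E\circ\Der{E}$, and the previous paragraph then gives that $\epsilon^{NL}_E$ is a monomorphism.

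The genuinely instructive point, rather than a technical obstacle, is to recognize why one should \emph{not} try the naive route of building a retraction $NL_E\to\neg E$ of $\epsilon^{NL}_E$ inside $\mathcal{C}$: differentiation (the natural candidate) lands in $NL$ rather than in $\neg E$, and recovering the $\neg E$-component amounts to the very factorization one is trying to establish. The section $\delta_E$ exists only as a \emph{smooth} (non-linear) map, so $\Der{E}$ is split epi in $\mathcal{M}$ but not visibly split in $\mathcal{C}$; transferring this splitting to an honest epimorphism in $\mathcal{C}$ is precisely what the faithfulness of $U$ buys, and converting that epimorphism into the desired monomorphism is what the faithfulness of the contravariant $\neg$ buys. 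The only routine verifications are that $\Der{E}$ factors as an isomorphism composed with $\neg(\epsilon^{NL}_E)$ in the present general setting (the definition recalled in subsection \ref{subsec:modelLL}, valid already under the hypotheses of Theorem \ref{lambdaTensortoLL}) and that $U(\Der{E})\circ\delta_E=\mathrm{id}$ is indeed the standard unit/counit triangle identity; keeping the variance of $\neg$ straight is the one place where care is required.
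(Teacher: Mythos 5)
Your strategy coincides with the paper's own proof: both arguments dualize the statement to the epimorphy of $\neg(\epsilon^{NL}_E)$ using faithfulness of the contravariant $\neg$, and both extract that epimorphy from the adjunction $\neg\circ NL\dashv U$ together with faithfulness of $U$; your explicit use of the triangle identity $U(\Der{E})\circ\delta_E=\mathrm{id}_{U(E)}$ is precisely an unpacking of the paper's appeal to the fact that the counit of an adjunction with faithful right adjoint is a pointwise epimorphism, and $U$ reflecting epimorphisms. There is, however, one genuine gap: you treat $ev_E$ as ``the canonical isomorphism $\neg\neg E\simeq E$''. That isomorphism exists only for objects of the continuation category $\mathcal{D}$, i.e.\ for $E$ of the form $\neg C$ (indeed the paper states the factorization $\Der{E}=ev_E^{-1}\circ\neg(\epsilon^{NL}_E)$ in subsection \ref{subsec:modelLL} only ``when $E\in\Cref$''). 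But Lemma \ref{epsilonNLMono} is invoked in the construction of $\mathcal{M}_\mathcal{C}$ (monomorphy of the maps $Diff_E(j_{i+1})$ and of $j_{E,F}$) for \emph{arbitrary} $E\in\mathcal{C}$ --- in the concrete models, arbitrary objects of $\McS^{op}$ or $\Kc^{op}$, which need not be $\rho$- or $k$-reflexive. For such $E$ the map $\epsilon^{\neg\neg}_E:\neg\neg E\to E$ need not be invertible, so $ev_E^{-1}$ is undefined, $\Der{E}$ is not given by your formula, and the step ``$\neg(\epsilon^{NL}_E)=ev_E\circ\Der{E}$ is epi because $ev_E$ is iso and $\Der{E}$ is epi'' breaks down. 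Note that knowing $\epsilon^{\neg\neg}_E$ is merely an epimorphism would not help either, since one cannot cancel an epimorphism on the left of the composite $\epsilon^{\neg\neg}_E\circ\neg(\epsilon^{NL}_E)$.

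The paper's proof circumvents this by never inverting $\epsilon^{\neg\neg}_E$ in $\mathcal{C}$: it uses only that $U(\epsilon^{\neg\neg}_E)$ is an isomorphism (Remark \ref{FaithfullnessLambdaTensor}, valid for every $E\in\mathcal{C}$), observes that $U$ of the composite counit $\epsilon^{\neg\neg}_E\circ\neg(\epsilon^{NL}_E)$ is a split epimorphism by the triangle identity, cancels the isomorphism $U(\epsilon^{\neg\neg}_E)$ to conclude that $U(\neg(\epsilon^{NL}_E))$ is an epimorphism, and finishes because the faithful $U$ reflects epimorphisms. Your argument is repaired by exactly the same device: run your cancellation entirely at the level of $U$-images, i.e.\ from $u\circ\neg(\epsilon^{NL}_E)=v\circ\neg(\epsilon^{NL}_E)$ deduce $U(u)\circ U(\neg(\epsilon^{NL}_E))=U(v)\circ U(\neg(\epsilon^{NL}_E))$, use that $U(\neg(\epsilon^{NL}_E))$ admits a section up to the isomorphism $U(ev_E)$ (namely $\delta_E$, via the relative-adjunction form of the triangle identity) to cancel it, and then invoke faithfulness of $U$. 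With that adjustment your proof is correct and is essentially the paper's.
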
 
\begin{proof}
 Since $\neg: \mathcal{C}^{op}\to \mathcal{C}$ is faithful, it suffices to see $\neg(\epsilon^{NL}_E):\neg(NL(U(E))\to \neg\neg E$ is an epimorphism. But its composition with the epimorphism $\neg\neg E\to E$, as counit of an adjunction with faithful functors $\neg$, is also the counit of  $\neg\circ NL$ with right adjoint $U$ which is faithful too, hence the composition is an epimorphism too. %nlab adjoint functors Prop 2.4 find ref
But $U(\neg\neg E)\simeq U(E)$ by the proof of Theorem \ref{lambdaTensortoLL}, thus $U(\neg(\epsilon^{NL}_E))$ is an epimorphism and $U$ is also faithful so reflects epimorphisms.
\end{proof}

\begin{theorem}\label{LLtoDiLL}
In the above situation, $(\mathcal{C}^{op},\parr_\mathcal{C},I,\neg, \mathcal{M}_\mathcal{C},\times ,0,[.,.]_{\mathcal{C}},NL_{\mathcal{C}},U,D,d)$ has a structure of Seely model of differential $\lambda$-tensor logic.
\end{theorem}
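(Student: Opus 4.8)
The plan is to verify in turn the two layers of structure demanded by Definition \ref{DiffLambdaTensorDef}: first that $(\mathcal{C}^{op},\parr_\mathcal{C},\K,\neg,\mathcal{M}_\mathcal{C},\times,0,[.,.]_\mathcal{C},NL_\mathcal{C},U)$ is a $\lambda$-categorical model of $\lambda$-tensor logic in the sense of Definition \ref{LambdaModel}, and then that the operators $D$ and $d$ satisfy the differential-$\lambda$-category axioms together with the two $D$--$d$ compatibility diagrams. A large part of the data is imported unchanged from the given model and Theorem \ref{lambdaTensortoLL}: the dialogue category $(\mathcal{C}^{op},\parr_\mathcal{C},\K,\neg)$ with its commutative idempotent continuation monad, the completeness and cocompleteness of $\mathcal{C}$, the biproduct compatible with $\parr_\mathcal{C}$, and the $\textbf{Mon}$-enrichment of $U$ and $\neg$ are untouched by the construction. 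Thus the genuinely new work concerns only the data living over $\mathcal{M}$, namely $\mathcal{M}_\mathcal{C}$, $NL_\mathcal{C}$ and the two differentials, and everything is organised around the single principle that $j_{E,F}$, and componentwise each $\epsilon^{NL}_E$ (a monomorphism by Lemma \ref{epsilonNLMono}), lets one reduce identities in the new structure to identities already available in the underlying $\mathcal{M}$-model.

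I would begin by making $\mathcal{M}_\mathcal{C}$ into a Cartesian closed category, with products and terminal object as in $\mathcal{M}$. Because $j_{E,F}=j_{E,\K}\parr_\mathcal{C}F$ is a monomorphism, the component at $(0,0)$ embeds $\mathcal{M}_\mathcal{C}(U(E),U(F))$ into $\mathcal{M}(U(E),U(F))$, so identities and composition can be defined by those of $\mathcal{M}$ provided the embedded subset is closed under them; closure is exactly the statement that the whole family of higher derivatives of a composite again assembles into a point of $\lim U\circ Diff_{E,F}$, which is the internalised chain rule D5 of \cite{BuccEhrMan} fed into the functoriality of $Diff$ and the naturality of $D'$. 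The internal hom is $[U(E),U(F)]_\mathcal{C}$ by construction, and the isomorphism $\Xi^{\mathcal{M}_\mathcal{C}}_{U(E),F}\colon U(NL_\mathcal{C}(U(E))\parr_\mathcal{C}F)\simeq[U(E),U(F)]_\mathcal{C}$ follows since $U$, being a right adjoint, preserves the limit defining $NL_\mathcal{C}$. The Curry isomorphism $\Lambda^{\mathcal{M}_\mathcal{C}}$ is then transported from $\Lambda^{\mathcal{M}}$ and $\Lambda^{-1}$ along $\Xi^{\mathcal{M}_\mathcal{C}}$, using the internalised D-curry identity (as in Step 1 of the proof of Theorem \ref{DifflambdaTensortoDiLL}) to see that currying is compatible with each derivative component $D^k_{E,F}$ of the limit, not only with the $(0,0)$-component.

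Next I would establish the $\neg$-relative adjunction $NL_\mathcal{C}\dashv_{\neg}U$ together with the six commutative diagrams of Definition \ref{LambdaModel}. Functoriality of $NL_\mathcal{C}$ and the relative adjunction are obtained by composing the universal property of the limit $NL_\mathcal{C}(U(E))=\lim Diff_E$ with the relative adjunction of the original model, and naturality of $\Xi^{\mathcal{M}_\mathcal{C}}$ and $\Lambda^{\mathcal{M}_\mathcal{C}}$ is inherited from that of $\Xi$, $\Lambda$ and $D'$. For each of the six diagrams, precomposition with the structural monomorphisms $j$ and $\epsilon^{NL}$ and the universal property of the limit reduce the assertion, component by component over the index category $Diff_\N$, to the corresponding diagram already proved for $\mathcal{M}$ together with the defining formulas of $Diff_E$ on the generators $d_i$ and $j_i$. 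The Cartesian differential axioms (D1--D7 and D-curry) for the new operator $D$ hold for the same reason: through $\Xi^{\mathcal{M}_\mathcal{C}}$ the operator $D$ is given by the cone map $D^1$, that is by $Diff_E(d)$, whose interaction with composition and with the Cartesian structure is precisely the functoriality of $Diff$ and the naturality and factorisation \eqref{DprimeWeak} of $D'$.

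Finally the two $D$--$d$ compatibility diagrams of Definition \ref{DiffLambdaTensorDef} are essentially built into the construction of $d_{E,F}$ as the cone map through $D^{(1,k)}_{E,F}$ and $J^1$. The first diagram, comparing $U(d_{E,F})$, $\Xi$, $D_{U(E),U(F)}$, $\Lambda^{\mathcal{M}}$ and $\epsilon^{NL}_E$, unwinds to the identity $Diff_{E,F}(d\circ j_1)\circ D^1_{E,F}=Diff_{E,F}(j_2)\circ D^2_{E,F}$ post-composed with $\epsilon^{NL}_E$, which holds by the definition of $D^2_{E,F}$ as a limit map; the weak differentiation diagram follows from the factorisation \eqref{DprimeWeak} of $D'$ together with \eqref{epsionNL0}. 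I expect the main obstacle to be the well-definedness and associativity of composition in $\mathcal{M}_\mathcal{C}$, that is, checking that the infinite family of higher derivatives of a composite is a genuine point of the limit: this is the Fa\`a di Bruno bookkeeping in which the chain rule D5, the naturality of $D'$, and the factorisation \eqref{DprimeWeak} must be matched exactly against the shape of the diagram $Diff_\N$, and it is the one place where no part of the argument is purely formal reduction through monomorphisms.
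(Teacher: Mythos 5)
Your proposal is correct and follows essentially the same route as the paper: composition in $\mathcal{M}_\mathcal{C}$ is built from the chain rule D5 via a Fa\`a di Bruno formula over partitions, the relative adjunction, the six compatibility diagrams and the Cartesian differential axioms are all induced from the underlying model by reduction through the monomorphisms $j$ and $\epsilon^{NL}$ (Lemma \ref{epsilonNLMono}), the Curry map is built componentwise by the universal property of the limit using (D-curry), and the $D$--$d$ compatibility is read off from the defining cone of $d_{E,F}$ together with \eqref{DprimeWeak} and \eqref{epsionNL0}. You also correctly single out the Fa\`a di Bruno bookkeeping for composition as the one genuinely non-formal step, which is exactly where the paper's proof spends its effort.
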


\begin{proof}\setcounter{Step}{0}
For brevity, we call $A_k=(1,k), k>0, A_0=(0,0)=B_0, B_k=(0,k)$

\begin{step} $\mathcal{M}_\mathcal{C}$ is a Cartesian (not full) subcategory of $\mathcal{M}$ and $U:\mathcal{C}\to\mathcal{M}_\mathcal{C}, NL_\mathcal{C}:\mathcal{M}_\mathcal{C}\to \mathcal{C}$ are again  functors, the latter being right $\neg$-relative adjoint of the former.\end{step}

Fix $$g\in \mathcal{M}_\mathcal{C}(U(E),U(F))%=\mathcal{M}(0,[U(E),U(F)]_\mathcal{C})
=\mathcal{C}(\neg K, NL_\mathcal{C}(U(E))\parr_\mathcal{C}F)\to \mathcal{C}(\neg K, NL(U(E))\parr_\mathcal{C}F)\simeq\mathcal{C}(\neg F, NL(U(E)))\ni d^0g.$$
Similarly, composing with $D^k_{E,F}$ one obtains: 
$$d^kg\in \mathcal{C}(\neg K, Diff_{E,\K}(A_k)\parr_\mathcal{C}F)\simeq\mathcal{C}(\neg F, Diff_{E,\K}(A_k).$$
 We first show that $NL(g)=\cdot\circ g:NL(U(F))\to NL(U(E))$ induces via the monomorphisms $j$ a map $NL_\mathcal{C}(g):NL_\mathcal{C}(F)\to NL_\mathcal{C}(E)$ such that $j_{E,\K}NL_\mathcal{C}(g)=NL(g)j_{F,\K}$. This relation already determines at most one $NL_\mathcal{C}(g)$, one must check such a map exists in using the universal property for $NL_\mathcal{C}(E)$. We must build maps: $$NL_\mathcal{C}^{k}(g): NL_{\mathcal{C}}( F)\to Diff_{E,\K}(A_k)$$
%$$\Lambda^{0}_{E,F,G}: NL_{\mathcal{C}}(E\times F)\parr_\mathcal{C}G\to Diff_{E,\neg E\parr_\mathcal{C}(NL_{\mathcal{C}}(F)\parr_\mathcal{C}G)}((0,0))$$
with $NL_\mathcal{C}^{0}(g)=NL(g)j_{F,\K}$ satisfying the relations for $k\geq 0$ ($j_0=id$):
\begin{equation}\label{NLC-k}Diff_{E,\K}(d_k\circ j_k)\circ NL_\mathcal{C}^{k}(g)=Diff_{E,\K}(j_{k+1})\circ NL_\mathcal{C}^{k+1}(g).\end{equation}

An abstract version of Fa\`{a} di Bruno's formula will imply the form of $NL_\mathcal{C}^{k}(g),$ that we will obtain it as sum of $NL_\mathcal{C}^{k,\pi}(g):Diff_{F,\K}(A_{|\pi|})\to Diff_{E,\K}(A_{k})$ for $\pi=\{\pi_1,...,\pi_{|\pi|}\}\in P_k$ the set of partitions of $[\![1,k]\!].$ We define it as $$NL_\mathcal{C}^{k,\pi}(g)=(NL(\Delta^{|\pi|+1}_E)\parr_\mathcal{C} Id)\circ IsomAss_{|\pi|}\circ [d^0g\parr_\mathcal{C}d^{|\pi_1|}g\parr_\mathcal{C}\cdots \parr_\mathcal{C}d^{|\pi_{|\pi|}|}g\parr_\mathcal{C}id_\K]$$ with $IsomAss_k:NL(E)\parr_\mathcal{C}(NL(E)\parr_\mathcal{C}E_1)\parr_\mathcal{C}
\cdots\parr_\mathcal{C}(NL(E)\parr_\mathcal{C}E_k)\simeq NL(E^{k+1})\parr_\mathcal{C}(E_1\parr_\mathcal{C}
\cdots\parr_\mathcal{C}E_k)$ and $\Delta_k:E \to E^k$ the diagonal of the Cartesian category $\mathcal{C}$. 
We will compose it with $d^{P_k}:NL_{\mathcal{C}}( F)\to \prod_{\pi\in P_k} Diff_{F,\K}(A_{|\pi|})$ given by the universal property of product composing to $d^{|\pi|}$ in each projection.

Then using the canonical sum map $\Sigma_E^k:\prod_{i=1}^k E\simeq \oplus_{i=1}^k E\to E$ obtained by universal property of coproduct corresponding to identity maps, one can finally define the map inspired by Fa\`{a} di Bruno's Formula : 
$$NL_\mathcal{C}^{k}(g)=\Sigma_{Diff_{E,\K}(A_{k})}^{|P_k|}\circ\Big(\prod_{\pi\in P_k}NL_\mathcal{C}^{k,\pi}(g) \Big)\circ d^{P_k}.$$
Applying $U$ and composing with $\Xi$, \eqref{NLC-k} is then obtained in using the chain rule D5 on the inductive proof of Fa\`{a} di Bruno's Formula, using also that $U$ is additive. %\red{[explain maybe better what must be additive]}.

Considering $NL_\mathcal{C}(g)\parr_\mathcal{C} G:NL_\mathcal{C}(F)\parr_\mathcal{C} G\to NL_\mathcal{C}(E)\parr_\mathcal{C} G$ which induces a composition on $\mathcal{M}_\mathcal{C}$, one gets that $\mathcal{M}_\mathcal{C}$ is a subcategory of $\mathcal{M}$ (from the agreement with previous composition based on intertwining with $j$) as soon as we see $id_{U(E)}\in \mathcal{M}_\mathcal{C}(U(E),U(E))$. This boils down to building a map in $\mathcal{C}$, $I_{\mathcal{M}_\mathcal{C}}:\neg(\K)\to  NL_\mathcal{C}(U(E))\parr_\mathcal{C}E$ using the universal property such that $j_{E,E}\circ I_{\mathcal{M}_\mathcal{C}}=I_{\mathcal{M}}:\neg(\K)\to  NL(U(E))\parr_\mathcal{C}E$ corresponds to identity map. We define it in imposing $D^k_{E,E}\circ I_{\mathcal{M}_\mathcal{C}}=0$ if $k\geq 2$ and $$D^1_{E,E}\circ I=NL(0_E)\parr_\mathcal{C} i_\mathcal{C}:\neg(\K)\simeq NL(0)\parr_\mathcal{C}\neg(\K)\to NL(E)\parr_\mathcal{C}\neg E\parr_\mathcal{C}E$$  with $i_\mathcal{C}\in \mathcal{C}(\neg(\K), \neg E\parr_\mathcal{C}E)\simeq\mathcal{C}(\neg E, \neg E)$ corresponding to identity via the compatibility for the dialogue category $(\mathcal{C}^{op},\parr_\mathcal{C},\K,\neg)$. This satisfies the compatibility condition enabling to define a map by the universal property of limits because of axiom D3 in \cite{BuccEhrMan} implying (recall our linear variables are in the right contrary to theirs) $D(Id_{U(E)})=\pi_2, D(\pi_2)=\pi_2\pi_2$ (giving vanishing starting at second derivative via D-curry) and of course $(\epsilon^{NL}_E\parr_\mathcal{C}E)\circ i_{\mathcal{C}}=I_{\mathcal{M}}$ from the adjunction defining $\epsilon^{NL}$. %$\neg(Id_{E})=Id_{\neg(E)}$.

As above we can use known adjunctions to get the  isomorphism \begin{align}\label{NLCadjunction}\begin{split}\mathcal{M}_\mathcal{C}(U(E),U(F))&=\mathcal{M}(0,[U(E),U(F)]_\mathcal{C})\simeq\mathcal{C}^{op}(NL(0),\neg(NL_\mathcal{C}(U(E))\parr_\mathcal{C} F))\\&\simeq \mathcal{C}(\neg( \K),NL_\mathcal{C}(U(E))\parr_\mathcal{C} F)\simeq%\mathcal{C}(\neg F,NL_\mathcal{C}(U(E)))=
 \mathcal{C}^{op}(NL_\mathcal{C}(U(E)),\neg F)\end{split}\end{align}
where the last isomorphism is the compatibility for the dialogue category $(\mathcal{C}^{op},\parr_\mathcal{C},\K,\neg)$.
Hence the map $id_{U(E)}$ we have just shown to be in the first space gives $\epsilon^{NL_\mathcal{C}}_E:\neg E\to NL_\mathcal{C}(U(E))$ with $j_{E,\K}\circ \rho_{NL_\mathcal{C}(U(E))}\circ \epsilon^{NL_\mathcal{C}}_E=\rho_{NL(U(E))}\circ \epsilon^{NL}_E.$

Let us see that $U$ is a functor too. Indeed $\epsilon^{NL_\mathcal{C}}_E\parr_\mathcal{C} F:\neg E\parr_\mathcal{C} F\to NL_\mathcal{C}(U(E))\parr_\mathcal{C} F$ can be composed with the adjunctions and compatibility for the dialogue category again to get:
$$\mathcal{C}(E,F)\to \mathcal{C}(\neg\neg E,F)\simeq\mathcal{C}(\neg \K,\neg E\parr_\mathcal{C} F)\to \mathcal{C}(\neg NL(0),NL_\mathcal{C}(U(E))\parr_\mathcal{C} F),$$
the last space being nothing but $\mathcal{M}_\mathcal{C}(U(E),U(F))=\mathcal{M}(0,[U(E),U(F)]_\mathcal{C})$ giving the wanted $U(g)$ for $g\in \mathcal{C}(E,F)$ which is intertwined via $j$ with the $\mathcal{M}$ valued one, hence $U$ is indeed a functor too. The previous equality is natural in $F$ via the intertwining with $j$ and the corresponding result for $\mathcal{M}$.

Now one can see that \eqref{NLCadjunction} is natural in $U(E),F$. For it suffices to note that the first equality is natural by definition and all the following ones are already known. Hence the stated $\neg$-relative adjointeness.

This implies $U$ preserve products as right adjoint of $\neg\circ NL_\mathcal{C}$, hence the previous products $U(E)\times U(F)=U(E\times F)$ are still products in the new category, and the category $\mathcal{M}_\mathcal{C}$ is indeed Cartesian.

\begin{step} Curry map\end{step}
It remains a few structures to define, most notably the internalized Curry map: $\Lambda^{\mathcal{C}}_{E,F,G}: NL_{\mathcal{C}}(E\times F)\parr_\mathcal{C}G\to NL_{\mathcal{C}}(E)\parr_\mathcal{C}\Big( NL_{\mathcal{C}}(F)\parr_\mathcal{C}G\Big)$.
We use freely the structure isomorphisms of the monoidal category $\mathcal{C}$. 

For we use the universal property of limits as before, we need to define:
 $$\Lambda^{k}_{E,F,G}: NL_{\mathcal{C}}(E\times F)\parr_\mathcal{C}G\to Diff_{E,(NL_{\mathcal{C}}(F)\parr_\mathcal{C}G)}(A_k)$$
%$$\Lambda^{0}_{E,F,G}: NL_{\mathcal{C}}(E\times F)\parr_\mathcal{C}G\to Diff_{E,\neg E\parr_\mathcal{C}(NL_{\mathcal{C}}(F)\parr_\mathcal{C}G)}((0,0))$$
satisfying the relations for $k\geq 0$ ($j_0=id$):
\begin{equation}\label{Lambda-k}Diff_{E,(NL_{\mathcal{C}}(F)\parr_\mathcal{C}G)}(d_k\circ j_k)\circ \Lambda^{k}_{E,F,G}=Diff_{E,(NL_{\mathcal{C}}(F)\parr_\mathcal{C}G)}(j_{k+1})\circ \Lambda^{k+1}_{E,F,G}.\end{equation}
Since $Diff_{E,(NL_{\mathcal{C}}(F)\parr_\mathcal{C}G)}(A_k)\simeq NL_{\mathcal{C}}(F)\parr_\mathcal{C}\Big(NL(E)\parr_\mathcal{C}(\neg E)^{\parr_\mathcal{C} k}\parr_\mathcal{C}\K\Big)\parr_\mathcal{C}G$ we use again the same universal property to define the map $\Lambda^{k}_{E,F,G}$ and we need to define :
$$\Lambda^{k,l}_{E,F,G}:NL_{\mathcal{C}}(E\times F)\parr_\mathcal{C}G\to Diff_{F,\Big(NL(E)\parr_\mathcal{C}(\neg E)^{\parr_\mathcal{C} k}\parr_\mathcal{C}\K\Big)\parr_\mathcal{C}G}(A_l).$$
satisfying the relations:
\begin{equation}\label{Lambda-kl}Diff_{F,\Big(NL(E)\parr_\mathcal{C}(\neg E)^{\parr_\mathcal{C} k}\parr_\mathcal{C}\K\Big)\parr_\mathcal{C}G}(d_l\circ j_l)\circ \Lambda^{k,l}_{E,F,G}=Diff_{F,\Big(NL(E)\parr_\mathcal{C}(\neg E)^{\parr_\mathcal{C} k}\parr_\mathcal{C}\K\Big)\parr_\mathcal{C}G}(j_{l+1})\circ \Lambda^{k,l+1}_{E,F,G}.\end{equation}
But we can consider the map:$$D^{k+l}_{E\times F,G}:NL_{\mathcal{C}}(E\times F)\parr_\mathcal{C}G\to(NL(U(E\times F))\parr_\mathcal{C}((\neg (E\times F))^{\parr_\mathcal{C} k+l}\parr_\mathcal{C}\K))\parr_\mathcal{C}G $$

Let us describe an obvious isomorphism of the space of value to extract the component we need. First, using the assumptions on $\parr_\mathcal{C}$ and $\neg$:
\begin{align*}((\neg (E_1\times E_2))^{\parr_\mathcal{C} k+l}\parr_\mathcal{C}\K))\parr_\mathcal{C}G&\simeq\bigoplus_{i:[\![1,k+l ]\!]\to \{1,2\}} (\neg E_{i_1}\parr_\mathcal{C} \neg E_{i_2}\parr_\mathcal{C} \cdots \parr_\mathcal{C} \neg E_{i_{k+l}})\parr_\mathcal{C} G\\&\simeq \bigoplus_{i:[\![1,k+l ]\!]\to \{1,2\}%, l=\# i^{-1}(\{1\}), m=\# f^{-1}(\{2\})
} (\neg(E_1)^{\parr_\mathcal{C} (\# f^{-1}(\{1\}))}\parr_\mathcal{C} \neg(E_2)^{\parr_\mathcal{C}( \# i^{-1}(\{2\}))})\parr_\mathcal{C} G.\end{align*}
Hence using also $\Lambda_{E,F,.}$ one gets:
\begin{align*}&\Lambda:(NL(U(E\times F))\parr_\mathcal{C}((\neg (E\times F))^{\parr_\mathcal{C} k+l}\parr_\mathcal{C}\K))\parr_\mathcal{C}G\\&\simeq \bigoplus_{i:[\![1,k+l ]\!]\to \{1,2\}%, l=\# i^{-1}(\{1\}), m=\# f^{-1}(\{2\})
} Diff_F(A_{\# i^{-1}(\{2\}))})%\Big(NL(U(F))\parr_\mathcal{C}(\neg(F)^{\parr_\mathcal{C} (\# i^{-1}(\{2\}))}\parr_\mathcal{C}\K)\Big)
\parr_\mathcal{C}\Big(Diff_E(A_{\# i^{-1}(\{1\}))})
%\Big( NL(U( E))\parr_\mathcal{C}\neg(E)^{\parr_\mathcal{C}( \# i^{-1}(\{1\})}\parr_\mathcal{C}\K))
\parr_\mathcal{C} G\Big).\end{align*}
Composing with $P_{k,l}$ a projection on a term with $\# i^{-1}(\{1\})=k$, one gets the map $P_{k,l}\circ \Lambda\circ D^{k+l}_{E\times F,G}=\Lambda^{kl}_{E,F,G}$ we wanted. One could check this does not depend on the choice of term using axiom (D7) of Differential Cartesian categories giving an abstract Schwarz lemma, but for simplicity we choose $i(1)=\cdots =i(l)=2$ which corresponds to differentiating all variables in $E$ first and then all variables in $F$. The relations we want to check will follow from axiom (D-curry) of differential $\lambda$-categories.

Then to prove the relation \eqref{Lambda-kl} we can prove it after composition by a $\Lambda$ (hence the left hand side ends with application of  $D'_{F,NL(U(F)^l)\parr_\mathcal{C}\K}\parr_\mathcal{C} \Big(Diff_E(A_{k})\parr_\mathcal{C}G\Big)$). We can then apply $Diff_F(B_{l})
\parr_\mathcal{C}\Big(Diff_E(j_k)
\parr_\mathcal{C} G\Big)$ %i.e some $\parr_\mathcal{C}$-power of $\epsilon^{NL}_E$
 which is a monomorphism and obtain, after decurryfying and applying $U$ and various $\Xi$, maps in $[U(F)^2\times U(F)^l\times U(E)^{k+1},U(G)],$ and finally only prove equality there, the first variable $F$ being a non-linear one. 

 Of course, we start from $Diff_{E\times F,G}(d_{k+l}\circ j_{k+l})\circ D^{k+l}_{E\times F,G}=Diff_{E\times F,G}(j_{k+l+1})\circ D^{k+l+1}_{E\times F,G}$ 
 and use an application of \eqref{epsionNL0}:\begin{align*}&\Big[Diff_F(B_{l})
\parr_\mathcal{C}\Big(Diff_E(j_k)
\parr_\mathcal{C} G\Big)\Big]\circ Diff_{F,Diff_E(A_{k})%\Big(NL(E)\parr_\mathcal{C}(\neg E)^{\parr_\mathcal{C} k}\parr_\mathcal{C}\K\Big)
\parr_\mathcal{C}G}(j_{l})\circ P_{k,l}\circ \Lambda\\&=Isom\circ \Lambda_{E,F,Diff_F(B_{l})
\parr_\mathcal{C}\big(Diff_E(B_{k})\parr_\mathcal{C} G\big)}\circ NL(0_{l,k})\circ Diff_{E\times F,G}(j_{k+l})\end{align*}
 with $0_{l,k}:U(E\times F)\times U(F)^l\times U(E)^k\simeq U(E\times F)\times U(0\times F)^l\times U(E\times 0)^k\to U(E\times F)^{k+l+1}$
  the map corresponding to $id_{E\times F}\times(0_E\times id_F)^l \times (id_E\times 0_F)^k$. We thus need the following commutation relation:
  \begin{align*}%\Lambda_{E,F,Diff_F(B_{l+1})
%\parr_\mathcal{C}\big(Diff_E(B_{k})\parr_\mathcal{C} G\big)}&\circ
 NL(0_{l+1,k})\circ Diff_{E\times F,G}(d_{k+l})=Isom\circ NL(0_{1,0})\circ(D'_{E\times F, \big(Diff_{F}(B_{l})\parr_\mathcal{C}Diff_{E}(B_{k})\parr_\mathcal{C}G\big) })\circ
%\circ\Lambda_{E,F,Diff_F(B_{l})
%\parr_\mathcal{C}\big(Diff_E(B_{k})\parr_\mathcal{C} G\big)}\circ 
NL(0_{l,k})\end{align*}
  This composition $NL(0_{1,0})\circ D'_{E\times F,.}$ gives exactly after composition with some $\Xi$ the right hand side of (D-curry), hence composing all our identities, and using canonical isomorphisms of $\lambda$-models of $\lambda$-tensor logic, and  this relation gives the expected \eqref{Lambda-kl} at the level of $[U(F)^2\times U(F)^l\times U(E)^{k+1},U(G)]$.
  
Let us turn to checking  \eqref{Lambda-k}. It suffices to check it after composition with the monomorphism $Diff_E(B_k)\parr_\mathcal{C}j_{F,G}$. Then the argument is the same as for \eqref{Lambda-kl} in the case $k=0$ and with $E$ and $F$ exchanged. The inverse of the Curry map is obtained similarly.
  
\begin{step} $\mathcal{M}_\mathcal{C}$ is a differential $\lambda$-category.\end{step}

We first need to check that $\mathcal{M}_\mathcal{C}$ is Cartesian closed, and we already know it is Cartesian. Since we defined the internalized curry map and $\Xi$ one can use the first compatibility diagram in the definition \ref{LambdaModel} to define $\Lambda^{\mathcal{M}_\mathcal{C}}$. To prove the defining adjunction of exponential objects for Cartesian closed categories, it suffices to see naturality after applying the basic functor to sets $M$. From the defining diagram, naturality in $E,F$ of $\Lambda^{\mathcal{M}_\mathcal{C}}:[E\times F,U(G)]_\mathcal{C}\to [E,[F,U(G)]_\mathcal{C}]_\mathcal{C}$ will follow if one checks the naturality of $\Xi_{E,F}^\mathcal{C}$ and 
$\Lambda^{\mathcal{C}}_{E,F,G}$ that we must check anyway while naturality in $U(G)$ and not only $G$ will have to be considered separately.

For $\Lambda^{\mathcal{C}-1}_{E,F,G}$, take $e:E\to E', f:F\to F', g:G'\to G$ the first two in $\mathcal{M}_\mathcal{C}$ the last one in $\mathcal{C}$. We must see $\Lambda^{\mathcal{C}-1}_{E,F,G}\circ[ NL_\mathcal{C}(e)\parr_\mathcal{C} (NL_\mathcal{C}(f)\parr_\mathcal{C} g)]= [NL_\mathcal{C}(e\times f)\parr_\mathcal{C} g] \circ \Lambda^{\mathcal{C}-1}_{E',F',G'}$ and it suffices to see equality after composition with the monomorphism $j_{U^{-1}(E\times F),G}: NL_\mathcal{C}(E\times F)\parr_\mathcal{C}G\to NL(E\times F)\parr_\mathcal{C}G$. But by definition, $j_{U^{-1}(E\times F),G}\Lambda^{\mathcal{C}-1}_{E,F,G}=\Lambda^{-1}_{E,F,G}(NL(E)\parr j_{U^{-1}(F),   G} )j_{U^{-1}(E),   NL_\mathcal{C}(F)\parr G}$ and similarly for $NL_\mathcal{C}$ functors which are also induced from $NL$, hence the relation comes from the one for $\Lambda$ of the original model of $\lambda$-tensor logic we started with. The reasoning is similar with $\Xi$.
Let us finally see that $M(\Lambda^{\mathcal{M}_\mathcal{C}})$
is natural in $U(G)$, but again from step 1 composition with a map $g\in \mathcal{M}_\mathcal{C}(U(G),U(G'))\subset \mathcal{M}(U(G),U(G'))$ is induced by the one from $\mathcal{M}$ and so is $\Lambda^{\mathcal{M}_\mathcal{C}}$ from $\Lambda^{\mathcal{M}}$ in using the corresponding diagram for the original model of $\lambda$-tensor logic we started with and all the previous induced maps for $\Xi,\Lambda^\mathcal{C}$. Hence also this final naturality in $U(G)$ is induced.

Having obtained the adjunction for a Cartesian closed category, we finally see that all the axioms D1--D7 of Cartesian differential categories in \cite{BuccEhrMan} and D-Curry is also induced. Indeed, our new operator $D$ is also obtained by restriction as well as the left additive structure. Note that as a consequence the new $U$ is still a $\textbf{Mon}$-enriched functor.

\begin{step} $(\mathcal{M}_\mathcal{C},\mathcal{C})$ form a  $\lambda$-categorical model of $\lambda$-tensor logic and  Conclusion.\end{step}

We have already built all the data for definition \ref{LambdaModel}, and shown $\neg$-relative adjointness in step 1. It remains to see the four last compatibility diagrams.

But from all the naturality conditions for canonical maps of the monoidal category, one can see them after composing with monomorphisms $NL_\mathcal{C}\to NL$ and induce them from the diagrams for $NL$.

Among all the data needed in definition \ref{DiffLambdaTensorDef}, it remains to build the internalized differential $D^\mathcal{C}_{E,F}$ for $D$ in $\mathcal{M}^\mathcal{c}$ and see the two compatibility diagrams there. From the various invertible maps, one can take the first diagram as definition of $D^\mathcal{C}_{U(E),U(F)}$
and must see that, then $M(D^\mathcal{C}_{U(E),U(F)})$ is indeed the expected restriction of $D$. Let $j_\mathcal{M}^{E,F}:[U(E),U(F)]_\mathcal{C}\to [U(E),U(F)]$ the monomorphism. It suffices to see 
$j_\mathcal{M}^{E\times E,F}\circ D^\mathcal{C}_{U(E),U(F)}\circ \Xi^{\mathcal{C}}_{E,F}= \Xi_{E\times E,F}\circ U(D'_{E,F}\circ j_{E,F})$ {(note that this also gives the naturality in $E,F$ of $d$ from the one of $D'$)}. Hence from the definition of $D^\mathcal{C}$, it suffices to see the following diagram:
\[
\xymatrix@C=20pt{
U\Big(NL_\mathcal{C}(U(E))\parr_\mathcal{C} F\Big)\ar[d]_{\Xi_{E\times E,F}\circ U(D'_{E,F}\circ j_{E,F})} \ar[rrr]^{U(NL_\mathcal{C}(U(E)))\parr_\mathcal{C}(\epsilon^{NL_\mathcal{C}}_E \parr_\mathcal{C}F))\circ U(d_{E,F})\ \ \ \ } &&&U\Big(NL_\mathcal{C}(U(E))\parr_\mathcal{C}(NL_\mathcal{C}(U(E)) \parr_\mathcal{C}F)\Big)\ar[d]|{[id_{U(E)},\Xi_{E,F}^\mathcal{C}]_\mathcal{C}\circ \Xi_{E,NL_\mathcal{C}(U( E) )\parr_\mathcal{C}F}^\mathcal{C}} \\
[U(E\times E),U(F)] &[U(E\times E),U(F)]_\mathcal{C}\ar[l]_{j_\mathcal{M}^{E\times E,F}}&&[U(E),[U(E),U(F)]_\mathcal{C}]_\mathcal{C}\ar[ll]_{(\Lambda^{\mathcal{M}_\mathcal{C}}_{U(E), U(E),U(F)})^{-1}}.
}
\]
First we saw from induction of our various maps that %the second line can be replaced by $\Lambda^{\mathcal{M}}_{U(E), U(E),U(F)})^{-1}\circ [E,j_\mathcal{M}^{ E,F}]\circ j_\mathcal{M}^{E,[U(E),U(F)]_\mathcal{C}}$, hence 
%the diagonal from top right corner can be replaced by $$\Lambda^{\mathcal{M}}_{U(E), U(E),U(F)})^{-1}\circ [id_{U(E)},\Xi_{E,F}^]\circ \Xi_{E,NL( E )\parr_\mathcal{C}F}\circ U((NL(U(E)) \parr_\mathcal{C}j_{E,F})\circ j_{E,(NL_\mathcal{C}(U(E)) \parr_\mathcal{C}F)}).$$
the right hand side of the diagram can be written without maps with index $\mathcal{C}$:
$$(\Lambda^{\mathcal{M}}_{U(E), U(E),U(F)})^{-1}\circ [id_{U(E)},\Xi_{E,F}]\circ \Xi_{E,NL( E )\parr_\mathcal{C}F}\circ  U(NL(U(E)))\parr_\mathcal{C}(\epsilon^{NL}_E \parr_\mathcal{C}F))\circ U(j_{E,\neg E \parr_\mathcal{C}F}\circ d_{E,F}).$$

The expected diagram now comes the definition of $d_{E,F}$ by universal property which gives %$Diff_{E,\neg E\parr_\mathcal{C}F}(j_1)$ 
$ j_{E,\neg E\parr_\mathcal{C}F}\circ
 d_{E,F}=J^1=Isom\circ D_{E,F}^1$ and similarly $Diff_{E,F}(j_1)\circ D_{E,F}^1=D'_{E,F}\circ j_{E,F}$ so that composing the above diagrams (and an obvious commutation of the map involving $\epsilon^{NL}$ through various natural isomorphisms) gives the result.
 
 For the last diagram in definition \ref{DiffLambdaTensorDef}, since $ j=j_{E,\neg E\parr_\mathcal{C}F}$ is a monomorphism, it suffices to compose $d_{E,F}$ and the  equivalent map stated in the diagram by $j$ and see equality, and from the recalled formula above reducing it to $D'_{E,F}$, this reduces to \eqref{DprimeWeak}.
\end{proof}
%\red{TBA :to be proved for $k$-reflexive spaces ?}
\subsection{$\rho$-smooth maps as model of DiLL}

Our previous categories from Theorem \ref{th:CRefSeely}
 cannot give a model of DiLL with $\Cref_\infty$ as category with smooth maps. If one wants to obtain a differential map since the map won't be with  value in $E\multimap_\mathscr{C} F$ but in spaces of bounded linear maps $L_{bd}(E,F)$. We will have to restrict to maps with iterated differential valued in $E^{\otimes_\mathscr{C} k}\multimap_\mathscr{C} F:=E\multimap_\mathscr{C}(\cdots (E\multimap_\mathscr{C} F)\cdots ).$ This is what we did abstractly in the previous subsection that will enable us to obtain efficiently a model.
 
 \begin{lemma}
 The categories of  Theorem \ref{th:CRefSeely} satisfy the assumptions of subsection \ref{sec:GeneralDiLL}.
  \end{lemma}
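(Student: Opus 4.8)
The plan is to match the data of Theorem~\ref{th:CRefSeely} with the hypotheses listed at the start of subsection~\ref{sec:GeneralDiLL}. Set $\mathcal{C}=\mathscr{C}-\Mc$, $\mathcal{M}=\mathscr{C}-\Mc_\infty$, $NL=\CinC$ and $U$ the forgetful functor; Theorem~\ref{th:CRefSeely} (via Theorem~\ref{lambdaTensortoLL}) already provides the Seely linear model of $\lambda$-tensor logic, so that $\parr_\mathcal{C}=\varepsilon$ and $\neg=(\cdot)^*_\mathscr{C}$. Most of the structural hypotheses are then read off from earlier results. Finite products and finite direct sums of locally convex spaces coincide, which gives the biproduct $\oplus=\times$ already exhibited in subsection~\ref{subsec:modelLL}; $U$ and $(\cdot)^*_\mathscr{C}$ are additive (transposition is additive in the map), hence $\textbf{Mon}$-enriched. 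Completeness and cocompleteness of $\mathcal{C}=\mathscr{C}-\Mc$ follow from its reflectivity in $\McS$ (Proposition~\ref{Cdialogue}), and $\varepsilon$ commutes with categorical limits by Proposition~\ref{etageneral}(4), hence with finite coproducts since these are finite products in the biproduct structure. Preservation of monomorphisms by $\parr_\mathcal{C}$ and faithfulness of $\neg$ are Remark~\ref{FaithfullnessLambdaTensor}; $\neg(E\times F)=\neg E\oplus \neg F$ holds because $(\cdot)^*_\mathscr{C}$ is the dualizing equivalence of the $*$-autonomous category $\Cref$, taking products to coproducts; and the diagram \eqref{epsionNL0} holds by its own reduction to $G=\K$ together with naturality of $\epsilon^{NL}$.

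Next I would check that $\mathcal{M}=\mathscr{C}-\Mc_\infty$ is a differential $\lambda$-category with $U$ bijective on objects. The latter is clear since $U$ is the identity on objects. Cartesian closedness is Lemma~\ref{thm:CartesianClosedBasic}, while the axioms D1--D7 and D-curry are exactly the standard properties of the convenient differential calculus of Kriegl--Michor \cite{KrieglMichor} already used in \cite{BET12,BuccEhrMan}. The internalization as a natural transformation $D_{E,F}\colon [E,F]\to [Diag(E),F]$ amounts to the fact that convenient differentiation $f\mapsto \big((x,v)\mapsto df(x).v\big)$ is itself a bounded linear, hence smooth, operator $\CinC(E,F)\to \CinC(E\times E,F)$, which is a morphism of $\mathcal{M}$; the limits used to build the internal hom in subsection~\ref{sec:GeneralDiLL} exist because they are computed in the complete category $\mathcal{C}$ and $U$ preserves them.

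The main point, and the step I expect to be the real obstacle, is the construction of the morphism $D'_{E,F}$ of $\mathcal{C}$ lifting $D$, i.e. the continuity (not merely boundedness) of differentiation at the level of the $\varepsilon$-products. By condition \eqref{DprimeWeak} it suffices to produce a continuous linear map $D'_{E,\K}\colon \CinC(E)\to \CinC(E\times E)$, $f\mapsto \big((x,v)\mapsto df(x).v\big)$, and then set $D'_{E,F}=(\rho_{\CinC(E\times E)}^{-1}\circ D'_{E,\K}\circ \rho_{\CinC(E)})\,\varepsilon\,\mathrm{id}_F$. To obtain continuity of $D'_{E,\K}$ I would use the projective-limit description \eqref{CinCasProjLimit}: for a curve $c=(c_1,c_2)\colon X\to E\times E$ with $X\in\mathscr{C}$, introduce $\gamma\colon X\times\R\to E$, $\gamma(t,s)=c_1(t)+sc_2(t)$ (smooth since $X\times\R\in\mathscr{C}$), so that $(D'_{E,\K}f)\circ c=\partial_s(f\circ\gamma)|_{s=0}$. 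Here $f\mapsto f\circ\gamma$ is continuous $\CinC(E)\to C^\infty_{co}(X\times\R)$ by the very definition of the topology, and $g\mapsto \partial_s g|_{s=0}$ is continuous $C^\infty_{co}(X\times\R)\to C^\infty_{co}(X)$; composing and invoking the universal property of the projective kernel yields continuity of $D'_{E,\K}$.

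Finally I would verify the two required identities. Naturality of $D'_{E,\K}$ in $E$ comes from the chain rule for convenient smoothness (Proposition~\ref{Meise1}), and the first intertwining equation $\Xi_{E\times E,F}\circ U(D'_{E,F})=D_{U(E),U(F)}\circ\Xi_{E,F}$ reduces, via \eqref{DprimeWeak} and the commutation of $\varepsilon$ with post-composition in Lemma~\ref{thm:CartesianClosedBasic}, to the scalar case $F=\K$, where it is the definition of $D'_{E,\K}$ under the identifications $\CinC(E)\varepsilon\K\simeq\CinC(E)$ and $\Xi$. Condition \eqref{DprimeWeak} holds by the very definition of $D'_{E,F}$. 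This exhausts the hypotheses of subsection~\ref{sec:GeneralDiLL}.
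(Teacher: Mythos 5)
Your proposal is correct and follows essentially the same route as the paper: identify the data from Theorem \ref{th:CRefSeely}, observe that the structural hypotheses (biproducts, commutation of $\varepsilon$ with limits and monomorphisms, the differential $\lambda$-category axioms from convenient calculus) are already available, and reduce the key point — continuity of the internalized derivative $D'_{E,F}$ at the level of $\mathcal{C}$ — to a reparametrization argument through an auxiliary curve on a product with $\R$ followed by $\partial_s|_{s=0}$. The paper uses $c_3(x,y,t)=c_1(x)+c_2(y)t$ on $X\times X\times\R$ where you use $\gamma(t,s)=c_1(t)+sc_2(t)$ on $X\times\R$, but this is an immaterial variation of the same idea.
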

\begin{proof}
We already saw in Theorem \ref{th:CRefSeely} that the situation of Theorem \ref{lambdaTensortoLL} is satisfied with dialogue category $\mathcal{C}=(\mathscr{C}-\Mc^{op},\epsilon,\K,(\cdot)^*_{\mathscr{C}})$, and $\mathcal{M}=\mathscr{C}-\Mc_\infty$. We already know that $\varepsilon$-product commutes with limits and monomorphisms and the biproduct property is easy. The key is to check that we have an internalized derivative. From \cite{KrieglMichor} we know that we have a derivative $d:\CinC(E,F)\to \CinC(E,L_b(E,F))$ and the space of bounded linear maps $L_{bd}(E,F)\subset \CinC(E,F)$ the set of conveniently smooth maps. Clearly, the inclusion is continuous since all the images by curves of compact sets appearing in the projective kernel definition of $\CinC(E,F)$ are bounded. Thus one gets  $d:\CinC(E,F)\to \CinC(E,\CinC(E,F))\simeq\CinC(E\times E,F)$. It remains to see continuity. For by the projective kernel definition, one must check that for $c=(c_1,c_2)\in \Cin_{co}(X,E\times E)$, $f\mapsto df\circ (c_1,c_2)$ is continuous $\CinC(E,F)\to \Cin_{co}(X,F)$. But consider the curve $c_3:X\times X\times \R\to E$ given by $c_3(x,y,t)=c_1(x)+c_2(y)t$, since $X\times X\times \R\in \mathscr{C}$, we know that $f\circ c_3$ is smooth and $\partial_t (f\circ c_3)(x,y,0)=df(c_1(x))(c_2(y))$  and its derivatives in $x,y$ are controlled by the seminorms for $\CinC(E,F)$, hence the stated continuity. It remains to note that $\mathcal{M}$ is a differential $\lambda$-category since we already know it is Cartesian closed and all the properties of derivatives are well-known for conveniently smooth maps. For instance, the chain rule D7 is \cite[Thm 3.18]{KrieglMichor}.
\end{proof}

Concretely, one can make explicit the stronger notion of smooth maps considered in this case.

We thus consider $d^k$ the iterated (convenient) differential giving $d^k:\CinC(E,F)\to \CinC(E,L_{bd}(E^{\otimes_\beta k},F)).$ Since $E^{\otimes_\mathscr{C} k}\multimap_\mathscr{C} F$ is a subspace of $L_{bd}(E^{\otimes_\beta k},F)$ (unfortunately this does not seem to be in general boundedly embedded), we can consider :
$$C^\infty_{\Cref}(E,F):=\Big\{u\in \CinC(E,F): \forall k\geq 1: d^k(u)\in \CinC(E,E^{\otimes_\mathscr{C} k}\multimap_\mathscr{C} F)\Big\}.$$

\begin{remark}\noindent A map $ f \in C^\infty_{\Cref}(E,F)$ will be called \emph{$\Cref$-smooth}.  In the case $\mathscr{C}=Ban$, we say $\rho$-smooth maps, associated to the category $\rho$-Ref, and write $C^\infty_{\rho}=  C^\infty_{Ban-\mathbf{Ref}}$. Actually, for $Fin\subset\mathscr{C}\subset \FDFS$, from the equivalence of $*$-autonomous categories in Theorem  \ref{th:CRef}, and since the inverse functors keep the bornology of objects, hence don't change the notion of conveniently smooth maps, we have algebraically $$C^\infty_{\Cref}(E,F)=C^\infty_{\rho}(\mathscr{S}(E_\mu),\mathscr{S}(F_\mu)).$$
Hence, we only really introduced one new notion of smooth maps, namely, $\rho$-smooth maps. Of course, the topologies of the different spaces differ. 
\end{remark}

%Note that if $E,F$ are $\rho$-reflexive the bounded sets of $E^{\otimes k}\multimap F=((E'_\rho\varepsilon \cdots E'_\rho\varepsilon F)'_\rho)'_\rho$ are bounded in $(E'_\rho\varepsilon \cdots E'_\rho\varepsilon F)$ and according to \cite[Prop 2 bis]{Schwartz} this measn they are $\varepsilon$-equihypocontinuous from $((E'_\rho)'_b)^k$ to  $F$. But $E'_\rho$ is Schwartz Mackey-complete thus every bounded set is contained in a closed absolutely convex bounded set which is precompact by the Schwartz property and complete by Mackey-completeness, therefore compact. Thus $(E'_\rho)'_b=(E'_\rho)'_c$ (which has the same bounded sets as its associated Schwartz topology $E$)

%\begin{notation}
%We write $\Diff$ for the category of $\rho$-reflexive spaces and $\rho$-smooth maps. 
%\end{notation}

Thus $d^k$ induces a map $C^\infty_{\Cref}(E,F)\to \CinC(E,E^{\otimes_\mathscr{C} k}\multimap_\mathscr{C} F)$ ($d^0=id$) and we equip $C^\infty_{\Cref}(E,F)$ with the corresponding locally convex kernel topology $\mathrm{K}_{n,\geq 0}(d^n)^{-1}(C^\infty(E,E^{\otimes n}\multimap F))$ with the notation of \cite{Kothe} and the previous topology given on any $C^\infty(E,E^{\otimes k}\multimap F)$.\footnote{ This definition is quite similar to one definition (for the corresponding space of value $E^{\otimes k}\multimap F$ which can be interpreted as a space of hypocontinuous multilinear maps for an appropriate bornology) in \cite{Meise} except that instead of requiring continuity of all derivatives, we require their smoothness in the sense of Kriegl-Michor.}

We call $\Cref_{\infty\Cref}$ the category of $\mathscr{C}$-reflexive spaces with $C^\infty_{\Cref}$ as spaces of maps. Then from section \ref{sec:GeneralDiLL} we even have an induced $d:C^\infty_{\Cref}(E,F)\to C^\infty_{\Cref}(E,E^{\otimes_\mathscr{C} k}\multimap_\mathscr{C} F).$

Let us call 
$d_0(f)=df(0)$ so that $d_0:C^\infty_{\Cref}(E,F)\to C^\infty_{\Cref}(E,F)$ is continuous. Recall also that we introduced $\partial_E={\Cocontr E}\circ (\Tens{!E}{\Coder E})$ and dually $\overline{\partial}_E= (\Tens{!E}{\Der E}){\Contr E}:!E\to !E\o E.$
We conclude to our model:

\begin{theorem}\label{th:CRefDiLL}
Let $Fin\subset\mathscr{C}\subset \FDFS$
 as above. $\Cref$ is also a Seely category with biproducts with structure extended by the comonad $!_{\Cref}(\cdot)=(C_{\Cref}^\infty(\cdot))^*_\mathscr{C}$ associated to the adjunction with left adjoint $!_{\Cref}:\Cref_{\infty-\Cref}\to \Cref$ and right adjoint $U$. It gives a model of DiLL with codereliction $(d_0)^*_\mathscr{C}$. 
\end{theorem}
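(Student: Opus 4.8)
The plan is to obtain the statement by assembling the three abstract layers that are already in place, so that essentially no new analytic input is required beyond what the preceding lemma supplies. Recall from Theorem~\ref{th:CRefSeely} that $\Cref$ arises as the continuation category of the dialogue category $\mathcal{C}=(\mathscr{C}-\Mc^{op},\varepsilon,\K,(\cdot)^*_\mathscr{C})$ paired with the Cartesian closed category $\mathcal{M}=\mathscr{C}-\Mc_\infty$, forming a Seely $\lambda$-categorical model of $\lambda$-tensor logic with exponential $!_\mathscr{C}=(\CinC(\cdot))^*_\mathscr{C}$. The preceding lemma verified that this data meets every hypothesis of subsection~\ref{sec:GeneralDiLL}: $\mathcal{C}$ carries a biproduct given by the products/coproducts of lcs, $U$ and $\neg=(\cdot)^*_\mathscr{C}$ are $\textbf{Mon}$-enriched, $\varepsilon$ commutes with limits and finite coproducts and preserves monomorphisms, $\mathcal{M}$ is a differential $\lambda$-category with internalized $D$ induced by convenient differentiation, and an internalized $D'$ is available since the convenient differential of a map whose iterated derivatives are valued in continuous-linear-map spaces is again of that form.

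First I would invoke Theorem~\ref{LLtoDiLL} on this situation. This produces the new Cartesian closed category $\mathcal{M}_\mathcal{C}$, the functor $NL_\mathcal{C}$ and the internalized differential $d$, assembling $(\mathcal{C}^{op},\varepsilon,\K,(\cdot)^*_\mathscr{C},\mathcal{M}_\mathcal{C},\times,0,[\cdot,\cdot]_\mathcal{C},NL_\mathcal{C},U,D,d)$ into a Seely model of differential $\lambda$-tensor logic in the sense of Definition~\ref{DiffLambdaTensorDef}. The identification to carry out here is that $\mathcal{M}_\mathcal{C}$ is exactly $\Cref_{\infty-\Cref}$: unwinding the limit $[U(E),U(F)]_\mathcal{C}=\lim U\circ Diff_{E,F}$, the object $Diff_{E,F}(A_k)$ maps, under $U$ and $\Xi$, to $\CinC(E,E^{\otimes_\mathscr{C} k}\multimap_\mathscr{C} F)$, and the structural map $D^k$ is the $k$-th convenient differential $d^k$, so the limit cuts out precisely the smooth maps whose iterated derivatives are valued in $E^{\otimes_\mathscr{C} k}\multimap_\mathscr{C} F$, i.e. $C^\infty_{\Cref}(E,F)$, with the projective kernel topology matching the limit topology. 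Consequently $!=\neg\circ NL_\mathcal{C}\circ U=(C^\infty_{\Cref}(\cdot))^*_\mathscr{C}=!_{\Cref}$, and the adjunction $!_{\Cref}\dashv U$ between $\Cref_{\infty-\Cref}$ and $\Cref$ is the asserted one.

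Then I would apply Theorem~\ref{DifflambdaTensortoDiLL} to this Seely model of differential $\lambda$-tensor logic. That theorem directly yields a model of DiLL on $\mathcal{D}=\Cref$: the biproduct, the Seely/linear-category structure, the comonad $!_{\Cref}$ with dereliction $\Der{E}$ as in subsection~\ref{subsec:modelLL}, and the codereliction ${\Coder E}=\neg\big((NL_\mathcal{C}(E)\parr_\mathcal{C}\rho_{\neg E}^{-1})\circ d_{E,\K}\circ \rho_{NL_\mathcal{C}(E)}\big)\circ(\Tens{\Coweak E}{Id_E})\circ\lambda_E$, which satisfies \eqref{Fior}, \eqref{FioreStrength} and \eqref{FioreChainRule} automatically. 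The remaining step is to recognize this abstract codereliction concretely: $d_{E,\K}$ is convenient differentiation of scalar $\mathscr{C}$-smooth maps, and precomposing with $\Coweak E$ (the Dirac mass $\delta_0$, i.e. evaluation at $0$) together with the unitors collapses the formula to the $(\cdot)^*_\mathscr{C}$-transpose of $f\mapsto df(0)=d_0 f\in E^*_\mathscr{C}$, namely $(d_0)^*_\mathscr{C}:E=(E^*_\mathscr{C})^*_\mathscr{C}\to (C^\infty_{\Cref}(E))^*_\mathscr{C}=!_{\Cref}E$.

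The main obstacle is the identification in the second paragraph: faithfully matching the abstractly defined $\mathcal{M}_\mathcal{C}$, its limit topology and its internal hom $[\cdot,\cdot]_\mathcal{C}$ with the hands-on space $C^\infty_{\Cref}(E,F)$ and its projective kernel topology, and likewise checking that the internalized $d$ built by the universal property in Theorem~\ref{LLtoDiLL} restricts to ordinary differentiation at $0$ so that $\Coder E=(d_0)^*_\mathscr{C}$. Tracking the natural isomorphisms $\Xi,\Lambda,\Lambda^{\mathcal{M}_\mathcal{C}}$ through these identifications is bookkeeping-heavy but follows the template already used for $\Cref$ in Theorem~\ref{th:CRefSeely}; no genuinely new analytic ingredient beyond the existence of the continuous convenient differential established in the preceding lemma is needed.
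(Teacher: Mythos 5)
Your proposal is correct and follows exactly the paper's route: the paper's proof is precisely the combination of the preceding lemma (verifying the hypotheses of subsection~\ref{sec:GeneralDiLL}), Theorem~\ref{LLtoDiLL}, and Theorem~\ref{DifflambdaTensortoDiLL}, with the identification of $\mathcal{M}_\mathcal{C}$ with $\Cref_{\infty-\Cref}$ and of the codereliction with $(d_0)^*_\mathscr{C}$ carried out in the surrounding text just as you describe. No discrepancy to report.
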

\begin{proof}
This is a combination of Theorem \ref{LLtoDiLL}, \ref{DifflambdaTensortoDiLL} and the previous lemma.
\end{proof}

\begin{remark}
One can check that
\begin{enumerate}
 \item for any $E\in \Cref,$ $\partial_E\overline{\partial}_E+id_E$ is invertible,
 \item The model is Taylor in the sense of \cite[3.1]{Ehrhard16}, i.e. for any $f_1,f_2:!_{\Cref}E\to F$ if $f_1\partial_E=f_2\partial_E$ then $f_1+f_2{\Coweak E}{\Weak E}=f_2+f_1{\Coweak E}{\Weak E}$.
 \end{enumerate}
Indeed, the Taylor property is obvious since $df_1=df_2$ in the convenient setting implies the same G\^ateaux derivatives, hence $f_1+f_2(0)=f_2+f_1(0)$ on each line hence everywhere.

For (1), we define the inverse by $(I_E)^*_\mathscr{C}$ with $I_E:\CinC(E)\to \CinC(E)$ as in \cite[3.2.1]{Ehrhard16} by $I_E(f)(x)=\int_0^1f(tx) dt$, which is a well-defined weak Riemann integral by Mackey-completeness of the space \cite{KrieglMichor}.

By \cite{Ehrhard16}, the two conditions reformulate the two fundamental theorems of calculus. See also \cite{CockettLemay} for a further developments on the two conditions above. 
\end{remark}

\begin{remark}
Let us continue our comparison of subsection \ref{sec:ConvenientSetting}. Let us see that if $E,F\in \Conv$, $C^\infty_{\Cref}(E,F)= \CinC(E,F)$ so that we didn't introduce a new class of smooth maps for convenient vector spaces. Our notion of smoothness turning our model into a model of DiLL is only crucial on the extra-spaces we added to get a $*$-autonomous category in $\rRef$. For, it suffices to see that $f\in \CinC(E,F)$ is $\rho$-smooth. But
\eqref{ConvSettingClosedFunctor} gives that the derivative automatically smooth with value $L_\beta(E,F)$ by convenient smoothness is also smooth by composition with value $E^*_\rho\varepsilon F$ as expected. Since this equation only depends on the source space $E$ to be bornological, it extends to spaces for higher derivatives, hence the conclusion.

Hence we have a functor $\mathscr{S}:\Conv_\infty\to \Cref_{\infty-\Cref}$ for any $\mathscr{C}$ as above. We don't think this is an equivalence of category any more, as was the corresponding functor in \ref{sec:ConvenientSetting}. But finding a counterexample to essential surjectivity may be difficult, even thought we didn't really try.
\end{remark}

\subsection{$k$-smooth maps as model of DiLL}

We now turn to improve the $*$-autonomous category $\kref$ of section \ref{sec:kref} into a model of DiLL using the much stronger notiom of $k$-smooth map considered in subsection \ref{sec:ksmooth}.
For $X,Y\in \kref$, $C^\infty_{co}(X,Y)\subset \Cin(X,Y)$, hence there is a differential map $d:C^\infty_{co}(X,Y)\to C^\infty_{co}(X,L_\beta(X,Y))$ but it is by definition valued in $C^0_{co}(X,L_{co}(X,Y))$. But actually since the derivatives of these map are also known, it is easy to use the universal property of projective limits to induce a continuous map:  $d:C^\infty_{co}(X,Y)\to C^\infty_{co}(X,L_{co}(X,Y))$. Finally, note that 
 $L_{co}(X,Y)=X^*_k\varepsilon Y$, hence the space of value is the one expected for the dialogue category $Kc^{op}$ from Theorem \ref{kRef}. 
 
 For simplicity, in this section we slightly change $\kref$ to be the category of $k$-reflexive spaces of density character  smaller than a fixed inaccessible cardinal $\kappa$, in order to have a small category $\mathscr{C}=\kref$ and in order to define without change $\CinC(X,Y)$
 
 We call $\kref_\infty$ the category of $k$-reflexive spaces with maps $C^0_{co}(X,Y)$ as obtained in subsection \label{ksmooth}. We call $\Kc_\infty$ the category of $k$-quasi-complete spaces (with density character smaller than the same $\kappa$) with maps $\CinC(X,Y)$. This is easy to see that this forms a category by definition of $\CinC$.  We first check our assumptions to produce models of LL. We call $\CinC:\Kc_\infty\to \Kc^{op}$ the functor associating $\CinC(X)=\CinC(X,\R)$ to a space $X$.
 
 \begin{lemma}
 $(\Kc^{op},\varepsilon,\K,(\cdot)^*_\rho, \Kc_\infty,\times ,0,\CinC ,U)$ is a Seely linear model of $\lambda$-tensor logic.
 \end{lemma}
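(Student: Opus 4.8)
The plan is to verify, one by one, the four ingredients of a Seely linear categorical model of $\lambda$-tensor logic in the sense of Definition \ref{lambdaTensorDef}: the underlying complete and cocomplete dialogue category together with its continuation monad, the Cartesian base category $\mathcal{M}=\Kc_\infty$, the symmetric strongly monoidal functor $NL=\CinC$, and its right $\neg$-relative adjoint $U$, ending with the bijectivity of $U$ on objects.

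The dialogue ingredient is already packaged in Theorem \ref{kRef}: there $(\Kc^{op},\varepsilon,\K)$ is shown to be a symmetric monoidal category, $(\cdot)^*_k$ is a tensorial negation turning it into a dialogue category, the continuation monad $((\cdot)^*_k)^*_k$ is commutative and idempotent, and completeness and cocompleteness hold since it produces the $*$-autonomous category $\kref$. (The negation in the statement should read $(\cdot)^*_k$ rather than $(\cdot)^*_\rho$, as befits the $k$-quasi-complete setting.) For the Cartesian structure on $\Kc_\infty$, I would note that the topological product of $k$-quasi-complete spaces is again $k$-quasi-complete and serves as the categorical product for smooth maps: a map into a product is $\mathscr{C}$-smooth if and only if each component is, which is immediate from the projective-limit definition of $\CinC$ and from Lemma \ref{thm:CartesianClosedBasic}, while the zero space $0$ is terminal.

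The heart of the argument concerns $\CinC\colon\Kc_\infty\to\Kc^{op}$. First it lands in $\Kc$: by Lemma \ref{thm:CartesianClosedBasic}, $\CinC(E)=\CinC(E,\R)$ is a closed subspace of a product of spaces $\Cinco(X,\R)$, which are $k$-quasi-complete by Proposition \ref{Meise2}, hence $\CinC(E)\in\Kc$. Functoriality, i.e. continuity of $g\mapsto g\circ f$ for a $\mathscr{C}$-smooth $f$, follows from the projective-kernel topology together with the composition statement of Proposition \ref{Meise1}. Strong monoidality is precisely the kernel (Seely) isomorphism: combining the two isomorphisms of Lemma \ref{thm:CartesianClosedBasic}, first with $G=\R$ and then with $G=\CinC(F)\in\Kc$, yields the natural isomorphism
\[
\CinC(E\times F)\simeq\CinC(E,\CinC(F))\simeq\CinC(E)\,\varepsilon\,\CinC(F),
\]
and $\CinC(0)\simeq\K$ gives the unit; the associativity, unit and symmetry coherences are inherited from the naturality and coherence of the isomorphisms of Lemma \ref{thm:CartesianClosedBasic}, exactly as in the analogous verification of Theorem \ref{th:CRefSeely}.

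Finally, for the right $\neg$-relative adjoint I would take $U\colon\Kc\to\Kc_\infty$ the forgetful functor, which is the identity on objects (so the Seely condition holds tautologically), and exhibit the required natural bijection through the chain
\[
\Kc_\infty(E,U(F))=\CinC(E,F)\simeq\CinC(E)\,\varepsilon\,F\simeq L(F'_c,\CinC(E))\simeq L(F^*_k,\CinC(E))=\Kc^{op}(\CinC(E),F^*_k),
\]
valid for $F\in\Kc$: the first isomorphism is the third one of Lemma \ref{thm:CartesianClosedBasic}, the symmetry of $\varepsilon$ gives the second, and the last step uses that $\CinC(E)$ is $k$-quasi-complete, so that continuous linear maps out of the Arens dual $F'_c$ extend uniquely to its $k$-quasi-completion $F^*_k$. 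The main obstacle is not any single isomorphism but the bookkeeping of the naturality (in both variables) of this relative-adjunction bijection and of the monoidal coherence of $\CinC$; these are routine diagram chases that transcribe, essentially verbatim, the corresponding verifications carried out for $\mathscr{C}$-$\Mc$ in Theorem \ref{th:CRefSeely}, with $k$-quasi-completeness playing the role of $\mathscr{C}$-completeness throughout.
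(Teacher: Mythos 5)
Your proposal is correct and follows essentially the same route as the paper: the dialogue-category ingredient is taken from Theorem \ref{kRef}, the Cartesian structure, the Seely/kernel isomorphism $\CinC(E\times F)\simeq\CinC(E)\varepsilon\CinC(F)$ and the $\neg$-relative adjunction all come from Lemma \ref{thm:CartesianClosedBasic} (with Proposition \ref{Meise2} giving $k$-quasi-completeness of $\CinC(E)$), and the coherence diagrams are deferred to the reduction to conveniently smooth maps carried out in Theorem \ref{th:CRefSeely}. The only cosmetic difference is that you verify the axioms of Definition \ref{lambdaTensorDef} (strong monoidality of $\CinC$) directly, whereas the paper checks the $\lambda$-model data $\Xi,\Lambda$ of Definition \ref{LambdaModel} and invokes the implication between the two notions --- the underlying isomorphisms are identical --- and your remark that the negation should read $(\cdot)^*_k$ rather than $(\cdot)^*_\rho$ is a correct observation about a typo in the statement.
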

 
 \begin{proof}
 We checked in Theorem \ref{kRef} that $\mathcal{C}=(\Kc^{op},\varepsilon,\K,(\cdot)^*_\rho)$ is a  dialogue category. Completeness and cocompleteness are obvious using the $k$-quasicompletion functor to complete colimits in $\LCS$. Lemma \ref{thm:CartesianClosedBasic} gives the maps $\Xi,\Lambda$ and taking the first diagram as definition of $\Lambda^{\mathcal{M}}$ one gets Cartesian closedness of $\mathcal{M}= (\Kc_\infty,\times ,0,\CinC(\cdot,\cdot))$, and this result also gives the relative adjunction. The other compatibility diagrams are reduced to conveniently smooth maps $\Cin_{Fin}$ as in the proof of Theorem \ref{th:CRefSeely}.
 \end{proof}
 Note that since $\kref^{op}$ is already $*$-autonomous and isomorphic to its continuation category
 
 \begin{lemma}
  The categories of  the previous lemma satisfy the assumptions of subsection \ref{sec:GeneralDiLL}.
 \end{lemma}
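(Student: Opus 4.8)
The plan is to follow the template of the analogous lemma preceding Theorem \ref{th:CRefDiLL}, verifying one by one the hypotheses of subsection \ref{sec:GeneralDiLL} for $\mathcal{C}=\Kc$, $\mathcal{M}=\Kc_\infty$, $NL=\CinC$, negation $\neg=(\cdot)^*_k$ and $U$ the forgetful functor. The situation of Theorem \ref{lambdaTensortoLL} is exactly the content of the previous lemma, so only the differential data remain to be supplied. The structural requirements are immediate: $\Kc^{op}$ carries a biproduct structure since $\Kc$ is stable under arbitrary products and finite direct sums (Theorem \ref{kRef}), which coincide in finite arity in the vectorial setting; the functors $U$ and $\neg$ are $\textbf{Mon}$-enriched because every Hom-set is a $\K$-vector space, composition is bilinear, and both functors act additively on morphisms; $U$ is bijective on objects, being the identity on underlying spaces. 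Finally $\varepsilon=\parr_\mathcal{C}$ commutes with projective limits and finite coproducts and preserves monomorphisms by the Corollary following Theorem \ref{FirstMALL} (parts (1) and (3)).

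Next I would record that $\mathcal{M}=\Kc_\infty$ is a Cartesian closed differential $\lambda$-category. Cartesian closedness, together with the internal hom $\CinC(E,F)$ and the isomorphisms $\Xi,\Lambda$, is Lemma \ref{thm:CartesianClosedBasic} (taking the first compatibility diagram of Definition \ref{LambdaModel} as the definition of $\Lambda^{\mathcal{M}}$). Since every $C^\infty_{co}$-map is conveniently smooth (subsection \ref{sec:ksmooth}), the convenient differential calculus of \cite{KrieglMichor} furnishes an operator $D$ satisfying the axioms D1--D7 and D-curry of \cite{BuccEhrMan}; in particular the chain rule D7 is \cite[Thm 3.18]{KrieglMichor}.

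The heart of the proof is the internalization of the derivative, which has in fact already been carried out in the discussion opening this subsection. There, for $X,Y\in\kref$ the convenient differential $d$, a priori valued in $C^0_{co}(X,L_\beta(X,Y))$, is shown to induce through the universal property of the projective kernel defining $C^\infty_{co}(X,Y)$ a continuous linear map $d:C^\infty_{co}(X,Y)\to C^\infty_{co}(X,L_{co}(X,Y))$, and the crucial identity $L_{co}(X,Y)\simeq X^*_k\varepsilon Y$ places the values precisely in the internal hom of the dialogue category $\Kc^{op}$. Currying via Lemma \ref{thm:CartesianClosedBasic} (composition with $\Xi,\Lambda$) turns $d$ into $D_{E,F}:\CinC(E,F)\to\CinC(E\times E,F)$, while its linear counterpart at the level of the $\varepsilon$-product, $D'_{E,\K}:\CinC(E)\to\CinC(E\times E)$ (a morphism of $\Kc$ by Proposition \ref{prop:kcomplCco}), yields $D'_{E,F}=D'_{E,\K}\varepsilon\,\mathrm{id}_F$ up to the unitors $\rho$, which is precisely relation \eqref{DprimeWeak}; naturality in $E$ follows from the chain rule. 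The compatibility $\Xi_{E\times E,F}\circ U(D'_{E,F})=D_{U(E),U(F)}\circ\Xi_{E,F}$ is then just the commutation of the two descriptions of the same convenient differential under the Cartesian closed isomorphisms. The last hypothesis, diagram \eqref{epsionNL0}, reduces as in subsection \ref{sec:GeneralDiLL} to the case $G=\K$ and is then a consequence of the naturality of the counit $\epsilon^{NL}_E$ of the $(\cdot)^*_k$-relative adjunction.

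I expect the only genuinely delicate step — and the reason the $k$-smooth setting, rather than the merely bornological convenient one, is needed here — to be the continuity of the internalized derivative with values in $L_{co}(X,Y)=X^*_k\varepsilon Y$. Convenient differentiation guarantees a priori only a bounded linear derivative, so the argument must exploit the very definition of $C^\infty_{co}$, whose built-in continuity of all derivatives on compact sets is what upgrades boundedness to continuity into $L_{co}$ and lets the projective-limit universal property produce the continuous map $d$. Checking that the resulting $D'$ genuinely lands in $\CinC(E\times E)\varepsilon F$ (using the embeddings of Proposition \ref{lem:Kc_epsilon} and the associativity maps $Ass_\varepsilon$ of Corollary \ref{coro:otimes_gamma_kcompl}) is part of this same verification; the remaining items are the routine transport of the convenient-calculus identities through the isomorphisms $\Xi$ and $\Lambda$ of the $\lambda$-model.
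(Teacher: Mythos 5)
Your proposal is correct and follows essentially the same route as the paper, whose own proof is just the two sentences "The differential $\lambda$-category part reduces to convenient smoothness case. The above construction of $d$ make everything else easy." You have simply spelled out what that terse proof leaves implicit — the biproduct and $\textbf{Mon}$-enrichment checks, the reduction of the Cartesian differential axioms to convenient calculus, and the identification of the pre-built map $d:C^\infty_{co}(X,Y)\to C^\infty_{co}(X,L_{co}(X,Y))$ with $L_{co}(X,Y)=X^*_k\varepsilon Y$ as the internalized derivative — in agreement with the paper's intent.
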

 
 \begin{proof}
 The differential $\lambda$-category part reduces to convenient smoothness case. The above construction of $d$ make everything else easy.
 \end{proof}

\begin{theorem}\label{th:kRefDiLL}
 $\kref$ is also a complete Seely category with biproducts with $*$-autonomous structure extended by the comonad $!_{co}(\cdot)=(C_{co}^\infty(\cdot))^*_k$ associated to the adjunction with left adjoint $!_{co}:\kref_{co}\to \kref$ and right adjoint $U$. It gives a model of DiLL with codereliction $(d_0)^*_k$. 
\end{theorem}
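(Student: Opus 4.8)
The plan is to assemble the statement from the two preceding lemmas together with the general categorical machinery, following verbatim the pattern used for the $\rho$-case in Theorem \ref{th:CRefDiLL}.

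First I would feed the first preceding lemma --- that $\Kc^{op}$, equipped with $\varepsilon$, unit $\K$, the tensorial negation $(\cdot)^*_k$ of Theorem \ref{kRef}, and the Cartesian closed category $\Kc_\infty$ of $k$-smooth maps together with $\CinC$ and $U$, is a Seely linear model of $\lambda$-tensor logic --- into Theorem \ref{lambdaTensortoLL}. Since by Theorem \ref{kRef} the continuation category of the dialogue category $\Kc^{op}$ is $\kref$, this already realizes $\kref$ as a complete and cocomplete Seely category whose exponential is the comonad $!_{co}=(\cdot)^*_k\circ\CinC\circ U$, i.e. $!_{co}(X)=(C^\infty_{co}(X))^*_k$; completeness, cocompleteness and the biproduct structure are inherited from the stability of $\kref$ under arbitrary products and direct sums recorded in Theorem \ref{kRef}.

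Next I would invoke the second preceding lemma, guaranteeing that the data of subsection \ref{sec:GeneralDiLL} are met, and apply Theorem \ref{LLtoDiLL} to manufacture a Seely model of \emph{differential} $\lambda$-tensor logic. The point specific to the $k$-smooth setting --- and the reason no separately named smoothness class is needed, in contrast with the $\rho$-case --- is that the convenient differential of a $k$-smooth map is automatically valued in $C^\infty_{co}(X,L_{co}(X,Y))$, and $L_{co}(X,Y)=X^*_k\varepsilon Y$ is precisely the internal hom of the dialogue category $\Kc^{op}$. Consequently the subcategory $\mathcal{M}_\mathcal{C}$ produced by the construction of subsection \ref{sec:GeneralDiLL} does not shrink $\Kc_\infty$ (there is no restriction to impose, unlike for $\rho$-smoothness), so that the comonad obtained is literally $!_{co}(\cdot)=(C^\infty_{co}(\cdot))^*_k$.

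Finally I would apply Theorem \ref{DifflambdaTensortoDiLL} to upgrade this differential $\lambda$-tensor structure to a model of DiLL and identify the codereliction. The general formula there builds $\Coder E$ from $d_{E,\K}$ precomposed with $\Coweak E$ and $\lambda_E$; transporting the abstract internalized differential $d$ of subsection \ref{sec:GeneralDiLL} back along the monomorphism $j$ to the concrete convenient differential collapses this composite to the transpose of the differentiation-at-zero map $d_0(f)=df(0)$, yielding $\Coder E=(d_0)^*_k$ exactly as in the $\rho$-case. The main obstacle I anticipate is not a hard estimate but the bookkeeping of this last identification --- checking that the abstractly constructed $d_{E,\K}$ restricts to $d_0$ under $j$, and that the $\mathbf{Mon}$-enrichment and biproduct hypotheses required by Theorems \ref{LLtoDiLL} and \ref{DifflambdaTensortoDiLL} are genuinely inherited. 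Both are routine, and both are already handled by the parallel arguments culminating in Theorem \ref{th:CRefDiLL}.
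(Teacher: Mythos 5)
Your proposal is correct and follows essentially the same route as the paper: the proof there is precisely the combination of the two preceding lemmas with Theorems \ref{LLtoDiLL} and \ref{DifflambdaTensortoDiLL}, together with the observation you single out, namely that since $d$ already lands in $C^\infty_{co}(X,L_{co}(X,Y))$ with $L_{co}(X,Y)=X^*_k\varepsilon Y$, the general construction of subsection \ref{sec:GeneralDiLL} does not shrink the class of $k$-smooth maps, so the comonad is literally $(C^\infty_{co}(\cdot))^*_k$. No gaps.
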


\begin{proof}
Note that on $\kref$ which corresponds to $\mathcal{D}$ in the setting of  subsection \ref{sec:GeneralDiLL}, we know that $C_{co}^\infty=\CinC$ by the last statement in lemma \ref{thm:CartesianClosedBasic}. But our previous construction of $d$ implies that the new class of smooth maps obtained by the construction of subsection \ref{sec:GeneralDiLL} is again $C_{co}^\infty$. 
The result is a combination of Theorem \ref{LLtoDiLL}, \ref{DifflambdaTensortoDiLL} and the previous lemmas.
\end{proof}

\section{Conclusion}
This work is a strong point for the validity of the classical setting of Differential Linear Logic. Indeed, if the proof-theory of Differential Linear Logic is classical, we present here the first smooth models of Differential Linear Logic which comprehend the classical structure. Our axiomatization of the rules for differential categories within the setting of Dialogue categories can be seen as a first step towards a computational classical understanding of Differential Linear Logic. 
We plan to explore the categorical content of  our construction for new models of Smooth Linear Logic, and the diversity of models which can be constructed this way.
Our results also argue for an exploration of a classical differential term calculus, as initiated by Vaux \cite{Vaux}, and inspired by works on the computational signification of classical logic \cite{CurienHerbelin} and involutive  linear negation \cite{Munch}. 

The clarification of a natural way to obtain $*$-autonomous categories in an analytic setting suggests to reconsider known models such a \cite{Gir96} from a more analytic viewpoint, and should lead the way to exploit the flourishing operator space theory in logic, following the inspiration of the tract \cite{Gir04}. An obvious notion of coherent operator space should enable this.

This interplay between functional analysis, physics and logic is also strongly needed as seen the more and more extensive use of convenient analysis in some algebraic quantum field theory approaches to quantum gravity  \cite{RejznerBF}. Here the main need would be to improve the infinite dimensional manifold theory of diffeomorphism groups on non-compact manifolds. From that geometric viewpoint, differential linear logic went only half the way in considering smooth maps on linear spaces, rather that smooth maps on a kind of smooth manifold. By providing nice $?$-monads, our work suggests to try using $?$-algebras for instance in $k$-reflexive or $\rho$-reflexive spaces as a starting point (giving a base site of a Grothendieck topos) to capture better infinite dimensional features than the usual Cahier topos. Logically, this probably means getting a better interplay between intuitionist dependent type theory and linear logic. Physically, this would be useful to compare recent homotopic approaches \cite{Schenkel} with applications of the BV formalism \cite{RejznerYangMills,RejznerYangMills2}. Mathematically this probably means merging recent advances in derived geometry (see e.g. \cite{Toen}) with infinite dimensional analysis. Since we tried to advocate the way linear logic nicely captures (for instance with two different tensor products) infinite dimensional features, this finally strongly suggests for an interplay of parametrized analysis in homotopy theory and  parametrized versions of linear logic \cite{CurienFioreMunch}.

\section{Appendix}
% There is not a really nice categorical structure on $\MS$ but we need to study these spaces in order to prove  stability  of $\McMS$ by $\varepsilon$-product. We start by two topological lemmas.
We conclude with two technical lemmas only used to show we have built two different examples of $!$ on the same category $\rho$-Ref. 
{
\begin{lemma}\label{MSchKernel}
For any ultrabornological spaces $E_i$, any topological locally convex hull $E=\mathrm{\Sigma}_{i\in I}A_i(E_i)$, then we have the topological identity:
 $$\mathscr{S}( E'_\mu)=\mathrm{K}_{i\in I}(A_i^t)^{-1}(\mathscr{S}( (E_i)'_\mu).$$
\end{lemma}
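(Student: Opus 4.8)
We must prove that for ultrabornological spaces $E_i$ and a topological locally convex hull $E = \mathrm{\Sigma}_{i \in I} A_i(E_i)$, the associated Schwartz topology of the Mackey dual satisfies $\mathscr{S}(E'_\mu) = \mathrm{K}_{i\in I}(A_i^t)^{-1}(\mathscr{S}((E_i)'_\mu))$ as a topological identity (an identity of locally convex kernels). Let me sketch my plan.

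\begin{proof}[Proof sketch]
The plan is to combine the good behaviour of the Mackey dual under locally convex hulls with the functoriality of $\mathscr{S}$ under projective kernels (lemma \ref{SchwartzFunctor}). First I would recall that, for a topological locally convex hull $E = \mathrm{\Sigma}_{i\in I} A_i(E_i)$, the transposes $A_i^t : E'\to (E_i)'$ exhibit the dual $E'$ as a projective kernel. Since the $E_i$ are ultrabornological, hence barrelled, the Mackey and strong topologies are well controlled, and one has by \cite[\S 22.7]{Kothe} (duality between hulls and kernels) the topological identity $E'_\mu = \mathrm{K}_{i\in I}(A_i^t)^{-1}((E_i)'_\mu)$ at the level of Mackey topologies; this uses that a locally convex hull of ultrabornological (hence Mackey) spaces carries its Mackey topology and that the dual of a hull is the kernel of the duals. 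This is the analytic heart of the first step and the place where ultrabornologicity is genuinely used, since it guarantees the hull $E$ is itself Mackey and that the equicontinuous/bounded structure transfers correctly through the $A_i^t$.

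The second step is purely functorial. I would apply the functor $\mathscr{S}$ to the identity $E'_\mu = \mathrm{K}_{i\in I}(A_i^t)^{-1}((E_i)'_\mu)$ and invoke lemma \ref{SchwartzFunctor}, which asserts that $\mathscr{S}$ commutes with arbitrary projective kernels and with embeddings. This immediately yields
$$\mathscr{S}(E'_\mu) = \mathscr{S}\Big(\mathrm{K}_{i\in I}(A_i^t)^{-1}((E_i)'_\mu)\Big) = \mathrm{K}_{i\in I}(A_i^t)^{-1}\big(\mathscr{S}((E_i)'_\mu)\big),$$
where the last equality is exactly the commutation of $\mathscr{S}$ with the kernel construction. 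This gives the claimed topological identity.

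The main obstacle I expect is the first step: justifying rigorously that the Mackey dual of the hull equals the kernel of the Mackey duals as a \emph{topological} (not merely algebraic) identity. Algebraically $E' = \mathrm{K}_{i\in I}(A_i^t)^{-1}((E_i)')$ is standard from \cite[\S 19.6.(6)]{Kothe}, but matching the Mackey \emph{topologies} requires knowing that the equicontinuous sets in $E'$ correspond precisely to those carried by the kernel, which is where the ultrabornological hypothesis enters (through barrelledness and completeness of duals, as used in Remark \ref{exValdivia2} via \cite[Corol 13.2.6]{Jarchow}). Once this topological matching of Mackey topologies is secured, the conclusion follows formally from lemma \ref{SchwartzFunctor} with no further analytic input. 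I would therefore devote most care to citing the correct duality statement in \cite{Kothe} for hulls of Mackey (ultrabornological) spaces and verifying it applies in the topological sense.
\end{proof}
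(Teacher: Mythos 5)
Your second step (commuting $\mathscr{S}$ with the projective kernel via lemma \ref{SchwartzFunctor}) is fine, but your first step contains the real gap: the topological identity $E'_\mu=\mathrm{K}_{i\in I}(A_i^t)^{-1}((E_i)'_\mu)$ \emph{before} applying $\mathscr{S}$ is not something you can cite from \cite[\S 22.7]{Kothe}, and it is in general false. It would say that every absolutely convex $\sigma(E,E')$-compact subset of the hull $E$ is contained in the bipolar of a finite union of sets $A_i(K_i)$ with $K_i$ weakly compact in $E_i$, i.e.\ that weakly compact sets of the hull are localized in finite sub-hulls. Ultrabornologicity does not give this: already for (LB)-spaces, bounded sets need not be contained in (or absorbed by) the steps (non-regular inductive limits), so there is no general duality statement matching the Mackey topologies of a hull and of the kernel of the Mackey duals. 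Only the easy inequality (kernel topology coarser than $E'_\mu$) is formal.

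This is precisely why the lemma is stated for $\mathscr{S}(E'_\mu)$ and not for $E'_\mu$: applying $\mathscr{S}$ replaces the equicontinuous bornology of $(E'_\mu)'=E$ (all absolutely convex weakly compact sets) by the bornology generated by bipolars of \emph{Mackey-null sequences} (equivalently null sequences for the bornology of Banach disks, \cite[Th 8.4.4 b]{Jarchow}), and it is only for such countable, convergent data that the localization can be carried out. The paper's proof does exactly this in the Banach case: a Mackey-null sequence lives in a Banach space $E_B$; the inclusion of $E_B$ into the countable inductive limit of finite sub-hulls is continuous by De Wilde's closed graph theorem for webbed spaces \cite[\S 35.2.(1)]{Kothe2}; Grothendieck's factorization theorem \cite[\S 19.6.(4)]{Kothe} then localizes it in a single finite sub-hull $G_m$; and the description of compact sets in quotients of finite direct sums \cite[\S 22.2.(7)]{Kothe} shows the sequence is equicontinuous for the finite kernel \cite[\S 22.7.(5)]{Kothe}. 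The general ultrabornological case is then reduced to the Banach case by writing each $E_i$ as an inductive limit of Banach spaces and using transitivity of hulls and kernels. None of this analytic input can be replaced by the formal commutation of $\mathscr{S}$ with kernels, so as written your argument does not go through.
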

\begin{proof}
We start with case where $E_i$ are Banach spaces. By functoriality one gets a map between two topologies on the same space (see for Mackey duals \cite[p 293]{Kothe}):
$$\mathscr{S}( E'_\mu)\to\mathrm{K}_{i\in I}(A_i^t)^{-1}(\mathscr{S}( (E_i)'_\mu))=:F.$$
In order to identify the topologies, it suffices to identify the duals and the equicontinuous sets on them.
From \cite[\S 22.6.(3)]{Kothe}, the dual of the right hand side is $F'=\mathrm{\Sigma}_{i\in I}(A_i^{tt})(\mathscr{S}( (E_i)'_\mu)'=\mathrm{\Sigma}_{i\in I}(B_i)(E_i)\to E$ where the injective continuous map to $E$ is obtained by duality of the previous surjective map (and the maps called $B_i$ again are in fact compositions of $A_i^{tt}$ and the isomorphism between $[\mathscr{S}( (E_i)'_\mu)]'=E_i$). From the description of $E$ the map above is surjective and thus we  must have $F'=E$ as vector spaces.

Let us now identify equicontinuous sets. From continuity of $\mathscr{S}( E'_\mu)\to F$ every equicontinuous set in $F'$ is also equicontinuous in $E=(\mathscr{S}( E'_\mu))'$. Conversely an equicontinuous set in $E=(\mathscr{S}( E'_\mu))'$ is contained in the absolutely convex cover of a null-sequence $(x_n)_{n\geq 0}$ for the bornology of absolutely convex weakly-compact sets, (thus also for the bornology of Banach disks \cite[Th 8.4.4 b]{Jarchow}). By a standard argument, there is $(y_n)_{n\geq 0}$ null sequence of the same type such that $(x_n)_{n\geq 0}$ is a null sequence for the bornology of absolutely convex compact sets in a Banach space $E_B$ with $B$ the closed absolutely convex cover of $(y_n)_{n\geq 0}$.

Of course $(y_n)_{n\leq m}$ can be seen inside a minimal finite sum $G_m=\mathrm{\Sigma}_{i\in I_m}(B_i)(E_i)'$ and $G_m$ is increasing in $F$ so that one gets a continuous map $I:\mathrm{ind}\ \lim_{m\in\N} G_m\to F'$. Moreover each $G_m$ being a finite hull of Banach space, it is again a Banach space thus one gets a linear map $j:E_B\to \mathrm{ind}\ \lim_{m\in\N} G_m=G$. Since $I\circ j$ is continuous, $j$ is a sequentially closed map, $E_B$ is Banach space, $G $ a (LB) space therefore a webbed space, by De Wilde's closed graph theorem \cite[\S 35.2.(1)]{Kothe2}, one deduces $j$ is continuous. Therefore by Grothendieck's Theorem \cite[\S 19.6.(4)]{Kothe}, there is a $G_m$ such that $j$ is valued in $G_m$ and continuous again with value in $G_m$. %{\color{blue} Alternatively in the case $G_m$ is only webbed one can apply \cite[Corol IV.6.5]{DeWilde} get (fast) sequential continuity as above.}
 Therefore $(j(x_n))_{n\geq 0}$ is a null sequence for the bornology of absolutely convex compact sets in $G_m$. We want to note it is equicontinuous there, which means it is contained in a sum of equicontinuous sets.  %But the finite kernel $F_m:=\mathrm{K}_{i\in I_m}(A_i^t)^{-1}(\mathscr{S}( (E_i)'_\mu)$

By \cite[\S 19.2.(3)]{Kothe}, $G_m$ is topologically a quotient by a closed linear subspace $\bigoplus_{i\in I_m}(B_i)(E_i)'/H$.
By \cite[\S 22.2.(7)]{Kothe} every compact subset of the quotient space $\bigoplus_{i\in I_m}(B_i)(E_i)'/H$ of a Banach space by a closed subspace $H$ is a canonical image of a compact subset of the direct sum, which can be taken a product of absolutely convex covers of null sequences.%{\color{blue} this part do use metrizable space, }
 Therefore our sequence $(j(x_n))_{n\geq 0}$ is contained in such a product which is exactly an equicontinuous set in $G_m=\Big(\mathrm{K}_{i\in I_m}(A_i^t)^{-1}(\mathscr{S}( (E_i)'_\mu)\Big)'$ \cite[\S 22.7.(5)]{Kothe} (recall also that for a Banach space $\mathscr{S}( (E_i)'_\mu)=(E_i)'_c$). Therefore it is also equicontinuous in  $F'$ (by continuity of $F\to G_m'$). This concludes to the Banach space case.

For the ultrabornological case decompose $E_i$ as an inductive limit of Banach spaces. Get in this way a three terms sequence of continuous maps with middle term $\mathrm{K}_{i\in I}(A_i^t)^{-1}(\mathscr{S}( (E_i)'_\mu)$ and end point the corresponding iterated kernel coming from duals of Banach spaces by transitivity of Kernels/hulls. Conclude by the previous case of equality of topologies between the first and third term of the sequence, and this concludes to the topological equality with the middle term too.
\end{proof}

\begin{lemma}\label{WhyNotAP}For any \lcs $E$,
$((C^\infty_{Fin}(E))^*_\rho)^*_\rho$ is  Hilbertianizable, hence it has the approximation property.
\end{lemma}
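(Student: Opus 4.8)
The plan is to exploit the concrete presentation of $C^\infty_{Fin}(E)$ as a projective kernel and then push it through $\rho$-bidualization using Lemma~\ref{MSchKernel}. Write $F=C^\infty_{Fin}(E)$. By \eqref{CinCasProjLimit}, $F=\mathrm{proj}\lim_{c} [C^\infty_{co}(X_c)]_c$ is a projective kernel (hence a closed subspace of a product) of the closed subspaces $G_c=[C^\infty_{co}(X_c)]_c$ of $C^\infty_{co}(X_c,\K)$, for $X_c=\R^{n_c}\in Fin$. Since $C^\infty_{co}(\R^n,\K)\simeq \mathfrak{s}^{\N}$ \cite[(7) p 383]{Valdivia} is a nuclear Fréchet space and nuclear Fréchet spaces are stable under closed subspaces, each $G_c$ is a reflexive nuclear Fréchet space; in particular it is Hilbertianizable, ultrabornological, and so is its strong dual $(G_c)'_\beta$. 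As a projective kernel of nuclear spaces, $F$ is itself nuclear (hence Schwartz, $\mathscr{S}(F)=F$) and complete, hence Mackey-complete. Consequently Theorem~\ref{rhoref} together with Lemma~\ref{MackeyArensSchwartz} identifies the $\rho$-bidual topologically as $((F)^*_\rho)^*_\rho=\mathscr{S}((F'_\mu)'_\mu)=\mathscr{S}(F_\mu)$, where $F_\mu$ denotes $F$ equipped with $\mu(F,F')$.

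First I would compute $\mathscr{S}(F_\mu)$ by dualizing the kernel presentation. The dual of the projective kernel $F=\mathrm{K}_{c}(T_c)^{-1}(G_c)$ (with $T_c=\cdot\circ c$) is the locally convex hull $F'=\mathrm{\Sigma}_{c}(T_c)^t((G_c)')$ of the ultrabornological spaces $(G_c)'$ (Köthe's hull--kernel duality \cite[\S 22.6, 22.7]{Kothe}). This is exactly the situation of Lemma~\ref{MSchKernel} applied to $F'$, which yields $\mathscr{S}((F')'_\mu)=\mathscr{S}(F_\mu)=\mathrm{K}_{c}((T_c)^{tt})^{-1}\big(\mathscr{S}(((G_c)')'_\mu)\big)$. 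Now each $G_c$ is reflexive, so $((G_c)')'_\mu=G_c$ topologically, and nuclear, so $\mathscr{S}(G_c)=G_c$; hence $\mathscr{S}(F_\mu)=\mathrm{K}_{c}((T_c)^{tt})^{-1}(G_c)$ is again a projective kernel of the nuclear, Hilbertianizable Fréchet spaces $G_c$.

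From this it follows that $((C^\infty_{Fin}(E))^*_\rho)^*_\rho=\mathscr{S}(F_\mu)$ is a projective kernel (a closed subspace of a product) of nuclear Hilbertianizable spaces, hence is itself nuclear and Hilbertianizable, both properties being stable under products and subspaces (using Lemma~\ref{SchwartzFunctor} for the commutation of $\mathscr{S}$ with kernels, and noting that the space is already complete so no residual Mackey completion is needed). Being a subspace of a product of Hilbert spaces, it has the approximation property; equivalently, its nuclearity yields the approximation property directly \cite{Jarchow}. This is the desired conclusion. The main obstacle is the middle step: arranging the hull presentation of $F'$ so that Lemma~\ref{MSchKernel} applies verbatim — namely verifying that the pieces $(G_c)'$ are ultrabornological, that the hull is a \emph{topological} locally convex hull with dual pairing $(F')'=F$, and that the identification $((F)^*_\rho)^*_\rho=\mathscr{S}(F_\mu)$ of Theorem~\ref{rhoref} is compatible with these dualizations. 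Once this is in place, nuclearity and Hilbertianizability simply propagate through the kernel, through $\mathscr{S}$, and through the (harmless) completion.
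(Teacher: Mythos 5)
There is a genuine gap at the central step of your argument, namely the identification $((C^\infty_{Fin}(E))^*_\rho)^*_\rho=\mathscr{S}((F'_\mu)'_\mu)=\mathscr{S}(F_\mu)$. Theorem \ref{rhoref} gives $E\simeq\mathscr{S}((E'_\mu)'_\mu)$ only for \emph{$\rho$-reflexive} $E$; for a merely Mackey-complete Schwartz space $F$ the proof of that theorem yields $(F^*_\rho)^*_\rho=\mathscr{S}\bigl[\bigl(\widehat{\mathscr{S}(F'_c)}^M\bigr)'_c\bigr]$ together with a continuous (in general not topological) identity map $(F^*_\rho)^*_\rho\to\mathscr{S}((F'_c)'_c)$. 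The two coincide when $F'_\mu$ is Mackey-complete (Theorem \ref{MackeyCaractrhoRef}), but nuclearity and completeness of $F=C^\infty_{Fin}(E)$ do not give this: the remark in Section 5.2 exhibits a complete nuclear space with its Mackey topology whose Arens$=$Mackey dual fails to be Mackey-complete (H\"ormander's $\mathcal{D}'_\Gamma$), and $F'_\mu$ here is an uncountable locally convex hull of (DFN)-spaces with no completeness guarantee. This is precisely the difficulty the paper's proof is built around: it describes $\widehat{G'_\mu}^M$ by a transfinite induction using \emph{rapidly} Mackey-Cauchy sequences so as to verify the equicontinuity hypothesis of Lemma \ref{ordinalCompletion} (via \cite[Prop 21.9.1]{Jarchow} for the strongly nuclear topology), deduces that $H^{\lambda_0}=\mathscr{N}[(\mathscr{C}_M(G'_\mu))'_\mu]$ is complete, and only then invokes Lemma \ref{MSchKernel} and Bellenot's theorem. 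By skipping the Mackey completion of the dual you have bypassed, rather than solved, the actual problem.

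A secondary issue: even granting your reduction, applying Lemma \ref{MSchKernel} to $F'=\Sigma_c (T_c)^t((G_c)')$ requires that the hull be a \emph{topological} locally convex hull, i.e.\ that $\mu(F',F)$ coincide with the hull topology of the $(G_c)'_\mu$; the Mackey dual of a projective kernel is not in general the topological hull of the Mackey duals, and you flag this as an "obstacle" without resolving it. The paper avoids this by applying Lemma \ref{MSchKernel} not to the dual of $C^\infty_{Fin}(E)$ but to the Mackey dual of the \emph{complete} (gH) space $H^{\lambda_0}$, which, being a reduced projective limit of Hilbert spaces and semi-reflexive, has Mackey $=$ strong dual equal to an inductive limit of Hilbert spaces by \cite[\S 22.7.(9)]{Kothe} — exactly the ultrabornological-hull situation the lemma covers. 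Your overall toolkit (nuclearity of $C^\infty_{Fin}(E)$, Lemma \ref{MSchKernel}, stability of Hilbertianizable spaces under kernels) is the right one, but the order in which it must be deployed is dictated by the Mackey completion of the dual, and that step is missing.
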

\begin{proof}
We actually show that $F=((C^\infty_{Fin}(E))^*_\rho)^*_\rho=\mathscr{S}[\Big(\mathscr{C}_M\Big[(C^\infty_{Fin}(E))'_\mu\Big]\Big)'_\mu]$ is Hilbertianizable (also called a (gH)-space) \cite[Rmk 1.5.(4)]{hollstein}.

Note that $G=C^\infty_{Fin}(E)$ is a complete nuclear space. It suffices to show that for any complete { nuclear} space $G$, $\mathscr{S}[\Big(\mathscr{C}_M\Big[G'_\mu\Big]\Big)'_\mu]$ is a complete (gH) space. { Of course, we use lemma \ref{ordinalCompletion} but we need another description of the Mackey completion $\mathscr{C}_M^\lambda(G'_\mu)$ . We let $E_0=G'_\mu, E_{\lambda+1}=\cup_{\{x_n\}\in RMC(E_\lambda)}\overline{\gamma(x_n,n\in \N)} E_\lambda=\cup_{\mu<\lambda}E_\mu$ for limit ordinals.

Here $RMC(E_\lambda)$ is the set of sequences $(x_n)\in E_\lambda^{\N}$ which are rapidly Mackey-Cauchy in the sense that if $x$ is their limit in the completion there is a bounded disk $B\subset E_{\lambda+1}$ such that for all $k$, $(x_n-x)\in n^{-k}B$ for $n$ large enough. For $\lambda_0$ large enough, $E_{\lambda_0+1}=E_{\lambda_0}$ and any Mackey-Cauchy sequence $x_n$ in $E_{\lambda_0}$, let us take its limit $x$ in the completion and $B$ a closed bounded disk in $E_{\lambda_0}$ such that $||x_n-x||_B\to 0$ one can extract $x_{n_k}$ such that $||x_{n_k}-x||_B\leq k^{-k}$ so that $(x_{n_k}-x)\in k^{-l}B$ for $k$ large enough (for any $l$) thus $(x_{n_k})\in RMC(E_{\lambda_0})$ thus its limit is in  $E_{\lambda_0+1}=E_{\lambda_0}$ which is thus Mackey-complete. To apply lemma \ref{ordinalCompletion} with $D=\mathscr{N}((\cdot)'_\mu))$ one needs to see that $\{x_n,n\in \N \}$ is equicontinuous in $D(E_{\lambda_0})'$. But since $E_{\lambda_0}$ is Mackey-complete, one can assume the bounded disk $B$ is a Banach disk and $||x_n-x||_B=O(n^{-k})$ so that $x_n$ is rapidly convergent. From \cite[Prop 21.9.1]{Jarchow} $\{(x_n-x),n\in \N\}$ is equicontinuous for the strongly nuclear topology associated to the topology of convergence on Banach disks and a fortiori equicontinuous for $D(E_{\lambda_0})'$. By translation, so is $\{x_n,n\in \N\}$ as expected. From application of lemma \ref{ordinalCompletion}, 
$H^{\lambda_0}:=\mathscr{N}[(\mathscr{C}_M(G'_\mu))'_\mu]$ is complete since $\mathscr{N}[(G'_\mu)'_\mu]$ is already complete ($G$ is complete nuclear so that $\mathscr{N}[(G'_\mu)'_\mu]\to G$ continuous and use again \cite[IV.5 Rmq 2]{Bourbaki}).
}

$H^{\lambda_0}$ is  nuclear thus a (gH)-space. Since $H^{\lambda_0}$ is a complete (gH) space, it is a reduced projective limit of Hilbert spaces \cite[Prop 1.4]{hollstein} and semi-reflexive \cite[Rmk 1.5 (5)]{hollstein}. Therefore its Mackey=strong dual \cite[\S 22.7.(9) ]{Kothe} is an inductive limit of the Mackey duals, thus Hilbert spaces.

One can apply lemma \ref{MSchKernel} to get $\mathscr{S}([H^{\lambda_0}]_\mu)$ as a projective kernel of $\mathscr{S}(H)$ with $H$ Hilbert spaces. But from \cite[Thm 4.2]{Bellenot} this is the universal generator of Schwartz (gH) spaces, therefore the projective kernel is still of (gH) space. 

For $\lambda_0$ as above, this concludes to $\mathscr{S}[(\mathscr{C}_M((C^\infty_{Fin}(E))'_\mu))'_\mu]$ (gH) space, as expected.
\end{proof}

\end{document}